\documentclass[letter,11pt]{article}
\usepackage[utf8]{inputenc}
\usepackage{verbatim}
\usepackage{titling}
\usepackage{cancel}
\usepackage{enumitem} 
\setlist{noitemsep,topsep=0pt,parsep=0pt} 
\usepackage[margin=1cm]{caption} 
\usepackage[perpage]{footmisc}

\usepackage{subfig}
\usepackage{mathtools} 
\usepackage{multirow}
\usepackage{diagbox}
\usepackage[all]{xy}
\usepackage{tikz}
\usetikzlibrary{arrows,backgrounds,calc,fit,decorations.pathreplacing,decorations.markings,shapes.geometric}

\tikzset{every fit/.append style=text badly centered}

\usepackage{tyson}
\numberwithin{equation}{section}
\usepackage[bookmarks=true,hypertexnames=false]{hyperref}
\hypersetup{colorlinks=true, citecolor=blue, linkcolor=red, urlcolor=blue}
\makeatletter

\newcommand{\Rmnum}[1]{\expandafter\@slowromancap\romannumeral #1@}
\makeatother
\newcommand{\ii}{\mathfrak{i}}

\newcommand{\Holant}{\operatorname{Holant}}

\newcommand{\Holantb}{\operatorname{Holant}^b}

\newcommand{\holant}[2]{\ensuremath{\Holant(#1\mid #2)}}

\newcommand{\CSP}{\operatorname{\#CSP}}

\newenvironment{remark}{\medskip{\bfseries \noindent Remark:}}{\par\medskip}{\par\medskip}


\tikzstyle{internal} = [draw, fill, shape=circle]
\tikzstyle{external} = [shape=circle]
\tikzstyle{square}   = [draw, fill, rectangle]
\tikzstyle{triangle} = [draw, fill, regular polygon, regular polygon sides=3, inner sep=3pt]
\tikzstyle{pentagon} = [draw, fill, regular polygon, regular polygon sides=5, inner sep=2pt, minimum size=14pt]

\begin{document}

\title{{\bf A Dichotomy for Real Boolean Holant Problems}}

\vspace{0.3in}

\author{Shuai Shao\thanks{Department of Computer Sciences, University of Wisconsin-Madison. Supported by NSF CCF-1714275.
 } \\ {\tt sh@cs.wisc.edu}
\and
Jin-Yi Cai$^\ast$ \\ \tt jyc@cs.wisc.edu}

\date{}
\maketitle

\thispagestyle{empty}
\bibliographystyle{plain}

\begin{abstract}
We prove a complexity dichotomy for Holant problems on the boolean domain with arbitrary sets of real-valued constraint functions.
These constraint functions need not be symmetric nor do we assume any auxiliary functions as in previous results.
It is proved that 
for every set $\mathcal{F}$ of real-valued constraint functions, $\Holant(\mathcal{F})$ is either P-time computable or \#P-hard.
The classification has an explicit criterion.
This is the culmination of much research on this problem, and it uses previous results and techniques from many researchers.
Some particularly intriguing concrete functions $f_6$, $f_8$ and their associated families with  extraordinary closure properties related to  Bell states in quantum information theory play an important role in this proof.

 \end{abstract}

\newpage

\pagenumbering{roman}

\section*{{Prologue}}
\addcontentsline{toc}{section}{\protect\numberline{}{Prologue}}
\vspace{.1in}

{\small
\sl
The young knight Siegfried and his tutor Jeyoda set out for their life journey
together. Their aim is to pacify the real land of Holandica,
to bring order and unity.

\vspace{.05in}

In the past decade, brave knights have battled innumerable demons and 
creatures, and have conquered the more familiar portion of Holandica, 
called Holandica Symmetrica. Along the way they have also been victorious
by channeling the power of various deities known as Unary Oracles.
In the past few years this brotherhood of the intrepid have also gained great 
power from the beneficent god Orieneuler and enhanced their skills
in a more protected world ruled by Count Seaspie.

\vspace{.05in}

``But prepared we must be,'' Jeyoda reminds Siegfried, ``arduous,
our journey will become.'' As the real land of Holandica is teeming with unknowns,
who knows what wild beasts and creatures they may encounter. Siegfried nods,
but in his heart  he is confident that his valor and power will be equal
to the challenge.

\vspace{.05in}

They have recently discovered a treasure sword. This is their second gift
from the Cathedral Orthogonia, more splendid and more powerful than the 
first. In their initial encounters with the minion creatures in their
journey, the second sword from Cathedral Orthogonia proved to be invincible. 

\vspace{.05in}

These initial victories laid the foundation for their journey, but also
a cautious optimism sets in. Perhaps with their new powerful sword in hand, 
final victory will not be that far away.

\vspace{.05in}

Just as they savor these initial victories, things start to change.
As they enter the Kingdom of Degree-Six everything becomes strange.
Subliminally they feel the presence of a cunning enemy hiding in the darkness.
Gradually they are convinced that this enemy possesses a special power
that eludes the ongoing campaign, and in particular their magic sword.
After a series of difficult and protracted battles with many twists and turns, their nemesis,
the Lord of Intransigence slowly reveals his face. The Lord of Intransigence
has a suit of magic armor, called the Bell Spell, that hides and protects 
him so well that the sword of Cathedral Orthogonia cannot touch him.
 
\vspace{.05in}

Siegfried and Jeyoda know that in order to conquer the Lord of Intransigence,
they need all the skills and wisdom they have. Although the Lord of Intransigence
has a strong armor, he has a weakness. The armor is maintained by
four little elfs called the Bell Binaries. The next battle is tough and long.
Siegfried and Jeyoda hit upon the idea of convincing the Bell Binaries
to stage a mutiny. With his four little elfs turning against him, his armor
loses its magic, and the Kingdom of Six-degree is conquered. In the aftermath
of this victory, Siegfried and Jeyoda also collect some valuable treasures
that will come in handy in their next campaign.
 
\vspace{.05in}

After defeating the Lord of Intransigence, Siegfried and Jeyoda enter
the Land of Degree-Eight. Now they are very careful. After meticulous
reconnaissance, they finally identify the most fearsome enemy,
the Queen of the Night. Taking a page from their battle with the
Lord of Intransigence they look for opportunities to gain help from
within the enemy camp. However, the Queen of the Night has the strongest
protective coat called the Strong Bell Spell. This time there is no way
to summon help from within the Queen's own camp. In fact, her protective 
armor is so strong that any encounter with Siegfried and Jeyoda's sword
makes her magically disappear in a puff of white smoke.

\vspace{.05in}

But, everyone has a weakness. For the Queen, her vanishing act also
brings the downfall. After plotting the strategy for a long time, 
Siegfried and Jeyoda use a magical potion to create from nothing
the helpers needed to defeat the Queen.

\vspace{.05in}

Buoyed by their victory, they summon their last strength to secure
the Land of Degree-Eight and beyond. Finally they bring complete order
to the entire real land of Holandica. At their celebratory banquet,
they want to share the laurels with   Knight  Ming and Knight Fu who provided
invaluable assistance in their journey; but the two brave and generous knights have retreated to their
Philosopher's Temple and are nowhere to be found.}

{\hypersetup{linkcolor=black}
  \tableofcontents}

\newpage
\pagenumbering{arabic}
\setcounter{page}{1}
\section{Introduction}
Counting problems arise in many different fields, e.g., statistical physics, economics and machine learning. 
In order to study the complexity of counting problems, several natural frameworks have been proposed. 
Two well studied frameworks are counting constraint satisfaction problems (\#CSP) \cite{bulatov-ccsp, dyer-richerby, bulatov2012csp, cai-chen-lu-nonnegative-csp, cai-chen-csp} and counting graph homomorphisms (\#GH)  \cite{dyer-green-gh, bulatov2005gh, goldberg2010gh, cai-chen-lu-gh} which is a special case of \#CSP.
These frameworks are expressive enough so that they can express many natural counting problems but also specific enough so that complete complexity classifications can be established. 

Holant problems are a more expressive framework which generalizes \#CSP and \#GH.  It is a broad class of sum-of-products computation. 
Unlike \#CSP and \#GH for which full complexity dichotomies have been established,
the understanding of Holant problems, even  restricted to
the Boolean domain, is still limited.
In this paper, we establish the first Holant dichotomy
on the Boolean domain with arbitrary real-valued constraint functions.
These constraint functions need not be symmetric nor do we assume any
auxiliary functions (as in previous results).

A Holant problem on the Boolean domain is parameterized by a set of constraint functions, also called signatures; such a signature
maps  $\{0, 1\}^n \rightarrow \mathbb{C}$
for some $n > 0$. 
Let $\mathcal{F}$ be any fixed set of signatures. 
A signature grid
$\Omega=(G, \pi)$ over $\mathcal{F}$
 is a tuple, where $G = (V,E)$
is a graph without isolated vertices,
 $\pi$ labels each $v\in V$ with a signature
$f_v\in\mathcal{F}$ of arity ${\operatorname{deg}(v)}$,
and labels the incident edges
$E(v)$ at $v$ with input variables of $f_v$.
We consider all 0-1 edge assignments $\sigma$, and
each gives an evaluation
$\prod_{v\in V}f_v(\sigma|_{E(v)})$, where $\sigma|_{E(v)}$
denotes the restriction of $\sigma$ to $E(v)$.

\begin{definition}[Holant problems]
The input to the problem {\rm Holant}$(\mathcal{F})$
is a signature grid $\Omega=(G, \pi)$ over $\mathcal{F}$.
The output is  the partition function
\begin{equation*}
\Holant({\Omega}) =\sum\limits_{\sigma:E(G)\rightarrow\{0, 1\}}\prod_{v\in V(G)}f_v(\sigma|_{E_{(v)}}).    
\end{equation*}
Bipartite Holant problems $\holant{\mathcal{F}}{\mathcal{G}}$
are {Holant} problems
 over bipartite graphs $H = (U,V,E)$,
where each vertex in $U$ or $V$ is labeled by a signature in $\mathcal{F}$ or $\mathcal{G}$ respectively.
When $\{f\}$ is a singleton set, we write $\Holant(\{f\})$ as $\Holant(f)$ and  $\Holant(\{f\}\cup \mathcal{F})$ as $\Holant(f, \mathcal{F}).$ 
\end{definition}

Weighted \#CSP  is a special class of  Holant problems.
So are all weighted \#GH.
Other problems  expressible as Holant problems include counting matchings and perfect matchings~\cite{valiant1979complexity}, counting weighted Eulerian orientations (\#EO problems)~\cite{mihail1996number, cai-fu-shao-eo}, computing the partition functions of six-vertex models~\cite{pauling1935structure, cai-fu-xia-six} and eight-vertex models~\cite{baxter1971eight, cai-fu-eight}, and a host of other, if not almost all, vertex models from statistical physics~\cite{baxter2004six}. 
It is proved that counting perfect matchings cannot be expressed by \#GH \cite{Freedman-et-al,cai-artem}. 
Thus, Holant problems are provably more expressive. 

Progress has been made in  the complexity classification of Holant problems.
When all signatures are restricted to be \emph{symmetric}, a full dichotomy is proved \cite{cai-guo-tyson-vanishing}. 
When asymmetric signatures are allowed, some dichotomies are proved for special families of Holant problems 
by assuming that certain auxiliary  signatures are available, e.g.,  $\Holant^\ast$, $\Holant^+$ and $\Holant^c$~\cite{cai-lu-xia-holant*, Backens-holant-plus, CLX-HOLANTC, Backens-Holant-c}.
Without assuming auxiliary signatures a Holant dichotomy is established  for non-negative real-valued signatures~\cite{wang-lin}, and for all real-valued signatures where a signature of odd arity is present \cite{realodd}.
In this paper, we prove a full complexity dichotomy for Holant problems with real values. 
\begin{theorem}\label{main-theorem}
Let $\mathcal{F}$ be a set of real-valued signatures.
If $\mathcal{F}$ satisfies the tractability condition {\rm(\ref{main-thr})} in Theorem \ref{thm-main-thr},
then $\Holant(\mathcal{F})$ is polynomial-time computable;
otherwise, $\Holant(\mathcal{F})$ is \#P-hard.
\end{theorem}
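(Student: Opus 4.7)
The plan is to build on the existing partial dichotomies and to isolate the genuinely new obstructions in the even-arity real-valued setting. First, if $\mathcal{F}$ contains any signature of odd arity, Theorem~\ref{main-theorem} follows from the real-odd dichotomy of~\cite{realodd}, so I may assume every $f\in\mathcal{F}$ has even arity. Next, I would normalize by a real orthogonal holographic transformation: because orthogonal $T$ preserves the equality signature, one has $\Holant(\mathcal{F})\equiv\Holant(T\mathcal{F})$, and in a basis adapted to a fixed signature $f\in\mathcal{F}$ it should be possible to invoke the symmetric-signature dichotomy of~\cite{cai-guo-tyson-vanishing} together with the auxiliary-signature dichotomies of~\cite{cai-lu-xia-holant*, Backens-holant-plus, CLX-HOLANTC, Backens-Holant-c}. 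Every signature $f$ that does not already fit the tractability criterion~(\ref{main-thr}) should, after an appropriate orthogonal normalization and with the help of gadget constructions (including planar interpolation), yield a polynomial-time reduction from a known $\numP$-hard Holant problem.

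The obstruction, signalled explicitly by the prologue, is a small discrete family of exceptional signatures at arities $6$ and $8$ whose stabilizer under holographic transformation strictly contains the real orthogonal group; they are intertwined with the Bell states of quantum information and are invisible to the orthogonal toolkit. For the arity-$6$ signature $f_6$, the approach is to realize, as a gadget over $\{f_6\}\cup\mathcal{F}$, certain specific binary helpers (the ``Bell Binaries''). These helpers break the enlarged stabilizer back down to something controlled by the prior dichotomies, and in the process additional gadgets supply further auxiliary signatures that will be reused in the next step.

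The main obstacle is the arity-$8$ signature $f_8$. Its Bell-type closure is so strong that natural self-gadgets over $\{f_8\}$ collapse identically to zero, so helpers cannot be extracted from $f_8$ in isolation. The heart of the proof will be a case analysis showing that unless $\mathcal{F}$ already satisfies~(\ref{main-thr}), one can locate some $g\in\mathcal{F}\setminus\{f_8\}$ together with a gadget built from $g$, $f_8$, and the treasures collected in the $f_6$ step whose evaluation is a nonzero auxiliary signature sufficient to trigger a known hardness result. This ``create from nothing'' construction, together with a careful enumeration of which $g$ can serve as catalyst and which cannot, is where I expect the bulk of the technical effort to live.

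Finally, on the positive side, I would verify that each $\mathcal{F}$ satisfying~(\ref{main-thr}) admits a polynomial-time algorithm by exhibiting such algorithms for the standard tractable families (affine, product type, matchgate-realizable via real orthogonal transformation) together with the two exceptional tractable families generated by the closures of $f_6$ and $f_8$. The full dichotomy then follows by combining this tractability with the hardness reductions of the preceding steps.
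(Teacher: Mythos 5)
Your high-level outline captures the odd-arity reduction, the use of real-orthogonal holographic transformations, and the correct identification of $f_6$ and $f_8$ (at arities $6$ and $8$) as the central obstacles, and your treatment of $f_6$ via realizing the four Bell binaries matches the paper. However, there are two genuine gaps.

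First, you misidentify the role of $f_6$ and $f_8$ in the dichotomy itself. You propose to establish tractability by exhibiting algorithms for ``two exceptional tractable families generated by the closures of $f_6$ and $f_8$.'' But $f_6$ and $f_8$ are affine signatures already lying in $\mathscr{A}$, and they do \emph{not} expand the tractable class: the tractability criterion is exactly condition~(\ref{main-thr}), which is already proved in Theorem~\ref{thm-main-thr} (quoted from prior work), and the tractability direction of Theorem~\ref{main-theorem} needs no new algorithm. The entire novelty of the paper lies on the hardness side: $f_6$ and $f_8$ satisfy all the structural conditions ({\sc 2nd-Orth}, membership of all merges in $\widehat{\mathcal{O}}^{\otimes}$) that the induction extracts, so the generic induction argument does not force a contradiction and additional work is needed to prove $\Holant(f_6,\mathcal{F})$ and $\Holant(f_8,\mathcal{F})$ are \#P-hard. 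If one tried to add them as ``new tractable cases,'' the theorem statement would be wrong.

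Second, your strategy for $f_8$ misses the key technical idea. You propose to ``locate some $g\in\mathcal{F}\setminus\{f_8\}$ together with a gadget built from $g$, $f_8$,'' i.e.\ an explicit gadget construction involving additional members of $\mathcal{F}$. That is not how the bottleneck is broken, and it is unclear how such a gadget could be found in general. The paper's mechanism is the introduction of Holant problems with \emph{limited appearance} of a signature, combined with the strong Bell property (Definition~\ref{def-strong-bell}): since $\partial_{ij}^b f_8 = b^{\otimes 3}$, three occurrences of a Bell binary $b$ can be replaced by one $f_8$ and one $b$, reducing the count of $b$ modulo $2$; the residual cases of $0$, $1$, or $2$ occurrences are then settled by a constant number of direct queries that determine the unknown $4$-ary gadget structure exactly. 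This gives the non-constructive reduction $\Holant^b(f_8,\mathcal{F})\leqslant_T\Holant(f_8,\mathcal{F})$ of Lemma~\ref{lem-0to1} without ever realizing the Bell binaries as gadgets. Your phrase ``create from nothing'' is the right intuition from the prologue, but the concrete counting argument is absent from your proposal, and replacing it by a gadget search over $\mathcal{F}$ is a different and unvalidated claim.
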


This theorem is the culmination of a large part of previous research on dichotomy
theorems on Holant problems, and it uses much of the previously established results and techniques.
However, as it turned out, the journey to this theorem has been arduous.
The overall plan of the proof is by induction on arities of signatures 
in $\mathcal{F}$.
Since a dichotomy is proved when $\mathcal{F}$ contains a signature of odd arity, we only need to consider signatures of even arity.
For signatures of small arity $2$ or $4$ (base cases) and large arity at least 10,
we given an induction proof based on results of \#CSP, \#EO problems and eight-vertex models.
However, 
two  signatures $f_6$ and $f_8$ (and their associated
families)
of  arity $6$ and $8$, are discovered 
which have  extraordinary closure properties; we call
them Bell properties \cite{realodd}.
These amazing signatures are wholly unexpected, and their existence
presented a formidable obstacle to the induction proof.
All four binary Bell  signatures (related to Bell states \cite{bell1964einstein} in quantum information theory) are realizable from $f_6$ by gadget construction.  We introduce
$\Holantb$ problems where the four binary Bell  signatures are available.
This is specifically to  handle the signature $f_6$. 
We prove a \#P-hardness result for $\Holantb(f_6, \mathcal{F})$. 
In this proof, we find other miraculous  signatures with special structures such that all signatures realized from them by merging gadgets are affine signatures, while themselves are not affine signatures.
In order to handle the signature $f_8$,
we introduce Holant problems with limited appearance, 
where some signatures are only allowed to appear a limited number of times in all instances. 
We turn the obstacle of the closure property of $f_8$
in our favor to 
prove  non-constructively a P-time
reduction from $\Holantb(f_8, \mathcal{F})$ to $\Holant(f_8, \mathcal{F})$.
In fact, it is provable that except $=_2$, the other three binary Bell  signatures are \emph{not} realizable from $f_8$ by gadget construction. 
However, 
we 
show that we can realize, in the sense of a non-constructive complexity reduction,
the desired binary Bell signatures which appear an unlimited number of
times.
This utilizes the framework where these signatures occur  only a limited number of
times.
Then, we give a \#P-hardness result for $\Holantb(f_8, \mathcal{F})$  similar to $\Holantb(f_6, \mathcal{F})$. 


\section{Preliminaries}
\subsection{Definitions and notations}
Let $f$ be a complex-valued signature.
If $\overline{f({\alpha})}=f({\overline{\alpha}})$ for all $\alpha$ where $\overline{f({\alpha})}$ denotes the complex conjugation of $f(\alpha)$ and $\overline{\alpha}$ denotes the bit-wise complement of $\alpha$, we say $f$ satisfies  \emph{Arrow Reversal Symmetry} ({\sc ars}).
We may also use $f^\alpha$ to denote $f(\alpha)$.
We use ${\rm wt}({\alpha})$ to denote the Hamming weight of $\alpha$.
The support $\mathscr{S}(f)$ of a signature $f$ is 
$\{\alpha \in \mathbb{Z}_2^{n} \mid f(\alpha) \neq 0\}$.
We say $f$ has support of size $k$ if $|\mathscr{S}(f)|=k$.
 If $\mathscr{S}(f)=\emptyset$, i.e., $f$ is identically $0$, we say $f$ is a zero signature and denote it by $f\equiv 0$.
Otherwise, $f$ is a nonzero signature.
Let $\mathscr{E}_n=\{\alpha \in \mathbb{Z}^n_2 \mid {\rm wt}(\alpha)$ is even$\}$, and $\mathscr{O}_n=\{\alpha \in \mathbb{Z}^n_2 \mid {\rm wt}(\alpha)$ is odd$\}$.
A signature $f$ of arity $n$ has even or odd parity if $\mathscr{S}(f)\subseteq\mathscr{E}_n$ or $\mathscr{S}(f)\subseteq\mathscr{O}_n$ respectively. In both cases, we say that $f$ has parity. 
Let $\mathscr{H}_{2n}=\{\alpha \in \mathbb{Z}_2^{2n} \mid
{\rm wt}(\alpha)= n\}$.
A signature $f$ of arity $2n$ has half-weighted support if $\mathscr{S}(f) \subseteq \mathscr{H}_{2n}$.
We call such a signature an \emph{Eularian orientaion} (EO) signature.
For $\alpha\in \mathbb{Z}^n_2$ and $1\leqslant i \leqslant n$, we use $\alpha_i$ to denote the value of $\alpha$ on bit $i$.

Counting  constraint satisfaction problems (\#CSP) can be expressed as Holant problems.
We use $=_n$ to denote the {\sc Equality} signature of arity $n$,
which takes value $1$ on the all-0 and all-1 inputs  
and $0$ elsewhere. (We denote the $n$-bits all-0 and all-1 strings by $\Vec{0}^n$ and $\Vec{1}^n$ respectively.  We may omit the superscript $n$ when it is clear from the context.)
Let $\mathcal{EQ}=\{=_1, =_2, \ldots, =_n, \ldots\}$ denote the set of all {\sc Equality} signatures.

\begin{lemma}[\cite{jcbook}]
$\CSP(\mathcal{F})\equiv_T \Holant(\mathcal{EQ} \mid \mathcal{F})$.
\end{lemma}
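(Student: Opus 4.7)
The plan is to exhibit Turing reductions in both directions by setting up the obvious bijection between CSP-instances and signature grids of the bipartite form $\Holant(\mathcal{EQ} \mid \mathcal{F})$: variables of the CSP instance correspond to vertices on the $\mathcal{EQ}$ side, constraints correspond to vertices on the $\mathcal{F}$ side, and each occurrence of a variable in a constraint becomes an edge.

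For the direction $\CSP(\mathcal{F}) \leq_T \Holant(\mathcal{EQ} \mid \mathcal{F})$, I would start with an arbitrary $\#\CSP(\mathcal{F})$ input on variables $x_1, \ldots, x_m$ with constraints $f_1, \ldots, f_s$. I would create a bipartite graph $H = (U, V, E)$ where $U = \{u_1, \ldots, u_m\}$ is indexed by variables and $V = \{v_1, \ldots, v_s\}$ is indexed by constraints. For each occurrence of variable $x_i$ as an argument of constraint $f_j$, I add one edge between $u_i$ and $v_j$ (so if $x_i$ is used multiple times in the same constraint, $u_i$ and $v_j$ are connected by parallel edges). Label $v_j$ with $f_j$ and label $u_i$ with $=_{d_i}$, where $d_i = \deg(u_i)$ is the total number of occurrences of $x_i$ across all constraints. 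The equality vertex $u_i$ forces all $d_i$ incident edge-variables to take a common value, which recovers exactly the shared CSP-variable $x_i$, and then the product over $V$ reproduces the CSP partition function. Isolated variables (if any) can be pre-processed out by factoring out the corresponding domain size.

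For the converse direction $\Holant(\mathcal{EQ} \mid \mathcal{F}) \leq_T \CSP(\mathcal{F})$, I reverse the construction: given a bipartite signature grid $\Omega$, I introduce one CSP-variable for each $\mathcal{EQ}$-vertex, and for each $\mathcal{F}$-vertex $v_j$ labeled by $f_j$, I install a constraint $f_j$ applied to the sequence of variables determined by the edges incident to $v_j$ (with repetition whenever parallel edges to the same $\mathcal{EQ}$-vertex occur). The equality signatures on the $U$-side are then automatically encoded by the fact that the same CSP-variable is reused, so no external use of equalities is needed. Summing over CSP-assignments exactly matches summing over edge-assignments $\sigma$ that are constant on each star centered at an $\mathcal{EQ}$-vertex, which are the only assignments contributing nonzero to $\Holant(\Omega)$.

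There is no serious obstacle here; the only subtlety worth a sentence is keeping track of multi-edges and self-loop-like phenomena (a constraint using a variable multiple times corresponds to parallel edges, not to self-loops, because the construction is bipartite), and noting that the reduction is in fact many-one — the Turing-reduction language in the statement is a standard safety blanket. I would close by remarking that both directions are polynomial-time and preserve the instance size linearly, so the equivalence is unconditional.
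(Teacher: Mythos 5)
Your proposal is correct and coincides with the standard argument from the cited reference: the bijection between CSP instances and bipartite signature grids, with variables as equality vertices, constraints as signature vertices, and variable occurrences as edges, handling multi-edges and unused variables exactly as you describe. Nothing is missing, and your observation that the reduction is in fact many-one is accurate.
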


We use $\neq_2$ to denote the binary {\sc Disequality} signature with truth table $(0, 1, 1, 0)$.
We generalize this notion to signatures of 
higher arities. 
A signature $f$ of arity $2n$ is called a  {\sc Disequality} signature of arity $2n$, denoted by $\neq_{2n}$, if $f =1$ when $(x_1 \neq x_2) \wedge \ldots \wedge (x_{2n-1} \neq x_{2n})$,
and 0 otherwise. By permuting its variables
the  {\sc Disequality} signature  of arity $2n$ also defines
$(2n-1)(2n-3)\cdots1$ functions which we also call {\sc Disequality} signatures.
These signatures are equivalent for the complexity of Holant problems; once we have one
we have them all.
Let $\mathcal{DEQ}=\{\neq_2, \neq_4, \ldots, \neq_{2n}, \ldots \}$ denote the set of all {\sc Disequality} signatures.

We use $=_2^-$ to denote the binary signature $(1, 0, 0, -1)$ and $\neq_2^-$ to denote the binary signature $(0, 1, -1, 0)$.
We may also write $=_2$ as $=_2^+$ and $\neq_2$ as $\neq_2^+$. 
Let $\mathcal{B}=\{=^+_2, =_2^-, \neq_2^+, \neq_2^-\}$.
We call them Bell signatures which correspond to Bell states $|\Phi^+\rangle=|00\rangle+|11\rangle$, $|\Phi^-\rangle=|00\rangle-|11\rangle$, 
$|\Psi^+\rangle=|01\rangle+|10\rangle$ and 
$|\Psi^-\rangle=|01\rangle-|10\rangle$ in quantum information science \cite{bell1964einstein}.

A signature $f$ of arity $n \geqslant 2$ can be expressed as 
a $2^k \times 2^{n-k}$ matrix $M_{S_k}(f)$ where $S_k$ is a set of $k$ many variables among all $n$ variables of $f$.
 The matrix $M_{S_k}(f)$ lists all $2^n$ many entries of $f$ with the assignments of variables in $S_k$\footnote{ Given a set of variables, without other specification, we always list them in the cardinal order i.e., from variables with the smallest index to the largest index.} listed in lexicographic order (from $\Vec{0}^{k}$ to $\Vec{1}^{k}$) as row index and the assignments of the other $n-k$ many variables  in lexicographic order as column index.
In particular, $f$ can be expressed as a $2\times 2^{n-1}$ matrix $M_{i}(f)$
which lists the $2^n$ entries of $f$ with the assignments of variable $x_i$ as row index (from $x_i=0$ to $x_i=1$) and the assignments of the other
 $n-1$ variables in lexicographic order as column index.
 Then, $$M_{i}(f)=\begin{bmatrix}
f^{0,00\ldots0} & f^{0, 00\ldots1} & \ldots & f^{0, 11\ldots1}\\
f^{1,00\ldots0} & f^{1, 00\ldots1} & \ldots & f^{1, 11\ldots1}\\
\end{bmatrix}=\begin{bmatrix}
{\bf {f}}^{0}_{i}\\
{\bf {f}}^{1}_{i}\\
\end{bmatrix},$$
where $ {\bf {f}}^{a}_{i}$ 
 denotes the row vector indexed by $x_i=a$ in $M_{i}(f)$.
 Similarly, $f$ can also be expressed as a $4\times 2^{n-2}$ matrix with the assignments of two variables $x_i$ and $x_j$ as row index. Then,
$$M_{i j}(f)=\begin{bmatrix}
f^{00,00\ldots0} & f^{00, 00\ldots1} & \ldots & f^{00, 11\ldots1}\\
f^{01,00\ldots0} & f^{01, 00\ldots1} & \ldots & f^{01, 11\ldots1}\\
f^{10,00\ldots0} & f^{10, 00\ldots1} & \ldots & f^{10, 11\ldots1}\\
f^{11,00\ldots0} & f^{11, 00\ldots1} & \ldots & f^{11, 11\ldots1}\\
\end{bmatrix}=\begin{bmatrix}
{\bf {f}}^{00}_{ij}\\
{\bf {f}}^{01}_{ij}\\
{\bf {f}}^{10}_{ij}\\
{\bf {f}}^{11}_{ij}\\
\end{bmatrix},$$
where $ {\bf {f}}^{ab}_{ij}$ 
denotes the row vector indexed by $(x_i, x_j)=(a, b)$ in $M_{i j}(f)$.
For $=_2$, it has the 2-by-2 signature matrix $M(=_2)=I_2=\left[\begin{smallmatrix}
1 & 0\\
0 & 1\\
\end{smallmatrix}\right]$.
For $\neq_2$, $M(\neq_2)=N_2=\left[\begin{smallmatrix}
0 & 1\\
1 & 0\\
\end{smallmatrix}\right].$

\subsection{Holographic transformation}
To introduce the idea of holographic transformation,
it is convenient to consider bipartite graphs.
For a general graph,
we can always transform it into a bipartite graph while preserving the Holant value,
as follows.
For each edge in the graph,
we replace it by a path of length two.
(This operation is called the \emph{2-stretch} of the graph and yields the edge-vertex incidence graph.)
Each new vertex is assigned the binary \textsc{Equality} signature $=_2$. Thus, we have $\holant{=_2}{\mathcal{F}}\equiv_T \Holant(\mathcal{F})$.

For an invertible $2$-by-$2$ matrix $T \in {\bf GL}_2({\mathbb{C}})$
 and a signature $f$ of arity $n$, written as
a column vector (covariant tensor) $f \in \mathbb{C}^{2^n}$, we denote by
$Tf = T^{\otimes n} f$ the transformed signature.
  For a signature set $\mathcal{F}$,
define $T\mathcal{F} = \{T f \mid  f \in \mathcal{F}\}$ the set of
transformed signatures.
For signatures written as
 row vectors (contravariant tensors) we define
$f T^{-1}$ and  $\mathcal{F} T^{-1}$ similarly.
Whenever we write $T f$ or $T \mathcal{F}$,
we view the signatures as column vectors;
similarly for $f T^{-1}$ or $\mathcal{F} T^{-1}$ as row vectors.
We can also represent $Tf$ as the matrix $M_{S_k}(Tf)$ with the assignments of variables in $S_k$ as row index and the assignments of the other $n-k$ variables as column index. 
Then,  we  have $M_{S_k}(Tf)=T^{\otimes k}M_{S_k}(f)(T^{\tt T})^{\otimes n-k}$. Similarly, $M_{S_k}(fT^{-1})=({T^{-1}}^{\tt T})^{\otimes k}M_{S_k}(f)(T^{-1})^{\otimes n-k}$.

Let $T \in {\bf GL}_2({\mathbb{C}})$.
The holographic transformation defined by $T$ is the following operation:
given a signature grid $\Omega = (H, \pi)$ of $\holant{\mathcal{F}}{\mathcal{G}}$,
for the same bipartite graph $H$,
we get a new signature grid $\Omega' = (H, \pi')$ of $\holant{\mathcal{F} T^{-1}}{T \mathcal{G}}$ by replacing each signature in
$\mathcal{F}$ or $\mathcal{G}$ with the corresponding signature in $\mathcal{F} T^{-1}$ or $T \mathcal{G}$.

\begin{theorem}[\cite{valiant-holo-08}]
 For every $T \in {\bf GL}_2({\mathbb{C}})$,
  $\Holant(\mathcal{F} \mid \mathcal{G}) \equiv_T \Holant(\mathcal{F} T^{-1} \mid T \mathcal{G}).$
\end{theorem}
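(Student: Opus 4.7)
The plan is to prove the stronger pointwise identity $\Holant(\Omega) = \Holant(\Omega')$ for any signature grid $\Omega = (H, \pi)$ of $\holant{\mathcal{F}}{\mathcal{G}}$ and the corresponding grid $\Omega' = (H, \pi')$ of $\holant{\mathcal{F} T^{-1}}{T \mathcal{G}}$ built on the same bipartite graph $H$ by replacing each signature with its transformed counterpart. Since the replacement $\pi \mapsto \pi'$ is polynomial-time computable from the fixed $T \in \mathbf{GL}_2(\mathbb{C})$, and the symmetric construction with $T$ replaced by $T^{-1}$ yields the reverse direction, the Turing equivalence $\equiv_T$ follows at once from such a pointwise identity.

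To establish the identity, I would expand both Holant sums as tensor contractions along the edges of $H$. On each edge $e = (u,v)$ with $u$ on the $\mathcal{F}$-side and $v$ on the $\mathcal{G}$-side, the transformation inserts a factor of $T^{-1}$ on the $u$-slot (from $f_u T^{-1}$, viewed as acting on the row vector $f_u$) and a factor of $T$ on the $v$-slot (from $T g_v$, viewed as acting on the column vector $g_v$). Interchanging the order of summation and collecting the factors attached to $e$, the edge variable $x_e \in \{0,1\}$ appears only in the product $\sum_{x_e \in \{0,1\}} (T^{-1})_{y_e^u, x_e}\, T_{x_e, y_e^v} = (T^{-1} T)_{y_e^u, y_e^v} = \delta_{y_e^u, y_e^v}$, where $y_e^u$ and $y_e^v$ are the fresh auxiliary variables introduced by the two transformations at the endpoints of $e$. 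The Kronecker delta identifies $y_e^u$ with $y_e^v$, so after applying this cancellation independently on every edge the remaining sum is exactly the one defining $\Holant(\Omega)$.

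The only subtlety---really not an obstacle---is the bookkeeping of the tensor-index conventions fixed in the preliminaries: since $Tg$ applies $T$ to every variable slot of the column vector $g$ and $fT^{-1}$ applies $T^{-1}$ to every slot of the row vector $f$, one must verify that on each edge the two insertions meet in the correct order, so that the product in the sum above really is $T^{-1} T$ rather than something else. Once this indexing is pinned down, the theorem reduces to the trivial identity $T^{-1} T = I$ applied independently at each edge, and no combinatorial or complexity-theoretic ingredient is required.
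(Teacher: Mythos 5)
The paper states this theorem as a citation from Valiant's work and gives no proof of its own, so there is nothing internal to compare against. Your argument is correct and is essentially the standard proof of the Holant theorem: expand both Holant values as edge-indexed tensor contractions, pull the $T^{-1}$ factor from each row-vector side and the $T$ factor from each column-vector side onto the shared edge, and observe that summing the edge variable produces $(T^{-1}T) = I_2$, i.e.\ a Kronecker delta identifying the two fresh indices, so that after this cancellation on every edge the sum collapses back to $\Holant(\Omega)$. The bookkeeping remark you flag (row-vector contraction for $fT^{-1}$ versus column-vector contraction for $Tg$, so that the two insertions compose to $T^{-1}T$ and not $TT^{-1}$ or a transpose) is exactly the one point that needs care, and you handle it correctly; the reduction to a Turing equivalence via the fixed, poly-time-computable relabeling and the symmetric use of $T^{-1}$ is also right.
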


Therefore,
a holographic transformation does not change the complexity of the Holant problem in the bipartite setting. 
Let  ${{\bf O}}_2(\mathbb{R})\subseteq \mathbb{R}^{2 \times 2}$ be the set of all 2-by-2 real orthogonal matrices.
 We denote ${{\bf O}}_2(\mathbb{R})$ by ${\bf O}_2$.
For all $Q\in {\bf O}_2$,
since $(=_2)Q^{-1}=(=_2)$,
  $\holant{=_2}{\mathcal{F}}\equiv_T \holant{=_2}{Q\mathcal{F}}.$

A particular holographic transformation that will be commonly  used in this paper is the transformation defined by 
$Z^{-1}=\frac{1}{\sqrt{2}}\left[\begin{smallmatrix} 
1 & - \ii \\
1 & \ii \\
\end{smallmatrix}
\right].$ 
Note that $(=_2)Z=(\neq_2)$. 
Thus, $\holant{=_2}{\mathcal{F}}\equiv_T \holant{\neq_2}{Z^{-1}{\mathcal{F}}}.$
We denote $Z^{-1}{\mathcal{F}}$ by $\widehat{{\mathcal{F}}}$ and $Z^{-1}f$ by $\widehat{f}$.
It is known that $f$ and $\widehat{f}$ have the following relation.
\begin{lemma}[\cite{cai-fu-shao-eo}]\label{real-ars}
A (complex-valued) signature $f$ is a real-valued signature iff $\widehat{f}$ satisfies {\sc ars}.
\end{lemma}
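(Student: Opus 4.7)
My plan is to reduce the claim to a single matrix identity linking complex conjugation of $Z^{-1}$ with the bit-flip matrix, and then tensor it up. Writing $X = \left[\begin{smallmatrix} 0 & 1 \\ 1 & 0 \end{smallmatrix}\right]$ (so that $X^{\otimes n}$ acting on a $2^n$-vector $g$ indexed by $\{0,1\}^n$ sends $g(\alpha)$ to $g(\overline{\alpha})$, where $\overline{\alpha}$ is the bitwise complement), I would first verify by a direct $2\times 2$ computation the key identity
\begin{equation*}
\overline{Z^{-1}} \;=\; X \, Z^{-1},
\end{equation*}
using $Z^{-1} = \tfrac{1}{\sqrt{2}}\left[\begin{smallmatrix} 1 & -\ii \\ 1 & \ii \end{smallmatrix}\right]$. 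This is the one real calculation in the proof; everything else is bookkeeping.

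Next I would tensor this identity to arity $n$: since conjugation and tensor products commute, and $(AB)^{\otimes n} = A^{\otimes n}B^{\otimes n}$,
\begin{equation*}
\overline{(Z^{-1})^{\otimes n}} \;=\; \bigl(\overline{Z^{-1}}\bigr)^{\otimes n} \;=\; (X Z^{-1})^{\otimes n} \;=\; X^{\otimes n}\, (Z^{-1})^{\otimes n}.
\end{equation*}
Applying both sides to $\overline{f}$ and using $\widehat{f} = (Z^{-1})^{\otimes n} f$,
\begin{equation*}
\overline{\widehat{f}} \;=\; \overline{(Z^{-1})^{\otimes n}}\; \overline{f} \;=\; X^{\otimes n}\, (Z^{-1})^{\otimes n}\, \overline{f}.
\end{equation*}

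Now the two directions follow by reading this equation twice. If $f$ is real, then $\overline{f} = f$ and the right side becomes $X^{\otimes n}\widehat{f}$; evaluating coordinate-wise gives $\overline{\widehat{f}(\alpha)} = \widehat{f}(\overline{\alpha})$ for every $\alpha$, which is exactly {\sc ars}. Conversely, if $\widehat{f}$ satisfies {\sc ars} then $\overline{\widehat{f}} = X^{\otimes n}\widehat{f} = X^{\otimes n}(Z^{-1})^{\otimes n} f$; combined with the displayed equation this yields $X^{\otimes n}(Z^{-1})^{\otimes n}\overline{f} = X^{\otimes n}(Z^{-1})^{\otimes n} f$, and since both $X$ and $Z^{-1}$ are invertible we conclude $\overline{f} = f$, so $f$ is real-valued.

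The only place where something could go wrong is the index convention in passing from the matrix identity $\overline{Z^{-1}} = XZ^{-1}$ to the coordinate statement for $\widehat f$, i.e., checking that $X^{\otimes n}$ really implements the bitwise complement on a lexicographically indexed tensor; this is routine but worth stating explicitly so the reader sees {\sc ars} drop out without any arity- or parity-dependent subtleties. No other obstacle is anticipated.
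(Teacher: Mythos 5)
Your proof is correct. The key identity $\overline{Z^{-1}} = N_2\, Z^{-1}$ (your $X$ is what the paper calls $N_2$) checks out by direct computation, and once it is tensored up, both directions of the equivalence drop out cleanly; the invertibility step to recover $\overline{f}=f$ is fine. Note, however, that the paper does not actually give its own proof of this lemma — it is cited from~\cite{cai-fu-shao-eo} without argument — so there is nothing in this source to compare against. Your write-up is a legitimate self-contained justification of the cited fact, and the observation that $X^{\otimes n}$ realizes bitwise complement on a lexicographically indexed tensor is exactly the point one would want a careful reader to see stated explicitly.
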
 

We say a real-valued binary signature $f(x_1, x_2)$ is orthogonal if  $M_1(f)M_1^{\tt T}(f)=\lambda I_2$ for some real $\lambda> 0$.
Since $M_2(f)=M_1^{\tt T}(f)$, $M_1(f)M^{\tt T}_1(f)=\lambda I_2$ iff $M_2(f)M^{\tt T}_2(f)=\lambda I_2$.
The following fact is easy to check.
\begin{lemma}\label{q-parity}
A binary signature $f$ is orthogonal or a zero signature iff $\widehat{f}$ has parity and {\sc ars}. 
\end{lemma}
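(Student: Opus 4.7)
The plan is to reduce the statement to a direct two-dimensional computation. By Lemma \ref{real-ars}, $\widehat{f}$ satisfies {\sc ars} iff $f$ is real-valued; both orthogonality and the zero signature require $f$ to be real, so it suffices to prove, assuming $f$ is real-valued, that $f$ is orthogonal or identically zero iff $\widehat{f}$ has parity.

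Writing $M_1(f) = \begin{bmatrix} a & b \\ c & d \end{bmatrix}$ with $a,b,c,d \in \mathbb{R}$, I would first compute $M_1(\widehat{f}) = Z^{-1} M_1(f) (Z^{-1})^{\tt T}$ explicitly. A direct multiplication yields
\[
M_1(\widehat{f}) = \frac{1}{2}\begin{bmatrix} (a-d) - \ii(b+c) & (a+d) + \ii(b-c) \\ (a+d) - \ii(b-c) & (a-d) + \ii(b+c) \end{bmatrix}.
\]
Reading off the entries, $\widehat{f}$ has even parity (i.e.\ $\widehat{f}^{01} = \widehat{f}^{10} = 0$) iff $a + d = 0$ and $b = c$, while $\widehat{f}$ has odd parity (i.e.\ $\widehat{f}^{00} = \widehat{f}^{11} = 0$) iff $a = d$ and $b + c = 0$.

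For the forward direction, if $f$ is orthogonal then the rows $(a,b)$ and $(c,d)$ of $M_1(f)$ are mutually orthogonal real vectors of common nonzero length $\sqrt{\lambda}$. The only vectors in $\mathbb{R}^2$ orthogonal to a given nonzero $(a,b)$ of the same length are $\pm(-b,a)$, so either $(c,d) = (-b,a)$ (giving $a = d$ and $b + c = 0$, the odd-parity condition) or $(c,d) = (b,-a)$ (giving $a + d = 0$ and $b = c$, the even-parity condition). The zero signature satisfies both parity conditions vacuously.

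Conversely, under either parity condition $(c,d) = \pm(-b,a)$, so $(a,b)$ and $(c,d)$ are orthogonal with common squared length $a^2 + b^2$. If this quantity is positive then $f$ is orthogonal with $\lambda = a^2 + b^2 > 0$; otherwise all four entries vanish and $f \equiv 0$. The only real care needed is the bookkeeping of signs in the transformation; no deeper ideas are required beyond this two-dimensional orthogonality dichotomy, so I do not anticipate any genuine obstacle.
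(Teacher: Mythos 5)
Your proof is correct and follows essentially the same approach as the paper: a direct computation of $M_1(\widehat{f})=Z^{-1}M_1(f)(Z^{-1})^{\tt T}$ followed by reading off the parity conditions. The only difference is expositional — you compute $M_1(\widehat{f})$ for an arbitrary real $2\times 2$ matrix and then specialize, whereas the paper starts directly from the two canonical orthogonal forms $\left[\begin{smallmatrix} a & b \\ -b & a \end{smallmatrix}\right]$ and $\left[\begin{smallmatrix} a & b \\ b & -a \end{smallmatrix}\right]$ and states the resulting biconditionals; your version makes the converse direction slightly more transparent, but the underlying calculation is identical.
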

\begin{proof}
Consider $M_1(f)$ and $M_1(\widehat{f})=M_1(Z^{-1}{f})=Z^{-1}M_1(f)(Z^{-1})^{\tt T}$.
Then, $M_1(f)=\left[\begin{smallmatrix} 
a & b \\
-b & a \\
\end{smallmatrix}
\right]$ iff $M_1(\widehat{f})=\left[\begin{smallmatrix} 
0 & a+b\ii \\
a-b\ii & 0 \\
\end{smallmatrix}
\right],$
 and $M_1(f)=\left[\begin{smallmatrix} 
a & b \\
b & -a \\
\end{smallmatrix}
\right]$ iff $M_1(\widehat{f})=\left[\begin{smallmatrix} 
a-b\ii & 0 \\
0 & a+b\ii \\
\end{smallmatrix}
\right].$
Also, $f\equiv 0$ iff $\widehat{f}\equiv 0$ which also has parity.
\end{proof}

Let $\mathcal{O}$ denote the set of all binary orthogonal signatures and the binary zero signature. 
Then, $\widehat{\mathcal{O}}=Z^{-1}\mathcal{O}$ is the set of all binary signatures with {\sc ars} and parity (including the binary zero signature).
Note that $\mathcal{B}\subseteq \mathcal{O}$ and 
$\widehat{\mathcal{B}}\subseteq \widehat{\mathcal{O}}$.
Here the transformed set $$\widehat{\mathcal{B}}=\left\{\widehat{=^+_2}, \widehat{=_2^-}, \widehat{\neq_2^+},  \widehat{\neq_2^-}\right\}=\{\neq_2, =_2, (-\ii) \cdot =_2^-, \ii \cdot \neq_2^-\}.$$

For every $Q\in {\bf O}_2$, let $\widehat{Q}=Z^{-1}QZ$. 
Then, $\widehat{Q}\widehat{\mathcal{F}}=(Z^{-1}QZ)(Z^{-1}\mathcal{F})=Z^{-1}(Q\mathcal{F})=\widehat{Q\mathcal{F}}$.
Thus, $$\holant{\neq_2}{\widehat{\mathcal{F}}}\equiv_T\holant{=_2}{\mathcal{F}}\equiv_T\holant{=_2}{Q\mathcal{F}}\equiv_T\holant{\neq_2}{\widehat{Q}\widehat{\mathcal{F}}}.$$
Let $\widehat{{\bf O}_2}=\{\widehat{Q}=Z^{-1}QZ\mid Q\in {\bf O}_2\}.$
Then, $\widehat{{\bf O}_2}=\{\left[\begin{smallmatrix}
\alpha & 0\\
0 & \bar \alpha\\
\end{smallmatrix}\right], \left[\begin{smallmatrix}
0 & \alpha\\
\bar \alpha & 0\\
\end{smallmatrix}\right]\mid \alpha\in \mathbb{C}, |\alpha|=1\}.$
 Note that the  notation $\widehat{\cdot}$ on a matrix $Q\in{\bf O}_2$ is \emph{not} the same  as the notation $\widehat{\cdot}$ on a signature $f\in \mathcal{O}$.
 Suppose that $M_1(f)=Q\in{\bf O}_2$. Since $Q=Z\widehat{Q}Z^{-1}$ and $Z^{-1}(Z^{-1})^{\tt T}=N_2$, $$M_1(\widehat{f})=Z^{-1}Q(Z^{-1})^{\tt T}=Z^{-1}(Z\widehat{Q}Z^{-1})(Z^{-1})^{\tt T}=\widehat{Q}N_2\neq \widehat{Q}.$$

\subsection{Signature factorization}

Recall that by our definition, every (complex valued) signature  has arity at least one.
A nonzero signature $g$ \emph{divides} $f$ denoted by $g \mid f$, if there is a signature $h$ such that  $f=g \otimes h$
(with possibly a permutation of variables) or there is a constant $\lambda$ such that $f= \lambda \cdot g$.
In the latter case, if $\lambda \neq 0$, then we also have $f \mid g$ since $g= \frac{1}{\lambda} \cdot f$.
For nonzero signatures, if both $g\mid f$ and $f \mid g$, then they are nonzero constant multiples of
each other, and 
we say $g$ is an \emph{associate} of $f$, 
denoted by $g \sim f$.
In terms of  this division relation,
the notions of \emph{irreducible}  signatures  and \emph{prime} signatures have been defined.
They are proved equivalent and thus, the \emph{unique prime factorization} (UPF) of signatures is established \cite{cai-fu-shao-eo}.

A nonzero signature $f$ is irreducible  if there are no signatures $g$ and $h$ such that $f=g\otimes h$. 
A nonzero signature $f$ is a prime signature
if $f \mid g\otimes h$
implies that $f \mid g$ or $f \mid h$.  These notions
are equivalent.
We say a signature $f$ is reducible  if $f = g \otimes h$,
for some signatures $g$ and $h$. All zero signatures 
of arity greater than 1 are reducible.
A prime factorization of a signature $f$ is $f=g_1\otimes \ldots \otimes g_k$ up to a permutation of
variables, where each $g_i$ is irreducible. 

\begin{lemma}[Unique prime factorization~\cite{cai-fu-shao-eo}]\label{unique}
Every nonzero signature $f$ has a prime factorization.
If  $f$ has  prime factorizations
 $f=g_1\otimes \ldots \otimes g_k$ and $f=h_1\otimes \ldots \otimes h_\ell$,
both up to a permutation of variables,
then $k=\ell$ and after reordering the factors we have $g_i \sim h_i$  for all $i$. 
\end{lemma}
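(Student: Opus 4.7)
The plan is to establish existence by strong induction on arity and uniqueness via a matrix-rank characterization of the variable partition induced by any prime factorization. For existence, at arity $n=1$ there is nothing to do: a nonzero signature of arity $1$ is irreducible (since the paper admits no signatures of arity $0$) and is therefore already its own prime factorization. For $n \geq 2$, either $f$ is irreducible, or $f = g \otimes h$ with $g$ and $h$ of strictly smaller arity; both must be nonzero since $f \neq 0$. Applying the inductive hypothesis to $g$ and $h$ and concatenating the two factorizations yields a prime factorization of $f$, using that the paper defines a prime factorization precisely as a factorization into irreducibles.

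For uniqueness, the key observation is that the unordered multiset of variable sets $\{A_1, \ldots, A_k\}$ associated with a prime factorization $f = g_1 \otimes \cdots \otimes g_k$ is completely determined by $f$. Because the matrix representation is compatible with tensor products, for every $S \subseteq [n]$ we have
\[
M_S(f) \;=\; \bigotimes_{i=1}^k M_{S \cap A_i}(g_i)
\]
after a suitable permutation of rows and columns, so $\mathrm{rank}(M_S(f)) = \prod_{i=1}^k \mathrm{rank}(M_{S \cap A_i}(g_i))$. When $S \cap A_i \in \{\emptyset, A_i\}$, the factor $M_{S \cap A_i}(g_i)$ is a row- or column-listing of the entries of the nonzero signature $g_i$ and has rank $1$. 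When $\emptyset \subsetneq S \cap A_i \subsetneq A_i$, the matrix $M_{S \cap A_i}(g_i)$ must have rank at least $2$; otherwise $M_{S \cap A_i}(g_i) = \mathbf{u}\mathbf{v}^{\T}$, which says exactly that $g_i$ is the tensor product of a signature on $S \cap A_i$ with one on $A_i \setminus S$, contradicting irreducibility of $g_i$. Consequently $\mathrm{rank}(M_S(f)) = 1$ if and only if $S$ is a union of some of the sets $A_i$.

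Applying this characterization to a second prime factorization $f = h_1 \otimes \cdots \otimes h_\ell$ with variable sets $B_1, \ldots, B_\ell$, the same collection of subsets $S \subseteq [n]$ must equal the unions of some $B_j$'s. The partition into irreducible blocks is then recovered as the unique finest partition whose unions realize this family, so $\{A_i\} = \{B_j\}$ as set partitions; in particular $k = \ell$ and after reordering $A_i = B_i$. To conclude that $g_i \sim h_i$, I would fix an assignment $\tau$ to $[n] \setminus A_i$ with $\prod_{j \neq i} g_j(\tau|_{A_j}) \neq 0$; then the restriction of $f$ to the remaining variables is simultaneously a nonzero scalar multiple of $g_i$ and of $h_i$, forcing the two to be associates.

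The main obstacle is the ``rank $\geq 2$'' half of the dichotomy above: proving that irreducibility of $g_i$ genuinely forces $\mathrm{rank}(M_T(g_i)) \geq 2$ for every proper nontrivial $T \subsetneq A_i$. This rests on the linear-algebraic fact that a matrix has rank $1$ if and only if it is an outer product, which under the $M_T$ encoding translates exactly to $g_i$ being a tensor product across the bipartition $(T, A_i \setminus T)$. Making this translation rigorous, especially accounting for the permutation of variables built into the definition of divisibility, is the technical heart of the argument; once it is in place, the combinatorial uniqueness of the block partition, and hence of the factorization up to associates, follows cleanly.
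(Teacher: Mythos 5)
Your proof is correct, but it takes a genuinely different route from the one the paper points to.

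The paper does not reprove the UPF lemma; it cites \cite{cai-fu-shao-eo} and summarizes that paper's strategy as first showing that irreducible signatures and prime signatures are the same notion, and then deriving unique factorization the classical way (as in a UFD: if an irreducible divides a product, it divides a factor, and one peels off factors one at a time). Your argument bypasses the irreducible--prime equivalence entirely. Instead you exhibit a canonical invariant of $f$: the family $\mathcal{C} = \{S \subseteq [n] : \operatorname{rank} M_S(f) = 1\}$. Using the Kronecker-product identity $M_S(f) = \bigotimes_i M_{S \cap A_i}(g_i)$ and the fact that for a nonzero irreducible $g_i$ and proper nontrivial $T \subsetneq A_i$ the matrix $M_T(g_i)$ has rank $\geq 2$ (rank $1$ would give an outer-product factorization of $g_i$), you get that $\mathcal{C}$ is exactly the Boolean subalgebra of $2^{[n]}$ generated by the block partition, whose atoms therefore recover $\{A_1, \ldots, A_k\}$ from $f$ alone. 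Once the partition is pinned down, the associate relation $g_i \sim h_i$ follows by restricting to a nonvanishing assignment on $[n]\setminus A_i$, as you note. The rank approach buys a more elementary and self-contained proof tailored to tensor factorization, and it produces an explicitly computable invariant determining the block structure; the prime/irreducible approach is more abstract but also yields the prime property itself, which is useful as a tool elsewhere. One small technical point worth spelling out is that $\mathcal{C}$ is automatically closed under intersection and complement (since unions of $A_i$'s are), so its atoms are well defined; you gesture at this with ``unique finest partition'' but the Boolean-algebra formulation makes it immediate. Also make sure to say explicitly that each $g_i \neq 0$ because $f = \bigotimes_i g_i \neq 0$, which is what guarantees the rank-$1$ factors in the $S\cap A_i \in \{\emptyset, A_i\}$ cases are genuinely rank $1$ rather than rank $0$, and what supplies the nonvanishing assignment $\tau$ at the end.
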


\begin{lemma}[\cite{cai-fu-shao-eo}]
 let $f$ be a real-valued reducible signature, then there exists a factorization $f=g\otimes h$ such that $g$ and $h$ are both real-valued signatures.
 
Equivalently, let $\widehat f$ be a reducible signature satisfying {\sc ars}, then there exists a factorization $\widehat f=\widehat g\otimes \widehat h$ such that $\widehat g$ and $\widehat h$ both satisfy {\sc ars}.
\end{lemma}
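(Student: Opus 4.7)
My plan is to prove the first (real-valued) formulation directly; the equivalent ARS formulation then follows immediately by applying the transformation $Z^{-1}$ to the resulting real factorization, since $Z^{-1}$ commutes with tensor product and, by Lemma~\ref{real-ars}, a signature is real-valued iff its hat satisfies {\sc ars}.

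For the first version, I would start from any factorization $f = g \otimes h$, which exists because $f$ is reducible (with $g, h$ a priori complex-valued), and decompose each factor into real and imaginary parts: $g = g_R + \ii g_I$ and $h = h_R + \ii h_I$, where $g_R, g_I, h_R, h_I$ are real-valued signatures. Expanding,
\[
f \;=\; \bigl(g_R \otimes h_R - g_I \otimes h_I\bigr) + \ii \bigl(g_R \otimes h_I + g_I \otimes h_R\bigr).
\]
Since $f$ is real, the imaginary part vanishes, giving the tensor identity $g_R \otimes h_I = -\, g_I \otimes h_R$.

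The key step, and the main content of the argument, will be to exploit the rank-at-most-one structure of both sides. In the generic case where all four of $g_R, g_I, h_R, h_I$ are nonzero as tensors, both sides are nonzero rank-one tensors that are equal, so the uniqueness of rank-one tensor decomposition (up to scalar) forces $g_R = \lambda g_I$ and $h_I = -h_R/\lambda$ for some nonzero $\lambda \in \mathbb{R}$. Hence $g = (\lambda + \ii)\, g_I$ and $h = \tfrac{\lambda - \ii}{\lambda}\, h_R$, so
\[
f \;=\; g \otimes h \;=\; \frac{\lambda^2 + 1}{\lambda}\, g_I \otimes h_R.
\]
Setting $g' = \tfrac{\lambda^2+1}{\lambda}\, g_I$ and $h' = h_R$ then gives the desired real factorization $f = g' \otimes h'$. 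The degenerate cases where some component vanishes are handled by short case analysis; e.g., if $g_R = 0$ (and $g \neq 0$, hence $g_I \neq 0$), the identity forces $h_R = 0$, and then $f = -\, g_I \otimes h_I$ already exhibits a real factorization. I expect the only real point to watch is the rank-one uniqueness step; the rest should be routine bookkeeping.
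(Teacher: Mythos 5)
Your argument is correct, and the details check out. Viewing $g_R\otimes h_I = -\,g_I\otimes h_R$ as an equality of $2^m\times 2^n$ matrices (rows indexed by the variables of $g$, columns by those of $h$), each side has matrix rank at most one, and when all four of $g_R,g_I,h_R,h_I$ are nonzero, the rank-one factorization of the common nonzero matrix is indeed unique up to a reciprocal pair of scalars, giving $g_R=\lambda g_I$ and $h_I=-h_R/\lambda$; the scalar $\lambda$ is automatically real since $g_R=\lambda g_I$ with $g_R,g_I$ real and $g_I\neq 0$. Your degenerate-case prescription (e.g.\ $g_R=0$ with $g\neq 0$ forces $h_R=0$, yielding $f=(-g_I)\otimes h_I$) handles the remaining nonzero cases by symmetry, and $f\equiv 0$ is trivial. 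The reduction of the {\sc ars} formulation to the real one via $Z^{-1}$, using Lemma~\ref{real-ars} and the fact that $Z^{\otimes n}$ respects the tensor splitting, is also fine. The lemma is cited to~\cite{cai-fu-shao-eo} with no proof reproduced in this paper, so I cannot directly compare with the source; for reference, an equally short variant of your argument replaces the real/imaginary split by conjugation: from $f=\overline{f}$ one gets $g\otimes h=\overline{g}\otimes\overline{h}$, hence $\overline{g}=\mu g$ with $|\mu|=1$ by the same rank-one uniqueness, and writing $\mu=e^{\ii\theta}$ shows $e^{\ii\theta/2}g$ and $e^{-\ii\theta/2}h$ are both real and tensor to $f$. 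Either route is sound.
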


In the following, when we say that a real-valued reducible signature $f$ has a factorization $g\otimes h$, we always assume that $g$ and $h$ are real-valued. 
Equivalently, when we say a signature $\widehat{f}$ satisfying {\sc ars} has a factorization $\widehat{g}\otimes \widehat{h}$, we always assume that $\widehat{g}$ and $\widehat{h}$ satisfy {\sc ars}.

For a signature set $\mathcal{F}$,  we use $\mathcal{F}^{\otimes k}$ $(k\geqslant 1)$ to denote the set $\{\lambda\bigotimes^k_{i=1}f_i\mid \lambda \in \mathbb{R}\backslash\{0\}, f_i\in \mathcal{F}\}$. 
Here,  $\lambda$ denotes a normalization scalar. 
In this paper, we only consider the normalization by nonzero real constants.
Note that $\mathcal{F}^{\otimes 1}$ contains all signatures obtained from $\mathcal{F}$ by normalization.
We use $\mathcal{F}^{\otimes}$ to denote $\bigcup^\infty_{k=1}\mathcal{F}^{\otimes k}.$

If a vertex $v$ in a signature grid 
is labeled by a reducible signature $f=g\otimes h$, we can replace the vertex $v$ 
by two vertices $v_1$ and $v_2$ and label $v_1$ with $g$ and $v_2$ with $h$, respectively.
The incident edges of $v$ become incident edges of $v_1$ and $v_2$ respectively
according to the partition of variables of $f$ in the tensor product of $g$ and $h$.  This does not
change the Holant value.
On the other hand,  Lin and Wang proved  that, from a real-valued reducible signature $f=g\otimes h\not\equiv 0$ we can freely replace $f$ by $g$ and $h$ while preserving the complexity of a  Holant problem.

\begin{lemma}[\cite{wang-lin}]\label{lin-wang}
If a nonzero real-valued signature $f$ has a real factorization $g\otimes h$, then $$\Holant(g, h, \mathcal{F})\equiv_T\Holant(f, \mathcal{F}) \text{ and } \holant{\neq_2}{\widehat{g}, \widehat{h}, \widehat{F}}\equiv_T \holant{\neq_2}{\widehat{f}, \widehat{F}}$$ for any signature set $\mathcal{F}$ $(\widehat{\mathcal{F}})$. 
We say $g$ $(\widehat{g})$ and $h$ $(\widehat{h})$ are realizable from $f$ $(\widehat{f})$ by factorization.
\end{lemma}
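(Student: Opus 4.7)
The reduction $\Holant(f, \mathcal{F}) \leq_T \Holant(g, h, \mathcal{F})$ is immediate: in any signature grid for $\Holant(f, \mathcal{F})$, replace each $f$-vertex by the disjoint union of a $g$-vertex and an $h$-vertex, routing the first $\arity(g)$ incident edges to the $g$-vertex and the remaining $\arity(h)$ to the $h$-vertex. Since $f = g \otimes h$, the local contribution at each replaced vertex is preserved, and so is the Holant value.

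For the reverse reduction $\Holant(g, h, \mathcal{F}) \leq_T \Holant(f, \mathcal{F})$, my plan is first to realize the gadgets $g^{\otimes 2}$ and $h^{\otimes 2}$ from $f$ alone by taking two copies of $f$ and joining their $h$-sides (respectively $g$-sides) pairwise by $=_2$ edges, which are available for free via the 2-stretch equivalence $\Holant(\mathcal{F}) \equiv_T \holant{=_2}{\mathcal{F}}$. Because $f = g \otimes h$, these gadgets evaluate to $\|h\|^2 \cdot (g \otimes g)$ and $\|g\|^2 \cdot (h \otimes h)$ respectively, and since $g, h$ are real-valued and nonzero the scalars $\|g\|^2, \|h\|^2$ are strictly positive, so $g^{\otimes 2}$ and $h^{\otimes 2}$ are genuinely realized up to positive scalars. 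Given an instance $\Omega$ of $\Holant(g, h, \mathcal{F})$ with $k$ many $g$-vertices and $\ell$ many $h$-vertices, I then pair up the $g$-vertices into $\lfloor k/2 \rfloor$ pairs and the $h$-vertices into $\lfloor \ell/2 \rfloor$ pairs, substituting each pair by the corresponding tensor-square gadget. When $k \equiv \ell \pmod{2}$, the at-most-one leftover $g$-vertex and the at-most-one leftover $h$-vertex are absorbed into a single $f$-vertex whose first $\arity(g)$ variables are the incident edges of the $g$-singleton and whose last $\arity(h)$ variables are those of the $h$-singleton; the resulting instance uses only $f$ and $\mathcal{F}$, and its Holant value equals $\Holant(\Omega)$ up to an explicit positive scalar of the form $(\|g\|^2)^p (\|h\|^2)^q$.

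The main obstacle is the parity mismatch $k \not\equiv \ell \pmod{2}$. I plan to handle this by augmenting $\Omega$ with a disjoint auxiliary component $\Omega_0$ whose own $g$- and $h$-counts are chosen to make the combined parities match and whose Holant value is explicitly known and nonzero: then $\Holant(\Omega \sqcup \Omega_0) = \Holant(\Omega) \cdot \Holant(\Omega_0)$ can be computed by the balanced-case argument above, and $\Holant(\Omega)$ is recovered by division. The most direct $\Omega_0$ is a single $g$- or $h$-vertex with all incident edges closed via $=_2$/$\neq_2$ loops; when every such trace-like closure vanishes, I would instead use a higher-arity contraction built from copies of $f$ that still detects the nonzero support of $g$ or $h$, whose existence is guaranteed by $g, h \not\equiv 0$, and which in the worst case might require a small polynomial-interpolation argument over carefully designed gadget families. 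Finally, the $Z$-transformed equivalence follows by applying the holographic transformation $Z^{-1}$ throughout: tensor products commute with $Z^{-1}$, so $\widehat{f} = \widehat{g} \otimes \widehat{h}$, and since $(=_2)Z = \neq_2$ the $=_2$ edges of the 2-stretch become the $\neq_2$ edges of the bipartite setting; the same argument applied in the $Z$-basis yields the second equivalence.
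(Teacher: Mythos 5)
The paper does not prove this lemma; it is imported verbatim from the cited reference \cite{wang-lin}, so there is no in-paper proof to compare against. Your proposal is nevertheless worth evaluating on its own terms, and most of it is sound: the easy direction is immediate, the mating construction correctly realizes $g^{\otimes 2}$ and $h^{\otimes 2}$ up to the strictly positive scalars $\|h\|^2$ and $\|g\|^2$ (here you correctly use real-valuedness: a complex $g$ could have $\sum_x g(x)^2 = 0$), and the pairing plus absorbing one leftover $g$ and one leftover $h$ into a single $f$-vertex handles the case $k \equiv \ell \pmod 2$. The $Z$-transformed statement does follow from the first statement by the holographic transformation, as you say.

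The gap is in the parity-mismatch case $k \not\equiv \ell \pmod 2$. You assert that a fixed auxiliary instance $\Omega_0$ with the complementary parity and nonzero Holant ``is guaranteed by $g, h \not\equiv 0$.'' This is false. For example, take $g=(1,-1)$ (unary), $h = (=_2)$, $\mathcal{F} = \{ =_3 \}$: every instance of $\Holant(g,h,\mathcal{F})$ with an odd number of $g$-vertices evaluates to $0$, because the $g$'s contribute a factor that sums to zero over any consistent assignment, and no closure of a single $g$ via loops, $h$'s, and $=_3$'s produces a nonzero value. So no such $\Omega_0$ exists. More fundamentally, if $\arity(g)$ is odd you cannot even close a lone $g$-vertex with loops, and the ``higher-arity contraction'' you gesture at is unspecified; your invocation of polynomial interpolation is not the right tool here and does not resolve anything. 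What saves the argument is a simple observation you do not make: any instance of $\Holant(g,h,\mathcal{F})$ with the problematic parities and nonzero Holant value is itself a valid $\Omega_0$. Hence either some fixed $\Omega_0$ exists (and the nonuniform Turing reduction hardwires it and divides by its known nonzero Holant), or none exists, in which case every mismatched-parity instance has Holant $0$ and the reduction simply outputs $0$. With that case distinction added, your approach is correct; as written, it is incomplete because you claim existence of $\Omega_0$ rather than dichotomize on it.
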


\subsection{Gadget construction}
One basic tool used throughout the paper is gadget construction.
An $\mathcal{F}$-gate is similar to a signature grid $(G, \pi)$ for $\Holant(\mathcal{F})$ except that $G = (V,E,D)$ is a graph with internal edges $E$ and dangling edges $D$.
The dangling edges $D$ define input variables for the $\mathcal{F}$-gate.
We denote the regular edges in $E$ by $1, 2, \dotsc, m$ and the dangling edges in $D$ by $m+1, \dotsc, m+n$.
Then the  $\mathcal{F}$-gate  defines a function $f$
\[
f(y_1, \dotsc, y_n) = \sum_{\sigma: E \rightarrow\{0, 1\}} \prod_{v\in V}f_v(\hat{\sigma}\mid_{E(v)})
\]
where $(y_1, \dotsc, y_n) \in \{0, 1\}^n$ is an assignment on the dangling edges, $\hat{\sigma}$ is the extension of  $\sigma$ on $E$ by the assignment $(y_1, \ldots, y_m)$, and $f_v$ is the signature assigned at each vertex $v \in V$.
This function $f$ is called the signature of the $\mathcal{F}$-gate.
There may be no internal edges in an $\mathcal{F}$-gate at all. In this case, $f$ is simply a tensor product of these signatures $f_v$, i.e., $f={\bigotimes}_{v\in V}f_v$ (with possibly a permutation of variables).
We say a signature $f$ is \emph{realizable} from a signature set $\mathcal{F}$ by gadget construction
if $f$ is the signature of an 
 $\mathcal{F}$-gate. 
If $f$ is realizable from a set $\mathcal{F}$,
then we can freely add $f$ into $\mathcal{F}$ while preserving the complexity (Lemma 1.3 in \cite{jcbook}). 
\begin{lemma}[\cite{jcbook}]
If $f$ is realizable from a set $\mathcal{F}$, then $\Holant(f, \mathcal{F})\equiv_T\Holant(\mathcal{F})$.
\end{lemma}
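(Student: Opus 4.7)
The plan is to prove equivalence by two reductions. The easy direction $\Holant(\mathcal{F}) \leqslant_T \Holant(f, \mathcal{F})$ is immediate: any signature grid over $\mathcal{F}$ is itself a signature grid over $\{f\} \cup \mathcal{F}$, so one can simply feed the input directly into the oracle for $\Holant(f, \mathcal{F})$. The content is in the other direction.

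For $\Holant(f, \mathcal{F}) \leqslant_T \Holant(\mathcal{F})$, I would take an arbitrary signature grid $\Omega = (G, \pi)$ over $\{f\} \cup \mathcal{F}$ and produce a signature grid $\Omega'$ over $\mathcal{F}$ with $\Holant(\Omega) = \Holant(\Omega')$. Fix an $\mathcal{F}$-gate $\Gamma = (V_\Gamma, E_\Gamma, D_\Gamma)$ whose signature is $f$, where $|D_\Gamma| = n = \arity(f)$. For each vertex $v \in V(G)$ labeled by $f$, delete $v$ and plug in a fresh copy $\Gamma_v$ of $\Gamma$: identify each dangling edge of $\Gamma_v$ with the corresponding incident edge of $v$ in $G$, according to the labeling of variables of $f$ on $E(v)$. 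Vertices labeled by signatures in $\mathcal{F}$ are kept as they are. This yields a new signature grid $\Omega'$ over $\mathcal{F}$ whose size is polynomial in $|\Omega|$ (since $\Gamma$ is a fixed gadget of constant size).

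The correctness check is the main step, though it is essentially bookkeeping. Split the edge set of $\Omega'$ into the original edges $E(G)$ and, for each $f$-vertex $v$ of $\Omega$, the internal edges $E_\Gamma^{(v)}$ of the inserted copy $\Gamma_v$. A $\{0,1\}$-assignment $\tau$ on $E(\Omega')$ decomposes as a pair $(\sigma, \{\mu_v\}_v)$ where $\sigma$ is an assignment on $E(G)$ and each $\mu_v$ is an assignment on $E_\Gamma^{(v)}$. Grouping the product of signatures in $\Omega'$ by the vertex of $\Omega$ they came from, and using the defining equation of the $\mathcal{F}$-gate,
\begin{equation*}
\sum_{\mu_v} \prod_{u \in V_\Gamma} f_u\bigl((\sigma, \mu_v)|_{E(u)}\bigr) = f(\sigma|_{E(v)}),
\end{equation*}
so summing over $\tau$ first over each $\mu_v$ and then over $\sigma$ yields exactly $\Holant(\Omega)$. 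Since the oracle for $\Holant(\mathcal{F})$ returns $\Holant(\Omega')$, the reduction is complete.

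The only mild obstacle is notational care about how variables of $f$ are identified with the dangling edges of $\Gamma$, and about handling the (allowed) degenerate case in which the $\mathcal{F}$-gate has no internal edges at all: there the signature is a tensor product of the vertex signatures, and the same factorization argument above goes through with the product over $\mu_v$ being empty. No use of $\mathcal{F}$ beyond the given gate is needed, so this gives a genuine $\leqslant_T$ reduction, not merely a holographic equivalence.
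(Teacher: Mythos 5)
Your proof is correct and is the standard argument for this gadget-substitution lemma; the paper itself gives no proof (it cites the textbook reference), and what you wrote is exactly what that reference does: the trivial direction by inclusion, and the other direction by replacing every $f$-labeled vertex with a fresh copy of the $\mathcal{F}$-gate realizing $f$ and then verifying the Holant value is preserved by summing first over each gadget's internal edges. Your handling of the degenerate case with no internal edges, and the observation that the blowup is polynomial because the gadget is a fixed constant-size object, are both the right points of care.
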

Note that,  if we view $\holant{=_2}{\mathcal{F}}$ as
the edge-vertex incidence graph form of $\Holant(\mathcal{F})$,
then it is equivalent
to label every edge by $=_2$;
similarly in the setting of  $\holant{\neq_2}{\widehat{\mathcal{F}}}$, every edge 
is labeled by $\neq_2$.

\begin{lemma}
If $f$ is realizable from a real-valued signature set $\mathcal{F}$ (in the setting of  $\holant{=_2}{\mathcal{F}}$), then $f$ is also real-valued.
Equivalently, if $\widehat {f}$ is realizable from a signature set $\widehat{\mathcal{F}}$ satisfying {\sc ars} (in the setting of  $\holant{\neq_2}{\widehat{\mathcal{F}}}$), then $\widehat f$ also satisfies {\sc ars}.
\end{lemma}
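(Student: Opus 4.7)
The plan is to prove the first assertion directly from the definition of an $\mathcal{F}$-gate, and then derive the equivalent $\widehat{\cdot}$ version via Lemma \ref{real-ars} together with the $Z$-transformation already set up in the preceding subsection. There is no deep content here; the main point is simply that gadget evaluation is a sum of products of signature values, and that the ``realizability'' relation is compatible with the holographic change of basis defined by $Z$.

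For the first statement, suppose $f$ is the signature of an $\mathcal{F}$-gate $(V,E,D)$ with the vertices labeled by signatures $f_v\in\mathcal{F}$. By definition,
\[
f(y_1,\dots,y_n)=\sum_{\sigma:E\to\{0,1\}}\prod_{v\in V}f_v(\hat{\sigma}|_{E(v)}),
\]
where $\hat{\sigma}$ extends $\sigma$ by the dangling-edge values $y_1,\dots,y_n$. Each $f_v$ is real-valued by hypothesis, so every factor $f_v(\hat{\sigma}|_{E(v)})$ lies in $\mathbb{R}$. A finite product of real numbers is real, and a finite sum of reals is real; hence $f(y_1,\dots,y_n)\in\mathbb{R}$ for every boolean assignment. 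Thus $f$ is real-valued.

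For the equivalent statement, I would invoke Lemma \ref{real-ars}: $\mathcal{F}$ consists of real-valued signatures iff $\widehat{\mathcal{F}}=Z^{-1}\mathcal{F}$ consists of signatures satisfying {\sc ars}. The edge-vertex incidence form identifies $\Holant(\mathcal{F})$ with $\holant{=_2}{\mathcal{F}}$, and the identity $(=_2)Z=\neq_2$ noted just before Lemma \ref{real-ars} shows that the holographic transformation by $Z^{-1}$ carries a gadget in $\holant{=_2}{\mathcal{F}}$ realizing $f$ to a gadget in $\holant{\neq_2}{\widehat{\mathcal{F}}}$ realizing $\widehat{f}=Z^{-1}f$, and conversely. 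So a signature $\widehat{f}$ realized from $\widehat{\mathcal{F}}$ in the $\neq_2$-setting corresponds to a signature $f=Z\widehat{f}$ realized from $\mathcal{F}$ in the $=_2$-setting. By the first part $f$ is real-valued, and by Lemma \ref{real-ars} one more time $\widehat{f}$ satisfies {\sc ars}.

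I do not foresee any real obstacle: the argument is essentially a one-line observation on the defining sum-of-products expression, transported across the $Z$-transformation by a standard appeal to Lemma \ref{real-ars}. The only subtlety worth stating carefully is that the correspondence between gadgets in the two bipartite settings is edgewise (dangling edges transformed by $Z$, internal edges by the identity $(=_2)Z=\neq_2$), so that the realizability relation itself, not merely the Holant value, is preserved.
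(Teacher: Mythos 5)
The paper states this lemma without proof, treating it as immediate; your argument is exactly the standard observation being implicitly invoked. Both parts are correct: the signature of an $\mathcal{F}$-gate is a finite sum of products of real values, hence real (the auxiliary $=_2$ labels in the edge-vertex incidence form are also real, so nothing leaks outside $\mathbb{R}$), and the $\widehat{\cdot}$ version follows by transporting through $Z^{-1}$ using $(=_2)Z = (\ne_2)$ and Lemma~\ref{real-ars}, exactly as you describe.
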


A basic gadget construction is \emph{merging}. In the setting of  $\holant{=_2}{\mathcal{F}}$,  given a signature $f\in \mathcal{F}$ of arity $n$, we can connect two variables $x_i$ and $x_j$ of $f$ using $=_2$, and this operation gives a signature of arity $n-2$. 
We use $\partial_{ij}f$ or $\partial^+_{ij}f$ to denote this signature and $\partial_{ij}f=f^{00}_{ij}+f^{11}_{ij}$, where $f^{ab}_{ij}$\footnote{We use
$f^{ab}_{ij}$ to denote a function, and ${\bf f}^{ab}_{ij}$ to denote a vector that lists the truth table of $f^{ab}_{ij}$ in a given order.} denotes the signature obtained by setting $(x_i, x_j)=(a, b)\in \{0, 1\}^2$. 
While in the setting of $\holant{\neq_2}{\widehat{\mathcal{F}}}$, the above merging gadget is equivalent to connecting two variables  $x_i$ and $x_j$ of $\widehat{f}$ using $\neq_2$. We denote the resulting signature by $\widehat{\partial}_{ij}\widehat{f}$ or $\widehat{\partial}^+_{ij}\widehat{f}$, and we have $\widehat{\partial_{ij}f}=\widehat{\partial}_{ij}\widehat{f}={\widehat{f}}^{01}_{ij}+{\widehat f}^{10}_{ij}.$
If $\neq_2$ is available (i.e., it either belongs to or can be realized from $\mathcal{F}$) in $\holant{=_2}{{\mathcal{F}}}$,
we can also connect two variables $x_i$ and $x_j$ of $f$ using $\neq_2$.
We denote the resulting signature by $\partial^{\widehat{+}}_{ij}f$.
The merging gadget ${\widehat\partial^{+}}_{ij}$ is the same as $\partial^{\widehat+}_{ij}$, we use different notations to distinguish whether this gadget is used in the setting of $\holant{=_2}{ \mathcal{F}}$ or $\holant{\neq}{\widehat{\mathcal{F}}}.$
Also, if $=_2^-$ and $\neq_2^-$ are available in $\holant{=_2}{{\mathcal{F}}}$, then we can  construct
$\partial^-_{ij}f$ and $\partial^{\widehat-}_{ij}f$ by connecting
$x_i$ and $x_j$  using  $=_2^-$ and $\neq_2^-$ respectively.
We also call $\partial^-_{ij}$ and $\partial^{\widehat-}_{ij}$ merging gadgets. 
Without other specification, by default a merging gadget refers to $\partial_{ij}$ in the setting of $\holant{=_2}{\mathcal F}$. Similarly by  default a merging gadget  refers to $\widehat{\partial}_{ij}$ in the setting of $\holant{\neq_2}{\widehat{\mathcal{F}}}.$



The following  lemma gives a relation between a signature $\widehat{f}$ and signatures $\widehat{\partial}_{ij}\widehat f$. 

\begin{lemma}[\cite{realodd}]\label{lem-zero_2}
Let $\widehat{f}$ be a signature of arity $n\geqslant 3$.
If $\widehat{f}(\alpha) \neq 0$ for some ${\rm wt}(\alpha)=k\neq 0$ and $k\neq n$, then there is a pair of indices $\{i, j\}$ such that $\widehat{\partial}_{ij}\widehat{f}({\beta})\neq 0$ for some ${\rm wt}(\beta)=k-1$. 
In particular, 
if for all pairs of indices $\{i, j\}$, 
$\widehat \partial_{ij} \widehat f\equiv0$, then $\widehat{ f}(\alpha)=0$ for all $\alpha$ with ${\rm wt}(\alpha)\neq 0$ and $n$. 
\end{lemma}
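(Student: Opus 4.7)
The plan is to prove the contrapositive of the main assertion: assume $\widehat{\partial}_{ij}\widehat{f}(\beta)=0$ for every pair $\{i,j\}$ and every $\beta\in\mathbb{Z}_2^{n-2}$ of weight $k-1$, and show that $\widehat{f}(\alpha)=0$ for every $\alpha$ of weight $k$. The ``in particular'' statement then follows immediately by applying this to each $k\in\{1,\ldots,n-1\}$.

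Unwinding the definition, $\widehat{\partial}_{ij}\widehat{f}(\beta)=\widehat{f}^{01}_{ij}(\beta)+\widehat{f}^{10}_{ij}(\beta)$. Thus the hypothesis says that whenever two weight-$k$ strings $\alpha$ and $\alpha'$ agree outside a pair of coordinates $\{i,j\}$ and carry $(0,1)$ and $(1,0)$ on those coordinates, one has $\widehat{f}(\alpha)=-\widehat{f}(\alpha')$. Identifying weight-$k$ strings with their supports, such pairs are precisely the edges of the Johnson graph $J(n,k)$, whose vertices are the $k$-subsets of $[n]$ and whose edges join subsets that differ in exactly one element.

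The key step is to exhibit a triangle in $J(n,k)$, because along a triangle the alternating-sign relation collapses: for triangle vertices $A,B,C$ we would get $\widehat{f}(A)=-\widehat{f}(B)=\widehat{f}(C)=-\widehat{f}(A)$, forcing $\widehat{f}(A)=0$. For $k\geqslant 2$ I would take $A=\{1,\ldots,k\}$, $B=\{1,\ldots,k-1,k+1\}$, $C=\{1,\ldots,k-2,k,k+1\}$; all three pairwise intersections have size $k-1$, which requires only $n\geqslant k+1$. For $k=1$ the three singletons $\{1\},\{2\},\{3\}$ do the job, which requires only $n\geqslant 3$. These extreme cases are the only place the hypotheses $n\geqslant 3$ and $k\leqslant n-1$ are actually used.

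Finally, since $J(n,k)$ is connected for every $1\leqslant k\leqslant n-1$, I propagate the zero value along any path from $A$ to an arbitrary weight-$k$ vertex $\alpha$, applying $\widehat{f}(\alpha'')=-\widehat{f}(\alpha')$ at each edge, to obtain $\widehat{f}(\alpha)=\pm\widehat{f}(A)=0$. The proof is essentially a combinatorial packaging of the merge identity; the only mildly delicate point is checking that a triangle in $J(n,k)$ exists uniformly for all admissible $(n,k)$, and this is handled by the explicit constructions above.
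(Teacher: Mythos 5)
Your proof is correct. Note that the paper itself does not prove this lemma but cites it from \cite{realodd}, so there is no in-paper proof to compare against; I can only assess your argument on its own terms.

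Your reformulation is exactly right: the hypothesis $\widehat{\partial}_{ij}\widehat{f}(\beta)=0$ for all $\{i,j\}$ and all weight-$(k-1)$ strings $\beta$ is equivalent to saying that the function $g(A)=\widehat{f}(\alpha_A)$ on $k$-subsets $A\subseteq[n]$ satisfies $g(A)+g(B)=0$ on every edge of the Johnson graph $J(n,k)$. Such a function vanishes along any odd closed walk, and your explicit triangles $\{1,\dots,k\},\{1,\dots,k-1,k+1\},\{1,\dots,k-2,k,k+1\}$ (for $k\geqslant 2$, needing $n\geqslant k+1$) and $\{1\},\{2\},\{3\}$ (for $k=1$, needing $n\geqslant 3$) cover every admissible pair $(n,k)$. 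Since $J(n,k)$ is connected, the vanishing propagates from the triangle to all weight-$k$ points, which is the contrapositive of the first claim. The ``in particular'' statement follows by applying this for each $k\in\{1,\dots,n-1\}$, as you say. One small remark: the alternating-sign relation $g(A)=-g(B)$ uses that the scalars are in a field of characteristic $\neq 2$, which holds here since signatures are $\mathbb{C}$-valued. The argument is clean and self-contained, and is a natural packaging of what any proof of this fact must essentially do.
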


When $\widehat{f}$ is an EO signature satisfying {\sc ars}, the following relation between $\widehat{f}$ and  $\widehat{\partial}_{ij}\widehat f$ can be easily obtained following the proofs of Lemmas 4.3 and 4.5 in \cite{cai-fu-shao-eo}. 
Let $\mathcal{D}=\{\neq_2\}$.
Then $\mathcal{D}^{\otimes}=\{\lambda \cdot (\neq_2)^{\otimes k}\mid \lambda\in \mathbb{R}\backslash\{0\}, k \geqslant 1\}$ is the set of tensor products of $\neq_2$ up to nonzero real scalars. 
\begin{lemma}\label{lem-eo}
Let $\widehat{f}$ be a $2n$-ary \rm{EO} signature satisfying {\sc ars}. 
\begin{itemize}
    \item When $2n=8$, if for all pairs of indices $\{i, j\}$, $\widehat{\partial}_{ij}\widehat{f}\in {\mathcal{D}}^{\otimes}$, and there exists some $\neq_2(x_i, x_j)$ and two pairs of indices $\{u, v\}$ and $\{s, t\}$ where $\{u, v\}\cap\{s, t\}\neq \emptyset$ such that $\neq_2(x_i, x_j)\mid \widehat\partial_{uv}\widehat f, \widehat\partial_{st}\widehat f$, then $\widehat{f}\in {\mathcal{D}}^{\otimes}$ and $\neq_2(x_i, x_j)\mid \widehat f$.
   \item When $2n \geqslant 10$, if for all pairs of indices $\{i, j\}$, $\widehat{\partial}_{ij}\widehat{f}\in {\mathcal{D}}^{\otimes}$, then $\widehat{f}\in {\mathcal{D}}^{\otimes}$.
\end{itemize}

\end{lemma}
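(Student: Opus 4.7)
The plan is to reconstruct a global pairing $\pi$ on the $2n$ indices of $\widehat{f}$ out of the local pairings induced by each $\widehat{\partial}_{ij}\widehat{f} \in \mathcal{D}^{\otimes}$, and then conclude that $\widehat{f}$ is itself a nonzero real scalar multiple of $\bigotimes_{k} \neq_2(x_k, x_{\pi(k)})$. First I would observe that each $\widehat{\partial}_{ij}\widehat{f} = \widehat{f}^{01}_{ij}+\widehat{f}^{10}_{ij}$ is a $(2n-2)$-ary \textrm{EO} signature that still satisfies \textsc{ars}, so that whenever it lies in $\mathcal{D}^{\otimes}$ the corresponding perfect matching $\pi_{ij}$ on $\{1,\dots,2n\}\setminus\{i,j\}$ is uniquely determined by the positions of the $\neq_2$ factors.

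For the case $2n \geq 10$, the central step is a consistency claim: if $\widehat{\partial}_{ij}\widehat{f}$ matches index $a$ with index $b$, then every $\widehat{\partial}_{k\ell}\widehat{f}$ with $\{k,\ell\}$ disjoint from $\{a,b\}$ must match $a$ with $b$ too. I would prove this by forming the iterated restriction $\widehat{\partial}_{k\ell}\widehat{\partial}_{ij}\widehat{f}$ and noting that, regardless of the order in which the two merges are performed, the resulting $(2n-4)$-ary signature inherits a $\mathcal{D}^{\otimes}$ factorization from both orderings; the uniqueness of that factorization forces the two matchings to agree on the shared indices. Because $2n \geq 10$, for any unordered pair $\{a,b\}$ there is ample room to pick a verifying $\{k,\ell\}$ disjoint from $\{a,b\}$, so the family $\{\pi_{ij}\}$ glues into a well-defined global matching $\pi$. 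With $\pi$ in hand I would use Lemma~\ref{lem-zero_2} (applied to $\widehat{f}$ after removing the pure $\neq_2$-tensor piece supported on $\pi$) to conclude that the support of $\widehat{f}$ is exactly the set of assignments with $x_k \neq x_{\pi(k)}$ for every pair of $\pi$, and then \textsc{ars} together with the real values reduces the remaining data to a single nonzero scalar, placing $\widehat{f}$ in $\mathcal{D}^{\otimes}$.

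For the case $2n = 8$ this consistency argument no longer closes by itself: with only eight indices, for a candidate pair $\{a,b\}$ there may be no pair $\{k,\ell\}$ simultaneously disjoint from $\{a,b\}$ and from another pair we need to control, and so the local matchings $\pi_{ij}$ can in principle glue together in more than one way. The additional hypothesis is designed precisely to break this degeneracy. The two index-overlapping pairs $\{u,v\}$ and $\{s,t\}$ together with the common factor $\neq_2(x_i,x_j)$ appearing in both $\widehat{\partial}_{uv}\widehat{f}$ and $\widehat{\partial}_{st}\widehat{f}$ pin down $\{i,j\}$ as a pair of the sought global matching, because the shared factor propagates through the non-empty intersection of the two residual index sets. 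Once $\{i,j\}$ is fixed as a pair of $\pi$, I would restrict to $\widehat{\partial}_{ij}\widehat{f}$, which is a $6$-ary \textrm{EO} signature already lying in $\mathcal{D}^{\otimes}$ and therefore splitting uniquely into three $\neq_2$ pairs on the remaining six indices; those three pairs, together with $\{i,j\}$, furnish all four pairs of $\pi$, and a final check using Lemma~\ref{lem-zero_2} delivers $\neq_2(x_i,x_j) \mid \widehat{f}$ and $\widehat{f}\in \mathcal{D}^{\otimes}$.

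The main obstacle I expect is the $2n=8$ case: the pigeonhole room for the consistency claim is tight, so one has to track carefully how degeneracies of individual $\widehat{\partial}_{uv}\widehat{f}$ could interact with the candidate global matching, and confirm that the overlap hypothesis together with the shared $\neq_2(x_i,x_j)$ divisor is exactly what rules out the spurious alternative pairings. I expect this portion to follow the case analysis template of Lemmas~4.3 and~4.5 in \cite{cai-fu-shao-eo}, adapted from the ordinary merging $\partial_{ij}$ in the setting $\holant{=_2}{\mathcal{F}}$ to the $\neq_2$-merging $\widehat{\partial}_{ij}$ in the setting $\holant{\neq_2}{\widehat{\mathcal{F}}}$, with \textsc{ars} used to transfer between the two.
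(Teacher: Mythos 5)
Your high-level plan --- recover a global perfect matching $\pi$ on $\{1,\dots,2n\}$ from the local matchings $\pi_{ij}$ encoded in $\widehat{\partial}_{ij}\widehat{f}\in\mathcal{D}^{\otimes}$, then argue $\widehat f$ is a nonzero real multiple of $\bigotimes_{\pi}\neq_2$ --- is the right one and is the spirit of the cited Lemmas 4.3 and 4.5 of \cite{cai-fu-shao-eo}. However, the consistency claim you propose to use for $2n\geqslant 10$ is false as stated. You assert: \emph{if $\widehat{\partial}_{ij}\widehat{f}$ matches $a$ with $b$, then every $\widehat{\partial}_{k\ell}\widehat f$ with $\{k,\ell\}$ disjoint from $\{a,b\}$ must also match $a$ with $b$.} Take any $\widehat f\in\mathcal{D}^{\otimes}$ whose global matching $\pi$ pairs $a\sim i$, $b\sim j$, and $k\sim\ell$, with $\{a,b\},\{i,j\},\{k,\ell\}$ pairwise disjoint. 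Then $\pi_{ij}$ does pair $a$ with $b$, because merging $x_i$ with $x_j$ contracts the two factors $\neq_2(x_a,x_i)\otimes\neq_2(x_b,x_j)$ into $\neq_2(x_a,x_b)$; but $\pi_{k\ell}$ pairs $a\sim i$ and $b\sim j$, not $a\sim b$. So your claimed implication fails even for signatures that \emph{are} in $\mathcal{D}^{\otimes}$ and therefore satisfy the conclusion. The true compatibility is that each $\pi_{ij}$ is obtained from $\pi$ by \emph{splicing}: if $\{i,j\}$ is not a pair of $\pi$ then the two pairs of $\pi$ containing $i$ and $j$ fuse into one. The local matchings are not literal restrictions of $\pi$, and the naive gluing you describe does not work without the finer case analysis (comparing multiple double merges $\widehat{\partial}_{(k\ell)(ij)}\widehat f$ with judiciously chosen overlapping index sets, as in Lemma~\ref{lem-divide-twice}); the room afforded by $2n\geqslant 10$ is what makes that case analysis close, and the extra divisibility hypothesis is what makes it close when $2n=8$.

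A second gap is the closing step. You propose applying Lemma~\ref{lem-zero_2} to $\widehat f$ ``after removing the pure $\neq_2$-tensor piece supported on $\pi$''. Write $\widehat g=\bigotimes_{\pi}\neq_2$. Even once the matchings are shown consistent, one only has $\widehat{\partial}_{ij}\widehat f=\lambda_{ij}\,\widehat{\partial}_{ij}\widehat g$ for scalars $\lambda_{ij}$ that a priori depend on the pair $\{i,j\}$. There is no single $\mu$ for which $\widehat{\partial}_{ij}(\widehat f-\mu\widehat g)\equiv0$ for every $\{i,j\}$ unless you first argue that all $\lambda_{ij}$ coincide --- which is itself a nontrivial step using the EO and \textsc{ars} structure. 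Only then does Lemma~\ref{lem-zero_2} force $\widehat f-\mu\widehat g$ to vanish off the extreme weights, and hence to be identically zero because it is EO. You need to make both of these steps explicit before the argument is complete.
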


Another gadget construction that connects a nonzero binary signature $b$ with a signature $f$ is called \emph{extending}.
 An extending gadget connects one variable of $f$ with one variable of $b$ using $=_2$ in the setting of $\holant{=_2}{\mathcal{F}}$, 
 and connects one variable of $\widehat{f}$ with one variable of $\widehat{b}$ using $\neq_2$ in the setting of $\holant{\neq_2}{\widehat{\mathcal{F}}}$. 
 By extending an irreducible signature using $=_2$ or $\neq_2$, we still get an irreducible signature. 
  A particular extending gadget is to extend 
 $f$  with binary signatures in $\mathcal{B}^{\otimes 1}$  using $=_2$ in the setting of $\Holant(\mathcal{F}).$
 We use $\{f\}^{\mathcal{B}}_{=_2}$ to denote the set of signatures realizable by extending some variables of ${f}$ with binary signatures in ${\mathcal{B}}^{\otimes 1}$ using $=_2$  (recall that ${\mathcal{B}}^{\otimes 1}$ 
 allows all nonzero real normalization scalars).
 Equivalently, this gadget is to extend 
 $\widehat f$  with binary signatures in $\widehat{\mathcal{B}}$  using $\neq_2$ in the setting of $\holant{\neq_2}{\widehat{\mathcal{F}}}.$
 We use $\{\widehat{f}\}^{\widehat{\mathcal{B}}}_{\neq_2}$ to denote the set of signatures realizable by extending some variables of $\widehat{f}$ with binary signatures in $\widehat{\mathcal{B}}^{\otimes 1}$ using $\neq_2$.
 If $\widehat{g}\in \{\widehat{f}\}_{\neq_2}^{\widehat{\mathcal{B}}}$, then we can say that the extending gadget by $\widehat{\mathcal{B}}$ defines a relation between $\widehat{g}$ and $\widehat{f}$.
 Clearly, by extending variables of $\widehat{f}$ with $\neq_2 \in \widehat{\mathcal{B}}$ (using $\neq_2$), we still get $\widehat{f}$. 
 Thus, $\widehat{f} \in \{\widehat{f}\}_{\neq_2}^{\widehat{\mathcal{B}}}$.
 So this relation is reflexive.
 The following lemma shows that this relation is symmetric and transitive, thus it is an equivalence relation.
 \begin{lemma}\label{lem-extending}
1. $\widehat{g}\in \{\widehat{f}\}^{\widehat{\mathcal{B}}}_{\neq_2}$ iff $\widehat{f}\in \{\widehat{g}\}^{\widehat{\mathcal{B}}}_{\neq_2}.$
~~~2. If  $\widehat{h}\in \{\widehat{g}\}^{\widehat{\mathcal{B}}}_{\neq_2}$ and  $\widehat{g}\in \{\widehat{f}\}^{\widehat{\mathcal{B}}}_{\neq_2}$, then  $\widehat{h}\in \{\widehat{f}\}^{\widehat{\mathcal{B}}}_{\neq_2}.$
 \end{lemma}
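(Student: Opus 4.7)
The plan is to reduce each single extending gadget to an invertible $2{\times}2$ linear transformation acting on one slot of the matrix $M_i(\widehat{f})$, and then show that the collection of these transformations arising from $\widehat{\mathcal{B}}^{\otimes 1}$ is closed under multiplication and inversion modulo nonzero real scalars. Once this is in place, both symmetry and transitivity follow by elementary bookkeeping.

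The first step is a direct computation: if $\widehat g$ is obtained from $\widehat f$ by extending variable $x_i$ with $\widehat b\in\widehat{\mathcal{B}}^{\otimes 1}$ through $\neq_2$, then summing out the two internal variables yields
\[
\widehat g(x_{-i},y_2)=\sum_{x_i}\widehat f(x_i,x_{-i})\,\widehat b(1-x_i,y_2),
\]
which in matrix form reads $M_i(\widehat g)=T_{\widehat b}\,M_i(\widehat f)$ with $T_{\widehat b}:=M(\widehat b)^{\tt T}N_2$. Evaluating on the four generators of $\widehat{\mathcal{B}}=\{\neq_2,\,=_2,\,(-\ii){\cdot}{=_2^-},\,\ii{\cdot}{\neq_2^-}\}$ gives $T_{\neq_2}=I_2$, $T_{=_2}=N_2$, and two further matrices proportional (by $\pm 1$ or $\pm\ii$) to the Pauli matrices $\sigma_y$ and $\sigma_z$. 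Let $\mathcal T$ be the set obtained by closing these four matrices under nonzero real scalars. A short finite case check, using the Pauli identities $\sigma_x\sigma_y=\ii\sigma_z$, $\sigma_y\sigma_z=\ii\sigma_x$, $\sigma_z\sigma_x=\ii\sigma_y$ and $\sigma_a^2=I$, verifies that every product of two elements of $\mathcal T$ again lies in $\mathcal T$, and that each element has its inverse (up to $\pm 1$) inside $\mathcal T$. The crucial point is that any extra scalar picked up in a product or inverse is real, so it can be absorbed into the normalization factor allowed by $\widehat{\mathcal B}^{\otimes 1}$.

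For part 1 (symmetry), suppose $\widehat g$ is realized from $\widehat f$ by extending variables $i_1,\dots,i_k$ with $\widehat b_1,\dots,\widehat b_k\in\widehat{\mathcal B}^{\otimes 1}$. By closure under inversion, for each $j$ there exists $\widehat b_j'\in\widehat{\mathcal B}^{\otimes 1}$ with $T_{\widehat b_j'}T_{\widehat b_j}=I_2$. Extending $\widehat g$ at the same variables with $\widehat b_1',\dots,\widehat b_k'$ therefore returns $\widehat f$ exactly (the real scalars from the inverse identifications having been absorbed), so $\widehat f\in\{\widehat g\}^{\widehat{\mathcal B}}_{\neq_2}$.

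For part 2 (transitivity), concatenate the extension sequences witnessing $\widehat g\in\{\widehat f\}^{\widehat{\mathcal B}}_{\neq_2}$ and $\widehat h\in\{\widehat g\}^{\widehat{\mathcal B}}_{\neq_2}$. Extensions at distinct variables act on different slots and commute. At any variable that is extended more than once, the composed transformation is a product of elements of $\mathcal T$, which by closure equals $T_{\widehat b}$ for some $\widehat b\in\widehat{\mathcal B}^{\otimes 1}$; thus the composition is realizable as a single extending gadget. Hence $\widehat h\in\{\widehat f\}^{\widehat{\mathcal B}}_{\neq_2}$. I expect the only real work to be the finite closure verification for the four transformation matrices; everything else is routine index tracking.
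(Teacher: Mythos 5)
Your proof is correct and takes essentially the same route as the paper: the paper's two key observations---that connecting a copy of $\widehat b$ to itself through $\neq_2$ returns $\neq_2$, and that any two elements of $\widehat{\mathcal B}^{\otimes 1}$ compose through $\neq_2$ to another element of $\widehat{\mathcal B}^{\otimes 1}$---are exactly your closure of $\mathcal T$ under inversion and multiplication modulo nonzero real scalars. The matrix/Pauli formulation is a cosmetic repackaging of the same gadget-level facts, not a different argument.
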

 \begin{proof}
 Note that for any $\widehat{b}\in \widehat{\mathcal{B}}^{\otimes 1}$, 
if we connect any variable of $\widehat{b}$ with another arbitrary variable of a copy of the same $\widehat{b}$ using $\neq_2$, then we  get $\neq_2$ after normalization.
Also, by extending a variable of $\widehat{f}$ with $\neq_2$ (using $\neq_2$), we  still get $\widehat{f}$.
Suppose that $\widehat{g}\in \{\widehat{f}\}^{\widehat{\mathcal{B}}}_{\neq_2}$, and  it is realized  by extending certain variables $x_i$ of $\widehat{f}$ with certain $b_i\in \mathcal{\widehat{\mathcal{B}}}$.
Then, by extending each of these variables $x_i$ of $\widehat{g}$ with exactly the same $b_i\in \mathcal{\widehat{\mathcal{B}}}$, we will get $\widehat{f}$ after normalization. 
Thus, $\widehat{f}\in \{\widehat{g}\}^{\widehat{\mathcal{B}}}_{\neq_2}.$
The other direction is proved by  exchanging $\widehat{f}$ and $\widehat{g}$.
Thus, $\widehat{g}\in \{\widehat{f}\}^{\widehat{\mathcal{B}}}_{\neq_2}$ iff $\widehat{f}\in \{\widehat{g}\}^{\widehat{\mathcal{B}}}_{\neq_2}.$ 

Also, note that for any $\widehat{b^1}, \widehat{b^2}\in \widehat{\mathcal{B}}^{\otimes 1}$, by connecting an arbitrary variable of $\widehat{b^1}$ with  an arbitrary variable of $\widehat{b^2}$ using $\neq_2$, we  still get a signature in $\widehat{\mathcal{B}}^{\otimes 1}$.
Suppose that $\widehat{h}$ is realized by extending some variables $x_i$ of $\widehat{g}$ with some $b_i^1\in \widehat{\mathcal{B}}^{\otimes 1}$.
We may assume every variable $x_i$ of $\widehat{g}$
has been so connected as $\ne_2 \in \widehat{\mathcal{B}}^{\otimes 1}$. Similarly
we can assume
$\widehat{g}$ is realized by extending every variable $x_i$ of $\widehat{f}$ with some $b_i^2\in \widehat{\mathcal{B}}^{\otimes 1}.$
Let $b_i$ be the signature realized by connecting $b_i^1$ and $b^2_i$ (using $\neq_2$). 
Then, $\widehat{h}$ can be realized  by extending each variable $x_i$ of $\widehat{f}$ with $b_i\in \widehat{\mathcal{B}}^{\otimes 1}$.
Thus, $\widehat{h}\in \{\widehat{f}\}^{\widehat{\mathcal{B}}}_{\neq_2}$.
 \end{proof}
\begin{remark}
As a corollary, if  $\widehat{g}\in \{\widehat{f}\}^{\widehat{\mathcal{B}}}_{\neq_2}$, then $\{\widehat{g}\}^{\widehat{\mathcal{B}}}_{\neq_2}=\{\widehat{f}\}^{\widehat{\mathcal{B}}}_{\neq_2}$.
\end{remark}

\begin{lemma}\label{lem-binary-sim}
Let $\widehat{b_1}(x_1, x_2), \widehat{b_2}(y_1, y_2)\in \widehat{\mathcal{O}}$. 
If by connecting the variable $x_1$ of $\widehat{b_1}$ and the variable $y_1$ of $\widehat{b_2}$ using $\neq_2$, we get $\lambda\cdot \neq_2(x_2, y_2)$ for some $\lambda\in \mathbb{R}\backslash\{0\}$, then $\widehat{b_1}\sim \widehat{b_2}$. 
Moreover,  by connecting the variable $x_2$ of $\widehat{b_1}$ and the variable $y_2$ of $\widehat{b_2}$, we will get $\lambda\cdot \neq_2(x_1, y_1).$
\end{lemma}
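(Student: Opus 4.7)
The plan is to exploit the explicit characterization of $\widehat{\mathcal{O}}$ established in the proof of Lemma~\ref{q-parity}: every nonzero $\widehat{b}\in\widehat{\mathcal{O}}$ either has even parity, with $M_1(\widehat{b}) = \begin{bmatrix}\alpha & 0\\ 0 & \bar\alpha\end{bmatrix}$, or odd parity, with $M_1(\widehat{b}) = \begin{bmatrix}0 & \alpha\\ \bar\alpha & 0\end{bmatrix}$, for some nonzero $\alpha\in\mathbb{C}$. Thus the whole proof reduces to a small case analysis by parity, followed by a direct matrix computation.

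The first step is a parity reduction. If $\widehat{b_i}$ has parity $p_i\in\{0,1\}$, then its support obeys $x_1+x_2\equiv p_1$ and $y_1+y_2\equiv p_2\pmod 2$. Combined with the merging constraint $x_1+y_1\equiv 1$, this forces $x_2+y_2\equiv p_1+p_2+1\pmod 2$. Since the target $\lambda\cdot\neq_2(x_2,y_2)$ has odd parity and $\lambda\neq 0$, we must have $p_1=p_2$; in particular, the mixed-parity case, and the possibility that either $\widehat{b_i}$ is the zero signature, are immediately ruled out.

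Next, in each of the two surviving parity cases, I would evaluate
\[
\widehat{g}(x_2,y_2) \;=\; \widehat{b_1}(0,x_2)\,\widehat{b_2}(1,y_2)+\widehat{b_1}(1,x_2)\,\widehat{b_2}(0,y_2)
\]
by substituting the explicit diagonal or anti-diagonal form. In both cases, exactly two entries are nonzero, namely $\alpha_1\bar\alpha_2$ and $\bar\alpha_1\alpha_2$, and they sit in the $(0,1)$ and $(1,0)$ positions of $M(\widehat{g})$, matching the pattern of $\lambda N_2$. Equating them to $\lambda$ yields $\alpha_1\bar\alpha_2=\bar\alpha_1\alpha_2=\lambda$, which forces $\alpha_1\bar\alpha_2\in\mathbb{R}$ and hence $\alpha_2=(\lambda/|\alpha_1|^2)\,\alpha_1$, a real scalar multiple of $\alpha_1$. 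Consequently $M_1(\widehat{b_2}) = (\lambda/|\alpha_1|^2)\,M_1(\widehat{b_1})$, which is precisely the statement $\widehat{b_1}\sim\widehat{b_2}$.

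For the moreover part, the same bookkeeping applied to the merging of $x_2$ with $y_2$ via $\neq_2$ produces the analogous expression $\widehat{g}'(x_1,y_1) = \widehat{b_1}(x_1,0)\,\widehat{b_2}(y_1,1)+\widehat{b_1}(x_1,1)\,\widehat{b_2}(y_1,0)$; using the relation $\alpha_2=c\alpha_1$ with $c|\alpha_1|^2=\lambda$ already derived, its two nonzero entries again evaluate to $\alpha_1\bar\alpha_2=\bar\alpha_1\alpha_2=\lambda$, now placed in the $(0,1)$ and $(1,0)$ slots of the matrix indexed by $(x_1,y_1)$, giving exactly $\lambda\cdot\neq_2(x_1,y_1)$. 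I do not foresee a real obstacle beyond a disciplined case split; the crux is simply that parity conservation together with the ARS structure makes merging in either variable slot symmetric.
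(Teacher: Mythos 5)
Your proof is correct, and it takes a genuinely different (though parallel) route from the paper's. The paper transforms back to the real world via $Z$, views $M_1(b_1), M_1(b_2)$ as real scalar multiples of orthogonal matrices, invokes the rotation/reflection dichotomy, and reads off both the associate relation and the ``moreover'' part from the matrix identity $M_2^{\tt T}(b_1)M_2(b_2) = M_1(b_1)M_1^{\tt T}(b_2)$. You instead stay entirely in the $\widehat{\,\cdot\,}$ setting, exploit the diagonal versus anti-diagonal form of $M_1(\widehat{b})$ forced by ARS plus parity (which is precisely the content of Lemma~\ref{q-parity}), and carry out a direct entry-by-entry evaluation of the merged signature. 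Your parity-matching observation cleanly disposes of the mixed-parity and zero cases up front, something the paper leaves implicit inside the rotation/reflection case split. The two decompositions are of course the images of each other under $Z$ (even parity $\leftrightarrow$ reflection, odd parity $\leftrightarrow$ rotation), so the case structure matches; what differs is that your argument is a short elementary computation that never leaves the complex $\neq_2$-world, while the paper's buys a slightly more geometric picture at the cost of conjugating back. Both hit the crucial point that $\alpha_1\bar\alpha_2 = \bar\alpha_1\alpha_2 = \lambda \in \mathbb{R}\backslash\{0\}$ pins $\alpha_2$ down as a real multiple of $\alpha_1$, and the ``moreover'' part then falls out symmetrically in both treatments.
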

\begin{proof}
We prove this lemma  
in the setting of 
$\Holant(\mathcal{F})$
after the transformation $Z$ back.
Now, $b_1=Z\widehat{b_1}\in \mathcal{O}$ and $b_2=Z\widehat{b_2}\in \mathcal{O}$.

Consider matrices $M_1(b_1)=M^{\tt T}_2(b_1)$ and $M_1(b_2)=M^{\tt T}_2(b_2)$.
Since $b_1, b_2\in \mathcal{O}$,  both $M_1(b_1)$ and $M_1(b_2)$
are  real multiples of real orthogonal matrices, of which
there are two types, either rotations or reflections.
For such matrices $X, Y$, 
to get $X^{\tt T} Y = \lambda I_2$  for some $\lambda \in
\mathbb{R}\backslash\{0\}$,
$X$ and $Y$ must be
either   both
reflections, or both rotations of the same angle, up to
nonzero real multiples.
First suppose
$M_1(b_1)= \left[\begin{smallmatrix} 
a & b \\
b & -a \\
\end{smallmatrix}
\right]$, reflection. Then
by connecting  $x_1$ of $b_1$ and  $y_1$ of $b_2$ using $=_2$ 
we get $\lambda \cdot =_2(x_2, y_2)$, i.e., 
$M^{\tt T}_1(b_1)M_1(b_2)=\lambda I_2.$
This implies that $b_2$ is the same reflection
up to a nonzero scalar, i.e., $b_2\sim b_1$. 
Similarly, for a rotation
$M_1(b_1) = \left[\begin{smallmatrix} 
a & b \\
-b & a \\
\end{smallmatrix}
\right]$, $M^{\tt T}_1(b_1)M_1(b_2)=\lambda I_2$
implies that $b_2$ is  also a rotation of the same angle as $b_1$
up to a nonzero scalar, thus $b_2\sim b_1$.
In either case,
 by connecting the variable $x_2$ of ${b_1}$ and the variable $y_2$ of ${b_2}$, we will get
 $$M^{\tt T}_2(b_1)M_2(b_2)=M_1(b_1)M^{\tt T}_1(b_2)
 =\lambda I_2.$$
This means that we get the signature $\lambda\cdot =_2(x_1, y_1).$
The statement of the lemma follows from this after a $Z^{-1}$ transformation.
\end{proof}




A gadget construction often used in this paper is \emph{mating}. 
Given a real-valued signature $f$ of arity $n\geqslant 2$, we connect two copies of $f$ in the following manner:
Fix a set $S$ of $n-m$ variables among all $n$ variables of $f$. For each $x_k\in S$, connect $x_k$ of one copy of $f$ with $x_k$ of the other copy using $=_2$. 
The variables 
that are not in $S$ are called dangling variables.
In this paper, we only consider the case that $m=1$ or $2$.
For $m=1$, there is one dangling variable $x_i$. Then, 
the mating construction  realizes a signature of arity $2$, denoted by $\mathfrak m_{i}f$.
It can be represented by matrix multiplication. 
We have 
\begin{equation}\label{m-form}
M(\mathfrak m_{i}f)=M_{i}(f)I_2^{\otimes (n-1)}M^{\tt T}_{i}(f)
=\begin{bmatrix}
{\bf {f}}^{0}_i\\
{\bf {f}}^{1}_i\\
\end{bmatrix}
\left[\begin{matrix}
{{\bf {f}}^{0}_i}^{\tt T} &{{\bf {f}}^{1}_i}^{\tt T}\\
\end{matrix}\right]
=\left[\begin{matrix}
|{\bf f}_i^0|^2 &  \langle {\bf f}_i^0, {\bf f}_i^1 \rangle\\
\langle {\bf f}_i^0, {\bf f}_i^1 \rangle & |{\bf f}_i^1|^2,\\
\end{matrix}\right]
\end{equation}
where $\langle \cdot, \cdot\rangle$ denotes the 
inner product and $|\cdot|$ denotes the  norm defined by this inner product.
(We will use the same notation  $\langle \cdot, \cdot\rangle$ to denote the complex inner product (with conjugation) below. The notation is consistent.)
Note that $|\langle{\bf f}^{0}_i, {\bf f}^{1}_i\rangle|^2\leqslant|{\bf f}^{0}_i|^2|{\bf f}^{1}_i|^2$ by the Cauchy-Schwarz inequality. 
Similarly, in the setting of $\holant{\neq_2}{\widehat{\mathcal{F}}}$, the above mating operation is equivalent to connecting variables in $S$ using $\neq_2$. We denote the resulting signature by  $\widehat{\mathfrak{m}}_i\widehat{f}$,
which is the same as $\widehat{{\mathfrak{m}}_i{f}}$, and we have
\begin{equation*}
M(\widehat{\mathfrak{m}}_i\widehat{f})=
M_{i}(\widehat f)N_2^{\otimes n-1}M^{\tt T}_{i}(\widehat f)=
\left[\begin{matrix}
\widehat{{\bf f}}_i^0\\
\widehat{{\bf f}}_i^1
\end{matrix}\right]
\left[\begin{matrix}
0 & 1 \\
 1 & 0
\end{matrix}\right]^{\otimes (n-1)}
\left[\begin{matrix}
{\widehat{{\bf f}_i^{0}}}^{\tt T} &{\widehat{{\bf f}_i^{1}}}^{\tt T}
\end{matrix}\right].\\
\end{equation*}
Note that (in general complex-valued) $\widehat{f}$ satisfies the {\sc ars} since $f$ is real, we have \begin{equation*}
    N_2^{\otimes (n-1)}{\widehat{{\bf f}_i^{0}}}^{\tt T}=(\widehat f^{0,11\ldots1}, \widehat f^{0, 11\ldots0}, \ldots, \widehat f^{0, 00\ldots0})^{\tt T}\\=(\overline{\widehat f^{1,00\ldots0}}, \overline{\widehat f^{1,00\ldots1}}, \ldots, \overline{\widehat f^{1, 11\ldots1}})={\overline{\widehat{{\bf f}}_i^{1}}}^{\tt T}.
\end{equation*}
Thus, we have
\begin{equation}\label{eqn:ars}
M(\widehat{\mathfrak{m}}_i\widehat{f})=
\left[\begin{matrix}
\widehat{{\bf f}}_i^0\\
\widehat{{\bf f}}_i^1
\end{matrix}\right]
\left[\begin{matrix}
0 & 1 \\
 1 & 0
\end{matrix}\right]^{\otimes (n-1)}
\left[\begin{matrix}
{\widehat{{\bf f}_i^{0}}}^{\tt T} &{\widehat{{\bf f}_i^{1}}}^{\tt T}
\end{matrix}\right]\\
=\left[\begin{matrix}
\widehat{{\bf f}}_i^0\\
\widehat{{\bf f}}_i^1
\end{matrix}\right]
\left[\begin{matrix}
{\overline{\widehat{{\bf f}}_i^{1}}}^{\tt T} &\overline{{\widehat{{\bf f}}_i^{0}}}^{\tt T}
\end{matrix}\right]
=\left[\begin{matrix}
\langle \widehat{{\bf f}}_i^0, \widehat{{\bf f}}_i^1 \rangle & |\widehat{{\bf f}}_i^0|^2\\
|\widehat{{\bf f}}_i^1|^2 & \langle \widehat{{\bf f}}_i^1, \widehat{{\bf f}}_i^0 \rangle
\end{matrix}\right].
\end{equation}
If there are two dangling variables $x_i$ and $x_j$, we use $\mathfrak{m}_{ij}f$ and $\widehat{\mathfrak{m}}_{ij}\widehat f$ to denote the  signatures realized by mating $f$ using $=_2$ and mating $\widehat f$ using $\neq_2$ respectively. 

With respect to mating gadgets, the following first order orthogonality was introduced. 
\begin{definition}[First order orthogonality  \cite{realodd}]\label{def-first-order}
       Let $f$ be a complex-valued signature of arity $n \geqslant 2$. It satisfies the \emph{first order orthogonality ({\sc 1st-Orth})} if there exists some $\mu\neq 0$ such that for all indices $i\in [n]$, the entries of $f$ satisfy the following equations
\begin{equation*}
    |{\bf f}_{i}^{0}|^2=|{\bf f}_{i}^{1}|^2=\mu, \text{ and } \langle{\bf f}_{i}^{0}, {\bf f}_{i}^{1}\rangle =0.
\end{equation*}
    \end{definition}
    
     \begin{remark}
      When $f$ is a real-valued signature,
    the  inner product is just the ordinary dot product which can be represented  by mating using $=_2$.
    Thus, $f$ satisfies {\sc 1st-Orth} iff there is some  real $\mu \neq 0$ such that for all indices $i$, $M(\mathfrak{m}_{i}f)= \mu I_2$.
     On the other hand,
    when $\widehat{f}$ is a signature with {\sc ars}, by (\ref{eqn:ars}), the complex inner product can be represented by mating  using $\neq_2$.
    Thus,
    $\widehat{f}$ satisfies {\sc 1st-Orth} iff there is some real $\mu \not =0$ such that for all $i$, $M(\widehat{\mathfrak{m}}_{i}\widehat{f})= \mu N_2$.
    Moreover, $f$ satisfies {\sc 1st-Orth} iff $\widehat{f}$ satisfies it.
\end{remark}

\begin{lemma}[\cite{realodd}]\label{lem-same-mu}
Let $f$ be a real-valued signature of arity $n$.
If for all indices $i\in [n]$, $M(\mathfrak{m}_{i}f)= \mu_i I_2$ for some real $\mu_i\neq 0$, then $f$ satisfies {\sc 1st-Orth} (i.e., all $\mu_i$ have the same value).
\end{lemma}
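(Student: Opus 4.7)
The plan is to observe that although the hypothesis only tells us that each matrix $M(\mathfrak{m}_i f)$ equals \emph{some} scalar multiple $\mu_i I_2$ possibly depending on $i$, the \emph{trace} of $M(\mathfrak{m}_i f)$ is an invariant of $f$ that does not depend on $i$ at all. This will force all $\mu_i$ to coincide.

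More concretely, I would use formula (\ref{m-form}):
\[
M(\mathfrak{m}_{i}f)=\begin{bmatrix}
|{\bf f}_i^0|^2 &  \langle {\bf f}_i^0, {\bf f}_i^1 \rangle\\
\langle {\bf f}_i^0, {\bf f}_i^1 \rangle & |{\bf f}_i^1|^2
\end{bmatrix}.
\]
Taking the trace gives $|{\bf f}_i^0|^2+|{\bf f}_i^1|^2$. By the definitions of ${\bf f}_i^0$ and ${\bf f}_i^1$ as the rows of $M_i(f)$, this sum lists the squares of \emph{all} $2^n$ entries of $f$, partitioned according to whether the $i$-th variable is set to $0$ or $1$. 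Therefore
\[
|{\bf f}_i^0|^2+|{\bf f}_i^1|^2 \;=\; \sum_{\alpha\in\{0,1\}^n} f(\alpha)^2,
\]
and this quantity is manifestly independent of $i$. Call it $S(f)$.

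On the other hand, the hypothesis $M(\mathfrak{m}_{i}f)=\mu_i I_2$ gives $\mathrm{tr}(M(\mathfrak{m}_i f))=2\mu_i$. Combining the two expressions for the trace yields $2\mu_i=S(f)$ for every $i\in[n]$, so all the $\mu_i$ agree with the common value $S(f)/2$. Since each $\mu_i\neq 0$, this common value is a nonzero real $\mu$, and together with the off-diagonal condition $\langle {\bf f}_i^0,{\bf f}_i^1\rangle=0$ (which is part of the hypothesis $M(\mathfrak{m}_i f)=\mu_i I_2$), this is exactly the statement that $f$ satisfies {\sc 1st-Orth} as given in Definition \ref{def-first-order}. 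There is no real obstacle here: the proof is a one-line invariance observation, with the only thing to be careful about being the bookkeeping that identifies $|{\bf f}_i^0|^2+|{\bf f}_i^1|^2$ with the total squared $\ell^2$-norm of the truth table of $f$.
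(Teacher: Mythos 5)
Your proof is correct and is essentially the same argument the paper uses (most visibly in the proof of Lemma~\ref{second-ortho}, where the analogous uniformity of $\lambda_{ij}$ for {\sc 2nd-Orth} is established): the paper phrases it as "further connect the two dangling variables of $\mathfrak{m}_i f$, which totally connects two copies of $f$, giving $2\mu_i$, a value independent of $i$," and this is precisely your trace computation $\mathrm{tr}(M(\mathfrak{m}_i f))=|{\bf f}_i^0|^2+|{\bf f}_i^1|^2=\sum_\alpha f(\alpha)^2$. The bookkeeping step you flag is the only subtlety and you handle it correctly.
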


\subsection{Tractable signatures}
We give some known signature sets 
that
 define polynomial time computable (tractable) counting problems.

\begin{definition}
Let $\mathscr{T}$ denote the set of tensor products of unary and binary signatures. 
\end{definition}

\begin{definition}
\label{definition-product-2}
 A signature on a set of variables $X$
 is of \emph{product type} if it can be expressed as a
product of unary functions,
 binary equality functions $([1,0,1])$,
and binary disequality functions $([0,1,0])$, each on one or two
variables of $X$.
 We use $\mathscr{P}$ to denote the set of product-type functions.
\end{definition}
Note that the product in Definition~\ref{definition-product-2}
are ordinary products of functions (not tensor products); in particular
they may be applied on overlapping sets of variables.
\begin{definition}\label{definition-affine}
 A signature $f(x_1, \ldots, x_n)$ of arity $n$
is \emph{affine} if it has the form
 \[
  \lambda \cdot \chi_{A X = 0} \cdot {\mathfrak i} ^{Q(X)},
 \]
 where $\lambda \in \mathbb{C}$,
 $X = (x_1, x_2, \dotsc, x_n, 1)$,
 $A$ is a matrix over $\mathbb{Z}_2$,
 $Q(x_1, x_2, \ldots, x_n)\in \mathbb{Z}_4[x_1, x_2, \ldots, x_n]$
is a  multilinear polynomial with total degree $d(Q)\leqslant 2$ and
 the additional requirement that the coefficients of all
 cross terms are even, i.e., $Q$ has the form
 \[Q(x_1, x_2, \ldots, x_n)=a_0+\displaystyle\sum_{k=1}^na_kx_k+\displaystyle\sum_{1\leq i<j\leq n}2b_{ij}x_ix_j,\]
 and $\chi$ is a 0-1 indicator function
 such that $\chi_{AX = 0}$ is~$1$ iff $A X = 0$.
 We use $\mathscr{A}$ to denote the set of all affine signatures.
\end{definition}
If the support set $\mathscr{S}(f)$ is an affine linear subspace, then we say $f$ has affine support. Clearly, any affine signature has affine support.
Moreover,  we have that 
any signature of product type has affine support \cite{jcbook}.
When $\mathscr{S}(f)$ is affine, 
we can pick a set of free variables such that in  $\mathscr{S}(f)$, 
every variable is an affine linear combination of free variables.
Affine functions satisfy the following congruity or semi-congruity.
\begin{lemma}[\cite{jcbook}]\label{lem-congruity}
Let $f(x_1, \ldots, x_n)=(-1)^{Q(x_1, \ldots, x_n)}\in \mathscr{A}$,  and $y=x_n+L(x_1, \ldots, L_{n-1})$ be a linear combination  of variables $x_1, \ldots, x_{n}$ that involves $x_n$.
Define $$g(x_1, \ldots, x_{n-1})=\frac{f_{y=0}(x_1, \ldots, x_{n-1}, y+L)}{f_{y=1}(x_1, \ldots, x_{n-1}, y+L)}=(-1)^{Q(x_1, \ldots, x_{n-1}, L)+{Q(x_1, \ldots, x_{n-1}, L+1)}}.$$
Then, $g$ satisfies the following property.
\begin{itemize}
    \item (Congruity) $g\equiv1$ or $g\equiv-1$, or
    \item (Semi-congruity) $g(x_1, \ldots, x_{n-1})=(-1)^{L(x_1, \ldots, x_{n-1})}$ where $L(x_1, \ldots, x_{n-1})\in \mathbb{Z}_2[x_1, \ldots, x_{n-1}]$ is an affine linear polynomial (degree $d(L)= 1$). 
\end{itemize}{}
In particular, if $d(Q)=1$, then $g$ has congruity.
\end{lemma}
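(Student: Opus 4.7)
The plan is to compute the exponent $R(x_1,\ldots,x_{n-1}) := Q(x_1,\ldots,x_{n-1},L) + Q(x_1,\ldots,x_{n-1},L+1)$ directly and show, via a cancellation-mod-$2$ argument, that it is always an affine linear polynomial over $\mathbb{Z}_2$ independent of $L$. The dichotomy then falls out by inspection: if $R$ reduces to a constant we obtain congruity, and otherwise the surviving linear part gives semi-congruity. I expect the entire proof to be a routine substitution; the only point requiring real care is keeping track of which coefficients of $Q$ survive the mod-$2$ reduction, which is the single step carrying the content of the lemma.

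First I would fix the normal form of $Q$. Since $f=(-1)^{Q}\in\mathscr{A}$, only the parity of the exponent matters, so without loss of generality
\[
Q(x_1,\ldots,x_n) \;=\; a_0 + \sum_{k=1}^{n} a_k x_k + \sum_{1\le i<j\le n} b_{ij}\,x_i x_j
\]
with $a_0, a_k, b_{ij}\in\mathbb{Z}_2$; this is just the affine normal form with the additional factor of $2$ on cross terms absorbed into the sign.

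Next I would substitute $x_n=L$ and $x_n=L+1$ separately into $Q$ and add the two resulting polynomials modulo $2$. Every monomial of $Q$ that does not involve $x_n$ appears identically in both substitutions and therefore cancels mod $2$; so $a_0$, each $a_k x_k$ with $k<n$, and each cross term $b_{ij}x_ix_j$ with $i,j<n$ all drop out. For the monomials involving $x_n$, the linear term gives $a_n L + a_n(L+1)\equiv a_n\pmod 2$, and each cross term gives $b_{in}x_i L + b_{in}x_i(L+1)\equiv b_{in}x_i\pmod 2$. The $L$-dependence disappears entirely and we obtain
\[
R(x_1,\ldots,x_{n-1}) \;\equiv\; a_n + \sum_{i<n} b_{in}\,x_i \pmod 2,
\]
which is an affine linear polynomial in $x_1,\ldots,x_{n-1}$ over $\mathbb{Z}_2$.

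Finally I would read off the two cases from this expression. If $b_{in}=0$ for every $i<n$, then $R\equiv a_n$ is constant and $g=(-1)^R\equiv \pm 1$, which is congruity. Otherwise the linear part of $R$ is nontrivial, so $R$ has degree exactly $1$ and $g=(-1)^{R}$ exhibits semi-congruity with the affine linear polynomial $L'(x_1,\ldots,x_{n-1})=R$. In the special case $d(Q)\le 1$, every $b_{ij}$ is zero, so a fortiori every $b_{in}$ vanishes and we always land in the congruity case, as claimed.
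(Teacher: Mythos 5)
Your proof is correct and complete. The paper cites Lemma~\ref{lem-congruity} to \cite{jcbook} and does not reproduce a proof, so there is nothing in the paper itself to compare against; your direct mod-$2$ cancellation in the normal form is the natural argument, and I would expect the reference to proceed the same way. The only step worth flagging as non-trivial is the WLOG reduction to $Q$ being a multilinear $\mathbb{Z}_2$-polynomial of degree at most $2$: since $f=(-1)^Q$ has full support, $f\in\mathscr{A}$ forces (via Definition~\ref{definition-affine}) the representing $\mathbb{Z}_4$-polynomial to have all coefficients even, and dividing by $2$ gives exactly the $\mathbb{Z}_2$-normal form $Q = a_0 + \sum a_k x_k + \sum b_{ij}x_ix_j$ you use; you gesture at this but it deserves a sentence. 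Once that normal form is fixed, the cancellation of every monomial not involving $x_n$, and the identity $b_{in}x_i L + b_{in}x_i(L+1) \equiv b_{in}x_i \pmod 2$ (so that the $L$-dependence drops out entirely), deliver $R \equiv a_n + \sum_{i<n} b_{in}x_i$, and the congruity/semi-congruity dichotomy plus the $d(Q)\le 1$ special case read off immediately.
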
{}

Let $T_{\alpha^s}=\left[\begin{smallmatrix}
1 & 0\\
0 & \alpha^s\\
\end{smallmatrix}\right]$ where $\alpha=\frac{1+\ii}{\sqrt{2}}$ and $s$ is an integer.

\begin{definition}
 A signature $f$ is local-affine if for each $\sigma=s_1s_2\ldots s_n \in \{0, 1\}^{n}$ in the support of $f$, $(T_{\alpha^{s_1}}\otimes T_{\alpha^{s_2}}\otimes\cdots \otimes T_{\alpha^{s_n}})f\in \mathscr{A}$. We use $\mathscr{L}$ to denote the set of local-affine signatures. 
\end{definition}

\begin{definition} \label{def:prelim:trans}
 We say a signature set $\mathcal{F}$ is $\mathscr{C}$-transformable
 if there exists a $T \in \rm{GL}_2(\mathbb{C})$ such that
 $(=_2)(T^{-1})^{\otimes 2} \in \mathscr{C}$ and $T\mathcal{F} \subseteq \mathscr{C}$. 
\end{definition}

This definition is important because if $\Holant(\mathscr{C})$ is tractable,
then $\Holant(\mathcal{F})$ is tractable for any $\mathscr{C}$-transformable set $\mathcal{F}$. Then, the following tractable result is known \cite{CLX-HOLANTC, Backens-Holant-c}.
\begin{theorem}\label{thm-main-thr}
Let $\mathcal{F}$ be a set of complex valued signatures.
Then $\Holant(\mathcal{F})$ is tractable if
    \begin{equation}\label{main-thr}
\mathcal{F}\subseteq \mathscr{T}, ~~~
 \mathcal{F} \text{ is }\mathscr{P}\text{-transformable,}~~~
 \mathcal{F} \text{ is } \mathscr{A}\text{-transformable,~~~or~} 
 \mathcal{F} \text{ is } \mathscr{L}\text{-transformable.}\tag{$\mathrm{ \textcolor{red}{T}}$}
 \end{equation}
\end{theorem}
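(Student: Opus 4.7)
\medskip

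\noindent\textbf{Proof plan.} The strategy is to treat the four tractable cases separately, with the last three all funnelled through holographic transformation to a known tractable base case. The core observation that makes the $\mathscr{C}$-transformable formulation work is that by the 2-stretch construction, $\Holant(\mathcal{F})\equiv_T\holant{=_2}{\mathcal{F}}$, and by Valiant's holographic theorem, $\holant{=_2}{\mathcal{F}}\equiv_T\holant{(=_2)(T^{-1})^{\otimes 2}}{T\mathcal{F}}$ for any invertible $T$. So if $T$ witnesses $\mathscr{C}$-transformability, then both sides of the bipartite Holant lie in $\mathscr{C}$, and the problem reduces to $\Holant(\mathscr{C})$ in edge-vertex incidence form.

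First I would dispose of the case $\mathcal{F}\subseteq\mathscr{T}$. Here every $f\in\mathcal{F}$ is a tensor product of unary and binary signatures, so each vertex of the signature grid can be split (as explained in the discussion preceding Lemma~\ref{lin-wang}) into smaller vertices carrying only unary and binary signatures without changing the Holant value. The resulting grid is a disjoint union of isolated unaries, simple paths, and cycles, each of which evaluates to a $2\times 2$ matrix trace or dot product in $O(1)$ time per component. Totaling over all components gives a polynomial-time algorithm.

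Next, the three ``transformable'' cases reduce by the holographic identity above to the claim that $\Holant(\mathscr{C})$ is in P for $\mathscr{C}\in\{\mathscr{P},\mathscr{A},\mathscr{L}\}$. For $\mathscr{P}$, one uses that a product-type signature decomposes (over overlapping variables) into unary, $=_2$, and $\neq_2$ factors; an instance can then be unfolded into an equivalent \#CSP instance over $\{=_n, \neq_2, \text{unaries}\}$, whose partition function factors along the connected components of the equality/disequality graph after consistent 0/1 labelling, which is computable in polynomial time. For $\mathscr{A}$, each signature is supported on an affine subspace weighted by $\ii^{Q}$ with $Q$ of degree at most $2$ and even cross coefficients; one sets up the global support as the solution set of a $\mathbb{Z}_2$-linear system (polynomial-time by Gaussian elimination), then evaluates the global quadratic form by the Gauss sum calculus as in \cite{CLX-HOLANTC}. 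For $\mathscr{L}$, the same kind of argument is carried out ``locally'' by case analysis on the support bits, as formalized in \cite{Backens-Holant-c}: one conditions on each $\sigma\in\mathscr{S}(f)$ in turn, and within each conditioned instance the signatures are genuinely affine after the local diagonal twist $T_{\alpha^{s_1}}\otimes\cdots\otimes T_{\alpha^{s_n}}$.

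The main obstacle, if one were to redo these proofs from scratch, is not the reductions but verifying that the transformed binary equality $(=_2)(T^{-1})^{\otimes 2}$ actually lands in $\mathscr{C}$ in a usable way, and dealing with the $\mathscr{L}$ case, where the local-affine condition is only witnessed on the support and does not straightforwardly survive gadget construction. However, the tractability algorithms themselves do not need any closure under gadgets; they operate directly on the given grid. Since all four cases above are either folklore or appear explicitly in \cite{CLX-HOLANTC, Backens-Holant-c}, our proof would be chiefly a matter of quoting those algorithms and verifying that the holographic reduction preserves the input size and evaluation value, which it does by construction.
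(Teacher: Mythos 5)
The paper does not prove this theorem; it states it as a known result with citations to \cite{CLX-HOLANTC, Backens-Holant-c}, preceded only by the one-line remark that $\mathscr{C}$-transformability reduces to the tractability of $\Holant(\mathscr{C})$. So there is no in-paper proof to compare against. That said, your outline is the correct standard argument: $\Holant(\mathcal{F})\equiv_T\holant{=_2}{\mathcal{F}}$ via the 2-stretch, Valiant's holographic theorem sends this to $\holant{(=_2)(T^{-1})^{\otimes 2}}{T\mathcal{F}}$, and when $T$ witnesses $\mathscr{C}$-transformability both sides of the bipartite instance lie in $\mathscr{C}$, so the problem is a special case of $\Holant(\mathscr{C})$; the $\mathscr{T}$ case is dispatched by splitting each vertex into its tensor factors, leaving only paths and cycles, which evaluate by matrix products and traces in linear time.

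The one place where you are noticeably hand-waving is the $\mathscr{L}$ case. Describing it as ``conditioning on each $\sigma\in\mathscr{S}(f)$ in turn and applying the local diagonal twist within each conditioned instance'' understates the difficulty: the local-affine condition is witnessed pointwise on the support of each individual signature, so the twists $T_{\alpha^{s_i}}$ at different vertices must be chosen consistently across the whole grid for the resulting global instance to be affine, and the number of assignments one must range over is a priori exponential. The actual algorithm in \cite{Backens-Holant-c} manages this by exploiting additional structure of $\mathscr{L}$ rather than by brute-force conditioning. Since the theorem is being quoted rather than reproved, this is acceptable for your purposes, but it is the genuine technical content of the cited result and should not be presented as routine.
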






\begin{lemma}[\cite{realodd}]\label{lem-hard-sign}
Let $\mathcal{F}$ be a set of real-valued signatures.
If $\mathcal{F}$ does not satisfy condition \rm{(\ref{main-thr})},  then for every $Q\in {\bf O}_2$,  $Q\mathcal{F}$ also does not satisfy condition \rm{(\ref{main-thr})}. Moreover,  $\widehat{\mathcal{F}}\not\subseteq \mathscr{P}$ and $\widehat{\mathcal{F}}\not\subseteq \mathscr{A}$.
\end{lemma}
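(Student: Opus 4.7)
The plan is to prove the contrapositive of the first claim directly, and then deduce the ``Moreover'' clause by specializing the witness transformation to $Z^{-1}$. Both parts rest on a single bookkeeping observation: the witness in a $\mathscr{C}$-transformability can be post-composed with any map that fixes $=_2$ as a (contravariant) row signature, so whenever a transformation preserves $=_2$, it preserves the entire condition (\ref{main-thr}) as well.

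Fix $Q\in{\bf O}_2$ and suppose $Q\mathcal{F}$ satisfies (\ref{main-thr}). If $Q\mathcal{F}\subseteq\mathscr{T}$, then since $(Q^{-1})^{\otimes n}$ carries a tensor product of unary and binary signatures to another such product, we get $\mathcal{F}=Q^{-1}(Q\mathcal{F})\subseteq\mathscr{T}$. Otherwise pick some $\mathscr{C}\in\{\mathscr{P},\mathscr{A},\mathscr{L}\}$ and a witness $T\in{\bf GL}_2(\mathbb{C})$ with $T(Q\mathcal{F})\subseteq\mathscr{C}$ and $(=_2)(T^{-1})^{\otimes 2}\in\mathscr{C}$. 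I would take $T':=TQ$; then $T'\mathcal{F}=T(Q\mathcal{F})\subseteq\mathscr{C}$, and since $Q$ is orthogonal, the transformation formula $M_1(fT^{-1})=(T^{-1})^{\tt T}M_1(f)T^{-1}$ applied with $f=(=_2)$ and $T=Q$ yields $M_1((=_2)(Q^{-1})^{\otimes 2})=(Q^{-1})^{\tt T}I_2 Q^{-1}=QQ^{\tt T}=I_2$, i.e.\ $(=_2)(Q^{-1})^{\otimes 2}=(=_2)$. Hence $(=_2)((T')^{-1})^{\otimes 2}=(=_2)(Q^{-1})^{\otimes 2}(T^{-1})^{\otimes 2}=(=_2)(T^{-1})^{\otimes 2}\in\mathscr{C}$, so $T'$ witnesses that $\mathcal{F}$ itself is $\mathscr{C}$-transformable.

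For the ``Moreover'' clause I argue by contradiction. If $\widehat{\mathcal{F}}\subseteq\mathscr{P}$ (respectively $\mathscr{A}$), take $T=Z^{-1}$; then $T\mathcal{F}=\widehat{\mathcal{F}}\subseteq\mathscr{C}$, and using the identity $(=_2)Z=\neq_2$ already recorded in the preliminaries we have $(=_2)(T^{-1})^{\otimes 2}=(=_2)Z^{\otimes 2}=\neq_2$. Since $\neq_2$ is a binary disequality function it lies in $\mathscr{P}$; and since $\neq_2(x_1,x_2)=\chi_{x_1+x_2+1=0}$ over $\mathbb{Z}_2$ with zero quadratic part it also lies in $\mathscr{A}$. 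So $T=Z^{-1}$ witnesses $\mathscr{C}$-transformability of $\mathcal{F}$, contradicting the hypothesis that $\mathcal{F}$ fails (\ref{main-thr}). There is no real obstacle in this proof; the content of the lemma is precisely to verify that the tractability criterion, even though stated in terms of $=_2$, is insensitive both to orthogonal changes of basis on $\mathcal{F}$ and to the particular choice of $=_2$ versus $\neq_2$ on the ``left side'' of the bipartite formulation.
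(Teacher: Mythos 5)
Your proof is correct. Note that the paper cites this lemma from the reference \cite{realodd} and does not reproduce a proof of its own, so there is no internal argument to compare against. Your derivation is the expected one: for the first claim, replace the witness $T$ for $Q\mathcal{F}$ by $T'=TQ$ and use the orthogonality identity $(=_2)(Q^{-1})^{\otimes 2}=(=_2)$ (which the paper also records just before this lemma, as the basis for $\holant{=_2}{\mathcal{F}}\equiv_T\holant{=_2}{Q\mathcal{F}}$), handling the $\mathscr{T}$ case separately since it is not of transformability type; for the ``Moreover'' clause, observe that $Z^{-1}$ witnesses $\mathscr{P}$- or $\mathscr{A}$-transformability because $(=_2)Z=\neq_2$ lies in both $\mathscr{P}$ and $\mathscr{A}$. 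All the small verifications ($(Q^{-1})^{\tt T}Q^{-1}=I_2$ for $Q\in{\bf O}_2$, $\mathscr{T}$ closed under invertible transformations, $\neq_2=\chi_{x_1+x_2+1=0}\in\mathscr{A}$) check out.
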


\subsection{Hardness results and P-time reductions}
We give some known hardness results.  We state these results for our setting.




\begin{theorem}[\cite{CLX-HOLANTC, realodd}]\label{hard-result}
Let $\mathcal{F}$ be a set of real-valued signatures.
If $\mathcal{F}$ does not satisfy condition \rm{(\ref{main-thr})}. 
Then for every $Q\in {\bf O}_2$ and every $k\geqslant 2$, $\CSP_2(Q\mathcal{F})$ and $\holant{\neq_2}{=_k, \widehat{Q}\widehat{\mathcal F}}$ are \#P-hard.
\end{theorem}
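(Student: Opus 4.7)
The plan is to reduce both hardness claims to known dichotomy theorems, with Lemma~\ref{lem-hard-sign} as the enabling observation. Applying Lemma~\ref{lem-hard-sign} to the assumption that $\mathcal{F}$ fails~(\ref{main-thr}), we obtain at once that $Q\mathcal{F}$ also fails~(\ref{main-thr}) for every $Q \in {\bf O}_2$, while remaining real-valued since $Q$ is real, and moreover $\widehat{\mathcal{F}} \not\subseteq \mathscr{P}$ and $\widehat{\mathcal{F}} \not\subseteq \mathscr{A}$. Both of these facts will be needed downstream.

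For the first claim, that $\CSP_2(Q\mathcal{F})$ is \#P-hard, I would appeal directly to the $\CSP_2$ dichotomy for real-valued signatures from \cite{CLX-HOLANTC} (and its refinements). Its tractability boundary restricted to real signatures coincides with condition~(\ref{main-thr}); since $Q\mathcal{F}$ fails~(\ref{main-thr}), the dichotomy delivers \#P-hardness immediately.

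For the second claim, that $\holant{\neq_2}{=_k, \widehat{Q}\widehat{\mathcal{F}}}$ is \#P-hard, I would apply the holographic $Z$-transformation with $T = Z$. Using $(\neq_2)Z^{-1} = (=_2)$ and $Z\widehat{Q}\widehat{\mathcal{F}} = Q\mathcal{F}$, this yields
$$
\holant{\neq_2}{=_k,\ \widehat{Q}\widehat{\mathcal{F}}} \equiv_T \holant{=_2}{Z(=_k),\ Q\mathcal{F}} \equiv_T \Holant(Z(=_k), Q\mathcal{F}),
$$
where the last step absorbs the $=_2$ edges into an ordinary Holant problem. Starting from $Z(=_k)$---which, for $k$ even, is supported on even-weight strings with signs $\pm 2/2^{k/2}$ dictated by the weight modulo~$4$, and which has an analogous parity-type structure for $k$ odd---together with the structural signatures extractable from $Q\mathcal{F}$ (non-trivial since $\widehat{\mathcal{F}} \not\subseteq \mathscr{P}$ and $\widehat{\mathcal{F}} \not\subseteq \mathscr{A}$ by Lemma~\ref{lem-hard-sign}), the goal is to construct enough even equalities by gadget reductions to reduce back to $\CSP_2(Q\mathcal{F})$ and then conclude by the first claim.

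The principal obstacle is executing this gadget construction uniformly across all $k \geq 2$, including different parities of $k$ and small cases where only a single equality $=_k$ is initially available on the right. In practice this is the bulk of the work: one must verify that $Z(=_k)$, connected to copies of itself and to signatures derived from $Q\mathcal{F}$ via $=_2$-edges, realizes a sufficiently rich family of constraints to drive the $\CSP_2$ dichotomy. This case analysis is the content of the arguments in \cite{CLX-HOLANTC, realodd}, which I would invoke rather than redo.
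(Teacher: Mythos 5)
The paper does not actually prove Theorem~\ref{hard-result}: it is stated with the bracket ``[\cite{CLX-HOLANTC, realodd}]'' and is used throughout as a black box. So there is no in-paper argument to match your proposal against; both claims are imported wholesale. That said, since you attempt to reconstruct the reasoning, a few of your intermediate assertions deserve scrutiny, and one of them is actually wrong.

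Your claim that the $\CSP_2$ tractability boundary ``restricted to real signatures coincides with condition~(\ref{main-thr})'' is false. Condition~(\ref{main-thr}) contains, for example, the case $\mathcal{F} \subseteq \mathscr{T}$ (tensor products of unary and binary signatures). But $\CSP_2(\mathcal{F})$ can be \#P-hard even when $\mathcal{F}$ consists of a single real binary signature --- the even-arity equalities $=_2, =_4, \ldots$ that come for free in $\CSP_2$ supply extra power that $\Holant(\mathcal{F})$ does not have, and for a generic non-affine, non-product binary this makes $\CSP_2$ hard while $\Holant$ is trivially tractable. So the two tractability regions are genuinely different, with $\CSP_2$-tractability strictly narrower. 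What the theorem actually needs is only the \emph{one-sided} inclusion: if $Q\mathcal{F}$ fails~(\ref{main-thr}), then $Q\mathcal{F}$ fails the $\CSP_2$-tractability condition. This inclusion is not automatic --- it cannot be derived from ``more constraints available $\Rightarrow$ harder,'' because the paper is precisely in the middle of establishing whether failing~(\ref{main-thr}) makes $\Holant(\mathcal{F})$ hard, so no complexity-theoretic shortcut is available. Verifying the inclusion is structural work, and it is exactly the non-trivial content of the cited theorem. By asserting a (false) coincidence of boundaries, you have quietly replaced the hard step with a claim that would, if true, make the theorem a triviality.

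Your treatment of the second claim is less objectionable: the $Z$-holographic transformation computation $\holant{\neq_2}{=_k,\ \widehat{Q}\widehat{\mathcal{F}}} \equiv_T \Holant(Z(=_k), Q\mathcal{F})$ is correct, and the use of Lemma~\ref{lem-hard-sign} to carry the failure of~(\ref{main-thr}) across $Q$ is the right move. But after setting this up you write ``I would invoke rather than redo'' the case analysis from the references. That is consistent with what the paper itself does (which is simply to cite them), but it means your proposal has not actually \emph{proved} anything beyond what the citation already asserts, and the one place where you did attempt a substantive justification --- the $\CSP_2$ claim --- is incorrect. In short: the overall architecture (use Lemma~\ref{lem-hard-sign}, transform with $Z$, reduce from $\CSP_2$) is sensible, but the key structural inclusion between tractability conditions is misstated as an equivalence and then treated as obvious, which it is not.
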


\begin{theorem}[\cite{realodd}]\label{odd-dic}
Let $\mathcal{F}$ be a set of real-valued signatures containing a nonzero signature of odd arity.
If $\mathcal{F}$ does not satisfy condition \rm{(\ref{main-thr})},
then $\Holant(\mathcal{F})$ is \#P-hard.
\end{theorem}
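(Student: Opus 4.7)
The plan is to reduce the \#P-hard problem $\holant{\neq_2}{=_k, \widehat{Q}\widehat{\mathcal{F}}}$ from Theorem \ref{hard-result}, for some $Q \in {\bf O}_2$ and some $k \geq 2$, to $\holant{\neq_2}{\widehat{\mathcal{F}}} \equiv_T \Holant(\mathcal{F})$, leveraging the nonzero odd-arity signature $f \in \mathcal{F}$. First I would pass to the $Z$-transformed setting so that $\widehat{f}$ satisfies {\sc ars} (Lemma \ref{real-ars}). By Lemmas \ref{unique} and \ref{lin-wang}, I may replace $\widehat{f}$ by one of its real irreducible factors; since the arity of $\widehat{f}$ is odd, at least one such factor must have odd arity, so I assume $\widehat{f}$ is irreducible.

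The workhorse is the mating gadget $\widehat{\mathfrak{m}}_i \widehat{f}$, whose binary signature matrix is computed in (\ref{eqn:ars}). I would case split on whether any $M(\widehat{\mathfrak{m}}_i \widehat{f})$ has rank two. In the nondegenerate case, repeated mating combined with merging via $\widehat{\partial}_{ij}$ and extension by $\widehat{\mathcal{B}}$ (Lemma \ref{lem-extending}) realizes a binary signature equivalent to $\neq_2$ after an orthogonal twist $\widehat{Q}$; bootstrapping from this binary, I would construct the equalities $=_k$ on the original side, producing the desired reduction from $\holant{\neq_2}{=_k, \widehat{Q}\widehat{\mathcal{F}}}$. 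Odd arity is crucial here: each mating step preserves arity parity in a controlled way, and odd arity guarantees that iterated gadget construction can reach signatures of both parities, breaking the parity obstruction that would otherwise keep the construction inside a tractable subclass.

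In the degenerate case, every $M(\widehat{\mathfrak{m}}_i \widehat{f})$ has rank at most one for all $i$. Combining this with {\sc ars}, Lemma \ref{lem-zero_2}, and irreducibility pins down $\widehat{f}$ to a highly constrained support; either a direct contradiction ensues, or $\widehat{f}$ itself becomes product-type or affine after an orthogonal transformation. In this sub-case, Lemma \ref{lem-hard-sign} guarantees the existence of some other $\widehat{g} \in \widehat{\mathcal{F}}$ with $\widehat{g} \notin \mathscr{P}$ and $\widehat{g} \notin \mathscr{A}$; combining $\widehat{g}$ with $\widehat{f}$ through mating and merging supplies the missing structural signature to close the reduction. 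I expect the main obstacle to be exactly this degenerate case: analyzing how $\widehat{f}$ and the witness $\widehat{g}$ interact under the various gadget operations, while maintaining both {\sc ars} and enough rigidity to realize $=_k$, is the delicate technical core of the argument.
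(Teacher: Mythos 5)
This theorem is \emph{not} proved in the present paper: it is imported verbatim from \cite{realodd} (the authors' previous work on real-valued Holant with an odd-arity signature), where it constitutes the main result. There is therefore no internal proof to compare your sketch against. Instead, I can only evaluate whether the sketch is plausible as a proof of the imported theorem, and there I see substantial gaps.

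First, the mechanism you attribute to odd arity --- ``each mating step preserves arity parity in a controlled way, and odd arity guarantees that iterated gadget construction can reach signatures of both parities'' --- is not the workhorse. The actual advantage of an odd-arity signature is much more concrete: iterated self-loops (merging two variables with $\ne_2$) drop arity by $2$ at each step, and since the arity is odd, this terminates at arity $1$, so one can (modulo non-vanishing issues) realize a nonzero \emph{unary} signature. A unary is what breaks open the problem, because it allows one to reach the $\Holantc$ setting (Theorem~\ref{thm-main-thr}) or \#CSP-type hardness via pinning. Your sketch never mentions realizing a unary at all, and the ``parity obstruction'' framing, while suggestive, does not pin down this step.

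Second, the nondegenerate branch asserts that you would realize ``a binary signature equivalent to $\ne_2$ after an orthogonal twist'' and then ``bootstrap from this binary'' to $=_k$. This does not work as stated: $\ne_2$ is already available (it labels the left side of the bipartite Holant), and a binary alone cannot produce an arity-$k$ equality for $k \geqslant 3$ --- one must realize a \emph{higher-arity} equality or disequality (e.g.\ $\ne_4$, as in Lemma~\ref{lem-neq-4-hard}) from higher-arity gadgets built out of $\widehat{f}$, and that requires a careful structural analysis of the support and parity of $\widehat{f}$ that your sketch skips. The degenerate branch (``either a direct contradiction ensues, or $\widehat{f}$ becomes product-type or affine, \dots combining $\widehat{g}$ with $\widehat{f}$ through mating and merging supplies the missing structural signature'') is likewise too schematic to check. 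Finally, note the circularity hazard: several of the supporting lemmas in this paper (e.g.\ Lemma~\ref{even-first-order}) themselves invoke Theorem~\ref{odd-dic}, so a self-contained proof must avoid them, and your sketch does not flag which tools are safe to use. As written, the proposal is a plausible high-level itinerary but not a proof.
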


The following reduction is obtained by
\emph{polynomial interpolation}. 

\begin{lemma}[\cite{jcbook}]\label{2by2-interpolation}
Let $f$ and $g$ be nonzero binary signatures with $M(f)=P^{-1}\left[\begin{smallmatrix}
\lambda_1 & 0\\
0 & \lambda_2\\
\end{smallmatrix}\right]P$ and
$M(g)=P^{-1}\left[\begin{smallmatrix}
1 & 0\\
0 & 0\\
\end{smallmatrix}\right]P$ for some invertible matrix $P$.
If $\lambda_1\neq 0$ and $|\frac{\lambda_2}{\lambda_1}|\neq 1$,
then $$\Holant(g, \mathcal{F})\leqslant_T\Holant(f, \mathcal{F})$$ for any signature set $\mathcal{F}$.
\end{lemma}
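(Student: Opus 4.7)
The plan is to apply standard polynomial interpolation in the style of Valiant. Since $\lambda_1\neq 0$ I may divide through and set $r=\lambda_2/\lambda_1$. If $r=0$ then $M(f)=\lambda_1 M(g)$, so $f$ is a nonzero scalar multiple of $g$ and the reduction is immediate; so assume $r\neq 0$. Introduce the auxiliary binary signature $g'$ defined by $M(g')=P^{-1}\bigl[\begin{smallmatrix}0&0\\0&1\end{smallmatrix}\bigr]P$, so that $M(f)=\lambda_1 M(g)+\lambda_2 M(g')$. Chaining $k$ copies of $f$ by shared edges (one variable of each successive copy identified with the next) is realizable as an $\{f\}$-gate, and the resulting binary signature has matrix
\[
M(f)^k \;=\; P^{-1}\mathrm{diag}(\lambda_1^k,\lambda_2^k)P \;=\; \lambda_1^k M(g) + \lambda_2^k M(g').
\]

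Given an arbitrary instance $\Omega$ of $\Holant(g,\mathcal{F})$, let $n$ be the number of occurrences of $g$ in $\Omega$, and let $\Omega_k$ denote the instance of $\Holant(f,\mathcal{F})$ obtained from $\Omega$ by replacing each $g$-labeled vertex with a $k$-chain of copies of $f$; the oracle can compute $\Holant(\Omega_k)$ for each $k$. Next I would expand $\Holant(\Omega_k)$ by multilinearity, since each $k$-chain acts as the linear combination $\lambda_1^k M(g)+\lambda_2^k M(g')$. Writing $c_S$ for the Holant value of the signature grid obtained from $\Omega$ by labeling positions outside $S\subseteq[n]$ with $g$ and positions inside $S$ with $g'$,
\[
\Holant(\Omega_k) \;=\; \sum_{S\subseteq[n]} \lambda_1^{k(n-|S|)}\lambda_2^{k|S|}c_S.
\]
Dividing by $\lambda_1^{kn}$ and grouping the $c_S$ by $j=|S|$ gives
\[
\lambda_1^{-kn}\Holant(\Omega_k) \;=\; \sum_{j=0}^{n} r^{kj}\,Y_j,\qquad\text{with } Y_j:=\sum_{|S|=j}c_S \text{ and } Y_0=c_\emptyset=\Holant(\Omega).
\]

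Finally I would run $k$ over $\{1,2,\dots,n+1\}$ to obtain $n+1$ linear equations in the $n+1$ unknowns $Y_0,\dots,Y_n$. The coefficient matrix has entries $r^{kj}=(r^j)^k$; factoring $r^j$ out of column $j$ converts it into a standard Vandermonde matrix on the nodes $1,r,r^2,\dots,r^n$, whose determinant is $\prod_{j=0}^{n} r^j \cdot \prod_{0\le i<j\le n}(r^j-r^i)$. Since $r\neq 0$ and $|r|\neq 1$, we have $|r^j|=|r|^j$ strictly monotone in $j$, so the nodes are pairwise distinct and the determinant is nonzero. The system can therefore be solved in polynomial time to recover $Y_0=\Holant(\Omega)$, completing the Turing reduction.

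The main technical point — and the only place the hypothesis is really used — is the nonsingularity of the Vandermonde: the condition $|\lambda_2/\lambda_1|\neq 1$ is precisely what forces the powers $r^j$ to be distinct and the interpolation to succeed. The rest is bookkeeping: checking that the $k$-chain is a legitimate $\{f\}$-gate (immediate), that the Holant value really decomposes multilinearly over the two eigen-components $M(g)$ and $M(g')$ of $M(f)^k$, and that the number of required oracle queries is $n+1=O(|\Omega|)$, hence polynomial.
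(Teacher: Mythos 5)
Your proof is correct and reconstructs the standard polynomial-interpolation argument from \cite{jcbook}; the paper simply cites the lemma without giving a proof. The eigen-projection decomposition $M(f)^k=\lambda_1^kM(g)+\lambda_2^kM(g')$, the multilinear expansion over $S\subseteq[n]$, the grouping by $|S|$, the clean handling of $r=0$ as a degenerate case, and the Vandermonde nonsingularity argument based on $|\lambda_2/\lambda_1|\neq 1$ are exactly the textbook route.
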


By Lemmas \ref{lin-wang} and \ref{2by2-interpolation}, we can always realize a nonzero unary signature from nonzero signatures not satisfying {\sc 1st-Orth}. We have the following \#P-hardness result.

\begin{lemma}\label{even-first-order}
Let $\mathcal{F}$ be a set of real-valued signatures containing a nonzero signature that does not satisfy {\sc 1st-Orth}. If $\mathcal{F}$ does not satisfy condition \rm{(\ref{main-thr})},
then $\Holant(\mathcal{F})$ is \#P-hard.
 \end{lemma}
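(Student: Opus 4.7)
The plan is to extract a nonzero unary signature $u$ from $f$ and then invoke the odd-arity dichotomy (Theorem~\ref{odd-dic}). Start with a nonzero real-valued $f\in\mathcal{F}$ that fails {\sc 1st-Orth}. By Lemma~\ref{lem-same-mu}, if every mating matrix $M(\mathfrak m_i f)$ were of the form $\mu_i I_2$ with $\mu_i\neq 0$ then {\sc 1st-Orth} would hold; and a zero mating matrix cannot occur because $f\not\equiv 0$ means $({\bf f}_i^0,{\bf f}_i^1)\neq(0,0)$ for every $i$. Hence there exists an index $i$ at which $M(\mathfrak m_i f)$ is a nonzero real symmetric PSD matrix that is \emph{not} a scalar multiple of $I_2$, since by~(\ref{m-form}) it is the Gram matrix of $\{{\bf f}_i^0,{\bf f}_i^1\}$. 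Its eigenvalues therefore satisfy $\lambda_1>\lambda_2\geqslant 0$ with $\lambda_1>0$, so in particular $|\lambda_2/\lambda_1|<1$.

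Diagonalize orthogonally as $M(\mathfrak m_i f)=Q\,\mathrm{diag}(\lambda_1,\lambda_2)\,Q^{\tt T}$ and set $P=Q^{\tt T}$. Lemma~\ref{2by2-interpolation} now applies and produces a binary signature $g$ with $M(g)=Q\,\mathrm{diag}(1,0)\,Q^{\tt T}=q_1 q_1^{\tt T}$, where $q_1$ denotes the first column of $Q$. As a tensor, $g=q_1\otimes q_1$ is a real factorization of a nonzero binary signature, so Lemma~\ref{lin-wang} gives $\Holant(u,\mathcal{F})\equiv_T\Holant(g,\mathcal{F})\leqslant_T\Holant(\mathfrak m_i f,\mathcal{F})\leqslant_T\Holant(\mathcal{F})$ for the nonzero real unary $u:=q_1$.

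It remains to observe that $\mathcal{F}\cup\{u\}$ still violates condition~(\ref{main-thr}): inclusion in $\mathscr{T}$ and $\mathscr{C}$-transformability for $\mathscr{C}\in\{\mathscr{P},\mathscr{A},\mathscr{L}\}$ are each monotone under restriction to subsets, so failure on $\mathcal{F}$ implies failure on every superset of $\mathcal{F}$. Since $u$ is a nonzero signature of odd arity $1$, Theorem~\ref{odd-dic} applied to $\mathcal{F}\cup\{u\}$ gives \#P-hardness of $\Holant(u,\mathcal{F})$, and the reduction above then makes $\Holant(\mathcal{F})$ \#P-hard. The one delicate step is the opening one: correctly negating {\sc 1st-Orth} (which demands a \emph{uniform} nonzero $\mu$) requires combining Lemma~\ref{lem-same-mu} with $f\not\equiv 0$ to secure an index $i$ at which $M(\mathfrak m_i f)$ has two strictly distinct nonnegative eigenvalues, which is exactly the hypothesis interpolation needs.
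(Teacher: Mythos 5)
Your proof is correct and follows essentially the same route as the paper's: use Lemma~\ref{lem-same-mu} plus $f\not\equiv 0$ to find an index $i$ where $M(\mathfrak m_i f)$ is a nonzero real symmetric PSD matrix with distinct eigenvalues, interpolate via Lemma~\ref{2by2-interpolation} to a degenerate binary $g$, factor $g$ into a nonzero unary by Lemma~\ref{lin-wang}, and invoke Theorem~\ref{odd-dic}. The only difference is that you spell out two steps the paper leaves implicit — the explicit orthogonal diagonalization $M(\mathfrak m_i f)=Q\,\mathrm{diag}(\lambda_1,\lambda_2)\,Q^{\tt T}$ identifying $g=q_1\otimes q_1$, and the observation that failure of condition~(\ref{main-thr}) is inherited by the superset $\mathcal{F}\cup\{u\}$ (needed so that Theorem~\ref{odd-dic} can be applied to the enlarged set) — both of which are correct and tighten the argument rather than change it.
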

 
 \begin{proof}
 Consider $M_{i}(f)$ for all indices $i$. 
 Clearly, $M(\mathfrak{m}_if)=M_{i}(f)M^{\tt T}_{i}(f)$ 
 is a real symmetric
positive semi-definite matrix, which is diagonalizable with two non-negative real eigenvalues $\lambda_i\geqslant\mu_i\geqslant 0$.
These two eigenvalues are not both zero since $f$ is real valued and $f \not\equiv 0$,
and so $M(\mathfrak{m}_if)\neq 0$. Thus, $\lambda_i\neq 0$.
Then, $|\frac{\mu_i}{\lambda_i}|=1$
iff $\lambda_i=\mu_i$.
In other words, 
$M(\mathfrak{m}_if)= \mu_i I_2$ for some real $\mu_i\neq 0$.

Since $f$ does not satisfy {\sc 1st-Orth}, by Lemma~\ref{lem-same-mu}, there is an index $i$ such that $M(\mathfrak{m}_if)\neq  \mu_i I_2$ for any real $\mu_i\neq 0$. 
Thus, $M(\mathfrak{m}_if)$ has two eigenvalues with different norms. By Lemma \ref{2by2-interpolation},
 we can realize a nonzero binary signature $g$ such that $M(g)$ is degenerate. This implies that  $g$ can be factorized as a tensor product of two nonzero unary signatures. 
 By Lemma \ref{lin-wang}, we can realize a nonzero unary signature and hence by Theorem \ref{odd-dic},  $\Holant(\mathcal{F})$ is \#P-hard.
 \end{proof}

We also need to use the results of \emph{eight-vertex models} and  \emph{Eulerian Orientation} (EO) problems.
\begin{theorem}[\cite{cai-fu-eight}]\label{eight-vertex}
Let $\widehat{f}$ be a signature with 
$M(\widehat{f})=\left[\begin{smallmatrix}
    c & 0 & 0 & a\\
     0 & d  & b & 0\\
      0 & \overline b & \overline d  & 0\\
      \overline  a & 0 & 0 &  \overline{c}\\
    \end{smallmatrix}\right]$. Then,  $\holant{\neq_2}{\widehat f}$ is \#P-hard in the following cases.
    \begin{itemize}
        \item $\widehat f$ has support $6$,
        \item $\widehat f$ has support $4$ and the nonzero entries of $M(\widehat f)$ do not have the same norm, or
        \item $\widehat f$ has support $8$, all nonzero entries of $M(\widehat f)$ are positive real numbers and are not all equal.
    \end{itemize}
\end{theorem}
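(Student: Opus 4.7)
The plan is to handle the three cases by different reductions, all working within the bipartite $\holant{\neq_2}{\widehat f}$ setting and exploiting the ARS structure of $M(\widehat f)$. Throughout, the strategy is either to produce a genuine six-vertex instance and appeal to the known six-vertex dichotomy of Cai--Fu--Xia \cite{cai-fu-xia-six}, or to produce a nonzero unary signature and fall back on the odd-arity dichotomy of Theorem \ref{odd-dic}.

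For the support-6 case, I would split on which paired set of entries (symmetric about the anti-diagonal of $M(\widehat f)$) vanishes. When $c=\bar c=0$, $\widehat f$ is an arity-4 EO signature on $\mathscr{H}_4$, so $\holant{\neq_2}{\widehat f}$ is exactly a six-vertex problem and the dichotomy \cite{cai-fu-xia-six} applies directly. When instead one of $a, d, b$ vanishes, I would apply a diagonal holographic transformation from $\widehat{{\bf O}_2}$ (which preserves $\neq_2$) or a variable-swap gadget so as to move the vanishing pair into the corners, reducing to the EO sub-case. For the support-4 case with entries of unequal norm, I would mate $\widehat f$ with itself along two of its four variables using $\neq_2$. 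The resulting binary signature, by the formula in \eqref{eqn:ars}, has two eigenvalues with distinct absolute values since the two surviving nonzero entry-pairs have distinct norms; Lemma \ref{2by2-interpolation} then realizes a rank-1 binary, which factors via Lemma \ref{lin-wang} into a nonzero unary, and attaching this unary to $\widehat f$ produces an arity-3 signature so that Theorem \ref{odd-dic} finishes the case.

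The support-8 case with non-uniform positive real entries is the most technical and is where I expect the real obstacle to lie. My approach is to combine mating and merging gadgets on two copies of $\widehat f$ to produce a new arity-4 signature of the same eight-vertex form whose new entries $(a', b', c', d')$ are polynomial in $(a, b, c, d)$, then exploit the hypothesis that $a, b, c, d$ are not all equal to drive one of the new pairs to zero, dropping into the support-6 case already handled. The principal difficulty is showing that for \emph{every} non-uniform positive real 4-tuple $(a,b,c,d)$ some gadget combination strictly collapses the support, and dealing with borderline configurations such as $a=b\neq c=d$ where the natural mating gadgets preserve the full support. For these residual configurations I would fall back on a direct polynomial interpolation: since the four positive reals are not all equal, the eigenvalues of suitable induced binary signatures differ in absolute value, and repeated interpolation against a recursive gadget construction should produce a hard six-vertex template (e.g., an asymmetric ice-rule instance) whose \#P-hardness is already recorded in \cite{cai-fu-xia-six}.
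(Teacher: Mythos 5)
The paper does not prove this theorem; it is imported wholesale from the eight-vertex paper \cite{cai-fu-eight}. So there is no internal proof to compare against, and your proposal has to stand or fall on its own. It falls: there are concrete gaps in all three cases.

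In the support-$6$ case, the subcase $c=\bar c=0$ is genuinely a six-vertex (EO) signature, but your reduction of the remaining subcases to it does not work. The four pairs $\{c,\bar c\},\{a,\bar a\},\{d,\bar d\},\{b,\bar b\}$ sit on inputs of Hamming weight $\{0,4\}$ for the first and $\{2\}$ for the other three. Every $\widehat Q\in\widehat{{\bf O}_2}$ either preserves Hamming weight or maps weight $w$ to weight $2n-w$ on each entry, and every variable permutation preserves Hamming weight; neither can turn a vanishing pair at weight $2$ into a vanishing pair at weight $\{0,4\}$, so there is no way to ``move the vanishing pair into the corners'' by these operations. (Re-pairing the four variables, which you might try instead, permutes $a$, $b$, $d$ among themselves but never exchanges any of them with $c$.) The subcase with $a=0$, $c\neq 0$, which is not an EO signature, therefore needs its own argument. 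You also never actually verify that the six-vertex instance you land in lies on the hard side of \cite{cai-fu-xia-six}; the support-$6$ hypothesis does rule out affine or product support (six is not a power of two), but that step has to be said.

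In the support-$4$ case, the mating step is the gap. For any signature of the stated form, the two halves $\widehat{\bf f}_i^0$ and $\widehat{\bf f}_i^1$ have \emph{disjoint} supports --- that is exactly what the even-parity pattern of $M(\widehat f)$ encodes --- so $\langle\widehat{\bf f}_i^0,\widehat{\bf f}_i^1\rangle=0$ automatically, and by formula (\ref{eqn:ars})
\[
M(\widehat{\mathfrak m}_i\widehat f)=\left[\begin{matrix}0&\mu\\ \mu&0\end{matrix}\right]=\mu\,N_2,\qquad \mu=|\widehat{\bf f}_i^0|^2,
\]
a scalar multiple of $\neq_2$ regardless of the norms of $a,b,c,d$. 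The eigenvalues are $\pm\mu$, always of equal absolute value, so Lemma~\ref{2by2-interpolation} never applies and no nonzero unary drops out of this construction. You would need a fundamentally different gadget (e.g.\ a 4-dangling mating followed by pinnings, or a non-standard ``crossed'' mating) to create a binary with a nontrivial spectral gap, and you have not exhibited one.

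For the support-$8$ case you acknowledge the argument is a sketch, and it is: the borderline configurations such as $a=b\neq c=d$ are precisely the ones where simple gadgets preserve full support and equal-pair norms, and ``repeated interpolation against a recursive gadget construction should produce a hard template'' is a statement of intent, not a proof. This is the technical heart of \cite{cai-fu-eight}, and there is no shortcut around reproducing (or correctly invoking) a substantive portion of that analysis. As written, all three cases of your argument are incomplete.
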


\begin{theorem}[\cite{cai-fu-shao-eo}]\label{eo}
Let $\widehat{\mathcal{F}}$ be a set of \rm{EO} signatures (i.e., with half-weighted support)  satisfying {\sc ars}. Then $\Holant(\mathcal{DEQ}\mid \widehat{\mathcal{F}})$ is \#P-hard unless $\widehat{\mathcal{F}}\subseteq \mathscr{P}$ or $\widehat{\mathcal{F}}\subseteq \mathscr{A}$.
\end{theorem}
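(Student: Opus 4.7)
My approach for Theorem \ref{eo} is induction on the maximum arity appearing in $\widehat{\mathcal{F}}$. The key tool is the merging gadget $\widehat{\partial}_{ij}$: connecting two edges of $\widehat{f}$ by a $\neq_2 \in \mathcal{DEQ}$ produces an EO signature of arity $2n-2$ that still satisfies {\sc ars}, so the inductive setting is preserved. The goal is to show that if $\widehat{\mathcal{F}} \not\subseteq \mathscr{P}$ and $\widehat{\mathcal{F}} \not\subseteq \mathscr{A}$, then some gadget realizes a witness of hardness at a controllable arity.

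For the base cases, at arity $2$ any EO signature satisfying {\sc ars} is a scalar multiple of $\neq_2 \in \mathscr{P} \cap \mathscr{A}$, so the hypothesis fails vacuously. At arity $4$, the support lies in $\mathscr{H}_4$ and {\sc ars} enforces exactly the block structure of $M(\widehat{f})$ appearing in Theorem \ref{eight-vertex}. I would case-split on the support size ($2$, $4$, or $6$): Theorem \ref{eight-vertex} gives \#P-hardness for support $6$, for support $4$ with nonzero entries of different norms, and (after gluing with $\mathcal{DEQ}$) for the remaining non-tractable patterns, while the surviving cases land in $\mathscr{P}$ or $\mathscr{A}$.

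For the inductive step, let $\widehat{f} \in \widehat{\mathcal{F}}$ have arity $2n \geq 6$ with $\widehat{f} \notin \mathscr{P} \cup \mathscr{A}$. Form the merged signatures $\widehat{g}_{ij} = \widehat{\partial}_{ij}\widehat{f}$, each an EO signature of arity $2n-2$ with {\sc ars}. If some $\widehat{g}_{ij}$ lies outside $\mathscr{P} \cup \mathscr{A}$, the induction hypothesis yields \#P-hardness, and the merging is realizable inside $\holant{\mathcal{DEQ}}{\widehat{\mathcal{F}}}$ using a copy of $\neq_2 \in \mathcal{DEQ}$. If every $\widehat{g}_{ij} \in \mathcal{D}^{\otimes}$, then Lemma \ref{lem-eo} forces $\widehat{f} \in \mathcal{D}^{\otimes} \subseteq \mathscr{P}$, contradicting the assumption.

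The hard case is when every $\widehat{g}_{ij}$ lies in $\mathscr{P} \cup \mathscr{A}$ with at least one outside $\mathcal{D}^{\otimes}$, and crucially the tractable class witnessing each $\widehat{g}_{ij}$ may vary with $\{i,j\}$. Because $\mathscr{P}$ (multiplicative product-type structure on the support) and $\mathscr{A}$ (affine-linear support with a $\mathbb{Z}_4$-valued quadratic character) are structurally incompatible, the failure of $\widehat{f}$ to lie in either must propagate to a low-arity witness. I would extract this witness by combining two or three merged signatures via further gadget construction (e.g., extending by signatures in $\widehat{\mathcal{B}}$, or mating), aiming to realize an arity-$4$ EO signature violating the eight-vertex tractability criterion of Theorem \ref{eight-vertex}. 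I expect this mixing analysis to be the main technical obstacle: one must enumerate which combinations of affine-versus-product structures are simultaneously consistent with the EO constraint and, for each configuration, exhibit an explicit gadget breaking both tractability conditions at once.
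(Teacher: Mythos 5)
This theorem is cited from \cite{cai-fu-shao-eo} rather than proved in the present paper, so there is no in-paper proof to compare against. Evaluating your sketch on its own terms, it has several genuine gaps.

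First and most important, the exceptional condition in the statement is a \emph{set}-level condition: $\widehat{\mathcal{F}}\subseteq\mathscr{P}$ or $\widehat{\mathcal{F}}\subseteq\mathscr{A}$, not ``each signature lies in $\mathscr{P}\cup\mathscr{A}$''. Your induction is on the arity of a single $\widehat{f}\notin\mathscr{P}\cup\mathscr{A}$, but a set can fail the tractability condition even when every member individually is tractable --- e.g.\ one EO signature in $\mathscr{P}\setminus\mathscr{A}$ and another in $\mathscr{A}\setminus\mathscr{P}$. Your argument never engages with this mixing across different signatures in $\widehat{\mathcal{F}}$, yet the dichotomy requires hardness in exactly that situation. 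Second, the appeal to Lemma~\ref{lem-eo} in step~5 of your inductive case fails at arity $6$: that lemma as stated covers only $2n\geqslant 10$ unconditionally, and $2n=8$ only under a further divisibility hypothesis; it says nothing for $2n=6$, so you cannot conclude $\widehat{f}\in\mathcal{D}^{\otimes}$ from $\widehat{\partial}_{ij}\widehat{f}\in\mathcal{D}^{\otimes}$ there. Third --- and you flag this yourself --- the ``hard case'' where every $\widehat{\partial}_{ij}\widehat{f}\in\mathscr{P}\cup\mathscr{A}$ with at least one outside $\mathcal{D}^{\otimes}$ is left entirely unresolved. That is not a technical loose end but the actual content of the theorem. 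In such a case you may have, e.g., some merges in $\mathscr{A}\setminus\mathscr{P}$ and others in $\mathscr{P}\setminus\mathscr{A}$, and there is no reason a further merging or mating gadget should land you in a four-vertex eight-vertex-model witness without a structural classification of which mixed patterns are even attainable from an EO signature with {\sc ars}. This is precisely where a proof would need to introduce new machinery (the original proof in \cite{cai-fu-shao-eo} develops a block decomposition of EO signatures rather than an arity induction by merging), and the present sketch has no substitute for it. A smaller point on your base case: the arity-$4$ EO signature under $\neq_2$ is a six-vertex, not an eight-vertex, model (the corners of $M(\widehat{f})$ vanish since $\widehat{f}$ is EO), and the support-$4$-with-equal-norms subcase needs the \#CSP reduction via Theorem~\ref{hard-result} rather than anything Theorem~\ref{eight-vertex} gives directly --- you acknowledge ``gluing with $\mathcal{DEQ}$'' is needed, but the case split is not carried out.
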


The following reduction states that we can realize all $\mathcal{EQ}_2$ once we have $=_4$ in $\Holant(\mathcal{F})$.

\begin{lemma}[\cite{jcbook}]\label{lem-=4-red}
$\CSP_2(\mathcal{F})\leqslant_T\Holant(=_4, \mathcal F)$.
\end{lemma}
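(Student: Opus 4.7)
\medskip
\textbf{Proof plan.} The first step is to restate $\CSP_2(\mathcal{F})$ in the Holant framework. Analogous to the identity $\CSP(\mathcal{F}) \equiv_T \Holant(\mathcal{EQ} \mid \mathcal{F})$ quoted earlier, the standard reformulation gives
\[
\CSP_2(\mathcal{F}) \equiv_T \Holant(\mathcal{EQ}_2 \mid \mathcal{F}),
\]
where $\mathcal{EQ}_2 = \{=_2, =_4, =_6, \ldots\}$ is the set of even-arity equality signatures: in $\CSP_2$ each variable must appear an even number of times, so each variable vertex on the LHS is labeled by an even-arity equality. Combined with the trivial reduction $\Holant(\mathcal{EQ}_2 \mid \mathcal{F}) \leqslant_T \Holant(\mathcal{EQ}_2, \mathcal{F})$ (viewing a bipartite Holant instance as a special non-bipartite one), it therefore suffices to realize every $=_{2k}$ (for $k \geqslant 1$) from $\{=_4\}$ by gadget construction. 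The gadget lemma from \cite{jcbook} stated just above Lemma \ref{lem-zero_2} then yields the desired reduction $\CSP_2(\mathcal{F}) \leqslant_T \Holant(=_4, \mathcal{F})$.

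The gadget constructions proceed by induction on $k$. For the base case $k = 1$, take the $\{=_4\}$-gate that connects two of the four dangling edges of $=_4$ to each other (a self-loop, implicitly through the edge signature $=_2$). The resulting binary signature on the two remaining dangling edges is
\[
\sum_{a \in \{0,1\}} (=_4)(x_1, x_2, a, a) \;=\; (=_2)(x_1, x_2),
\]
as immediate case-checking on $(x_1, x_2)$ confirms. For the inductive step, assume $=_{2k}$ is already realizable; then the $\{=_{2k}, =_4\}$-gate obtained by identifying one variable of $=_{2k}$ with one variable of $=_4$ (again through an implicit $=_2$ edge) has signature
\[
\sum_{a \in \{0,1\}} (=_{2k})(x_1, \ldots, x_{2k-1}, a)\cdot (=_4)(a, y_1, y_2, y_3),
\]
which equals $1$ when all $2k+2$ inputs are equal and $0$ otherwise, i.e., is exactly $=_{2k+2}$ (the constant works out to $1$ with no normalization needed). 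By induction, every $=_{2k}$ with $k \geqslant 1$ is realizable from $\{=_4\}$.

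The argument is essentially mechanical; I do not expect a genuine obstacle. The only subtle point is making sure the equivalence $\CSP_2(\mathcal{F}) \equiv_T \Holant(\mathcal{EQ}_2 \mid \mathcal{F})$ is used with the same convention as the rest of the paper, and that the gadget constructions are interpreted in the non-bipartite $\Holant(=_4, \mathcal{F})$ setting where each edge implicitly carries $=_2$, so that the self-loop and merge gadgets really compute the sums written above.
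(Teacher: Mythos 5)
The lemma is cited from \cite{jcbook} without a proof in this paper, and your argument reproduces the standard textbook proof: restate $\CSP_2(\mathcal{F})$ as $\Holant(\mathcal{EQ}_2 \mid \mathcal{F})$, then realize $=_2$ from $=_4$ by a self-loop and realize all higher even-arity equalities by chaining copies of $=_4$ to an inductively built $=_{2k}$ through a single edge. Your verification that the self-loop gadget yields exactly $=_2$ and the chaining gadget yields exactly $=_{2k+2}$ with no stray normalization is correct, and the reduction chain $\CSP_2(\mathcal{F}) \equiv_T \Holant(\mathcal{EQ}_2 \mid \mathcal{F}) \leqslant_T \Holant(\mathcal{EQ}_2, \mathcal{F}) \leqslant_T \Holant(=_4, \mathcal{F})$ is sound; this is essentially the same route as the cited source.
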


The following reductions state that we can realize all $\mathcal{DEQ}$ once we have $\neq_4$ in $\holant{\neq_2}{\widehat{\mathcal{F}}}.$
\begin{lemma}[\cite{cai-fu-shao-eo}]\label{lem-neq4-red}
$\Holant(\mathcal{DEQ}\mid \widehat{\mathcal{F}})\leqslant_T \Holant(\neq_2\mid \mathcal{DEQ},  \widehat{\mathcal{F}})
\leqslant_T \Holant(\neq_2\mid \neq_4,  \widehat{\mathcal{F}}).$ 
\end{lemma}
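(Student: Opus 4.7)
The plan is to prove the two $\leqslant_T$ reductions in sequence, both via short gadget constructions in which $\neq_2$ on the left plays the role of a bit-flip bridge.

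For $\Holant(\mathcal{DEQ}\mid \widehat{\mathcal{F}})\leqslant_T \Holant(\neq_2\mid \mathcal{DEQ}, \widehat{\mathcal{F}})$, my plan is simply to push each $\neq_{2n}$ signature from the left side to the right side. Given an instance $\Omega$, for each left vertex $u$ labeled by some $\neq_{2n}$ I create a right vertex $u'$ labeled by the same $\neq_{2n}$, and between $u'$ and each of its $2n$ original neighbors $v_i$ on the right I insert one new left vertex labeled $\neq_2$. Each inserted $\neq_2$ forces a bit flip on its two incident edges, and because $\neq_{2n}$ only tests disequality between fixed pairs of its variables, flipping both variables of every pair preserves the pairwise constraint. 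Hence the modified instance, which now lives in $\Holant(\neq_2\mid \mathcal{DEQ}, \widehat{\mathcal{F}})$, has exactly the same Holant value as $\Omega$.

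For $\Holant(\neq_2\mid \mathcal{DEQ}, \widehat{\mathcal{F}})\leqslant_T \Holant(\neq_2\mid \neq_4, \widehat{\mathcal{F}})$, my plan is to realize every $\neq_{2n}\in\mathcal{DEQ}$ on the right from $\neq_4$ on the right plus $\neq_2$ on the left, and then replace each offending right vertex of the source instance by its gadget. I would proceed by induction on $n\geqslant 2$, with base case $\neq_4$ already available. For the step $n\mapsto n+1$, I take one right vertex $\neq_4(a, b, c, d)$ and one right vertex $\neq_{2n}(e, f, g_1, h_1, \ldots, g_{n-1}, h_{n-1})$, and place a single left vertex $\neq_2$ whose two edges carry the labels $d$ and $e$. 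Summing over the internal variables $d, e$, the identity $[d\neq e]$ forces $e=1-d$, so $[e\neq f]$ becomes $[d=f]$ and then $[c\neq d][d=f]$ collapses to $[c\neq f]$. The gadget signature is therefore $[a\neq b][c\neq f]\prod_{k=1}^{n-1}[g_k\neq h_k]$, which is $\neq_{2(n+1)}$ on the $2(n+1)$ remaining variables. To obtain $\neq_2$ on the right, I close $\neq_4(a, b, c, d)$ into itself by joining $c$ and $d$ through a single $\neq_2$ on the left; the contraction evaluates to $2\cdot\neq_2(a, b)$, and the scalar $2$ only contributes a global nonzero real constant across the whole instance.

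Neither step involves any deep obstacle. The only items to verify carefully are the double-flip computation justifying the first reduction and the inductive contraction giving the second; routine bookkeeping of the bipartite structure and of the nonzero real scalars produced by the gadgets is sufficient to finish the argument.
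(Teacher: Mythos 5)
Your argument is correct, and it uses the standard gadget constructions: moving a left-side $\neq_{2n}$ to the right across $\neq_2$-bridges (the double-flip $y_i=\bar{x}_i$ restores each pairwise disequality), and chaining $\neq_4$'s together via a left $\neq_2$ to build up $\neq_{2(n+1)}$ from $\neq_{2n}$, with the self-loop on $\neq_4$ giving $2\cdot(\neq_2)$ for the base case. The paper does not reprove this lemma but cites it from~\cite{cai-fu-shao-eo}, and the cited proof uses essentially the same bridge-and-chain gadgets; your only imprecision is calling the factor $2$ a ``global'' constant when it is really $2^k$ for $k$ the number of right $\neq_2$'s, but that is a computable nonzero scalar and the Turing reduction is unaffected.
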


\subsection{A summary of notations}
We use the following Table~\ref{tab:notation} to summarize notations given in this section. 
In the left column, we list notations in $\holant{=_2}{\mathcal{F}}$ where $\mathcal{F}$ is a set of real-valued signatures, and in the right column, we list corresponding notations in $\holant{\neq_2}{\widehat{\mathcal{F}}}$ where $\widehat{\mathcal{F}}=Z^{-1}\mathcal{F}$ is the set of complex-valued signatures with {\sc ars}.
Note that although $\mathcal{EO}$ also satisfies {\sc ars}, we will only use it in $\holant{=_2}{\mathcal{F}}$.
Similarly, we will only use $\mathcal{DEQ}$ and $\mathcal{D}$ in $\holant{\neq_2}{\widehat{\mathcal{F}}}$ although it is real-valued.

\begin{table}[!h]
\renewcommand{\arraystretch}{1.5}
    \centering
    \begin{tabular}{|c|c|}
    \hline
        $\holant{=_2}{\mathcal{F}}$  where $\mathcal{F}$ is real-valued & $\holant{\neq_2}{\widehat{\mathcal{F}}}$ where $\widehat{\mathcal{F}}$ satisfies {\sc ars}\\
        \hline
           $\mathcal{EQ}=\{=_1, =_2, \ldots, =_n,\ldots\}$ & N$/$A \\
       \hline
      N$/$A &  $\mathcal{DEQ}=\{\neq_2, \neq_4, \ldots, \neq_{2n}, \ldots\}$, $\mathcal{D}=\{\neq_2\}$\\
      \hline
     $\mathcal{O}=\{\text{binary orthogonal and zero signatures}\}$    & $\widehat{\mathcal{O}}=\{\text{binary signatures with {\sc ars} and parity}\}$\\
       \hline
       $\mathcal{B}=\{=_2, =_2^-, \neq_2, \neq_2^-\}$ & $\widehat{\mathcal{B}}=\{\neq_2, =_2, (-\ii)\cdot=^-_2, \ii\cdot\neq_2^-\}$\\
       \hline
    a holographic transformation $Q\mathcal{F}$ by $Q\in {\bf O}_2$ & a holographic transformation $\widehat{Q}\widehat{\mathcal{F}}$ by $\widehat{Q}\in \widehat{{\bf O}_2}$ \\
      \hline
      a merging gadget $\partial_{ij}f=f_{ij}^{00}+f_{ij}^{11}$ & 
      a merging gadget $\widehat\partial_{ij}\widehat{f}=\widehat{f}_{ij}^{01}+\widehat{f}_{ij}^{10}$ \\
      \hline
      extending gadgets $\{f\}^{\mathcal{B}}_{=_2}$ with ${\mathcal{B}}$ &  extending gadgets $\{\widehat{f}\}^{\widehat{\mathcal{B}}}_{\neq_2}$  with ${\widehat{\mathcal{B}}}$\\
      \hline
      a mating gadget $\mathfrak{m}_{ij}f=M_{ij}(f)I_2^{\otimes n-1}M^{\tt T}_{ij}(f)$ & a mating gadget $\widehat{\mathfrak{m}}_{ij}\widehat{f}=M_{ij}(\widehat f)N_2^{\otimes n-1}M^{\tt T}_{ij}(\widehat f)$ \\
      \hline
    \end{tabular}
    \caption{Comparisons of notations in $\holant{=_2}{\mathcal{F}}$ and $\holant{\neq_2}{\widehat{\mathcal{F}}}$}
    \label{tab:notation}
\end{table}

Recall that $\mathcal{F}^{\otimes}$ denotes the set $\{\lambda\bigotimes^k_{i=1}f_i\mid \lambda \in \mathbb{R}\backslash\{0\},  k\geqslant 1, f_i\in \mathcal{F}\}$ for any signature set $\mathcal{F}$.
We remark that both $\mathcal{O}^{\otimes}$ and $\widehat{\mathcal{O}}^{\otimes}$ contain all
zero signatures of even arity since the binary zero signature is in $\mathcal{O}$ and $\widehat{\mathcal{O}}$.
However, $\mathcal{B}^{\otimes}$, $\widehat{\mathcal{B}}^{\otimes}$, and $\mathcal{D}^{\otimes}$ do \emph{not} contain any zero signatures.

In the following, without other specifications, we use $f$ to denote a real-valued signature and $\mathcal{F}$ to denote a set of real-valued signatures. 
We use $\widehat{f}$ to denote a signature satisfying {\sc ars} and $\widehat{\mathcal{F}}$ to denote a set of such signatures. 
We use $Q$ to denote a matrix in ${\bf O}_2$, and $\widehat{Q}$ to denote a matrix in $\widehat{{\bf O}_2}$.
Clearly, if $\mathcal{F}$ is real-valued, then $Q\mathcal{F}$ is also real-valued.
Equivalently, if $\widehat{\mathcal{F}}$ satisfies {\sc ars}, then $\widehat{Q}\widehat{\mathcal{F}}=\widehat{Q\mathcal{F}}$ also satisfies {\sc ars}.
\section{Proof Organization}\label{sec:proof-outline}
By Theorem \ref{thm-main-thr}, if  $\mathcal F$ satisfies  condition \rm{(\ref{main-thr})}, then $\Holant(\mathcal F)$ is P-time
computable. 
So, we only need to prove that $\Holant(\mathcal F)$ is \#P-hard when $\mathcal F$ does not satisfy  condition \rm{(\ref{main-thr})}. 
If $\mathcal{F}$ contains a nonzero signature of odd arity,
then by Theorem~\ref{odd-dic}, we are done.
In the following without other specifications, when refer to a real-valued signature set $\mathcal{F}$ or a corresponding signature set $\widehat{\mathcal{F}}=Z^{-1}\mathcal{F}$ satisfying {\sc ars}, we always assume that  they  consist of signatures of even arity, and $\mathcal{F}$ does not satisfy  condition \rm{(\ref{main-thr})}. 

In Section \ref{sec-first-second}, we generalize the notion of first order orthogonality ({\sc 1st-Orth}) to second order orthogonality ({\sc 2nd-Orth}).
This property plays a key role in our proof.
We show that all irreducible signatures in $\mathcal{F}$ satisfy {\sc 2nd-Orth}, or else, we get  \#P-hardness based on results of \#CSP problems,  \#EO problems and eight-vertex models (Lemma \ref{second-ortho}).
We derive some consequences from  the condition {\sc 2nd-Orth}
 for signatures with {\sc ars}. 
These will be  used throughout in the proof.

In Section \ref{sec-induction}, we give the induction framework of the proof. 
Since $\mathcal{F}$ does not satisfy  condition \rm{(\ref{main-thr})}, $\mathcal{F}\not\subseteq \mathscr{T}$. 
Also since $\mathcal{O}^\otimes\subseteq\mathscr{T}$, and by Lemma~\ref{lem-2-ary}, we may assume that $\mathcal{F}$ contains a signature $f$ of arity $2n\geqslant 4$ where $f\notin \mathcal{O}^\otimes$.
We want to achieve a proof of  \#P-hardness by induction on $2n$.
When $2n=2$,
as a corollary of {\sc 1st-Orth}, we show that $\Holant(\mathcal{F})$ is \#P-hard  (Lemma~\ref{lem-2-ary}).
When $2n=4$, by {\sc 2nd-Orth}, we show that $\Holant(\mathcal{F})$ is \#P-hard   (Lemma \ref{lem-4-ary}).

In Sections \ref{sec-f6} and \ref{sec-holantb}, we handle the case of arity 6.
Let $f\notin \mathcal{O}^{\otimes}$ be a 6-ary signature in $\mathcal{F}$.
We show that $\Holant(\mathcal{F})$ is \#P-hard or the extraordinary signature which we named $f_6$ with the Bell property can be realized (Theorem \ref{lem-6-ary-f6}). 
By gadget construction, all four Bell signatures $\mathcal{B}$ can be realized from $f_6$.
Then we prove the \#P-hardness of $\Holantb(f_6, \mathcal{F})=\Holant(\mathcal{B}, f_6, \mathcal{F})$ (Theorem \ref{thm-holantb} and Lemma~\ref{lem-f6-b-hardness}).
Combining these two results,  we have $\Holant(\mathcal{F})$ is \#P-hard (Lemma~\ref{lem-6-ary}).

In Section \ref{sec-f8}, we handle the case of arity $8$.
Let $f\notin \mathcal{O}^{\otimes}$ be an 8-ary signature in $\mathcal{F}$.
We show that $\Holant(\mathcal{F})$ is \#P-hard or another extraordinary signature which we named  $f_8$ with the strong Bell property can be realized (Theorem \ref{thm-f8}). 
One can prove that $\mathcal{B}$ cannot be realized from $f_8$ by gadget construction. 
However, by introducing Holant problems with limited appearance and using the strong Bell property of $f_8$, we show $\Holantb( f_8, \mathcal{F})\leq_T \Holant(f_8, \mathcal{F})$ (Lemmas \ref{lem-0to1}). 
Then, we prove the \#P-hardness of $\Holantb(f_8, \mathcal{F})$.
Combining these results,  we have $\Holant(\mathcal{F})$ is \#P-hard (Lemma~\ref{lem-8-ary}).

In Section \ref{sec-10-ary}, we show that our induction proof works for signatures of arity $2n\geqslant 10$.
Let $f\notin \mathcal{O}^{\otimes}$ be a $2n$-ary $(2n\geqslant 10)$ signature in $\mathcal{F}$.
Then, $\Holant(\mathcal{F})$ is \#P-hard or we can realize a signature of arity less than $2n$ that is not in  $\mathcal{O}^\otimes$ (Lemma~\ref{lem-10-ary}).
Then, by a sequence of reductions of length independent of the problem instance size, we can eventually realize a signature of arity 
at most $8$ that is not in  $\mathcal{O}^\otimes$.
Finally, combining Lemmas \ref{lem-2-ary}, \ref{lem-4-ary}, \ref{lem-6-ary}, \ref{lem-8-ary} and \ref{lem-10-ary}, we finish the proof of Theorem \ref{main-theorem}.
In the actual proof, for convenience, many results are proved in the setting of $\holant{\neq}{\widehat{\mathcal{F}}}$ which is equivalent to $\Holant(\mathcal{F})$ under the $Z^{-1}$ transformation.

 \section{Second Order Orthogonality}\label{sec-first-second}
 In this section, we generalize the notion of first order orthogonality ({\sc 1st-Orth}) to second order orthogonality ({\sc 2nd-Orth}) (Definition \ref{def:second-order-othor}).
We show that for real-valued $\mathcal{F}$ that does not satisfy condition (\ref{main-thr}), every irreducible $f\in \mathcal{F}$ of arity at least $4$ satisfies {\sc 2nd-Orth}, or otherwise $\Holant(\mathcal{F})$ is \#P-hard (Lemma~\ref{second-ortho}).
Then, we derive some consequences from  the condition {\sc 2nd-Orth}
 for signatures with {\sc ars}. 
These will be  used throughout in the following proof.

  \begin{definition}[Second order orthogonality]\label{def:second-order-othor}
       Let $f$ be a complex-valued signature of arity $n \geqslant 4$. It satisfies the \emph{second order orthogonality ({\sc 2nd-Orth})} if there exists some $\lambda\neq 0$ such that for all pairs of indices $\{i, j\}\subseteq [n]$, the entries of $f$ satisfy 
\begin{equation*}
    |{\bf f}_{ij}^{00}|^2=|{\bf f}_{ij}^{01}|^2=|{\bf f}_{ij}^{10}|^2=|{\bf f}_{ij}^{11}|^2=\lambda, 
\text{ ~~~~\rm and }~~~~ \langle{\bf f}_{ij}^{ab}, {\bf f}_{ij}^{cd}\rangle =0 ~~~~\text{\rm  for  all } (a, b)\neq (c, d).
\end{equation*}
    \end{definition}
  \begin{remark}
  Similar to the remark of first order orthogonality (Definition~\ref{def-first-order}),
    $f$ satisfies {\sc 2nd-Orth} iff there is some $\lambda \not =0$ such that for all $(i, j)$, $M(\mathfrak{m}_{ij}f)= \lambda I_4=\lambda I_2^{\otimes 2}$, and
    $\widehat{f}$ satisfies {\sc 2nd-Orth} iff there is some $\lambda \not =0$ such that for all $(i, j)$, $M(\widehat{\mathfrak{m}}_{ij}\widehat{f})= \lambda N_4=\lambda N_2^{\otimes 2}$.
    Moreover, $f$ satisfies {\sc 2nd-Orth} iff $\widehat{f}$ satisfies it.
    Clearly, {\sc 2nd-Orth} implies {\sc 1st-Orth}.
    \end{remark}
  In the next, we will prove Lemma~\ref{second-ortho} based on dichotomies of \#CSP problems,  \#EO problems and eight-vertex models. 
Since \#EO problems and eight-vertex models are defined as special cases of the problem $\Holant(\neq_2 \mid \widehat{\mathcal F})$,
for convenience, 
we will consider the problem $\Holant(\neq_2 \mid \widehat{\mathcal F})$ which is equivalent to  $\Holant(\mathcal{F})$. 
Recall that $\widehat{\mathcal{F}}=Z^{-1}{\mathcal{F}}$ satisfies {\sc ars}, and we assumed that $\mathcal{F}$ does not satisfy condition (\ref{main-thr}). 
We first give the following lemma.

\begin{lemma}\label{lem-neq-4-hard}
$\Holant(\mathcal{DEQ} \mid \widehat{\mathcal{F}})$ is \#P-hard.
\end{lemma}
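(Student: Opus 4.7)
The approach is to leverage the dichotomy for \#EO problems (Theorem \ref{eo}). Since $\widehat{\mathcal{F}}$ satisfies \textsc{ars} but its signatures need not be EO, the plan is to first isolate the EO parts. For each $\widehat{f}\in\widehat{\mathcal{F}}$ of arity $2n$, define its EO restriction $\widehat{f}^{eo}$ by zeroing out all entries at non-half-weight inputs. Because $\widehat{f}$ satisfies \textsc{ars} and the half-weight support is closed under bit-wise complementation, each $\widehat{f}^{eo}$ inherits \textsc{ars} and is an EO signature, so $\widehat{\mathcal{F}}^{eo}$ is a legitimate input to Theorem \ref{eo}.

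The first technical step is to establish the reduction $\Holant(\mathcal{DEQ}\mid \widehat{\mathcal{F}}^{eo})\leqslant_T \Holant(\mathcal{DEQ}\mid \widehat{\mathcal{F}})$. The idea is to simulate each $\widehat{f}^{eo}$-vertex of arity $2n$ inside the richer problem by a gadget that uses one (or two) copies of $\widehat{f}$ together with an auxiliary $\neq_{2n}$-vertex on the left, where the $\neq_{2n}$ enforces the half-weight constraint on $\widehat{f}$'s inputs while $\neq_2$-wires (also from $\mathcal{DEQ}$) carry the $2n$ arguments out to dangling edges. By \textsc{ars} one can reconcile the complementation introduced by the $\neq_2$-wires, ultimately realizing $\widehat{f}^{eo}$ up to a nonzero scalar.

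Given this reduction, Theorem \ref{eo} yields that $\Holant(\mathcal{DEQ}\mid \widehat{\mathcal{F}}^{eo})$ is \#P-hard unless $\widehat{\mathcal{F}}^{eo}\subseteq \mathscr{P}$ or $\widehat{\mathcal{F}}^{eo}\subseteq \mathscr{A}$. To rule out these two exceptional cases, I would invoke Lemma \ref{lem-hard-sign}, which asserts $\widehat{\mathcal{F}}\not\subseteq\mathscr{P}$ and $\widehat{\mathcal{F}}\not\subseteq\mathscr{A}$, and then argue that the tractability of the EO restriction propagates to the full signature under \textsc{ars}: any $\widehat{f}$ satisfying \textsc{ars} whose half-weight part lies in $\mathscr{P}$ (respectively $\mathscr{A}$) must itself lie in the same tractable class, yielding the required contradiction.

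The main obstacle is verifying the gadget construction in the first step: realizing $\widehat{f}^{eo}$ using only $\mathcal{DEQ}$ on the left and $\widehat{f}$ on the right, with exactly $2n$ dangling edges, is delicate because attaching all $2n$ edges of a single copy of $\widehat{f}$ to one $\neq_{2n}$ leaves no dangling edges, while using $2n$ separate $\neq_2$-wires enforces no half-weight constraint. If this construction resists direct realization, the fallback plan is to reduce \#P-hard $\CSP_2(\mathcal{F})$ (from Theorem \ref{hard-result}) to $\Holant(\mathcal{DEQ}\mid\widehat{\mathcal{F}})$ by simulating each $=_{2k}$ via a combination of signatures in $\mathcal{DEQ}$ and the \textsc{ars} structure of $\widehat{\mathcal{F}}$, after translating $\CSP_2(\mathcal{F})\equiv_T \Holant(\mathcal{EQ}_2\mid\mathcal{F})$ into the bipartite $\widehat{\,}$-frame.
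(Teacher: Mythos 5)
Your proposal has two genuine gaps, and the second one is fatal.

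\textbf{Gap 1: the gadget realizing $\widehat{f}^{eo}$ does not exist.} You correctly identify the tension: attaching all $2n$ edges of a copy of $\widehat{f}$ to a single $\neq_{2n}$ yields a closed scalar with no dangling edges, while separate $\neq_2$ wires impose no half-weight constraint. There is no resolution here: in the bipartite problem $\Holant(\mathcal{DEQ}\mid\widehat{\mathcal{F}})$, a RHS gadget must have its dangling edges emanate from RHS vertices, so every edge of $\widehat{f}$ that is wired to a LHS $\neq_{2n}$ vertex disappears. No arrangement of $\mathcal{DEQ}$ signatures around one (or several) copies of $\widehat{f}$ produces the entrywise truncation $\widehat{f}^{eo}$, because disequality signatures only permute/complement coordinates and cannot implement a Hamming-weight filter on an open set of dangling variables.

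\textbf{Gap 2: tractability of the EO part does not propagate to the full signature.} Your final reduction step asserts that if $\widehat{f}^{eo}\in\mathscr{P}$ (resp.\ $\mathscr{A}$) then $\widehat{f}\in\mathscr{P}$ (resp.\ $\mathscr{A}$), and hence $\widehat{\mathcal{F}}\not\subseteq\mathscr{P},\mathscr{A}$ implies the same for $\widehat{\mathcal{F}}^{eo}$. This is false. \textsc{Ars} only pairs each entry with its complement; it places no constraint linking the half-weight entries to the rest. For example, take a $4$-ary $\widehat{f}$ with $\widehat{f}(0011)=\widehat{f}(1100)=1$, $\widehat{f}(\vec{0}^4)=\widehat{f}(\vec{1}^4)=1+\sqrt{2}$, and $0$ elsewhere. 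This satisfies \textsc{ars} and is real-valued, $\widehat{f}^{eo}$ has affine support $\{0011,1100\}$ and uniform entries so $\widehat{f}^{eo}\in\mathscr{A}\cap\mathscr{P}$, yet $\widehat{f}$ has entries of two distinct norms, so $\widehat{f}\notin\mathscr{A}$ and one checks it is not of product type either. Even the degenerate case $\widehat{f}^{eo}\equiv 0$ (a signature supported only on $\{\vec{0},\vec{1}\}$) is vacuously in both $\mathscr{P}$ and $\mathscr{A}$ while $\widehat{f}$ can be anything. So the argument does not close.

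The paper's proof sidesteps both issues by splitting differently. If every $\widehat{f}\in\widehat{\mathcal{F}}$ is already EO, Theorem~\ref{eo} applies directly (no truncation needed). Otherwise there is a non-EO $\widehat{f}$; via Lemma~\ref{lem-zero_2} a chain of $\widehat{\partial}_{ij}$ merges, applied to the $\alpha$ of sub-half Hamming weight in $\mathscr{S}(\widehat{f})$, produces a $2m$-ary signature $\widehat{g}$ with $\widehat{g}(\vec{0})=a\neq 0$. Wiring all $2m$ variables of $\widehat{g}$ into $m$ matched pairs of a $\neq_{4m}$ then yields $\widehat{h}$ supported exactly on $\{\vec{0},\vec{1}\}$ with values $a,\bar a$, and a diagonal holographic transformation $\widehat{Q}\in\widehat{\bf O}_2$ (which fixes $\mathcal{DEQ}$) turns $\widehat{h}$ into $=_{2m}$. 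Now $\Holant(\mathcal{DEQ}\mid=_{2m},\widehat{Q}\widehat{\mathcal{F}})$ is \#P-hard by Theorem~\ref{hard-result}, since $\widehat{Q}\widehat{\mathcal{F}}=\widehat{Q\mathcal{F}}$ with $Q\in{\bf O}_2$ and $Q\mathcal{F}$ still violates condition~(\ref{main-thr}) by Lemma~\ref{lem-hard-sign}. Your fallback sketch gestures toward this but does not supply the crucial ingredients: the existence of a non-EO signature when the EO case fails, the chain of merges to reach $\vec{0}$ in the support, and the normalizing holographic transformation that preserves $\mathcal{DEQ}$. Without those, the fallback is not a proof.
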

\begin{proof}
Since $\mathcal{F}$ does not satisfy condition (\ref{main-thr}), by Lemma~\ref{lem-hard-sign}, $\widehat{\mathcal{F}}\not\subseteq \mathscr{P}$ and $\widehat{\mathcal{F}}\not\subseteq \mathscr{A}$.
If $\widehat{\mathcal{F}}$ is a set of EO signatures, 
then by Theorems \ref{eo}, 
 $\Holant(\mathcal{DEQ} \mid \widehat{\mathcal{F}})$ is \#P-hard since $\widehat{\mathcal{F}}\not\subseteq\mathscr{P}$ and $\widehat{\mathcal{F}}\not\subseteq\mathscr{A}$.
Thus, we may assume that there is a signature $\widehat f \in \widehat{ \mathcal{F}}$ whose support is not half-weighted. 
Suppose that $\widehat f$ has arity $2n$.
Since $\mathscr{S}(\widehat{f})\not\subseteq \mathscr{H}_{2n}$, by {\sc ars}, there is an $\alpha \in \mathbb{Z}_2^{2n}$ with ${\rm wt}(\alpha)=k<n$ such that 
${\widehat f}(\alpha) \neq 0$.
We first show that we can realize a signature ${\widehat g}$ of arity $2n-2k$ such that ${\widehat g}({\vec{0}})\neq 0$. 
If ${\rm wt}(\alpha)=k=0$, then we are done. Otherwise, we have $n>k\geqslant 1$. Thus, $2n\geqslant 4$ and
$\alpha$ has length at least $4$. 
By Lemma \ref{lem-zero_2}, 
there is a pair of indices $\{i, j\}$ such that $\widehat{\partial}_{ij}\widehat{f}({\beta})\neq 0$ for some ${\rm wt}(\beta)=k-1$. 
Clearly, $\widehat{\partial}_{ij}\widehat{f}$ has arity $2n-2$.
Since $0 \leqslant k-1
< (2n-2)/2$,
$\widehat{\partial}_{ij}\widehat{f}$ is not an EO signature. 
Now we can continue this process, and by a chain of merging gadgets using $\neq_2$, 
we can realize a signature $\widehat g$ of arity $2m=2n-2k$ such that  ${\widehat g}({\vec{0}})
\neq 0$.
Denote by $a =
{\widehat g}({\vec{0}})$. 

Then, we connect all $2m$  variables of $\widehat g$ with $2m$  variables  of $\neq_{4m}$ that always take the same value in $\mathscr{S}(\neq_{4m})$  using $\neq_2$.
We get a signature $\widehat h$ of arity $2m$ where
$\widehat h(\vec{0})=a$, $\widehat h(\vec{1})=\bar a$ by {\sc ars}, and $\widehat{h}(\gamma)=0$ elsewhere. 
Then, consider the holographic transformation by $\widehat Q=
\left[\begin{smallmatrix}
\sqrt[2m]{\bar a} & 0\\
0 & \sqrt[2m]{ a}\\
\end{smallmatrix}
\right] \in \widehat{\rm{O}}_2$. 
It transforms $\widehat{h}$ to $\neq_{2m}$, but does not change $\mathcal{DEQ}$. 
Thus,
$$\Holant(\mathcal{DEQ} \mid \widehat{h}, \widehat{\mathcal{F}})\equiv_T \Holant(\mathcal{DEQ} \mid =_{2m}, \widehat Q\widehat{\mathcal{F}}).$$
Since $\widehat{\mathcal{F}}$ does not satisfy condition (\ref{main-thr}), by Theorem~\ref{hard-result}, $\Holant(\mathcal{DEQ} \mid =_{2m}, \widehat Q\widehat{\mathcal{F}})$ is \#P-hard.
Thus, $\Holant(\mathcal{DEQ} \mid \widehat{\mathcal{F}})$ is \#P-hard.
\end{proof}

  
  Then, we consider signatures $\widehat{\mathfrak{m}}_{ij}\widehat{f}$ realized by mating.   
  
 \begin{lemma}\label{lem-4.5}
 Let $\widehat f\in \widehat{\mathcal{F}}$ be a  signature of arity $2n\geqslant 4$. Then, 
 \begin{itemize}

 \item $\holant{\neq_2}{\widehat{\mathcal{F}}}$ is \#P-hard, or
      \item for all pairs of indices $\{i, j\}$, there exists a nonzero binary signature $\widehat{b}_{ij}\in \widehat{\mathcal{O}}$ such that $\widehat{b}_{ij}(x_i, x_j)\mid \widehat{f}$ or $M(\widehat{\mathfrak{m}}_{ij}\widehat{f})= \lambda_{ij} N_4$ for some real $\lambda_{ij}\neq 0$.
 \end{itemize}
 \end{lemma}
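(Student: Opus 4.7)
For a fixed pair $\{i,j\}$, the plan is to analyze the $4$-ary signature $\widehat{g} = \widehat{\mathfrak{m}}_{ij}\widehat{f}$ realized by the mating gadget. Writing $F = M_{ij}(\widehat{f})$ for the $4 \times 2^{2n-2}$ matrix whose rows are $\widehat{\mathbf{f}}_{ij}^{ab}$, the {\sc ars} of $\widehat{f}$ lets one rewrite the middle factor $N_2^{\otimes 2n-2}$ as conjugation (exactly as in equation~(\ref{eqn:ars})), yielding $M(\widehat{g}) = F F^{\ast} N_4$, where $F^{\ast}$ is the conjugate transpose. The target alternative $M(\widehat{g}) = \lambda_{ij} N_4$ is therefore equivalent to $F F^{\ast} = \lambda_{ij} I_4$, i.e., the four rows of $F$ are pairwise Hermitian-orthogonal of common positive squared norm. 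Before the case analysis I would apply Lemma~\ref{even-first-order} to the real-valued realized signature $g = Z\widehat{g}$: if $g$ fails {\sc 1st-Orth}, hardness follows; so henceforth assume $g$ satisfies {\sc 1st-Orth}, equivalently $M(\widehat{\mathfrak{m}}_k \widehat{g}) = \mu N_2$ for all $k \in [4]$ and a common $\mu > 0$, a condition that strongly constrains $F F^{\ast}$.

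Next I would split on $\operatorname{rank}(F F^{\ast})$. When the rank is $1$, all rows of $F$ are scalar multiples of a single vector $\widehat{\mathbf{h}}$, giving a tensor factorization $\widehat{f} = \widehat{b}(x_i, x_j) \otimes \widehat{h}$ with $\widehat{b}$ inheriting {\sc ars}. If the binary $\widehat{b}$ has parity then $\widehat{b} \in \widehat{\mathcal{O}}$ and we are in the first alternative of the lemma. If $\widehat{b}$ lacks parity, then by Lemma~\ref{q-parity} its real counterpart $b = Z\widehat{b}$ is non-orthogonal, so $M(b) M(b)^{T}$ is a real symmetric PSD matrix with two unequal non-negative eigenvalues; Lemma~\ref{2by2-interpolation} then interpolates a rank-$1$ binary signature from $b$, which factors into unaries via Lemma~\ref{lin-wang}, and Theorem~\ref{odd-dic} supplies hardness.

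When $\operatorname{rank}(F F^{\ast}) \geqslant 2$ but $F F^{\ast} \neq \lambda_{ij} I_4$, the signature $\widehat{g}$ is a $4$-ary {\sc ars} signature whose matrix is not a scalar multiple of $N_4$. Combining the {\sc 1st-Orth} of $g$, which forces a specific zero pattern in the single-variable matings of $\widehat{g}$, together with the {\sc ars}, I would argue that $M(\widehat{g})$ is forced into the eight-vertex normal form; the assumption $F F^{\ast} \neq \lambda_{ij} I_4$ then places $\widehat{g}$ into one of the three \#P-hard cases of Theorem~\ref{eight-vertex} --- support $6$, support $4$ with unequal nonzero norms, or support $8$ with unequal positive entries --- giving hardness via the trivial reduction $\holant{\neq_2}{\widehat{g}} \leqslant_T \holant{\neq_2}{\widehat{\mathcal{F}}}$. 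In the parallel subcase where $\widehat{g}$ turns out to be EO but neither in $\mathscr{P}$ nor $\mathscr{A}$ (forced because $\widehat{\mathcal{F}} \not\subseteq \mathscr{P}$ and $\widehat{\mathcal{F}} \not\subseteq \mathscr{A}$ by Lemma~\ref{lem-hard-sign}), Theorem~\ref{eo} combined with Lemma~\ref{lem-neq-4-hard} gives hardness instead.

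The main obstacle is the rank $\geqslant 2$ case: pinning down the joint consequence of {\sc ars} and the {\sc 1st-Orth} of $g$ on the sparsity and norm pattern of $F F^{\ast}$, and showing that any deviation from $\lambda_{ij} I_4$ forces one of the hardness normal forms of Theorem~\ref{eight-vertex} or the EO dichotomy in Theorem~\ref{eo}. The rank-$1$ branch and the Lemma~\ref{even-first-order} reduction are essentially routine, but the intermediate-rank analysis requires a careful matrix-level bookkeeping tying together Hermitian positive semidefiniteness, the $N_4$-twisted structure of $M(\widehat{g})$, and the available hardness templates.
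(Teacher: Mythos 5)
The core move in the paper's proof is to apply Lemma~\ref{even-first-order} to $f$ itself (not to the mated signature $g$), and then deploy the resulting condition $\langle\widehat{\bf f}_i^0,\widehat{\bf f}_i^1\rangle=0$ by partitioning the inner-product sum according to the value of $x_j$ and invoking {\sc ars}: this yields $\langle\widehat{\bf f}_{ij}^{00},\widehat{\bf f}_{ij}^{10}\rangle=0$ and its three siblings, which is precisely what forces all eight off-parity entries of $M(\widehat{\mathfrak m}_{ij}\widehat f)=FF^{*}N_4$ to vanish and puts $\widehat{\mathfrak m}_{ij}\widehat f$ into eight-vertex form. You instead propose to apply Lemma~\ref{even-first-order} to $g=\mathfrak m_{ij}f$, and then assert that {\sc 1st-Orth} of $g$ ``together with the {\sc ars}'' forces $M(\widehat g)$ into the eight-vertex normal form. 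That inference is not justified and I do not believe it holds: {\sc 1st-Orth} of $\widehat g$ gives only four complex orthogonality equations $\langle\widehat{\bf g}_k^0,\widehat{\bf g}_k^1\rangle=0$ (plus norm equalities), which constrain the single-index matings of $\widehat g$ but do not translate into the annihilation of the eight odd-weight entries $\langle\widehat{\bf f}_{ij}^{ab},\widehat{\bf f}_{ij}^{cd}\rangle$ of $M(\widehat g)$. Even with the additional PSD structure of $FF^{*}$ those entries can remain nonzero. So the rank-$\geqslant 2$ branch, which you yourself flag as the main obstacle, is exactly where the proof breaks down.

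A secondary omission: you enumerate only the three \#P-hard cases of Theorem~\ref{eight-vertex} (support $6$, support $4$ with unequal norms, support $8$ positive unequal), but there is a fourth tractable-looking shape you must exclude, namely $M(\widehat g)$ having support of size exactly $2$ (one conjugate pair of nonzero entries). This is neither $\lambda_{ij}N_4$ nor one of the hard eight-vertex cases, yet it is also not the factorization case. The paper handles it by observing that in this situation the mating realizes $\neq_4$ up to a scalar, then invokes Lemma~\ref{lem-neq4-red} together with Lemma~\ref{lem-neq-4-hard} (where the EO dichotomy Theorem~\ref{eo} is actually used) to obtain hardness. Your mention of the EO dichotomy is not attached to any case of the proof and does not account for support $2$. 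Also note that Lemma~\ref{lem-4-ary} cannot be used as a fallback here since its proof depends on Lemma~\ref{second-ortho}, which in turn depends on the statement under discussion. Your rank-$1$ branch, including the parity/non-parity split via Lemmas~\ref{q-parity}, \ref{2by2-interpolation}, \ref{lin-wang} and Theorem~\ref{odd-dic}, is fine, but the lemma as a whole does not go through without the correct source of the eight-vertex zero pattern and without the support-$2$ case.
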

  
  \begin{proof}
  If $\widehat{f}\equiv 0$,  then the lemma holds trivially since for all $\{i, j\}$ and any $\widehat{b}_{ij}\neq 0$, $\widehat{b}_{ij}(x_i, x_j)\mid \widehat{f}$.
  Thus, we may assume that $f\not\equiv 0$.
  
  If $\widehat{f}$ does not satisfy {\sc 1st-Orth}, then $f$ does not satisfy it. By Lemma~\ref{even-first-order}, $\holant{\neq_2}{\widehat{\mathcal{F}}}\equiv_T\holant{=_2}{\mathcal{F}}$ is \#P-hard. 
Thus, we may assume that $\widehat{f}$ satisfies {\sc 1st-Orth}. 
Then, for all indices $i$, we have 
$$M(\widehat{\mathfrak{m}}_i\widehat{f})
=\left[\begin{matrix}
\langle \widehat{{\bf f}}_i^0, \widehat{{\bf f}}_i^1 \rangle & |\widehat{{\bf f}}_i^0|^2\\
|\widehat{{\bf f}}_i^1|^2 & \langle \widehat{{\bf f}}_i^1, \widehat{{\bf f}}_i^0 \rangle
\end{matrix}\right]=
\mu\left[\begin{matrix}
0 & 1 \\
 1 & 0
\end{matrix}\right].
$$
For any variable $x_i$, 
we may take another variable $x_j$ $(j\neq i)$ and partition the sum in the inner product $\langle \widehat{{\bf f}}_i^0, \widehat{{\bf f}}_i^1\rangle=0$ into two sums depending on whether $x_j=0$ or $1$.
Also, by {\sc ars} we have $$\langle \widehat{{\bf f}}_i^0, \widehat{{\bf f}}_i^1 \rangle=\langle \widehat{{\bf f}}_{ij}^{00}, \widehat{{\bf f}}_{ij}^{10} \rangle+\langle \widehat{{\bf f}}_{ij}^{01}, \widehat{{\bf f}}_{ij}^{11}\rangle = \langle \widehat{{\bf f}}_{ij}^{00}, \widehat{{\bf f}}_{ij}^{10} \rangle+\langle \overline{\widehat{{\bf f}}_{ij}^{10}}, \overline{\widehat{{\bf f}}_{ij}^{00}}\rangle= 2 \langle \widehat{{\bf f}}_{ij}^{00}, \widehat{{\bf f}}_{ij}^{10} \rangle =0.$$
Thus, for all pairs of  indices $\{i, j\}$,  $\langle \widehat{{\bf f}}_{ij}^{00}, \widehat{{\bf f}}_{ij}^{10} \rangle =0$ and $\langle \widehat{{\bf f}}_{ij}^{01}, \widehat{{\bf f}}_{ij}^{11} \rangle =0$. 
(Note that by exchanging $i$ and $j$ we also  have $\langle \widehat{{\bf f}}_{ij}^{00}, \widehat{{\bf f}}_{ij}^{01} \rangle =0$ and $\langle \widehat{{\bf f}}_{ij}^{10}, \widehat{{\bf f}}_{ij}^{11} \rangle =0$.)
Also by {\sc ars},
we have $|{\widehat{{\bf f}}}_{ij}^{00}|^2=|{\overline{\widehat{{\bf f}}_{ij}^{11}}}|^2=|{\widehat{{\bf f}}}_{ij}^{11}|^2$ and $|{\widehat{{\bf f}}}_{ij}^{01}|^2=|\overline{{\widehat{{\bf f}}}_{ij}^{10}}|^2=|{\widehat{{\bf f}}}_{ij}^{10}|^2.$

Now, consider $\widehat{\mathfrak{m}}_{ij}\widehat{f}$ for all pairs of indices $\{i, j\}$. 
$$M(\widehat{\mathfrak m}_{ij}\widehat f)=\begin{bmatrix}
{\widehat{{\bf f}}}_{ij}^{00}\\
{\widehat{{\bf f}}}_{ij}^{01}\\
{\widehat{{\bf f}}}_{ij}^{10}\\
{\widehat{{\bf f}}}_{ij}^{11}\\
\end{bmatrix}
\left[\begin{matrix}
{\overline{{\widehat{{\bf f}}}_{ij}^{11}}}^{\tt T} &{\overline{{\widehat{{\bf f}}}_{ij}^{10}}}^{\tt T}
&{\overline{{\widehat{{\bf f}}}_{ij}^{01}}}^{\tt T}&{\overline{{\widehat{{\bf f}}}_{ij}^{00}}}^{\tt T}
\end{matrix}\right]=
\left[\begin{matrix}
\langle{\widehat{{\bf f}}}_{ij}^{00}, {\widehat{{\bf f}}}_{ij}^{11}\rangle & 0 & 0& |{\widehat{{\bf f}}}_{ij}^{00}|^2\\
0 & \langle{\widehat{{\bf f}}}_{ij}^{01}, {\widehat{{\bf f}}}_{ij}^{10}\rangle & |{\widehat{{\bf f}}}_{ij}^{01}|^2 & 0\\
0 & |{\widehat{{\bf f}}}_{ij}^{10}|^2 & \langle{\widehat{{\bf f}}}_{ij}^{10}, {\widehat{{\bf f}}}_{ij}^{01}\rangle  & 0\\
|{\widehat{{\bf f}}}_{ij}^{11}|^2 & 0 & 0 & \langle{\widehat{{\bf f}}}_{ij}^{11}, {\widehat{{\bf f}}}_{ij}^{00}\rangle
\end{matrix}\right].$$
Note that $|\langle{\widehat{{\bf f}}}_{ij}^{00}, {\widehat{{\bf f}}}_{ij}^{11}\rangle|\leqslant |{\widehat{{\bf f}}}_{ij}^{00}| \cdot |{\widehat{{\bf f}}}_{ij}^{11}|$ by Cauchy-Schwarz inequality.
Clearly, $\widehat{\mathfrak{m}}_{ij}\widehat{f}$ has even parity, and thus it represents a signature of the eight-vertex model.
If there exists a pair of indices $\{i, j\}$ such that $\Holant(\neq_2 \mid \widehat{\mathfrak{m}}_{ij}\widehat f)$ is \#P-hard, then we are done
since $\Holant(\neq_2 \mid \widehat{\mathfrak{m}}_{ij}\widehat f)\leqslant_T \Holant(\neq_2 \mid \widehat{\mathcal{F}})$.
Thus, we may assume all
$\widehat{\mathfrak{m}}_{ij}\widehat f$ belong to the tractable family for eight-vertex models.
Clearly, by observing its
antidiagonal entries
of the matrix $M(\widehat{\mathfrak{m}}_{ij}\widehat{f})$, we have $\widehat{\mathfrak{m}}_{ij}\widehat{f} \not\equiv 0$ since $\widehat f\not \equiv 0$.
By Theorem \ref{eight-vertex}, there are three possible cases.

\begin{itemize}
    \item There exists a pair $\{i, j\}$ such that $\widehat{\mathfrak{m}}_{ij}\widehat f$ has support of size $2$. By Cauchy-Schwarz inequality,  $ M(\widehat{\mathfrak{m}}_{ij}\widehat f)$ is either of the form $\lambda_{ij} \left[\begin{smallmatrix}
    0 & 0 & 0 & 1\\
     0 & 0 & 0 & 0\\
      0 & 0 & 0 & 0\\
       1 & 0 & 0 & 0\\
    \end{smallmatrix}\right]$ where $\lambda_{ij} = |{\widehat{{\bf f}}}_{ij}^{00}|^2=|{\widehat{{\bf f}}}_{ij}^{11}|^2\neq 0$ or $\lambda_{ij} \left[\begin{smallmatrix}
    0 & 0 & 0 & 0\\
     0 & 0 & 1 & 0\\
      0 & 1 & 0 & 0\\
       0 & 0 & 0 & 0\\
    \end{smallmatrix}\right]$ where $ \lambda_{ij}= |{\widehat{{\bf f}}}_{ij}^{01}|=|{\widehat{{\bf f}}}_{ij}^{10}| \neq 0$.
     In both cases, $\neq_4$ is realizable since $\lambda_{ij} \not =0$.
    The form that $\langle\widehat{\bf f}_{ij}^{01}, \widehat{\bf f}_{ij}^{10}\rangle\neq 0$ while $|\widehat{\bf f}_{ij}^{01}|^2=|\widehat{\bf f}_{ij}^{10}|^2=0$  cannot occur since $|\langle\widehat{\bf f}_{ij}^{01}, \widehat{\bf f}_{ij}^{10}\rangle|\leqslant |\widehat{\bf f}_{ij}^{01}||\widehat{\bf f}_{ij}^{10}|$. Also, the form that $\langle\widehat{\bf f}_{ij}^{00}, \widehat{\bf f}_{ij}^{11}\rangle\neq 0$ while $|\widehat{\bf f}_{ij}^{00}|^2=|\widehat{\bf f}_{ij}^{11}|^2=0$  cannot occur.
    Since $\neq_4$ is available, by Lemma~\ref{lem-neq4-red}, $\holant{\mathcal{DEQ}}{\widehat{\mathcal{F}}}\leqslant_T \holant{=_2}{\widehat{\mathcal{F}}}.$
    By Lemma~\ref{lem-neq-4-hard}, $ \holant{=_2}{\widehat{\mathcal{F}}}$ is \#P-hard.
    \item There exists a pair $\{i, j\}$ such that $\widehat{\mathfrak{m}}_{ij}\widehat f$ has support of size $8$. 
    We can rename the four variables of $\widehat{\mathfrak{m}}_{ij}\widehat f$ in a cyclic permutation.
    We use $\widehat{g}$ to denote this signature.
    Then $M(\widehat g)= M_{12}(\widehat g)= \left[\begin{smallmatrix}
    c & 0 & 0 & d\\
     0 & a  & b & 0\\
      0 & b & a  & 0\\
       \bar d & 0 & 0 &  \bar{c}\\
    \end{smallmatrix}\right]$ where $a$ and $b$ are positive real numbers and $c$ and $d$ are nonzero complex numbers. 
    Consider the signature $\widehat{\mathfrak{m}}_{12}\widehat{g}$ realized by mating $\widehat{g}$.
    We denote it by $\widehat{h}$. Then,
$$M(\widehat{h})=M(\widehat g)N_4M^{\tt T}(\widehat g)= \left[\begin{matrix}
    2cd & 0 & 0 & |c|^2+|d|^2\\
     0 &  2ab & a^2+b^2 & 0\\
      0 & a^2+b^2 & 2ab  & 0\\
       |c|^2+|d|^2 & 0 & 0 &  2\bar c \bar d\\
    \end{matrix}\right]=\left[\begin{matrix}
    c' & 0 & 0 & d'\\
     0 &  a' & b' & 0\\
      0 & b' & a'  & 0\\
       d' & 0 & 0 & \bar {c'}\\
    \end{matrix}\right],$$
    where $a', b'$, and $d'$ are positive real numbers and $c'$ is a nonzero complex number.
    Suppose that the argument of $c'$ is $\theta$, i.e., $c'=|c'|e^{\ii\theta}.$

Consider the holographic transformation by $\widehat{Q}=\left[\begin{smallmatrix}
e^{-\ii\theta/4} & 0\\
0 & e^{\ii\theta/4} \\
\end{smallmatrix}\right]\in \widehat{{\bf O}_2}.$ 
Then,
$$\Holant(\neq_2 \mid \widehat{h}, \widehat{\mathcal{F}})\equiv_T \Holant(\neq_2 \mid \widehat{Q}\widehat{h}, \widehat{Q}\widehat{\mathcal{F}}).$$
Note that $M(\widehat{Q}\widehat{h})=\left[\begin{smallmatrix}
    |c'| & 0 & 0 & d'\\
     0 &  a' & b' & 0\\
      0 & b' & a'  & 0\\
       d' & 0 & 0 & |c'|\\
    \end{smallmatrix}\right]$ where all entries are positive real numbers.
    Notice that all weight 2 entries of $\widehat{h}$
    are unchanged in $\widehat{Q}\widehat{h}$.
    By Theorem \ref{eight-vertex}, $\holant{\neq_2}{\widehat{Q}\widehat{h} }$ is \#P-hard unless $a'=b'=|c'|=d'$. Thus, we may assume that 
    $M(\widehat{Q}\widehat{h})=\left[\begin{smallmatrix}
    1 & 0 & 0 & 1\\
     0 &  1 & 1 & 0\\
      0 & 1 & 1  & 0\\
       1 & 0 & 0 &  1\\
    \end{smallmatrix}\right]$ up to normalization. 
Notice that $M(Z(\widehat{Q}\widehat{{h}}))=Z^{\otimes 2}M(\widehat{Q}\widehat{h})(Z^{\tt T})^{\otimes 2}=\left[\begin{smallmatrix}
    1 & 0 & 0 & 0\\
     0 &  0 & 0 & 0\\
      0 & 0 & 0  & 0\\
       0 & 0 & 0 &  1\\
    \end{smallmatrix}\right]$,
    which is the arity-4 equality $(=_4)$.
    Consider the holographic transformation by $Z$ which transfers $\neq_2$ back to $=_2$.
    Remember that $\widehat{Q}=Z^{-1}{Q}Z$.
    Then, $Z(\widehat{Q}\widehat{F})=Z(Z^{-1}QZ)(Z^{-1}\mathcal{F})=Q\mathcal{F}$.
    Since $\widehat{Q}\in \widehat{{\bf O}_2}$, we have  $Q\in {\bf O}_2$.
    Thus,
$$\Holant(\neq_2 \mid \widehat{Q}\widehat{h}, \widehat{Q}\widehat{\mathcal{F}})\equiv_T \Holant(=_2 \mid =_4, Q\mathcal{F}).$$
By Lemma \ref{lem-=4-red}, $\CSP_2(Q\mathcal{F})\leqslant_T \Holant(=_2 \mid =_4, Q\mathcal{F})$.
Since $\mathcal{F}$ does not satisfy condition (\ref{main-thr}) and $Q\in {\bf O}_2$, 
by Theorem~\ref{hard-result}, $\CSP_2(Q\mathcal{F})$ is \#P-hard.
Thus, $\holant{\neq_2}{\widehat{\mathcal{F}}}$ is \#P-hard.
    
       \item For all $\{i, j\}$, $\widehat{\mathfrak{m}}_{ij}\widehat f$ has support of size $4$. 
       By Cauchy-Schwarz inequality, $M(\widehat{\mathfrak{m}}_{ij}\widehat f)$ is of the form 
    $\left[\begin{smallmatrix}
    b & 0 & 0 & a\\
     0 & 0 & 0 & 0\\
      0 & 0 & 0 & 0\\
       a & 0 & 0 & \bar b\\
    \end{smallmatrix}\right]$
    or  $\left[\begin{smallmatrix}
    0 & 0 & 0 & 0\\
     0 & b & a & 0\\
      0 & a & \bar b & 0\\
       0 & 0 & 0 & 0\\
    \end{smallmatrix}\right]$ where $a^2-|b|^2=0$, 
    or the form $\lambda_{ij} \left[\begin{smallmatrix}
    0 & 0 & 0 & 1\\
     0 & 0 & 1 & 0\\
      0 & 1 & 0 & 0\\
       1 & 0 & 0 & 0\\
    \end{smallmatrix}\right]$ where $\lambda_{ij} = |{\widehat{{\bf f}}}_{ij}^{00}|^2= |{\widehat{{\bf f}}}_{ij}^{01}| \neq 0$. 
    If $M(\mathfrak{m}_{ij}\widehat{f})=\lambda_{ij}N_4$, then we are done.
   Otherwise, $M(\widehat{\mathfrak{m}}_{ij}\widehat f)$ has rank one.
   Hence, $M_{ij}(\widehat f)$ also has rank one. 
    Then, by observing the form
    of $M(\widehat{\mathfrak{m}}_{ij}\widehat f)$ especially the  all zero rows,
    $\widehat f$ can be factorized as 
    $\widehat  b_{ij}(x_i, x_j) \otimes \widehat g$ where $\widehat b_{ij}\in\widehat{\mathcal{O}}$ and $\widehat g$ is a signature on the other $n-2$ variables.
    Thus, we are done.
\end{itemize}
The lemma is proved.
  \end{proof}
  \begin{remark}
  We give a restatement of Lemma~\ref{lem-4.5} in the setting of $\Holant(\mathcal{F})$.
Let $f\in \mathcal{F}$ be a signature of arity $2n\geqslant4$. Then, $\Holant(\mathcal{F})$ is \#P-hard, or for all pairs of indices $\{i, j\}$, 
there exists a nonzero binary signature $b_{ij}\in \mathcal{O}$ such that $b_{ij}(x_i, x_j)\mid f$ or $M(\mathfrak{m}_{ij}f)=\lambda_{ij}I_4$ for some real $\lambda_{ij}\neq 0$.
  \end{remark}

Now for an irreducible signature $\widehat{f}$ of arity $2n\geqslant 4$, we show that it satisfies {\sc 2nd-Orth} or we get  \#P-hardness. 
\begin{lemma}\label{second-ortho}
Let $\widehat f\in \widehat{\mathcal{F}}$ be an irreducible signature of arity $2n\geqslant 4$.
If $\widehat{f}$ does not satisfy {\sc 2nd-Orth}, then $\holant{\neq}{\widehat{\mathcal{F}}}$ is \#P-hard.
\end{lemma}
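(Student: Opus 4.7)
The plan is to combine Lemma~\ref{lem-4.5} with the hypothesis of irreducibility, then argue by an averaging-style identity that forces all the mating constants $\lambda_{ij}$ to coincide; in case they do not, we are thrown back into the \#P-hard branch of Lemma~\ref{lem-4.5}.

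First I would apply Lemma~\ref{lem-4.5} to $\widehat{f}$. If $\holant{\neq_2}{\widehat{\mathcal{F}}}$ is \#P-hard we are finished. Otherwise, for every pair $\{i,j\}\subseteq[2n]$, either some nonzero $\widehat{b}_{ij}\in\widehat{\mathcal{O}}$ satisfies $\widehat{b}_{ij}(x_i,x_j)\mid\widehat{f}$, or $M(\widehat{\mathfrak{m}}_{ij}\widehat{f})=\lambda_{ij}N_4$ for some real $\lambda_{ij}\neq 0$. Because $\widehat{f}$ is irreducible and has arity $2n\geqslant 4$, the first alternative would force $\widehat{f}=\widehat{b}_{ij}\otimes\widehat{h}$ with $\widehat{h}$ nonzero of arity $2n-2\geqslant 2$, contradicting irreducibility. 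Hence the second alternative holds for every $\{i,j\}$, so $M(\widehat{\mathfrak{m}}_{ij}\widehat{f})=\lambda_{ij}N_4$ for every pair.

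Next I would show all the $\lambda_{ij}$ coincide, mimicking the idea behind Lemma~\ref{lem-same-mu}. Starting from the $4$-ary gadget $\widehat{\mathfrak{m}}_{ij}\widehat{f}$, close it up by connecting the two pairs of remaining dangling variables using $\neq_2$. The resulting closed scalar equals
\[
\sum_{a,b,c,d\in\{0,1\}}\lambda_{ij}\,N_4(ab,cd)\,[a\neq c]\,[b\neq d]=\lambda_{ij}\sum_{a,b,c,d}[a\neq c]^2[b\neq d]^2=4\lambda_{ij},
\]
since $N_4(ab,cd)=[a\neq c][b\neq d]$. Geometrically this closed scalar is exactly the complete pairwise contraction of two copies of $\widehat{f}$ along $\neq_2$ on every variable, a value depending only on $\widehat{f}$ and not on the particular choice of $\{i,j\}$. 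Consequently $\lambda_{ij}$ equals the same constant $\lambda\neq 0$ for every pair $\{i,j\}$.

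But this says exactly that $\widehat{f}$ satisfies {\sc 2nd-Orth}, contradicting our hypothesis. Therefore the only surviving case of Lemma~\ref{lem-4.5} is that $\holant{\neq_2}{\widehat{\mathcal{F}}}$ is \#P-hard, proving the lemma. The substantive step is the reduction to Lemma~\ref{lem-4.5}; the main obstacle to anticipate, which turns out to dissolve by a clean closure argument, is ruling out the possibility that different pairs of indices produce different $\lambda_{ij}$'s.
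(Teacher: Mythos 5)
Your proposal is correct and follows essentially the same route as the paper: invoke Lemma~\ref{lem-4.5}, use irreducibility to kill the $\widehat{b}_{ij}(x_i,x_j)\mid\widehat{f}$ branch, and then close up $\widehat{\mathfrak{m}}_{ij}\widehat{f}$ with $\neq_2$ on the remaining dangling pairs to obtain the $\{i,j\}$-independent value $4\lambda_{ij}$, which forces all $\lambda_{ij}$ to coincide. The paper's own proof is this argument verbatim (in slightly terser form), so there is nothing substantive to flag.
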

\begin{proof}
Since $\widehat{f}$ is irreducible, by Lemma \ref{lem-4.5}, 
$M(\widehat{\mathfrak{m}}_{ij}\widehat f)=\lambda_{ij}N_4$ for all $\{i, j\}$. 
     Now, we show all $\lambda_{ij}$ have the same value.   If we connect further the two respective pairs of variables of $\mathfrak m_{ij}f$, which totally connects two copies of $f$, we get a value $4\lambda_{ij}$. 
This value clearly does not depend on the particular indices $\{i, j\}$.
    We denote the value $\lambda_{ij}$ by $\lambda$.  This value is nonzero because $\widehat f$ is irreducible.
\end{proof}

We derive some consequences from  the condition {\sc 2nd-Orth}
 for signatures with {\sc ars}. 
Suppose that $\widehat{f}$ satisfies {\sc 2nd-Orth}.
First, by definition we have 
$|\widehat{{\bf f}}_{ij}^{ab}|^2=\lambda$
for any $(x_i, x_j)=(a, b)\in\{0, 1\}^2.$
   Given a vector $\widehat{{\bf f}}_{ij}^{ab}$, we can pick a third variable $x_k$ and partition $\widehat{{\bf f}}_{ij}^{ab}$ into two vectors $\widehat{{\bf f}}_{ijk}^{ab0}$ and $\widehat{{\bf f}}_{ijk}^{ab1}$ according to $x_k=0$ or $1$.
    By setting $(a, b)=(0, 0)$, we have
    \begin{equation}\label{e1}
        |\widehat{{\bf f}}_{ij}^{00}|^2=|\widehat{{\bf f}}_{ijk}^{000}|^2+|\widehat{{\bf f}}_{ijk}^{001}|^2=\lambda.
    \end{equation}
    Similarly, we consider the vector $\widehat{{\bf f}}_{ik}^{00}$ and partition it according to $x_j=0$ or $1$. We have
    \begin{equation}\label{e2}
        |\widehat{{\bf f}}_{ik}^{00}|^2=|\widehat{{\bf f}}_{ijk}^{000}|^2+|\widehat{{\bf f}}_{ijk}^{010}|^2=\lambda.
    \end{equation}
    Comparing equations (\ref{e1}) and (\ref{e2}), we have $|\widehat{{\bf f}}_{ijk}^{001}|^2=|\widehat{{\bf f}}_{ijk}^{010}|^2$. 
    Moreover, by {\sc ars}, we have $|\widehat{{\bf f}}_{ijk}^{010}|^2=|\widehat{{\bf f}}_{ijk}^{101}|^2.$ Thus, we have $|\widehat{{\bf f}}_{ijk}^{001}|^2=|\widehat{{\bf f}}_{ijk}^{101}|^2$.
    Note that the vector $\widehat{{\bf f}}_{jk}^{01}$ is partitioned into two vectors $\widehat{{\bf f}}_{ijk}^{001}$ and $\widehat{{\bf f}}_{ijk}^{101}$ 
    according to $x_i=0$ or $1$.
    That is 
    $$|\widehat{{\bf f}}_{jk}^{01}|^2=|\widehat{{\bf f}}_{ijk}^{001}|^2+|\widehat{{\bf f}}_{ijk}^{101}|^2=\lambda.$$ Thus, we have $|\widehat{{\bf f}}_{ijk}^{001}|^2=|\widehat{{\bf f}}_{ijk}^{101}|^2=\lambda/2$. 
     Then, by equation (\ref{e1}), we have $|\widehat{{\bf f}}_{ijk}^{000}|^2 =\lambda/2$, and again by {\sc ars}, we also have $|\widehat{{\bf f}}_{ijk}^{111}|^2 =|\widehat{{\bf f}}_{ijk}^{000}|^2=\lambda/2$.
     Note that indices $i, j, k$ are picked arbitrarily, by symmetry, we have 
\begin{equation}\label{eqn:arity>=6-basecase}
 |\widehat{{\bf f}}_{ijk}^{abc}|^2=\lambda/2
\end{equation} for all  $(x_i, x_j, x_k)=(a, b, c)\in\{0, 1\}^3.$ 

     Given a vector $\widehat{{\bf f}}_{ijk}^{abc}$, we can continue to pick a fourth variable $x_\ell$ and partition $\widehat{{\bf f}}_{ijk}^{abc}$ into two vectors $\widehat{{\bf f}}_{ijk\ell}^{abc0}$ and $\widehat{{\bf f}}_{ijk\ell}^{abc1}$ according to $x_\ell=0$ or $1$.
     By setting $(a, b, c)=(0, 0, 0)$, 
we have from (\ref{eqn:arity>=6-basecase})
     \begin{equation}\label{e3}
     |\widehat{{\bf f}}_{ijk}^{000}|^2=|\widehat{{\bf f}}_{ijk\ell}^{0000}|^2+|\widehat{{\bf f}}_{ijk\ell}^{0001}|^2=\lambda/2.
     \end{equation}
   Similarly, we consider the vector $\widehat{{\bf f}}_{ij\ell}^{001}$ and partition it according to $x_k=0$ or $1$. We have
   \begin{equation}\label{e4}
     |\widehat{{\bf f}}_{ij\ell}^{001}|^2=|\widehat{{\bf f}}_{ijk\ell}^{0001}|^2+|\widehat{{\bf f}}_{ijk\ell}^{0011}|^2=\lambda/2.
     \end{equation}
     Comparing equations (\ref{e3}) and (\ref{e4}), and also    by {\sc ars}, we have 
     \begin{equation}\label{e5}
    |\widehat{{\bf f}}_{ijk\ell}^{0000}|^2=|\widehat{{\bf f}}_{ijk\ell}^{0011}|^2=|\widehat{{\bf f}}_{ijk\ell}^{1100}|^2=|\widehat{{\bf f}}_{ijk\ell}^{1111}|^2
     \end{equation}
     for all indices $\{i, j, k, \ell\}$.
     Similarly, we can get 
       \begin{equation}\label{e6}
    |\widehat{{\bf f}}_{ijk\ell}^{0001}|^2=|\widehat{{\bf f}}_{ijk\ell}^{0010}|^2=|\widehat{{\bf f}}_{ijk\ell}^{1101}|^2=|\widehat{{\bf f}}_{ijk\ell}^{1110}|^2.
     \end{equation}
     By the definition of second order orthogonality, we also have 
     \begin{equation}\label{e7}
         \langle\widehat{{\bf f}}_{ij}^{ab}, \widehat{{\bf f}}_{ij}^{cd} \rangle=0
     \end{equation}
for all variables $x_i, x_j$ and $(a, b)\neq (c, d)$.   

Equations~(\ref{e5}), (\ref{e6}) and (\ref{e7}) will be used frequently in the analysis of signatures satisfying {\sc ars} and {\sc 2nd-Orth}.
This is also a reason why we consider the problem
in the setting under the $Z^{-1}$ transformation, $\holant{\neq_2}{\widehat{\mathcal{F}}}$, where we 
can express these consequences of {\sc 2nd-Orth} elegantly,  instead of $\Holant(\mathcal{F})$ which is logically equivalent. 
By combining {\sc 2nd-Orth}  and {\sc ars} of the signature $\widehat{f}$, we  get these
simply expressed, thus easily applicable,
conditions  in terms of  norms  and inner products.

     \section{The Induction Proof: Base Cases $2n \leqslant 4$}\label{sec-induction}
     In this section, we introduce the induction framework and handle the base cases (Lemmas~\ref{lem-2-ary} and \ref{lem-4-ary}).
Recall that $\widehat{\mathcal{O}}$ denotes the set of binary signatures with {\sc ars} and parity (including the binary zero signature), and $\widehat{\mathcal{O}}^{\otimes}$ denotes the set of tensor products of signatures in $\widehat{\mathcal{O}}$.
 Since ${\mathcal{F}}$ does not satisfy  condition \rm{(\ref{main-thr})}, $\widehat{\mathcal{F}}\not\subseteq \mathscr{T}$.
 Also, since  $\widehat{\mathcal{O}}^{\otimes}\subseteq \mathscr{T}$, $\widehat{\mathcal{F}}\not\subseteq \widehat{\mathcal{O}}^{\otimes}$.
Thus, there is a nonzero signature $\widehat{f}\in \widehat{\mathcal{F}}$ of arity $2n$ such that $\widehat{f}\notin \widehat{\mathcal{O}}^{\otimes}$.
  We want to achieve a proof of  \#P-hardness by induction on $2n$. 
  We first consider the base that $2n=2$.
Notice that a nonzero binary signature  $\widehat{f}$ satisfies {\sc 1st-Orth} iff its matrix form (as a 2-by-2 matrix) is orthogonal.
Thus, $\widehat{f}\notin\widehat{\mathcal{O}}$ implies that it does not satisfy {\sc 1st-Orth}.
Then, we have the following result.
 \begin{lemma}\label{lem-2-ary}
 Let $\mathcal F$ contain a binary signature $f\notin \mathcal{O}^{\otimes}$.
 Then, $\Holant(\mathcal{F})$ is \#P-hard.
 
 Equivalently, 
  let $\widehat{\mathcal{F}}$ contain a binary signature $\widehat{f}\notin \widehat{\mathcal{O}}^{\otimes}$.
 Then, $\Holant(\neq_2\mid \widehat{\mathcal{F}})$ is \#P-hard.
 \end{lemma}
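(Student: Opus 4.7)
The plan is to reduce this immediately to Lemma~\ref{even-first-order}. The key observation is that, for a nonzero binary signature, membership in $\mathcal{O}^{\otimes}$ is exactly equivalent to satisfying {\sc 1st-Orth}. Once this equivalence is in place, the hypothesis supplies a nonzero signature in $\mathcal{F}$ violating {\sc 1st-Orth}, and since we are in the standing setting where $\mathcal{F}$ does not satisfy condition (\ref{main-thr}), Lemma~\ref{even-first-order} finishes the job.

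More concretely, first I would note that the binary zero signature lies in $\mathcal{O}$, hence in $\mathcal{O}^{\otimes}$, so the assumption $f\notin\mathcal{O}^{\otimes}$ forces $f\not\equiv 0$. Next, for a binary $f$, $\mathcal{O}^{\otimes}$ is just $\mathcal{O}$ up to a nonzero real scalar, and such a scalar leaves membership in $\mathcal{O}$ unchanged (the defining condition $M_1(f)M_1^{\tt T}(f)=\lambda I_2$ with $\lambda>0$ is preserved under scaling by any nonzero real). So $f\notin\mathcal{O}^{\otimes}$ is the same as $f\notin\mathcal{O}$, i.e.\ $M_1(f)M_1^{\tt T}(f)\neq\mu I_2$ for every real $\mu\neq 0$. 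Using $M(\mathfrak m_1 f)=M_1(f)M_1^{\tt T}(f)$ from (\ref{m-form}), this says $M(\mathfrak m_1 f)\neq\mu I_2$ for any real $\mu\neq 0$, which by Definition~\ref{def-first-order} (and Lemma~\ref{lem-same-mu}) is precisely the failure of {\sc 1st-Orth} for $f$.

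With $f\in\mathcal{F}$ a nonzero signature violating {\sc 1st-Orth} and $\mathcal{F}$ not satisfying (\ref{main-thr}), Lemma~\ref{even-first-order} directly yields that $\Holant(\mathcal{F})$ is \#P-hard. The equivalent statement in the $\widehat{\cdot}$-setting follows from Lemma~\ref{q-parity}, which identifies $\widehat{\mathcal{O}}$ with the image of $\mathcal{O}$ under $Z^{-1}$, so that $f\notin\mathcal{O}^{\otimes}$ translates to $\widehat f\notin\widehat{\mathcal{O}}^{\otimes}$, and $\holant{\neq_2}{\widehat{\mathcal{F}}}\equiv_T\Holant(\mathcal{F})$ preserves \#P-hardness.

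There is no serious obstacle: the entire content of the lemma is the dictionary translation between ``not in $\mathcal{O}^{\otimes}$'' and ``fails {\sc 1st-Orth}''; all the analytic work (interpolation to a degenerate binary signature, extraction of a nonzero unary via Lemma~\ref{lin-wang}, and invocation of the odd-arity dichotomy Theorem~\ref{odd-dic}) has already been packaged inside Lemma~\ref{even-first-order}. The only point to double-check is that no scalar normalization can rescue $f$: since $\mathcal{O}$ is closed under nonzero real scaling, no scaling can turn a non-orthogonal matrix into a scalar multiple of an orthogonal one, so the hypothesis is genuinely equivalent to the failure of {\sc 1st-Orth}.
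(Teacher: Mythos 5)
Your proof is correct and takes essentially the same route as the paper. The only cosmetic difference is that the paper explicitly splits into the reducible case (handled directly via factorization, Lemma~\ref{lin-wang}, and Theorem~\ref{odd-dic}) and the irreducible case (handled via failure of {\sc 1st-Orth} and Lemma~\ref{even-first-order}); you observe that the dictionary ``nonzero binary $f\notin\mathcal{O}^{\otimes}$ $\iff$ $f$ fails {\sc 1st-Orth}'' covers both cases at once, since a reducible nonzero binary automatically fails {\sc 1st-Orth} (its $M(\mathfrak m_1 f)$ is degenerate), and Lemma~\ref{even-first-order} already packages the factorization/odd-arity argument for that situation.
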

 \begin{proof}
 We prove this lemma in the setting of $\Holant(\mathcal{F})$.
 Since ${\mathcal{O}}^{\otimes}$ contains the binary zero signature, $f\notin {\mathcal{O}}^{\otimes}$ implies that $f\not\equiv 0$.
 If $f$ is reducible, then it is a tensor product of two nonzero unary signatures. 
 By Lemma \ref{lin-wang}, we can realize a nonzero unary signature by factorization, and we are done by Theorem \ref{odd-dic}.
 Otherwise, $f$ is irreducible. 
 Since $f\notin \mathcal{O}^{\otimes}$, $f$ does not satisfy {\sc 1st-Orth}. 
 By Lemma \ref{even-first-order},  $\Holant(\mathcal{F})$ is \#P-hard.
 \end{proof}
 
  Then, the general induction framework is  that we start with a signature $\widehat f$ of arity $2n\geqslant 4$ that is not in $\widehat{\mathcal{O}}^{\otimes}$, 
and realize a signature $\widehat g$ of arity $2k \leqslant 2n-2$ that is also not in $\widehat{\mathcal{O}}^{\otimes}$, or otherwise we can directly show $\Holant(\neq_2 \mid \widehat{\mathcal F})$ is {\rm \#}P-hard.
If we can reduce the arity down to $2$ (by a sequence of reductions of length independent of the problem instance size), then we have a binary signature $\widehat{b}\notin\widehat{\mathcal{O}}$. 
By Lemma \ref{lem-2-ary}, 
we are done.

For the inductive step,
we first consider the case that $\widehat{f}$ is reducible.
Suppose that $\widehat{f}=\widehat{f_1} \otimes \widehat{f_2}$.
If $\widehat{f_1}$ 
or $\widehat{f_2}$ have odd arity, then we can realize a signature of odd arity by factorization and we are done.
Otherwise, $\widehat{f_1}$ and $\widehat{f_2}$ have even arity. Since $\widehat{f}\notin \widehat{\mathcal{O}}^{\otimes}$, we know $\widehat{f_1}$ and $\widehat{f_2}$ cannot both be in $\widehat{\mathcal{O}}^{\otimes}$. Then, we can realize a signature of lower arity that is not in $\widehat{\mathcal{O}}^{\otimes}$ by factorization. We are done.
Thus, in the following we may assume that $\widehat f$ is irreducible.
Then, we may further assume that $\widehat f$ satisfies {\sc 2nd-Orth}. Otherwise, we get  \#P-hardness by Lemma \ref{second-ortho}.
We use merging with $\neq_2$ to realize signatures of arity $2n-2$ from $\widehat f$.
 Consider $\widehat{\partial}_{ij}\widehat f$ for all pairs of indices $\{i, j\}$. If there exists a pair $\{i, j\}$ such that $\widehat{\partial}_{ij}\widehat f \notin \widehat{\mathcal{O}}^{\otimes}$, 
then  we can realize
$\widehat g=\widehat{\partial}_{ij}\widehat f$ which has arity $2n-2$, and we are done.
Thus, we may assume $\widehat{\partial}_{ij}\widehat f \in \widehat{\mathcal{O}}^{\otimes}$ for all $\{i, j\}$. 
 We denote this property 
 by $\widehat f\in \widehat{\int}\widehat{\mathcal{O}}^{\otimes}$. We want to achieve our induction proof based on these two properties:  {\sc 2nd-Orth}
 and $\widehat f\in \widehat{\int}\widehat{\mathcal{O}}^{\otimes}$. 
We consider the case that $2n=4$.

\begin{lemma}\label{lem-4-ary}
 Let $\widehat{\mathcal{F}}$ contain a $4$-ary signature $\widehat{f}\notin \widehat{\mathcal{O}}^{\otimes}$.
  Then, $\Holant(\neq_2 \mid \widehat{\mathcal F})$ is {\rm \#}P-hard.
  \end{lemma}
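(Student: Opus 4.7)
The plan is to combine the induction framework's reductions (which let me assume $\widehat f$ is irreducible and satisfies {\sc 2nd-Orth}) with a single mating gadget, which delivers a scaled copy of $\neq_4$; the presence of $\neq_4$ then chains into a known \#P-hardness result. First, by the discussion preceding the lemma, I may assume $\widehat f$ is irreducible. Otherwise $\widehat f=\widehat{f_1}\otimes\widehat{f_2}$ with both factors binary; since $\widehat f\notin\widehat{\mathcal{O}}^{\otimes}$, at least one factor lies outside $\widehat{\mathcal{O}}$, and Lemma \ref{lin-wang} splits off this factor, so Lemma \ref{lem-2-ary} finishes this case. Similarly, if $\widehat f$ fails {\sc 2nd-Orth}, Lemma \ref{second-ortho} already yields \#P-hardness.

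So assume $\widehat f$ is irreducible and satisfies {\sc 2nd-Orth}. Then there exists $\lambda\neq 0$ such that, for every pair of indices $\{i,j\}$, $M(\widehat{\mathfrak{m}}_{ij}\widehat f)=\lambda N_4$. Choose the pair $\{1,2\}$ and write $N_4=N_2\otimes N_2$. The mating gadget produces a $4$-ary signature whose support consists precisely of the assignments $(x_1,x_2,x_3,x_4)$ with $x_1\neq x_3$ and $x_2\neq x_4$, each with value $\lambda$; this is, up to a permutation of variables, exactly $\lambda\cdot\neq_4$. Since permuted disequality signatures are equivalent for Holant complexity, $\neq_4$ (up to a nonzero real scalar, which can be absorbed into a global polynomial factor) is realizable from $\widehat f$ as an $(\neq_2,\widehat{\mathcal{F}})$-gate signature.

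Hence $\holant{\neq_2}{\neq_4,\widehat{\mathcal{F}}}\leqslant_T\holant{\neq_2}{\widehat{\mathcal{F}}}$. Lemma \ref{lem-neq4-red} then gives $\holant{\mathcal{DEQ}}{\widehat{\mathcal{F}}}\leqslant_T\holant{\neq_2}{\neq_4,\widehat{\mathcal{F}}}$, and Lemma \ref{lem-neq-4-hard} (which uses the standing assumption that $\mathcal{F}$ does not satisfy condition (\ref{main-thr})) asserts that $\holant{\mathcal{DEQ}}{\widehat{\mathcal{F}}}$ is \#P-hard. Composing these reductions yields the desired \#P-hardness of $\holant{\neq_2}{\widehat{\mathcal{F}}}$. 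The only step that requires a little care is the reducible subcase; the rest is immediate once one observes that the matrix $\lambda N_4$ is literally a variable-permuted disequality signature, so no deeper analysis of the weight/parity pattern of $\widehat f$ is needed at this base-case arity.
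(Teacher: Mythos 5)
Your proof is correct, and it takes a genuinely different route from the paper's.

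The paper's proof of this lemma, after the same reductions to the irreducible and {\sc 2nd-Orth} case, proceeds by \emph{merging}: it assumes for contradiction that $\widehat{\partial}_{ij}\widehat{f}\in\widehat{\mathcal{O}}$ for all pairs $\{i,j\}$, derives from {\sc 2nd-Orth} and {\sc ars} that all entries of $\widehat f$ have the same norm, and then extracts three products of the form $\widehat f^{\alpha}\overline{\widehat f^{\beta}}\widehat f^{\gamma}\overline{\widehat f^{\delta}}=-1$ whose product forces $(-1)^3$ to equal a product of unit norms, giving the contradiction. Your proof instead uses \emph{mating}: you observe that, by the characterization of {\sc 2nd-Orth} in the remark after Definition~\ref{def:second-order-othor}, the gadget $\widehat{\mathfrak{m}}_{12}\widehat{f}$ has signature matrix $\lambda N_4=\lambda N_2^{\otimes 2}$ with $\lambda\ne 0$, which (after swapping the middle two dangling edges) is exactly the tensor product $\lambda\cdot\neq_2\otimes\neq_2=\lambda\cdot\neq_4$. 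This is a realizable $(\neq_2,\widehat{\mathcal{F}})$-gate, so Lemma~\ref{lem-neq4-red} and Lemma~\ref{lem-neq-4-hard} finish. What your route buys is conceptual economy: the hardness is delegated entirely to the already-established $\mathcal{DEQ}$ machinery, with no ad hoc sign-chasing, and no need to argue about $\widehat{\partial}_{ij}\widehat{f}$ at all. What the paper's route buys is that the contradiction it derives (there is no irreducible $4$-ary signature satisfying {\sc 2nd-Orth} with all merges in $\widehat{\mathcal{O}}$) is a structural nonexistence statement that the authors use as a template in the higher-arity sections.

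One thing you should be aware of and double-check: nothing in your argument uses the arity being $4$. The identity $M(\widehat{\mathfrak{m}}_{ij}\widehat f)=\lambda N_4$ holds for irreducible $\widehat f$ of \emph{any} even arity $\ge 4$ satisfying {\sc 2nd-Orth}, since the mated signature is always $4$-ary regardless of the arity of $\widehat f$. If your chain through Lemma~\ref{lem-neq4-red} and Lemma~\ref{lem-neq-4-hard} is airtight, it would render the paper's substantial apparatus at arities $6$, $8$, and $\ge 10$ (including $\widehat{f_6}$, $\widehat{f_8}$, the Bell-property machinery, and the limited-appearance reduction) superfluous. That is a striking consequence of a published argument, so you should scrutinize whether the realizability of $\neq_4$ on the right-hand side really composes with the $\mathcal{DEQ}$ reductions exactly as cited; in particular the paper itself, in the proof of Lemma~\ref{lem-4.5}, invokes this exact chain in the support-size-$2$ case but notably does \emph{not} invoke it in the $\lambda N_4$ case, instead returning to the caller. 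This asymmetry suggests the authors may have avoided the mating-to-$\neq_4$ shortcut deliberately, and it is worth understanding why before relying on it in the inductive steps beyond arity $4$.
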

\begin{proof}
Since $\widehat{f}\notin\widehat{{\mathcal{O}}}^{\otimes}$, $f\not\equiv 0$.
First, we may assume that $\widehat f$ is irreducible. 
Otherwise, we can realize a nonzero unary signature or a binary signature that is not in $\widehat{\mathcal{O}}$.
Then, by Theorem \ref{odd-dic} and Lemma \ref{lem-2-ary}, we have  \#P-hardness.
Since $\widehat f$ is irreducible, we may further assume that $\widehat{f}$ satisfies {\sc 2nd-Orth}.
Otherwise, by Lemma \ref{second-ortho}, we get  \#P-hardness. 

We consider binary signatures $\widehat{\partial}_{ij}\widehat{f}$ realized from $\widehat{f}$ by merging using $\neq_2$.
Under the assumption that $\widehat{f}$  satisfies {\sc 2nd-Orth}, we will show that  there exits a pair $\{i, j\}$ such that $\widehat{\partial}_{ij}\widehat{f} \notin\widehat{\mathcal{O}}$.
Then by Lemma \ref{lem-2-ary}, we are done.
For a contradiction, suppose that $\widehat f\in \widehat{\int} \widehat{\mathcal{O}} $ i.e.,  $\widehat{\partial}_{ij}\widehat{f}\in \widehat{\mathcal{O}}$ for all pairs $\{i, j\}$. 
Since $\widehat{f}$ satisfies {\sc 2nd-Orth},
by equations (\ref{e5}) and (\ref{e6}), we have $|\widehat{{\bf f}}_{ijk\ell}^{0000}|=|\widehat{{\bf f}}_{ijk\ell}^{0011}|=|\widehat{{\bf f}}_{ijk\ell}^{1111}|$ 
and $ |\widehat{{\bf f}}_{ijk\ell}^{0001}|=|\widehat{{\bf f}}_{ijk\ell}^{1110}|$ respectively for any permutation $(i, j, k, \ell)$ of $(1, 2, 3, 4)$. Thus  all entries of $\widehat f$ on inputs of even weight $\{0, 2, 4\}$ have the same norm, and all entries of $\widehat f$ on inputs of odd weight $\{1, 3\}$  have the same norm. 
We denote by $\nu_0$ and $\nu_1$
the norm squares of entries on inputs of even weight and odd weight,   respectively.

Then, we consider the equation  $\langle\widehat{{\bf f}}_{12}^{01}, \widehat{{\bf f}}_{12}^{10} \rangle=0$ from (\ref{e7}) by taking $(i, j)=(1, 2)$. We have 
$$\langle\widehat{{\bf f}}_{12}^{01}, \widehat{{\bf f}}_{12}^{10} \rangle=
\widehat f^{0100}\overline{\widehat f^{1000}}+f^{0101}\overline{\widehat f^{1001}}+\widehat f^{0110}\overline{\widehat f^{1010}}+\widehat f^{0111}\overline{\widehat f^{1011}}=0.$$
(Here for clarity, we omitted the subscript $1234$ of $\widehat f_{1234}^{abcd}$.)
By {\sc ars}, we have $\widehat f^{0111}\overline{\widehat f^{1011}}=\overline{\widehat f^{1000}}\widehat f^{0100}$ and $\widehat f^{0110}\overline{\widehat f^{1010}}=\overline{\widehat f^{1001}}\widehat f^{0101}.$ 
Thus, we have 
\begin{equation}\label{e5.71}
\widehat f^{0100}\overline{\widehat f^{1000}}+\widehat f^{0101}\overline{\widehat f^{1001}}=0.
\end{equation}
Note that by taking  norm, $|\widehat f^{0100}\overline{\widehat f^{1000}}| = \nu_1$ and  $|\widehat f^{0101}\overline{\widehat f^{1001}}| = \nu_0$. Then, it follows that $\nu_0 = \nu_1$. Thus, all entries of $\widehat f$ have the same norm. We normalize the norm to be $1$ since $\widehat f\not\equiv 0$.

Consider $\widehat{\partial}_{12}\widehat{f}$. We have
$$\widehat{\partial}_{12}\widehat{f}=(\widehat f^{0100}+\widehat f^{1000}, ~~~\widehat f^{0101}+\widehat f^{1001}, ~~~\widehat f^{0110}+\widehat f^{1010}, ~~~\widehat f^{0111}+\widehat f^{1011}),$$
and by assumption $\widehat{\partial}_{12}\widehat{f} \in \widehat{\mathcal{O}}.$
Thus, at least one of the two entries $\widehat f^{0100}+\widehat f^{1000}$ and $\widehat f^{0101}+\widehat f^{1001}$ is equal to zero.
If $\widehat f^{0100}+\widehat f^{1000}=0$, then we have $$\widehat f^{0100}\overline{\widehat f^{1000}}=(-\widehat f^{1000})\overline{\widehat f^{1000}}=-|\widehat f^{1000}|^2=-1.$$
Then, by equation (\ref{e5.71}), we have $\widehat f^{0101}\overline{\widehat f^{1001}}=1$. Otherwise, $\widehat f^{0101}+\widehat f^{1001}=0$. Then, we have $\widehat f^{0101}\overline{\widehat f^{1001}}=-1$ while $\widehat f^{0100}\overline{\widehat f^{1000}}=1$. Thus, among these two products $\widehat f^{0100}\overline{\widehat f^{1000}}$ and  $\widehat f^{0101}\overline{\widehat f^{1001}}$, exactly one is equal to $1$, while the other is $-1$. Then, we have 
$$\widehat f^{0100}\overline{\widehat f^{1000}}\widehat f^{0101}\overline{\widehat f^{1001}}=-1.$$
Similarly, by considering $\widehat{\partial}_{23}\widehat{f}$ and $\widehat{\partial}_{31}\widehat{f}$ respectively, we have 
$$\widehat f^{0010}\overline{\widehat f^{0100}}\widehat f^{0011}\overline{\widehat f^{0101}}=-1, ~~~~\text{ and }~~~~ \widehat f^{1000}\overline{\widehat f^{0010}}\widehat f^{1001}\overline{\widehat f^{0011}}=-1.$$
Multiply these three products, we have 
$$|\widehat f^{0100}|^2|\widehat f^{0010}|^2|\widehat f^{1000}|^2|\widehat f^{0101}|^2|\widehat f^{0011}|^2|\widehat f^{1001}|^2=(-1)^3=-1.$$
A contradiction!
\end{proof}
\begin{remark}
In this proof, we showed that there is no irreducible 4-ary signature $\widehat{f}$ that satisfies both   {\sc 2nd-Orth} and $\widehat f\in \widehat{\int}\widehat{\mathcal{O}}^{\otimes}$.
\end{remark}

If Lemma~\ref{lem-4-ary} were to hold 
for signatures of arity $2n\geqslant 6$, i.e., there is no irreducible signature  $\widehat f$ of $2n\geqslant 6$ such that $\widehat f$ satisfies both {\sc 2nd-Orth} and $\widehat f\in \widehat{\int}\widehat{\mathcal{O}}^{\otimes}$, then the induction proof holds and we are done.
We show that this is true for signatures of arity $2n\geqslant 10$ in Section \ref{sec-10-ary}.
However, 
there are extraordinary signatures  of arity $6$ and $8$ with special closure properties (Bell properties) such that they  satisfy both {\sc 2nd-Orth} and $\widehat f\in \widehat{\int}\widehat{\mathcal{O}}^{\otimes}$.

\section{First Major Obstacle: 6-ary Signatures with Bell Property}\label{sec-f6}
We consider the following 6-ary signature $\widehat{f}_6$. We use $\chi_S$ to denote the indicator function on set $S$.
Let 
 $$\widehat{f_6}=\chi_S \cdot {(-1)^{x_1x_2+x_2x_3+x_1x_3+x_1x_4+x_2x_5+x_3x_6}}$$ where $S =\mathscr{S}(\widehat{f_6}) =\mathscr{E}_6= \{\alpha\in \mathbb{Z}_2^6\mid {\rm wt}(\alpha) \equiv 0 \bmod 2\}.$
 One can check that $\widehat{f_6}$ is irreducible, and
$\widehat{f_6}$ satisfies both {\sc 2nd-Orth} and  $\widehat f\in \widehat{\int}\widehat{\mathcal{O}}^{\otimes}$.
 $\widehat{f_6}$ has the following matrix form 
 \begin{equation}\label{eqn:definiton-f6}
 M_{123,456}(\widehat{f_6})=\left[\begin{matrix}
1 & 0 & 0 & 1 & 0 & 1 & 1 & 0\\
0 & -1 & 1 & 0 & 1 & 0 & 0 & -1\\
0 & 1 & -1 & 0 & 1 & 0 & 0 & -1\\
-1 & 0 & 0 & -1 & 0 & 1 & 1 & 0\\
0 & 1 & 1 & 0 & -1 & 0 & 0 & -1\\
-1 & 0 & 0 & 1 & 0 & -1 & 1 & 0\\
-1 & 0 & 0 & 1 & 0 & 1 & -1 & 0\\
0 & 1 & 1 & 0 & 1 & 0 & 0 & 1\\
\end{matrix}\right].
\end{equation}
We use Figure \ref{fig-arity-6} to visualize this matrix.  
A block with orange color denotes an entry $+1$ and a block with blue color denotes an entry $-1$. Other blank blocks are zeros.

\begin{figure}[!htp]
    \centering
    \includegraphics[height=2in]{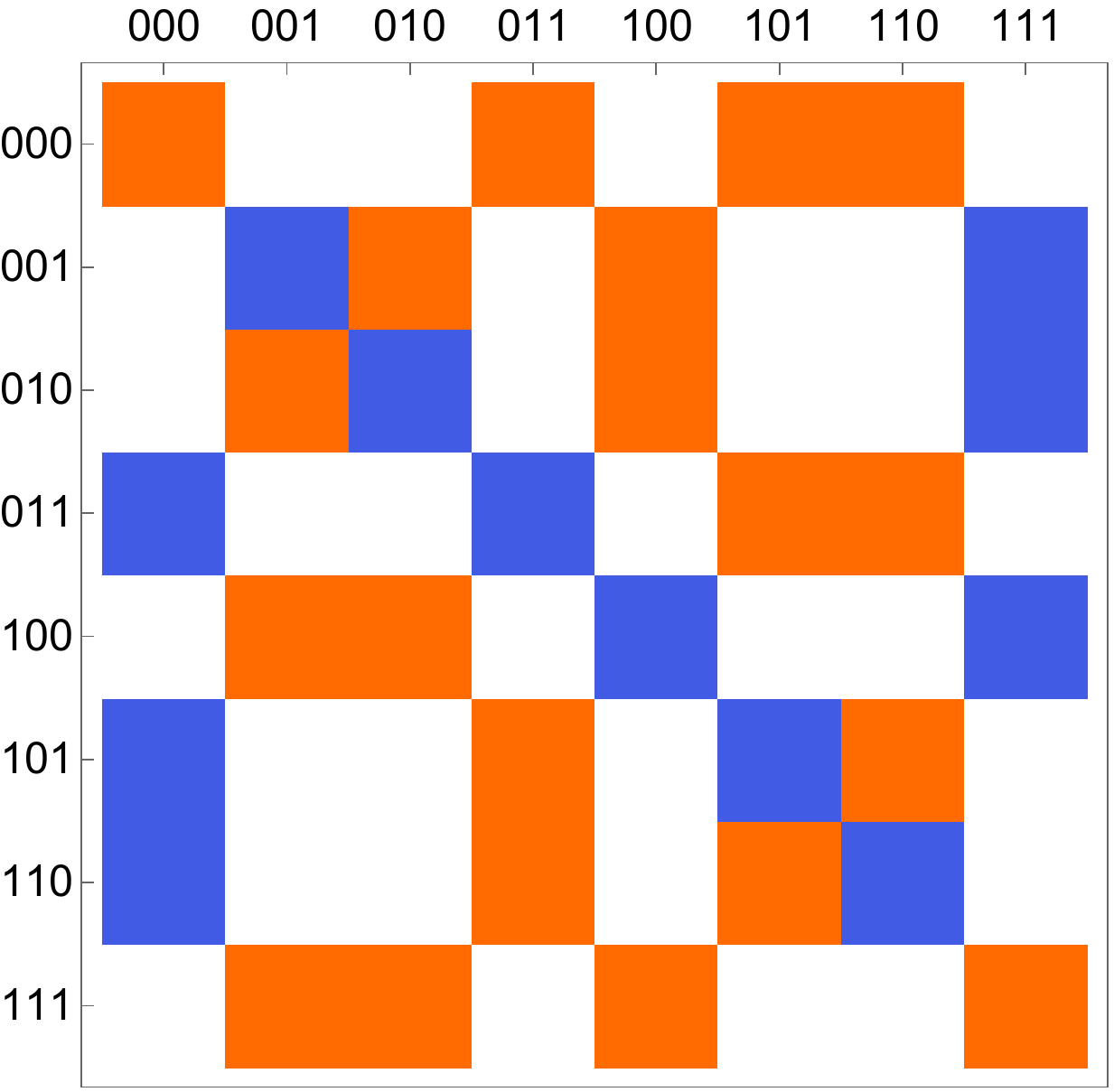}
    \caption{A visualization of $\widehat{f_6}$}
    \label{fig-arity-6}
\end{figure}

\subsection{The discovery of $\widehat{f_6}$}
In this subsection, we show how this extraordinary signature $\widehat{f_6}$ was discovered. 
We prove that if $\widehat{\mathcal{F}}$ contains a 6-ary signature $\widehat{f}$ where $\widehat{f}\notin\widehat{\mathcal{O}}{^\otimes}$, then $\holant{\neq_2}{\widehat{\mathcal{F}}}$ is \#P-hard or $\widehat{f_6}$ is realizable from $\widehat{f}$ after a holographic transformation by some $\widehat{Q}\in \widehat{{\bf O}_2}$ (Theorem~\ref{lem-6-ary-f6}).
The general strategy of this proof is to show that we can realize signatures with special properties from $\widehat{f}$ step by step (Lemmas~\ref{lem-arity6-1}, \ref{lem-arity6-2}, \ref{lem-arity6-3} and \ref{lem-arity6-4}), and finally we can realize $\widehat{f}_6$, or else we can realize signatures that lead to  \#P-hardness. So this $\widehat{f}_6$
emerges as essentially the unique
(and true) obstacle to our proof of \#P-hardness in this setting.

\begin{lemma}\label{lem-arity6-1}
Suppose that  $\widehat{\mathcal{F}}$ contains a $6$-ary signature $ \widehat{f}\notin \widehat{\mathcal{O}}{^\otimes}$.
Then, $\holant{\neq_2}{\widehat{\mathcal{F}}}$ is \#P-hard, or an irreducible 6-ary signature $\widehat{f'}$  is realizable from $\widehat{f}$, where  $\widehat{f'}(\alpha)=0$ for all $\alpha$ with ${\rm wt}(\alpha)=2$ or $4$.
Moreover,  $\widehat{f'}$ is realizable by extending variables of $\widehat{f}$ with nonzero binary signatures in $\widehat{\mathcal{O}}$ that are  realizable
 by factorization from $\widehat{\partial}_{12}\widehat{f}.$
\end{lemma}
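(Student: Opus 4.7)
The plan is to progressively restrict $\widehat f$ until the only remaining case forces the structure from which $\widehat f'$ can be built, with all other branches yielding \#P-hardness via the already-established lemmas.

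First I would dispose of reducibility. If $\widehat f = \widehat g\otimes \widehat h$ with factors (taken to satisfy {\sc ars}), then because $\widehat f\notin\widehat{\mathcal O}^{\otimes}$ some factor lies outside $\widehat{\mathcal O}^{\otimes}$. If any factor has odd arity, Theorem \ref{odd-dic} finishes the proof; otherwise both factors have even arity, and the one outside $\widehat{\mathcal O}^{\otimes}$ has arity $2$ or $4$ and so is handled by Lemma \ref{lem-2-ary} or Lemma \ref{lem-4-ary}. Hence we may assume $\widehat f$ is irreducible of arity $6$. Lemma \ref{second-ortho} then lets us assume $\widehat f$ satisfies {\sc 2nd-Orth}, else we are already done. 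Next I would examine $\widehat{\partial}_{12}\widehat f$: if it does not lie in $\widehat{\mathcal O}^{\otimes}$, it is a realized 4-ary signature outside $\widehat{\mathcal O}^{\otimes}$ and Lemma \ref{lem-4-ary} applies. So the remaining case, which is the one the lemma really addresses, is that $\widehat{\partial}_{12}\widehat f \in \widehat{\mathcal O}^{\otimes}$.

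In this remaining case, $\widehat{\partial}_{12}\widehat f$ is either identically zero or, up to a nonzero real scalar and some pairing of the four surviving variables $x_3,x_4,x_5,x_6$, a tensor product $\widehat b\otimes \widehat b'$ of two nonzero binary signatures in $\widehat{\mathcal O}$. These binary factors $\widehat b,\widehat b'$ (supplemented by $\neq_2$, which is always in $\widehat{\mathcal O}$) are the extension pieces permitted by the "Moreover" clause. The plan is to extend each variable $x_i$ of $\widehat f$ by one of these nonzero binaries, choosing the parity (even versus odd) and the scaling so that the combined effect is a coordinatewise change of basis on $(\mathbb{C}^2)^{\otimes 6}$ whose matrix on each variable is a scalar multiple of an element of $\widehat{{\bf O}_2}$. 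Under this bona fide transformation the relations (\ref{e5}), (\ref{e6}) and (\ref{e7}) continue to hold for the resulting signature $\widehat f'$, while the extra constraint that $\widehat{\partial}_{12}\widehat f$ is a pure tensor of factors from $\widehat{\mathcal O}$ forces the weight-$2$ slices $\widehat f'^{11}_{12},\widehat f'^{11}_{13},\ldots$ together with their {\sc ars}-partners on weight $4$, to vanish identically.

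The main obstacle is that extending by a nonzero binary in $\widehat{\mathcal O}$ cannot shrink any individual nonzero entry (it only flips a coordinate and scales it by a unit modulus factor), so the lemma's conclusion must come from cancellation rather than from literal annihilation. The hard part will be a bookkeeping step: to show that the unique canonical choice of $\widehat b$, $\widehat b'$ (and which of the two ports of each binary is connected to which $x_i$) makes all $\binom{6}{2}=15$ weight-$2$ entries of the transformed signature appear as antisymmetrized sums of the form $\widehat f^{ab\cdots}\pm\widehat f^{cd\cdots}$ that are zero by the 2nd-Orth identity $\langle \widehat{\bf f}_{ij}^{01},\widehat{\bf f}_{ij}^{10}\rangle=0$ combined with {\sc ars}. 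Once this cancellation pattern is verified, the signature $\widehat f'$ so produced inherits irreducibility from $\widehat f$ (since the extensions are invertible 2-by-2 transformations up to scalars), completing the lemma.
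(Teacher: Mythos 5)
Your setup matches the paper's: dispose of reducibility via Theorem~\ref{odd-dic}, Lemma~\ref{lem-2-ary}, and Lemma~\ref{lem-4-ary}; invoke Lemma~\ref{second-ortho} for {\sc 2nd-Orth}; and reduce to the case $\widehat{\partial}_{12}\widehat f \in \widehat{\mathcal O}^{\otimes}$, so that $\widehat{\partial}_{12}\widehat f = \widehat{b_1}\otimes\widehat{b_2}$ with $\widehat{b_1},\widehat{b_2}\in\widehat{\mathcal O}$ nonzero (after ruling out $\widehat{\partial}_{ij}\widehat f\equiv 0$ for all $\{i,j\}$ via Lemma~\ref{lem-zero_2} and {\sc 2nd-Orth}). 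You also correctly identify that the construction is an extension by these binary factors.

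The gap is in the mechanism by which the vanishing is supposed to happen. You claim the weight-$2$ entries of $\widehat f'$ ``appear as antisymmetrized sums of the form $\widehat f^{ab\cdots}\pm\widehat f^{cd\cdots}$'' and must cancel. This cannot work, because extending a variable of $\widehat f$ by a nonzero binary $\widehat b\in\widehat{\mathcal O}$ using $\neq_2$ does not form any sums at all: since $M(\widehat b)$ is a unit-modulus diagonal or antidiagonal $2\times 2$ matrix, the extension acts on each variable as a coordinate flip-or-fix together with a unit-modulus scalar. Every entry $\widehat f'(\alpha)$ equals a single entry $\widehat f(\sigma(\alpha))$ times a unit scalar, for a fixed bit-flip $\sigma$. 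So there is nothing to cancel. (You say as much yourself --- ``it only flips a coordinate and scales it by a unit modulus factor'' --- and then draw the wrong consequence.) Your phrase ``weight-2 slices $\widehat f'^{11}_{12}$'' also conflates the $4$-ary restriction with setting $(x_1,x_2)=(1,1)$, which must vanish only at weight-$0$ and weight-$2$ inputs, with the genuine goal of killing weight-$2$ and weight-$4$ entries of $\widehat f'$.

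What actually does the work in the paper is a parity argument you do not have. The gadget $G_1$ (extending $x_3$ with $\widehat{b_1}$ and $x_5$ with $\widehat{b_2}$) commutes with the merge $\widehat\partial_{12}$. Under $G_1$ each factor $\widehat{b_i}$ of $\widehat{\partial}_{12}\widehat f$ gets mated with itself via $\neq_2$, which produces $\neq_2$ up to a nonzero real scalar regardless of whether $\widehat{b_i}$ has even or odd parity. Hence $\widehat{\partial}_{12}\widehat f' = (\neq_2)\otimes(\neq_2) \in \mathcal D^{\otimes}$, which has even parity. From there one argues: since $\widehat f' \in \widehat\int\widehat{\mathcal O}^{\otimes}$ (else Lemma~\ref{lem-4-ary} finishes), every $\widehat{\partial}_{ij}\widehat f'$ has parity, and a consistency argument comparing $\widehat{\partial}_{(ij)(12)}\widehat f'$ with $\widehat{\partial}_{(12)(ij)}\widehat f'$ (using that a nonzero signature cannot have both parities) forces them all to have even parity. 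Finally Lemma~\ref{lem-zero_2} in contrapositive gives $\widehat f'(\alpha)=0$ at all $\alpha$ with ${\rm wt}(\alpha)=2$ or $4$. Without the commutativity observation making $\widehat{\partial}_{12}\widehat f'$ literally a tensor of $\neq_2$'s, and without the parity-propagation step and the appeal to Lemma~\ref{lem-zero_2}, your proposal does not establish the conclusion.
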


\begin{proof}
Since $\widehat{f}\notin \widehat{\mathcal{O}}^{\otimes}$, $\widehat f\not\equiv 0$.
Again, we may assume that $\widehat{f}$ is irreducible.
Otherwise, by factorization, we can realize a nonzero signature of odd arity, or a signature of arity $2$ or $4$ that is not in $\widehat{\mathcal{O}}^{\otimes}$.
Then by Theorem~\ref{odd-dic}, or Lemmas~\ref{lem-2-ary} or \ref{lem-4-ary}, we get \#P-hardness.
 Under the assumption that $\widehat{f}$ is irreducible, we may further assume that $\widehat{f}$ satisfies {\sc 2nd-Orth} by Lemma~\ref{second-ortho}.
Also, we may assume that $\widehat{f}\in \int\widehat{\mathcal{O}}^{\otimes}$.
Otherwise, there is a pair of indices $\{i, j\}$ such that the 4-ary signature $\widehat{\partial}_{ij}\widehat{f}\notin \widehat{\mathcal{O}}^{\otimes}$. 
Then by Lemma \ref{lem-4-ary}, $\holant{\neq_2}{\widehat{\mathcal{F}}}$ is \#P-hard.

If for all pairs of indices $\{i, j\}$,  $\widehat{\partial}_{ij}\widehat{f}\equiv 0$, then by Lemma \ref{lem-zero_2}, we have ${\widehat f}(\alpha)=0$ for all $\alpha$ with ${\rm wt}(\alpha)\neq 0$ and $6$. 
Since $f\not\equiv 0$, clearly such a signature  does not satisfy {\sc 2nd-Orth}. Contradiction.
Otherwise, 
there is a pair of indices $\{i, j\}$ such that $\widehat{\partial}_{ij}\widehat{f}\not\equiv 0$. 
By renaming variables, without loss of generality, we assume that $\widehat{\partial}_{12}\widehat{f}\not\equiv 0$.
Since $\widehat{\partial}_{12}\widehat{f}\in \widehat{\mathcal{O}}^{\otimes}$,  in the UPF of $\widehat{\partial}_{12}\widehat{f}$, by renaming variables we assume that variables $x_3$ and $x_4$ appear in one  nonzero binary signature $\widehat{b_1}(x_3, x_4)\in \widehat{\mathcal{O}}^{\otimes}$, and variables $x_5$ and $x_6$ appear in the other nonzero binary  signature $\widehat{b_2}(x_5, x_6) \in \widehat{\mathcal{O}}^{\otimes}$.
Thus, we have 
\begin{equation*}
\widehat{\partial}_{12}\widehat{f}=  \widehat{b_1}(x_3, x_4)\otimes \widehat{b_2}(x_5, x_6)\not\equiv 0.
\end{equation*}

By Lemma \ref{lin-wang}, we know that  these two binary signatures $\widehat{b_1}$ and $\widehat{b_2}$ are realizable by factorization. 
Note that for a nonzero binary signature $\widehat{b_i}(x_{2i+1}, x_{2i+2})\in \widehat{\mathcal{O}}$ $(i\in \{1, 2\})$, if we connect the variable $x_{2i+1}$ of two copies of $\widehat{b_{i}}(x_{2i+1}, x_{2i+2})$ using $\neq_2$ (mating two binary signatures), then we get $\neq_2$ up to a nonzero scalar.
We consider the following gadget construction $G_1$ on $\widehat{f}$. 
Recall that in the setting of $\holant{\neq}{\widehat{\mathcal{F}}}$, variables are connected using $\neq_2$.
For $i\in \{1, 2\}$, by a slight abuse of variable names, we connect the variable $x_{2i+1}$ of $\widehat{f}$ 
with the variable $x_{2i+1}$ of $\widehat{b_{i}}(x_{2i+1}, x_{2i+2})$. 
We get a signature $\widehat{f'}$ of arity $6$. 
Such a gadget construction does not change the irreducibility of $f$. Thus, $\widehat{f'}$ is irreducible.
Again, we may assume that $\widehat{f'}\in \widehat{\int}\widehat{\mathcal{O}}^{\otimes}$ and  $\widehat{f'}$ satisfies {\sc 2nd-Orth}. Otherwise, we are done.

Consider $\widehat{\partial}_{12}\widehat{f'}$. 
Since the above gadget construction $G_1$ does not touch variables $x_1$ and $x_2$ of $f$, the operation of forming $G_1$
commutes with the merging operation
$\widehat{\partial}_{12}$.
Thus, $\widehat{\partial}_{12}\widehat{f'}$ can be realized by performing the gadget construction $G_1$ on $\widehat{\partial}_{12}\widehat{f}$, which connects each binary signature $\widehat{b_{i}}$ $(i\in \{1, 2\})$ of $\widehat{\partial}_{12}\widehat{f}$ with another copy of itself using $\neq_2$ (in the mating fashion). 
Then, each $\widehat{b_{i}}$ in $\widehat{\partial}_{12}\widehat{f}$ is changed to $\neq_2$ up to a nonzero real scalar. 
After normalization and renaming variables, we have 
\begin{equation*}
\widehat{\partial}_{12}\widehat{f'}= (\neq_2)(x_3, x_4)\otimes (\neq_2)(x_5, x_6).
\end{equation*}
Since $\widehat{\partial}_{12}\widehat{f'}\in {\mathcal{D}}^{\otimes}$, for any $\{i, j\}$ disjoint with $\{1, 2\}$  we have $\widehat{\partial}_{(ij)(12)}\widehat{f'}\in {\mathcal{D}}^{\otimes}$, and hence $\widehat{\partial}_{ij}\widehat{f'}\not\equiv 0$.

Now, we show that for all pairs of indices $\{i, j\}$, $\widehat{\partial}_{ij}\widehat{f'}$ has even parity. 
We first consider the case that $\{i, j\}$ is disjoint with $\{1,2\}$. 
Connect variables $x_i$ and $x_j$ of $\widehat{\partial}_{12}\widehat{f'}$ using $\neq_2$. 
Since $\widehat{\partial}_{12}\widehat{f'}$ has even parity, a merging gadget using $\neq_2$ will change the parity from even to odd. 
Thus,
 $\widehat\partial_{(ij)(12)}\widehat{f'}$ has odd parity. 
Consider $\widehat\partial_{ij}\widehat{f'}$. 
Remember that $\widehat\partial_{ij}\widehat{f'}\not\equiv 0$ since $\widehat{\partial}_{(ij)(12)}\widehat{f'}\not\equiv 0$.
Since  $\widehat{f'}\in \widehat{\int}\widehat{\mathcal{O}}^{\otimes}$,   We have $\widehat\partial_{ij}\widehat{f'} \in \mathcal{O}^{\otimes}$.
Thus, $\widehat\partial_{ij}\widehat{f'}$ has (either odd or even) parity. 
For a contradiction, suppose that it has odd parity.
Then,  $\widehat\partial_{(12)(ij)}\widehat{f'}$ has even parity since it is realized by merging using $\neq_2$.
A  signature that has  both even parity and odd parity
is identically zero. Thus $\widehat\partial_{(12)(ij)}\widehat{f'}$ is the  zero signature.
 However, since $\widehat{\partial}_{(ij)(12)}\widehat{f'} \in \mathcal{D}^{\otimes}$,
 it is not  the  zero signature. 
Contradiction. 
Therefore, $\widehat{\partial}_{ij}\widehat{f'}$ has even parity for all $\{i,j\}$ disjoint with $\{1,2\}$. 

Then, consider $\widehat{\partial}_{ij}\widehat{f'}$ for $\{i, j\}\cap\{1, 2\}\neq \emptyset$. 
If $\{1, 2\}=\{i, j\}$, then clearly, $\widehat\partial_{12}\widehat{f'}$ has even parity. 
Otherwise,
without loss of generality, 
we may assume that $i=1$ and $j\neq 2$.
Consider $\widehat\partial_{1j}\widehat{f'}$ for $3\leqslant j \leqslant 6$. 
If it is a zero signature, then it has even parity.
Otherwise, $\widehat\partial_{1j}\widehat{f'}\not\equiv 0$.
Since $\widehat{\partial}_{1j}\widehat{f'}\in \widehat{\mathcal{O}}^{\otimes}$, we assume that it has the following UPF
$$\widehat\partial_{1j}\widehat{f'}=\widehat{b'_1}(x_2, x_u)\otimes \widehat{b'_2}(x_v, x_w).$$
By connecting variables $x_u$ and $x_v$ of $\widehat{\partial}_{1j}\widehat{f'}$ using $\ne_2$, 
we get $\widehat\partial_{(uv)(1j)}\widehat{f'}$.
Since the merging gadget connects two nonzero binary signatures in $\widehat{\mathcal{O}}$, the resulting signature is a nonzero binary signature. 
Thus, $\widehat\partial_{(uv)(1j)}\widehat{f'}\not\equiv 0$.
Notice that $\{u, v\}$ is disjoint with $\{1, 2\}$. As showed above, $\widehat\partial_{uv}\widehat{f'}$ has even parity.
Then,  $\widehat\partial_{(1j)(uv)}\widehat{f'}$ has odd parity. 
For a contradiction, suppose that $\widehat\partial_{1j}\widehat{f'}$ has odd parity.
Then $\widehat\partial_{(uv)(1j)}\widehat{f'}$ has even parity. 
But a  nonzero signature $\widehat\partial_{(uv)(1j)}\widehat{f'}$ cannot have both even parity and odd parity. Contradiction. 
Thus,  $\widehat\partial_{1j}\widehat{f'}$ has even parity.

We have proved that $\widehat{\partial}_{ij}\widehat{f'}$ has even parity for all pairs of indices $\{i, j\}$.
In other words, for all pairs of indices $\{i, j\}$ and all $\beta\in \mathbb{Z}_2^4$ with ${\rm wt}(\beta)=1$ or $3$, we have
$(\widehat{\partial}_{ij}\widehat{f'})(\beta)=0$.
Then, by Lemma~\ref{lem-zero_2},
$\widehat{f'}(\alpha)=0$ for all $\alpha$ with ${\rm wt}(\alpha)=2$ or $4$. 
Clearly,
$\widehat{f'}$ is realized by extending  $\widehat{f}$ with nonzero binary signatures in $\widehat{\mathcal{O}}$ that are   realized
 by factorization from $\widehat{\partial}_{12}\widehat{f}.$
\end{proof}

\begin{lemma}\label{lem-arity6-2}
Suppose that $\widehat{\mathcal{F}}$ contains an irreducible $6$-ary  signature $\widehat{f'}$ where
$\widehat{f'}(\alpha)=0$ for all $\alpha$ with ${\rm wt}(\alpha)=2$ or $4$. Then,
  $\holant{\neq_2}{\widehat{\mathcal{F}}}$ is \#P-hard, or
 $\mathscr{S}(\widehat{f'})=\mathscr{O}_6=\{\alpha\in \mathbb{Z}^6_2\mid {\rm wt}(\alpha) \text{ is } \rm odd\}$ and all nonzero entries of $\widehat{f'}$ have the same norm.
\end{lemma}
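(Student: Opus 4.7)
I would begin with the standard reductions: since $\widehat{f'}$ is irreducible of arity $6\geqslant 4$, Lemma~\ref{second-ortho} allows us to assume that $\widehat{f'}$ satisfies {\sc 2nd-Orth}, and we may further assume $\widehat{f'}\in\widehat{\int}\widehat{\mathcal{O}}^{\otimes}$, since otherwise some 4-ary signature $\widehat{\partial}_{ij}\widehat{f'}$ is not in $\widehat{\mathcal{O}}^{\otimes}$ and Lemma~\ref{lem-4-ary} directly yields \#P-hardness. By hypothesis the support of $\widehat{f'}$ is contained in the weights $\{0,1,3,5,6\}$, and ARS gives $\widehat{f'}(\vec{1})=\overline{\widehat{f'}(\vec{0})}$; set $a=\widehat{f'}(\vec{0})$.

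The key step is to exploit the orthogonality relation $\langle\widehat{{\bf f'}}_{ij}^{00},\widehat{{\bf f'}}_{ij}^{01}\rangle=0$ from {\sc 2nd-Orth}. Fixing $(i,j)=(1,2)$ and letting $\gamma$ range over $\{0,1\}^4$, a coordinate contributes only when both $\widehat{f'}(0,0,\gamma)\neq 0$ and $\widehat{f'}(0,1,\gamma)\neq 0$; the weight restriction forces $\mathrm{wt}(\gamma)\in\{0,1,3\}\cap\{0,2,4\}=\{0\}$. Hence the orthogonality reduces to $a\cdot\overline{\widehat{f'}(0,1,\vec{0})}=0$. Running the same argument over all pairs $\{i,j\}$ yields the dichotomy: either $a=0$, or every weight-$1$ entry of $\widehat{f'}$ vanishes (and hence, by ARS, so does every weight-$5$ entry).

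If $a=0$, then $\mathscr{S}(\widehat{f'})\subseteq\mathscr{O}_6$. I then iterate the partition-plus-ARS argument used to derive equation (\ref{eqn:arity>=6-basecase}) down to more fixed variables, to show that the squared norms of any two entries of $\widehat{f'}$ at odd weights agree. Since $\widehat{f'}\not\equiv 0$ this common value is strictly positive, which forces every odd-weight position to be nonzero; thus $\mathscr{S}(\widehat{f'})=\mathscr{O}_6$ with all nonzero entries of equal norm, as required.

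The main obstacle will be ruling out the other case, where $a\neq 0$ and $\mathscr{S}(\widehat{f'})\subseteq\{\vec{0},\vec{1}\}\cup\{\alpha:\mathrm{wt}(\alpha)=3\}$. Here some $\widehat{\partial}_{ij}\widehat{f'}$ must be nonzero (by Lemma~\ref{lem-zero_2}, since pure $\{\vec{0},\vec{1}\}$-support is incompatible with {\sc 2nd-Orth} when $a\neq 0$), so it is a nonzero 4-ary EO signature lying in $\widehat{\mathcal{O}}^{\otimes}$ and must factor as a tensor product of two $\neq_2$-type binary signatures. Pushing this structure back to $\widehat{f'}$ via the remaining orthogonality relations $\langle\widehat{{\bf f'}}_{ij}^{01},\widehat{{\bf f'}}_{ij}^{10}\rangle=0$ and analyzing the forced sign patterns among weight-$3$ entries, I expect to derive either a product-of-signs contradiction analogous to the closing argument in the proof of Lemma~\ref{lem-4-ary}, or to realize by gadget a signature whose presence yields \#P-hardness via Theorem~\ref{hard-result} or Theorem~\ref{eight-vertex}, thereby completing the proof.
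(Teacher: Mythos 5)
Your use of the orthogonality constraint $\langle\widehat{{\bf f'}}_{ij}^{00},\widehat{{\bf f'}}_{ij}^{01}\rangle=0$ is a clean observation that does not appear in the paper's proof: because the weight restriction on the support forces the contributing index $\gamma$ to $\vec{0}$, the relation collapses to $\widehat{f'}(\vec{0})\cdot\overline{\widehat{f'}(0,1,\vec{0})}=0$, yielding the dichotomy $a=0$ or (running over pairs) all weight-$1$ entries vanish. Together with the norm identity $a^2+2b^2=2c^2$ this eliminates the paper's range $0<b<1$ in one stroke, and the $a=0$ branch closes as you describe (giving $b=c>0$ and $\mathscr{S}(\widehat{f'})=\mathscr{O}_6$), though your phrase ``iterate the partition-plus-ARS argument'' is doing the work of the paper's equations~(\ref{eqn-7.3-1})--(\ref{eqn-7.3-6}), which crucially use the vanishing at weights $2,4$ and not just~(\ref{eqn:arity>=6-basecase}).

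The gap is the branch $a\neq 0$, $b=0$. Your plan --- a ``product-of-signs contradiction analogous to Lemma~\ref{lem-4-ary}'' --- does not go through as stated, because the weight-$3$ entries of $\widehat{f'}$ are unit-modulus \emph{complex} numbers, not $\pm 1$: the products $p_\ell=\widehat{f'}(01\gamma)\overline{\widehat{f'}(10\gamma)}$ live on the unit circle, and the closing step of Lemma~\ref{lem-4-ary} (a product of $\pm 1$'s equal to $-1$ against a product of nonnegative norms) has no direct analogue. What is actually needed is the paper's set-$P$ argument: with $b=0$ the relation $\langle\widehat{{\bf f'}}_{12}^{01},\widehat{{\bf f'}}_{12}^{10}\rangle=0$ gives $p_2+p_3+p_4=0$ with $|p_\ell|=1$; the hypothesis $\widehat{\partial}_{12}\widehat{f'}\in\widehat{\mathcal{O}}^{\otimes}$ forces exactly one $p_\ell=-1$ and the other two equal to $e^{\pm\ii\pi/3}$; one then shows that every ratio $\widehat{f'}(\alpha)\overline{\widehat{f'}(\alpha')}$ for weight-$3$ $\alpha,\alpha'$ at Hamming distance $2$ lies in $P=\{-1, e^{\ii\pi/3}, e^{-\ii\pi/3}\}$, and a three-term multiplicative chain $q_{12}q_{23}=q_{13}$ contradicts the fact that $P$ is not closed under products. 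The alternative you suggest --- realizing a hard signature by gadget and invoking Theorem~\ref{hard-result} or Theorem~\ref{eight-vertex} --- is not available in this branch, since the structural assumptions in force are exactly those where no such gadget is known to exist.
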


\begin{proof}


Since $\widehat{f'}$ is irreducible, again  we may assume that $\widehat{f'}$ satisfies {\sc 2nd-Orth} and $\widehat{f'}\in 
\widehat\int \mathcal{\widehat{O}}^{\otimes}.$
Let $\{i, j, k, \ell\}$ be an arbitrarily chosen subset of indices from $\{1, \ldots, 6\}$, and $\{m, n\}$ be the other two indices.
Then by equation (\ref{e6}), and the condition that $\widehat{f'}$ 
vanishes at weight 2 and 4,
we have 
\begin{equation}\label{eqn-7.3-1}
 |\widehat{{\bf f'}}_{ijk\ell}^{0001}|^2=|\widehat{f'}^{000100}_{ijk\ell mn}|^2+|\widehat{f'}^{000111}_{ijk\ell mn}|^2=|\widehat{f'}^{001000}_{ijk\ell mn}|^2+|\widehat{f'}^{001011}_{ijk\ell mn}|^2=|\widehat{{\bf f'}}_{ijk\ell}^{0010}|^2.
 \end{equation}
Also, by considering indices $\{k, \ell, m, n\}$, we have 
\begin{equation}\label{eqn-7.3-2}
 |\widehat{{\bf f'}}_{k\ell mn}^{0100}|^2=|\widehat{f'}^{000100}_{ijk\ell mn}|^2+|\widehat{f'}^{110100}_{ijk\ell mn}|^2=|\widehat{f'}^{001000}_{ijk\ell mn}|^2+|\widehat{f'}^{111000}_{ijk\ell mn}|^2=|\widehat{{\bf f'}}_{k \ell mn}^{1000}|^2.
 \end{equation}
By {\sc ars}, we have 
\begin{equation}\label{eqn-7.3-3}
|\widehat{f'}^{000111}_{ijk\ell mn}|^2=|\widehat{f'}^{111000}_{ijk\ell mn}|^2,
\end{equation}
and 
\begin{equation}\label{eqn-7.3-4}
|\widehat{f'}^{001011}_{ijk\ell mn}|^2=|\widehat{f'}^{110100}_{ijk\ell mn}|^2.
\end{equation}
By calculating (\ref{eqn-7.3-1}) $+$ (\ref{eqn-7.3-2}) $-$ (\ref{eqn-7.3-3}) $-$ (\ref{eqn-7.3-4}), we have 
\begin{equation}\label{eqn-7.3-5}
|\widehat{f'}^{000100}_{ijk\ell mn}|^2=|\widehat{f'}^{001000}_{ijk\ell mn}|^2.
\end{equation}
By (\ref{eqn-7.3-1}) $-$ (\ref{eqn-7.3-5}), we have 
\begin{equation}\label{eqn-7.3-6}
|\widehat{f'}^{000111}_{ijk\ell mn}|^2=|\widehat{f'}^{001011}_{ijk\ell mn}|^2.
\end{equation}
From (\ref{eqn-7.3-5}),
since the indices $(i,j,k,\ell,m,n)$ can be an arbitrary permutation of $(1,2,3,4,5,6)$, 
for all $\alpha$, $\beta\in \mathbb{Z}_2^6$ with ${\rm wt}(\alpha)={\rm wt}(\beta)=1$, we have $|\widehat{f'}(\alpha)|=|\widehat{f'}(\beta)|$.
The same statement holds for
${\rm wt}(\alpha)={\rm wt}(\beta)=3$, by (\ref{eqn-7.3-6}).

Let $a=|\widehat{f'}(\vec 0^{6})|$;
by {\sc ars}, $a=|\widehat{f'}(\vec 1^{6})|$ as well. 
It is the norm of entries of $\widehat{f'}$ on input of Hamming weight $0$ and $6$.
We use $b$ to denote  the norm of entries of $\widehat{f'}$ on inputs of Hamming weight $1$.
By {\sc ars}, $b$ is also the norm of entries of $\widehat{f'}$ on inputs of Hamming weight $5$.
We use $c$ to denote the norm of entries of $\widehat{f'}$ on inputs of Hamming weight $3$.
Remember that by assumption, $|\widehat{f'}(\alpha)|=0$ if ${\rm wt}(\alpha)=2$ or $4$.

By equation (\ref{e5}), we have 
$$ |\widehat{{\bf f'}}_{1234}^{0000}|^2=a^2+2b^2=|\widehat{{\bf f'}}_{1234}^{0011}|^2=2c^2.$$
Clearly, we have $0 \leqslant a,b\leqslant c$. If $c=0$, then $a=b=0$ which implies that $\widehat{f'}$ is a zero signature. This is a contradiction since $\widehat{f'}$ is irreducible. 
Therefore $c\neq 0$. We  normalize  $c$ to $1$. Then
\[a^2 + 2b^2 = 2.\]
 We will show that $b=1$ and $a=0$. 
 This will finish the proof of the lemma.
 For a contradiction, suppose that $b<1$, then we also have  $a>0$.

Consider signatures $\widehat{{f'}}_{12}^{01}$, $\widehat{{f'}}_{12}^{10}$ and $\widehat{\partial}_{12}\widehat{f'}=\widehat{{f'}}_{12}^{01}+\widehat{{f'}}_{12}^{10}$.
Since $\widehat{f'}(\alpha)=0$ for all $\alpha$ with ${\rm wt}(\alpha)= 2 \text{ or } 4$, $\widehat{f'}^{01}_{12}(\beta)=0$ and $\widehat{f'}^{10}_{12}(\beta)=0$ for all $\beta$ with ${\rm wt}(\beta) = 1$ or $3$. 
Thus, $\widehat{{f'}}_{12}^{01}$ and $\widehat{{f'}}_{12}^{10}$ have even parity.
We also consider the complex inner product $\langle\widehat{{\bf f'}}_{12}^{01}, \widehat{{\bf f'}}_{12}^{10}\rangle$. 
First we build the following table.

\begin{table}[!h]
\renewcommand{\arraystretch}{2}
\centering
\begin{tabular}{ |c|c|c| c| c| c| c| c| c| } 
 \hline
$\widehat{{f'}}_{12}^{01}$ & $\widehat{f'}^{010000}$ & $\widehat{f'}^{010011}$ & $\widehat{f'}^{010101}$ & $\widehat{f'}^{010110}$ & $\widehat{f'}^{011001}$  & $\widehat{f'}^{011010}$ & $\widehat{f'}^{011100}$ & $\widehat{f'}^{011111}$\\ 
  \hline
$\widehat{{f'}}_{12}^{10}$ & $\widehat{f'}^{100000}$ & $\widehat{f'}^{100011}$ & $\widehat{f'}^{100101}$ & $\widehat{f'}^{100110}$ & $\widehat{f'}^{101001}$  & $\widehat{f'}^{101010}$ & $\widehat{f'}^{101100}$ & $\widehat{f'}^{101111}$\\ 
 \hline
 $\widehat{\partial}_{12}\widehat{f'}$ & $s_1$ & $s_2$ & $s_3$ & $s_4$ & $\overline{s_4}$ &  $\overline{s_3}$ &  $\overline{s_2}$ &  $\overline{s_1}$ \\ 
 \hline
 $\langle\widehat{{\bf f'}}_{12}^{01}, \widehat{{\bf f'}}_{12}^{10}\rangle$ & $p_1$ & $p_2$ & $p_3$ & $p_4$ & $p_4$ & $p_3$ & $p_2$ & $p_1$\\ 
 \hline
\end{tabular}
\caption{Entries of $\widehat{{f'}}_{12}^{01}$, $\widehat{{f'}}_{12}^{10}$, $\widehat{\partial}_{12}\widehat{f'}$ and pairwise product terms in $\langle\widehat{{\bf f'}}_{12}^{01}, \widehat{{\bf f'}}_{12}^{10}\rangle$ on even-weighed inputs}\label{table-12neq}
\end{table}


In Table \ref{table-12neq}, we call these four rows by Row 1, 2, 3 and 4 respectively and these nine columns by Column 0, 1, \ldots and 8 respectively. 
We use $T_{i,j}$ to denote the cell in Row $i$ and Column $j$.  Table \ref{table-12neq} is built as follows. 
\begin{itemize}
    \item 
In Row 1 and Row 2, we list the entries of signatures $\widehat{{ f'}}_{12}^{01}$ and $\widehat{{ f'}}_{12}^{10}$ that are on even-weighted inputs (excluding the first two bits that are pinned) respectively. 
Note that, those that did not appear are  0 entries  on odd-weighted inputs (excluding the first two bits that are pinned) of the signatures $\widehat{ f'}_{12}^{01}$ and $\widehat{{ f'}}_{12}^{10}$, since $\widehat{{ f'}}_{12}^{01}$ and $\widehat{{ f'}}_{12}^{10}$ have even parity. 
\item In Row 3, we list the corresponding entries of the signature $\widehat{\partial}_{12}\widehat{f'}=\widehat{{ f'}}_{12}^{01}+\widehat{{ f'}}_{12}^{10}$, i.e., $T_{3,j}=T_{1,j}+T_{2,j}$ for $1\leqslant j \leqslant 8$.
\item In Row 4, we list the corresponding  items in the complex inner product $\langle\widehat{{\bf f}'}_{12}^{01}, \widehat{{\bf f}'}_{12}^{10}\rangle$, i.e., $T_{4,j}=T_{1,j}\cdot \overline{T_{2,j}}$ for $1\leqslant j \leqslant 8$.
\end{itemize}
For $1\leqslant j\leqslant 8$, we consider the  entry in $T_{1,j}$ and  the entry in $T_{2, 9-j}$.
By {\sc ars}, we have $T_{1,j}=\overline{T_{2, 9-j}}$
because their corresponding inputs
are complement of each other.
Thus,  $$T_{3,j}=T_{1,j}+T_{2,j}=\overline{T_{2, 9-j}}+\overline{T_{1, 9-j}}=\overline{T_{3, 9-j}},$$ and 
$$T_{4,j}=T_{1,j}\cdot \overline{T_{2,j}}=\overline{T_{2,9-j}}\cdot T_{2, 9-j}=T_{4, 9-j}.$$
We use $s_1, \ldots, s_4$ to denote the values in $T_{3,1}, \ldots, T_{3, 4}$ and $p_1, \ldots, p_4$ to denote the values in $T_{4,1}, \ldots, T_{4, 4}.$
Correspondingly, the values in $T_{3,5}, \ldots, T_{3, 8}$ are $\overline{s_4}, \ldots, \overline{s_1}$ and the values in $T_{4,5}, \ldots, T_{4,8}$ are $p_4, \ldots, p_1$.
We also use $x_j$ and $y_j$ $(1\leqslant j\leqslant 8)$ to denote the entries in $T_{1,j}$ and $T_{2,j}$ respectively.

 By {\sc 2nd-Orth}, we have $\langle\widehat{{\bf f'}}_{12}^{01}, \widehat{{\bf f'}}_{12}^{10}\rangle=2(p_1+p_2+p_3+p_4)=0$. 
 Also we have $|p_1|=b^2$ and $|p_2|=|p_3|=|p_4|=1$. 
Notice the fact that if $x_i+y_i=0$, then $x_i\cdot \overline{y_i}=x_i\cdot \overline{-x_i}=-|x_i|^2=-|x_i\cdot \overline{y_i}|$.
Thus,  if $s_1=0$ then $p_1=-|p_1|=-b^2$ and  for any $i=2, 3, 4$, if $s_i=0$  then $p_i=-1$.
Note that $\widehat{\partial}_{12}\widehat{f'}(\beta)=\widehat{{ f'}}_{12}^{01}(\beta)+\widehat{{ f'}}_{12}^{10}(\beta)=0$ for all $\beta$ with ${\rm wt}(\beta)=1$ or $3$.
Among all $16$ entries of  $\widehat{\partial}_{12}\widehat{f'}$, $s_1, \ldots, s_4, \overline{s_4}, \ldots, \overline{s_1}$ are those that are possibly nonzero.
Since $\widehat{\partial}_{12}\widehat{f'}\in \widehat{\mathcal{O}}^{\otimes}$, it has support of size either $4$ or $0$. 
Thus, among $s_1, s_2, s_3$ and $s_4$, either  exactly two of them are zero or they are all zero. 
There are three possible cases.

\begin{itemize}
    \item $s_1=s_2=s_3=s_4=0$. Then $p_1+p_2+p_3+p_4=-b^2-3\leqslant -3 \ne 0$. Contradiction. 
    \item $s_1\neq 0$ and two of $s_2, s_3$ and $s_4$ are zero. Without loss of generality, we may assume that $s_2=s_3=0$. Then $p_2=p_3=-1$.
    Since $p_1+p_2+p_3+p_4=0$, we have $p_1+p_4=-p_2-p_3=2$.
    Then, $2=|p_1+p_4|\leqslant|p_1|+|p_4|=b^2+1<2$. Contradiction.
    \item $s_1=0$ and one of $s_2, s_3$ and $s_4$ is zero. Without loss of generality, we may assume that $s_2=0$. 
    Then $p_1=-b^2$ and $p_2=-1$. Thus, $p_3+p_4=-p_1-p_2=1+b^2<2$. 
   Let $\theta=\arccos{\frac{1+b^2}{2}}$. We know that $0<\theta<\frac{\pi}{2}$. 
   Recall that $|p_3|=|p_4|=1$. Thus, $p_3= e^{\pm\ii\theta}$ and $p_4= e^{\mp\ii\theta}$ (and $p_3=\overline{p_4}$).
\end{itemize}
Let $P=\{-1, e^{\ii\theta}, e^{-\ii\theta}\}$. 
Thus,  $p_2, p_3, p_4\in P$. 
Otherwise, we get a contradiction.

Now, we consider signatures  $\widehat{\partial}_{ij}\widehat{f'}$ for all pairs of indices $\{i, j\}$. 
By symmetry, the same conclusion holds.
In other words, let $\{i, j\}$ be an arbitrarily chosen pair of indices  from $\{1, \ldots, 6\}$ and $\{k, \ell, m, n\}$ be the other four indices, and 
let $\beta\in \mathbb{Z}_2^4$ be an assignment on variables $(x_k, x_\ell, x_m, x_n)$ with ${\rm wt}(\beta)=2$. 
Then, we have  $\widehat{f'}_{ijk\ell mn}^{01\beta}\cdot \overline{\widehat{f'}_{ijk\ell mn}^{10\beta}}\in P.$
Since the indices $(i,j,k,\ell,m,n)$ can be an arbitrary permutation of $(1,2,3,4,5,6)$, we have $\widehat{f'}(\alpha)\cdot\overline{\widehat{f'}({\alpha'})}\in P$ for any two assignments $\alpha$ and $\alpha'$ on the six variables 
where ${\rm wt}(\alpha)={\rm wt}(\alpha')=3$ and ${\rm wt}(\alpha\oplus\alpha')=2$,
because for any such two strings  $\alpha$ and $\alpha'$, there exist two bit positions on which $\alpha$ and $\alpha'$ take values
01 and 10 respectively.

We consider the following three inputs $\alpha_1=100011$, $\alpha_2=010011$ and $\alpha_3=001011$ of $\widehat{f'}$. We have   $\widehat{f'}({\alpha_1})\cdot\overline{\widehat{f'}({\alpha_2})}=q_{12}\in P$, $\widehat{f'}({\alpha_2})\cdot\overline{\widehat{f'}({\alpha_3})}=q_{23}\in P$ and $\widehat{f'}({\alpha_1})\cdot\overline{\widehat{f'}({\alpha_3})}=q_{13}\in P.$
Recall that  $|\widehat{f'}({\alpha_2})|=1$ since ${\rm wt}(\alpha_2)=3$.
Then, $$q_{12}\cdot q_{23}=\widehat{f'}({\alpha_1})\cdot\overline{\widehat{f'}({\alpha_2})}\cdot \widehat{f'}({\alpha_2})\cdot\overline{\widehat{f'}({\alpha_3})}=|\widehat{f'}({\alpha_2})|^2\cdot\widehat{f'}({\alpha_1})\cdot\overline{\widehat{f'}({\alpha_3})}=q_{13}\in P.$$
However, since $0<\theta<\frac{\pi}{2}$,  it is easy to check that for any two (not necessarily distinct) elements in $P$, their product is not in $P$. 
Thus, we get a contradiction.
This proves that
$b=c=1$ and $a=0$.

Therefore we have proved that, $\mathscr{S}(\widehat{f'})=\mathscr{O}_6$,
and all its nonzero entries have the same norm that is normalized to $1$.
\end{proof}
\begin{lemma}\label{lem-arity6-3}
Suppose that $\widehat{\mathcal{F}}$ contains an irreducible $6$-ary signature $\widehat{f'}$ where $\mathscr{S}(\widehat f')=\mathscr{O}_6$ and $|\widehat f'(\alpha)|=1$ for all $\alpha\in\mathscr{S}(\widehat f')$. Then, 
  $\holant{\neq_2}{\widehat{\mathcal{F}}}$ is \#P-hard, or
   after a holographic transformation by some $\widehat{Q}=\left[\begin{smallmatrix}
{\overline \rho} & 0\\
0 & \rho\\
\end{smallmatrix}\right] \in\widehat{{\bf O}_2}$ where $\rho=e^{\ii\delta}$ and $0\leqslant \delta <\pi/2$, an irreducible 6-ary signature $\widehat{f''}$ and $=_2$ are realizable from $\widehat{f'}$ where $\mathscr{S}(\widehat f'')=\mathscr{O}_6$ and there  exists $\lambda=1$ or $\ii$ such that for all $\alpha\in \mathscr{S}(\widehat{f''})$,  $\widehat{f''}(\alpha)=\pm \lambda$, i.e.,  $\holant{\neq_2}{=_2, \widehat{f''},  \widehat{Q}\widehat{\mathcal{F}}}\leqslant_T\holant{\neq_2}{\widehat{\mathcal{F}}}$ where 
   $\widehat{f''}=\widehat{Q}\widehat{f'}$.
   Moreover, 
the nonzero binary signature $(\rho^2, 0, 0, \overline{\rho^2}) \in \widehat{\mathcal{O}}$ is realizable from $\widehat{\partial}_{ij}\widehat{f'}$ for some $\{i, j\}$.
\end{lemma}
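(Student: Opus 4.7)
The plan is to identify a pair $\{i,j\}$ whose merging $\widehat{\partial}_{ij}\widehat{f'}$ yields, by factorization, a binary signature of the even-parity form $(c,0,0,\bar{c})\in\widehat{\mathcal{O}}$ with $|c|=1$; to apply the corresponding holographic transformation $\widehat{Q}\in\widehat{{\bf O}_2}$ that sends this binary to $=_2$; and then to verify that the transformed signature $\widehat{f''}=\widehat{Q}\widehat{f'}$ has all nonzero entries equal to $\pm\lambda$ for a common $\lambda\in\{1,\ii\}$.

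As in the previous two lemmas, I first reduce to the case where $\widehat{f'}$ satisfies {\sc 2nd-Orth} and $\widehat{f'}\in\widehat{\int}\widehat{\mathcal{O}}^{\otimes}$; otherwise \#P-hardness follows from Lemmas~\ref{second-ortho} and~\ref{lem-4-ary}. Since $\widehat{f'}$ has nonzero entries of weight $1$, Lemma~\ref{lem-zero_2} gives some $\widehat{\partial}_{ij}\widehat{f'}\not\equiv 0$; this 4-ary merged signature has support contained in $\mathscr{E}_4$ and lies in $\widehat{\mathcal{O}}^{\otimes}$, so it factors as $\widehat{b_1}\otimes\widehat{b_2}$ with both $\widehat{b_i}\in\widehat{\mathcal{O}}$ of a common parity. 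In the case where at least one such pair produces even-parity factors, Lemma~\ref{lin-wang} realizes $\widehat{b_1}=(c,0,0,\bar{c})$; after a real normalization to $|c|=1$ and possibly replacing $c$ by $-c$ (a real scalar in $\widehat{\mathcal{F}}^{\otimes 1}$), we have $\arg c\in[0,\pi)$, and we write $c=\rho^2$ with $\rho=e^{\ii\delta}$, $\delta\in[0,\pi/2)$. Applying $\widehat{Q}=\left[\begin{smallmatrix}\bar{\rho} & 0\\ 0 & \rho\end{smallmatrix}\right]$ then sends $(c,0,0,\bar{c})$ to $(1,0,0,1)$, i.e.\ to $=_2$ up to a real scalar, so $=_2$ is realizable in $\holant{\neq_2}{=_2,\widehat{Q}\widehat{f'},\widehat{Q}\widehat{\mathcal{F}}}$; since $\widehat{Q}$ is diagonal, $\mathscr{S}(\widehat{f''})=\mathscr{O}_6$ and each entry $\widehat{f'}(\alpha)$ is scaled by $\rho^{2\,{\rm wt}(\alpha)-6}$.

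To conclude, it remains to show that every nonzero entry of $\widehat{f''}$ equals $\pm\lambda$ for a common $\lambda\in\{1,\ii\}$. The point is that after the transformation the factor $\widehat{b_1}$ becomes $(1,0,0,1)$, so every nonzero entry of $\widehat{\partial}_{ij}\widehat{f''}$ on its support equals $\pm 1$ times the corresponding entry of the transformed $\widehat{b_2}$; combined with the orthogonality identity $\langle\widehat{{\bf f''}}_{ij}^{01},\widehat{{\bf f''}}_{ij}^{10}\rangle=0$ from~(\ref{e7}) and the unit-norm condition on weight-$3$ entries, this forces the phases of all weight-$3$ entries of $\widehat{f''}$ to lie in $\{\pm 1,\pm\ii\}$ with a single coherent choice of $\lambda$, and the weight-$1$ and weight-$5$ entries are then determined by {\sc ars} together with the factorizations of the remaining $\widehat{\partial}_{k\ell}\widehat{f''}$. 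The main obstacle, and where the \#P-hardness alternative enters, is ruling out the fully odd-parity case in which every nonzero $\widehat{\partial}_{ij}\widehat{f'}$ factors with both factors of the form $(0,d,\bar{d},0)$: such factors, when used to extend a variable of $\widehat{f'}$ via $\neq_2$, act as diagonal elements of $\widehat{{\bf O}_2}$ and so cannot be eliminated by any holographic change of basis; one must instead exploit {\sc 2nd-Orth}, the crossed support pattern they force on $\widehat{f'}$, and the eight-vertex classification (Theorem~\ref{eight-vertex}) applied to an appropriate mating $\widehat{\mathfrak{m}}_{ij}\widehat{f'}$ to derive \#P-hardness directly.
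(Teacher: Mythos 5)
The proposal has two substantive gaps.

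\textbf{First, the ``fully odd-parity case'' you try to rule out via eight-vertex models is a phantom.}
You apply Lemma~\ref{lem-zero_2} only to get that some $\widehat{\partial}_{ij}\widehat{f'}\not\equiv 0$, but the lemma gives more: since $\widehat{f'}(\alpha)\neq 0$ for every $\alpha$ of weight $1$, taking $k=1$ in Lemma~\ref{lem-zero_2} produces a pair $\{i,j\}$ with $(\widehat{\partial}_{ij}\widehat{f'})(\vec{0}^{\,4})\neq 0$. A $4$-ary signature with a nonzero entry at $\vec{0}^{\,4}$ cannot factor as $(0,d,\bar d,0)\otimes(0,d',\bar{d'},0)$ (such a tensor vanishes at $\vec{0}^{\,4}$). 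So for this particular $\{i,j\}$ the factorization in $\widehat{\mathcal{O}}^{\otimes}$ is forced to be $(a,0,0,\bar a)\otimes(b,0,0,\bar b)$ with $ab\neq 0$, which is exactly what the transformation step needs. There is no alternative to dispatch, so the last paragraph of your proposal is answering a question that doesn't arise; the \#P-hardness alternative in the statement comes only from the reductions to {\sc 2nd-Orth} and $\widehat{\int}\widehat{\mathcal{O}}^{\otimes}$ (Lemmas~\ref{second-ortho},~\ref{lem-4-ary}), not from any eight-vertex case analysis inside this lemma.

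\textbf{Second, the claim that all nonzero entries of $\widehat{f''}$ equal $\pm\lambda$ for a common $\lambda\in\{1,\ii\}$ is asserted rather than proved, and the asserted mechanism does not capture the real argument.}
The needed step is first established for $\widehat{f'}$ itself (before the transformation). Returning to Table~\ref{table-12neq}, where now all $|p_i|=1$: $\widehat{\partial}_{12}\widehat{f'}\in\widehat{\mathcal{O}}^{\otimes 2}$ has support of size $0$ or $4$; if all $s_i=0$ then all $p_i=-1$ and $\sum p_i=-4\neq 0$ contradicts~(\ref{e7}); so exactly two of $s_1,\dots,s_4$ vanish, say $s_i,s_j$, whence $p_i=p_j=-1$, and {\sc 2nd-Orth} forces $p_\ell+p_k=2$ with $|p_\ell|=|p_k|=1$, hence $p_\ell=p_k=1$. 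This yields $x_t=\pm y_t$ for every entry, so any two weight-$3$ entries at Hamming distance $2$ agree up to sign. Chaining $\widehat{f'}^{000111}=\pm\widehat{f'}^{111000}$ with {\sc ars} ($\widehat{f'}^{000111}=\overline{\widehat{f'}^{111000}}$) forces $\widehat{f'}^{111000}\in\{\pm 1,\pm\ii\}$, and then all weight-$3$ entries are $\pm\lambda$ for one fixed $\lambda\in\{1,\ii\}$. Since $\widehat{Q}$ fixes the half-weight level, $\widehat{f''}$ inherits this. The extension to weight-$1,5$ entries is a genuinely \emph{separate} step that uses the newly available $=_2$ on the RHS: one merges with $=_2$ to form $\partial_{12}\widehat{f''}=\widehat{f''}_{12}^{00}+\widehat{f''}_{12}^{11}$ and runs the analogue with Table~\ref{table-12eq}. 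Your phrase ``determined by {\sc ars} together with the factorizations of the remaining $\widehat{\partial}_{k\ell}\widehat{f''}$'' does not supply this argument; $\neq_2$-merges alone only relate entries of equal Hamming weight, so they cannot by themselves tie the weight-$1$ entries to the weight-$3$ value $\lambda$.
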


\begin{proof}
Again,  we may assume that $\widehat{f'}$ satisfies {\sc 2nd-Orth} and $\widehat{f'}\in 
\widehat\int \mathcal{\widehat{O}}^{\otimes}.$
We first show that there  exists $\lambda=1$ or $\ii$ such that
for all $\alpha\in \mathscr{S}(\widehat{f'})$ with ${\rm wt}(\alpha)=3$, $\widehat{f''}(\alpha)=\pm \lambda$, or else we get  \#P-hardness.

 Let's revisit  Table \ref{table-12neq}. 
 Now we have $|p_1|=|p_2|=|p_3|=|p_4|=1$. 
Recall that for $1\leqslant i\leqslant 4$,  $s_i=0$ implies that $p_i=-1$. 
 Since $\widehat\partial_{12}\widehat{f'}\in \widehat{\mathcal{O}}^{\otimes 2}$, it has support of size $4$ or 0. 
Thus, among $s_1, s_2, s_3$ and $s_4$, either  exactly two of them are zero or they are all zero. 
If they are all zero, then we have $p_1+p_2+p_3+p_4=-4\neq 0$. 
This is a contradiction to  our assumption that $\widehat{f'}$ satisfies {\sc 2nd-Orth}. 
Thus, exactly two of  $s_1$, $s_2$, $s_3$ and $s_4$ are zeros. 
Suppose that they are $s_i$ and $s_j$. 
Recall that we use $x_i$ and $y_i$ $(1\leqslant i \leqslant 8)$ to denote the entries in Row 1 and Row 2 of Table~\ref{table-12neq}.
Thus $|x_i| = |y_i| =1$,
for $1\leqslant i\leqslant 8$.
Since $s_i=x_i+y_i=0$ and $s_j=x_j+y_j=0$, we have $x_i=-y_i$, and $x_j=-y_j$.
Also, since $s_i=s_j=0$, we have $p_i=p_j=-1$. 
Let $\{\ell, k\}=\{1,2,3,4\}\backslash\{i, j\}$.
Then, by {\sc 2nd-Orth}, we have $p_\ell+p_k=-p_i-p_j=2$. Since $|p_\ell|=|p_k|=1$, we have $p_\ell=p_k=1$.
Note that $p_\ell=x_\ell\cdot \overline{y_\ell}=1$ and also $1= |y_\ell| = y_\ell\cdot\overline{y_\ell}$. Thus, we have $x_\ell=y_\ell$. Similarly, $x_k=y_k$.
Thus, for all $1\leqslant i\leqslant 8$, $x_i=\pm y_i$.
Consider $\widehat{\partial}_{ij}\widehat{f'}$ for all pairs of indices $\{i, j\}$. 
By symmetry, the same conclusion holds.
Thus,
  $\widehat{f}(\alpha)=\pm\widehat{f}({\alpha'})$ for any two inputs $\alpha$ and $\alpha'$ on the six variables
  where ${\rm wt}(\alpha)={\rm wt}(\alpha')=3$ and ${\rm wt}(\alpha\oplus\alpha')=2$.
In particular, we have 
\[\widehat{f'}^{000111}=\varepsilon_1 \widehat{f'}^{001011}=\varepsilon_2 \widehat{f'}^{011001}=\varepsilon_3 \widehat{f'}^{111000},\]
where $\varepsilon_1, \varepsilon_2, \varepsilon_3=\pm 1$ independently. 
By {\sc ars}, we have $\widehat{f'}^{000111}=\overline{\widehat{f'}^{111000}}.$
\begin{itemize} 
\item
If $\widehat{f'}^{000111}=\widehat{f'}^{111000}=\overline{\widehat{f'}^{111000}},$ then $\widehat{f'}^{111000}=\pm 1.$
\item
If $\widehat{f'}^{000111}=-\widehat{f'}^{111000}=\overline{\widehat{f'}^{111000}},$ then $\widehat{f'}^{111000}=\pm \ii.$
\end{itemize}{}
Thus, there  exists $\lambda=1$ or $\ii$ such that $\widehat{f'}^{000111}=\pm \lambda$ and $\widehat{f'}^{111000}=\pm \lambda$.
Consider any $\alpha\in \mathbb{Z}_2^6$ with ${\rm wt}(\alpha)=3$. If $\alpha\in\{000111, 111000\}$, then clearly, $\widehat{f'}(\alpha)=\pm \lambda$. 
Otherwise, either ${\rm wt}(\alpha\oplus000111)=2$ or ${\rm wt}(\alpha\oplus111000)=2$.
Then, $\widehat{f'}(\alpha)=\pm \lambda$.
Thus, there  exists $\lambda=1$ or $\ii$ such that for all $\alpha\in\mathbb{Z}_2^6$ with ${\rm wt}(\alpha)=3$,  $\widehat{f'}(\alpha)=\pm \lambda$.

Since $\widehat{f'}(\alpha)\neq 0$ for all $\alpha$ with ${\rm wt}(\alpha)=1$, by Lemma \ref{lem-zero_2}, there exists a pair of indices $\{i, j\}$ such that $(\widehat{\partial}_{ij}\widehat{f'})^{0000}\neq 0$. 
Since $\widehat{\partial}_{ij}\widehat{f'}\in \mathcal{O}^{\otimes}$, it is of the form $(a, 0, 0, \bar a)\otimes (b, 0, 0, \bar b)$, where $ab\neq 0$, since no other factorization form in $\mathcal{O}^{\otimes}$ has a nonzero value at 0000.
By Lemma \ref{lin-wang}, we can realize the signature $\widehat{g}=(a, 0, 0, \bar a)$.
Here, we can normalize $a$ to $e^{\ii\theta}$ where $0\leqslant\theta<\pi$.
Then, let $\rho=e^{\ii \theta/2}.$
Clearly, $ 0\leqslant \theta/2 < \pi/2$.
Consider a holographic transformation by $\widehat{Q}=\left[\begin{smallmatrix}
{\overline \rho} & 0\\
0 & {\rho}\\
\end{smallmatrix}\right].$
Note that $(\neq_2)(\widehat{Q}^{-1})^{\otimes 2}=(\neq_2)$ and $\widehat{Q}^{\otimes 2}\widehat{g}=(1, 0, 0, 1).$
 The holographic transformation by $\widehat{Q}$ does not change $\neq_2$, but transfers $\widehat{g}=(a, 0, 0, \bar a)$ to $(=_2)=(1, 0, 0, 1)$.
Thus, we have 
$$\holant{\neq_2}{\widehat{g}, \widehat{f'},  \widehat{\mathcal F}}\equiv_T \holant{\neq_2}{=_2, \widehat{Q}\widehat{f'},  \widehat{Q}\widehat{\mathcal F}}.$$
We denote $\widehat{Q}\widehat{f'}$ by $\widehat{f''}$.
Note that $\widehat{Q}$ does not change those entries of $\widehat{f'}$ that are on half-weighted inputs. 
Thus, for all $\alpha$ with ${\rm wt}(\alpha)=3$, we have $\widehat{f''}({\alpha})=\pm \lambda$ for some $\lambda=1$ or $\ii$.
Also,  $\widehat{Q}$ does not change the parity and irreducibility of $\widehat{f'}$. 
Thus $\widehat{f''}$ has odd parity and  $\widehat{f''}$ is irreducible.
Again, we may assume that $\widehat{f''}$ satisfies {\sc 2nd-Orth} and $\widehat{f''}\in \widehat{\int}\widehat{\mathcal{O}}^{\otimes}$. Otherwise, we are done. 

In the problem $\holant{\neq_2}{=_2, \widehat{f''},  \widehat{Q}\widehat{\mathcal F}}$, 
we can connect two $\neq_2$ on the LHS using $=_2$ on the RHS, and then we can realize $=_2$ on the LHS. 
Thus, we can use $=_2$ to merge variables of $\widehat{f''}$. 
Therefore, we may further assume  $\widehat{f''}\in {\int}\widehat{\mathcal{O}}^{\otimes}$, i.e., $\partial_{ij}\widehat{f''}\in \widehat{\mathcal{O}}^{\otimes}$ for all pairs of indices $\{i, j\}$; 
otherwise, there exist two variables of $\widehat{f''}$ such that by merging these two variables using $=_2$, we can realize a 4-ary signature that is not in $\widehat{\mathcal{O}}^{\otimes}$, and then 
 by Lemma \ref{lem-4-ary} we are done.

Consider the signature $\partial_{12}\widehat{f''}=\widehat{{f''}}_{12}^{00}+\widehat{{f''}}_{12}^{11}$ and the inner product $\langle\widehat{{\bf f''}}_{12}^{00}, \widehat{{\bf f''}}_{12}^{11}\rangle$. Same as Table~\ref{table-12neq}, we build the following Table~\ref{table-12eq}.

\begin{table}[!hbtp]
\renewcommand{\arraystretch}{2}
\centering
\begin{tabular}{ |c|c|c| c| c| c| c| c| c| } 
 \hline
$\widehat{{f''}}_{12}^{00}$ & $\widehat{f''}^{000001}$ & $\widehat{f''}^{000010}$ & $\widehat{f''}^{000100}$ & $\widehat{f''}^{000111}$ & $\widehat{f''}^{001000}$  & $\widehat{f''}^{001011}$ & $\widehat{f''}^{001101}$ & $\widehat{f''}^{001110}$\\ 
  \hline
$\widehat{{f''}}_{12}^{11}$ & $\widehat{f''}^{110001}$ & $\widehat{f''}^{110010}$ & $\widehat{f''}^{110100}$ & $\widehat{f''}^{110111}$ & $\widehat{f''}^{111000}$  & $\widehat{f''}^{111011}$ & $\widehat{f''}^{111101}$ & $\widehat{f''}^{111110}$\\ 
 \hline
 ${\partial}_{12}\widehat{f''}$ & $t_1$ & $t_2$ & $t_3$ & $t_4$ & $\overline{t_4}$ &  $\overline{t_3}$ &  $\overline{t_2}$ &  $\overline{t_1}$ \\ 
 \hline
 $\langle\widehat{{\bf f''}}_{12}^{00}, \widehat{{\bf f''}}_{12}^{11}\rangle$ & $q_1$ & $q_2$ & $q_3$ & $q_4$ & $q_4$ & $q_3$ & $q_2$ & $q_1$\\ 
 \hline
\end{tabular}
\caption{Entries of $\widehat{{f''}}_{12}^{00}$, $\widehat{{f''}}_{12}^{11}$, ${\partial}_{12}\widehat{f''}$ and pair-wise product terms in $\langle\widehat{{\bf f''}}_{12}^{00}, \widehat{{\bf f''}}_{12}^{11}\rangle$ on odd-weighed inputs}\label{table-12eq}
\end{table}
Same as the proof of $x_i=\pm y_i$ for Table~\ref{table-12neq}, we have  $\widehat{f''}^{000001}=\pm \widehat{f''}^{110001}.$
Since $\widehat{f''}^{110001} =\pm \lambda$, $\widehat{f''}^{000001} = \pm \lambda$, (here  $\pm$ can be either $\pm$ or $\mp$).
Consider $\partial_{ij}\widehat{f''}$ for all pairs of indices $\{i, j\}$. By symmetry, the same conclusion holds.
Thus, for every $\alpha\in \mathbb{Z}_2^6$ with ${\rm wt}(\alpha)=1$, 
$\widehat{f''}(\alpha)=\pm \lambda.$
Therefore, using {\sc ars}, there  exists $\lambda=1$ or $\ii$ such that for all $\alpha\in \mathscr{S}(\widehat{f''})$, $\widehat{f''}(\alpha)=\pm \lambda$, and we have the reduction $$\holant{\neq_2}{=_2, \widehat{f''}, \widehat{Q}\widehat{\mathcal{F}}}\leqslant_T\holant{\neq_2}{\widehat{\mathcal{F}}}$$ for some $\widehat{Q}\in\widehat{{\bf O}_2}$.
Clearly, $\widehat{f''}=\widehat{Q}\widehat{f'}$ where $\widehat{Q}=\left[\begin{smallmatrix}
{\overline \rho} & 0\\
0 & \rho\\
\end{smallmatrix}\right] \in\widehat{{\bf O}_2}$, and the nonzero binary signature $(\rho^2, 0, 0, \overline{\rho^2}) \in \widehat{\mathcal{O}}$ is realizable from $\widehat{\partial}_{ij}\widehat{f'}$ for some $\{i, j\}$.
\end{proof}

Finally, we go for the kill in the next lemma.
Recall the signature $\widehat{f_6}$
defined in (\ref{eqn:definiton-f6}).
This \emph{Lord of Intransigence} at arity 6 makes its appearance
in Lemma~\ref{lem-arity6-4}.
\begin{lemma}\label{lem-arity6-4}
Suppose that $\widehat{\mathcal{F}}$ contains an irreducible $6$-ary signature $\widehat{f''}$ where $\mathscr{S}(\widehat{f''})=\mathscr{O}_6$, and 
there  exists $\lambda=1$ or $\ii$ such that for all $\alpha\in \mathscr{S}(\widehat{f''})$, $\widehat{f''}(\alpha)=\pm \lambda$.
Then, 
 $\holant{\neq_2}{=_2, \widehat{\mathcal{F}}}$ is \#P-hard, or
  $\widehat{f_6}$ is realizable from $\widehat{f''}$ and $=_2$,
   i.e., $\holant{\neq_2}{ \widehat{f_6}, \widehat{\mathcal{F}}}\leqslant_T\holant{\neq_2}{=_2,  \widehat{\mathcal{F}}}$.
   Moreover, $\widehat{f_6}$ is realizable by extending variables of $\widehat{f''}$ with binary signatures in ${\widehat{\mathcal{B}}}$,
   i.e., $\widehat{f_6}\in \{\widehat{f''}\}_{\neq_2}^{\widehat{\mathcal{B}}}.$ 
\end{lemma}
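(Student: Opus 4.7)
The plan is to first carry over the standard reductions. By Lemma~\ref{second-ortho} and Lemma~\ref{lem-4-ary} I may assume that $\widehat{f''}$ satisfies {\sc 2nd-Orth} and lies in $\widehat{\int}\widehat{\mathcal{O}}^{\otimes}$, and since $=_2$ is available on the RHS I may also assume $\partial_{ij}\widehat{f''}\in \widehat{\mathcal{O}}^{\otimes}$ for every pair $\{i,j\}$; otherwise some $4$-ary merge of $\widehat{f''}$ lies outside $\widehat{\mathcal{O}}^{\otimes}$ and Lemma~\ref{lem-4-ary} already yields \#P-hardness. In what follows I thus assume $\widehat{f''}$ is closed under merging by both $\neq_2$ and $=_2$, in addition to the second-order orthogonality.

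The key observation is that a $\widehat{\mathcal{B}}$-extension of a single variable of $\widehat{f''}$ acts as one of four transformations: the identity ($\neq_2$), a bit-flip ($=_2$), a sign-modulation by $(-1)^{y_i}$ times a phase $\pm\ii$ (from $\ii\neq_2^-$), or the composition of a bit-flip with this sign-phase modulation (from $-\ii=_2^-$). The semigroup generated by $\widehat{\mathcal{B}}$-extensions over all six variables therefore contains every combination of coordinate-wise bit-flips, linear sign twists $(-1)^{L(y)}$, and powers of $\ii$. My first step is to use one extension by $=_2$ on variable $x_1$ to shift the support of $\widehat{f''}$ from $\mathscr{O}_6$ onto $\mathscr{E}_6$; if $\lambda=\ii$, an additional extension by $\ii\neq_2^-$ on $x_1$ converts every nonzero entry from $\pm\ii$ to $\pm 1$. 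The resulting $\widehat{g}$ has support $\mathscr{E}_6$, entries in $\{\pm 1\}$, and still satisfies {\sc ars}, {\sc 2nd-Orth}, and closure under both mergings. A further five $\widehat{\mathcal{B}}$-extensions, one per remaining variable, can then normalize $\widehat{g}(\vec 0)=1$ and $\widehat{g}=+1$ on all five weight-$2$ strings containing $x_1$.

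The heart of the proof is then to show that the remaining signs of $\widehat{g}$ are uniquely determined and coincide exactly with those of $\widehat{f_6}$ given by the quadratic form $x_1x_2+x_2x_3+x_1x_3+x_1x_4+x_2x_5+x_3x_6$. For each pair $\{i,j\}$, the equation $\langle \widehat{\bf g}_{ij}^{00},\widehat{\bf g}_{ij}^{11}\rangle=0$ from {\sc 2nd-Orth} becomes a signed sum of four $\pm 1$ products of entries of weight $2$ and $4$, while the constraint $\widehat{\partial}_{ij}\widehat{g},\partial_{ij}\widehat{g}\in \widehat{\mathcal{O}}^{\otimes}$ forces each $4$-ary merge to be a tensor product of binaries in $\widehat{\mathcal{O}}$, which rigidly pins down the relative signs of these entries. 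Exploiting the $S_3$-symmetry between the ``triangle'' $\{1,2,3\}$ and the ``spokes'' $\{4,5,6\}$ that is manifest in $\widehat{f_6}$ cuts the case analysis down: the orthogonalities on the three triangle pairs fix the weight-$4$ entries inside the triangle; those on the three spoke pairs fix the weight-$4$ entries on the spokes; the remaining weight-$2$ entries (not touching $x_1$) are then forced by orthogonalities involving the determined weight-$4$ entries; and the weight-$6$ entry is forced by {\sc ars}. Any sign conflict encountered during this bookkeeping would produce a $4$-ary merge outside $\widehat{\mathcal{O}}^{\otimes}$ or a failure of the tensor-product form of an $\widehat{\mathfrak{m}}_{ij}\widehat{g}$, and hence \#P-hardness by Lemma~\ref{lem-4-ary} or by the eight-vertex argument underlying Lemma~\ref{lem-4.5}.

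The main obstacle I anticipate is the sheer bookkeeping in this last step: with $32$ sign entries, $15$ orthogonality equations, $15$ merging constraints, and a six-dimensional $\widehat{\mathcal{B}}$-extension symmetry, many near-misses must be ruled out. The cleanest route, I believe, is to consume all the extension freedom up front by the explicit normalization above, so that exactly one candidate remains, and that candidate is literally the matrix $M_{123,456}(\widehat{f_6})$ displayed in~(\ref{eqn:definiton-f6}); verifying that this unique candidate satisfies every orthogonality and merging constraint is then a finite check, which completes the proof.
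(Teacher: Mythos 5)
Your high-level outline matches the paper's strategy --- normalize $\widehat{f''}$ to a $\pm 1$-valued signature on $\mathscr{E}_6$, then use the {\sc 2nd-Orth}/merging constraints together with the remaining extension freedom to show the sign pattern is forced to be that of $\widehat{f_6}$ --- and the observation that the candidate matrix is unique up to the $S_4$-action and transpose on the $4\times 4$ representative is essentially the paper's Table~\ref{tab:16-entry} argument. However, there is a genuine gap in the first step.

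You write that ``an additional extension by $\ii\neq_2^-$ on $x_1$ converts every nonzero entry from $\pm\ii$ to $\pm 1$,'' and later normalize by $\widehat{\mathcal{B}}$-extensions on the other five variables. But in the setting of this lemma you do \emph{not} have $\widehat{\mathcal{B}}$ available: the RHS contains only $=_2$ together with $\widehat{\mathcal{F}}$, so the only freely usable Bell signatures are $\neq_2$ (the identity extension) and $=_2$ (the flip). The phase twists $(-\ii)\cdot=_2^-$ and $\ii\cdot\neq_2^-$ must first be \emph{realized} from $\widehat{f''}$ itself before you may extend by them. This is precisely where the paper does real work in the $\lambda=\ii$ case: it shows $\widehat{\partial}_{12}\widehat{f''}\not\equiv 0$ and lies in $\widehat{\mathcal{O}}^{\otimes 2}$ with all nonzero entries $\pm 2\ii$, hence factors as $2(a,0,0,\bar a)\otimes(b,0,0,\bar b)$ or $2(0,a,\bar a,0)\otimes(0,b,\bar b,0)$ with $ab,\bar ab,a\bar b,\bar a\bar b\in\{\pm\ii\}$, from which it deduces that \emph{exactly one} of the two binary factors is $(\ii,0,0,-\ii)$ or $(0,\ii,-\ii,0)$ up to a real scalar, and realizes that factor by Lemma~\ref{lin-wang}. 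Without this step, the claim $\widehat{f_6}\in\{\widehat{f''}\}^{\widehat{\mathcal{B}}}_{\neq_2}$ gives the ``moreover'' structural statement but does \emph{not} yield the Turing reduction $\holant{\neq_2}{\widehat{f_6},\widehat{\mathcal{F}}}\leqslant_T\holant{\neq_2}{=_2,\widehat{\mathcal{F}}}$, which is the actual content of the lemma.

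Two smaller points. First, your ``normalize $\widehat{g}(\vec 0)=1$ and all five weight-$2$ entries containing $x_1$ to $+1$'' is not quite achievable while staying inside real $\pm 1$ entries: a single extension by $\ii\neq_2^-$ introduces factors $\pm\ii$, so pure sign modulation $(-1)^{L(\alpha)}$ requires an even number of phase-twist extensions; the paper sidesteps this by working directly with the $4\times 4$ matrix $M_r(\widehat{f^\ast})$ and the looser symmetry group of row/column permutations, transpose, and a single global $\pm 1$ scaling. Second, the paper's bookkeeping is organized around two explicit facts --- all rows and columns of $M_r(\widehat{f^\ast})$ are pairwise orthogonal, and ${\rm perm}(M_r(\widehat{f^\ast})_{[1,2]})+{\rm perm}(M_r(\widehat{f^\ast})_{[3,4]})=0$ --- that together with the allowed symmetries force the specific matrix. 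Your proposal gestures at ``orthogonalities and the tensor-product form'' but does not isolate the permanent constraint, which is the piece that rules out, e.g., the all-$(-1)$-diagonal candidate. If you make the permanent identity explicit and replace the phantom $\widehat{\mathcal{B}}$-extensions by the realization argument via $\widehat{\partial}_{12}\widehat{f''}$, the proof goes through.
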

\begin{proof}
Again,  we may assume that $\widehat{f''}$ satisfies {\sc 2nd-Orth} and $\widehat{f''}\in 
\widehat\int \mathcal{\widehat{O}}^{\otimes}.$
Since  $=_2$ is available
on the RHS, given any signature $\widehat{f}
\in \widehat{\mathcal{F}}$, 
we can extend any variable $x_i$ of $\widehat{f}$  with  $=_2\in \widehat{\mathcal{B}}$ using
$\ne_2$.
This gives a signature $\widehat g$ where $\widehat g_i^0=\widehat f_i^1$ and $\widehat g_i^1=\widehat f_i^0$.
We call this extending gadget construction the flipping operation on variable $x_i$. 
Clearly, it does not change the reducibility or irreducibility   of $\widehat{f}$. But it changes the parity of $\widehat{f}$ if $\widehat{f}$ has parity.
Once a signature  $\widehat{f}$ is realizable, we can modify it by flipping some of its variables.

We first show that we can realize a signature $\widehat{f^\ast}$ from $\widehat{f''}$ having support $\mathscr{S}(\widehat{f^\ast})= \mathscr{E}_6 = \{\alpha\in\mathbb{Z}_2^6\mid {\rm wt}(\alpha)\equiv0 \mod 2\}$, and $\widehat{f^\ast}(\alpha)=\pm 1$ for all $\alpha\in \mathscr{S}(\widehat{f^\ast}).$
Remember that $=_2$ is available.
If we connect $=_2$ with an arbitrary variable of $\widehat{f''}$ using $\neq_2$, we 
 will change the parity of $\widehat{f''}$ from odd to even.
If $\widehat{f''}(\alpha)=\pm 1$ for all $\alpha \in \mathscr{S}(\widehat{f''})$, then $\widehat{f^\ast}$ can be realized by  flipping an arbitrary variable of $\widehat{f''}$.
Otherwise, $\widehat{f''}(\alpha)=\pm \ii$ for all $\alpha \in \mathscr{S}(\widehat{f''})$.
Consider $\widehat{\partial}_{12}\widehat{f''}.$
Look at Table \ref{table-12eq}. 
We use $x_i$ and $y_i$ $(1\leqslant i\leqslant 8)$ to denote entries in Row 1 and 2.
As we have showed, $x_i=\pm y_i$.
Thus, $t_i=\pm 2\ii$ or $0$ for $1\leqslant i \leqslant 4$.
Remember that if $t_i=0$ (i.e., $x_i=-y_i$), then $q_i=x_i\cdot \overline{y_i}=- x_i\cdot \overline{x_i} = -|x_i|^2=-1$.
If $t_i=0$ for all $1\leqslant i \leqslant 4$, then $$\langle\widehat{{\bf f''}}_{12}^{00}, \widehat{{\bf f''}}_{12}^{11}\rangle=2(q_1+q_2+q_3+q_4)=-4\neq 0.$$
This contradicts with our assumption that $\widehat{f''}$ satisfies {\sc 2nd-Orth}.
Thus, $t_i$ ($1\leqslant i \leqslant 4$) are not all zeros.
Then $(\widehat{\partial}_{12}\widehat{f''})\not\equiv 0$.
Thus, $\mathscr{S}({\widehat{\partial}_{12}\widehat{f''}})\neq\emptyset$ and $(\widehat{\partial}_{12}\widehat{f''})(\alpha)=\pm 2i$ for all $\alpha \in \mathscr{S}({\widehat{\partial}_{12}\widehat{f''}})$.

Since $\widehat{\partial}_{12}\widehat{f''}\in
\mathcal{\widehat{O}}^{\otimes}$ and it has even parity, 
$\widehat{\partial}_{12}\widehat{f''}$ is of the form $2\cdot(a, 0, 0, \bar a)\otimes (b, 0, 0, \bar b)$ or $2\cdot(0, a, \bar a, 0)\otimes(0, b, \bar b, 0)$, where   the norms of $a$ and $b$ are normalized to $1$.
In both cases, we have $ab, \bar ab, a\bar b, \bar a \bar b\in \{\ii, -\ii\}$.
Thus, $ab\cdot \bar a b=(a\bar a) b^2=b^2=\pm 1$.
Then, $b=\pm 1$ or $\pm \ii$. 
If $b=\pm 1$, then $a=a\bar b\cdot b=\pm \ii$.
Similarly, if $b=\pm \ii$, then $a=a\bar b\cdot b=\pm 1$.
Thus, among $a$ and $b$, exactly one is $\pm \ii$.
Thus, by factorization we can realize the binary signature $\widehat{g}=(\ii, 0, 0, -\ii)$ or $(0, \ii, -\ii, 0)$ up to a scalar $-1$.
Connecting an arbitrary variable of $\widehat{f}$ with a variable of $\widehat{g}$, 
 we can get a signature which has parity and all its nonzero entries have value $\pm 1$.
If the resulting signature has even parity, then we get the desired $\widehat{f^\ast}$.
If it has odd parity, then we can flip one of its variables to change the parity.  
Thus, we can realize a signature $\widehat{f^\ast}$ by extending variables of $\widehat{f''}$ with binary signatures in $\widehat{\mathcal{B}}^{\otimes}$ such that $\mathscr{S}(\widehat{f^\ast})=
\mathscr{E}_6$,
and $\widehat{f^\ast}(\alpha)=\pm 1$ for all $\alpha\in \mathscr{S}(\widehat{f^\ast}).$

Consider the following 16 entries of $\widehat{f^\ast}$.
In Table \ref{tab:16-entry}, we list 16 entries of $\widehat{f^\ast}$  with  $x_1x_2x_3=000, 011, 101, 110$ as the row index and  $x_4x_5x_6=000, 011, 101, 110$ as the column index.
We also view these 16 entries in Table \ref{tab:16-entry} as a 4-by-4 matrix denoted by $M_r(\widehat{f^\ast})$, and we call it the  representative matrix of $\widehat{f^\ast}$. 
Note that for any $\alpha\in \mathscr{S}(\widehat{f^\ast})$ such that the entry $\widehat{f^\ast}(\alpha)$ does not appear in $M_r(\widehat{f^\ast})$, 
$\widehat{f^\ast}(\overline{\alpha})$  appears in $M_r(\widehat{f^\ast})$. 
Since $\widehat{f^\ast}({\alpha})=\pm 1\in \mathbb{R}$, 
$\overline{\widehat{f^\ast}({\alpha})}={\widehat{f^\ast}({\alpha})}$.
By {\sc ars}, ${\widehat{f^\ast}(\overline{\alpha})}=\overline{\widehat{f^\ast}({\alpha})}={\widehat{f^\ast}({\alpha})}$.
Thus, the 16 entries of the matrix
$M_r(\widehat{f^\ast})$  listed
in  Table \ref{tab:16-entry} gives a complete account for all the
 32 nonzero entries of $\widehat{f^\ast}$.
\begin{table}[!h]
\renewcommand{\arraystretch}{2}
    \centering
    \begin{tabular}{|l|c|c|c|c|}
    \hline
    \diagbox[width=1.5in, height=7ex]{$x_1x_2x_3$}{$x_4x_5x_6$} &  000 (Col 1) &   011 (Col 2)&  101 (Col 3)&  110 (Col 4)\\
    \hline
   000 (Row 1)& $\widehat{f^\ast}^{000000}$      & $\widehat{f^\ast}^{000011}$ & $\widehat{f^\ast}^{000101}$ & $\widehat{f^\ast}^{000110}$ \\
            \hline
     011 (Row 2)& $\widehat{f^\ast}^{011000}$      & $\widehat{f^\ast}^{011011}$ & $\widehat{f^\ast}^{011101}$ & $\widehat{f^\ast}^{011110}$ \\            \hline
   101 (Row 3)& $\widehat{f^\ast}^{101000}$      & $\widehat{f^\ast}^{101011}$ & $\widehat{f^\ast}^{101101}$ & $\widehat{f^\ast}^{101110}$ \\
               \hline
     110 (Row 4)& $\widehat{f^\ast}^{110000}$      & $\widehat{f^\ast}^{110011}$ & $\widehat{f^\ast}^{110101}$ & $\widehat{f^\ast}^{110110}$ \\
   \hline
    \end{tabular}
    \caption{Representative entries of $\widehat{f^\ast}$}
    \label{tab:16-entry}
\end{table}

We use $(m_{ij})_{i,j=1}^4$ to denote the 16 entries of $M_r(\widehat{f^\ast})$.
We claim that  any two rows of $M_r(\widehat{f^\ast})$ are orthogonal;
this follows from the fact that $\widehat{f^\ast}$ satisfies {\sc 2nd-Orth} and {\sc ars}.
For example, consider the first two rows of $M_r(\widehat{f^\ast})$.
By {\sc 2nd-Orth}, the inner product  $\langle\widehat{{\bf f^\ast}}_{23}^{00}, \widehat{{\bf f^\ast}}_{23}^{11}\rangle$ for the real-valued $\widehat{f^\ast}$
is 
\[\sum_{(x_1, x_4, x_5, x_6) \in \mathbb{Z}_2^4} \widehat{f^\ast}^{x_1 00 x_4 x_5 x_6}
\widehat{f^\ast}^{x_1 11 x_4 x_5 x_6} =0,\]
where the sum has 8 nonzero product terms. 
The first 4 terms given by $x_1=0$ are the pairwise products $m_{1j}m_{2j}$,
for $1 \leqslant j \leqslant 4$. The second 4 terms  are, by {\sc ars},
the pairwise products $m_{2j}m_{1j}$ in the reversal order of $1 \leqslant j \leqslant 4$, where we exchange row 1 with row 2
on the account of flipping the summation index $x_1$   from 0 to 1, and simultaneously flipping both $x_2$ and $x_3$.
This shows that $\sum_{j=1}^4 m_{1j}m_{2j} = 0$.
Similarly  any two columns of  $M_r(\widehat{f^\ast})$ are orthogonal.

Also, we consider  the inner product $\langle\widehat{{\bf f^\ast}}_{14}^{00}, \widehat{{\bf f^\ast}}_{14}^{11}\rangle=0$. 
It is computed using the following $16$ entries  in $M_r(\widehat{f^\ast})$, listed in Table~\ref{tab:my_label}.

\begin{table}[!h]
    \centering
    \begin{tabular}{|c|c|c|c|c|c|c|c|}
\hline
  \rule{0pt}{20pt}   $\widehat{f^\ast}^{000000}$ &$\widehat{f^\ast}^{000011}$ 
     & $\widehat{f^\ast}^{010010}$
     & $\widehat{f^\ast}^{010001}$
     & $\widehat{f^\ast}^{001010}$
     & $\widehat{f^\ast}^{001001}$
     & $\widehat{f^\ast}^{011000}$
     &$\widehat{f^\ast}^{011011}$    \\
     $=m_{11}$ &  $=m_{12}$ &  $=m_{33}$ &  $=m_{34}$ &  $=m_{43}$ &  $=m_{44}$ &   $=m_{21}$ &  $=m_{22}$ \\
     \hline
   \rule{0pt}{20pt}  $\widehat{f^\ast}^{100100}$ &
     $\widehat{f^\ast}^{100111}$ &
     $\widehat{f^\ast}^{110110}$ &
     $\widehat{f^\ast}^{110101}$ &
     $\widehat{f^\ast}^{101110}$ &
       $\widehat{f^\ast}^{101101}$&
       $\widehat{f^\ast}^{111100}$&
       $\widehat{f^\ast}^{111111}$ \\
       $=m_{22}$ &  $=m_{21}$ &  $=m_{44}$ &  $=m_{43}$ &  $=m_{34}$ &  $=m_{33}$ &   $=m_{12}$ &  $=m_{11}$\\ \hline

    \end{tabular}
    \caption{Pair-wise product terms in $\langle\widehat{{\bf f^\ast}}_{14}^{00}, \widehat{{\bf f^\ast}}_{14}^{11}\rangle$ on even-weighed inputs}
    \label{tab:my_label}
\end{table}

Let $M_r(\widehat{f^\ast})_{[1,2]}$ be the 2-by-2 submatrix of $M_r(\widehat{f^\ast})$ by picking the first two rows and the first two columns, and  $M_r(\widehat{f^\ast})_{[3,4]}$ be the 2-by-2 submatrix of $M_r(\widehat{f^\ast})$ by picking the last two rows and the last two columns.
Indeed, 
\begin{equation*}
    \begin{aligned}
\langle\widehat{{\bf f^\ast}}_{14}^{00}, \widehat{{\bf f^\ast}}_{14}^{11}\rangle&=2({\rm perm}(M_r(\widehat{f^\ast})_{[1,2]})+{\rm perm}(M_r(\widehat{f^\ast})_{[3,4]}))\\
&=2(m_{11}m_{22}+m_{12}m_{21}+m_{33}m_{44}+m_{34}m_{43})=0.
    \end{aligned}
    \end{equation*}

Then,
we show that by renaming or flipping variables of $\widehat{f^\ast}$, 
we may modify $\widehat{f^\ast}$ to realize a signature whose representative matrix is obtained by performing row permutation, column permutation, or matrix transpose on $M_r(\widehat{f^\ast})$.
First, if we exchange the names of variables $(x_1, x_2, x_3)$ with variables $(x_4, x_5, x_6)$, then the representative matrix  $M_r(\widehat{f^\ast})$ will be transposed. 
Next, consider the group $\mathfrak G$ of permutations on the rows $\{1,2,3,4\}$ effected by any sequence of   operations of renaming and flipping  variables in $\{x_1, x_2, x_3\}$. By renaming variables in $\{x_1, x_2, x_3\}$, we can switch any two rows among Row 2, 3 and 4.
Thus $S_3$ on $\{2,3,4\}$ is contained in
$\mathfrak G$.
Also, if we flip both variables $x_2$ and $x_3$ of $\widehat{f^\ast}$, 
then for the realized signature, its representative matrix can be obtained by  switching both the pair Row 1 and Row 2, and 
the pair Row 3 and Row 4 of $M_r(\widehat{f^\ast})$.
Thus, 
the permutation $(12)(34) \in \mathfrak G$.
It follows that
${\mathfrak G} = S_4$.
Thus, by renaming or flipping  variables of $\widehat{f^\ast}$, 
we can permute any two rows or any two columns of $M_r(\widehat{f^\ast})$, or transpose $M_r(\widehat{f^\ast})$.
For the resulting signature, 
we may assume that its representative matrix $A$ also satisfy ${\rm perm}(A_{[1,2]})+{\rm perm}(A_{[3,4]})=0$, and any two rows of $A$ are orthogonal and any two columns of $A$ are orthogonal. 
Otherwise, we get \#P-hardness.
In the following, without loss of generality, we may modify $M_r(\widehat{f^\ast})$ by permuting any two rows or any two columns, or taking transpose.
We show that it will give $M_r(\widehat{f_6})$, after a normalization by $\pm 1$. 
In other words, $\widehat{f_6}$ is realizable from $\widehat{f^\ast}$ by renaming or flipping  variables, up to a normalization by $\pm 1$.

Consider any two rows, Row $i$ and Row $j$,  of $M_r(\widehat{f^\ast})$. 
Recall that every entry of $M_r(\widehat{f^\ast})$ is $\pm 1$.
We say that Row $i$ and Row $j$ differ in Column $k$ if $m_{ik}\neq m_{jk}$, which implies that $m_{ik}=-m_{jk}$;
otherwise, they are equal $m_{ik}=m_{jk}$.
In the former case, $m_{ik}\cdot m_{jk}=-1$, and in the latter case
$m_{ik}\cdot m_{jk}=1$.
Since Row $i$ and Row $j$ are orthogonal, they differ in exactly two columns and 
are equal in the other two columns.
Similarly,  for  any two columns  of $M_r(\widehat{f^\ast})$,
 they  differ in exactly two rows and  are equal in the other two rows.
Depending on the number of $-1$ entries in each row and column of $M_r(\widehat{f^\ast})$, we consider the following two cases. 
\begin{itemize}
\item
Every row and column of $M_r(\widehat{f^\ast})$ has an odd number of $-1$ entries.

Consider Row 1. 
It has either exactly three $-1$ entries or exactly one $-1$ entry.
If it has three $-1$ entries, then we modify $M_r(\widehat{f^\ast})$ by multiplying the matrix  with $-1$.
This does not change the parity of the number of $-1$ entries in each row and each column. 
By such a modification, Row 1 has exactly one $-1$ entry.
By permuting columns, we may assume that 
Row 1 is $(-1, 1, 1, 1).$
Consider the number of $-1$ entries in  Rows 2, 3 and  4.
\begin{itemize}
    \item 
If they all have exactly one $-1$ entry,  by orthogonality,
the unique column locations of   the $-1$ entry 
in each row must be pairwise distinct.
Then, by possibly permuting rows 
2, 3 and 4 
we may assume that 
the matrix $M_r(\widehat{f^\ast})$ has the following form $$M_r(\widehat{f^\ast})=\begin{bmatrix}
-1& 1 &1 & 1\\
1 & -1 & 1 & 1\\
1 & 1 & -1 & 1\\
1 & 1 & 1 & -1\\
\end{bmatrix}.$$
Then, ${\rm perm}(M_r(\widehat{f^\ast})_{[1, 2]})+{\rm perm}(M_r(\widehat{f^\ast})_{[3, 4]})=2+2=4\neq 0.$ 
Contradiction.
  
  \item
  Otherwise, among Rows 2, 3 and 4, there is one that has three $-1$ entries. 
  By permuting rows, we may assume that Row 2 has three $-1$ entries.
  Since Row 2 and Row 1 differ in two columns, the only $+1$ entry in Row 2 is not in  Column 1.
  By possibly permuting Columns 2,  3 and 4, without loss of generality, we may assume that Row 2 is $(-1, 1, -1, -1)$.
  Then, we consider Column 3 and Column 4.
  Since every column has an odd number of $-1$ entries and $m_{13}=1$ and $m_{23}=-1$, we have $m_{33}=m_{43}$,
  both $+1$ or $-1$.
  Similarly, $m_{34}=m_{44}$.
  Also, since Column 3 and Column 4 differ in exactly two rows, and $m_{13}=m_{14}$ and  $m_{23}=m_{24}$, we have $m_{33}=-m_{34}$ and $m_{43}=-m_{44}$.
  Thus, $M_r(\widehat{f^\ast})_{[3,4]}= \pm \left[\begin{smallmatrix}
  1 & -1\\
  1 & -1\\
  \end{smallmatrix}\right]$.
  In both cases, we have ${\rm perm}(M_r(\widehat{f^\ast})_{[1,2]})=-2$.
  Notice that  $M_r(\widehat{f^\ast})_{[1,2]}=\left[\begin{smallmatrix}
  -1 & 1\\
  -1 & 1\\
  \end{smallmatrix}\right]$.
  Thus, ${\rm perm}(M_r(\widehat{f^\ast})_{[1,2]})+{\rm perm}(M_r(\widehat{f^\ast})_{[3,4]})=-4\neq 0$. Contradiction.
  
  \end{itemize}
    \item There is a row or a column of $M_r(\widehat{f^\ast})$ such that it has an even number of $-1$ entries.
    By transposing $M_r(\widehat{f^\ast})$, we may assume that it is a row, say Row $i$. 
    For any other Row $j$, it differs with Row $i$ in exactly two columns.
    Thus, Row $j$ also has an even number of $-1$ entries.
  If all  four rows of $M_r(\widehat{f^\ast})$ have exactly two $-1$ entries, then one can check that there are two rows such that one row is a scalar $(\pm 1)$ multiple of the other, thus 
  not orthogonal; this is a contradiction.
  Thus, there exists a row in which the number of $-1$ entries is $0$ or $4$. 
  By permuting rows, we may assume that it is Row 1.
  Also, by possibly multiplying $M_r(\widehat{f^\ast})$ with $-1$, we may assume that all entries of Row 1 are $+1$.
  Thus, Row 1 is $(1, 1, 1, 1)$.
  
  By orthogonality, all other rows have exactly two $-1$ entries.
  By permuting  columns (which does not change Row 1), we may assume that Row 2 is $(-1, -1, 1, 1)$.
  Then, consider Row 3.
  It also has exactly two $-1$ entries.
  Moreover, since Row 2 and Row 3 differ in 2 columns, among $m_{31}$ and $m_{32}$, exactly one is $-1$.
  By permuting Column 1 and Column 2 (which does not change Row 1 and Row 2), we may assume that $m_{31}=-1$.
  Also, among $m_{33}$ and $m_{34}$, exactly one is $-1$.
  By permuting Column 3 and Column 4 (still this will not change  Row 1 and Row 2), we may assume that $m_{33}=-1$.
  Thus, Row 3 is $(-1, 1, -1, 1)$.
  Finally, consider Row 4. 
  It also has two $-1$ entries.
  One can easily check that Row 4 has two possible forms, 
  $(-1, 1, 1, -1)$ or $(1, -1, -1, 1).$
  If Row 4 is $(1, -1, -1, 1)$,
  then, $$M_r(\widehat{f^\ast})=\begin{bmatrix}
1& 1 &1 & 1\\
-1 & -1 & 1 & 1\\
-1 & 1 & -1 & 1\\
1 & -1 & -1 & 1\\
\end{bmatrix}.$$
Thus, ${\rm perm}(M_r(\widehat{f^\ast})_{[12]})+{\rm perm}(M_r(\widehat{f^\ast})_{[34]})=-4\neq 0.$
Contradiction.

Thus, 
  Row 4 is $(-1, 1, 1, -1)$.
  Then $$M_r(\widehat{f^\ast})=\begin{bmatrix}
1& 1 &1 & 1\\
-1 & -1 & 1 & 1\\
-1 & 1 & -1 & 1\\
-1 & 1 & 1 & -1\\
\end{bmatrix}.$$
This gives the desired $M_r(\widehat{f_6})$.
\end{itemize}

Therefore, $\widehat{f_6}$ is realizable from $\widehat{f^\ast}$.

Since $\widehat{f_6}$ is realized from $\widehat{f^\ast}$ by flipping (and permuting) variables, i.e., extending some variables of $\widehat{f^\ast}$ with $=_2$ (using $\neq_2$), we have $\widehat{f_6}\in  \{\widehat{f^\ast}\}_{\neq_2}^{\widehat{\mathcal{B}}}.$
Since $\widehat{f^\ast}$ is realized from $\widehat{f''}$ by extending some variables of $\widehat{f''}$ with signatures in $\widehat{\mathcal{B}}$, we have $\widehat{f^{\ast}}\in  \{\widehat{f''}\}_{\neq_2}^{\widehat{\mathcal{B}}}.$
By Lemma \ref{lem-extending}, we have $\widehat{f_6}\in  \{\widehat{f''}\}_{\neq_2}^{\widehat{\mathcal{B}}}.$
\end{proof}

\begin{theorem}
\label{lem-6-ary-f6}
Suppose that $\widehat{\mathcal{F}}$ contains a $6$-ary signature $ \widehat{f}\notin \widehat{\mathcal{O}}{^\otimes}$.
Then,
\begin{itemize}
   \item  $\holant{\neq_2}{\widehat{\mathcal{F}}}$ is \#P-hard, or
   \item there exists some $\widehat{Q}\in \widehat{{\bf O}_2}$ such that $\holant{\neq_2}{\widehat{f}_6, \widehat Q\widehat{\mathcal{F}}}\leqslant_T\holant{\neq_2}{\widehat{\mathcal{F}}}.$
\end{itemize}
\end{theorem}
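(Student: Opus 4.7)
The plan is to compose the four preceding lemmas in sequence; the theorem is essentially a corollary of their successful chaining. Starting from the arity-$6$ witness $\widehat{f}\in\widehat{\mathcal{F}}$ with $\widehat{f}\notin\widehat{\mathcal{O}}^{\otimes}$, each lemma either concludes that $\holant{\neq_2}{\widehat{\mathcal{F}}}$ is already \#P-hard (in which case the theorem is immediate) or hands the next lemma a progressively more rigid signature realizable from $\widehat{f}$.

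Concretely, I first apply Lemma \ref{lem-arity6-1} to $\widehat{f}$: either hardness, or we realize an irreducible 6-ary $\widehat{f'}$ which vanishes on inputs of Hamming weights $2$ and $4$. Second, I apply Lemma \ref{lem-arity6-2} to $\widehat{f'}$: either hardness, or $\mathscr{S}(\widehat{f'})=\mathscr{O}_6$ and every nonzero entry has the same norm, which can be normalized to $1$. Third, I apply Lemma \ref{lem-arity6-3}: either hardness, or a matrix $\widehat{Q}\in\widehat{{\bf O}_2}$ appears for which $\widehat{f''}:=\widehat{Q}\widehat{f'}$ takes values in $\{\pm 1,\pm\ii\}$ on $\mathscr{O}_6$ and, crucially, $=_2$ becomes available on the right-hand side, yielding
\[
\holant{\neq_2}{=_2,\widehat{f''},\widehat{Q}\widehat{\mathcal{F}}}\;\leqslant_T\;\holant{\neq_2}{\widehat{\mathcal{F}}}.
\]
Fourth, I apply Lemma \ref{lem-arity6-4} to $\widehat{f''}$ in the transformed, $=_2$-equipped setting: either $\holant{\neq_2}{=_2,\widehat{Q}\widehat{\mathcal{F}}}$ is \#P-hard (and hence $\holant{\neq_2}{\widehat{\mathcal{F}}}$ is \#P-hard via the displayed reduction), or $\widehat{f_6}$ is realized from $\widehat{f''}$ and $=_2$.

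On the successful branch, combining the realization of $\widehat{f_6}$ with the reduction from the third step gives
\[
\holant{\neq_2}{\widehat{f_6},\widehat{Q}\widehat{\mathcal{F}}}\;\leqslant_T\;\holant{\neq_2}{=_2,\widehat{f''},\widehat{Q}\widehat{\mathcal{F}}}\;\leqslant_T\;\holant{\neq_2}{\widehat{\mathcal{F}}},
\]
which is exactly the conclusion of the theorem. The only thing requiring care is bookkeeping across the holographic transformation: $\widehat{Q}$ enters only at the third step, so I must verify that the signatures realized from $\widehat{f}$ in the first two steps translate correctly once we pass to $\widehat{Q}\widehat{\mathcal{F}}$. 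Because $\widehat{Q}\in\widehat{{\bf O}_2}$ satisfies $(\neq_2)(\widehat{Q}^{-1})^{\otimes 2}=\neq_2$, the bipartite problem is genuinely invariant under $\widehat{Q}$, and any $\widehat{\mathcal{F}}$-gate realizing $\widehat{f'}$ from $\widehat{f}$ becomes, after applying $\widehat{Q}$ entrywise, a $\widehat{Q}\widehat{\mathcal{F}}$-gate realizing $\widehat{Q}\widehat{f'}=\widehat{f''}$ from $\widehat{Q}\widehat{f}\in\widehat{Q}\widehat{\mathcal{F}}$. No serious obstacle remains once the four lemmas are in hand; the deep work is entirely in Lemma \ref{lem-arity6-4}, where $\widehat{f_6}$ surfaces as the unique extremal obstruction that cannot be eliminated by gadget constructions or orthogonal transformations.
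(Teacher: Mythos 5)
Your chaining of Lemmas \ref{lem-arity6-1} through \ref{lem-arity6-4} is exactly how the paper proves the theorem, and your composition of the reductions on the success branch is correct. The one bookkeeping detail you do not address explicitly --- and the one the paper does --- is that before invoking Lemma~\ref{lem-arity6-4} with $\widehat{Q}\widehat{\mathcal{F}}$ as the ambient set, one must confirm the standing hypothesis of the section that the corresponding real signature set (here $Q\mathcal{F}$, with $Q = Z\widehat{Q}Z^{-1}\in{\bf O}_2$) still fails the tractability condition (\ref{main-thr}); this follows from Lemma~\ref{lem-hard-sign} since $Q\in{\bf O}_2$. The concern you do raise --- whether $\widehat{\mathcal{F}}$-gates realizing $\widehat{f'}$ translate correctly under $\widehat{Q}$ --- is harmless but somewhat beside the point, since Lemmas~\ref{lem-arity6-3} and \ref{lem-arity6-4} already state the reductions explicitly in terms of $\widehat{Q}\widehat{\mathcal{F}}$, so no separate translation argument is required.
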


\begin{proof}
By Lemmas~\ref{lem-arity6-1}, \ref{lem-arity6-2} and \ref{lem-arity6-3}, $\holant{\neq_2}{\widehat{\mathcal{F}}}$ is \#P-hard, or  $\holant{\neq_2}{=_2, \widehat{f''}, \widehat{Q}\widehat{\mathcal{F}}}\leqslant_T\holant{\neq_2}{\widehat{\mathcal{F}}}$ for some $\widehat{Q}$ where $Q \in  \widehat{{\bf O}_2}$, and some irreducible 6-ary signature $\widehat{f''}$ where $\mathscr{S}(\widehat f'')=\mathscr{E}_{6}$ and there  exists $\lambda=1$ or $\ii$ such that for all $\alpha\in \mathscr{S}(\widehat{f''})$, $\widehat{f''}(\alpha)=\pm \lambda$.
Remember  that $\widehat{Q}\widehat{\mathcal{F}}=\widehat{Q\mathcal{F}}$ where $Q=Z\widehat{Q}Z^{-1}\in {\bf O}_2$.
Clearly, $Q\mathcal{F}$ is a set of real-valued signatures of even arity.
Since $\mathcal{F}$ does not satisfy condition (\ref{main-thr}), 
by Lemma~\ref{lem-hard-sign}, $Q\mathcal{F}$ also does not satisfy it.
Then, by Lemma~\ref{lem-arity6-4},
$\holant{\neq_2}{=_2, \widehat{f''}, \widehat{Q}\widehat{\mathcal{F}}}$ is
\#P-hard, or $\holant{\neq_2}{\widehat{f_6}, \widehat{Q}\widehat{\mathcal{F}}}\leqslant_T\holant{\neq_2}{=_2, \widehat{f''}, \widehat{Q}\widehat{\mathcal{F}}}.$
Thus, $\holant{\neq_2}{\widehat{\mathcal{F}}}$ is \#P-hard, or $\holant{\neq_2}{\widehat{f}_6, \widehat Q\widehat{\mathcal{F}}}\leqslant_T\holant{\neq_2}{\widehat{\mathcal{F}}}.$
\end{proof}
\begin{remark}
Theorem~\ref{lem-6-ary-f6} can be more succinctly stated as simply that a reduction \[\holant{\neq_2}{\widehat{f}_6, \widehat Q\widehat{\mathcal{F}}}\leqslant_T\holant{\neq_2}{\widehat{\mathcal{F}}}\]
exists, because 
 when  $\holant{\neq_2}{\widehat{\mathcal{F}}}$ is \#P-hard, the reduction
 exists trivially. 
However in keeping with the cadence of the other lemmas and theorems in this subsection, we list them as two cases.
\end{remark}

Now, we want to show that $\holant{\neq_2}{\widehat{f}_6, \widehat Q\widehat{\mathcal{F}}}$ is \#P-hard for all $\widehat{Q}\in \widehat{{\bf O}_2}$ and all $\widehat{\mathcal{F}}$ where $\mathcal{F}=Z\widehat{\mathcal{F}}$ is a real-valued signature set that does not satisfy condition (\ref{main-thr}). 
If so, then we are done.
Recall that for all $\widehat{Q}\in \widehat{{\bf O}_2}$, $\widehat{Q}\widehat{\mathcal{F}}=\widehat{Q\mathcal{F}}$ for some $Q\in {\bf O}_2$.
Moreover, for all $Q\in {\bf O}_2$, and all real-valued $\mathcal{F}$ that 
does not satisfy condition (\ref{main-thr}), 
$Q\mathcal{F}$ is also a real-valued signature set 
that does not satisfy condition (\ref{main-thr}).
Thus, it suffices for us to show that $\holant{\neq_2}{\widehat{f}_6, \widehat{\mathcal{F}}}$ is \#P-hard for all real-valued $\mathcal{F}$ that 
does not satisfy condition (\ref{main-thr}).

\subsection{\#P-hardness conditions and two properties of $\widehat{f_6}$}
In this subsection,
we give three conditions (Lemmas~\ref{lem-2-notb}, \ref{lem-4-notb} and \ref{lem-6-notb}) 
 which can quite straightforwardly lead to  the \#P-hardness of $\holant{\neq_2}{\widehat{f}_6, \widehat{\mathcal{F}}}$.
 We will extract two properties from $\widehat{f_6}$, the non-$\widehat{\mathcal{B}}$ hardness (Definition~\ref{def:non-B-hard}) and the realizability of $\widehat{\mathcal{B}}$ (Lemma~\ref{lem-b-realize-f6}).
 Later, 
we will prove the \#P-hardness of $\holant{\neq_2}{\widehat{f_6}, \widehat{ \mathcal{F}}}$ based on these two properties.

\begin{lemma}\label{lem-2-notb}
 $\holant{\neq_2}{\widehat{f_6}, \widehat{\mathcal{F}}}$ is \#P-hard 
if $\widehat{\mathcal{F}}$ contains a 
nonzero binary signature $\widehat{b}\notin \widehat{\mathcal{B}}^{\otimes}$.
\end{lemma}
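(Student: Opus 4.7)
The plan is to establish \#P-hardness of the (equivalent, via the $Z^{-1}$ transformation) problem $\Holant(\mathcal{B}, f_6, \mathcal{F})$; this suffices because $\widehat{\mathcal{B}}$ is realizable from $\widehat{f_6}$ by gadget construction (Lemma~\ref{lem-b-realize-f6}). Let $b := Z\widehat{b} \in \mathcal{F}$ denote the real-valued binary signature corresponding to $\widehat{b}$; it is nonzero because $\widehat{b}$ is. I split the analysis on whether $\widehat{b}$ has parity, which by Lemma~\ref{q-parity} is equivalent to whether $b \in \mathcal{O}$.

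If $\widehat{b}$ does not have parity, then $b$ is a nonzero real-valued binary failing {\sc 1st-Orth}. Consider the enlarged set $\mathcal{F}' := \mathcal{B} \cup \{f_6\} \cup \mathcal{F}$, which consists of real-valued signatures and contains $b$. Each clause of condition~(\ref{main-thr}), namely containment in $\mathscr{T}$ or $\mathscr{C}$-transformability for $\mathscr{C}\in\{\mathscr{P},\mathscr{A},\mathscr{L}\}$, is inherited by subsets, so since $\mathcal{F}$ fails~(\ref{main-thr}), the enlarged superset $\mathcal{F}'$ also fails~(\ref{main-thr}). Lemma~\ref{even-first-order} applied to $\mathcal{F}'$ then yields \#P-hardness of $\Holant(\mathcal{F}')$, which is exactly $\Holant(\mathcal{B}, f_6, \mathcal{F})$.

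If instead $\widehat{b}$ has parity, then $\widehat{b}$ takes the form $(a,0,0,\bar a)$ or $(0,a,\bar a,0)$ with $a=x+\ii y$. The hypothesis $\widehat{b}\notin \widehat{\mathcal{B}}^{\otimes}$ forces both $x$ and $y$ nonzero: otherwise $\widehat{b}$ would be a nonzero real multiple of one of the four elements of $\widehat{\mathcal{B}}$. Consequently $M(b)=\lambda Q$ for some nonzero real $\lambda$ and $Q\in \mathbf{O}_2$ whose rotation or reflection angle is not a multiple of $\pi/2$. The plan here is to combine $\widehat{b}$ with signatures in $\widehat{\mathcal{B}}$ and with $\widehat{f_6}$ via gadget construction (extending, merging, mating) to manufacture a nonzero binary $b'$ whose real counterpart is not orthogonal, thereby reducing to the first case and invoking Lemma~\ref{even-first-order}. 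The hard part will be this second case: orthogonal $2\times 2$ matrices form a group under multiplication, which corresponds exactly to single-variable extension of binaries, so extending $\widehat{b}$ by any Bell signature in $\widehat{\mathcal{B}}$ remains inside $\widehat{\mathcal{O}}$. To escape $\widehat{\mathcal{O}}$ one must exploit the higher-arity structure of $\widehat{f_6}$---its support on $\mathscr{E}_6$ together with the explicit $\pm 1$ pattern in~(\ref{eqn:definiton-f6})---in a multi-step gadget that intertwines $\widehat{f_6}$ with $\widehat{b}$, e.g., by first merging $\widehat{f_6}$ to a 4-ary or binary gadget and then composing with $\widehat{b}$ at an appropriately chosen variable so that the resulting 2-by-2 matrix is no longer a scalar multiple of an orthogonal matrix.
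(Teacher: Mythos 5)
Your first case is fine and matches the paper: $\widehat{b}\notin\widehat{\mathcal{O}}^{\otimes}$ (equivalently $b$ fails {\sc 1st-Orth}) directly gives \#P-hardness via Lemma~\ref{lem-2-ary}, without needing to pass through $\widehat{f_6}$ at all. The gap is in the second case. You correctly identify the obstruction --- $\widehat{\mathcal{O}}$ is closed under one-variable extension, so no amount of composing $\widehat{b}$ with $\widehat{\mathcal{B}}$-extensions can push it out of $\widehat{\mathcal{O}}$ --- but the sketch that follows never materializes into a concrete gadget, and your stated target (a binary $b'\notin\mathcal{O}$) is the wrong one. It is not necessary, and probably not the easiest thing, to fight all the way back down to a binary.

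What the paper does instead is simpler: use the two dangling legs of $\widehat{b}$ to cap off two variables of $\widehat{f_6}$ at once, producing a \emph{4-ary} signature $\widehat{g}$. Explicitly, with $\widehat{b}=(0,a,\bar a,0)$ normalized to $|a|=1$ and $a\neq\pm1,\pm\ii$, connecting $x_1,x_2$ of $\widehat{f_6}$ to $y_2,y_1$ of $\widehat{b}$ via $\neq_2$ gives
\[
\widehat{g}(x_3,\ldots,x_6)=a\,\widehat{f_6}(01x_3\ldots x_6)+\bar a\,\widehat{f_6}(10x_3\ldots x_6),
\]
whose truth table consists of eight zeros and eight entries of the form $\pm(a+\bar a)$ or $\pm(a-\bar a)$, all nonzero precisely because $\mathfrak{Re}(a)\neq 0$ and $\mathfrak{Im}(a)\neq 0$. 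So $|\mathscr{S}(\widehat g)|=8$. But any 4-ary element of $\widehat{\mathcal{O}}^{\otimes}$ is a tensor of two parity binaries and therefore has support of size $0$ or $4$; hence $\widehat g\notin\widehat{\mathcal{O}}^{\otimes}$, and Lemma~\ref{lem-4-ary} finishes. The lesson is that the induction machinery already built (Lemma~\ref{lem-4-ary}, not just Lemma~\ref{lem-2-ary}) lets you declare victory as soon as you reach any small-arity signature outside $\widehat{\mathcal{O}}^{\otimes}$; you don't need to insist on arity $2$, and aiming for it leads you into exactly the group-closure wall you noticed.
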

\begin{proof}
If $\widehat{b}\notin \widehat{\mathcal{O}}^{\otimes}$, then by Lemma \ref{lem-2-ary}, we are done. 
Otherwise, $\widehat{b}\in \widehat{\mathcal{O}}^{\otimes}$. 
Thus, $\widehat{b}=(a, 0, 0, \bar a)$ or  $\widehat{b}=(0, a, \bar a, 0)$.
Since $\widehat{b}\not\equiv0$, $a\neq 0$.
We normalize the norm of $a$  to $1$.
Since $\widehat{b}\notin \widehat{\mathcal{B}}^{\otimes}$, $a\neq \pm 1$ or $\pm \ii$.
We first consider the case that $\widehat{b}(y_1, y_2)=(0, a, \bar a, 0)$.
Connecting variables $x_1$ and $x_2$ of $\widehat{f_6}$ with variables $y_2$ and $y_1$ of $\widehat{b}$ using $\neq_2$, we get a 4-ary signature $\widehat{g}$.
We list the truth table of $\widehat{g}$ indexed by the assignments of variables $(x_3, x_4, x_5, x_6)$ from $0000$ to $1111$.
$$\widehat{g}=(0, a+\bar a, -a+\bar a, 0, a-\bar a, 0, 0, -a-\bar a, -a-\bar a, 0, 0, -a+\bar a, 0, a-\bar a, a+\bar a, 0).$$
Since $a$ has norm 1, and $a\neq \pm 1$ or $\pm \ii$, 
$|a\pm \bar a|\neq 0$.
Thus, $|\mathscr{S}(\widehat g)|=8$.
Clearly, every 4-ary signature that is in $\widehat{\mathcal{O}}^{\otimes}$ has support of size $0$ or $4$.
Thus, $\widehat{g}\notin\widehat{\mathcal{O}}^{\otimes}$. 
By Lemma~\ref{lem-4-ary},  $\holant{\neq_2}{\widehat{f_6}, \widehat{\mathcal{F}}}$ is \#P-hard. 
We prove the  case  $\widehat{b}(y_1, y_2)=(a, 0, 0, \bar a)$
similarly. By connecting variables $x_1$ and $x_2$ of $\widehat{f_6}$ with variables $y_1$ and $y_2$ of $\widehat{b}$ using $\neq_2$, we also get a 4-ary signature that is not in $\widehat{\mathcal{O}}^{\otimes}$. 
The lemma is proved.
\end{proof}

\begin{definition}\label{def:non-B-hard}
We say a signature set
$\widehat{\mathcal{F}}$ is  non-$\widehat{\mathcal{B}}$ hard, if for any nonzero binary signature $\widehat{b}\notin \widehat{\mathcal{B}}^{\otimes }$, the problem
$\holant{\neq_2}{\widehat{b}, \widehat{\mathcal{F}}}$ is \#P-hard.
Correspondingly, we say that a signature set $\mathcal{F}$ is non-${\mathcal{B}}$ hard, if for any nonzero binary signature $b\notin \mathcal{B}^{\otimes}$, the problem $\Holant(b, \mathcal{F})$ is \#P-hard.
\end{definition}
Clearly,  Lemma~\ref{lem-2-notb} says that $\{\widehat{f_6}\}\cup \widehat{\mathcal{F}}$ is non-$\widehat{\mathcal{B}}$ hard for any $\widehat{\mathcal{F}}$ (where $\mathcal{F}=Z\widehat{\mathcal{F}}$ is a real-valued signature set that does not satisfy condition (\ref{main-thr})).
Before we give the other two \#P-hardness conditions, 
we first explain why we introduce the notion of non-$\widehat{\mathcal{B}}$ hardness.
We will extract two properties from $\widehat{f_6}$ to prove the \#P-hardness of $\holant{\neq_2}{\widehat{f_6}, \widehat{ \mathcal{F}}}$.
These are the non-$\widehat{\mathcal{B}}$ hardness and the realizability of $\widehat{\mathcal{B}}$. 
From
Lemma~\ref{lem-b-realize-f6}\footnote{This lemma and the following Theorem~\ref{thm-holantb} are stated and proved in the setting of $\Holant(\mathcal{F})$.} we  get
the redutcion 
$\holant{\neq_2}{\widehat{f_6}, \widehat{\mathcal{B}}\cup\widehat{\mathcal{F}}}
\leqslant\holant{\neq_2}{\widehat{f_6}, \widehat{\mathcal{F}}}$.
We will show that for any non-$\widehat{\mathcal{B}}$ hard set $\widehat{\mathcal{F}}$ where $\mathcal{F}$ does not satisfy condition (\ref{main-thr}),
 $\holant{\neq_2}{\widehat{\mathcal{B}}\cup\widehat{\mathcal{F}}}$ is \#P-hard (Theorem~\ref{thm-holantb}).
 This directly implies that  $\holant{\neq_2}{\widehat{f_6}, \widehat{ \mathcal{F}}}$ is \#P-hard when $\mathcal{F}$ does not satisfy condition (\ref{main-thr}). 
 This slightly more general Theorem~\ref{thm-holantb}  will also be used when dealing with signatures of arity $8$.
 Now, let us continue to give two more \#P-hardness conditions without assuming the availability of $\mathcal{B}$ (Lemma~\ref{lem-4-notb} and \ref{lem-6-notb}).


\begin{lemma}\label{lem-4-notb}
Suppose that $\widehat{\mathcal{F}}$ is non-$\widehat{\mathcal{B}}$ hard.
Then
$\holant{\neq_2}{\widehat{\mathcal{F}}}$ is \#P-hard if $\widehat{\mathcal{F}}$ contains a nonzero 4-ary signature $\widehat{f}\notin \widehat{\mathcal{B}}^{\otimes}$.
\end{lemma}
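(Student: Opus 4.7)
The plan is to split on whether $\widehat{f} \in \widehat{\mathcal{O}}^{\otimes}$: in one branch I invoke the arity-4 base case directly, and in the other I produce a binary witness and invoke non-$\widehat{\mathcal{B}}$ hardness. First, if $\widehat{f} \notin \widehat{\mathcal{O}}^{\otimes}$, then Lemma~\ref{lem-4-ary} applied to the $4$-ary signature $\widehat{f} \in \widehat{\mathcal{F}}$ immediately yields that $\holant{\neq_2}{\widehat{\mathcal{F}}}$ is \#P-hard, and we are done.

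The substantive case is $\widehat{f} \in \widehat{\mathcal{O}}^{\otimes}$. Because $\widehat{\mathcal{O}}$ consists only of binary signatures (including the binary zero signature), a nonzero $4$-ary element of $\widehat{\mathcal{O}}^{\otimes}$ is forced, by the definition of $\widehat{\mathcal{O}}^{\otimes}$ together with unique prime factorization (Lemma~\ref{unique}), to have the form
\[
\widehat{f} = \lambda \cdot \widehat{b_1} \otimes \widehat{b_2}
\]
up to a permutation of variables, where $\widehat{b_1}, \widehat{b_2} \in \widehat{\mathcal{O}}$ are nonzero and both satisfy {\sc ars}. Since $\widehat{\mathcal{B}}^{\otimes}$ is closed under tensor products of its binary elements, the assumption $\widehat{f} \notin \widehat{\mathcal{B}}^{\otimes}$ forces at least one of $\widehat{b_1}, \widehat{b_2}$ to lie outside $\widehat{\mathcal{B}}^{\otimes}$; without loss of generality I would take this factor to be $\widehat{b_1}$.

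Next, applying Lemma~\ref{lin-wang} I can realize $\widehat{b_1}$ from $\widehat{f}$ by factorization while preserving complexity, giving the reduction
\[
\holant{\neq_2}{\widehat{b_1}, \widehat{\mathcal{F}}} \leqslant_T \holant{\neq_2}{\widehat{\mathcal{F}}}.
\]
Since $\widehat{b_1}$ is a nonzero binary signature that is not in $\widehat{\mathcal{B}}^{\otimes}$ and $\widehat{\mathcal{F}}$ is non-$\widehat{\mathcal{B}}$ hard, the left-hand side is \#P-hard by Definition~\ref{def:non-B-hard}, and the desired \#P-hardness of $\holant{\neq_2}{\widehat{\mathcal{F}}}$ follows.

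I do not anticipate any serious obstacle. The only mild point to check carefully is that in the $\widehat{f} \in \widehat{\mathcal{O}}^{\otimes}$ branch the factorization is truly a $2+2$ split, which is immediate because $\widehat{\mathcal{O}}$ contains only binary signatures, so the number of tensor factors must equal $\arity(\widehat{f})/2 = 2$; everything else is a direct application of already-established results (Lemmas~\ref{lem-4-ary}, \ref{unique}, \ref{lin-wang}) and an unwinding of the definition of non-$\widehat{\mathcal{B}}$ hardness.
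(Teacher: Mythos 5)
Your proof is correct and follows essentially the same route as the paper's: split on whether $\widehat{f}\in\widehat{\mathcal{O}}^{\otimes}$, apply Lemma~\ref{lem-4-ary} in the first branch, and in the second branch extract a binary factor $\widehat{b_1}\notin\widehat{\mathcal{B}}^{\otimes}$ via factorization (Lemma~\ref{lin-wang}) and invoke the non-$\widehat{\mathcal{B}}$-hardness hypothesis. Your added remark justifying the $2+2$ split via unique prime factorization is a sound elaboration of a step the paper leaves implicit, but the argument is otherwise the same.
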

\begin{proof}
If $\widehat{f}\notin \widehat{\mathcal{O}}^{\otimes}$, then by Lemma~\ref{lem-4-ary}, we are done.
Otherwise, $\widehat{f}=\widehat{b}_1\otimes\widehat{b_2}$,
where the binary signatures $\widehat{b_1}, \widehat{b_2}
\in \widehat{\mathcal{O}}^{\otimes}$.
Since $\widehat{f}\notin \widehat{\mathcal{B}}^{\otimes}$, $\widehat{b}_1$ and $\widehat{b_2}$ are not both in $\widehat{\mathcal{B}}^{\otimes}$.
Then, we can realize a binary signature that is not in $\widehat{\mathcal{B}}^{\otimes}$ by factorization.
Since $\widehat{\mathcal{F}}$ is non-$\widehat{\mathcal{B}}$ hard,
we are done.
\end{proof}

Let $H=\frac{1}{\sqrt{2}}\left[\begin{smallmatrix}
1 & 1\\
-1 & 1\\
\end{smallmatrix}\right]$.
Then $\widehat{H}=Z^{-1}HZ=\left[\begin{smallmatrix}
\frac{(1+\ii)}{\sqrt{2}} & 0\\
0 & \frac{(1-\ii)}{\sqrt{2}}\\
\end{smallmatrix}\right]=\left[\begin{smallmatrix}
e^{\ii\pi/4} & 0\\
0 & e^{-\ii\pi/4}\\
\end{smallmatrix}\right].$
Let $\widehat{f_6^H}=\widehat{H}\widehat{f_6}$.
Let $\widehat{\mathcal{F}_6}=\{\widehat{f_6}\}^{\widehat{B}}_{\neq_2}$ be the set of signature realizable by extending variables of $\widehat{f_6}$ with binary signatures in $\widehat{\mathcal{B}}$ using $\neq_2$, and  $\widehat{\mathcal{F}^H_6}=\{\widehat{f^H_6}\}^{\widehat{B}}_{\neq_2}$ be the set of signature realizable by extending variables of $\widehat{f^H_6}$ with binary signatures in $\widehat{\mathcal{B}}$ using $\neq_2$.
One can check that $\widehat{\mathcal{F}^H_6}=\widehat{H}\widehat{\mathcal{F}_6}\neq \widehat{\mathcal{F}_6}$.


\begin{lemma}\label{lem-6-notb}
Suppose that $\widehat{\mathcal{F}}$ is non-$\widehat{\mathcal{B}}$ hard.
Then,
$\holant{\neq_2}{\widehat{\mathcal{F}}}$ is \#P-hard if $\widehat{\mathcal{F}}$ contains a nonzero 6-ary signature $\widehat{f}\notin \widehat{\mathcal{B}}^{\otimes}\cup\widehat{\mathcal{F}_6}\cup\widehat{\mathcal{F}^H_6}$.
\end{lemma}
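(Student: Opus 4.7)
The plan is to split on whether $\widehat{f}\in\widehat{\mathcal{O}}^{\otimes}$, handle the reducible case immediately via factorization plus the non-$\widehat{\mathcal{B}}$-hardness hypothesis, and then in the remaining case walk through the chain of constructions behind Theorem~\ref{lem-6-ary-f6}, intercepting every binary signature that the argument produces from $\widehat{f}$ and testing it against $\widehat{\mathcal{B}}^{\otimes}$.

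If $\widehat{f}\in\widehat{\mathcal{O}}^{\otimes}$, then since it has arity $6$ it is a tensor product of three binary signatures in $\widehat{\mathcal{O}}^{\otimes}$. The hypothesis $\widehat{f}\notin\widehat{\mathcal{B}}^{\otimes}$ forces at least one factor outside $\widehat{\mathcal{B}}^{\otimes}$. Lemma~\ref{lin-wang} realizes this factor by factorization from $\widehat{f}$, and non-$\widehat{\mathcal{B}}$-hardness of $\widehat{\mathcal{F}}$ gives \#P-hardness.

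Otherwise $\widehat{f}\notin\widehat{\mathcal{O}}^{\otimes}$, and I would trace through Lemmas~\ref{lem-arity6-1}, \ref{lem-arity6-2}, \ref{lem-arity6-3}, \ref{lem-arity6-4}. In Lemma~\ref{lem-arity6-1} the two factors $\widehat{b_{1}},\widehat{b_{2}}\in\widehat{\mathcal{O}}^{\otimes}$ of $\widehat{\partial}_{12}\widehat{f}$ are realizable by factorization from $\widehat{\mathcal{F}}$; if either lies outside $\widehat{\mathcal{B}}^{\otimes}$ we are done immediately. Otherwise both lie in $\widehat{\mathcal{B}}^{\otimes}$, and the signature $\widehat{f'}$ produced there satisfies $\widehat{f'}\in\{\widehat{f}\}^{\widehat{\mathcal{B}}}_{\neq_{2}}$. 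In Lemma~\ref{lem-arity6-3} the binary $(\rho^{2},0,0,\overline{\rho^{2}})$ realized from $\widehat{\partial}_{ij}\widehat{f'}$ must likewise lie in $\widehat{\mathcal{B}}^{\otimes}$, or we are done; given $\rho=e^{\ii\delta}$ with $\delta\in[0,\pi/2)$, this constrains $\delta$ to $\{0,\pi/4\}$.

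For $\delta=0$ the transformation $\widehat{Q}$ is the identity and $\widehat{f''}=\widehat{f'}$, so Lemma~\ref{lem-arity6-4} produces $\widehat{f_{6}}\in\{\widehat{f'}\}^{\widehat{\mathcal{B}}}_{\neq_{2}}$. Transitivity in Lemma~\ref{lem-extending} then gives $\widehat{f_{6}}\in\{\widehat{f}\}^{\widehat{\mathcal{B}}}_{\neq_{2}}$, i.e., $\widehat{f}\in\widehat{\mathcal{F}_{6}}$, contradicting the hypothesis. For $\delta=\pi/4$ we have $\widehat{Q}=\widehat{H}^{-1}$, and Lemma~\ref{lem-arity6-4} yields $\widehat{f_{6}}\in\{\widehat{H}^{-1}\widehat{f'}\}^{\widehat{\mathcal{B}}}_{\neq_{2}}$. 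The key observation to verify is that $\widehat{H}$ maps $\widehat{\mathcal{B}}^{\otimes}$ into itself, which I would check by applying $\widehat{H}^{\otimes 2}$ directly to each of the four generators of $\widehat{\mathcal{B}}$; coupled with the fact that $\neq_{2}$ is fixed by every element of $\widehat{\mathbf{O}_{2}}$, this shows that $\widehat{\mathcal{B}}$-extension through $\neq_{2}$ is covariant under $\widehat{H}$. Applying $\widehat{H}$ to the realization transports it into $\widehat{f_{6}^{H}}=\widehat{H}\widehat{f_{6}}\in\{\widehat{f'}\}^{\widehat{\mathcal{B}}}_{\neq_{2}}$, and chaining again forces $\widehat{f}\in\widehat{\mathcal{F}_{6}^{H}}$, the second contradiction.

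I expect the $\delta=\pi/4$ step to be the main obstacle: one must show that the $\widehat{H}$-twisted realization of $\widehat{f_{6}}$ can be untwisted inside the $\widehat{\mathcal{B}}$-extension formalism so that it genuinely lands in $\widehat{\mathcal{F}_{6}^{H}}$ rather than some other $\widehat{\mathbf{O}_{2}}$-orbit. The rest is bookkeeping: intercepting each binary signature that the chain behind Theorem~\ref{lem-6-ary-f6} produces, and checking that it must fall inside $\widehat{\mathcal{B}}^{\otimes}$ unless we have already obtained \#P-hardness.
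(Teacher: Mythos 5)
Your proposal is correct and follows essentially the same route as the paper's proof: dispatch the degenerate/reducible cases via factorization and the non-$\widehat{\mathcal{B}}$-hardness hypothesis, then walk through Lemmas~\ref{lem-arity6-1}--\ref{lem-arity6-4}, intercepting the binary signatures $\widehat{b_1},\widehat{b_2}$ from $\widehat{\partial}_{12}\widehat{f}$ and $(\rho^2,0,0,\overline{\rho^2})$ from Lemma~\ref{lem-arity6-3} to force $\delta\in\{0,\pi/4\}$, and finally chain back via Lemma~\ref{lem-extending} and the covariance $\widehat{H}\widehat{\mathcal{F}_6}=\widehat{\mathcal{F}_6^H}$ to land in the contradiction $\widehat{f}\in\widehat{\mathcal{F}_6}\cup\widehat{\mathcal{F}_6^H}$. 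The only cosmetic difference is that the paper organizes the final step as a contradiction argument that first invokes Theorem~\ref{lem-6-ary-f6} to get $\widehat{f_6}$ and then traces backward, whereas you trace forward through the chain with interceptions, and you split on $\widehat{f}\in\widehat{\mathcal{O}}^{\otimes}$ rather than on reducibility (the reducible-but-not-in-$\widehat{\mathcal{O}}^{\otimes}$ case is absorbed into your trace through Lemma~\ref{lem-arity6-1}); both decompositions cover everything.
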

\begin{proof}
If $\widehat{f}$ is reducible,
since $\widehat{f}\notin \widehat{\mathcal{B}}^{\otimes}$,
then by factorization, we can realize a nonzero signature of odd arity or a nonzero signature of arity $2$ or $4$ that is not in $\widehat{\mathcal{B}}^{\otimes}$.
If we have a nonzero signature of odd arity, then we are done  by Theorem~\ref{odd-dic}.
If we have a nonzero signature of 2, then we are done  because  $\widehat{\mathcal{F}}$
is non-$\widehat{\mathcal{B}}$ hard.
If we have a nonzero signature of 4, then we are done by 
Lemma~\ref{lem-4-notb}.
Now we assume that $\widehat{f}$ is irreducible.
In particular, being irreducible,
$\widehat{f} \not \in \widehat{\mathcal{O}}{^\otimes}$.
For a contradiction, suppose that $\holant{\neq_2}{ \widehat{\mathcal{F}}}$ is not \#P-hard.
Then, by Theorem~\ref{lem-6-ary-f6}, 
$\widehat{f_6}$ is realizable from $\widehat{f}$.
Remember that we realize $\widehat{f_6}$ from $\widehat{f}$ by realizing $\widehat{f'}$, $\widehat{f''}$ and $\widehat{f^\ast}$ (Lemmas~\ref{lem-arity6-1},  \ref{lem-arity6-3} and \ref{lem-arity6-4}).
We will trace back this process and show that they are all in $\widehat{\mathcal{F}_6}\cup \widehat{\mathcal{F}^H_6}$, which contradicts with the condition that  $\widehat{f}\notin \widehat{\mathcal{F}_6}\cup\widehat{\mathcal{F}^H_6}.$

\begin{enumerate}
    \item 
First, by  Lemma~\ref{lem-arity6-4}, $\widehat{f_6}\in \{\widehat{f''}\}^{\widehat{\mathcal{B}}}_{\neq_2}.$
Then, by Lemma~\ref{lem-extending},
$\widehat{f''}\in \{\widehat{f_6}\}^{\widehat{\mathcal{B}}}_{\neq_2}=\widehat{\mathcal{F}_6}$.
\item Then, by Lemma~\ref{lem-arity6-3}, $\widehat{f''}=\widehat{Q}\widehat{f'}$ for some $\widehat{Q}=\left[\begin{smallmatrix}
e^{-\ii\delta} & 0\\
0 & e^{\ii\delta}\\
\end{smallmatrix}\right]\in \widehat{{\bf O}_2}$ where $0\leqslant \delta<\pi/2$, and the binary signature $\widehat{b}=(e^{\ii2\delta}, 0, 0, e^{-\ii2\delta})$ is realizable from $\widehat{f'}$ where $\widehat{f'}$ is realizable from $\widehat{f}$. 
Thus, $\widehat{b}$ is realizable from $\widehat{\mathcal{F}}$.
If $e^{\ii2\delta}\neq \pm 1$ or $\pm \ii$, then $\widehat{b}\notin\widehat{\mathcal{B}}^{\otimes}$.
Since $\widehat{\mathcal{F}}$ is non-$\widehat{\mathcal{B}}$ hard, we get \#P-harness. Contradiction.
Otherwise, since $0\leqslant \delta <\pi/2$, 
$e^{\ii2\delta}=1$ or $\ii$ and then, $\delta=0$ or $\pi/4$.
If $\delta=0$, then 
$e^{\ii\delta}=e^{-\ii\delta}=1$ and $\widehat{f''}=\widehat{Q}\widehat{f'}= \widehat{f'}$.
Thus, $\widehat{f'}\in \widehat{\mathcal{F}_6}$.
If $\delta=\pi/4$, then  $\widehat{f'}=\widehat{Q}^{-1}\widehat{f''}$ where $\widehat{Q}^{-1}= \left[\begin{smallmatrix}
e^{\ii\pi/4} & 0\\
0 & e^{-\ii\pi/4}\\
\end{smallmatrix}\right]= \widehat{H}.$
Since $\widehat{f''}\in \widehat{\mathcal{F}_6}$, 
$\widehat{f'}= \widehat{H}\widehat{f''}\in \widehat{H}\widehat{\mathcal{F}_6}=\widehat{\mathcal{F}_6^H}.$
\item Finally, by Lemma~\ref{lem-arity6-1}, 
 $\widehat{f'}$ is realized by extending variables of $\widehat{f}$ with nonzero binary signatures realized from $\widehat{\partial}_{12}\widehat{f}.$
 If we can realize a nonzero binary signature that is not in $\widehat{\mathcal{B}}^{\otimes 1}$ from $\widehat{\partial}_{12}\widehat{f}$ by factorization, then 
 since $\widehat{\mathcal{F}}$ is non-$\widehat{\mathcal{B}}$ hard,
 we get  \#P-hardness. Contradiction.
 Thus, we may assume that all nonzero binary signatures realizable from $\widehat{\partial}_{12}\widehat{f}$ are in $\widehat{\mathcal{B}}^{\otimes 1}$.
 Then, $\widehat{f'}$ is realized by extending variables of $\widehat{f}$ with nonzero binary signatures in $\widehat{\mathcal{B}}^{\otimes 1}$.
 Thus, $\widehat{f'}\in\{\widehat{f}\}^{\widehat{\mathcal{B}}}_{\neq_2}$.
 By Lemma~\ref{lem-extending}, $\widehat{f}\in\{\widehat{f'}\}^{\widehat{\mathcal{B}}}_{\neq_2}.$
 Since $\widehat{f'}\in \widehat{\mathcal{F}_6}$ or $\widehat{\mathcal{F}^H_6}$,  $\widehat{f}\in \widehat{\mathcal{F}_6}$ or $\widehat{\mathcal{F}^H_6}$. Contradiction.
\end{enumerate}
Thus, $\holant{\neq_2}{ \widehat{\mathcal{F}}}$ is \#P-hard if $\widehat{\mathcal{F}}$ contains a nonzero 6-ary signature $\widehat{f}\notin \widehat{\mathcal{B}}^{\otimes}\cup\widehat{\mathcal{F}_6}\cup\widehat{\mathcal{F}^H_6}$.
\end{proof}

 We go back to real-valued Holant problems under the $Z$-transformation. 
 Consider the problem $\Holant(f_6, \mathcal{F})$ where $$f_6=Z\widehat{f_6}={\chi_S \cdot (-1)^{x_1+x_2+x_3+x_1x_2+x_2x_3+x_1x_3+x_1x_4+x_2x_5+x_3x_6}}$$ and $S = \mathscr{S}(f_6)= \mathscr{E}_6.$
 The signature $f_6$ has a quite similar matrix form to $\widehat{f_6}$.
$$M_{123,456}({f_6})=\left[\begin{matrix}1 & 0 & 0 & 1 & 0 & 1 & 1 & 0\\
0 & 1 & -1 & 0 & -1 & 0 & 0 & 1\\
0 & -1 & 1 & 0 & -1 & 0 & 0 & 1\\
-1 & 0 & 0 & -1 & 0 & 1 & 1 & 0\\
0 & -1 & -1 & 0 & 1 & 0 & 0 & 1\\
-1 & 0 & 0 & 1 & 0 & -1 & 1 & 0\\
-1 & 0 & 0 & 1 & 0 & 1 & -1 & 0\\
0 & -1 & -1 & 0 & -1 & 0 & 0 & -1\\
\end{matrix}\right].$$

Since $\widehat{f^H_6}=\widehat{H}\widehat{f_6}=\widehat{Hf_6}$, $f_6^H=Z\widehat{f^H_6}=Hf_6$.
Also, since $\widehat{\mathcal{F}_6}=\{\widehat{f_6}\}_{\neq_2}^{\widehat{\mathcal{B}}}$,
$\mathcal{F}_6=Z\widehat{\mathcal{F}_6}=\{f_6\}^\mathcal{B}_{=_2}$ is the set of signatures realizable by extending variables of $f_6$ with binary signatures in $\mathcal{B}$ using $=_2$.
Similarly, since $\widehat{\mathcal{F}^H_6}=\{\widehat{f^H_6}\}_{\neq_2}^{\widehat{\mathcal{B}}}$, $\mathcal{F}^H_6=Z\widehat{\mathcal{F}^H_6}=\{f_6\}^\mathcal{B}_{=_2}$ is the set of signatures realizable by extending variables of $f^H_6$ with binary signatures in $\mathcal{B}$ using $=_2$. 
Notice that $f_6\in \mathscr{A}$ and $\mathcal{B}\subseteq\mathscr{A}$. Thus, $\mathcal{F}_6 \subseteq \mathscr{A}$.
Also, the binary signature $(1, 1, -1, 1)$ with a signature matrix $H$ is in $\mathscr{A}$.
Thus, $f_6^H\in \mathscr{A}$ and then  $\mathcal{F}^H_6 \subseteq \mathscr{A}$.
Also, $\mathscr{S}(f_6)=\mathscr{E}_6$ and one can check that $\mathscr{S}(f^H_6)=\mathscr{O}_6$.
Thus, for every $f\in \mathcal{F}_6\cup \mathcal{F}^H_6$, $\mathscr{S}(f)=\mathscr{E}_6$ or $\mathscr{O}_6$.
Since $f_6$ and $f^H_6$ satisfy {\sc 2nd-Orth}, one can easily check that every $f\in \mathcal{F}_6\cup \mathcal{F}^H_6$ satisfies {\sc 2nd-Orth}.

We want to show that $\Holant(f_6, \mathcal{F})\equiv_T\holant{\neq_2}{\widehat{f_6}, \widehat{\mathcal{F}}}$ is \#P-hard for all real-valued $\mathcal{F}$ that does not satisfy condition (\ref{main-thr}).
By Lemma~\ref{lem-2-notb}, $\{f_6\}\cup\mathcal{F}$ is non-${\mathcal{B}}$ hard.
We restate Lemmas~\ref{lem-4-notb} and \ref{lem-6-notb} in the setting of $\Holant(\mathcal{F})$ for non-${\mathcal{B}}$ hard $\mathcal{F}$.
\begin{corollary}\label{lem-non-f6}
Suppose that $\mathcal{F}$ is non-${\mathcal{B}}$ hard.
Then,
$\Holant(\mathcal{F})$ is \#P-hard if $\mathcal{F}$ contains  a nonzero signature $f$ of arity at most  $6$ where $f\notin \mathcal{B}^{\otimes}\cup\mathcal{F}_6\cup\mathcal{F}_6^H$.
\end{corollary}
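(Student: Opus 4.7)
The plan is to view this corollary as a straightforward translation of Lemmas~\ref{lem-4-notb} and \ref{lem-6-notb} from the $\holant{\neq_2}{\widehat{\mathcal{F}}}$ setting back to the $\Holant(\mathcal{F})$ setting via the holographic transformation $Z$. Recall that $Z$ is invertible and yields $\Holant(\mathcal{F}) \equiv_T \holant{\neq_2}{\widehat{\mathcal{F}}}$, under which $\mathcal{B}^{\otimes}$, $\mathcal{F}_6$, and $\mathcal{F}^H_6$ correspond exactly to $\widehat{\mathcal{B}}^{\otimes}$, $\widehat{\mathcal{F}_6}$, and $\widehat{\mathcal{F}^H_6}$, since these sets were defined precisely as the $Z^{-1}$-images of the real-valued sets. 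In particular, a nonzero binary $b \notin \mathcal{B}^{\otimes}$ corresponds to a nonzero binary $\widehat b \notin \widehat{\mathcal{B}}^{\otimes}$, so the hypothesis that $\mathcal{F}$ is non-$\mathcal{B}$ hard is equivalent to $\widehat{\mathcal{F}}$ being non-$\widehat{\mathcal{B}}$ hard.

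The argument then splits on the arity of $f$. If $f$ has odd arity, then since $\mathcal{F}$ does not satisfy condition (\ref{main-thr}), Theorem~\ref{odd-dic} gives \#P-hardness immediately. If $f$ has arity~$2$, then $f \neq 0$ and $f \notin \mathcal{B}^{\otimes}$, so the non-$\mathcal{B}$ hard assumption applied to $f$ itself gives \#P-hardness of $\Holant(f, \mathcal{F}) \leqslant_T \Holant(\mathcal{F})$. If $f$ has arity~$4$, one applies Lemma~\ref{lem-4-notb} to $\widehat{f} \in \widehat{\mathcal{F}}$ (where $\widehat{f} \notin \widehat{\mathcal{B}}^{\otimes}$ since $f \notin \mathcal{B}^{\otimes}$). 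If $f$ has arity~$6$, one applies Lemma~\ref{lem-6-notb} to $\widehat{f}$, noting that the hypothesis $f \notin \mathcal{B}^{\otimes} \cup \mathcal{F}_6 \cup \mathcal{F}^H_6$ transforms into $\widehat{f} \notin \widehat{\mathcal{B}}^{\otimes} \cup \widehat{\mathcal{F}_6} \cup \widehat{\mathcal{F}^H_6}$.

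There is no real obstacle here; the work has been done in Lemmas~\ref{lem-4-notb} and \ref{lem-6-notb}, and the corollary is really a repackaging under $Z$. The only minor verification is that the closure data ($\mathcal{B}^{\otimes}$, $\mathcal{F}_6$, $\mathcal{F}^H_6$) and the non-$\mathcal{B}$ hard property are preserved by $Z$, which is immediate from their definitions as $Z$-images.
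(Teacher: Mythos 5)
Your proposal is correct and follows exactly the paper's route: the paper gives no separate argument for this corollary but simply calls it a restatement of Lemmas~\ref{lem-4-notb} and \ref{lem-6-notb} under the $Z$-transformation, which is precisely what you verify (the arity-$2$ case handled directly by the non-$\mathcal{B}$ hardness hypothesis, and the odd-arity case being vacuous under the paper's standing assumption that $\mathcal{F}$ consists of even-arity signatures). The only point worth noting is that your explicit check that $Z^{-1}$ carries $\mathcal{B}^{\otimes}$, $\mathcal{F}_6$, $\mathcal{F}^H_6$, and the non-$\mathcal{B}$ hard property to their hatted counterparts is exactly the verification the paper leaves implicit.
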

\begin{remark}
Notice that $\mathcal{B}^{\otimes}\cup\mathcal{F}_6\cup\mathcal{F}_6^H\subseteq\mathscr{A}$. 
Thus, for any non-${\mathcal{B}}$ hard set $\mathcal{F}$,
$\Holant(\mathcal{F})$ is \#P-hard if $\mathcal{F}$ contains  a nonzero signature $f$ of arity at most  $6$ where $f\notin \mathscr{A}$.
\end{remark}

Now, we show that all four binary signatures in $\mathcal{B}$ are realizable from $f_6$. 
\begin{lemma}\label{lem-b-realize-f6}
$\Holant(\mathcal{B}, f_6, \mathcal{F})\leqslant\Holant(f_6, \mathcal{F})$.
\end{lemma}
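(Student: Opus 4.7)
The plan is to realize each binary signature in $\mathcal{B}$ as a gadget built from $f_6$ together with the implicit edge signature $=_2$, after which the basic gadget lemma lets us freely add $\mathcal{B}$ to the signature set without changing complexity. Since $=_2 \in \mathcal{B}$ is already present on every edge of an instance of $\Holant(f_6, \mathcal{F})$, only $\neq_2$, $\neq_2^-$, and $=_2^-$ remain to be constructed.

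The key step is a single self-loop. Merging the first two variables of $f_6$ using $=_2$ produces $\partial_{12} f_6$, which is the sum over $x_3 \in \{0,1\}$ of the pairs of rows of $M_{123,456}(f_6)$ indexed by $x_1 x_2 x_3 = 00 x_3$ and $x_1 x_2 x_3 = 11 x_3$. Direct inspection of the row sums using the explicit matrix displayed above shows that the result is supported exactly on those $(x_3, x_4, x_5, x_6)$ with $x_3 \neq x_6$ and $x_4 \neq x_5$, taking values $\pm 2$ whose sign is determined only by the pair $(x_3, x_6)$. This yields the clean tensor decomposition
\[
\partial_{12} f_6 \;=\; 2 \cdot {\neq_2^-}(x_3, x_6) \otimes {\neq_2}(x_4, x_5).
\]
Since this is a nonzero real-valued signature with a real tensor factorization, Lemma \ref{lin-wang} allows us to replace it by its two real factors, making both $\neq_2$ and $\neq_2^-$ available.

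To obtain the final Bell signature $=_2^-$, I chain $\neq_2^-$ and $\neq_2$ through an $=_2$ edge: one variable of $\neq_2^-$ is identified with one variable of $\neq_2$ via an edge. The resulting binary gadget has matrix
\[
M(\neq_2^-)\,M(\neq_2) \;=\; \left[\begin{smallmatrix} 0 & 1 \\ -1 & 0 \end{smallmatrix}\right] \left[\begin{smallmatrix} 0 & 1 \\ 1 & 0 \end{smallmatrix}\right] \;=\; \left[\begin{smallmatrix} 1 & 0 \\ 0 & -1 \end{smallmatrix}\right] \;=\; M(=_2^-),
\]
completing $\mathcal{B}$ and giving the reduction $\Holant(\mathcal{B}, f_6, \mathcal{F}) \leqslant_T \Holant(f_6, \mathcal{F})$.

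There is no substantive obstacle in this argument beyond the matrix bookkeeping for $\partial_{12} f_6$; the whole lemma rests on the single explicit factorization of this self-loop together with one matrix multiplication. By the visible symmetry of $f_6$ among its first three variables, one could equally use $\partial_{13} f_6$ or $\partial_{23} f_6$, which provides a convenient cross-check.
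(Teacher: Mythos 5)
Your proof is correct and follows essentially the same route as the paper: compute the self-loop $\partial_{12}f_6$, observe the tensor factorization into $\neq_2^-(x_3,x_6)\otimes\neq_2(x_4,x_5)$, invoke Lemma~\ref{lin-wang} to obtain the two factors, and chain $\neq_2^-$ with $\neq_2$ to get $=_2^-$. (Your scalar $2$ is in fact the correct prefactor when summing the two rows of $M_{12}(f_6)$; the paper's displayed truth table drops it, but this normalization is immaterial for the factorization and the reduction.)
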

\begin{proof}
Consider $\partial_{12}f_6$.
Notice that 
\setcounter{MaxMatrixCols}{20}
$$\begin{bmatrix}
{\bf f_6}_{12}^{00}\\
{\bf f_6}_{12}^{11}\\
\end{bmatrix}=
\begin{bmatrix}
1 & 0 & 0 & 1 & 0 & 1 & 1 & 0 & 0 & 1 & -1 & 0 & -1 & 0 & 0 & 1\\
-1 & 0 & 0 & 1 & 0 & 1 & -1 & 0 & 0 & -1 & -1 & 0 & -1 & 0 & 0 & -1\\
\end{bmatrix}.$$
Thus, $\partial_{12}f_6(x_3, x_4, x_5, x_6)={f_6}_{12}^{00}+{f_6}_{12}^{11}$ has the truth table $(0, 0, 0, 1, 0, 1, 0, 0, 0, 0, -1, 0, -1, 0, 0, 0).$ 
In other words, $\partial_{12}f_6(0011)=1$, $\partial_{12}f_6(0101)=1$, $\partial_{12}f_6(1010)=-1$, $\partial_{12}f_6(1100)=-1$, and $\partial_{12}f_6=0$ elsewhere.
Then, $$\mathscr{S}(\partial_{12}f_6)=\{(x_3, x_4, x_5,x_6)\in\mathbb{Z}_2^4\mid x_3\neq x_6 \land x_4\neq x_5\},$$ and $$\partial_{12}f_6(x_3, x_4, x_5, x_6)=
(\neq^-_2)(x_3, x_6)\otimes (\neq_2)(x_4, x_5).$$
Thus, by factorization we can realize $\neq_2^-$ and $\neq_2$.
Then connecting a variable of $\neq_2^-$ with a variable of $\neq_2$ (using $=_2$), we will get $=_2^-$.
Thus, $\mathcal{B}$ is realizable from $f_6$.
\end{proof}

We define  the problem $\Holantb(\mathcal{F})$ to be $\Holant(\mathcal{B} \cup \mathcal{F}).$
For all $\{i, j\}$ and every $b\in \mathcal{B}$,
 consider signatures $\partial^{b}_{ij}f_6$ (i.e., $\partial^{+}_{ij}f_6$, $\partial^{\widehat +}_{ij}f_6$, $\partial^{-}_{ij}f_6$ and  $\partial^{\widehat -}_{ij}f_6$) realized by merging variables $x_i$ and $x_j$ of $f_6$ using the binary signature $b$.
If there were one that is not in $\mathcal{B}^{\otimes 2}$, then by Corollary \ref{lem-non-f6}, we would be done.
However, it is observed in \cite{realodd} that $f_6$ satisfies the following Bell property. 
\begin{definition}[Bell property]
An irreducible signature $f$ satisfies the Bell property if for all pairs of indices $\{i, j\}$ and every $b\in \mathcal{B}$, $\partial^b_{ij}f\in \mathcal{B}^{\otimes}.$
\end{definition}

It can be directly checked that
\begin{lemma}
Every signature in $\mathcal{F}_6\cup\mathcal{F}^H_6$ satisfies the Bell property.
\end{lemma}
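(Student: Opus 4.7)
I will prove the two parts $\mathcal{F}_6$ and $\mathcal{F}_6^H$ separately, reducing the second to the first via the holographic transformation by $H$.

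For $\mathcal{F}_6$, I proceed by induction on the number of extended variables. The base case $g=f_6$ is given. For the inductive step, suppose $f$ is irreducible and satisfies the Bell property, and $g$ is obtained from $f$ by extending a single variable $x_k$ with some $b \in \mathcal{B}$ (using $=_2$). Since each $b \in \mathcal{B}$ is an invertible $2 \times 2$ matrix, $g$ remains irreducible. To check the Bell property, fix any pair $\{i,j\}$ and any $b' \in \mathcal{B}$. If $k \notin \{i,j\}$, the extension at $x_k$ and the merging at $\{i,j\}$ commute, so $\partial^{b'}_{ij}g$ is the extension of $\partial^{b'}_{ij}f$ at $x_k$ by $b$. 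By the Bell property of $f$, $\partial^{b'}_{ij}f \in \mathcal{B}^{\otimes}$, and extending a factor in $\mathcal{B}^{\otimes}$ by $b \in \mathcal{B}$ via $=_2$ replaces that factor by a product of two $2 \times 2$ matrices in $\mathcal{B}$, which again lies in $\mathcal{B}$ up to a real scalar $\pm 1$ (direct check from the Pauli-group structure of $\mathcal{B}$). If, on the other hand, $j = k$, then expanding the contraction gives
\[
\partial^{b'}_{ij} g \;=\; \lambda \cdot \partial^{b''}_{ik} f,
\]
where $b''$ is the product (up to transposition) of $b$ and $b'$ as $2 \times 2$ matrices, and $\lambda \in \mathbb{R} \setminus \{0\}$; since $\mathcal{B}$ is closed under multiplication up to $\pm 1$, we have $b'' \in \mathcal{B}$, so $\partial^{b'}_{ij}g \in \mathcal{B}^{\otimes}$ by the Bell property of $f$.

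For $\mathcal{F}_6^H$, I claim that $\mathcal{F}_6^H = H\mathcal{F}_6$ up to real scalars and that the transformation by $H$ preserves the Bell property; Part 1 then finishes the argument. Since $H \in \mathbf{O}_2$, we have $H^T I H = I$, so $H$ preserves $=_2$; consequently the extending operation commutes with $H$:
\[
H\bigl(\operatorname{extend}(f,b)\bigr) \;=\; \operatorname{extend}(Hf, Hb).
\]
A direct four-case computation shows $H \mathcal{B} = \mathcal{B}$ up to signs, so $H\mathcal{F}_6 = \{Hf_6\}^{H\mathcal{B}}_{=_2} = \{f_6^H\}^{\mathcal{B}}_{=_2} = \mathcal{F}_6^H$. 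For preservation of the Bell property, I will use the tensor identity
\[
\partial^{b'}_{ij}(Hf) \;=\; H^{\otimes(n-2)}\,\partial^{H^{T}b'H}_{ij}(f),
\]
observing that conjugation $b' \mapsto H^{T}b'H = H^{-1}b'H$ also permutes $\mathcal{B}$ up to signs (again a finite check), and that $H^{\otimes}$ sends $\mathcal{B}^{\otimes}$ into $\mathcal{B}^{\otimes}$. Thus whenever $f$ satisfies the Bell property, so does $Hf$.

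The essential content of the proof is the closure of $\mathcal{B}$ under matrix multiplication, conjugation by $H$, and simultaneous application of $H$ on both tensor factors, all up to real scalars. Since $|\mathcal{B}| = 4$, each of these checks is a short finite computation; the only real care needed is in the case $j=k$ of Part 1, where one must correctly track which of $b$, $b^{T}$, and $b'$ appear in the composition (noting that $\neq_2^-$ is antisymmetric while the other three elements of $\mathcal{B}$ are symmetric, so $b^T = \pm b$ in all cases). I anticipate no conceptual obstacle beyond this bookkeeping.
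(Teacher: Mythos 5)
Your proposal is correct. The paper gives no proof at all for this lemma---it simply states ``It can be directly checked that,'' cites \cite{realodd} for the base case that $f_6$ itself has the Bell property, and leaves the verification of the remaining members of $\mathcal{F}_6\cup\mathcal{F}_6^H$ to the reader. Your argument supplies the structure behind that check: an induction on single extensions, using the closure of $\mathcal{B}$ (viewed as the matrices $\{I,Z,X,iY\}$) under matrix multiplication and transposition up to sign to handle each inductive step, together with the closure of $\mathcal{B}$ under conjugation by $H$ to transfer the conclusion from $\mathcal{F}_6$ to $\mathcal{F}_6^H=H\mathcal{F}_6$. Both the commutation of extending with merging when $k\notin\{i,j\}$ and the absorption $\partial^{b'}_{ik}g=\partial^{\,b'b^{\tt T}}_{ik}f$ when $k\in\{i,j\}$ are stated correctly; the only mild implicit step is the observation that $i=k$ reduces to $j=k$ by the (anti)symmetry of each element of $\mathcal{B}$, which you effectively acknowledge in your final paragraph. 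The identity $H^{\otimes n}\!\bigl(\operatorname{extend}(f,b)\bigr)=\operatorname{extend}(Hf,Hb)$ uses exactly $H^{\tt T}H=I$, and the tensor identity $\partial^{b'}_{ij}(Hf)=H^{\otimes(n-2)}\partial^{H^{\tt T}b'H}_{ij}(f)$ is verified by a direct computation. Compared to the paper's stance of a brute-force finite check, your route is more economical and more transparent about \emph{why} the lemma holds---it isolates the two finite computations (closure of $\mathcal{B}$ under products, and $H$-conjugation acting as a permutation of $\mathcal{B}$ up to sign) that make everything else follow structurally.
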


Now consider all possible  gadget constructions. 
If we could realize a signature of arity at most $6$ that is not in $\mathcal{B}^{\otimes}\cup\mathcal{F}_6\cup\mathcal{F}_6^H$ from $\mathcal{B}$ and $f_6$ by any possible gadget, 
then by Corollary \ref{lem-non-f6} there would be a somewhat more straightforward proof to our dichotomy theorem for the case of arity 6.
However, after many failed attempts, we believe there is
a more intrinsic reason why this approach cannot work.
The following conjecture  formulates this difficulty. This truly makes $f_6$ the
\emph{Lord of Intransigence} at arity 6.
\begin{conjecture}
All nonzero signatures of arity at most $6$ realizable from $\mathcal{B}\cup\{f_6\}$ are in  $\mathcal{B}^{\otimes}\cup\mathcal{F}_6$.
Also, all signatures of arity at most $6$ realizable from $\mathcal{B}\cup\{f^H_6\}$ are in  $\mathcal{B}^{\otimes}\cup\mathcal{F}^H_6.$
\end{conjecture}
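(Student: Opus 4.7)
The plan is to prove the conjecture by induction on the number $n$ of $f_6$-copies used in a gadget that realizes a signature $g$ of arity at most $6$ from $\mathcal{B} \cup \{f_6\}$; the statement for $f_6^H$ will follow by transporting the $f_6$ argument under the transformation $H$. Two facts will be used throughout. First, $\{f_6\} \cup \mathcal{B} \subseteq \mathscr{A}$, so every realized signature is affine. Second, after the $Z$-transformation each $\widehat{b} \in \widehat{\mathcal{B}}$ is, up to a scalar, a Pauli matrix, so any chain of $\widehat{\mathcal{B}}$-signatures connecting two ports composes to a single $\widehat{\mathcal{B}}^{\otimes 1}$-element up to a nonzero real scalar. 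Hence we may assume every connection in the gadget is of one of three types: (i) a direct $\mathcal{B}$-wire between two $f_6$-ports, (ii) a single $\mathcal{B}$-signature extending one $f_6$-port with the other end dangling, or (iii) a stand-alone $\mathcal{B}$-signature contributing a tensor factor of $\mathcal{B}^{\otimes 1}$ to $g$.

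The base case $n=0$ gives $g \in \mathcal{B}^{\otimes}$ immediately. For $n=1$, the Bell property $\partial^b_{ij} f_6 \in \mathcal{B}^{\otimes}$ forces $g \in \mathcal{B}^{\otimes}$ whenever the single $f_6$-copy has any self-loop; otherwise all $6$ ports are either dangling or $\mathcal{B}$-extended, and $g \in \mathcal{F}_6$ by definition. For the inductive step with $n \geq 2$, pick any copy $F$ of $f_6$. If $F$ has a $\mathcal{B}$-self-loop, the Bell property replaces $F$ by a $\mathcal{B}^{\otimes}$-tensor and reduces the number of $f_6$-copies by one; if $F$ has no wires to any other $f_6$-copy, then $g$ factors as an $\mathcal{F}_6$-signature on the $F$-ports tensored with a closed subgadget on the remaining copies, which contributes only a nonzero scalar that may be absorbed, yielding $g \in \mathcal{F}_6$.

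The genuinely hard case is when $F$ is joined to at least one other $f_6$-copy by $\mathcal{B}$-wires and has no self-loop. The key technical step is the following \emph{two-copy claim}: for every $p \geq 3$ and every choice of $p$ port-pairings with $\mathcal{B}$-labels between two copies of $f_6$, the $(12-2p)$-ary contraction lies in $\mathcal{B}^{\otimes} \cup \mathcal{F}_6$, with the $\mathcal{F}_6$-alternative possible only when $p = 3$. Given this claim, each two-copy collapse either eliminates both copies into a tensor in $\mathcal{B}^{\otimes}$ (reducing $n$ by $2$) or fuses them into an $\mathcal{F}_6$-signature; since the Bell property extends to $\mathcal{F}_6$-signatures (merging any two of their ports via $\mathcal{B}$ is an internal $\mathcal{B}$-merge of the underlying $f_6$), the fused $\mathcal{F}_6$-signature may be treated as a fresh $f_6$-copy for continued induction. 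The chief obstacle is proving the two-copy claim itself. In principle it is a finite verification: modulo the automorphism group of $f_6$ (which contains the visible $S_3 \times S_3$ action on $\{x_1,x_2,x_3\}$ and $\{x_4,x_5,x_6\}$ together with the transposition swapping the two blocks), the number of distinct $p = 3$ configurations is bounded, and each reduces to a direct matrix computation in the $Z$-transformed picture, where $\widehat{f_6}$ acts as a specific stabilizer-like tensor and $\widehat{\mathcal{B}}$-wires implement Pauli contractions; Pauli sign cancellations should force every nonzero contraction into $\widehat{\mathcal{B}}^{\otimes} \cup \widehat{\mathcal{F}_6}$. A cleaner proof would identify an algebraic invariant on affine signatures of arity at most $6$ that exactly characterizes $\mathcal{B}^{\otimes} \cup \mathcal{F}_6$ and is preserved under $\mathcal{B}$-contraction with $f_6$, sidestepping the brute case analysis; isolating such an invariant is, in our view, the main difficulty to overcome.
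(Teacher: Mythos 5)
This statement is labeled a \emph{Conjecture} in the paper, and the paper does not prove it. The authors explicitly say, just above, ``after many failed attempts, we believe there is a more intrinsic reason why this approach cannot work,'' and they then route around it entirely (via the $\Holantb$ reduction and Theorem~\ref{thm-holantb}) rather than establishing it. So there is no proof in the paper to compare against, and any proposal claiming to prove it would need to be airtight.

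Your proposal is not a proof: it openly defers its crux. The entire weight rests on the ``two-copy claim,'' which you yourself describe as ``the main difficulty to overcome'' and for which you offer only the suggestion that it is a finite verification or that some invariant might exist. Beyond that acknowledged gap, there is also a structural hole in the induction even if the two-copy claim were granted. Your claim is stated only for $p \geq 3$ port-pairings between two copies of $f_6$. When two copies are joined by $p=1$ or $p=2$ wires, the contraction has arity $10$ or $8$; such an object is neither in $\mathcal{B}^{\otimes}$ (at that arity it could be, but you have not shown it) nor in $\mathcal{F}_6$, and your induction has no classification of intermediate gadgets of arity exceeding $6$ to fall back on. A generic gadget interconnects several $f_6$-copies so that no pair shares $3$ or more wires, and your pairwise-collapse scheme simply does not apply. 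One would need to maintain an invariant describing every realizable signature at \emph{every} arity, closed under $\mathcal{B}$-contraction, tensoring, and self-loops, and then observe that its restriction to arity $\leq 6$ is $\mathcal{B}^{\otimes}\cup\mathcal{F}_6$; you gesture toward such an invariant but do not exhibit one. In its present form the proposal is a plausible plan of attack, not a proof, and the statement remains open as the paper says.
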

So to prove the \#P-hardness of $\Holantb(f_6, \mathcal{F})$, we have to make additional use of   $\mathcal{F}$. In particular, we need to use a non-affine signature in $\mathcal{F}$.

\section{The \#P-hardness of $\Holantb(\mathcal{F})$}\label{sec-holantb}
In this section, we  prove that for all real-valued non-$\mathcal{B}$ hard set $\mathcal{F}$ that does not satisfy condition (\ref{main-thr}), $\Holantb(\mathcal{F})$ is \#P-hard  (Theorem~\ref{thm-holantb}).
For any real-valued  set $\mathcal{F}$ that does not satisfy condition (\ref{main-thr}), the set
$\{f_6\}\cup\mathcal{F}$ is non-$\mathcal{B}$ hard, and
since  $\mathcal{B}$ is realizable from $f_6$,
 $\Holant(f_6, \mathcal{F})$ is \#P-hard
by Theorem~\ref{thm-holantb}.
Combining with Theorem~\ref{lem-6-ary-f6}, we show that $\holant{\neq_6}{\widehat{\mathcal{F}}}$ is \#P-hard if $\widehat{\mathcal{F}}$ contains a 6-ary signature that is not in $\widehat{\mathcal{O}}^{\otimes}$ (Lemma~\ref{lem-6-ary}).

Since $\mathcal{F}$ does not satisfy condition (\ref{main-thr}), $\mathcal{F}\not\subseteq\mathscr{A}$.
Thus, it contains a signature $f$ of arity $2n$ that is not in $\mathscr{A}$.
In the following, we will prove the \#P-hardness of $\Holantb(\mathcal{F})$ where $\mathcal{F}$ is 
non-$\mathcal{B}$ hard by induction on $2n \geqslant 2$.
For the base cases $2n\leqslant 6$,
 by  Corollary~\ref{lem-non-f6} and the Remark after that,  $\Holantb( \mathcal{F})$ is \#P-hard.
 Then, starting with a signature  of arity $2n\geqslant 8$ that is not in $\mathscr{A}$, 
 we want to  realize a signature of lower arity $2k\leqslant 2n-2$ that is also not in $\mathscr{A}$, or else we get  \#P-hardness directly.
 If we can reduce the arity down to at most $6$, then we are done. 
 
 Let $f\notin \mathscr{A}$ be a nonzero signature of arity $2n\geqslant 8$.
We first show that  if $f$ does not have parity, then we get \#P-hardness (Lemma~\ref{lem-parity}). 
 Then, suppose that $f$ has parity.
 If $f$ is reducible, since $f$ has
 even arity (as we assumed so starting from Section~\ref{sec:proof-outline}),  it is a tensor product of two signatures of  odd arity, or a tensor product of two signatures of  even arity  which are not both in $\mathscr{A}$ since $f\notin \mathscr{A}$.
 Thus, by factorization, we can realize a nonzero signature of odd arity and we get  \#P-hardness by Theorem~\ref{odd-dic}, or we can realize a signature of lower even arity that is not in $\mathscr{A}$.
 Thus, we may assume that $f$ is irreducible. Then by Lemma~\ref{second-ortho} and
 the Remark after Definition~\ref{def:second-order-othor} we may assume  $f$ satisfies {\sc 2nd-Orth}.

Consider signatures $\partial^b_{ij}f$ (i.e., $\partial^+_{ij}f$, $\partial^-_{ij}f$,  $\partial^{\widehat+}_{ij}f$ and $\partial^{\widehat-}_{ij}f$) realized by merging variables $x_i$ and $x_j$ of $f$ using $b\in \mathcal{B}$ for all pairs of indices $\{i, j\}$ and every $b\in \mathcal{B}$.
If there is one signature that is not in $\mathscr{A}$, then we have realized a signature of arity $2n-2$ that is not in $\mathscr{A}$.
Otherwise, 
$\partial^b_{ij}f\in \mathscr{A}$ for all $\{i, j\}$ and every $b\in \mathcal{B}$. 
We denote this property by $f\in \int_{\mathcal{B}}\mathscr{A}.$ 
Now, assuming that $f$ has parity, $f$ satisfies {\sc 2nd-Orth} and $f\in \int_{\mathcal{B}}\mathscr{A},$  
we would like to reach a contradiction by showing that this would force $f$ itself to belong to $\mathscr{A}$.
However, quite amazingly, there do exist non-affine signatures that  satisfy these stringent conditions.
We will show how they are discovered and handled (Lemmas~\ref{lem-affine-norm}, \ref{lem-affine-support} and \ref{lem-in-affine}).

In this section, all signatures are real-valued.
When we say an entry of a signature has norm $a$, we mean it takes value $\pm a$. 
Since $\mathcal{B}$ is available in $\Holantb( \mathcal{F})$, if a signature $f$ is realizable in $\Holantb( \mathcal{F})$, then we can realize all signatures in $\{f\}^{\mathcal{B}}_{=_2}$ that are realizable by extending  $f$ with $\mathcal{B}^{\otimes 1}$ (using $=_2$). 
If we extend the variable $x_i$ of $f$ with $\neq_2$, 
then we will get a signature $g$ where ${g}_i^0=f_i^1$ and ${g}_i^1=f_i^0$. 
This is a flipping operation on the variable $x_i$.
If we extend the variable $x_i$ of $f$ with $=_2^-$, 
then we will get a signature $g$ where ${g}_i^0=f_i^0$ and ${g}_i^1=-f_i^1$. 
We call this a negating operation on the variable $x_i$.
In the following, once $f$ is realizable in $\Holantb( \mathcal{F})$, 
we may modify it by flipping or negating.
This will not change the complexity of the problem. 

\subsection{Parity  condition}
We first show that if $\mathcal{F}$ contains a signature that does not have  parity, then we can get  \#P-hardness.

\begin{lemma}\label{lem-parity}
Suppose that $\mathcal{F}$ is non-$\mathcal{B}$ hard and $\mathcal{F}$ contains a signature $f$ of arity $2n$. If $f$  does not have parity, then $\Holantb( \mathcal{F})$ is \#P-hard.
\end{lemma}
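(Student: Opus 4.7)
The plan is to prove the lemma by induction on the arity $2n$ of $f$.

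\textbf{Base cases} $2n\leqslant 6$. Every signature in $\mathcal{B}^{\otimes}\cup\mathcal{F}_6\cup\mathcal{F}_6^H$ has parity: each generator in $\mathcal{B}$ has parity, a tensor product of parity signatures has parity (parities add in $\mathbb{Z}_2$), and the extending gadgets defining $\mathcal{F}_6,\mathcal{F}_6^H$ only flip or negate variables of $f_6,f_6^H$, both of which have parity. Hence a non-parity $f$ of arity at most $6$ lies outside this union, and Corollary~\ref{lem-non-f6} (applied to $\mathcal{F}$, which is non-$\mathcal{B}$ hard) yields the \#P-hardness of $\Holant(\mathcal{F})\leqslant_T\Holantb(\mathcal{F})$. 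For the subcase $2n=2$ the conclusion is even more direct: a non-parity binary signature is not in $\mathcal{B}^{\otimes 1}$, so non-$\mathcal{B}$ hardness of $\mathcal{F}$ alone gives that $\Holant(f,\mathcal{F})\leqslant_T\Holantb(\mathcal{F})$ is \#P-hard.

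\textbf{Inductive step} $2n\geqslant 8$. I will realize from $f$ using $\mathcal{B}$ a non-parity signature of arity $2n-2$ and then apply the induction hypothesis. Split $f=f_e+f_o$ into its even- and odd-weight parts; both are nonzero. A direct computation of the four merging operators shows that $\partial^+_{ij}$ and $\partial^-_{ij}$ (merging with $=_2,=_2^-$) preserve the parity of each of $f_e,f_o$ separately, while $\partial^{\widehat+}_{ij}$ and $\partial^{\widehat-}_{ij}$ (merging with $\neq_2,\neq_2^-$) flip the parity of each. Hence for every $(b,\{i,j\})$ the summands $\partial^b_{ij}f_e$ and $\partial^b_{ij}f_o$ lie in opposite parity classes of the arity-$(2n-2)$ signature, so $\partial^b_{ij}f$ fails to have parity \emph{iff} both summands are nonzero. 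The inductive step reduces to the key claim: some $(b,\{i,j\})$ makes both $\partial^b_{ij}f_e\neq 0$ and $\partial^b_{ij}f_o\neq 0$.

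To prove the key claim I pick $\alpha\in\mathscr{S}(f_e)$ and $\beta\in\mathscr{S}(f_o)$ and partition the $2n\geqslant 8$ positions by $(\alpha_k,\beta_k)\in\{0,1\}^2$. By pigeonhole some class contains at least two indices, and taking $\{i,j\}$ from such a class makes $\alpha_i=\alpha_j$ and $\beta_i=\beta_j$, so both $\alpha$ and $\beta$ contribute to $\partial^+_{ij}$ and $\partial^-_{ij}$ (at the arity-$(2n-2)$ positions of even and odd weight coming from $\alpha$ and $\beta$ respectively). Because $f_e(\alpha),f_o(\beta)\neq 0$, at least one of $\partial^+_{ij}f_e,\partial^-_{ij}f_e$ is nonzero at that position, and similarly for $f_o$. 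If the same sign works for both, the claim is immediate; otherwise the mismatched sub-case forces sign relations such as $f_e(\alpha\oplus e_i\oplus e_j)=f_e(\alpha)$ and $f_o(\beta\oplus e_i\oplus e_j)=-f_o(\beta)$, producing new witness pairs that still sit in the same class and allow iteration.

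\textbf{Main obstacle.} The delicate point is ruling out that the mismatched cancellation persists uniformly across every pair $\{i,j\}$ available within the abundant class and across any other populated classes. The structural lemma I will rely on is that if $\partial^+_{ij}g=\partial^-_{ij}g=0$ for every $\{i,j\}$ then $g\equiv 0$: the two equations together force $g^{00}_{ij}=g^{11}_{ij}=0$ for every pair, and for arity $\geqslant 3$ every input has two coordinates of equal value by pigeonhole. Combining this structural fact with the multiplicity of pair choices available when $2n\geqslant 8$ (together with the symmetric argument in the other populated classes) excludes the bad scenario and yields a $(b,\{i,j\})$ with both $\partial^b_{ij}f_e$ and $\partial^b_{ij}f_o$ nonzero, closing the induction.
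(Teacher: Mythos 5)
Your overall framework---split $f=f_e+f_o$, observe that $\partial^+_{ij},\partial^-_{ij}$ preserve and $\partial^{\widehat+}_{ij},\partial^{\widehat-}_{ij}$ flip the parity of each summand, and reduce the inductive step to ``find a $(b,\{i,j\})$ making both $\partial^b_{ij}f_e$ and $\partial^b_{ij}f_o$ nonzero''---is correct and matches the spirit of the paper's argument. Your base cases for $2n\le 6$ via Corollary~\ref{lem-non-f6} are also sound (the paper only bases at $2n=2$ and runs the inductive step from $2n\ge 4$, but your version works too). However, the heart of the inductive step, the ``key claim,'' is not actually proved.

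There are two concrete gaps. First, your pigeonhole only gives two indices in a common $(\alpha_k,\beta_k)$-class when $2n=8$ (the partition into four classes can be $2{+}2{+}2{+}2$); two indices are not enough. With a single pair $\{i,j\}$ in the mismatch case you can only conclude (after a fuller analysis) that $\partial^+_{ij}f_o\equiv 0$ and $\partial^-_{ij}f_e\equiv 0$, i.e.\ that $f_e$ is symmetric and $f_o$ antisymmetric under flipping $x_i,x_j$; this is entirely consistent with neither $f_e$ nor $f_o$ vanishing, so nothing forces a contradiction yet. The paper avoids this by first normalizing $\alpha=\vec{0}^{2n}$ by flipping (which collapses the four classes to two), and then using the fact that ${\rm wt}(\alpha')$ is odd to guarantee \emph{three} bits with a common value, valid for all $2n\ge 4$; with only your raw pigeonhole you would need $2n\ge 10$ to guarantee a class of size three. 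Second, the ``iteration'' and the ``structural lemma'' you invoke do not close the argument. The structural lemma (if $\partial^+_{ij}g=\partial^-_{ij}g=0$ for all pairs then $g\equiv 0$) excludes a scenario you are never in: the bad scenario is that for each $(b,\{i,j\})$ \emph{one of} $\partial^b_{ij}f_e$, $\partial^b_{ij}f_o$ vanishes, with the vanishing side depending on $(b,\{i,j\})$, and this is a strictly weaker hypothesis. What is actually needed---and what the paper does---is a sign-chasing argument over a triangle of three indices with a common value: using the six entries $x,x',y,y',z,z'$ and the three merging gadgets $\partial^{\pm}_{23},\partial^{\pm}_{13},\partial^{\widehat\pm}_{12}$, one derives $xx'+yy'=0$, $xx'+zz'=0$, $yy'+zz'=0$, hence $xx'=0$, contradicting $f(\alpha),f(\alpha')\ne 0$. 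Your sketch hints at the necessary sign relations but stops before carrying out this computation, and as written the claim is not established.
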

\begin{proof}
We prove this lemma by induction on $2n$.
We first consider the base case that $2n=2$.
Since $f$ has no parity, $f\notin \mathcal{B}$.
Since $\mathcal{F}$ is non-$\mathcal{B}$ hard, $\Holantb(\mathcal{F})$ is \#P-hard.

Now, suppose that $\Holantb(\mathcal{F})$ is \#P-hard when $2n=2k\geqslant 2$. 
Consider the case that $2n=2k+2 \geqslant 4$.
We will show that we can realize a signature $g$ of arity $2k$ with no parity from $f$, i.e., $\Holantb(g, \mathcal{F})\leqslant_T \Holantb( \mathcal{F})$.
Then by the induction hypothesis, we have $\Holantb( \mathcal{F})$ is \#P-hard when $2n=2k+2.$

Since $f$ has no parity,
$f\not\equiv 0$.
It has at least a nonzero entry. 
By flipping variables of $f$, we may assume that $f(\vec{0}^{2n})=x\neq 0$.
We denote $\vec{0}^{2n}$ by $\alpha=000\delta$ where $\delta=\vec{0}^{2n-3}.$
Since $f$ has no parity and $f(\vec{0}^{2n})\neq 0$, 
there exists an input $\alpha'$ with ${\rm wt}(\alpha')\equiv 1 \pmod 2$ such that $f(\alpha')=x'\neq 0$. 
Since $2n\geqslant 4$, we can find three bits of $\alpha'$ such that on these three bits, the values of $\alpha'$ are the same.
By renaming variables of $f$ which gives a permutation of $\alpha'$, without loss of generality,  we may assume that these are the first three bits, i.e, $\alpha'_1=\alpha'_2=\alpha'_3$.

We first consider the case that $\alpha'_1\alpha'_2\alpha'_3=000$.
Then, $\alpha'=000\delta'$ for some $\delta'\in \mathbb{Z}_2^{2n-3}$ where ${\rm wt}(\delta')={\rm wt}(000\delta')={\rm wt}(\alpha')\equiv 1 \pmod 2.$
We consider the following six entries of $f$.
$$x=f(000\delta), x'=f(000\delta'), y=f(011\delta), y'=f(011\delta'), z=f(101\delta), z'=f(101\delta').$$

Consider signatures $\partial^+_{23}f$ and $\partial^{-}_{23}f$ realized by connecting variables $x_2$ and $x_3$ of $f$ using $=^+_2$ and ${=}^{-}_2$ respectively.
Clearly, $\partial^+_{23}f$ and $\partial^{-}_{23}f$ have arity $2n-2$. 
If one of them has no parity, then we are done.
Thus, we may assume that $\partial^+_{23}f$ and $\partial^{-}_{23}f$ both have parity.
Note that $x+y$ and $x'+y'$ are entries of the signature $\partial^+_{23}f$ on inputs $0\delta$ and $0\delta'$ respectively. 
Clearly, ${\rm wt}(0\delta)=0$ and ${\rm wt}(0\delta')\equiv1 \pmod 2.$
Since $\partial^+_{23}f$ has parity, at least one of $x+y$ and $x'+y'$ is zero. 
Thus, we have $(x+y)(x'+y')=0$.
Also, note that $x-y$ and $x'-y'$ are entries of the signature $\partial^{-}_{23}f$ on inputs $0\delta$ and $0\delta'$ respectively. 
Then, since $\partial^-_{23}f$ has parity, similarly we have $(x-y)(x'-y')=0$.
Thus, 
\begin{equation}\label{equ-xxyy}
    (x+y)(x'+y')+(x-y)(x'-y')=2(xx'+yy')=0.
\end{equation}

Consider signatures $\partial^+_{13}f$ and $\partial^{-}_{13}f$ realized by connecting variables $x_1$ and $x_3$ of $f$ using $=_2$ and ${=}^{-}_2$ respectively. 
Again if one of them has no parity, then we are done.
Suppose that $\partial^+_{13}f$ and $\partial^{-}_{13}f$ both have parity.
Then, $(x+z)(x'+z')=0$ since $x+z$ and $x'+z'$ are entries of $\partial^+_{13}f$ on  inputs $0\delta$ and $0\delta'$ respectively.  
Similarly, $(x-z)(x'-z')=0$.
Thus, 
\begin{equation}\label{equ-xxzz}
    (x+z)(x'+z')+(x-z)(x'-z')=2(xx'+zz')=0.
\end{equation}

Consider signatures $\partial^{\widehat+}_{12}f$ and ${\partial}^{\widehat-}_{12}f$ realized by connecting variables $x_1$ and $x_2$ of $f$ using $\neq_2$ and ${\neq}^{-}_2$ respectively. 
Again if one of them has no parity, then we are done.
Suppose that  $\partial^{\widehat+}_{12}f$ and ${\partial}^{\widehat-}_{12}f$ both have parity. 
Then, $(y+z)(y'+z')=0$ since $y+z$ and $y'+z'$ are entries of $\partial^{\widehat+}_{12}f$ on inputs $1\delta$ and $1\delta'$  respectively, and ${\rm wt}(1\delta)=1$ and ${\rm wt}(1\delta')\equiv 0 (\bmod 2)$.  
Similarly, $(y-z)(y'-z')=0$.
Thus, 
\begin{equation}\label{equ-yyzz}
   (y+z)(y'+z')+(y-z)(y'-z')=2(yy'+zz')=0. 
\end{equation}
Then, consider (\ref{equ-xxyy}) $+$ (\ref{equ-xxzz}) $-$ (\ref{equ-yyzz}). We have $xx'=0.$
However, since $x=f(\vec{0}^{2n})\neq 0$ and $x'=f(\alpha')\neq 0$, $xx'\neq$ 0. Contradiction.

For the case that $\alpha'_1\alpha'_2\alpha'_3=111$, we have $\alpha'=111\delta'$ for some $\delta'\in \mathbb{Z}_2^{2n-3}$ where ${\rm wt}(\delta')={\rm wt}(111\delta')-3={\rm wt}(\alpha')-3\equiv 0 \pmod 2.$
We consider the following six entries of $f$.
$$x=f(000\delta), x'=f(111\delta'), y=f(011\delta), y'=f(100\delta'), z=f(101\delta), z'=f(010\delta').$$
We still consider signatures $\partial^+_{23}f$, $\partial^{-}_{23}f$, $\partial^+_{13}f$, $\partial^{-}_{13}f$, $\partial^{\widehat+}_{12}f$ and ${\partial}^{\widehat-}_{12}f$ and suppose that they all have parity.
Then, similar to the above proof of the case $\alpha'_1\alpha'_2\alpha'_3=000$, we can show that 
$xx'=0$. Contradiction. 

Thus, among $\partial^+_{23}f$, $\partial^{-}_{23}f$, $\partial^+_{13}f$, $\partial^{-}_{13}f$, $\partial^{\widehat+}_{12}f$ and ${\partial}^{\widehat-}_{12}f$, at least one does not have parity. 
Thus, we realized a $2k$-ary signature with no parity. 
By our induction hypothesis, we are done. 
\end{proof}

\subsection{Norm condition}
 Under the assumptions that $f$ has parity, $f$ satisfies {\sc 2nd-Orth} and $f\in \int_{\mathcal{B}}\mathscr{A},$ 
 we consider whether all nonzero entries of $f$ have the same norm.
In  Lemma~\ref{lem-affine-norm}, 
we will show that the answer is yes, but only for signatures of arity $2n\geqslant 10$ (this lemma does not require $\mathcal{F}$ to be non-$\mathcal{B}$ hard). 
For a signature $f$ of arity $2n=8$, we show that either all nonzero entries of $f$ have the same norm, or one of the following signatures $g_8$ or $g'_8 $ is realizable. These two signatures are defined by
$g_8 = \chi_S - 4 \cdot f_8$ and $g'_8=q_8-   4\cdot f_8$,  where 
$$S = \mathscr{S}(q_8)=\mathscr{E}_8,~~~~
q_8=\chi_S (-1)^{\sum_{1\leqslant i<j\leqslant 8} x_ix_j}  ~~~~\text{ and }$$ 
\begin{equation}\label{equ-T-polynomial}
 {\small \begin{aligned}
f_8=\chi_T  \text { with } T = \mathscr{S}(f_8)=\{(x_1, x_2, \ldots, x_8)\in \mathbb{Z}_2^{8} \mid~
&x_1+x_2+x_3+x_4= 0, ~ x_5+x_6+x_7+x_8= 0,\\
&x_1+x_2+x_5+x_6= 0, ~ x_1+x_3+x_5+x_7= 0
\}.
\end{aligned}}
\end{equation}
It is here the function $f_8$ makes its first appearance, we dub it 
the \emph{Queen of the Night}.
Clearly, $g_8,g'_8\notin \mathscr{A}$ since their nonzero entries have two different norms 1 and 3. 
One can check that $g_8$ and $g'_8$ have parity, $g_8$ and $g'_8$ satisfy {\sc 2nd-Orth} and $g_8, g'_8\in \int_{\mathcal{B}}\mathscr{A}.$
Thus, one cannot get a non-affine signature by connecting two variables of $g_8$ or $g'_8$ using signatures in $\mathcal{B}$.
However, fortunately by merging two arbitrary variables of $g_8$ using $=_2$ and  two arbitrary variables of $g'_8$ using $=^-_2$, 
we can get 6-ary irreducible signatures that do not satisfy {\sc 2nd-Orth}.
Thus, we get \#P-hardness.

The following Lemma \ref{lem-indenpent-set} regarding the independence number of a family of special graphs is at the heart of the discovery of the signature $g_8$. 
It should be of independent interest. 

\begin{definition}\label{def-g2n-h2n}
Define the graphs $G_{2n}$ and $H_{2n}$ as follows.
The vertex set $V(G_{2n})$ is the set $\mathscr{E}_{2n}$ of all even weighted points 
in $\mathbb{Z}_2^{2n}$. 
The vertex set $V(H_{2n})$ is the set $\mathscr{O}_{2n}$ of all odd weighted points 
in $\mathbb{Z}_2^{2n}$. 
Two points $u, v\in \mathscr{E}_{2n}$ (or $\mathscr{O}_{2n}$)
are connected by an edge iff ${\rm wt}(u \oplus v)=2.$ 
\end{definition}

Let $\alpha(G_{2n})$ be the independence number of  $G_{2n}$ i.e, the size of a maximum independent set of $G_{2n}$, and $\alpha(H_{2n})$ be the independence number of  $H_{2n}.$
Let $S\subseteq [2n]$.
We define $\varphi_S$ be a mapping that flips the values on  bits in $S$ for all $u\in \mathscr{E}_{2n}.$
In other words, suppose that $u'=\varphi_S(u)$. Then,
$u'_i=\overline{u_i}$ if $i\in S$ and $u'_i=u_i$ if $i \notin S$ where $u'_i$ and $u_i$ are values of $u$ and $u$ on bit $i$ respectively.
For all $S$,
clearly ${\rm wt}(u\oplus v)=2$ iff ${\rm wt}(\varphi_S(u)\oplus\varphi_S(v))=2.$
When $|S|$ is odd, $\varphi_S( \mathscr{E}_{2n})= \mathscr{O}_{2n}$. 
One can easily check that $\varphi_S$ gives an isomorphism between $G_{2n}$ and $H_{2n}$.
When $|S|$ is even, $\varphi_S( \mathscr{E}_{2n})= \mathscr{E}_{2n}$. 
Then, $\varphi_S$ gives an automorphism of $G_{2n}$.
Also, by permuting these $2n$ bits, we can get an automorphism of $G_{2n}$.
In fact, the automorphism group of $G_{2n}$ is generated by these operations. 

\begin{lemma}\label{authomorphism}
Let $2n \geqslant 6$.
Every  automorphism $\psi$ of $G_{2n}$
is a product $ \varphi_S  \circ \pi$ for some  automorphism $\pi$ induced by a permutation of $2n$ bits, and an  automorphism $\varphi_S$ given by flipping the values on some bits in a set $S$ of even cardinality.
\end{lemma}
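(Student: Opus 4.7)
The plan is to reduce to the stabilizer of $\vec{0}$ and then identify that stabilizer with the bit-permutation group via a Whitney-type theorem for line graphs.

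First, I would normalize so that $\psi$ fixes $\vec{0}$. Since $\vec{0}\in\mathscr{E}_{2n}$, the image $v=\psi(\vec{0})$ also lies in $\mathscr{E}_{2n}$, so $S_0:={\rm supp}(v)$ has even cardinality. Then $\varphi_{S_0}\circ\psi$ fixes $\vec{0}$, and it suffices to prove that any automorphism $\psi$ fixing $\vec{0}$ has the form $\pi$ for a bit permutation $\pi\in S_{2n}$. Next, I would observe that such a $\psi$ preserves Hamming weight. Indeed, $G_{2n}$ is connected and the graph distance from $\vec{0}$ to $u$ is exactly ${\rm wt}(u)/2$ (flipping pairs of 1-bits in ${\rm supp}(u)$ gives a path of that length, and any edge changes weight by $\pm 2$); so $\psi$ permutes each weight class separately.

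The key step is to analyze the action on weight-$2$ vectors. Identifying $e_{ij}:=e_i+e_j$ with the pair $\{i,j\}\in\binom{[2n]}{2}$, one has ${\rm wt}(e_{ij}\oplus e_{k\ell})=2$ iff $|\{i,j\}\cap\{k,\ell\}|=1$. Hence the induced subgraph on weight-$2$ vectors is the triangular graph $T(2n)=L(K_{2n})$. Whitney's theorem asserts that for $2n\ge 5$ every automorphism of $L(K_{2n})$ is induced by a permutation of the $2n$ ground vertices, so for $2n\ge 6$ the restriction of $\psi$ determines a unique $\pi\in S_{2n}$. Replacing $\psi$ by $\pi^{-1}\circ\psi$, I may assume that $\psi$ fixes every vector of weight $\le 2$; it then remains to show $\psi$ is the identity.

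I would finish by induction on the Hamming weight. Assume $\psi$ fixes all vectors of weight $\le 2k-2$ for some $k\ge 2$, and let $u\in\mathscr{E}_{2n}$ have weight $2k$. For any weight-$(2k-2)$ vector $v$ adjacent to $u$ in $G_{2n}$, one has ${\rm wt}(u\oplus v)=2$ together with ${\rm wt}(v)={\rm wt}(u)-2$, which forces ${\rm supp}(v)\subset{\rm supp}(u)$ with $|{\rm supp}(u)\setminus{\rm supp}(v)|=2$. Since $\psi$ preserves weight and adjacency while fixing each such $v$, the image $\psi(u)$ is a weight-$2k$ vector whose weight-$(2k-2)$ neighbors contain all $v$ with ${\rm supp}(v)\in\binom{{\rm supp}(u)}{2k-2}$. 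For $2k\ge 4$ the union of these supports is exactly ${\rm supp}(u)$, forcing ${\rm supp}(\psi(u))={\rm supp}(u)$ and hence $\psi(u)=u$. Tracing back the reductions yields $\psi=\varphi_{S_0}\circ\pi$ as desired.

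\textbf{Main obstacle.} The genuine content is the Whitney-type identification $\mathrm{Aut}(J(2n,2))=S_{2n}$ for $2n\ge 5$; the case $2n=4$ is truly exceptional because $J(4,2)$ is the octahedron, whose automorphism group properly contains $S_4$, which is precisely why the hypothesis $2n\ge 6$ is needed. Once Whitney is invoked, the base reduction to the stabilizer of $\vec{0}$ and the weight induction are routine, although one should be careful that the uniqueness in the inductive step uses $2k\ge 4$ (the case $2k=2$ is already absorbed into the Whitney step).
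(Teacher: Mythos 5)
Your proof is correct, and it differs from the paper's in the key step. Both proofs share the same skeleton: reduce to the stabilizer of $\vec{0}$ by applying $\varphi_S$, show that this stabilizer is induced by a bit permutation by analyzing its action on the weight-$2$ layer, and then climb up the weight levels by an induction argument. Where you diverge is in the heart of the matter: you observe that the weight-$2$ layer of $G_{2n}$ is precisely the Johnson graph $J(2n,2) = L(K_{2n})$ and invoke Whitney's theorem ($\mathrm{Aut}(L(K_m)) = S_m$ for $m \geqslant 5$, with $m=4$ the octahedral exception), whereas the paper essentially reproves this Whitney-type fact by hand (fix $e_{12}$, argue that all $\psi(e_{1i})$ land in a common ``star'', extract the permutation, then verify all $e_{ij}$ are fixed). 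Your route is more modular and shorter but leans on a nontrivial classical theorem; the paper's is longer but self-contained, which matters in a paper that otherwise has no graph-theoretic prerequisites. Your weight-preservation argument via graph distance from $\vec{0}$ is also a clean touch that the paper achieves only implicitly inside the induction. Your inductive step (using \emph{all} weight-$(2k-2)$ down-neighbors so their supports union to $\mathrm{supp}(u)$) is slightly slicker than the paper's (which uses two carefully chosen weight-$2k$ neighbors $u,w$ whose $1$-sets together cover the target support), but the two are morally the same calculation, and you correctly flag that it needs $2k \geqslant 4$. One small point worth making explicit if you wrote this up: after applying Whitney you must check $\pi^{-1}\circ\psi$ still fixes $\vec{0}$, which is immediate since a bit permutation fixes $\vec{0}$; you implicitly use this but don't say it.
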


\begin{proof}
Let $\psi$ be an arbitrary
automorphism  of $G_{2n}$.
Suppose $\psi(\vec{0}^{2n})
= u$. 
Let $S\subseteq [2n]$ be the index
set where $u_i=1$.
Then  $|S| = {\rm wt}(u)$ is even, and 
$\psi' =\varphi_S  \circ \psi$
fixes $\vec{0}^{2n}$.
Consider $\psi'(v)$ for all
$v \in  \mathscr{E}_{2n}$ of
${\rm wt}(v) =2$.
Since $\psi'$ is an automorphism
fixing $\vec{0}^{2n}$,
$\psi'(v)$ has weight 2.
We denote by $e_{ij}$ the $2n$-bit string with ${\rm wt}(e_{ij}) = 2$
having 1's on bits $i$ and $j$, for $1  \leqslant i < j \leqslant 2n$. Then,  $e_{12}
= 11 \vec{0}^{2n-2}$.
By a suitable permutation $\pi$ of the variables, we have
$\pi \circ \psi'(e_{12})
= e_{12}$,
while still fixing $\vec{0}^{2n}$.
We will show that $\pi \circ \psi'=\pi \circ \varphi_S \circ \psi$ is the identity mapping, i.e., $\pi \circ \varphi_S \circ \psi= 1_{G_{2n}}.$
Then, $\psi=\varphi^{-1}_S \circ \pi^{-1}.$ We are done.

For simplicity of notations, we reuse $\psi$ to denote $\pi \circ \psi'$ in the following and we show that $\psi=1_{G_{2n}}$.
Consider
$e_{1i}$,
for $3 \leqslant i \leqslant 2n$.
Note that $\psi(e_{1i})$ is some $e_{st}$ and must have Hamming distance
2 to $e_{12}$. It is easy to see that the only possibilities are $s \in \{1,2\}$ and $t >2$, i.e.,
from $e_{12}$
we flip exactly one bit  in  $\{1, 2\}$ and another bit in $\{3, \ldots, 2n\}$.
Suppose there are $i \not = i'$ $(i, i'\geqslant 3)$ such that  $\psi(e_{1i}) = e_{1t}$
and  $\psi(e_{1i'}) = e_{2t'}$. Since ${\rm wt}(e_{1i} \oplus e_{1i'}) = 2$,
we must have $t=t'$. 
Since $2n\geqslant 6$, we can pick another $i''\geqslant 3$ such that $i''\neq i$ and $i'$.
Then, this leads to a contradiction 
since $e_{1i''}$
must either be mapped to $e_{1t}$ if $\psi(e_{1i''}) = e_{1t''}$, 
or be mapped to $e_{2t}$ if $\psi(e_{1i''}) = e_{2t''}$;  neither is possible.  Thus either
$\psi(e_{1i})$ is some $e_{1t}$ for all $3\leqslant i\geqslant 2n$, or is some $e_{2t}$ for all $3\leqslant i\geqslant 2n$.
By a permutation of $\{1,2\}$ (which maintains the property that $\psi$ fixes $\vec{0}^{2n}$ and $e_{12}$)
we may assume it is the former. Then the mapping $i \mapsto t$ given by  $\psi(e_{1i})= e_{1t}$ for $3\leqslant i\geqslant 2n$ defines
a permutation of the variables for $3\leqslant i\geqslant 2n$ (which again 
maintains the property that $\psi$ fixes $\vec{0}^{2n}$ and $e_{12}$) and, after a permutation of the variables
we may now assume that $\psi$ fixes $\vec{0}^{2n}$ and all $e_{1i}$.
For any $1 \leqslant i < j \leqslant 2n$,
we have ${\rm wt}(\psi(e_{ij})) =2$ and  $\psi(e_{ij})$ has distance
2 from both $\psi(e_{1i}) = e_{1i}$  and $\psi(e_{1j}) = e_{1j}$.
Then $\psi(e_{ij})$ must be obtained from $e_{1i}$ by flipping exactly one bit in
$\{1, i\}$ and another bit out of $\{1, i\}$. 
However, it cannot flip bit $i$ which would result in
some $e_{1t}$ for some $t>2$, because $\psi$ already fixed $e_{1t}$.
Thus, it flips bit 1 but not bit $i$.
Similarly in view of $e_{1j}$, we must flip bit $1$ but not bit $j$.
Hence $\psi(e_{ij}) = e_{ij}$, and therefore $\psi$ fixes 
all $v$ with Hamming weight ${\rm wt}(v) \leqslant 2$.

Inductively assume $\psi$ fixes all $v$ of ${\rm wt}(v) \leqslant 2k$,
for some $k \geqslant 1$.
If $k < n$ we prove that $\psi$ also fixes all $v$ of ${\rm wt}(v) = 2k+2$.
For notational simplicity we may assume  $v = \vec{1}^{2k+2} \vec{0}^{2n-2k-2}$.
As $2k+2 \geqslant 4$, we can choose
$u = \vec{1}^{2k} 00 \vec{0}^{2n-2k-2}$ and $w = 00 \vec{1}^{2k} \vec{0}^{2n-2k-2}$,
and the two 00 in $u$ and $w$ among  the first $2k+2$ bits are in
disjoint bit positions.
Clearly ${\rm wt}(\psi(v)) \geqslant 2k+2$ since all
strings of ${\rm wt} \leqslant 2k$ are fixed. 
Also since $\psi(v)$ has Hamming distance 2 from
$\psi(u) =u$ and $\psi(w) = w$,
it has weight exactly $2k+2$, and is obtained from $u$
by flipping two bits from 00 to 11 in positions $>2k$,
as well as  obtained from $w$
by flipping two bits from 00 to 11 in positions in $\{1,2\}  \cup\{t \mid t>2k+2\}$.
In particular, it is 1 in  positions 1 to $2k$ (in view of $u$),
and  it is also 1 in  positions 3 to $2k+2$.
But together these positions cover all bits 1 to $2k+2$.
Thus $\psi(v) = v$.  This completes the induction, and  proves the lemma for all $2n \geqslant 6$.
%
\end{proof}

\begin{remark}
The condition $2n \geqslant 6$ in
Lemma~\ref{authomorphism} is necessary.
Here is a counter example for $2n=4$:
$\psi$ fixes $0000$ and $1111$,
and it maps $\alpha$ to $\overline{\alpha}$
for all $\alpha \in \{0, 1\}^4$ with
${\rm wt}(\alpha) =2$.
If $\psi$ were to be expressible as
$\varphi_S \circ \pi$, then since
$\psi(0000) = 0000$, we have
$S= \emptyset$. Then by $\psi(0011) = 1100$ and $\psi(0101) = 1010$, the permutation $\pi$ must
map variable $x_1$ to $x_4$. However this
violates $\psi(1001) = 0110$.
\end{remark}

\begin{lemma}\label{lem-indenpent-set}
Let $\{G_{2n}\}$ be the sequence of graphs defined above.
\begin{itemize}
\item If $2n=8$, then $\alpha(G_{8})=\frac{1}{8}|\mathscr{E}_{8}| =  2^{4},$ and the maximum independent set $I_8$ of $G_8$ is unique up to an automorphism, 
where 
\begin{equation*}
\begin{aligned}
I_8=&\{00000000, 00001111, 00110011, 00111100, 01010101, 01011010, 01100110, 01101001,\\
& ~10010110,  10011001, 10100101, 10101010, 11000011, 11001100, 11110000, 11111111\}.
\end{aligned}
\end{equation*}
\item  If $2n\geqslant 10$, then $\alpha(G_{2n})<\frac{1}{8}|\mathscr{E}_{2n}|= 2^{2n-4}$.
\end{itemize}
\end{lemma}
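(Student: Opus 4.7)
My plan is to view $G_{2n}$ as a Cayley graph on the abelian group $\mathscr{E}_{2n} \cong \mathbb{Z}_{2}^{2n-1}$ with connection set $\{e_{ij} : 1 \leqslant i < j \leqslant 2n\}$, and then apply Hoffman's ratio bound. The eigenvalues of $G_{2n}$ then diagonalize under the characters $\chi_u(v) = (-1)^{u \cdot v}$, indexed by $u$ in $\mathbb{Z}_{2}^{2n}/\langle \vec{1} \rangle$. Splitting the $\binom{2n}{2}$ pairs $(i,j)$ according to whether both, neither, or exactly one of their indices lies in the support of $u$, one computes
\[
\lambda_u \;=\; \binom{{\rm wt}(u)}{2} + \binom{2n - {\rm wt}(u)}{2} - {\rm wt}(u)(2n - {\rm wt}(u)) \;=\; 2({\rm wt}(u) - n)^{2} - n.
\]
Hence $\lambda_{\max} = \binom{2n}{2}$ and $\lambda_{\min} = -n$, with the latter attained precisely when ${\rm wt}(u) = n$. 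Since $G_{2n}$ is connected and $\binom{2n}{2}$-regular, Hoffman's bound yields
\[
\alpha(G_{2n}) \;\leqslant\; \frac{-\lambda_{\min}}{\lambda_{\max} - \lambda_{\min}} \cdot |\mathscr{E}_{2n}| \;=\; \frac{n}{2n^{2}} \cdot 2^{2n-1} \;=\; \frac{2^{2n-2}}{n}.
\]

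For $2n = 8$ this bound evaluates to $16$, exactly matching $|I_8|$. For the matching lower bound I will verify that $I_8$ is the $[8,4,4]$ extended Hamming code: a direct inspection shows $I_8$ is a $4$-dimensional linear subspace of $\mathbb{Z}_{2}^{8}$ with weight enumerator $1 + 14 z^{4} + z^{8}$, so every pair of its elements differs by a vector of Hamming weight at least $4$, placing $I_8$ among the independent sets of $G_8$. For $2n \geqslant 10$, I will compare the Hoffman bound to the target: the ratio $(2^{2n-2}/n) / 2^{2n-4} = 4/n$ is strictly less than $1$ exactly when $n \geqslant 5$, i.e.\ when $2n \geqslant 10$, so $\alpha(G_{2n}) < 2^{2n-4} = |\mathscr{E}_{2n}|/8$ as required.

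What remains, and what I expect to be the main obstacle, is uniqueness of the maximum independent set of $G_8$ up to an automorphism. Using Lemma~\ref{authomorphism} I may translate an arbitrary maximum independent set $I$ by an even-flip $\varphi_S$ to ensure $\vec{0} \in I$, so that $I$ becomes an even-weight binary code of length $8$, size $16$, and minimum distance $\geqslant 4$. Tightness of Hoffman's bound forces the indicator $\mathbf{1}_I$ to lie in the direct sum of the trivial eigenspace and the $(-n)$-eigenspace, i.e.\ to be expressible as $\tfrac{1}{8}\mathbf{1}$ plus a combination of characters $\chi_u$ with ${\rm wt}(u) = 4$. Combined with the $\{0,1\}$-valued constraint on $\mathbf{1}_I$ and the classical uniqueness, up to coordinate permutation, of the Steiner system $S(3,4,8)$, equivalently of the extended Hamming code as the unique even-weight $(8, 16, 4)$ binary code, this should pin $I$ down to a translate of $I_8$ by a flip automorphism of $G_8$. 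The chief technical hurdle will be converting the Fourier-support consequence of Hoffman equality into this concrete combinatorial equivalence, which I plan to do via a weight-distribution analysis of $I$: rule out weight-$2$ elements by the distance constraint from $\vec{0}$, arrange $\vec{1} \in I$ via a further automorphism, eliminate weight-$6$ elements from their forced distance $\geqslant 4$ to $\vec{1}$, and finally invoke the $S(3,4,8)$ uniqueness to identify the $14$ weight-$4$ codewords.
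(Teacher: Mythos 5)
Your spectral approach is correct and genuinely different from the paper's. The paper argues purely combinatorially and bottom-up: it determines $\alpha(G_6)=4$ and the uniqueness of its maximum independent set by direct case analysis, decomposes $G_8$ into four induced copies of $G_6$ (keyed on $x_1x_2$) to obtain $\alpha(G_8)\leqslant 4\cdot 4=16$ and to propagate uniqueness, then decomposes $G_{10}$ into four copies of $G_8$ and uses the $G_8$ uniqueness to derive a contradiction if $\alpha(G_{10})\geqslant 2^{6}$; the recursion $\alpha(G_{2n})\leqslant 4\alpha(G_{2n-2})$ finishes $2n\geqslant 12$. Your route instead computes the Cayley-graph spectrum $\lambda_u = 2({\rm wt}(u)-n)^2 - n$ once, which gives $\alpha(G_{2n})\leqslant 2^{2n-2}/n$ uniformly via Hoffman's ratio bound; this is tight at $n=4$ and already strictly below $2^{2n-4}$ for every $n\geqslant 5$, so the whole second bullet follows with no recursion, and uniqueness reduces to a coding-theoretic classification instead of the paper's longer case analysis. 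The tradeoff is that your argument invokes heavier external machinery (Hoffman's bound, and the uniqueness of the Steiner system $S(3,4,8)$) where the paper is self-contained.

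The one step in your uniqueness plan that would fail as written is ``arrange $\vec{1}\in I$ via a further automorphism.'' Once you have translated so that $\vec{0}\in I$, the automorphisms of $G_8$ preserving this normalization are exactly the coordinate permutations (by Lemma~\ref{authomorphism}, and $\varphi_S$ fixes $\vec{0}$ only when $S=\emptyset$), and coordinate permutations also fix $\vec{1}$, so they cannot move an $I$ missing $\vec{1}$ to one containing it. Fortunately the Fourier fact you already record makes this step unnecessary: since Hoffman equality forces $\mathbf{1}_I - \tfrac{1}{8}\mathbf{1}$ into the span of $\{\chi_u : {\rm wt}(u)=4\}$, and each such $\chi_u$ satisfies $\chi_u(\vec{1})=(-1)^{4}=\chi_u(\vec{0})$, you get $\mathbf{1}_I(\vec{1})=\mathbf{1}_I(\vec{0})=1$ directly; equivalently, $I$ is closed under complementation. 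With $\vec{0},\vec{1}\in I$ in hand, the remaining $14$ elements all have weight $4$ and pairwise support-intersection $\leqslant 2$ (from distance $\geqslant 4$), so the double count $14\cdot\binom{4}{3}=56=\binom{8}{3}$ shows their supports form a Steiner system $S(3,4,8)$, whose classical uniqueness up to a permutation of the $8$ coordinates completes the argument. With this replacement your proposal goes through.
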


\begin{proof}

We first consider the case  $2n=6$. One can check that the following set $$I_6=\{000000, 001111, 110011, 111100\}$$ is an independent set of $G_6$.
Thus, $\alpha(G_6)\geqslant 4$.
Next, we show that $\alpha(G_6)= 4$ and $I_6$ is the unique maximum independent set of $\alpha(G_6)$ up to an  automorphism.

 Let $J_6$ be an maximum independent set of $G_6$.
 Clearly, $|J_6|\geqslant 4$.
After an automorphism of $G_6$ by flipping some bits, we may assume that $\vec{0}^6\in J_6$.
Then for any $u\in \mathscr{E}_6$ with ${\rm wt}(u)=2$, $u\notin J_6$.
If $\vec{1}^6\in J_6$, then for any $u\in \mathscr{E}_6$ with ${\rm wt}(u)=4$, $u\notin J_6$.
Thus, $J_6$ is maximal with $|J_6|=2<4$, a contradiction.
Thus, we have $\vec{1}^6\notin J_6$.
Then all vertices in $J_6$, except $\vec{0}^6$ have hamming weight $4$. 
After an automorphism by permuting bits (this will not change $\vec{0}^6$), we may assume that $u=001111\in  J_6$.
Consider some other $v\in J_6$ with ${\rm wt}(v)=4$.
If $v_1v_2=01$ or $10$, then ${\rm wt}(v_3v_4v_5v_6)=3$.
Thus, ${\rm wt}(u\oplus v)={\rm wt}(00\oplus v_1v_2)+{\rm wt}(1111\oplus v_3v_4v_5v_6)=1+1=2$.
Contradiction.
The only  $v\in J_6$ with ${\rm wt}(v)=4$, and
$v_1v_2 = 00$ is $v= 001111=u$.
Thus, $v_1v_2=11$, i.e., both bits of $v$ are 1
where $u$ is $00$.
After an automorphism by permuting bits in $\{3, 4, 5, 6\}$ (this will not change $\vec{0}^6$ and $u$), we may assume that $v=110011\in  J_6$.
For any other $w\in J_6$ with ${\rm wt}(w)=4$, 
we must have 
$w_1w_2=11$ (by the same
proof for the
pair $(u,v)$ applied to  $(u,w)$),
and also $w_3w_4=11$
(by the same
proof for the
pair $(u,v)$ applied to  $(v,w)$).
Thus, $w=111100$.
Then, $J_6=\{\vec{0}^6, u, v, w\}=I_6$ is maximal. 
Thus, $\alpha(G_6)= 4$ and $I_6$ is the unique maximum independent set of $\alpha(G_6)$ up to an  automorphism.

Consider $2n=8$. One can check that $I_8$ is an independent set of $G_8$.
Thus, $\alpha(G_8)\geqslant 16$.
We use $G_8^{ab}$ to denote the subgraph of $G_8$  induced by vertices $\{u\in \mathscr{E}_8\mid u_1u_2=ab\}$ for $(a, b)\in \mathbb{Z}_2^2$.
Clearly, $G_8^{00}$ and $G_8^{11}$ are isomorphic to $G_6$, and  $G_8^{01}$ and $G_8^{10}$ are isomorphic to $H_6$.
Since $H_6$ is isomorphic to $G_6$, $G_8^{01}$ and $G_8^{10}$ are also isomorphic to $G_6$.
Let $J_8$ be a maximum independent set of $G_8$. Clearly, $|J_8|\geqslant |I_8|= 16$.
Also, we use $J_8^{ab}$ to denote the subset $\{u\in J_8\mid u_1u_2=ab\}$ for $(a, b)\in \mathbb{Z}_2^2$.
Similarly, we can define $I^{ab}_8$.
Since $J_8$ is an independent set of $G_8$, clearly, for every $(a, b)\in \mathbb{Z}_2^2$, $J_8^{ab}$ is an independent set of $G_8^{ab}$. 
Since $G_8^{ab}$ is isomorphic to $G_6$ and  $\alpha(G_6)=4$, thus $|J_8^{ab}|\leqslant 4$.
Then $|J_8|\leqslant 16$.
Thus,    $|J_8|=16$, and $|J_8^{ab}|= 4$ for every $(a, b)\in \mathbb{Z}_2^2$.
Since the maximum independent set of $G_6$ is unique up to an automorphism of $G_6$,
which can be extended to an
automorphism of $G_8$ by fixing the first two bits, we may assume that 
$$J_8^{00}=I^{00}_8=\{00000000, 00001111, 00110011, 00111100\}$$ after an automorphism of 
$G_8$.


Then, consider $J_8^{01}$. 
 We show that for any $u\in J_8^{01}$, $u_3\neq u_4$, $u_5\neq u_6$ and $u_7\neq u_8$.
Otherwise, by switching the pair of bits $\{3, 4\}$ with $\{5, 6\}$ or $\{7, 8\}$ (this will not change $J^{00}_8$), we may assume that $u_3=u_4$. 
Then ${\rm wt}(u_1u_2u_3u_4)$ is odd.
Since ${\rm wt}(u)$ is even, ${\rm wt}(u_5u_6u_7u_8)$ is odd.
Thus, either $u_5=u_6$ or $u_7=u_8$.
Still by switching the pair $\{5, 6\}$ with $\{7, 8\}$
(again this will not change $J^{00}_8$), we may assume that $u_5=u_6$. 
Then since ${\rm wt}(u_5u_6u_7u_8)$ is odd, we have $u_7\neq u_8$.
Then, one can check that there exists some  $v\in J_8^{00}$ such that $v_3v_4v_5v_6=u_3u_4u_5u_6$.
Since $v_1=v_2$ and $u_1\neq u_2$, ${\rm wt}(u_1u_2\oplus v_1v_2)=1$.
Also since  $v_7=v_8$ and $u_7\neq u_8$, ${\rm wt}(u_7u_8\oplus v_7v_8)=1$.
Thus, ${\rm wt}(u\oplus v)={\rm wt}(u_1u_2\oplus v_1v_2)+{\rm wt}(u_7u_8\oplus v_7v_8)=2.$ 
This means that the vertices $u$ and $v$ are connected in the graph $G_8$, a 
contradiction.
Thus, for any $u\in J_8^{01}$, $u_3\neq u_4$, $u_5\neq u_6$ and $u_7\neq u_8$.
By permuting bit $3$ with bit $4$, bit $5$ with bit $6$, and bit $7$ with bit $8$ (this will not change $J^{00}_{8})$, we may assume that $01010101\in J^{01}_8.$
Consider some other $w\in J^{01}_8$.
Since $w_{2i+1}\neq w_{2i+2}$ for any $i=1, 2$ or $3$, the pair $w_{2i+1}w_{2i+2}=01$ or $10$.
Among these three pairs, let $k$ denote the number of pairs that are $01$.
If $k=3$, then $w=01010101$. Contraction.
If $k=2$, then ${\rm wt}(01010101\oplus w)=2$. Contradiction.
If $k=0$, then $w=01101010$.
One can check that $\{01010101, 01101010\}$ is already a maximal independent set of $G^{01}_8$ and it has size $2<4$. Contradiction.
Thus, $k=1$. Then, $w$ can take ${3\choose 1}$ possible values.
Thus, $$J^{01}_8\subseteq I^{01}_8=\{01010101, 01011010, 01100110, 01101001\}.$$
Since, $|J^{01}_8|=4$, $J^{01}_8=I^{01}_8.$

 Consider some $u\in J_8^{10}$.
 Similar to the proof of $ J_8^{01}$, we can show that $u_3\neq u_4$, $u_5\neq u_6$ and $u_7\neq u_8$.
Thus, $u$ can take $2^3$ possible values. 
Moreover, for any $01u'\in J^{01}_8$, $10u'\notin J^{10}_8.$
Thus, there are only four remaining values that $u$ can take. 
Then, 
$$J^{10}_8\subseteq I^{10}_8=\{10010110, 10011001, 10100101, 10101010\}.$$
Since $|J^{10}_8|=4$, $J^{10}_8=I^{10}_8$.

Finally, consider $J_{8}^{11}$. 
We show that for any $u\in J_{8}^{11}$, $u_3=u_4$, $u_5=u_6$ and $u_7=u_8$. 
Otherwise, by permuting the pair of bits $\{3, 4\}$ with $\{5, 6\}$ or $\{7, 8\}$ (one can check that this will not change $J^{01}_8$ and $J^{10}_8$), we may assume that $u_3\neq u_4$.
Since ${\rm wt}(u)$ is even, between ${\rm wt}(u_5u_6)$ and ${\rm wt}(u_7u_8)$, exactly one is even and the other is odd.
By permuting the pair of bits $\{5, 6\}$ with $\{7, 8\}$, we may further assume that $u_5\neq u_6$ and $u_7= u_8$.
Then, one can check that there exists some $v\in J_8^{01}$ such that $u_3u_4u_5u_6=v_3v_4v_5v_6$. 
Since  $u_1=u_2$ and $v_1\neq v_2$, 
 ${\rm wt}(u_1u_2\oplus v_1v_2)=1$.
Also since  $u_7= u_8$ and $v_7\neq v_8$, ${\rm wt}(u_7u_8\oplus v_7v_8)=1$.
Thus, ${\rm wt}(u\oplus v)={\rm wt}(u_1u_2\oplus v_1v_2)+{\rm wt}(u_7u_8\oplus v_7v_8)=2.$ 
Contradiction.
Thus, for any $u\in J_8^{11}$, it can take $2^3$ possible values.
Moreover, for any $00u'\in J_{8}^{00}$, we have $11u'\notin J_{8}^{11}$.
Thus, there are only four remaining values that $u$ can take.
Then, 
$$J_8^{11}\subseteq I_8^{11}=\{11000011, 11001100, 11110000, 11111111\}.$$
Thus,  after an automorphism, $J_8=I_8$. In other words, $I_8$ is the unique maximum independent set of $G_8$ up to an  automorphism.

Now, we consider the case  $2n\geqslant 10$.
For every $(a, b)\in \mathbb{Z}_2^2$, we define $G_{2n}^{ab}$ to be the subgraph of $G_{2n}$ induced by $\{u\in G_{2n}\mid u_1u_2=ab\}$, and  it is isomorphic to $G_{2n-2}.$
Thus, $$\alpha(G_{2n})\leqslant \alpha(G_{2n}^{00}) +\alpha(G_{2n}^{01})+\alpha(G_{2n}^{10})+\alpha(G_{2n}^{11})=4\alpha(G_{2n-2}).$$
Then, $\alpha(G_{2n-2})<2^{(2n-2)-4}$ will imply that $\alpha(G_{2n})<2^{2n-4}.$
Thus, in order to prove $\alpha(G_{2n})<2^{2n-4}$ for all $2n \geqslant 10$, it suffices to prove that $\alpha(G_{10})<2^{10-4}.$ 
For a contradiction, suppose that $\alpha(G_{10})\geqslant 2^{10-4}.$
Let $I$ be a maximum independent set of $G_{10}$.
Then, $|I|\geqslant 2^6$.
We define $I^{ab}=\{u\in I\mid u_1u_2=ab\}$ for every $(a, b)\in \mathbb{Z}_2.$
Since $G^{ab}_{10}$ is isomorphic to $G_8$ and $\alpha(G_8)= 2^4$, $|I^{ab}|\leqslant 2^4$ for every $(a, b)\in \mathbb{Z}_2^2$.
Then, $|I|\leqslant 4\cdot 2^4$. 
Thus, $|I|=2^6$ and $|I^{ab}|= 2^4$ for every $(a, b)\in \mathbb{Z}_2^2$.
Since the maximum independent set of $G_8$ is unique up to an automorphism of $G_8$ which can be extended to an automorphism of $G_{10}$ by fixing the first two bits, we may assume that $I^{00}=\{00u\mid u\in I_8\}$.

Consider $I^{01}$.
Since $|I^{01}|\neq 0$, there exists some $01v\in I^{01}$.
Since ${\rm wt}(v)$ is odd, among  ${\rm wt}(v_3v_4)$, ${\rm wt}(v_5v_6)$, ${\rm wt}(v_7v_8)$ and ${\rm wt}(v_9v_{10})$, there is an odd number (either one  or three) of pairs such that ${\rm wt}(v_{2i+1}v_{2i+2})$ $(1 \leqslant i \leqslant 4)$ is odd, i.e., $v_{2i+1}\neq v_{2i+2}.$
In other words, there are exactly three pairs among $v_3v_4, v_5v_6, v_7v_8$ and $v_9v_{10}$ such that the values inside each pair are all equal  with  each other or all distinct with each other.
By permuting these pairs of bits $\{3, 4\}$, $\{5, 6\}$, $\{7, 8\}$ and $\{9, 10\}$ (this will not change $I^{00}$), we may assume that either $v_3=v_4$, $v_5=v_6$, $v_7=v_8$ and $v_9\neq v_{10}$, or $v_3\neq v_4$, $v_5\neq v_6$, $v_7\neq v_8$ and $v_9=v_{10}$.
In both cases, one can check that there exists some $00u\in I^{00}$ such that $u_i=v_i$ for $i\in \{3, \ldots, 8\}.$
Moreover, $u_{9}=u_{10}$ if $v_9\neq v_{10}$, and  $u_{9}\neq u_{10}$ if $v_9= v_{10}.$
Then, ${\rm wt}(00u\oplus 01v)={\rm wt}(00\oplus 01)+{\rm wt}(u_9u_{10}\oplus v_9v_{10})=2$.
This contradiction proves that
$\alpha(G_{10})<2^{10-4}$, and the lemma is proved.
\end{proof}
\begin{remark}
We remark that $I_8=\mathscr{S}(f_8)$.
Later, we will see that the signature $f_8$, this Queen of the Night, and its support $\mathscr{S}(f_8)$ have even more extraordinary properties.
\end{remark}

We consider a particular gadget construction that will be used in  our proof. 
Let  $h_4$  be a 4-ary signature with signature matrix   $M_{12,34}(h_4)=H_4=\left[\begin{smallmatrix}
1 & 0 & 0 & 1\\
0 & 1 & 1 & 0\\
0 & 1 & -1 & 0\\
1 & 0 & 0 & -1\\
\end{smallmatrix}\right].$
Notice that $H_4H^{\tt T}_4=H_4H_4=2I_4$, and $h_4$ is an affine signature. 
The following is
 called an $H_4$ gadget construction on $f$,
 denoted by $^{H_4}_{ij}f$.
 This is the signature obtained by
 connecting variables $x_3$ and $x_4$ of $h_4$ with variables $x_i$ and $x_j$ of $f$ using $=_2$, respectively.
Note that $^{H_4}_{ij}f$ is not necessarily realizable from $f$ since  $h_4$ may not be available. 
However, we will analyze  the structure of $f$ by analyzing $^{H_4}_{ij}f$.
For convenience, we consider $(i, j)=(1, 2)$ and we use  $\widetilde f$ to denote $^{H_4}_{12}f$. 
  The following results (Lemmas~\ref{lem-tilde-1} and \ref{lem-norm-f-tilde}) about $\widetilde f= {^{H_4}_{12}}f$ hold for all $^{H_4}_{ij}f$ by replacing $\{1, 2\}$ with $\{i, j\}$.
Note that 
$\widetilde{f}$ has the following signature matrix 
$$M_{12}(\widetilde{f})=
\begin{bmatrix}
{\bf \widetilde f}^{00}_{12}\\
{\bf \widetilde f}^{01}_{12}\\
{\bf \widetilde f}^{10}_{12}\\
{\bf \widetilde f}^{11}_{12}\\
\end{bmatrix}
=H_4M_{12}({f})=
\begin{bmatrix}
{\bf f}_{12}^{00}+{\bf f}_{12}^{11}\\
{\bf f}_{12}^{01}+{\bf f}_{12}^{10}\\
{\bf f}_{12}^{01}-{\bf f}_{12}^{10}\\
{\bf f}_{12}^{00}-{\bf f}_{12}^{11}\\
\end{bmatrix}=
\begin{bmatrix}
\partial^+_{12}{\bf f}\\
{\partial}^{\widehat+}_{12}{\bf f}\\
{\partial}^{\widehat-}_{12}{\bf f}\\
\partial^-_{12}{\bf f}\\
\end{bmatrix}.$$
We give the following relations between $f$ and $\widetilde{f}$.
\begin{lemma}\label{lem-tilde-1}
\begin{enumerate}
    \item 
If $f$ has even parity then
$\widetilde f$ also has even parity. 
\item If $f\in \mathscr{A}$, then $\widetilde{f}\in \mathscr{A}$.
\item If $M(\mathfrak{m}_{12}f)=\lambda I_4$ for some real $\lambda\neq 0$, then $M(\mathfrak{m}_{12}\widetilde{f})=2\lambda I_4$.
 \item 
  If $\partial_{12}^{+}f, \partial_{12}^{-}f,  \partial_{12}^{\widehat{+}}f, \partial_{12}^{\widehat{-}}f\in \mathscr{A}$, then  $\widetilde{f}^{00}_{12}$, $\widetilde{f}^{01}_{12}$, $\widetilde{f}^{10}_{12}$, $\widetilde{f}^{11}_{12} \in \mathscr{A}$. 
  \item 
    For $\{u, v\}$ disjoint with $\{1, 2\}$ and $b\in \mathcal{B}$, if $\partial_{uv}^{b}f\in \mathscr{A}$, then $\partial_{uv}^{b}\widetilde f\in \mathscr{A}$.

\end{enumerate}
\end{lemma}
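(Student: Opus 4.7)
The plan is to leverage the explicit matrix identity
\[
M_{12}(\widetilde f) = H_4\, M_{12}(f) = \begin{bmatrix} \partial^+_{12}{\bf f} \\ {\partial}^{\widehat+}_{12}{\bf f} \\ {\partial}^{\widehat-}_{12}{\bf f} \\ \partial^-_{12}{\bf f} \end{bmatrix}
\]
stated just before the lemma, together with the identity $H_4 H_4^{\tt T} = 2I_4$. Each of the five items should reduce to a one- or two-line calculation based on this factorization.

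For part (1), I would observe that if $f$ has even parity then, for $(a,b)\in\{0,1\}^2$, the vector ${\bf f}_{12}^{ab}$ is supported only on bit strings $\gamma$ of the remaining $n-2$ variables with ${\rm wt}(\gamma)\equiv a+b\pmod 2$. Thus each row of $M_{12}(\widetilde f)$ indexed by $(a,b)$ with $a+b$ even is a linear combination of ${\bf f}_{12}^{00}$ and ${\bf f}_{12}^{11}$, both supported on even-weight $\gamma$, and similarly for $a+b$ odd, so $\widetilde f$ vanishes whenever the total Hamming weight is odd. For part (2), since $h_4$, the $=_2$ wires used in the gadget, and $f$ are all in $\mathscr{A}$, and $\mathscr{A}$ is closed under tensor products and connections by $=_2$, we get $\widetilde f\in\mathscr{A}$. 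For part (3) the computation is direct:
\[
M(\mathfrak{m}_{12}\widetilde f) \;=\; M_{12}(\widetilde f)\,M_{12}^{\tt T}(\widetilde f) \;=\; H_4\bigl(M_{12}(f)M_{12}^{\tt T}(f)\bigr)H_4^{\tt T} \;=\; H_4(\lambda I_4)H_4^{\tt T} \;=\; 2\lambda I_4.
\]

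For part (4) each row of $M_{12}(\widetilde f)$ is, by definition, the truth table of $\widetilde f_{12}^{ab}$; comparing with the displayed identity gives $\widetilde f_{12}^{00}=\partial^+_{12}f$, $\widetilde f_{12}^{01}=\partial^{\widehat+}_{12}f$, $\widetilde f_{12}^{10}=\partial^{\widehat-}_{12}f$, $\widetilde f_{12}^{11}=\partial^-_{12}f$, so if the four merged signatures are affine then so are the four sub-signatures. Finally, for part (5) the key observation is commutativity: since $\{u,v\}\cap\{1,2\}=\emptyset$, applying the $H_4$-gadget on variables $\{1,2\}$ and applying $\partial_{uv}^b$ touch disjoint sets of variables, hence $\partial_{uv}^b\widetilde f = {^{H_4}_{12}}(\partial_{uv}^b f)$, and then part (2) applied to the lower-arity $\partial_{uv}^b f\in\mathscr{A}$ finishes the argument. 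The only genuine bookkeeping is the parity check in part (1); all other parts follow essentially tautologically from the matrix identity, so I anticipate no serious obstacle.
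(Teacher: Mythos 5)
Your proposal is correct and follows essentially the same approach as the paper: the matrix identity $M_{12}(\widetilde f) = H_4 M_{12}(f)$ together with $H_4H_4^{\tt T}=2I_4$, closure of $\mathscr{A}$ under gadgets, and commutativity of the $H_4$ gadget with merges on disjoint variables. The only difference is that for part (1) you give an explicit row-by-row parity verification where the paper simply cites that $h_4$ has even parity (so the gadget preserves parity); both are valid.
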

\begin{proof}
Since $h_4$ has even parity and $h_4\in \mathscr{A}$,  the first and second propositions hold.

 If $M(\mathfrak{m}_{12}f)=\lambda I_4$, then
$M(\mathfrak{m}_{12}\widetilde{f})=M_{12}(\widetilde f)M^{\tt T}_{12}(\widetilde f)=H_4M_{12}(f)M^{\tt T}_{12}(f)H^{\tt T}_4=\lambda H_4I_4H^{\tt T}_4=2\lambda I_4.$
The third proposition holds.

By the matrix form $M_{12}(\widetilde f)$, the fourth proposition holds. 

Since the $H_4$ gadget construction only touches variables $x_1$ and $x_2$ of $f$, it commutes with merging gadgets on variables other than $x_1$ and $x_2$. Thus $\partial^{b}_{ij}\widetilde{f}=\widetilde{\partial^{b}_{ij}f}$.
For all $b\in \mathcal{B}$ and all $\{i, j\}$ disjoint with $\{1, 2\}$, if $\partial^b_{ij}f\in \mathscr{A}$ where $\partial^b_{ij}{f}$ are signatures realized by connecting variables $x_i$ and $x_j$ of 
$f$ using $b$, 
then $\partial^b_{ij}\widetilde f=\widetilde {\partial^{b}_{ij}{f}}\in \mathscr{A}$.
The last proposition holds. 
\end{proof}

Clearly, if $f\in \int_{\mathcal{B}}\mathscr{A}$, then  $\widetilde{f}^{00}_{12}$, $\widetilde{f}^{01}_{12}$, $\widetilde{f}^{10}_{12}$, $\widetilde{f}^{11}_{12} \in \mathscr{A}$.
Thus, for every $(a, b)\in \mathbb{Z}^2_2$, if  $\widetilde f_{12}^{ab}\not\equiv 0$, then its nonzero entries  have the same norm, denoted by $n_{ab}.$
Let $n_{ab}=0$ if $\widetilde f_{12}^{ab}\equiv 0$.
We have the following results regarding these norms  $n_{ab}.$

\begin{lemma}\label{lem-norm-f-tilde}
 Let $f$ be an irreducible signature of arity $2n\geqslant 6$.
 Suppose that $f$ has even parity, $f$ satisfies 
{\sc 2nd-Orth} and $f\in \int_{\mathcal{B}}\mathscr{A}$.
 \begin{enumerate}
     \item For any $(a, b), (c, d)\in \mathbb{Z}^2_2$, there exists some $k\in \mathbb{Z}$ such that
 $n_{ab}=\sqrt{2}^{k}n_{cd}\neq 0$, and $n_{ab}=n_{cd}$ iff $|\mathscr{S}(\widetilde{f}^{ab}_{12})|=|\mathscr{S}(\widetilde{f}^{cd}_{12})|.$
 \item Furthermore, if $\widetilde{f}_{12}^{00}(\vec{0}^{2n-2})\neq 0$ and $n_{00}>n_{11}$, 
then $\mathscr{S}(\widetilde{f}^{11}_{12})=\mathscr{E}_{2n-2}$\footnote{Here, $\mathscr{E}_{2n-2}=\{(x_3, \ldots, x_8)\in \mathbb{Z}^6_2\mid x_3+\cdots+x_8=0\}$. 
When context is clear, we do not specify the variables of $\mathscr{E}_{2n-2}$ and also $\mathscr{O}_{2n-2}$.}
and $n_{ab}=n_{11}$ or $2n_{11}$ for every $(a, b)\in \mathbb{Z}^2_2$;
in particular, $n_{00}=2n_{11}$.
Symmetrically,  if $\widetilde{f}_{12}^{11}(\vec{0}^{2n-2})\neq 0$ and $n_{00}<n_{11}$, 
then $\mathscr{S}(\widetilde{f}^{00}_{12})=\mathscr{E}_{2n-2}$
and $n_{ab}=n_{00}$ or $2n_{00}$ for every $(a, b)\in \mathbb{Z}^2_2$, and $n_{11} = 2 n_{00}$. 
 \end{enumerate}
 
\end{lemma}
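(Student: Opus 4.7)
The first bullet follows almost immediately from Lemma~\ref{lem-tilde-1}. Since $f$ satisfies {\sc 2nd-Orth} with constant $\lambda$, the remark after Definition~\ref{def:second-order-othor} gives $M(\mathfrak{m}_{12}f)=\lambda I_4$, so by part (3) of Lemma~\ref{lem-tilde-1} we have $M(\mathfrak{m}_{12}\widetilde f)=2\lambda I_4$; in particular $|\widetilde f^{ab}_{12}|^2=2\lambda>0$, and each $\widetilde f^{ab}_{12}$ is a nonzero signature. By part (4), $\widetilde f^{ab}_{12}\in\mathscr{A}$, so its support is an affine subspace of cardinality $2^{d_{ab}}$ for some integer $d_{ab}\geqslant 0$, and every nonzero entry has the same absolute value $n_{ab}$. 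Summing squares gives
\[
n_{ab}^{2}\cdot 2^{d_{ab}}=2\lambda\qquad\text{for every }(a,b)\in\mathbb{Z}_2^{2},
\]
from which $n_{ab}/n_{cd}=(\sqrt 2)^{\,d_{cd}-d_{ab}}$ is an integer power of $\sqrt 2$, with equality exactly when the two support sizes agree.

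For the second bullet, part (1) of Lemma~\ref{lem-tilde-1} tells us $\widetilde f$ has even parity, so $\mathscr{S}(\widetilde f^{00}_{12}),\mathscr{S}(\widetilde f^{11}_{12})\subseteq\mathscr{E}_{2n-2}$ and $\mathscr{S}(\widetilde f^{01}_{12}),\mathscr{S}(\widetilde f^{10}_{12})\subseteq\mathscr{O}_{2n-2}$. The hypothesis $\widetilde f^{00}_{12}(\vec{0}^{2n-2})\neq 0$ places $\vec{0}^{2n-2}$ in $\mathscr{S}(\widetilde f^{00}_{12})$, so this affine subspace is in fact linear. The first bullet then turns $n_{00}>n_{11}$ into $d_{11}>d_{00}$. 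Inverting the defining identities,
\[
f^{00}_{12}=\tfrac{1}{2}(\widetilde f^{00}_{12}+\widetilde f^{11}_{12}),\qquad f^{11}_{12}=\tfrac{1}{2}(\widetilde f^{00}_{12}-\widetilde f^{11}_{12}),
\]
and because $n_{00}\neq n_{11}$ no cancellation is possible; hence $f^{00}_{12}$ and $f^{11}_{12}$ both have common support $\mathscr{S}(\widetilde f^{00}_{12})\cup\mathscr{S}(\widetilde f^{11}_{12})$ with absolute values $n_{00}/2,\,n_{11}/2$, or $(n_{00}\pm n_{11})/2$ depending on which of the two subspaces the argument falls in.

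The heart of the argument is now to pin down $\mathscr{S}(\widetilde f^{11}_{12})=\mathscr{E}_{2n-2}$ together with the gap $d_{11}-d_{00}=2$ (so $n_{00}=2n_{11}$) and the side constraint $n_{01},n_{10}\in\{n_{11},2n_{11}\}$. I would use {\sc 2nd-Orth} on the mixed pairs $\{1,v\}$ and $\{2,v\}$ for $v\in\{3,\dots,2n\}$: expanding $|{\bf f}^{0c}_{1v}|^2=\lambda$ produces the slice identity $|(f^{00}_{12})^c_v|^2+|(f^{01}_{12})^c_v|^2=\lambda$, and $\langle{\bf f}^{00}_{1v},{\bf f}^{01}_{1v}\rangle=0$ couples restricted slices of $f^{00}_{12}$ to those of $f^{01}_{12}$ for every choice of $v$ and $c\in\{0,1\}$. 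Substituting the three-level description of $f^{00}_{12}$ from the previous paragraph together with the analogous decomposition of $f^{01}_{12}$ coming from the affine structure of $\widetilde f^{01}_{12},\widetilde f^{10}_{12}$ converts these identities into a closed arithmetic system in the norms $n_{ab}$ and the sub-support sizes; the only solutions compatible with the dimension inequality $d_{11}>d_{00}$ and with the affineness of all four $\widetilde f^{ab}_{12}$ are the asserted ones. The main obstacle is the combinatorial bookkeeping that simultaneously excludes dimension gaps other than $2$ — larger gaps violate the slice identity on some coset, smaller gaps break the affine structure of $\widetilde f^{00}_{12}$ inside the larger subspace $\mathscr{S}(\widetilde f^{11}_{12})$ — and rules out $n_{01}$ or $n_{10}$ equaling the a priori admissible intermediate value $\sqrt 2\,n_{11}$. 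The symmetric case $n_{00}<n_{11}$ with $\widetilde f^{11}_{12}(\vec{0}^{2n-2})\neq 0$ runs identically after interchanging the roles of the $(0,0)$ and $(1,1)$ rows.
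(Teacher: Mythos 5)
Your treatment of part (1) is correct and essentially identical to the paper's: the identity $n_{ab}^2\cdot 2^{d_{ab}}=2\lambda$ coming from $M(\mathfrak m_{12}\widetilde f)=2\lambda I_4$ and the affineness of each $\widetilde f^{ab}_{12}$ gives the $\sqrt 2$-power relation immediately.

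Part (2) has a genuine gap. After the setup (even parity, $\vec 0^{2n-2}\in\mathscr{S}(\widetilde f^{00}_{12})$, $d_{11}>d_{00}$, and the observation that $f^{00}_{12}, f^{11}_{12}$ are supported on $\mathscr{S}(\widetilde f^{00}_{12})\cup\mathscr{S}(\widetilde f^{11}_{12})$), you gesture at a ``closed arithmetic system'' built from {\sc 2nd-Orth} on pairs $\{1,v\}$ and $\{2,v\}$ and claim the only compatible solutions are the asserted ones, with larger dimension gaps ``violating the slice identity on some coset,'' smaller gaps ``breaking the affine structure,'' and the value $\sqrt 2\,n_{11}$ being ``ruled out.'' None of this is actually demonstrated, and it is precisely this step that carries the entire weight of the lemma. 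The mechanism you would need — and the one the paper uses — is not a global norm-bookkeeping argument at all, but a local boundary-crossing argument exploiting the \emph{full} strength of $f\in\int_{\mathcal B}\mathscr A$: by Lemma~\ref{lem-tilde-1}(5), for every $\{i,j\}$ disjoint from $\{1,2\}$ and every $b\in\mathcal B$, the signature $\partial^b_{ij}\widetilde f$ lies in $\mathscr A$, hence all its nonzero entries share one norm. If $\mathscr{S}(\widetilde f^{11}_{12})\subsetneq\mathscr{E}_{2n-2}$, one can locate $\alpha,\beta\in\mathscr{E}_{2n-2}$ at Hamming distance $2$ with exactly one in $\mathscr{S}(\widetilde f^{11}_{12})$, differing on two bits where they take $00$ and $11$; then both $\partial^+_{ij}\widetilde f$ and $\partial^-_{ij}\widetilde f$ pick up an entry of norm $n_{11}$ from the $(1,1)$ row at $(\alpha,\beta)$, while at $\vec 0^{2n-2}$ the $(0,0)$ row forces an entry of norm $n_{00}$ or $2n_{00}$ into at least one of them — and $2n_{00}>n_{00}>n_{11}$ makes that signature non-affine, a contradiction. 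The same mechanism (choosing the merging pair according to whether the distinguishing bits are equal or not, and comparing against a definite norm-$2n_{11}$ entry guaranteed once $\mathscr{S}(\widetilde f^{11}_{12})=\mathscr{E}_{2n-2}$) eliminates $n_{ab}\notin\{n_{11},2n_{11}\}$. Your proposed route via {\sc 2nd-Orth} on $\{1,v\}$ only gives norm identities and inner-product orthogonalities relating restrictions, which is not strong enough to rule out intermediate values like $\sqrt2\,n_{11}$; you need the affine constraint on the \emph{merged} gadgets, not on $f$ itself, and that is the ingredient missing from the sketch.
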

\begin{proof}
Since $f$ satisfies {\sc 2nd-Orth},  $M(\mathfrak{m}_{12}f)=\lambda I_4$ for some real $\lambda \neq 0$.
Then, by Lemma~\ref{lem-tilde-1}, $M(\mathfrak{m}_{12}\widetilde{f})=2\lambda I_4\neq 0$.
Thus, $|\widetilde{f}_{12}^{ab}|^2=2\lambda\neq 0$ for every $(a, b)\in \mathbb{Z}^2_2$.
Also, since  $f\in \int_{\mathcal{B}}\mathscr{A}$, by Lemma~\ref{lem-tilde-1},  for every $(a, b)\in \mathbb{Z}^2_2$, $\widetilde{f}^{ab}_{12} \in \mathscr{A}$.
Thus, $\mathscr{S}(\widetilde{f}_{12}^{ab})$ is affine and $|\mathscr{S}(\widetilde{f}_{12}^{ab})|=2^{k_{ab}}$ for some integer $k_{ab} \geqslant 0$.
Note that $$|\widetilde{f}_{12}^{ab}|^2=n^2_{ab}\cdot |\mathscr{S}(\widetilde{f}_{12}^{ab})|=n^2_{ab}\cdot 2^{k_{ab}}\neq 0.$$
Thus, for any $(a, b), (c, d)\in \mathbb{Z}^2_2$, 
$n^2_{ab}\cdot 2^{k_{ab}}=n^2_{cd}\cdot 2^{k_{cd}}\neq 0$.
Then, $n_{ab}= \sqrt{2}^{k}n_{cd}\neq 0$ where $k=k_{cd}-k_{ab}\in\mathbb{Z}$.
Clearly, $k=0$ iff $|\mathscr{S}(\widetilde{f}_{12}^{ab})|=2^{k_{ab}}=2^{k_{cd}}=|\mathscr{S}(\widetilde{f}_{12}^{cd})|.$

Now we prove the second part of this lemma.
We give the proof for the case that $\widetilde{f}_{12}^{00}(\vec{0}^{2n-2})\neq 0$ and $n_{00}>n_{11}$.
The proof of the case that $\widetilde{f}_{12}^{11}(\vec{0}^{2n-2})\neq 0$ and $n_{00}<n_{11}$ is symmetric.
We first show that $\mathscr{S}(\widetilde{f}^{11}_{12})=\mathscr{E}_{2n-2}$.
For a contradiction, suppose that $\mathscr{S}(\widetilde{f}^{11}_{12})\neq \mathscr{E}_{2n-2}$.
Since $f$ has even parity, by Lemma~\ref{lem-tilde-1}, $\widetilde f$ has even parity.
Then,
$\widetilde{f}^{11}_{12}$ also has even parity. 
Thus, $\mathscr{S}(\widetilde{f}^{11}_{12})\subsetneq \mathscr{E}_{2n-2}$.
There exists $\theta \in \mathscr{E}_{2n-2}$ such that $\theta \notin \mathscr{S}(\widetilde{f}^{11}_{12})$.
Also, since $n_{11}\neq 0$, $\widetilde{f}^{11}_{12}\not\equiv 0$.
Then,  $\mathscr{S}(\widetilde{f}^{11}_{12})\neq \emptyset$ and there exists $\delta \in \mathscr{E}_{2n-2}$ such that $\delta \in \mathscr{S}(\widetilde{f}^{11}_{12})$.
Then, we can find a pair $\alpha, \beta\in \mathscr{E}_{2n-2}$ where ${\rm wt}(\alpha\oplus\beta)=2$
such that one is in $ \mathscr{S}(\widetilde{f}^{11}_{12})$ and the other one is not in $ \mathscr{S}(\widetilde{f}^{11}_{12})$.
\begin{itemize}
    \item If ${\rm wt}(\alpha)\neq {\rm wt}(\beta)$, then clearly the difference between their Hamming weights is $2$ since ${\rm wt}(\alpha\oplus\beta)=2$.
Thus, $\alpha$ and $\beta$ differ in two bits $i, j$  
on which one takes value $00$ and the other takes value $11$. 
\item
If  ${\rm wt}(\alpha)= {\rm wt}(\beta)$, then they  differ in two bits $i, j$ on which one takes value $01$ and the other takes value $10$.
Without loss of generality, we assume that $\alpha_i\alpha_j=01$ and $\beta_i\beta_j=10$.
They take the same value on other bits.
Since $\alpha, \beta \in \mathscr{E}_{2n-2}$ and $2n\geqslant 6$, they have even Hamming weight and length at least $4$.
Thus, there is another bit $k$ such that on this bit, $\alpha_k=\beta_k=1$.
Consider $\gamma\in \mathscr{E}_{2n-2}$ where $\gamma_i\gamma_j\gamma_k=000$  and $\gamma$ takes the same value as $\alpha$ and $\beta$ on other bits. 
Clearly, $\rm{wt}(\gamma)+2=\rm{wt}(\alpha)=\rm{wt}(\beta)$.
If $\gamma \in \mathscr{S}(\widetilde{f}^{11}_{12})$, then between $\alpha$ and $\beta$, we pick the one that is not in $\mathscr{S}(\widetilde{f}^{11}_{12})$.
Otherwise, we pick the one that is in $\mathscr{S}(\widetilde{f}^{11}_{12})$.
In both cases, we can get a pair of inputs in
$\mathscr{E}_{2n-2}$ such that one is in  $\mathscr{S}(\widetilde{f}^{11}_{12})$ and the other is not in  $\mathscr{S}(\widetilde{f}^{11}_{12})$, and 
they have Hamming distance 2 as well as different Hamming weights.
\end{itemize}

Thus, we can always find a pair $\alpha, \beta\in \mathscr{E}_{2n-2}$ where ${\rm wt}(\alpha\oplus\beta)=2$ and $\alpha, \beta$ differ in two bits $i, j$  
on which one takes value $00$ and the other takes value $11$, such that one is in $ \mathscr{S}(\widetilde{f}^{11}_{12})$ and the other is not in $ \mathscr{S}(\widetilde{f}^{11}_{12})$.
Clearly, $\{i, j\}$ is disjoint with $\{1, 2\}$.

Consider signatures $\partial^+_{ij}\widetilde f$ and
$\partial^-_{ij}\widetilde f$.
Then, 
$\widetilde f(11\alpha)+\widetilde f(11\beta)$
is an entry of $\partial^+_{ij}\widetilde f$,
and $\widetilde f(11\alpha)-\widetilde f(11\beta)$ is an entry of $\partial^-_{ij}\widetilde f$.
Since between $\widetilde f(11\alpha)$ and $\widetilde f(11\beta)$, exactly one is nonzero and it has norm $n_{11}$, we have 
$$|\widetilde f(11\alpha)+\widetilde f(11\beta)|=|\widetilde f(11\alpha)-\widetilde f(11\beta)|=n_{11}.$$
Thus, both $\partial^+_{ij}\widetilde f$ and
$\partial^-_{ij}\widetilde f$ have an entry with norm $n_{11}$.
Let $\delta\in\mathscr{E}_{2n}$ where $\delta_i\delta_j=11$ and $\delta$ takes value $0$ on other bits.
Then, clearly, $\widetilde f(\vec{0}^{2n})+\widetilde f(\delta)$ is an entry of $\partial^+_{ij}\widetilde f$, 
and $\widetilde f(\vec{0}^{2n})-\widetilde f(\delta)$ is an entry of $\partial^-_{ij}\widetilde f$.
\begin{itemize}
\item If $\widetilde  f(\delta)\neq 0$, then $|\widetilde f(\delta)|=n_{00}$ since $\delta_{1}\delta_2=00$.
Since $\widetilde f(\vec{0}^{2n})\neq 0$, $|\widetilde f(\vec{0}^{2n})|=n_{00}$.
Thus, between $\widetilde f(\vec{0}^{2n})+\widetilde f(\delta)$ and $\widetilde f(\vec{0}^{2n})-\widetilde f(\delta)$, one has norm $2n_{00}$ and the other is zero.
Therefore, between $\partial^+_{ij}\widetilde f$ and
$\partial^-_{ij}\widetilde f$, one signature has an entry with norm $2n_{00}$.
Remember that both $\partial^+_{ij}\widetilde f$ and
$\partial^-_{ij}\widetilde f$ have an entry with norm $n_{11}$.
Clearly, $2n_{00}>n_{11}$.
Thus, between $\partial^+_{ij}\widetilde f$ and
$\partial^-_{ij}\widetilde f$,
there is a signature that has two entries with different norms.
Clearly, such a signature is not in $\mathscr{A}$.
However, since $f\in \int_{\mathcal{B}}\mathscr{A}$, 
by Lemma \ref{lem-tilde-1}, $\partial^+_{ij}\widetilde f, \partial^-_{ij}\widetilde f\in \mathscr{A}$. 
Contradiction.
    \item 
If $\widetilde f(\delta)=0$, then  $|\widetilde f(\vec{0}^{2n})+\widetilde f(\delta)|=|\widetilde f(\vec{0}^{2n})|=n_{00}>n_{11}$.
Thus, $\partial^{+}_{ij}\widetilde f$ has two nonzero entries with different norms $n_{00}$ and $n_{11}$.
Then, $\partial^{+}_{ij}\widetilde f\notin \mathscr{A}$.
Contradiction.
\end{itemize}
Thus, $\mathscr{S}(\widetilde f_{12}^{11})=\mathscr{E}_{2n-2}$.

Then, we show that $n_{ab}=n_{11}$ or $2n_{11}$ for any $(a, b)\in \mathbb{Z}_2^2$. 
Clearly, we may assume that $(a, b)\neq (1, 1)$.
For a contradiction, suppose that $n_{ab}\neq n_{11}$ and $2n_{11}$.
First, we show that  $|\mathscr{S}(\widetilde f_{12}^{ab})|<2^{2n-3}$ and $n_{ab}>n_{11}$.
Since $f$ has parity, $\widetilde f_{12}^{ab}$ also has parity (either even or odd depending on whether ${\rm wt}(ab)=0$ or $1$).
Thus $|\mathscr{S}(\widetilde f_{12}^{ab})|\leqslant|\mathscr{E}_{2n-2}|=|\mathscr{O}_{2n-2}|=2^{2n-3}$.
If the equality holds, then $n_{ab}=n_{11}$ since  $n^2_{ab}|\mathscr{S}(\widetilde f_{12}^{ab})|=n^2_{11}|\mathscr{S}(\widetilde f_{12}^{11})|$ and $|\mathscr{S}(\widetilde f_{12}^{11})|=2^{2n-3}$. Contradiction.
Thus, $|\mathscr{S}(\widetilde f_{12}^{ab})|<2^{2n-3}$ and also $n_{ab}>n_{11}.$

Depending on whether $f_{12}^{ab}$ has even parity or odd parity, 
we can pick a pair of inputs $\alpha, \beta$ with ${\rm wt}(\alpha\oplus\beta)=2$ from $\mathscr{E}_{2n-2}$ or $\mathscr{O}_{2n-2}$ such that exactly one is in $\mathscr{S}(f_{12}^{ab})$ and the other is not in $\mathscr{S}(f_{12}^{ab})$.
Suppose that $\alpha$ and $\beta$ differ in bits $i ,j$.
Depending on whether $\alpha_i=\alpha_j$ or $\alpha_i\neq\alpha_j$, 
we can connect variables $x_i$ and $x_j$ of $\widetilde f$ using $=_2^+$ and $=_2^-$, or $\neq_{2}^{+}$ and $\neq_{2}^{-}$.
We will get two signatures $\partial^+_{ij}\widetilde f$ and $\partial^-_{ij}\widetilde f$, or $\partial^{\widehat+}_{ij}\widetilde f$ and $\partial^{\widehat-}_{ij}\widetilde f$.
We  consider the case that 
$\alpha_i=\alpha_j$.
For the case that $\alpha_i\neq \alpha_j$, the analysis is the same by replacing $\partial^+_{ij}\widetilde f$ and $\partial^-_{ij}\widetilde f$ with $\partial^{\widehat+}_{ij}\widetilde f$ and $\partial^{\widehat-}_{ij}\widetilde f$ respectively.

Consider signatures $\partial^+_{ij}\widetilde f$ and $\partial^-_{ij}\widetilde f$.
Then, $\widetilde f(ab\alpha)+ \widetilde f(ab\beta)$ is an entry of  $\partial^+_{ij}\widetilde f$, and $\widetilde f(ab\alpha)- \widetilde f(ab\beta)$ is an entry of  $\partial^-_{ij}\widetilde f$.
Since between $\alpha$ and $\beta$, exactly one is in $\mathscr{S}(f_{12}^{ab})$,
between $\widetilde f(ab\alpha)$ and $\widetilde f(ab\beta)$, exactly one is nonzero and it has norm $n_{ab}$.
Thus, $$|\widetilde f(ab\alpha)+ \widetilde f(ab\beta)|=|\widetilde f(ab\alpha)- \widetilde f(ab\beta)|=n_{ab}.$$
Both $\partial^+_{ij}\widetilde f$ and $\partial^-_{ij}\widetilde f$ have an entry with norm $n_{ab}$.

Let $\alpha', \beta' \in \mathscr{E}_{2n-2}$ where $\alpha'_i\alpha'_j=\alpha_i\alpha_j$, $\alpha'_k=\alpha'_i\oplus \alpha'_j$ for some $k\neq i, j$\footnote{Since $2n-2\geqslant 4$, such a $k$ exists. Here, $\alpha'_k=0$ since $\alpha_i=\alpha_j$  in this case under discussion.
For the case that $\alpha_i\neq \alpha_j$, we have $\alpha'_k=1$.} and $\alpha'$ takes value $0$ on other bits, and $\beta'_i\beta'_j=\beta_i\beta_j$, $\beta'_k=\beta'_i\oplus\beta'_j$ for the same $k\neq i, j$ and $\beta'$ takes value $0$ on other bits.
Clearly, $\alpha'$ and $\beta'$ differ in bits $i$ and $j$ and they differ in the same way as $\alpha$ and $\beta$. 
Then, $\widetilde f(11\alpha')+ \widetilde f(11\beta')$ is an  entry of   $\partial^{+}_{ij}\widetilde f$, and  $\widetilde f(11\alpha')- \widetilde f(11\beta')$ is an  entry of   $\partial^{-}_{ij}\widetilde f$.
Since $\mathscr{S}(\widetilde f^{11}_{12})=\mathscr{E}_{2n-2}$, 
both $\widetilde f(11\alpha')$ and $\widetilde f(11\beta')$ are nonzero and they have norm $n_{11}$.
Thus, between $\widetilde f(11\alpha')+ \widetilde f(11\beta')$ and 
$\widetilde f(11\alpha')- \widetilde f(11\beta')$, exactly one is zero and the other has norm $2n_{11}$.
Thus, between signatures $\partial_{ij}\widetilde f$ and $\partial^{-}_{ij}\widetilde f$, 
there is a signature that has two entries with different norms  $2n_{11}$ and $n_{ab}$.
Such a signature is not in $\mathscr{A}$.
However, since $f\in \int_{\mathcal{B}}\mathscr{A}$, 
by Lemma \ref{lem-tilde-1}, $\partial_{ij}\widetilde f, \partial^-_{ij}\widetilde f\in \mathscr{A}$. 
Contradiction.
Thus, $n_{ab}=n_{11}$ or $2n_{11}$ for any $(a, b)\in \mathbb{Z}_2^2$.
\end{proof}

We also give the following results about multilinear  polynomials $F(x_1, \ldots, x_n)\in\mathbb{Z}_2[x_1, \ldots, x_n]$.
We use $d(F)$ to denote the total degree of  $F$.  
For $\{i, j\}\subseteq \{1, \ldots, n\}=[n]$, we use $F^{ab}_{ij}\in Z_2[\{x_1, \ldots, x_n\}\backslash\{x_i, x_j\}]$ to denote the polynomial obtained by setting $(x_i, x_j)=(a, b)$ in $F$.

\begin{definition}
Let $F(x_1, \ldots, x_n)\in\mathbb{Z}_2[x_1, \ldots, x_n]$ be a  multilinear polynomial.
We say $F$ is a complete quadratic  polynomial if  $d(F) = 2$ and for all $\{i, j\}\subseteq [n]$, the quadratic term $x_ix_j$ appears in $F$.
We say $F$ is a complete cubic  polynomial if $d(F) = 3$ and for all $\{i, j, k\}\subseteq [n]$, the cubic term $x_ix_jx_k$ appears in $F$.
\end{definition}{}

\begin{lemma}\label{lem-mutilinear-poly}
Let $F(x_1, \ldots, x_n)\in\mathbb{Z}_2[x_1, \ldots, x_n]$ be a  multilinear polynomial. 
\begin{enumerate}
    \item If for all $\{i, j\}\subseteq [n]$, $F_{ij}^{00}+F_{ij}^{11}\equiv 0$ or $1$, and $F_{ij}^{01}+F_{ij}^{10}\equiv 0$ or $1$, then $d(F)\leqslant 2$. Moreover, if $d(F)=2$, then $F$ is a complete quadratic  polynomial.
    \item If for all $\{i, j\}\subseteq [n]$, $d(F_{ij}^{00}+F_{ij}^{11})\leqslant 1$, and $d(F_{ij}^{01}+F_{ij}^{10})\leqslant 1$, then $d(F)\leqslant 3$. Moreover, if $d(F)=3$, then $F$ is a complete cubic  polynomial.
\end{enumerate}
\end{lemma}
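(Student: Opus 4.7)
The plan is to use the discrete-derivative formalism over $\mathbb{Z}_2$. Writing $F=\sum_{S\subseteq[n]} c_S\prod_{k\in S}x_k$ with $c_S\in\{0,1\}$, I would first compute the two Boolean differences explicitly and obtain
\[
F_{ij}^{00}+F_{ij}^{11}=\sum_{S:\,S\cap\{i,j\}\neq\emptyset}c_S\!\!\prod_{k\in S\setminus\{i,j\}}\!\!x_k,\qquad
F_{ij}^{01}+F_{ij}^{10}=\sum_{S:\,|S\cap\{i,j\}|=1}c_S\!\!\prod_{k\in S\setminus\{i,j\}}\!\!x_k,
\]
whose sum is the second discrete derivative
\[
\partial_{ij}F:=\sum_{S\supseteq\{i,j\}}c_S\!\!\prod_{k\in S\setminus\{i,j\}}\!\!x_k.
\]
The key observation is that the map $S\mapsto S\setminus\{i,j\}$ is injective on sets containing $\{i,j\}$, so no monomial cancellations occur in $\partial_{ij}F$; hence if $d(F)=d$ and some top-degree monomial of $F$ contains both $i$ and $j$, then $d(\partial_{ij}F)=d-2$ precisely.

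For the degree bound, I would argue by contradiction: pick a top-degree term $T$ of $F$ with $|T|=d(F)$ and take $\{i,j\}\subseteq T$. The hypothesis in part~1 forces $\partial_{ij}F$ to be a constant, so $d(F)-2\leqslant0$, i.e.\ $d(F)\leqslant2$; similarly in part~2 the hypothesis forces $d(\partial_{ij}F)\leqslant1$, so $d(F)\leqslant3$.

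For completeness of the top-degree part, I would extract the degree-$(d-2)$ coefficients of $F_{ij}^{01}+F_{ij}^{10}$ (equivalently $F_{ij}^{00}+F_{ij}^{11}$). In part~1 with $d(F)=2$, these are $c_{\{i,k\}}+c_{\{j,k\}}$ for $k\notin\{i,j\}$, so the hypothesis yields $c_{\{i,k\}}=c_{\{j,k\}}$ whenever $i,j,k$ are distinct. Fixing $k$ and varying $i$ shows all $c_{\{i,k\}}$ with $i\neq k$ share a common value $d_k$; then varying $k$ gives $d_i=d_j$, so every quadratic coefficient equals the same $d\in\{0,1\}$. Since $d(F)=2$, we must have $d=1$, proving $F$ is a complete quadratic polynomial.

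In part~2 with $d(F)=3$, the degree-$1$ part of each $F_{ij}^{\cdot\cdot}+F_{ij}^{\cdot\cdot}$ is vacuously within degree $1$; the relevant constraint is that the degree-$2$ contribution vanish, which gives $c_{\{i,k,\ell\}}=c_{\{j,k,\ell\}}$ for any two disjoint pairs $\{i,j\}$, $\{k,\ell\}$ (this requires $n\geqslant4$; the case $n=3$ is trivial since a degree-$3$ polynomial in three variables has only the single cubic term $x_1x_2x_3$). Fixing the pair $\{k,\ell\}$ shows the cubic coefficient $c_{\{i,k,\ell\}}$ does not depend on $i$; then swapping the roles of $\{i,j\}$ and $\{k,\ell\}$ propagates this across all 3-subsets, so all cubic coefficients are equal. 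Since $d(F)=3$, they all equal $1$, so $F$ is complete cubic. The main subtlety is just bookkeeping: ensuring the ``no-cancellation'' observation is applied cleanly and that the transitivity argument on cubic coefficients covers every 3-subset when $n\geqslant4$.
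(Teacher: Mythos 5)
Your proposal is correct and takes essentially the same approach as the paper: your second discrete derivative $\partial_{ij}F = (F_{ij}^{00}+F_{ij}^{11})+(F_{ij}^{01}+F_{ij}^{10})$ is exactly the polynomial coefficient $X_{ij}$ the paper extracts from the decomposition $F = X_{ij}x_ix_j + Y_{ij}x_i + Z_{ij}x_j + W_{ij}$, and forcing it to be constant (resp.\ degree $\leqslant 1$) gives the degree bound in both arguments. Your completeness argument via the identity $c_{\{i,k\}}=c_{\{j,k\}}$ and transitivity is a mild repackaging of the paper's two-step propagation from a single quadratic term; the paper omits the explicit part-2 argument as ``similar,'' and your version fills it in cleanly, including the $n\geqslant 4$ connectivity check on the triple graph.
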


\begin{proof}
We prove the first part. The proof for the second part is similar which we omit here.

For all $\{i, j\}\subseteq[n]$, we write $F\in \mathbb{Z}_2[x_1, \ldots, x_n]$ as a polynomial of variables $x_i$ and $x_j$.
$$F=X_{ij}x_ix_j+Y_{ij}x_i+Z_{ij}x_j+W_{ij}$$ where $X_{ij}, Y_{ij}, Z_{ij}, W_{ij}\in  \mathbb{Z}_2[\{x_3, \ldots, x_n\}\backslash\{x_i, x_j\}]$.
Then, $$F_{ij}^{00}=W_{ij} ~~\text{ and  }~~ F_{ij}^{11}=X_{ij}+Y_{ij}+Z_{ij}+W_{ij}.$$
Thus, $X_{ij}+Y_{ij}+Z_{ij} = F_{ij}^{00}+F_{ij}^{11}
\equiv 0 \text{ or }1.$
Also, $$F_{ij}^{01}=Z_{ij}+W_{ij} ~~\text{ and }~~ F_{ij}^{10}=Y_{ij}+W_{ij}.$$
Thus, $Y_{ij}+Z_{ij}= F_{ij}^{01}+F_{ij}^{10}\equiv 0 \text{ or }1.$
Then, $X_{ij} \equiv  0$ or $1$ for all $\{i, j\}$.
Thus, $d(F)\leqslant 2$. 

Suppose that $d(F)=2$. then $F$ has at least a quadratic term $x_ux_v$ $(u\neq v)$.
   Without loss of generality, we assume that the term $x_1x_2$ appears in $F$. 
  We first show that for all $2\leqslant j \leqslant n$, the quadratic term $x_1x_j$ appears in $F$.
   Since $x_1x_2$ is already in $F$, we may assume that $3\leqslant j$.
   We write $F$ as a polynomial of variables $x_2$ and $x_j$. 
   $$F=X_{2j}x_2x_j+Y_{2j}x_2+Z_{2j}x_j+W_{2j},$$
   where $X_{2j}, Y_{2j}, Z_{2j}, W_{2j}$ do not involve
   $x_2$ and $x_j$.
   Since $x_1x_2$ appears in $F$, $x_1$ appears in $Y_{2j}$.
     As we have proved above, $Y_{2j}+Z_{2j} \equiv 0$ or $1$.
     Thus, $x_1$ also appears in $Z_{2j}$, which means that $x_1x_j$ appears in $F$.
     Then, for all $2\leqslant j\leqslant n$, $x_1x_j$ appears in $F$.
     
     Then, for all $2\leqslant i < j\leqslant n$, we write $F$ as a polynomial of variables $x_1$ and $x_i$.
   $$F=X_{1i}x_1x_i+Y_{1i}x_1+Z_{1i}x_i+W_{1i},$$
   where $X_{1i}, Y_{1i}, Z_{1i}, W_{1i}$ do not involve
   $x_1$ and $x_i$.
     Since $x_1x_j$ appears in $F$, $x_j$ appears in $Y_{1i}$.
     Since $Y_{1i}+Z_{1i}\equiv 0$ or $1$, $x_j$ also appears in $Z_{1i}$.
     Thus, $x_ix_j$ appears in $F$.
     Then, for all $2\leqslant i < j\leqslant n$, the quadratic term $x_ix_j$ appears in $F$.
     Thus,  for all $\{i, j\}\subseteq [n]$,  $x_ix_j$ appears in $F$.
\end{proof}

Now, we are ready to take a major
step towards Theorem~\ref{thm-holantb}.
\begin{lemma}\label{lem-affine-norm}
Let $2n\geqslant 8$ and let $f\in \mathcal{F}$ be a $2n$-ary  irreducible signature with parity. Then,
\begin{itemize}
\item  $\Holantb(\mathcal{F})$ is \#P-hard, or
    \item there is  a signature $g\notin \mathscr{A}$ of arity $2k<2n$  that is realizable from $f$ and $\mathcal{B}$, or
    \item after normalization, $f(\alpha)=\pm 1$ for all $\alpha \in \mathscr{S}(f)$.
\end{itemize}
\end{lemma}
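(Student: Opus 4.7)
The plan is to establish (3) by contradiction, assuming the failures of (1) and (2). The failure of (1) allows us, via Lemma~\ref{second-ortho}, to assume $f$ satisfies {\sc 2nd-Orth}; and the failure of (2) forces $f \in \int_{\mathcal{B}}\mathscr{A}$, because otherwise some $\partial^b_{ij}f \notin \mathscr{A}$ of arity $2n-2$ is realizable from $f$ and $\mathcal{B}$ and serves as the required $g$. By Lemma~\ref{lem-tilde-1}, these two structural properties of $f$ pass through the $H_4$ gadget: for each pair $\{i,j\}$, the four rows $\widetilde{f}_{ij}^{ab}$ of $\widetilde{f}={}^{H_4}_{ij}f$ all lie in $\mathscr{A}$, and $M(\mathfrak{m}_{ij}\widetilde{f})=2\lambda I_4$. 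Hence Lemma~\ref{lem-norm-f-tilde} applies: the norms $n_{ab}^{(ij)}$ of $\widetilde{f}_{ij}^{ab}$ are all nonzero and pairwise equal up to a factor of $\sqrt{2}^k$, with equality exactly when the supports have equal cardinality.

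Using $H_4^{-1} = \tfrac12 H_4$, I recover the four rows of $f$ as $f_{ij}^{00}=\tfrac12(\widetilde{f}_{ij}^{00}+\widetilde{f}_{ij}^{11})$, $f_{ij}^{11}=\tfrac12(\widetilde{f}_{ij}^{00}-\widetilde{f}_{ij}^{11})$, and analogously for the $01$/$10$ pair. The goal is to show, for every pair $\{i,j\}$, that all four $n_{ab}^{(ij)}$ are equal and that the supports $\mathscr{S}(\widetilde{f}_{ij}^{ab})$ coincide; this forces every nonzero $f(\alpha)$ to have the same norm after renormalization. I would first rule out mismatched norms within a single pair by Lemma~\ref{lem-norm-f-tilde}(2): assuming WLOG $n_{00}^{(12)} > n_{11}^{(12)}$ and $\widetilde{f}^{00}_{12}(\vec 0)\ne 0$ (possible after flipping variables using $\neq_2\in\mathcal{B}$), the lemma yields $n_{00}^{(12)}=2n_{11}^{(12)}$ and $\mathscr{S}(\widetilde{f}_{12}^{11})=\mathscr{E}_{2n-2}$. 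I then propagate this information to other pairs $\{i,j\}$ using the commutativity of the $H_4$ gadget with merging (Lemma~\ref{lem-tilde-1}(5)) and compare the norms across pairs.

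The key step is to convert any mismatch of norms into a signature of arity $2n-2$ that is not in $\mathscr{A}$, which contradicts the failure of (2). Concretely, under the norm imbalance above, a suitable merging $\partial^b_{uv}f$ with $\{u,v\}\cap\{1,2\}=\emptyset$ will have two nonzero entries of distinct norms (one inherited from a $\widetilde{f}^{00}$-entry, another from a $\widetilde{f}^{11}$-entry of half the norm), which places it outside $\mathscr{A}$; this contradicts $f\in\int_{\mathcal{B}}\mathscr{A}$. When both $\widetilde{f}^{00}_{12}$ and $\widetilde{f}^{11}_{12}$ have nonzero weight-zero entry, the same propagation forces all four $n_{ab}^{(ij)}$ equal; combined with equality of supports, the recovery formula $f_{ij}^{ab}=\tfrac12(\cdot\pm\cdot)$ then produces only values of norm $0$ or $1$ after normalization.

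The hard part will be handling $2n=8$, where a priori the signatures $g_8$ and $g'_8$ described in the paper—which have entries of distinct norms $1$ and $3$, yet satisfy parity, {\sc 2nd-Orth} and $\int_{\mathcal{B}}\mathscr{A}$—obstruct the clean argument available for $2n\geqslant 10$. Here I would use Lemma~\ref{lem-indenpent-set}: for $2n\geqslant 10$ the strict bound $\alpha(G_{2n})<2^{2n-4}$ makes the analogues of $g_8$ impossible, eliminating this branch; but for $2n=8$ the unique maximum independent set $I_8=\mathscr{S}(f_8)$ does support them. In that case I would show the structure analysis forces $f$ to be (up to holographic and flipping operations) one of $g_8,g'_8$, and then merge two of its variables using $=_2$ or $=^-_2$ respectively to produce a 6-ary irreducible signature $h$ that fails {\sc 2nd-Orth}; invoking Lemma~\ref{second-ortho} on $h$ then gives \#P-hardness of $\Holantb(\mathcal{F})$, contradicting the assumed failure of (1).
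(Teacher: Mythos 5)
Your outer skeleton is right: assume (1) and (2) fail, extract {\sc 2nd-Orth} and $f\in\int_{\mathcal{B}}\mathscr{A}$, push these through the $H_4$ gadget via Lemma~\ref{lem-tilde-1}, invoke Lemma~\ref{lem-norm-f-tilde} for the norm structure, and use Lemma~\ref{lem-indenpent-set} to rule out the exceptional structure when $2n\geqslant 10$ while handling the genuine exceptions $g_8,g'_8$ at arity $8$ by merging to a $6$-ary signature that violates {\sc 2nd-Orth}. Those are indeed the tools the paper uses, and the endgame you describe for $2n=8$ (merging $g_8$ with $=_2$ or $g'_8$ with $=_2^-$ to kill {\sc 2nd-Orth}) matches the paper exactly.

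However, the key step in your third paragraph does not work as stated, and this is a genuine gap. You claim that a norm imbalance $n_{00}^{(12)}=2n_{11}^{(12)}$ immediately yields some $\partial^b_{uv}f$ (with $\{u,v\}$ disjoint from $\{1,2\}$) having two nonzero entries of distinct norms, hence lying outside $\mathscr{A}$, contradicting $f\in\int_{\mathcal{B}}\mathscr{A}$. This is false: $g_8$ itself has exactly such a norm imbalance (its nonzero entries are $1$ and $-3$), and yet every $\partial^b_{ij}g_8$ is affine. Concretely, $\partial^+_{12}g_8(\theta)=g_8(00\theta)+g_8(11\theta)$ takes values only in $\{0,\pm 2\}$ (the value $-6$ would require both $00\theta$ and $11\theta$ in the independent set $I_8$, which is impossible because any two points of $I_8$ are at Hamming distance $\geqslant 4$), and similarly $\partial^-_{12}g_8$ has values in $\{0,\pm 4\}$. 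So the imbalance is entirely invisible to the mergings. Your argument would therefore never actually fire, at any arity. The actual role of the norm-imbalance analysis in the paper is different and more subtle: one shows (through the ``distance-$2$ square'' classification, which your proposal omits entirely) that if the norms are not all equal then, after normalization, $f$ takes values only in $\{0,\pm 1,\pm 3\}$ on $\mathscr{E}_{2n}$, the set $\mathscr{S}_3(f)$ of $\pm 3$ entries has size exactly $2^{2n-4}=\tfrac18|\mathscr{S}(f)|$, and crucially $\mathscr{S}_3(f)$ must be an independent set of $G_{2n}$ (this last fact is what your merging argument should be proving, not a direct contradiction with $\int_{\mathcal{B}}\mathscr{A}$). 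Only then does the bound $\alpha(G_{2n})<2^{2n-4}$ close the case $2n\geqslant 10$. For $2n=8$ you also leave unaddressed how to conclude that $f$ is $g_8$ or $g'_8$ up to normalization and negating/flipping; the paper does this by identifying the $\mathbb{Z}_2$-polynomial $F$ with $f=\chi_S(-1)^F+4\chi_T(-1)^G$ and showing $d(F)\leqslant 2$ via Lemma~\ref{lem-mutilinear-poly}, which requires another nontrivial ``Claim $3$'' argument. These omitted steps constitute the bulk of the paper's proof, so what you have is a correct plan for the periphery but not for the core.
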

\begin{proof}
Since $f$ is irreducible, we may assume that it satisfies {\sc 2nd-Orth}.
Otherwise, we get  \#P-hardness by Lemma~\ref{second-ortho}.
Also, we may assume that $f\in \int_{\mathcal{B}}\mathscr{A}$.
Otherwise, we can realize a signature of arity $2n-2$ that is not in $\mathscr{A}$ by merging $f$ using  some $b\in \mathcal{B}$.

For any four entries $x, y, z, w$ of $f$ on inputs $\alpha, \beta, \gamma, \delta\in \mathbb{Z}_2^{2n}$ written in the form of a 2-by-2 matrix $\left[\begin{smallmatrix} x & y\\ z & w
\end{smallmatrix}
\right]=\left[\begin{smallmatrix} f(\alpha) & f(\beta)\\ f(\gamma) & f(\delta)
\end{smallmatrix}
\right]$,  we say that such a matrix is a \emph{distance-2 square} if there exist four bits $i,j,k, \ell$ such that $\alpha_i\alpha_j=\beta_i\beta_j=\overline{\gamma_i\gamma_j}=\overline{\delta_i\delta_j}$, $\alpha_k\alpha_{\ell}=\gamma_k\gamma_\ell=\overline{\beta_k}\overline{\beta_\ell}=\overline{\delta_k\delta_\ell}$ and  $\alpha$, $\beta$, $\gamma$ and $\delta$ take the same values on other bits. 
An equivalent description is that 
\begin{equation}\label{eqn:distance-2 square}
\delta=\alpha\oplus\beta\oplus\gamma, ~~
{\rm wt}(\alpha\oplus\beta)=2, ~~{\rm wt}(\alpha\oplus\gamma)=2 ~~\mbox{and}~~ {\rm wt}(\alpha\oplus\delta)=4.
\end{equation}
Indeed  (\ref{eqn:distance-2 square})
is clearly satisfied by any distance-2 square.
Conversely, suppose (\ref{eqn:distance-2 square}) holds.
If we flip any bit $i$ in all $\alpha, \beta, \gamma$ and $\delta$, both (\ref{eqn:distance-2 square}) and the
bitwise description are invariant, and thus we may assume
$\alpha = \vec{0}^{2n}$. 
By ${\rm wt}(\alpha\oplus\gamma) =2$,
there exist two bits $i, j$ such that 
$\gamma_i\gamma_j=11$, and $\gamma$ takes $0$ on other bits. 
By
${\rm wt}(\alpha\oplus\beta)=2$, 
there exits two bits $k, \ell$  such that 
$\beta_k\beta_\ell=11$, and $\beta$ takes $0$ on other bits. 
Since $\delta=\alpha\oplus\beta\oplus\gamma$,
${\rm wt}(\beta\oplus\gamma)={\rm wt}(\alpha\oplus\delta)=4.$
Thus,
the bits $i, j, k, \ell$ are distinct four bits.
Then, $\delta_i\delta_j\delta_k\delta_{\ell}=1111$ and $\delta$ takes $0$ on other bits.
Thus, $\alpha$, $\beta$, $\gamma$ and $\delta$ satisfy the bitwise description of distance-2 squares.

We give an example of such a distance-2 square. 
Let 
$$\left[\begin{matrix} x & y\\ z & w
\end{matrix}
\right]=\left[\begin{matrix} f(\alpha) & f(\beta)\\ f(\gamma) & f(\delta)
\end{matrix}
\right]=\left[\begin{matrix}
f(0001\theta) & f(0010\theta)\\
f(1101\theta) & f(1110\theta)\\
\end{matrix}\right]$$
where $\theta\in \mathbb{Z}_2^{2n-4}$ is an arbitrary binary string of length $2n-4$. 
In this example, $(i, j)=(1, 2)$ and $(k, \ell)=(3, 4)$. 
We show next that such a distance-2 square 
$\left[\begin{smallmatrix} x & y\\ z & w
\end{smallmatrix}
\right]$
has the
property described in
(\ref{equ-square1}) 
$\sim$
(\ref{equ-square4}).

By connecting variables $x_1$ and $x_2$ of $f$ using $=^+_2$ and $=_2^{-}$ respectively, 
we get signatures $\partial^+_{12}f$ and $\partial^-_{12}f$. 
By our assumption, $\partial^+_{12}f$ and $\partial^-_{12}f$ are affine signatures. 
Note that, $x+z$ and $y+w$ are entries of $\partial^+_{12}f$ on inputs $01\theta$ and $10\theta \in \mathbb{Z}_2^{2n-2}$.
Since $\partial^+_{12}f\in \mathscr{A}$, if $x+z$ and $y+w$ are both nonzero, 
then they have the same norm.
Thus, we have $(x+z)(y+w)=0$ or $(x+z)^2=(y+w)^2$.
Similarly, $x-z$ and $y-w$ are entries of $\partial^-_{12}f\in \mathscr{A}$.
Thus, we have $(x-z)(y-w)=0$ or $(x-z)^2=(y-w)^2$.

Also, by connecting variables  $x_3$ and $x_4$ of $f$ using $\neq_2$ and $\neq_2^{-}$ respectively, 
we get signatures $\partial^{\widehat+}_{34}f$ and $\partial^{\widehat-}_{34}f$ that are affine signatures.
Note that, $x+y$ and $z+w$ are entries of $\partial^{\widehat+}_{34}f$ on inputs $00\theta$ and $11\theta$.
Since $\partial^{\widehat+}_{34}f\in \mathscr{A}$,
we have  $(x+y)(z+w)=0$ or $(x+y)^2=(z+w)^2$.
Similarly, $x-y$ and $z-w$ are entries of $\widehat\partial^{-}_{34}f$.
Then, we have $(x-y)(z-w)=0$ or $(x-y)^2=(z-w)^2$.

Now, consider an arbitrary distance-2 square $\left[\begin{smallmatrix} x & y\\ z & w
\end{smallmatrix}
\right]=\left[\begin{smallmatrix} f(\alpha) & f(\beta)\\ f(\gamma) & f(\delta)
\end{smallmatrix}
\right]$.
Depending on whether $\alpha_i=\alpha_j$ or $\alpha_i\neq \alpha_j$,
 we can use $=_2^{+}$ and $=_2^-$, or $\neq^{+}_2$ and $\neq^-_2$ respectively,  to connect variables $x_i$ and $x_j$ of $f$
 to produce two signatures $\partial^+_{ij}f$  and  $\partial^-_{ij}f$, or  $\partial^{\widehat +}_{ij}f$  and  $\partial^{\widehat -}_{ij}f$  in either case, such that $x\pm z$ and $y\pm w$ are both entries of the resulting two signatures. 
Since the two resulting signatures are in affine,  we have 
\begin{equation}\label{equ-square1}
(x+z)(y+w)=0 ~~\text{ or }~~ (x+z)^2=(y+w)^2,
\end{equation} and 
\begin{equation}\label{equ-square2}
(x-z)(y-w)=0 ~~\text{ or }~~ (x-z)^2=(y-w)^2.
\end{equation}
Similarly, by connecting variables $x_k$ and $x_\ell$ of $f$ using either $=_2^{\pm}$ or $\neq^{\pm}_2$, we have  \begin{equation}\label{equ-square3}
    (x+y)(z+w)=0 ~~\text{ or }~~ (x+y)^2=(z+w)^2
    \end{equation}
    and 
    \begin{equation}\label{equ-square4}
 (x-y)(z-w)=0 ~~\text{ or }~~ (x-y)^2=(z-w)^2.
    \end{equation}
    
Now, we show that by solving equations (\ref{equ-square1}) $\sim$ (\ref{equ-square4}), 
every distance-2 square has one of the following forms (after normalization, row or column permutation, multiplying a $-1$ scalar of one row or one column, and taking transpose)
$$\underbrace{\left[\begin{matrix} 0 & 0\\ 0 & 0
\end{matrix}
\right], \left[\begin{matrix} 1 & 0\\ 0 & 0
\end{matrix}
\right],
\left[\begin{matrix} 1 & 1\\ 0 & 0
\end{matrix}
\right],
\left[\begin{matrix} 1 & 0\\ 0 & 1
\end{matrix}
\right],
\left[\begin{matrix} 1 & 1\\ 1 & 1
\end{matrix}
\right],
\left[\begin{matrix} 1 & 1\\ 1 & -1
\end{matrix}
\right],}_{\text{ type \Rmnum{1}}}
\text{ ~~~ }
\underbrace{
\left[\begin{matrix} 1 & a\\ a & 1
\end{matrix}
\right] (a>1),}_{\text{ type \Rmnum{2}}}
\text{ ~~or~~ }
\underbrace{\left[\begin{matrix} 1 & 1\\ 3 & -1
\end{matrix}
\right]}_{\text{ type \Rmnum{3}}}.
$$
We say that the first six forms are type \Rmnum{1}, and the other two are type \Rmnum{2} and type \Rmnum{3} respectively. 
These forms listed above are canonical forms of each type.

Let $\left[\begin{smallmatrix}
x & y\\
z & w\\
\end{smallmatrix}\right]$ be a distance-2 square.
Consider $$p=(x+y)(z+w)(x+z)(y+w)(x-y)(z-w)(x-z)(y-w).$$
\begin{itemize}
    \item If $p=0$, then among its  eight factors (four sums and four differences), at least one factor is zero. 
    By taking transpose and row permutation, we may assume that $x+y=0$ or $x-y=0$.
    If $x+y=0$, then by multiplying the column $\left[\begin{smallmatrix}
 y\\
 w\\
\end{smallmatrix}\right]$ with $-1$, we can modify this distance-2 square to get $x-y=0$.
Thus, we may assume that $x-y=0$.
If $x=y=0$, then by (\ref{equ-square1}),
we have $z=0$ or $w=0$, or $z=\pm w$.
Thus, after normalizing operations of row and column permutation and multiplication by $-1$, we reach the following
canonical forms $\left[\begin{smallmatrix}
0 & 0\\
0 & 0\\
\end{smallmatrix}\right]$,
$\left[\begin{smallmatrix}
1 & 0\\
0 & 0\\
\end{smallmatrix}\right]$ or $\left[\begin{smallmatrix}
1 & 1\\
0 & 0\\
\end{smallmatrix}\right].$
Otherwise, $x=y\neq 0$. 
    Consider $q=(x+z)(y+w)(x-z)(y-w)$.
\begin{itemize}

\item   If $q=0$, then among its four factors (two sums and two differences), at least one is zero.
By column permutation on the matrix $\left[\begin{smallmatrix}
x & y\\
z & w\\
\end{smallmatrix}\right]$  and multiplying the row $ (z, w)$ with $-1$ (which does not change the values of $x$ and $y$), we may assume that $x-z=0$.
    Thus, $x=y=z\neq 0$.
    We normalize their values to $1$.
    Then by (\ref{equ-square1}), $1+w=0$ or $1+w=
    \pm 2$.
    Thus, $w=-1, 1$ or $-3$.
    If $w=\pm 1$, then $\left[\begin{smallmatrix}
x & y\\
z & w\\
\end{smallmatrix}\right]$ has the canonical form $\left[\begin{smallmatrix}
1 & 1\\
1 & 1\\
\end{smallmatrix}\right]$ or $\left[\begin{smallmatrix}
1 & 1\\
1 & -1\\
\end{smallmatrix}\right].$
If $w=-3$, then  $\left[\begin{smallmatrix}
x & y\\
z & w\\
\end{smallmatrix}\right]= \left[\begin{smallmatrix}
1 & 1\\
1 & -3\\
\end{smallmatrix}\right]$ which has the  canonical form 
$\left[\begin{smallmatrix}
1 & 1\\
3 & -1\\
\end{smallmatrix}\right]$ (Type \Rmnum{3}).
    \item  If $q\neq 0$,
    then $(x+z)(y+w)\neq 0$ and $(x-z)(y-w)\neq 0$.
 By equations (\ref{equ-square1}) and (\ref{equ-square2}), $(x+z)^2=(y+w)^2$ and $(x-z)^2=(y-w)^2.$
 Thus, $xz=yw$.
 Since $x=y\neq 0$, $z=w$.
 If $z=w=0$, then this gives the canonical form  $\left[\begin{smallmatrix}
1 & 1\\
0 & 0\\
\end{smallmatrix}\right]$.
Otherwise, $z=w\neq 0$. Then $z+w\neq 0$ and hence  by (\ref{equ-square3}), $z+w=\pm (x+y)$.
Since $z=w$ and $x=y$,
we get $z=\pm x$. 
Thus, $x+z=0$ or $x-z=0$. Contradiction.
\end{itemize}
\item If $p\neq 0$, then all its eight factors are nonzero. 
Thus by (\ref{equ-square1}) $\sim$ (\ref{equ-square4}), $(x+z)^2=(y+w)^2$, $(x-z)^2=(y-w)^2$,  $(x+y)^2=(z+w)^2$ and $(x-y)^2=(z-w)^2.$
By solving these equations, we have $x^2=w^2$, $y^2=z^2$, and $xy=zw$. 
If $x=y=z=w=0$, then it gives the canonical form $\left[\begin{smallmatrix}
0 & 0\\
0 & 0\\
\end{smallmatrix}\right]$.
Otherwise, by permuting rows and columns, we may assume that $x\neq 0$ and $|x|$ is the smallest among the norms of nonzero entries in  $\left[\begin{smallmatrix}
x & y\\
z & w\\
\end{smallmatrix}\right]$. 
We normalize $x$ to $1$.
Since $x^2=w^2$, we get $w=\pm 1$.
By multiplying the row $(z, w)$ with $-1$ (which does not change $xy=zw$), we may assume that $w=1$.
Then, $xy=zw$ implies that $y=z$.
If $y=z=0$, then $\left[\begin{smallmatrix}
x & y\\
z & w\\
\end{smallmatrix}\right]$ has the canonical form  $\left[\begin{smallmatrix}
1 & 0\\
0 & 1\\
\end{smallmatrix}\right]$.
Otherwise, since $|x|=1$ is the smallest norm among nonzero entries, $y=z=\pm a$ where $a\geqslant 1.$
If $a=1$ (i.e., $y=z=\pm 1$), then  $\left[\begin{smallmatrix}
x & y\\
z & w\\
\end{smallmatrix}\right]$ has the canonical form  $\left[\begin{smallmatrix}
1 & 1\\
1 & 1\\
\end{smallmatrix}\right]$.
If $a>1$, then  $\left[\begin{smallmatrix}
x & y\\
z & w\\
\end{smallmatrix}\right]$ has the canonical form of Type \Rmnum{2}.

\end{itemize}
Thus, every distance-2 square has a canonical form of Type \Rmnum{1}, \Rmnum{2} or \Rmnum{3}.

Note that given a particular distance-2 square of $f$, by normalization, and renaming or flipping or negating variables of $f$,
we can always modify this distance-2 square to get its canonical form.
Clearly, for signatures of arity at least $4$, distance-2 squares exist. 
We consider the following two cases according to which types of distance-2 squares appear in $f$.

\vspace{2ex}
{\noindent \bf Case 1.} All distance-2 squares in $f$ are of type \Rmnum{1}. 
\vspace{1.5ex}

We show that (after normalization) $f(\alpha)=\pm 1$  for all $\alpha \in \mathscr{S}(f)$.
Since $f\not\equiv0$, it has at least one nonzero entry. 
By normalization, we may assume that $1$ is the smallest norm of all nonzero entries of $f$. 
Then by flipping variables of $f$, 
we may assume that $f(\vec{0}^{2n})=1$.
For a contradiction, suppose that there is some $\beta\in \mathscr{S}(f)$ such that $f(\beta)\neq \pm 1$. 
Then by our assumption that $1$ is the smallest norm and $|f(\beta)|\neq 1$, we have $|f(\beta)|>1$.
Also, since $f$ has parity and $\vec{0}^{2n}\in \mathscr{S}(f)$, $f$ has even parity.
Thus,
${\rm wt}(\beta)\equiv 0 \pmod 2.$
By renaming  variables of $f$, 
we may assume that $\beta=\vec{1}^{2m}\vec{0}^{2n-2m},$
for some $m \geqslant 1$.
(This does not affect the normalization $f(\vec{0}^{2n})=1$). 
Then, we show that for all $\alpha=\delta\vec{0}^{2n-2m}$ where $\delta\in \mathbb{Z}_2^{2m}$, $f(\alpha)= \pm 1$.
We prove this by induction on ${\rm wt}(\delta)$. 
This will 
lead to a contradiction when ${\rm wt}(\delta)=2m$, since $|f(\beta)|=|f(\vec{1}^{2m}\vec{0}^{2n-2m})|\neq 1$.

Since  
$f(\vec{0}^{2n})=1$,
we may assume ${\rm wt}(\delta) \geqslant 2$.
We first consider the base case that ${\rm wt}(\delta)=2$.
By renaming the first $2m$  variables, without loss of generality, we may assume that $\delta=11\vec{0}^{2m-2}$ and then 
$\alpha = 11\vec{0}^{2n-2}=1100\vec{0}^{2n-4}.$ This renaming will not change $\beta$.
Consider the following distance-2 square $$\left[\begin{matrix} x & y\\ z & w\end{matrix}\right]=\left[\begin{matrix} f(0000\vec{0}^{2n-4}) & f(1100\vec{0}^{2n-4})\\ f(0011\vec{0}^{2n-4}) & f(1111\vec{0}^{2n-4})\end{matrix}\right].$$
Recall our assumption that every distance-2 square is of type \Rmnum{1}. 
Here $x = f(\vec{0}^{2n})$, and $y = f(\alpha)$.
Since $x=1$, $\left[\begin{smallmatrix} x & y\\z & w
\end{smallmatrix}\right]$ being of type \Rmnum{1} implies that  $y=0$ or $\pm 1$
(the normalization steps include possibly multiplying a row or a column by $-1$). 
We want to show that $|y|=1$; for a contradiction, suppose that $y=0$.
We consider the following two extra entries of $f$,
where $\overline{\delta}=00\vec{1}^{2m-2}.$
$$x'=f(\overline{\delta}\vec{0}^{2n-2m})=f(00\vec{1}^{2m-2}\vec{0}^{2n-2m}) ~~\text{ and }~~ y'=f(\beta)=f(11\vec{1}^{2m-2}\vec{0}^{2n-2m}).
$$
By connecting variables $x_1$ and $x_2$ of $f$ using $=_2$ and $=_2^-$, we get signatures $\partial_{12}f$ and $\partial^{-}_{12}f$ respectively. 
Note that both $x+y$ and $x'+y'$ are entries of $\partial_{12}f$.
 Since $\partial_{12}f\in \mathscr{A}$, 
 we have $(x+y)(x'+y')=0$ or $(x+y)^2=(x'+y')^2$.
 We can also consider $\partial^{-}_{12}f$
 and get
 $(x-y)(x'-y')=0$ or $(x-y)^2=(x'-y')^2$.
 Since $x=1$ and $y=0$, we have 
 \[\Big[x'+y'=0 ~~\mbox{or}~~ (x'+y')^2= 1\Big] ~~~~\mbox{and}~~~~ \Big[x'-y'=0 ~~\mbox{or}~~ (x'-y')^2= 1\Big].\]
 Recall that $|y'|=|f(\beta)|>1$.
Clearly $x'+y'=0$ and  $x'-y'=0$ cannot be both true,
otherwise $y'=0$.
Suppose one of them is true, then $x'=\pm y'$.
And at least one of
$(x'+y')^2= 1$
or
$(x'-y')^2= 1$ holds.
So either $|x'+y'|=1$ or $|x'-y'|=1$. Substituting
$x'=\pm y'$ we reach a contradiction to
$|y'|>1$.
So 
neither $x'+y'=0$ nor  $x'-y'=0$ holds.
Then
$(x'+y')^2= 1$
and
$(x'-y')^2= 1$.
Subtracting them,  we get $x'y'=0$, and since $y' \ne 0$, we get $x'=0$. But then
this contradicts $|y'|>1$ and $(x'+y')^2= 1$.
Therefore, $y\neq 0$. Then, $y=\pm 1$.
Thus, $y=f(\delta\vec{0}^{2n-2m})=\pm 1$ for all $\delta$ with ${\rm wt}(\delta)=2.$

If $2m=2$, then the induction is finished. Otherwise, $2m>2$.
Inductively 
for some $2k\geqslant 2$, we assume  that
$f(\theta\vec{0}^{2n-2m})=\pm 1$ for all $\theta\in \mathbb{Z}_2^{2m}$ with ${\rm wt}(\theta)\leqslant 2k < 2m$.
Let  $\delta$ be such that  ${\rm wt}(\delta)=2k+2\leqslant 2m$ and we show that $f(\delta\vec{0}^{2n-2m})=\pm 1$.
Since ${\rm wt}(\delta)=2k+2\geqslant 4$, we can find four bits of $\delta$ such that the values of $\delta$ are $1$ on these four bits.
Without loss of generality, we assume that they are the first four bits, i.e. $\delta=1111\delta'$ where $\delta'\in \mathbb{Z}_2^{2m-4}.$
Consider the following distance-2 square 
$$\left[\begin{matrix} x & y\\ z & w\end{matrix}\right]=
\left[\begin{matrix} f(0000\delta'\vec{0}^{2n-2m}) & f(0011\delta'\vec{0}^{2n-2m})\\ f(1100\delta'\vec{0}^{2n-2m}) & f(1111\delta'\vec{0}^{2n-2m})\end{matrix}\right].$$
Clearly, three entries in this distance-2 square have
input strings of weight at most $2k$,
namely ${\rm wt}(0000\delta'\vec{0}^{2n-2m})=2k-2$, and ${\rm wt}(0011\delta'\vec{0}^{2n-2m})={\rm wt}(1100\delta'\vec{0}^{2n-2m})=2k$.
By our induction hypothesis, $x, y, z\in \{1, -1\}$.
Then, since the distance-2 square $\left[\begin{smallmatrix} x & y\\ z & w\end{smallmatrix}\right]$ is of type \Rmnum{1}, we have $w=f(\delta\vec{0}^{2n-2m})=\pm 1$.
The induction is complete.
This finishes the proof of
{Case 1}.

\vspace{2ex}
\noindent{\bf Case 2.} 
There is a type \Rmnum{2} or type \Rmnum{3} distance-2 square in $f$.
\vspace{1.5ex}

This is the case where signatures $g_8$ and $g'_8$ appear. We handle this case in two steps.

\vspace{1.5ex}
{\bf Step 1.} We show that after flipping variables of $f$, $\mathscr{S}(f)=\mathscr{E}_{2n}$, and after normalization $f(\alpha)=\pm 1$ or $\pm 3$ for all $\alpha\in \mathscr{S}(f)$.
 Let $\mathscr{S}_3(f)=\{\alpha\in\mathscr{S}(f)\mid f(\alpha)=\pm 3\}$.
We also show that $|\mathscr{S}_3(f)|=2^{2n-4}=\frac{1}{8}|\mathscr{S}(f)|$, and for any distinct $\alpha, \beta \in \mathscr{S}_3(f)$, ${\rm wt}(\alpha\oplus\beta)\geqslant 4$.
\vspace{1ex}



We first consider the case that there is a Type \Rmnum{2} distance-2 square in $f$.
We show that 
the only possible Type \Rmnum{2} distance-2 square in $f$ has the canonical form  $\left[\begin{smallmatrix} 1 & 3\\ 3 & 1
\end{smallmatrix}
\right]$. 
Suppose that a distance-2 square of Type \Rmnum{2} appears in $f$. 
By flipping and negating variables, we modify $f$ such that this distance-2 square is in its canonical form 
$\left[\begin{smallmatrix} 1 & a\\ a & 1
\end{smallmatrix}
\right] (a>1).$
Also, by flipping variables and renaming variables, we may assume that this distance-2 square appears on inputs $\alpha$, $\beta$, $\gamma$ and $\delta$ where
$$\left[\begin{matrix}
f(\alpha) & f(\beta)\\
f(\gamma) & f(\delta)\\
\end{matrix}\right]=
\left[\begin{matrix}
f(0000\vec{0}^{2n-4}) & f(0011\vec{0}^{2n-4})\\
f(1100\vec{0}^{2n-4}) & f(1111\vec{0}^{2n-4})\\
\end{matrix}
\right]=\left[\begin{matrix} 1 & a\\ a & 1
\end{matrix}\right].$$
Then, we  consider the entries of $\widetilde f$ on inputs $\alpha$, $\beta$, $\gamma$ and $\delta$. We have 
$$
\left[\begin{matrix}
\widetilde f(\alpha) & \widetilde f(\beta)\\
\widetilde f(\gamma) & \widetilde f(\delta)\\
\end{matrix}\right]=
\left[\begin{matrix}
f(\alpha)+f(\gamma) & f(\beta)+f(\delta)\\
f(\alpha)-f(\gamma) & f(\beta)-f(\delta)\\
\end{matrix}\right]=
\left[\begin{matrix}
1+a & 1+a\\
1-a & a-1\\
\end{matrix}
\right].$$
Since $a>1$, clearly $1+a\neq 0$, $1-a\neq 0$ and $|1+a|>|1-a|$.
Since $f$ has parity and $f(\vec{0}^{2n})=1$, $f$ has even parity.
By Lemma~\ref{lem-norm-f-tilde}(2), $\mathscr{S}(\widetilde f^{11}_{12})=\mathscr{E}_{2n-2}$ and $|1+a|=2|1-a|$.
Since $a>1$, we have 
$1+a=2(a-1)$. Then, $a=3$.
Thus, the only possible Type \Rmnum{2} distance-2 square in $f$ has the canonical form  $\left[\begin{smallmatrix} 1 & 3\\ 3 & 1
\end{smallmatrix}
\right]$. 

Under the assumption that a Type \Rmnum{2} distance-2 square appears in $f$ and
$\left[\begin{smallmatrix}
f(\alpha) & f(\beta)\\
f(\gamma) & f(\delta)\\
\end{smallmatrix}\right]
=\left[\begin{smallmatrix} 1 & 3\\ 3 & 1
\end{smallmatrix}\right]$,  we have
$\left[\begin{smallmatrix}
\widetilde f(\alpha) & \widetilde f(\beta)\\
\widetilde f(\gamma) & \widetilde f(\delta)\\
\end{smallmatrix}\right]=
\left[\begin{smallmatrix}
4 & 4\\
-2 & 2\\
\end{smallmatrix}
\right].$
As showed above, by Lemma~\ref{lem-norm-f-tilde}(2), $\mathscr{S}(\widetilde f^{11}_{12})=\mathscr{E}_{2n-2}$
and $n_{01}, n_{10}=2$ or $4$.
 We first prove 
 
 \begin{quote}
     {\bf Claim 1.} \emph{$\mathscr{S}(f_{12}^{00})=\mathscr{S}( f_{12}^{11})=\mathscr{E}_{2n-2}$, $f_{12}^{00}(\theta), f_{12}^{11}(\theta)=\pm 3$ or $\pm 1$ for all $\theta\in \mathscr{E}_{2n-2}$, and $|\mathscr{S}_3(f_{12}^{00})|+|\mathscr{S}_3(f_{12}^{11})|=2^{2n-5}.$}
 \end{quote}

Remember that  $\widetilde f^{00}_{12}, \widetilde f^{11}_{12} \in \mathscr{A}$. For any of them, its nonzero entries have the same norm.   
Since $\widetilde f(\alpha)=\widetilde f(00\vec{0}^{2n-2})=1+3=4$ and $\mathscr{S}(\widetilde f^{00}_{12})\subseteq\mathscr{E}_{2n-2}$,
for every $\theta\in \mathscr{E}_{2n-2}$, 
$\widetilde f(00\theta)=\pm 4$ or $0$.
Also, since
$\widetilde f(\gamma)=\widetilde f(11\vec{0}^{2n-2})=1-3=-2$, and $\mathscr{S}(\widetilde f^{11}_{12})=\mathscr{E}_{2n-2}$,
for every $\theta\in \mathscr{E}_{2n-2}$, 
$\widetilde f(11\theta)=\pm 2$.
Then, $$f(00\theta)=\frac{\widetilde f(00\theta)+\widetilde f(11\theta)}{2}=\frac{(\pm 4)+ (\pm 2)}{2} ~~\text{ or }~~ \frac{0+ (\pm 2)}{2}.$$
Thus, $f(00\theta)=\pm 3$ or $\pm 1$ for every $\theta\in \mathscr{E}_{2n-2}$.
Also, $$f(11\theta)=\frac{\widetilde f(00\theta)-\widetilde f(11\theta)}{2}=\frac{(\pm 4)- (\pm 2)}{2} ~\text{ or }~ \frac{0- (\pm 2)}{2}.$$
Thus, $f(11\theta)=\pm 3$ or $\pm 1$ for every $\theta\in \mathscr{E}_{2n-2}$.
Additionally note that, for any $\theta\in \mathscr{E}_{2n-2}$ if $\widetilde f(00\theta)=\pm 4$, 
then of the two values $f(00\theta)$ and $f(11\theta)$, 
exactly one is $\pm 3$ and the other one is $\pm 1$;
if $\widetilde f(00\theta)=0$, then $f(00\theta)=\pm 1$ and $f(11\theta)=\pm 1$.
Since $$|\widetilde f_{12}^{00}|^2=4^2\cdot|\mathscr{S}(\widetilde f_{12}^{00})|=|\widetilde f_{12}^{11}|^2=2^2\cdot|\mathscr{S}(\widetilde f_{12}^{11})|=2^2\cdot|\mathscr{E}_{2n-2}|,$$
 we have $|\mathscr{S}(\widetilde f_{12}^{00})|=|\mathscr{E}_{2n-2}|/4=2^{2n-5}$.
Thus, there are exactly $2^{2n-5}$ entries of $\widetilde f_{12}^{00}$ having value $\pm 4$, which give arise to exactly $2^{2n-5}$ many entries of value $\pm 3$
 among all entries of $f^{00}_{12}$ and $f^{11}_{12}$.
Claim 1 has been proved.

Next, we prove 

\begin{quote}
     {\bf Claim 2.} \emph{$\mathscr{S}(f_{12}^{01})=\mathscr{S}(f_{12}^{10})=\mathscr{O}_{2n-2}$, $f_{12}^{01}(\theta), f_{12}^{10}(\theta)=\pm 3$ or $\pm 1$ for all $\theta\in \mathscr{O}_{2n-2}$, and $|\mathscr{S}_3(f_{12}^{01})|+|\mathscr{S}_3(f_{12}^{10})|=2^{2n-5}.$}
     \end{quote}
     
We have $\widetilde f(\vec{0}^{2n})=  4$.
We have $n_{00}=4$ and $n_{11} = 2$.
Also recall that we have showed that   $n_{01}, n_{10}=2$ or $4$,
by Lemma~\ref{lem-norm-f-tilde}(2). There are three cases.
\begin{itemize}
    \item $n_{01}=n_{10}=2$. 
    Since $n_{11}=n_{01}=2$  and $$|\widetilde f_{12}^{11}|^2=n^2_{11}\cdot|\mathscr{S}(\widetilde f_{12}^{11})|=n^2_{01}\cdot|\mathscr{S}(\widetilde f_{12}^{01})|=|\widetilde f_{12}^{01}|^2,$$
    we have
    $$|\mathscr{S}(\widetilde f_{12}^{01})|=|\mathscr{S}(\widetilde f_{12}^{11})|=|\mathscr{E}_{2n-2}|=2^{2n-3}.$$
    Since $\widetilde f$ has even parity, $\mathscr{S}(\widetilde f_{12}^{01})\subseteq \mathscr{O}_{2n-2}$. As $|\mathscr{O}_{2n-2}|=2^{2n-3}$,
    we get $\mathscr{S}(\widetilde f_{12}^{01}) = \mathscr{O}_{2n-2}$.
   
    Similarly, we can show that $\mathscr{S}(\widetilde f_{12}^{10})= \mathscr{O}_{2n-2}$.
    Let $\zeta=0110\vec{0}^{2n-4}$ and $\eta=1010\vec{0}^{2n-4}$.
    Then, $\widetilde f(\zeta)=\pm 2$ and $\widetilde f(\eta)=\pm 2.$
    Note that $$f(\zeta)=\frac{\widetilde f(\zeta)+\widetilde f(\eta)}{2} ~\text{~ and ~}~ f(\eta)=\frac{\widetilde f(\zeta)-\widetilde f(\eta)}{2}.$$
    If $\widetilde f(\zeta)=\widetilde f(\eta)$, then $f(\zeta)=\pm 2$ and $f(\eta)=0$.
    If $\widetilde f(\zeta)=-\widetilde f(\eta)$, then $f(\zeta)=0$ and $f(\eta)=\pm 2$.
    We first consider the case that $f(\zeta)=\pm 2$.
    Let $\xi=1001\vec{0}^{2n-4}$.
    Consider the following distance-2 square. 
    $$\left[\begin{matrix}
f(\alpha) & f(\zeta)\\
f(\xi) & f(\delta)\\
\end{matrix}\right]=
\left[\begin{matrix}
f(0000\vec{0}^{2n-4}) & f(0110\vec{0}^{2n-4})\\
f(1001\vec{0}^{2n-4}) & f(1111\vec{0}^{2n-4})\\
\end{matrix}
\right]=\left[\begin{matrix} 1 & \pm 2\\ \ast & 1
\end{matrix}\right].$$
Clearly, it is not of type \Rmnum{1} nor type \Rmnum{3}.
Also, it is not of type \Rmnum{2} with the  canonical form  $\left[\begin{smallmatrix} 1 & 3\\ 3 & 1
\end{smallmatrix}
\right]$.  Contradiction.
If $f(\eta)=\pm 2$, then similarly by considering the distance-2 square $\left[\begin{smallmatrix}
f(\alpha) & f(\eta)\\
f(\tau) & f(\delta)\\
\end{smallmatrix}\right]$ where $\tau=0101\vec{0}^{2n-4}$, we get a contradiction.

\item $n_{01}=n_{10}=4$.
We still consider $$f(\zeta)=\frac{\widetilde f(\zeta)+\widetilde f(\eta)}{2} ~~\text{ and }~~ f(\eta)=\frac{\widetilde f(\zeta)-\widetilde f(\eta)}{2}, ~~\text{ where } \zeta=0110\vec{0}^{2n-4} ~\text{ and }~ \eta=1010\vec{0}^{2n-4}.$$
    Then, as
    $\zeta$ has leading bits 01
    and $\eta$ has leading bits 10,
    $$f(\zeta)=\frac{(\pm 4)+(\pm 4)}{2}, \frac{(\pm 4) +0}{2} \text{ or } \frac{0+0}{2} ~~\text{ and }~~ f(\eta)=\frac{(\pm 4)-(\pm 4)}{2}, \pm \frac{(\pm 4)-0}{2}  \text{ or } \frac{0-0}{2}.$$
    Thus, $f(\zeta), f(\eta)=\pm 4, \pm 2$ or $0$.
    If $f(\zeta)$ or $f(\eta) =\pm 4, \pm 2$ , then by considering the distance-2 square
    $\left[\begin{smallmatrix}
f(\alpha) & f(\zeta)\\
f(\xi) & f(\delta)\\
\end{smallmatrix}\right]$
 or $\left[\begin{smallmatrix}
f(\alpha) & f(\eta)\\
f(\tau) & f(\delta)\\
\end{smallmatrix}\right]$, we still get a contradiction.
    Thus we have $f(\zeta)=f(\eta)=0$.
    Then, consider the signature $^{H_4}_{23}f$, denoted by $\widetilde{f'}$.
    Since $f$ has even parity, $f$ satisfies {\sc 2nd-Orth} and $f\in \int_{\mathcal{B}}\mathscr{A}$,
    $\widetilde{f'}$ has even parity, $\widetilde{f'}_{23}^{00}, \widetilde{f'}_{23}^{01}, \widetilde{f'}_{23}^{10}, \widetilde{f'}_{23}^{11}\in \mathscr{A}$.
    Let $n'_{00}, n'_{01}, n'_{10}$ and $n'_{11}$ denote the norms of nonzero entries in $\widetilde{f'}_{23}^{00}, \widetilde{f'}_{23}^{01}, \widetilde{f'}_{23}^{10},$ and $\widetilde{f'}_{23}^{11}$ respectively.
    Notice that $$\widetilde{f'}(\alpha)=\widetilde{f'}(\vec{0}^{2n})=f(0000\vec{0}^{2n-4})+f(0110\vec{0}^{2n-4})=f(\alpha)+f(\zeta)=1+0=1.$$
    Thus, $n'_{00}=1$.
    Also, notice that $$\widetilde{f'}(\gamma)=\widetilde{f'}(1100\vec{0}^{2n-4})=f(1010\vec{0}^{2n-4})-f(1100\vec{0}^{2n-4})=f(\eta)-f(\gamma)=0-3=-3.$$
    Thus, $n'_{10}=3$.
    But by Lemma~\ref{lem-norm-f-tilde}(1), $n'_{00}=\sqrt{2}^k n'_{10}$ for some $k\in \mathbb{Z}$.
    However, clearly, $3\neq \sqrt{2}^k$ for any $k\in \mathbb{Z}$.
    Contradiction.
    \item Therefore exactly one of $n_{01}$ and $n_{10}$ is $2$ and the other is $4$.
    Let $(a, b)=(0, 1)$ or $(1, 0)$ be such that $n_{ab}=2$.
Since $n_{11}=2$ and $|\mathscr{S}(\widetilde{f}^{11}_{12})|=|\mathscr{E}_{2n-2}|$=$2^{2n-3}$, we have $|\mathscr{S}(\widetilde{f}^{ab}_{12})|=2^{2n-3}$.
Since $\widetilde{f}$ has even parity, $\widetilde{f}^{ab}_{12}$ has odd parity, thus $\mathscr{S}(\widetilde{f}^{ab}_{12})=\mathscr{O}_{2n-2}$.
Then, similar to the proof of $f_{12}^{00}$ and $f_{12}^{11}$, we can show that
for every $\theta\in \mathscr{O}_{2n-2}$, 
$f_{12}^{01}(\theta), f_{12}^{10}(\theta)=\pm 3$ or $\pm 1$. 
Also, among $f^{01}_{12}$ and $f^{10}_{12}$, exactly $2^{2n-5}$ many entries are $\pm 3$.
\end{itemize}
This completes the proof of Claim 2.

Thus, combining Claim 1 and Claim 2,
$\mathscr{S}(f)=\mathscr{E}_{2n}$, 
$f(\alpha)=\pm 1$ or $\pm 3$ for all $\alpha\in \mathscr{S}(f)$, and  $|\mathscr{S}_3(f)|=2^{2n-4}=\frac{1}{8}|\mathscr{S}(f)|$.
Also remember that by our assumption, $f(\vec{0}^{2n})=1$.

Now, we show that 
for any distinct $\alpha, \beta \in \mathscr{S}_3(f)$, ${\rm wt}(\alpha\oplus\beta)\geqslant 4$.
For a contradiction, suppose that $\alpha, \beta \in \mathscr{S}_3(f)$ and ${\rm wt}(\alpha\oplus\beta)=2$, and they differ at bits $i$ and $j$.
By renaming variables, without loss of generality, we may assume that $\{i, j\}=\{1, 2\}$.
This renaming does not change the value of $f(\vec{0}^{2n})=1$.
Since $f(11\vec{0}^{2n-2})=\pm 1$ or $\pm 3$,
of the values $f(00\vec{0}^{2n-2})+f(11\vec{0}^{2n-2})$  and
$f(00\vec{0}^{2n-2})-f(11\vec{0}^{2n-2})$, which are respectively 
an entry of $\widetilde f_{12}^{00}$ and an entry of
 $\widetilde f_{12}^{11}$,
at least one has norm $2$. 
Thus, among $n_{00}$ and $n_{11}$, at least one is $2$.
Since $f(\alpha)=\pm 3$ and $f(\beta) =\pm 3$, 
among $f(\alpha)+f(\beta)$ and $f(\alpha)-f(\beta)$, exactly one has norm $6$ and the other has norm $0$.
Clearly, $f(\alpha)+f(\beta)$ and $f(\alpha)-f(\beta)$ are entries of $\widetilde f$ since $\alpha$ and $\beta$ differ at bits $1$ and $2$.
Thus, among $n_{00}$, $n_{01}$, $n_{10}$ and $n_{11}$, one has norm $6$.
By Lemma~\ref{lem-norm-f-tilde}(1), $2=\sqrt{2}^k \cdot 6$ for some $k\in \mathbb{N}$. Contradiction.
This proves that
for any distinct $\alpha, \beta \in \mathscr{S}_3(f)$, ${\rm wt}(\alpha\oplus\beta)\geqslant 4$.

We have established the goal laid out in {Step 1} of {Case 2} under the assumption  
 that there is a Type II distance-2 square in
 $f$.
 

Finally, within {Step 1} of {Case 2}, we consider the case that a type \Rmnum{3} distance-2 square appears in $f$.
By flipping and negating variables, we modify $f$ such that this distance-2 square is in its canonical form 
$\left[\begin{smallmatrix} 1 & 3\\ 1 & -1
\end{smallmatrix}
\right].$
Also, by flipping variables and renaming variables, still we may assume that this distance-2 square appears on inputs $\alpha$, $\beta$, $\gamma$ and $\delta$ where
$$\left[\begin{matrix}
f(\alpha) & f(\beta)\\
f(\gamma) & f(\delta)\\
\end{matrix}\right]=
\left[\begin{matrix}
f(0000\vec{0}^{2n-4}) & f(0011\vec{0}^{2n-4})\\
f(1100\vec{0}^{2n-4}) & f(1111\vec{0}^{2n-4})\\
\end{matrix}
\right]=\left[\begin{matrix} 1 & 1\\ 3 & -1
\end{matrix}\right].$$
Then, we  consider the entries of $\widetilde f$ on inputs $\alpha$, $\beta$, $\gamma$ and $\delta$. We have 
$$
\left[\begin{matrix}
\widetilde f(\alpha) & \widetilde f(\beta)\\
\widetilde f(\gamma) & \widetilde f(\delta)\\
\end{matrix}\right]=
\left[\begin{matrix}
f(\alpha)+f(\gamma) & f(\beta)+f(\delta)\\
f(\alpha)-f(\gamma) & f(\beta)-f(\delta)\\
\end{matrix}\right]=
\left[\begin{matrix}
4 & 0\\
-2 & 2\\
\end{matrix}
\right].$$
Then exactly in the same way as the above proof when $\left[\begin{smallmatrix}
\widetilde f(\alpha) & \widetilde f(\beta)\\
\widetilde f(\gamma) & \widetilde f(\delta)\\
\end{smallmatrix}\right]=
\left[\begin{smallmatrix}
4 & 4\\
-2 & 2\\
\end{smallmatrix}
\right]$, we can show that the same result holds.
Thus, $\mathscr{S}(f)=\mathscr{E}_{2n}$, $f(\alpha)=\pm 1$ or $\pm 3$ for all $\alpha\in \mathscr{S}(f)$,  $|\mathscr{S}_3(f)|=2^{2n-4}=\frac{1}{8}|\mathscr{S}(f)|$, and for any distinct $\alpha, \beta \in \mathscr{S}(f)$ with ${\rm wt}(\alpha\oplus\beta)=2$, $\alpha$ and $\beta$ cannot be both in $\mathscr{S}_3(f)$.

This finishes the proof of {Step 1} of {Case 2}.

\vspace{1.5ex}
{\bf Step 2.}
Now we show that either $g_8$ or $g'_8$ is realizable from $f$. 
We will show that they are both irreducible and do not satisfy {\sc 2nd-Orth}, which gives  \#P-hardness.
 \vspace{1ex}

We define a graph $G_{2n}$ with vertex set
$\mathscr{E}_{2n}$, and  there is an edge between $\alpha$ and $\beta$ if ${\rm wt}(\alpha\oplus\beta)=2$. I.e.,
we view every $\alpha \in \mathscr{E}_{2n}$ as a vertex, and  the edges are
distance 2 neighbors in Hamming distance.
Then, $\mathscr{S}_3(f)$ is an independent set of $G_{2n}$.
Remember that $2n \geqslant 8$ by the hypothesis of the lemma.
If $2n\geqslant 10$, 
then by Lemma~\ref{lem-indenpent-set},  $|\mathscr{S}_3(f)|<\frac{1}{8}|\mathscr{S}(f)|$.
Contradiction.
 Thus, $2n=8$. 
After renaming and flipping variables, we may assume that $\mathscr{S}_3(f)=I_8=\mathscr{S}(f_8)$.
For brevity of notation,
let $S=\mathscr{E}_{8}$ and $T=\mathscr{S}(f_8)$.
We can pick $(x_1, \ldots, x_7)$ as a set of free variables of $S=\mathscr{E}_{8}$.
Then, there exists a multilinear  polynomial  $F(x_1, \ldots, x_{7})\in \mathbb{Z}_2[x_1, \ldots, x_7]$, and a multilinear  polynomial 
$G(x_1, \ldots, x_8)\in \mathbb{Z}_2[x_1, \ldots, x_8]$ that is viewed as a representative for its image in the quotient algebra
$\mathbb{Z}_2[x_1, \ldots, x_8]/(P_1, P_2, P_3, P_4)$ where $P_1, P_2, P_3, P_4$ are the four linear polynomials in (\ref{equ-T-polynomial}) such that $T$ is decided by $P_1=P_2=P_3=P_4=0$, such that 
$$f=\chi_S(-1)^{F(x_1, \ldots, x_{7})}+4\chi_T(-1)^{G(x_1, \ldots, x_{8})}.$$

We note that such  multilinear  polynomials  $F(x_1, \ldots, x_{7})$ and 
$G(x_1, \ldots, x_8)$ exist: For any point in $S \setminus T$ we can choose a unique value $s  \in \mathbb{Z}_2$ which represents the $\pm 1$ value
of $f$ as $(-1)^s$, and for any point in $T \subseteq S$ we can choose unique values $t \in \mathbb{Z}_2$
and $s' \in \mathbb{Z}_2$
such that $(-1)^{s'} + 4 (-1)^t$ 
represents the $\pm 3$ value
of $f$.

For $\{i, j\}\subseteq [7]=\{1, \ldots, 7\}$, remember that $F^{ab}_{ij}\in Z_2[\{x_1, \ldots, x_7\}\backslash\{x_i, x_j\}]$ is the function obtained by setting $(x_i, x_j)=(a, b)$ in $F$.
Similarly, we can define $G^{ab}_{ij}$ with respect to $P_1=P_2=P_3=P_4=0$ (any assignment of $(x_i, x_j)=(a, b)$ is consistent with $P_1=P_2=P_3=P_4=0$ which defines
$T$).
We make the following claim about $F^{ab}_{ij}$.

\begin{quote}
{\bf Claim 3.}
\emph{For all $\{i, j\}\subseteq[7]$, 
$F^{00}_{ij}+F^{11}_{ij}\equiv 0$ or $1$,  and also $F^{01}_{ij}+F^{10}_{ij}\equiv 0$ or $1$.}
\end{quote}

We first show how this claim will let us realize $g_8$ or $g'_8$, and lead to  \#P-hardness. Then, we give a proof of {Claim 3}.
By Claim 3 and Lemma~\ref{lem-mutilinear-poly}, the degree $d(F)\leqslant 2$.
\begin{itemize}
    \item  If $d(F)\leqslant 1$,  
    then $F$ is an affine linear combination of variables $x_1, \ldots, x_7$, i.e., $F=\lambda_0+\sum^7_{i=1}\lambda_ix_i$ where $\lambda_i\in \mathbb{Z}_2$ for $0\leqslant i \leqslant 7$. 
    Notice that if we negate the variable $x_i$ of $f$, we will get a signature $f'(x_1, \ldots, x_8)=(-1)^{x_i}f(x_1, \ldots, x_8).$
    For every $x_i$ appearing in $F$ (i.e., $\lambda_i=1$), we negate the variable $x_i$ of $f$.
    Also, if $\lambda_0=1$, then we normalize $f$ by a scalar $-1$.
    Then, we get a signature 
  $$f'=\chi_S\cdot 1+4\chi_T(-1)^{G'(x_1, \ldots, x_{8})}.$$
  This will not change the support of $f$ and also norms of entries of $f$.
  Thus, $f'(\alpha)=\pm 3$ or $\pm 1$ for all $\alpha \in \mathscr{S}(f')=\mathscr{E}_8$.
  Then, for every $\alpha\in T$,
  $f'(\alpha)=1+4(-1)^{G'(\alpha)}=\pm 3$, 
  which implies that $(-1)^{G'(\alpha)}=-1$ and $f'(\alpha)=-3$, because $1+4=5$ cannot be an entry of $f'$. 
  Therefore, $f'=\chi_S-4\chi_T=g_8.$
  Thus, $g_8$ is realizable from $f$.

  By merging variables $x_1$ and $x_5$ of $g_8$ using $=_2$, we can get a 6-ary signature $h$.
  We rename variables $x_2, x_3, x_4$ to $x_1, x_2, x_3$ and variables $x_6, x_7, x_8$ to $x_4, x_5, x_6$ (The choice of merging $x_1$ and $x_5$ is just for a simple renaming of variables). 
  Then after normalization by a scalar $1/2$, $h$ has the following signature matrix
  $$M_{123,456}({h})=A=\left[\begin{matrix}
-1 & 0 & 0 & 1 & 0 & 1 & 1 & 0\\
0 & -1 & 1 & 0 & 1 & 0 & 0 & 1\\
0 & 1 & -1 & 0 & 1 & 0 & 0 & 1\\
1 & 0 & 0 & -1 & 0 & 1 & 1 & 0\\
0 & 1 & 1 & 0 & -1 & 0 & 0 & 1\\
1 & 0 & 0 & 1 & 0 & -1 & 1 & 0\\
1 & 0 & 0 & 1 & 0 & 1 & -1 & 0\\
0 & 1 & 1 & 0 & 1 & 0 & 0 & -1\\
\end{matrix}\right].$$

  Consider the inner product $\langle {\bf h}_{14}^{00},  {\bf h}_{14}^{11}  \rangle$.
  One can check that 
  $$\langle {\bf h}_{14}^{00},  {\bf h}_{14}^{11}  \rangle=\sum_{1\leqslant i,j\leqslant 4}A_{i,j}\cdot A_{i+4, j+4}=8\neq 0.$$
  (This is the sum of pairwise products of every entry
  in the upper left $4\times 4$ submatrix of $A$
  with the corresponding entry of the lower right
   $4\times 4$ submatrix of $A$.)
   In fact, notice that $h(\overline{\alpha})=\overline{h(\alpha)}=h(\alpha)$.
   By considering the representative matrix $M_{r}(h)$ of $h$ (see Table~\ref{tab:16-entry}), we have 
   $$M_{r}(h)=\left[
   \begin{matrix}
    -1 & 1 & 1 & 1\\
    1 & -1 & 1 & 1\\
      1 & 1 & -1 & 1\\
        1 & 1 & 1 & -1\\
   \end{matrix}\right].$$
   Then, $$\langle {\bf h}_{14}^{00},  {\bf h}_{14}^{11}  \rangle=2({\rm perm}(M_{r}(h)_{[1, 2]})+{\rm perm}(M_{r}(h)_{[3, 4]}))=2(2+2)=8\neq 0.$$
  Also, since $\mathscr{S}(h)=\mathscr{E}_6$,
  it is easy to see that $h$ is  irreducible.
  Since $h$ does not satisfy {\sc 2nd-Orth},
  we get  \#P-hardness.
  \item If $d(F)=2$, then by Lemma~\ref{lem-mutilinear-poly}, for all $\{i, j\}\subseteq[7]$, $x_ix_j$ appears in $F$.
     Then, $F=\sum_{1\leqslant i<j \leqslant 7}x_ix_j+L$ where $L$ is an affine linear combination of variables $x_1, \ldots, x_7$.
     Since on the support $\mathscr{S}(f)=\mathscr{E}_8$, $x_1+\cdots+x_8=0$, and on Boolean inputs $x_8^2 = x_8$,
     we can substitute $F$ by $F'=F+x_8(x_1+\cdots+x_8) - (x_8^2 - x_8)=\sum_{1\leqslant i<j \leqslant 8}x_ix_j+L+x_8$  (all arithmetic mod 2).
     This will not change the signature $f$.
    Then, by negating variables of $f$ that appear as linear terms in $F'$ and normalization with a scalar $\pm 1$, 
 %
 we get a signature 
 $$f'=\chi_S (-1)^{\sum_{1\leqslant i<j\leqslant 8} x_ix_j}+4\chi_T (-1)^{G'(x_1, \ldots, x_8)}=q_8+4\chi_T (-1)^{G'(x_1, \ldots, x_8)}.$$
 where $q_8 = \chi_S (-1)^{\sum_{1\leqslant i<j\leqslant 8} x_ix_j}$  (see form (\ref{equ-T-polynomial})).
For every $\alpha \in T$,
since ${\rm wt}(\alpha)=0, 4$ or $8$,
it is easy to see that
$q_8(\alpha)=(-1)^{{\rm wt}(\alpha)\choose 2}=1$. 
Thus, $(-1)^{G'(\alpha)}$ must be $-1$ in order to get $1-4=-3$, of norm 3 for $f'$.
The other choice would give $1+4=5$ to be an entry of $f'$, a contradiction.
 Therefore,  $f'(\alpha)=q_8-4\chi_T=g'_8.$
 Thus, $g'_8$ is realizable from $f$.

  By merging variables $x_1$ and $x_5$ of $g'_8$ using $=_2^-$, we can get a 6-ary signature $h'$.
  After renaming variables  (same as we did for $h$) and normalization by a scalar $-1/2$,
  we have 
  $$M_{123,456}({h'})=B=\left[\begin{matrix}
1 & 0 & 0 & 1 & 0 & 1 & 1 & 0\\
0 & -1 & 1 & 0 & 1 & 0 & 0 & -1\\
0 & 1 & -1 & 0 & 1 & 0 & 0 & -1\\
1 & 0 & 0 & 1 & 0 & -1 & -1 & 0\\
0 & 1 & 1 & 0 & -1 & 0 & 0 & -1\\
1 & 0 & 0 & -1 & 0 & 1 & -1 & 0\\
1 & 0 & 0 & -1 & 0 & -1 & 1 & 0\\
0 & -1 & -1 & 0 & -1 & 0 & 0 & -1\\
\end{matrix}\right].$$
  
Consider the inner product $\langle {\bf h'}_{14}^{00},  {\bf h'}_{14}^{11}  \rangle$.
  One can check that 
  $$\langle {\bf h'}_{14}^{00},  {\bf h'}_{14}^{11}  \rangle=\sum_{1\leqslant i,j\leqslant 4}B_{i,j}\cdot B_{i+4, j+4}=-8\neq 0.$$
  Also, since $\mathscr{S}(h')=\mathscr{E}_6$,
  it is easy to see that $h'$ is  irreducible.
  Since $h'$ does not satisfy {\sc 2nd-Orth},
  we get  \#P-hardness.
\end{itemize}

This completes the proof of
{Step 2}, and the proof
of the lemma, modulo {Claim 3}.

\vspace{1ex}
Now, we prove {Claim 3}
that for all $\{i, j\}\subseteq [7]$, 
$F^{00}_{ij}+F^{11}_{ij}\equiv 0$ or $1$ and $F^{01}_{ij}+F^{10}_{ij}\equiv 0$ or $1$.
For simplicity of notation, we prove this for $\{i, j\}=\{1, 2\}$. The proof for arbitrary $\{i, j\}$ is the same by replacing  $\{1, 2\}$ by $\{i, j\}$.
Since $f\in \int_{\mathcal{B}}\mathscr{A}$, $\widetilde f_{12}^{00}, \widetilde f_{12}^{01}, \widetilde f_{12}^{10}, \widetilde f_{12}^{11}\in \mathscr{A}$.
Remember all nonzero entries in  $\widetilde f_{12}^{ab}$
have the same norm, denoted by  $n_{ab}$.
We first show that between $\widetilde f_{12}^{00}$ and $\widetilde f_{12}^{11}$, 
exactly one has support $\mathscr{E}_{2n-2}$ and its nonzero entries have norm $2$ and the other has nonzero entries of norm $4$, and
between $\widetilde f_{12}^{01}$ and $\widetilde f_{12}^{10}$, 
exactly one has support $\mathscr{O}_{2n-2}$ and its nonzero entries have norm $2$ and the other has  nonzero entries of norm $4$.
(This is not what we have proved in {Step 1} where $\{1, 2\}$ is a pair of particularly chosen indices. 
Here $\{1, 2\}$ means an arbitrary pair $\{i, j\}$.)

Consider $f^{00}_{12}(\vec{0}^{6})$ and  $f^{11}_{12}(\vec{0}^{6})$. 
By {Step 1} of {Case 2} and Lemma~\ref{lem-indenpent-set},
 we may assume that $\mathscr{S}_3(f)=\mathscr{S}(f_8)$ (after flipping and renaming variables). We have $00\vec{0}^{6}\in \mathscr{S}_3(f)$ and $11\vec{0}^{6}\notin\mathscr{S}_3(f)$.
Thus, $f^{00}_{12}(\vec{0}^{6})=\pm 3$  and $f^{11}_{12}(\vec{0}^{6})=\pm 1$.
(This is true when replacing $\{1, 2\}$ by an arbitrary pair of indices $\{i, j\}$.)
Thus, between $$\widetilde f^{00}_{12}(\vec{0}^{6})=f^{00}_{12}(\vec{0}^{6})+f^{11}_{12}(\vec{0}^{6}) \text{~~ and ~~} \widetilde f^{11}_{12}(\vec{0}^{6})=f^{00}_{12}(\vec{0}^{6})-f^{11}_{12}(\vec{0}^{6}),$$ one has norm $2$ and the other has norm $4$.
They are both nonzero.
Then, between $n_{00}$ and $n_{11}$, one is $2$ and the other is $4$.
By Lemma~\ref{lem-norm-f-tilde}(2), between $\widetilde f_{12}^{00}$ and $\widetilde f_{12}^{11}$, the one whose nonzero entries have norm $2$ has support $\mathscr{E}_{6}$, and moreover
$n_{01}$ and $n_{10}=2$ or $4$.
Since there exists $(a, b)=(0, 0)$ or $(1, 1)$ such that $$|\widetilde f_{12}^{ab}|^2=n^2_{ab}\cdot|\mathscr{S}(\widetilde f_{12}^{ab})|=2^2\cdot |\mathscr{E}_{6}|,$$ 
for $\widetilde f_{12}^{cd}$ where $(c, d)=(0, 1)$ or $(1, 0)$,  if $n_{cd}=2$, then $|\mathscr{S}(\widetilde f_{12}^{cd})|=|\mathscr{E}_{6}|=|\mathscr{O}_{6}|$.
Since $\widetilde f_{12}^{cd}$ has odd parity, $\mathscr{S}(\widetilde f_{12}^{cd})\subseteq\mathscr{O}_{6}.$
Thus, $|\mathscr{S}(\widetilde f_{12}^{cd})|=2^{2n-3}$ implies that 
$\mathscr{S}(\widetilde f_{12}^{cd})=\mathscr{O}_{6}.$
\begin{itemize}
    \item 
If $n_{01}=n_{10}=2$, then $\mathscr{S}(\widetilde f_{12}^{01})=\mathscr{S}(\widetilde f_{12}^{10})=\mathscr{O}_{6}$.
For an arbitrary $\theta\in \mathscr{O}_{6}$, $$f(01\theta)=\frac{\widetilde f(01\theta)+\widetilde f(10\theta)}{2}=\frac{(\pm 2)+(\pm 2)}{2} \text{~ and ~} f(10\theta)=\frac{\widetilde f(01\theta)-\widetilde f(10\theta)}{2}=\frac{(\pm 2)-(\pm 2)}{2}.$$
Thus, between $f(01\theta)$ and $f(10\theta)$, exactly one has norm $2$ and the other has norm $0$.
This gives a contradiction since every nonzero entry of $f$ has norm $1$ or $3$.
\item
If $n_{01}=n_{10}=4$, then still consider $f(01\theta)$ and $f(10\theta)$ for an arbitrary $\theta\in \mathscr{O}_{6}$. We know that $f(01\theta), f(10\theta)=\pm 4, \pm 2$ or $0$.
The case that $f(01\theta)=0$ or $f(10\theta)=0$ cannot occur since $\mathscr{S}(f)=\mathscr{E}_{2n}$ and clearly, $01\theta, 10\theta \in \mathscr{E}_{2n}$.
Thus, $f(01\theta), f(10\theta)=\pm 4, \pm 2$.
Still, we get a contradiction since every nonzero entry of $f$ has norm $1$ or $3$.
\item Thus, between $n_{01}$ and $n_{10}$, one is $2$ and the other is $4$.
\end{itemize}
Then, between $\widetilde f_{12}^{01}$ and $\widetilde f_{12}^{10}$, exactly one has support $\mathscr{O}_{6}$ and its nonzero entries have norm $2$, and the other has  nonzero entries of norm $4$.

Now, we show that $F_{12}^{00}+F_{12}^{11} \equiv 0$ or $1$.
We first consider the case that 
 between $\widetilde f_{12}^{00}$ and $\widetilde f_{12}^{11}$,
$\widetilde f_{12}^{11}=f_{12}^{00}-f_{12}^{11}$ is the signature whose support is $\mathscr{E}_{6}$ and nonzero entries have norm $2$; the case where it is $\widetilde f_{12}^{00}$ will be addressed shortly.
Let $S_0$ be the subspace in $\mathbb{Z}_2^6$ obtained by 
setting $x_1=x_2=0$ in $S=\mathscr{S}(f)=\mathscr{E}_{8}$, 
and $S_1$ be the subspace in $\mathbb{Z}_2^6$ obtained by 
setting  $x_1=x_2=1$.
Similarly, we can define $T_0$ and $T_1$,
replacing $S$ in the definition by $T=\mathscr{S}_3(f)=I_8$.
Clearly, $S_0=S_1=\{(x_3, \ldots, x_8)\in \mathbb{Z}_2^6\mid x_3+\cdots x_8=0\}=\mathscr{E}_{6}$.
Also, one can check that $T_0$ is disjoint with $T_1$.
Then $$f_{12}^{00}=\chi_{S_0}(-1)^{F_{12}^{00}(x_3, \ldots, x_{7})}+4\chi_{T_0}(-1)^{G_{12}^{00}(x_3, \ldots, x_{8})},$$ and 
$$f_{12}^{11}=\chi_{S_1}(-1)^{F_{12}^{11}(x_3, \ldots, x_{7})}+4\chi_{T_1}(-1)^{G_{12}^{11}(x_3, \ldots,  x_{8})}.$$
Thus, 
$$\widetilde f_{12}^{11}=\chi_{\mathscr{E}_6}((-1)^{F_{12}^{00}(x_3, \ldots, x_{7})}-(-1)^{F_{12}^{11}(x_3, \ldots, x_{7})})+4\chi_{T_0}(-1)^{G_{12}^{00}(x_3, \ldots, x_{8})}-4\chi_{T_1}(-1)^{G_{12}^{11}(x_3, \ldots, x_{8})}.$$
Since $\mathscr{S}(\widetilde f_{12}^{11})=\mathscr{E}_{6}$ and $n_{11}=2$, $\widetilde f_{12}^{11}(\theta)=\pm 2$ for every $\theta \in \mathscr{E}_{6}$.
If $\theta \notin T_0\cup T_1$, then $$\widetilde f_{12}^{11}(\theta)=(-1)^{F_{12}^{00}(\theta)}-(-1)^{F_{12}^{11}(\theta)}=\pm 2.$$
If $\theta \in T_0\cup T_1$, then it belongs to exactly one
of $T_0$ or $T_1$,
 $$\widetilde f_{12}^{11}(\theta)=(-1)^{F_{12}^{00}(\theta)}-(-1)^{F_{12}^{11}(\theta)}+4a=\pm 2,$$
 where $a= \pm 1$.
 In this case, the sum of the first two terms is still
 $(-1)^{F_{12}^{00}(\theta)}-(-1)^{F_{12}^{11}(\theta)}=\pm 2$, because the only other possible  value
 for $(\pm 1) - (\pm 1)$ is 
 $0$ and then we would have $4a=\pm 2$, a contradiction.
 Thus, for  every  $(x_3, \ldots, x_{7}) \in \mathbb{Z}^5_{2}$ which decides every $(x_3, \ldots, x_{8}) \in \mathscr{E}_{6}$ by $x_8=x_3+\cdots+x_7$,
$$(-1)^{F_{12}^{00}(x_3, \ldots, x_{7})}-(-1)^{F_{12}^{11}(x_3, \ldots, x_{7})}=\pm 2.$$
This implies that $$(-1)^{F_{12}^{00}(x_3, \ldots, x_{7})}=-(-1)^{F_{12}^{11}(x_3, \ldots, x_{7})}.$$
Thus, $$(-1)^{F_{12}^{00}(x_3, \ldots, x_{7})+F_{12}^{11}(x_3, \ldots, x_{7})}=-1.$$
Then, $F_{12}^{00}+F_{12}^{11}\equiv 1$.

Now we address the case that (between $\widetilde f_{12}^{00}$ and $\widetilde f_{12}^{11}$) it is $\widetilde f_{12}^{00}=f_{12}^{00}+f_{12}^{11}$  the signature whose support is $\mathscr{E}_{6}$ and nonzero entries have norm $2$.  Then similarly
for  every  $(x_3, \ldots, x_{7}) \in \mathbb{Z}^5_{2}$,
which determines every
$(x_3, \ldots, x_{8}) \in \mathscr{E}_{6}$,
$$(-1)^{F_{12}^{00}(x_3, \ldots, x_{7})}+(-1)^{F_{12}^{11}(x_3, \ldots, x_{7})}=\pm 2.$$
This implies that $$(-1)^{F_{12}^{00}(x_3, \ldots, x_{7})}=(-1)^{F_{12}^{11}(x_3, \ldots, x_{7})}.$$
Thus, $$(-1)^{F_{12}^{00}(x_3, \ldots, x_{7})+F_{12}^{11}(x_3, \ldots, x_{7})}=1$$
Then, $F_{12}^{00}+F_{12}^{11}\equiv 0.$

We have proved that, $F_{12}^{00}+F_{12}^{11}\equiv 0$ or $1$.

Also, consider $\widetilde f_{12}^{01}$ and $\widetilde f_{12}^{10}$.
One of them is a signature whose support is $\mathscr{O}_{2n-2}$ and nonzero entries have norm $2$.
Then similarly, for every $(x_3, \ldots, x_{7}) \in \mathbb{Z}^5$  which decides every $(x_3, \ldots, x_{8}) \in \mathscr{O}_{6}$ by $x_8=1+x_3+\cdots+x_7$,
$$(-1)^{F_{12}^{01}(x_3, \ldots, x_{7})}+(-1)^{F_{12}^{10}(x_3, \ldots, x_{7})}=\pm 2,$$ or 
$$(-1)^{F_{12}^{01}(x_3, \ldots, x_{7})}-(-1)^{F_{12}^{10}(x_3, \ldots, x_{7})}=\pm 2.$$
Then, $F_{12}^{01}+F_{12}^{10}\equiv 0$ or  $F_{12}^{01}+F_{12}^{10}\equiv 1$.
The above proof holds for all $\{i, j\}\subseteq [7]$.
Thus, for all $\{i, j\}\subseteq [7]$, 
$F_{ij}^{00}+F_{ij}^{11}\equiv 0$ or $1$, and $F_{ij}^{01}+F_{ij}^{10}\equiv 0$ or $1$.
\end{proof}
\begin{remark}
 The above proof does not require $\mathcal{F}$ to be non-$\mathcal{B}$ hard.
\end{remark}




\subsection{Support condition}
 Then, by further assuming that nonzero entries of $f$ have the same norm, we show that $f$ has affine support or we can get the \#P-hardness for non-$\mathcal{B}$ hard set $\mathcal{F}$ (Lemma~\ref{lem-affine-support}).
Here, we do require  $\mathcal{F}$ to be non-$\mathcal{B}$ hard.

We first give one more result about $\widetilde f$.
Remember that if $f\in \int_{\mathcal{B}}\mathscr{A}$, then  $\widetilde{f}^{00}_{12}$, $\widetilde{f}^{01}_{12}$, $\widetilde{f}^{10}_{12}$, $\widetilde{f}^{11}_{12} \in \mathscr{A}$, and $n_{ab}$ denotes the norm of nonzero entries of $\widetilde{f}^{ab}_{12}$.
Let $\widetilde{\mathcal{B}}=\left\{\widetilde{=^+_2}, \widetilde{=^-_2}, \widetilde{\neq^+_2}, \widetilde{\neq^-_2} \right\}$ where $\widetilde{=^+_2}=(2, 0, 0, 0)$, 
$\widetilde{=^-_2}=(0, 0, 0, 2)$, $\widetilde{\neq^+_2}=(0, 2, 0, 0)$ and $\widetilde{\neq^-_2}=(0, 0, 2, 0)$.
Signatures in 
$\widetilde{\mathcal{B}}$ are obtained by performing the $H_4$ gadget construction on binary signatures  in $\mathcal{B}$. 

\begin{lemma}\label{lem-tilde-norm-2}
 Let $f$ be an irreducible signature of arity $2n\geqslant 6$ with the following properties.
 \begin{enumerate}
     \item  $f$ has even parity, $f$ satisfies 
{\sc 2nd-Orth}, and $f\in \int_{\mathcal{B}}\mathscr{A}$;
\item for all $\{i, j\}$ disjoint with $\{1, 2\}$ and every $b\in \mathcal{B}$,
either $M(\mathfrak{m}_{12}(\partial^b_{ij}f))=\lambda^b_{ij}I_4$ for some real $\lambda^b_{ij}\neq 0$, or there exists a nonzero binary signature $g^b_{ij}\in \mathcal{B}$ such that $g^b_{ij}(x_1, x_2)\mid \partial^b_{ij}f$.
 \end{enumerate}{}
If $\mathscr{S}(\widetilde f^{01}_{12})=\mathscr{S}(\widetilde f^{10}_{12})$, $n_{00}>n_{01}>0$, 
then $\mathscr{S}(\widetilde f^{01}_{12})=\mathscr{O}_{2n-2}$. 
%
\end{lemma}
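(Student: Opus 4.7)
The plan is to argue by contradiction. Suppose $\mathscr{S}(\widetilde f^{01}_{12})\subsetneq \mathscr{O}_{2n-2}$. Since $n_{01}>0$ implies $\widetilde f^{01}_{12}\not\equiv 0$ and $\widetilde f^{01}_{12}$ has odd parity (inherited from the even parity of $\widetilde f$), its support is a proper, nonempty subset of $\mathscr{O}_{2n-2}$. The graph on $\mathscr{O}_{2n-2}$ with Hamming-distance-$2$ edges is connected, so one can locate a boundary pair $\alpha\in \mathscr{S}(\widetilde f^{01}_{12})$ and $\beta\in \mathscr{O}_{2n-2}\setminus \mathscr{S}(\widetilde f^{01}_{12})$ with ${\rm wt}(\alpha\oplus\beta)=2$. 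Let $\{i,j\}$ be the two bits at which they differ; these are automatically disjoint from $\{1,2\}$. By the symmetry hypothesis $\mathscr{S}(\widetilde f^{01}_{12})=\mathscr{S}(\widetilde f^{10}_{12})$, the same pair witnesses the boundary of $\mathscr{S}(\widetilde f^{10}_{12})$ as well.

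Next, I would choose a $b\in \mathcal{B}$ matched to how $\alpha,\beta$ differ at $(i,j)$: use $b\in\{=_2^+,=_2^-\}$ when $\alpha_i=\alpha_j$, and $b\in\{\neq_2^+,\neq_2^-\}$ otherwise. Set $h=\partial^b_{ij}f$ and $\widetilde h=\partial^b_{ij}\widetilde f$, related by $\widetilde h={^{H_4}_{12}}h$ because the $H_4$-gadget at $\{1,2\}$ commutes with merging at the disjoint pair $\{i,j\}$ (as in Lemma~\ref{lem-tilde-1}). Let $\gamma'$ be the column of $M_{12}(\widetilde h)$ that collapses from $\alpha$ and $\beta$. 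The four full strings $00\alpha,00\beta,11\alpha,11\beta$ all have the same parity as ${\rm wt}(\alpha)$, which is odd, so by even parity of $\widetilde f$ each such entry vanishes. Combining this with $\alpha\in \mathscr{S}(\widetilde f^{01}_{12})=\mathscr{S}(\widetilde f^{10}_{12})$ and $\beta$ outside both supports yields $\widetilde h^{00}_{12}(\gamma')=\widetilde h^{11}_{12}(\gamma')=0$ and $\widetilde h^{01}_{12}(\gamma')=\pm n_{01}\neq 0$, $\widetilde h^{10}_{12}(\gamma')=\pm n_{01}\neq 0$. Inverting through $H_4^{-1}=H_4/2$ gives $h^{00}_{12}(\gamma')=h^{11}_{12}(\gamma')=0$, while exactly one of $h^{01}_{12}(\gamma'), h^{10}_{12}(\gamma')$ equals $\pm n_{01}$ and the other is $0$.

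Now I would invoke hypothesis~(2) on $h$. If some $g\in\mathcal{B}$ divides $h$ as a function of $(x_1,x_2)$, each of the four choices gives an immediate contradiction: for $g\in\{=_2^+,=_2^-\}$ both rows $h^{01}_{12},h^{10}_{12}$ are identically zero, and for $g=\neq_2^+$ (resp.~$\neq_2^-$) one obtains $\widetilde h^{10}_{12}\equiv 0$ (resp.~$\widetilde h^{01}_{12}\equiv 0$), each contradicting the explicit nonzero values at $\gamma'$. Consequently, the only surviving possibility is $M(\mathfrak{m}_{12}(h))=\lambda I_4$ for some real $\lambda\neq 0$.

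The hard part will be ruling out this last alternative using the hypothesis $n_{00}>n_{01}$. Since $h\in\mathscr{A}$ (as $f\in\int_{\mathcal{B}}\mathscr{A}$) and the value $\pm n_{01}$ appears as an entry of $h$, the common norm of all nonzero entries of $h$ is exactly $n_{01}$, and the mating-identity condition forces the four rows of $M_{12}(h)$ to share a common support size $\lambda/n_{01}^2$. The approach I would take is to compute $|h^{00}_{12}|^2$ and $|h^{01}_{12}|^2$ directly by expanding $h^{ab}_{12}(\gamma')$ in terms of $f$-entries (equivalently through the $\widetilde f$ description); using the second-order orthogonality of $f$ to cancel cross terms, the row norms split into contributions governed by the $\widetilde f^{ab}_{12}$-blocks, with the $(00)$-row norm scaled by $n_{00}^2$ and the $(01)$-row norm by $n_{01}^2$. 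The strict inequality $n_{00}>n_{01}$ then obstructs the required equality of row norms. Should this direct computation not close off immediately, the backup plan is to select a second boundary pair (via a different $b'$ or different $(i',j')$) and combine the resulting constraints, exploiting the strict inequality $|\mathscr{S}(\widetilde f^{00}_{12})|<|\mathscr{S}(\widetilde f^{01}_{12})|$ implied by $n_{00}>n_{01}$, to force the contradiction.
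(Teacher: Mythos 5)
Your setup is correct and tracks the paper's strategy closely: you argue by contradiction, locate a boundary pair $\alpha\in \mathscr{S}(\widetilde f^{01}_{12})$, $\beta\notin \mathscr{S}(\widetilde f^{01}_{12})$ at Hamming distance $2$ in $\mathscr{O}_{2n-2}$, note the witness is shared by $\mathscr{S}(\widetilde f^{10}_{12})$ by the support-equality hypothesis, pick the $b\in\mathcal{B}$ matched to how $\alpha,\beta$ differ, and analyze the four blocks of $\widetilde h=\partial^b_{ij}\widetilde f$ and $h=\partial^b_{ij}f$ at the collapsed column $\gamma'$. Your case analysis ruling out $g^b_{ij}(x_1,x_2)\mid h$ is valid and is a clean equivalent of the paper's packaging (the paper phrases hypothesis~(2) as the dichotomy ``among $\partial^b_{ij}\widetilde f^{00}_{12},\ldots,\partial^b_{ij}\widetilde f^{11}_{12}$ either at most one is nonzero or all four are nonzero''). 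In either form, you arrive at $M(\mathfrak{m}_{12}h)=\lambda I_4$ with $\lambda\neq 0$, hence $M(\mathfrak{m}_{12}\widetilde h)=2\lambda I_4\neq 0$, so all four blocks $\widetilde h^{ab}_{12}$ are nonzero as functions.

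The gap is in the final step. Your plan to compare $|h^{00}_{12}|^2$ against $|h^{01}_{12}|^2$ cannot produce a contradiction: those two quantities are forced to equal $\lambda$ by the mating identity, which is exactly the hypothesis you are in. Likewise, all $|\widetilde h^{ab}_{12}|^2$ equal $2\lambda$, so there is no disparity at the $L^2$ level. Your description of the $(00)$-row norm of $h$ as ``scaled by $n_{00}^2$'' is also inaccurate: all nonzero entries of $h$ share the common norm $n_{01}$, not $n_{00}$. The quantity $n_{00}$ governs entry norms of $\widetilde f^{00}_{12}$, and therefore of $\widetilde h^{00}_{12}$, not of $h^{00}_{12}$.

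The argument you are one step away from is entry-wise, not row-wise. Since $f\in\int_{\mathcal{B}}\mathscr{A}$ and $\{i,j\}$ is disjoint from $\{1,2\}$, Lemma~\ref{lem-tilde-1}(5) gives $\widetilde h=\partial^b_{ij}\widetilde f\in\mathscr{A}$, so all its nonzero entries share a single norm. But $\widetilde h^{00}_{12}\not\equiv 0$, and any nonzero entry of $\widetilde h^{00}_{12}$ is the $\pm$-combination of two entries of $\widetilde f^{00}_{12}$, each of norm $n_{00}$ or $0$; hence it has norm $n_{00}$ or $2n_{00}$. Meanwhile $\widetilde h^{01}_{12}(\gamma')$ has norm $n_{01}$. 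Since $n_{00}>n_{01}>0$, both $n_{00}$ and $2n_{00}$ differ from $n_{01}$, so $\widetilde h$ has two nonzero entries of distinct norm, contradicting $\widetilde h\in\mathscr{A}$. This entry-norm contradiction, not an $L^2$ row-norm comparison, is what the strict inequality $n_{00}>n_{01}$ actually buys you.
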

\begin{proof}
We first analyze the second property of $f$, i.e., the property about $\partial^b_{ij}f$.
\begin{itemize}
    \item 
If $M(\mathfrak{m}_{12}(\partial^b_{ij}f))=\lambda^b_{ij}I_4$, by Lemma~\ref{lem-tilde-1}, then $M(\mathfrak{m}_{12}(\widetilde{\partial^b_{ij}f)})=2\lambda^b_{ij}I_4$.
Since $\{i, j\}$ is disjoint with $\{1, 2\}$, the $H_4$ gadget on variables $x_1$ and $x_2$ commutes with the merging gadget on variables $x_i$ and $x_j$.
Thus, $\widetilde{\partial^b_{ij}f}={\partial^b_{ij}\widetilde f}$.
 Let $(\partial^b_{ij}\widetilde f)_{12}^{ab}$ be the signature obtained by setting variables $x_1$ and $x_2$ of $\partial^b_{ij}\widetilde f$ to $a$ and $b$, and $\partial^b_{ij}(\widetilde f_{12}^{ab})$ be the signature obtained by merging variables $x_i$ and $x_j$ of $\widetilde f_{12}^{ab}$.
 Again, since $\{1, 2\}$ and $\{i, j\}$ are disjoint, 
 $(\partial^b_{ij}\widetilde f)_{12}^{ab}=\partial^b_{ij}(\widetilde f_{12}^{ab})$.
 We denote them by $\partial^b_{ij}\widetilde f_{12}^{ab}$.
 Then, since $M(\mathfrak{m}_{12}(\widetilde{\partial^b_{ij}f)})=M(\mathfrak{m}_{12}({\partial^b_{ij}\widetilde f)})=2\lambda^b_{ij}I_4$,
 $$|\partial^b_{ij} {\bf \widetilde f}_{12}^{00}|^2=|\partial^b_{ij}\widetilde {\bf f}_{12}^{01}|^2=|\partial^b_{ij}\widetilde {\bf f}_{12}^{10}|^2=|\partial^b_{ij}\widetilde {\bf f}_{12}^{11}|^2=2\lambda^b_{ij}\neq 0.$$
 
 \item If $g^b_{ij}(x_1, x_2)\mid \partial^b_{ij}f$, i.e, $\partial^b_{ij}f=g^b_{ij}(x_1, x_2)\otimes h$, then $\widetilde {\partial^b_{ij}f}=\partial^b_{ij}\widetilde f=\widetilde{g^b_{ij}}(x_1, x_2)\otimes h$.
 Since $g^b_{ij}\in \mathcal{B}$, $\widetilde{g^b_{ij}}\in \widetilde{\mathcal{B}}$.
 By the form of signatures in $\widetilde{\mathcal{B}}$, 
 among $\partial^b_{ij} {\widetilde f}_{12}^{00}$, $\partial^b_{ij} {\widetilde f}_{12}^{01}$, $\partial^b_{ij} {\widetilde f}_{12}^{10}$ and 
 $\partial^b_{ij} {\widetilde f}_{12}^{11}$, at most one is a nonzero signature. 
 \end{itemize}
 
Combining  the above two cases
 we have that, among $\partial^b_{ij} {\widetilde f}_{12}^{00}$, $\partial^b_{ij} {\widetilde f}_{12}^{01}$, $\partial^b_{ij} {\widetilde f}_{12}^{10}$ and 
 $\partial^b_{ij} {\widetilde f}_{12}^{11}$, if at least two of them are nonzero signatures then they are all nonzero signatures.
 
 Now, we show that $\mathscr{S}(\widetilde f^{01}_{12})=\mathscr{O}_{2n-2}$.
 Since $f$ has even parity, $\widetilde f$ also has even parity.
 Then, $\widetilde f^{01}_{12}$ has odd parity, i.e.,
 $\mathscr{S}(\widetilde f^{01}_{12})\subseteq\mathscr{O}_{2n-2}$.
 For a contradiction, suppose that $\mathscr{S}(\widetilde f^{01}_{12})\subsetneq\mathscr{O}_{2n-2}$.
 Since $n_{01}>0$, $\mathscr{S}(\widetilde f^{01}_{12})\neq \emptyset$.
 Then, we can pick a pair of inputs $\alpha, \beta\in \mathscr{O}_{2n-2}$ with ${\rm wt}(\alpha\oplus\beta)=2$ such that
 $\alpha \in \mathscr{S}(\widetilde f^{01}_{12})$ and $\beta \notin \mathscr{S}(\widetilde f^{01}_{12}).$
 Also, since $\mathscr{S}(\widetilde f^{01}_{12})=\mathscr{S}(\widetilde f^{10}_{12})$, $\alpha \in \mathscr{S}(\widetilde f^{10}_{12})$ and $\beta \notin \mathscr{S}(\widetilde f^{10}_{12}).$
 Thus, 
  $|\widetilde f^{01}_{12}(\alpha)|=n_{01}$ and $|\widetilde f^{01}_{12}(\beta)|=0$, and   $|\widetilde f^{10}_{12}(\alpha)|=n_{10}$ and $|\widetilde f^{10}_{12}(\beta)|=0.$
  Suppose that $\alpha$ and $\beta$ differ in bits $i$ and $j$.
  Clearly, $\{i, j\}$ is disjoint with $\{1, 2\}.$
  Depending whether $\alpha_i=\alpha_j$ or $\alpha_i\neq \alpha_j$, 
  we connect variables $x_i$ and $x_j$ of $\widetilde f$ using $=_2^{+}$ 
  or $\neq_2^+$. 
  We get signatures $\partial^+_{ij} \widetilde f$ 
  or  $\partial^{\widehat+}_{ij} \widetilde f$
  respectively.
  We  consider the case that 
$\alpha_i=\alpha_j$; in this case $\{\alpha_i\alpha_j, \beta_i\beta_j\} = \{00, 11\}$.
For the case that $\alpha_i\neq \alpha_j$, the analysis is the same by replacing $\partial^+_{ij}\widetilde f$
with $\partial^{\widehat+}_{ij}\widetilde f$.

Consider $\partial^+_{ij}\widetilde f$.
Then,
because $\{\alpha_i\alpha_j, \beta_i\beta_j\} = \{00, 11\}$,
$\widetilde f^{01}_{12}(\alpha)+\widetilde f^{01}_{12}(\beta)$ and $\widetilde f^{10}_{12}(\alpha)+\widetilde f^{10}_{12}(\beta)$
are entries of $\partial^+_{ij}\widetilde f$;
more precisely, they are entries of $\partial^+_{ij}\widetilde f_{12}^{01}$ and $\partial^+_{ij}\widetilde f_{12}^{10}$ respectively.
Since $\widetilde f^{01}_{12}(\beta)=\widetilde f^{10}_{12}(\beta)=0$, we have
$$|\widetilde f^{01}_{12}(\alpha)+\widetilde f^{01}_{12}(\beta)|=|\widetilde f^{01}_{12}(\alpha)|=n_{01}\neq 0, \text{~~and~~}
|\widetilde f^{10}_{12}(\alpha)+\widetilde f^{10}_{12}(\beta)|=|\widetilde f^{10}_{12}(\alpha)|=n_{10}\neq 0.$$
Thus, $\partial^+_{ij}\widetilde f_{12}^{01}$ has a nonzero entry with norm $n_{01}$, and then
$\partial^+_{ij}\widetilde f_{12}^{01}\not\equiv 0$.
Also, we have $\partial^+_{ij}\widetilde f_{12}^{10}\not\equiv 0$.
Thus at least two
 among $\partial^+_{ij} {\widetilde f}_{12}^{00}$, $\partial^+_{ij} {\widetilde f}_{12}^{01}$, $\partial^+_{ij} {\widetilde f}_{12}^{10}$ and 
 $\partial^+_{ij} {\widetilde f}_{12}^{10}$ are nonzero, it follows that all of them are
 nonzero signatures.

Then $\partial^+_{ij}\widetilde f_{12}^{00}\not\equiv 0$.
Let $\partial^+_{ij}\widetilde f_{12}^{00}(\gamma)$ be a nonzero entry of $\partial^+_{ij}\widetilde f_{12}^{00}$.
Then, $\partial^+_{ij}\widetilde f_{12}^{00}(\gamma)=\widetilde f_{12ij}^{0000}(\gamma)+\widetilde f_{12ij}^{0011}(\gamma)\neq 0$.\footnote{For the case that $\alpha_i\neq \alpha_j$, $\partial^+_{ij}\widetilde f_{12}^{00}(\gamma)=\widetilde f_{12ij}^{0000}(\gamma)+\widetilde f_{12ij}^{0011}(\gamma)$ will be replced by  $\partial^{\widehat+}_{ij}\widetilde f_{12}^{00}(\gamma)=\widetilde f_{12ij}^{0001}(\gamma)+\widetilde f_{12ij}^{0010}(\gamma)$.}
Clearly,
$\widetilde f_{12ij}^{0000}(\gamma)$ and $\widetilde f_{12ij}^{0011}(\gamma)$ are entries of  $\widetilde f_{12}^{00}$, and they have norm $n_{00}$ or $0$.
Thus,  $\partial^+_{ij}\widetilde f_{12}^{00}(\gamma)$ has norm $2n_{00}$ or $n_{00}$.
Also, $\partial^+_{ij}\widetilde f_{12}^{00}(\gamma)$ is an entry of $\partial^+_{ij}\widetilde f$ on the input $00\gamma$.
Thus, $\partial^+_{ij}\widetilde f$ has a nonzero entry with norm $2n_{00}$ or $n_{00}$.
Since $n_{00}>n_{01}$, both $2n_{00}$ and $n_{00}$ are not equal to $n_{01}$.
Thus, $\partial^+_{ij}\widetilde f$ has two nonzero entries with different norms.
Such a signature is not in $\mathscr{A}$.
However,  since $f\in \int_{\mathcal{B}}\mathscr{A}$, 
by Lemma \ref{lem-tilde-1}, $\partial^+_{ij}\widetilde f \in \mathscr{A}$. 
Contradiction.
Thus,  $\mathscr{S}(\widetilde f^{01}_{12})=\mathscr{O}_{2n-2}$.
\end{proof}

We also give a result about the edge partition of complete graphs into two  complete tripartite subgraphs.
This result should also be of independent interest. We say   a graph $G = (V, E)$
is tripartite if 
 $V = V_1 \sqcup  V_2 \sqcup V_3$
and every $e \in E$ is between distinct
$V_i$ and $V_j$. Here $\sqcup$ 
denotes disjoint union. The parts
$V_i$ are allowed to be empty. It is a complete 
tripartite graph if every
pair between distinct
$V_i$ and $V_j$ is an edge.

\begin{definition}
Let $K_n$ be the complete graph on $n$ vertices.
We say $K_n$ has a \emph{tripartite 2-partition} if there exist  complete tripartite subgraphs $T_1$ and 
$T_2$ 
such that $\{E(T_1),   E(T_2)\}$ is a partition of $E(K_n)$, i.e., $E(K_n)=E(T_1)\sqcup E(T_2)$.
We say $T_1$ and $T_2$ are witnesses of a tripartite 2-partition of $K_n$.
\end{definition}

\begin{lemma}\label{lem-tripartite-partition}
$K_n$ has a tripartite 2-partition
iff $n \leqslant 5$.
For $n=5$, up to 
an automorphism of $K_5$,
there is a  unique
tripartite 2-partition where
$T_1$ is a triangle on 
$\{v_1, v_2, v_3\}$
and $T_2$ is the complete tripartite
graph with parts
$\{v_1, v_2, v_3\}$,  
$\{v_4\}$ and  $\{v_5\}$.
\end{lemma}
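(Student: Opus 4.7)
The plan is to reformulate the partition condition as a constraint on two vertex-labelings and then run a short structural case analysis. Assign each vertex $v \in V(K_n)$ a pair $(\alpha(v), \beta(v)) \in \{1,2,3,*\}^2$, where $\alpha(v)$ records which of the three parts of $T_1$ contains $v$ (the symbol $*$ meaning that $v$ is isolated in $T_1$), and similarly $\beta(v)$ for $T_2$. The condition $E(K_n) = E(T_1) \sqcup E(T_2)$ translates to: for every pair $\{u,v\}$, exactly one of the statements ``$\alpha(u), \alpha(v) \neq *$ and $\alpha(u) \neq \alpha(v)$'' or the analogous one for $\beta$ holds. Let $S_i = \{v : v$ is isolated in $T_i\}$.

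Three elementary observations pin down the global picture. First, $S_1 \cap S_2 = \emptyset$, else a vertex has degree $0$ in $K_n$. Second, if both $S_1$ and $S_2$ were nonempty, an edge between them would lie in neither $T_i$; so we may assume $S_2 = \emptyset$. Third, each $v \in S_1$ must be a $T_2$-neighbor of every other vertex, forcing the $\beta$-part containing $v$ to be the singleton $\{v\}$ and forcing distinct $S_1$-vertices into distinct $\beta$-parts. In particular $s := |S_1| \in \{0,1,2,3\}$.

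On $V \setminus S_1$ the ``same-$\alpha$'' relation $\sim_1$ is an equivalence with at most $3$ classes, the ``same-$\beta$'' relation $\sim_2$ is an equivalence with at most $3 - s$ classes, and the edge-partition condition on $V \setminus S_1$ becomes $u \sim_1 v \Leftrightarrow u \not\sim_2 v$. The crucial lemma is that this rigid linkage collapses the structure: if $u \in P$ and $v, w \in Q$ lie in distinct $\sim_2$-classes with $|Q| \geq 2$, then $u \sim_1 v$ and $u \sim_1 w$ force $v \sim_1 w$ by transitivity, contradicting $v \sim_2 w$. Hence either every $\sim_2$-class on $V \setminus S_1$ is a singleton, or $\sim_2$ has a single class (in which case $\sim_1$ is forced to have all singleton classes). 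Running through $s = 0, 1, 2, 3$ one obtains the bounds $n \leq 3, 4, 5, 3$ respectively; hence $n \leq 5$. In the tight case $s = 2$ the three vertices of $V \setminus S_1$ lie in three distinct $\alpha$-parts (so $T_1 = K_3$ on them) and share the single remaining $\beta$-part (so $T_2$ has parts of sizes $3, 1, 1$). This matches the stated description, and since the $K_5$-automorphism group $S_5$ acts transitively on $2$-element subsets $S_1 \subseteq V(K_5)$, the partition is unique up to automorphism.

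The main obstacle is the case analysis in the third paragraph, particularly ensuring the boundary cases $s = 0$ (the full two-equivalence problem on all of $V$) and $s = 1$ (where $\sim_2$ may split $V \setminus S_1$ into two classes) are handled cleanly without oversight of degenerate configurations involving empty parts; the transitivity-collision argument above is the key tool that makes each subcase fall in a single line.
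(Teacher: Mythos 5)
Your proof is correct in substance and takes a genuinely different route from the paper. The paper's argument is a forbidden-induced-subgraph argument: it establishes that a complete tripartite graph contains neither the graph $G_{2,1}$ (an edge plus an isolated vertex) nor $K_4$ as induced subgraphs, then shows that one of $T_1, T_2$ must see all of $V$ and carves off the parts one by one. Your argument instead encodes the two tripartitions as a pair of equivalence relations $\sim_1, \sim_2$ on $V \setminus S_1$ linked by the exclusive-or constraint $u \sim_1 v \Leftrightarrow u \not\sim_2 v$, and then exploits a transitivity collision to show that exactly one of the two relations must be trivial (a single block) while the other is discrete. The parameter $s = |S_1|$ then gives clean bounds per case. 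Your approach is arguably more systematic: the paper's proof requires locating a specific large part $A_1$ and then arguing about its complement, whereas your framework makes the rigidity manifest immediately. Both are elementary; your version generalizes slightly more readily to $k$-partite analogues since the equivalence-relation reformulation is oblivious to the ``tri-'' in tripartite except for the class-count bound.

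Two small points worth tightening. First, the statement of your ``crucial lemma'' is garbled as written: ``$u \in P$ and $v, w \in Q$ lie in distinct $\sim_2$-classes'' reads as a contradiction since $v, w \in Q$ are by hypothesis $\sim_2$-equivalent. What you mean, and what the argument needs, is that $P$ and $Q$ are \emph{two distinct} $\sim_2$-classes with $|Q| \geqslant 2$; then the collision $v \sim_1 w$ from transitivity against $v \sim_2 w$ is exactly right. Second, your writeup only proves the ``only if'' direction of the iff. The ``if'' direction (existence for $n \leqslant 5$) is asserted implicitly via your description of the tight $s = 2$ configuration, but it should be stated that the restriction of the $K_5$ partition to any subset of vertices gives a tripartite 2-partition of $K_n$ for $n < 5$ (or that $n \leqslant 3$ is trivial since $T_2$ may be empty and $T_1 = K_n$ is itself tripartite); the paper handles $n \leqslant 4$ with explicit constructions. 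These are expositional omissions rather than mathematical errors, and the uniqueness argument for $n = 5$ via $S_5$-transitivity on 2-subsets is fine because you have already shown that $s = 2$ and the structure given $S_1$ are forced.
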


\begin{proof}

Let $T$ be a complete tripartite graph. 
Let $G_{2,1}$ be the union of $K_2$ and an isolated vertex. 
We first prove the following two claims.
\begin{quote}
  {\bf Claim 1.} \emph{$G_{2,1}$ is not an
induced subgraph of $T$.} 
\end{quote}

For a contradiction, suppose $G_{2,1}
= (V, E)$ is
an induced subgraph of  $T$, where
$V=\{v_1, v_2, v_3\}$, and 
$E = \{(v_1, v_2)\}$.
Then, $v_1$ and $v_2$ belong to distinct parts of $T$.
Since $(v_1, v_3), (v_2, v_3)\notin E(T)$, $v_1$ and $v_3$ belong to the same part of $T$, and so are  $v_2$ and $v_3$.
Thus, $v_1$ and $v_2$ belong to the same part of $T$. This contradiction proves Claim 1. 

\begin{quote}
   {\bf Claim 2.} \emph{$K_4$
   is not an
induced subgraph of $T$.} 
\end{quote}

For a contradiction, suppose $K_4$ 
on  $V=\{v_1, v_2, v_3, v_4\}$ is
an induced subgraph of  $T$. Then,
 for any two distinct vertices $v_i, v_j\in V$, the edge $(v_i, v_j)\in K_4$
 shows that $v_i$ and $v_j$ belong to distinct parts in $T$.
 But $T$ has at most three distinct
 nonempty parts. 
This contradiction proves Claim 2. 
 \vspace{.1in}
 
Now, we prove this lemma.
The cases  $n=1, 2, 3$ are trivial. 
When $n=4$, we have the following two tripartite 2-partitions of $K_4$, 
with $V(T_1)=\{v_1\}\sqcup\{v_2\}\sqcup\{v_3\}$ and $V(T_2)=\{v_1, v_2, v_3\}\sqcup\{v_4\}\sqcup\emptyset$, or alternatively with  $V({T'_1})=\{v_1\}\sqcup\{v_2\}\sqcup\emptyset$ and $V({T'_2})=\{v_1, v_2\}\sqcup\{v_3\}\sqcup\{x_4\}$.

We consider $n\geqslant 5$.
Suppose 
 $K_n$ has a tripartite 2-partition with  complete tripartite subgraphs $T_1 = (V_1, E_1)$ and 
$T_2 = (V_2, E_2)$. We write $(A_i, B_i, C_i)$ for the three parts of $T_i$, $i=1,2$.

Clearly $V = V_1 \cup V_2$, as all vertices of $V$ must appear
in either $T_1$ or $T_2$, for otherwise any edge incident to $v \in V \setminus (V_1 \cup V_2)$
is not in $E_1 \cup E_2$. 
If all parts of both $T_1$ and $T_2$ have size at most 1, then $|E_1 \sqcup E_2| \leqslant 6 < |K_5| \leqslant |K_n|$, a contradiction.
So at least one part, say $A_1$, has size
$|A_1| \geqslant 2$, and we let $a, a' \in A_1$. 
Then, $(a, a')\notin E_1$.
Thus, $(a, a')\in E_2$ and $a, a'\in V_2$.

We show that $(V_1  \setminus A_1) \cap (V_2  \setminus A_1) = \emptyset$.
Otherwise, there exists $v\in (V_1  \setminus A_1) \cap (V_2  \setminus A_1).$
Then, edges $(v, a), (v, a')\in E_1$.
Thus, among edges $(v, a), (v, a')$ and $(a, a')$ of $K_n$, $(a, a')$ is the only one in $T_2$.
Since $v, a, a'\in V_2$, $G_{2,1}$ is an induced subgraph of $T_2$. A violation of Claim 1. 

If both $V_1  \setminus A_1$ and $V_2  \setminus A_1$ are nonempty, then an edge in $K_n$ between $u \in V_1  \setminus A_1$
and $v \in V_2  \setminus A_1$ is in neither
$E_1$ nor $E_2$, since $u \not \in V_2$ and $v \not \in V_1$. This is a contradiction.
If $V_1 \setminus A_1  = \emptyset$, then
 $E_1 = \emptyset$, and then all edges of
$K_n$ belong to $T_2$, which violates  Claim 2.
So $V_2  \setminus A_1 = \emptyset$.
Since $V=V_1\cup V_2$, $V_2  \setminus A_1 = \emptyset$ implies that $V=V_1$.

Clearly $V_1 \setminus A_1  = B_1 \sqcup C_1$.
If $|B_1| \geqslant 2$, then there exists some $\{u, v\} \subseteq B_1 \subseteq V_1 \setminus A_1$,
which is disjoint from $V_2$. Thus
$\{u, v\}  \not \in E_1 \sqcup E_2$, a contradiction. Hence $|B_1| \leqslant 1$.
Similarly $|C_1| \leqslant 1$.
Finally, if $|A_1| \geqslant 4$, then there is a $K_4$ inside $A_1$
which must be an induced subgraph of $T_2$, 
a violation of  Claim 2. Thus $|A_1| \leqslant 3$.
It follows that $n \leqslant 5$ since $V=V_1=A_1\sqcup B_1\sqcup C_1$. 
If $n=5$, then $|A_1|=3$ and $|B_1| = |C_1| = 1$.
After relabeling vertices, we may assume that $A_1=\{v_1, v_2, v_3\}$, $B_1=\{v_4\}$ and $C_1=\{v_5\}$.
Then, we have $A_2=\{v_1\}$, $B_2=\{v_2\}$ and $C_2=\{v_3\}.$
This gives the unique tripartite 2-partition of $K_5$.
\end{proof}

We will apply Lemma~\ref{lem-tripartite-partition} to  multilinear $\mathbb{Z}_2$-polynomials. 
Remember that we take the reduction
of polynomials in $\mathbb{Z}_2[x_1, \ldots, x_n]$ modulo the ideal
generated by $\{x_i^2 - x_i \mid i \in [n]\}$ replacing any $F$
by its unique multilinear representative.
\begin{definition}
Let $F(x_1, \ldots, x_n)\in\mathbb{Z}_2[x_1, \ldots, x_n]$ be a  complete quadratic polynomial.
We say $F$ has a twice-linear 2-partition if there exist  $L_1, L_2, L_3, L_4\in \mathbb{Z}_2[x_1, \ldots, x_n]$ where 
$d(L_1)=d(L_2)=d(L_3)=d(L_4)\leqslant 1$ such that $F=L_1\cdot L_2+L_3\cdot L_4$.
\end{definition}

Lemma~\ref{lem-tripartite-partition} gives the following result about multilinear $\mathbb{Z}_2$-polynomials.

\begin{lemma}\label{lem-poly-partition}
Let $F(x_1, \ldots, x_n)\in\mathbb{Z}_2[x_1, \ldots, x_n]$ be a complete quadratic polynomial.
For $n\geqslant 6$, $F$  does not have a twice-linear 2-partition.
For $n=5$, $F$ has a twice-linear 2-partition $F=L_1\cdot L_2+L_3\cdot L_4$
iff (after renaming variables)
the cross terms of $L_1\cdot L_2$ and $L_3\cdot L_4$ correspond to
the unique tripartite 2-partition of $K_5$, and
we have $L_1 \cdot L_2 = (x_1+x_2+a)(x_2+x_3+b)$ and $L_3\cdot L_4 = (x_1+x_2+x_3+x_4+c)(x_1+x_2+x_3+x_5+d)$ 
for some $a, b, c, d\in \mathbb{Z}_2$.
\end{lemma}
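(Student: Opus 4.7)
The plan is to convert the algebraic condition ``$F$ has a twice-linear 2-partition'' into the combinatorial statement ``the complete graph $K_n$ has a tripartite 2-partition,'' and then invoke Lemma~\ref{lem-tripartite-partition}. To this end I would first write each affine factor in $\mathbb{Z}_2$ as $L_k = a_0^{(k)} + \sum_{i=1}^n a_i^{(k)} x_i$ and compute the multilinear reduction of a single product $L_1 \cdot L_2$, using $x_i^2 = x_i$ in $\mathbb{Z}_2$. The coefficient of $x_i x_j$ (for $i \neq j$) is then $a_i^{(1)} a_j^{(2)} + a_j^{(1)} a_i^{(2)}$. Partitioning $[n]$ into four blocks according to the value of the pair $(a_i^{(1)}, a_i^{(2)}) \in \{(1,0),(0,1),(1,1),(0,0)\}$, call them $A, B, C, D$, a direct case check shows that this coefficient is $1$ exactly on the pairs $\{i,j\}$ lying in two different blocks among $A, B, C$. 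Hence the set of cross terms of $L_1 \cdot L_2$ is precisely the edge set of the complete tripartite graph on $A \sqcup B \sqcup C$ (with the vertices in $D$ isolated), viewed as a subgraph of $K_n$.

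Next I would use this structural description for both products. Since $F$ is a complete quadratic polynomial, the coefficient of every $x_i x_j$ ($i<j$) in $F$ is $1$; and since addition is in $\mathbb{Z}_2$, $F = L_1 L_2 + L_3 L_4$ forces each edge of $K_n$ to appear in exactly one of the two tripartite edge sets obtained above. That is precisely a tripartite 2-partition of $K_n$. By Lemma~\ref{lem-tripartite-partition}, this forbids $n \geq 6$, giving the first assertion of the lemma.

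For $n=5$, Lemma~\ref{lem-tripartite-partition} yields the unique partition (up to automorphism of $K_5$): $T_1$ is the triangle on $\{v_1,v_2,v_3\}$ and $T_2$ is the complete tripartite graph on $\{v_1,v_2,v_3\} \sqcup \{v_4\} \sqcup \{v_5\}$. I would then read back the possible $(A,B,C)$ decompositions for each $T_i$. For $T_1$ (three singleton parts), exactly two of the three indices $\{1,2,3\}$ must appear in $L_1$ and another overlapping pair in $L_2$; after renaming inside $\{1,2,3\}$ this forces $L_1 L_2 = (x_1+x_2+a)(x_2+x_3+b)$. For $T_2$ (parts of sizes $3,1,1$) the only way to realize the $(A,B,C)$ pattern is $C=\{1,2,3\}$ with $A=\{4\}$, $B=\{5\}$ (or swapped, absorbed into the renaming of $x_4 \leftrightarrow x_5$), giving $L_3 L_4 = (x_1+x_2+x_3+x_4+c)(x_1+x_2+x_3+x_5+d)$.

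Finally the converse direction is a short computation: setting $y = x_1+x_2+x_3$ and using $y^2 = y$ in $\mathbb{Z}_2$ on Boolean inputs, I would expand the two products and verify that the quadratic part of $L_1 L_2 + L_3 L_4$ equals $\sum_{1 \leq i < j \leq 5} x_i x_j$, which is a complete quadratic polynomial, while the linear and constant pieces are absorbed into the affine linear part $L$ allowed by the definition. The only genuine subtlety I anticipate is bookkeeping the $\mathbb{Z}_2$ coefficient $a_i^{(1)} a_j^{(2)} + a_j^{(1)} a_i^{(2)}$ correctly so as to identify the cross-term graph as \emph{complete} tripartite (not merely tripartite), and tracking the residual renaming freedom so that the uniqueness statement in the lemma holds precisely up to the automorphism group of $K_5$ promised by Lemma~\ref{lem-tripartite-partition}.
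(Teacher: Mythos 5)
Your proposal is correct and follows essentially the same approach as the paper's proof: both translate the quadratic cross-term pattern of each product $L_i L_j$ into the edge set of a complete tripartite subgraph of $K_n$ and then invoke Lemma~\ref{lem-tripartite-partition}. Your four-way partition of $[n]$ by the coefficient pair $(a_i^{(1)}, a_i^{(2)})$ is just a relabeling of the paper's $U_1, U_2, U_3$ and their complement (with $A=U_1$, $B=U_3$, $C=U_2$), so the two presentations are the same argument in slightly different notation.
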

\begin{proof}
We first analyze the quadratic terms that appear in a product of two linear polynomials.
We use $x_i\in L$ to denote that a linear term $x_i$ appears in a linear polynomial $L$.
Let $L_1$ and $L_2$ be two linear polynomials.

Let $U_1=\{x_i\mid x_i \in L_1, x_i \notin L_2\}$, 
$U_2=\{x_i\mid x_i \in L_1, x_i \in L_2\}$, and 
$U_3=\{x_i\mid x_i \notin L_1, x_i \in L_2\}.$
Then, $$L_1=\sum_{x_i\in U_1}x_i+\sum_{x_j\in U_2}x_j+a, \text{~~and~~} 
L_2=\sum_{x_j\in U_2}x_j+\sum_{x_k\in U_3}x_k+b$$ for some $a, b\in \mathbb{Z}_2^2$.
The quadratic terms in $L_1\cdot L_2$ are
from
$$(\sum_{x_i\in U_1}x_i+\sum_{x_j\in U_2}x_j)\cdot(\sum_{x_j\in U_2}x_j+\sum_{x_k\in U_3}x_k)$$
which are enumerated in
$$\sum_{x_i\in U_1, x_j\in U_2}x_ix_j+\sum_{x_i\in U_1, x_k\in U_3}x_ix_k+\sum_{x_j\in U_2, x_k\in U_3}x_jx_k.$$
Note that each term
$x_i^2$ for $i \in U_2$
is replaced by $x_i$ (thus no longer counted as a quadratic term) as
we calculate  modulo the ideal
generated by $\{x_i^2 - x_i \mid i \in [n]\}$,
and every  pairwise cross product term
$x_i x_j$ for $i, j \in U_2$ and $i \not = j$  disappears since it appears exactly twice.

If we view variables $x_1, \ldots, x_{n}$ as $n$ vertices and each quadratic term $x_ix_j$ as an edge between vertices $x_i$ and $x_j$,
then the quadratic terms in $L_1\cdot L_2$ are the edges of a complete tripartite subgraph $T$ of $K_n$ (the parts of a tripartite graph could be empty)  and $V(T)=U_1\sqcup U_2 \sqcup U_3$.
Therefore, 
$L_1\cdot L_2$ is one  of the two  terms of
a twice-linear 2-partition of a complete quadratic polynomial over $n$ variables 
iff $T$ is one tripartite  complete graph in
a tripartite 2-partition of the complete graph $K_n$.
By Lemma~\ref{lem-tripartite-partition}, a tripartite 2-partition does not exist for $K_n$ when $n\geqslant 6$.
Thus, $F$ does not have twice-linear partition when $n\geqslant 6$.
When $n=5$, the tripartite 2-partition of $K_5$ is unique up to relabeling.  
One tripartite complete
graph of  this tripartite 2-partition is a triangle, and we may assume it is on 
$\{x_1, x_2, x_3\}$.
Then, we  take $L_1\cdot L_2=(x_1+x_2+a)(x_2+x_3+b)$ for some $a, b\in \mathbb{Z}_2^2$,
and $L_3\cdot L_4=(x_1+x_2+x_3+x_4+c)(x_1+x_2+x_3+x_5+d)$ for some $c, d\in \mathbb{Z}_2^2$.
Thus,
a complete quadratic polynomial $F(x_1, \ldots, x_5)$ over 5 variables has  a twice-linear 2-partition iff (after renaming variables) 
$F=L_1\cdot L_2+L_3\cdot L_4$.
\end{proof}{}

Now, we are ready to make a further  major step towards Theorem~\ref{thm-holantb}. We first give a preliminary result.

\begin{lemma}\label{lem-disjoint-support}
Let $f$ be a $2n$-ary signature, where $2n \geqslant 4$. 
If $f\in \int_{\mathcal{B}}\mathscr{A}$ and $|f(\alpha)|= 1$  for all $\alpha\in \mathscr{S}(f)$, 
then for all $\{i, j\}\subseteq [2n]$,  $\mathscr{S}(f_{ij}^{00})=\mathscr{S}(f_{ij}^{11})$ or $\mathscr{S}(f_{ij}^{00})\cap \mathscr{S}(f_{ij}^{11})=\emptyset$, and $\mathscr{S}(f_{ij}^{01})=\mathscr{S}(f_{ij}^{10})$ or $\mathscr{S}(f_{ij}^{01})\cap \mathscr{S}(f_{ij}^{10})=\emptyset$.
\end{lemma}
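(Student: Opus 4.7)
The plan is to fix an arbitrary pair $\{i,j\}\subseteq[2n]$ and exploit the four merging signatures
\[
\partial^+_{ij}f = f_{ij}^{00}+f_{ij}^{11},\quad \partial^-_{ij}f = f_{ij}^{00}-f_{ij}^{11},\quad \partial^{\widehat+}_{ij}f = f_{ij}^{01}+f_{ij}^{10},\quad \partial^{\widehat-}_{ij}f = f_{ij}^{01}-f_{ij}^{10}.
\]
By the hypothesis $f\in\int_{\mathcal{B}}\mathscr{A}$, all four lie in $\mathscr{A}$; since $f$ is real-valued, each of them is a real affine signature, so by Definition~\ref{definition-affine} all of its nonzero entries share one common norm. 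This uniform-norm condition will be the entire engine of the proof.

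I would first treat the pair $(f_{ij}^{00},f_{ij}^{11})$. Introduce the disjoint sets $A = \mathscr{S}(f_{ij}^{00})\cap\mathscr{S}(f_{ij}^{11})$, $B = \mathscr{S}(f_{ij}^{00})\setminus\mathscr{S}(f_{ij}^{11})$ and $C = \mathscr{S}(f_{ij}^{11})\setminus\mathscr{S}(f_{ij}^{00})$. Because every nonzero entry of $f$ has norm $1$, for $\theta\in A$ both $f_{ij}^{00}(\theta),f_{ij}^{11}(\theta)\in\{\pm 1\}$, and one may split $A = A^+\sqcup A^-$ according to whether $f_{ij}^{00}(\theta)=f_{ij}^{11}(\theta)$ or $f_{ij}^{00}(\theta)=-f_{ij}^{11}(\theta)$. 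A direct inspection then gives: $\partial^+_{ij}f$ takes values of norm $2$ exactly on $A^+$ and values of norm $1$ exactly on $B\cup C$, while $\partial^-_{ij}f$ takes values of norm $2$ exactly on $A^-$ and values of norm $1$ exactly on $B\cup C$. Uniform-norm affineness of $\partial^+_{ij}f$ forces $A^+=\emptyset$ or $B\cup C=\emptyset$, and the same condition for $\partial^-_{ij}f$ forces $A^-=\emptyset$ or $B\cup C=\emptyset$. Intersecting these constraints yields the required dichotomy: either $B\cup C=\emptyset$, in which case $\mathscr{S}(f_{ij}^{00})=\mathscr{S}(f_{ij}^{11})$, or $A^+=A^-=\emptyset$, hence $A=\emptyset$ and $\mathscr{S}(f_{ij}^{00})\cap\mathscr{S}(f_{ij}^{11})=\emptyset$.

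The second half, concerning $\mathscr{S}(f_{ij}^{01})$ versus $\mathscr{S}(f_{ij}^{10})$, will follow by the identical argument applied to $\partial^{\widehat+}_{ij}f$ and $\partial^{\widehat-}_{ij}f$, with the roles of $f_{ij}^{00},f_{ij}^{11}$ replaced by $f_{ij}^{01},f_{ij}^{10}$. I do not anticipate a real obstacle here: the whole argument is a short finite case analysis, driven by the simple arithmetical observation that when every entry of $f$ has norm $1$ the only possible norms appearing in each merged signature are $1$ and $2$, and affineness prohibits both from occurring at once.
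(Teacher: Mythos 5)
Your proof is correct and takes essentially the same approach as the paper's: both exploit that $f \in \int_{\mathcal{B}}\mathscr{A}$ forces $\partial^+_{ij}f$ and $\partial^-_{ij}f$ (and likewise $\partial^{\widehat+}_{ij}f$, $\partial^{\widehat-}_{ij}f$) to be affine, hence to have constant norm on their supports, and then observe that a witness $\alpha$ in the intersection of the supports produces a norm-$2$ entry in one of the merges while a witness $\beta$ in the symmetric difference produces norm-$1$ entries in both. The paper phrases this as a contradiction from picking two such witnesses, while you phrase it as a partition $A^+\sqcup A^-\sqcup B\sqcup C$ with a short case analysis; the two are logically interchangeable.
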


\begin{proof}
We first prove that for all $\{i, j\}\subseteq [2n]$, $\mathscr{S}(f_{ij}^{00})=\mathscr{S}(f_{ij}^{11})$ or $\mathscr{S}(f_{ij}^{00})\cap \mathscr{S}(f_{ij}^{11})=\emptyset$.
For a contradiction, suppose that there exist $\alpha, \beta\in\mathbb{Z}_2^{2n-2}$ such that $\alpha\in \mathscr{S}(f_{ij}^{00})\cap \mathscr{S}(f_{ij}^{11})$ 
and $\beta \in \mathscr{S}(f_{ij}^{00})\Delta \mathscr{S}(f_{ij}^{11})$, where $\Delta$ denotes the symmetric difference between two sets. 
Consider signatures $\partial^+_{ij}f$ and $\partial^-_{ij}f$.
Then,
$f_{ij}^{00}(\alpha)+f_{ij}^{11}(\alpha)$ and $f_{ij}^{00}(\beta)+f_{ij}^{11}(\beta)$ are entries of $\partial^+_{ij}f$, and $f_{ij}^{00}(\alpha)-f_{ij}^{11}(\alpha)$ and $f_{ij}^{00}(\beta)-f_{ij}^{11}(\beta)$ are entries of $\partial^-_{ij}f$.
Since $\alpha \in \mathscr{S}(f_{ij}^{00})\cap \mathscr{S}(f_{ij}^{11})$, $f_{ij}^{00}(\alpha)=\pm 1$ and $f_{ij}^{11}(\alpha)=\pm 1$. 
Then between $f_{ij}^{00}(\alpha)+f_{ij}^{11}(\alpha)$ and $f_{ij}^{00}(\alpha)-f_{ij}^{11}(\alpha)$, exactly one has norm $2$ and the other is $0$. 
However, since $\beta \in \mathscr{S}(f_{ij}^{00})\Delta \mathscr{S}(f_{ij}^{11})$, between $f_{ij}^{00}(\beta)$  and $f_{ij}^{11}(\beta)$, exactly one is $0$ and the other has norm $1$.
Thus, 
$|f_{ij}^{00}(\beta)+f_{ij}^{11}(\beta)|=|f_{ij}^{00}(\beta)-f_{ij}^{11}(\beta)|=1.$
Then, between $\partial^+_{ij}f$ and $\partial^-_{ij}f$, there is a signature that has an entry of norm $1$ and an entry of norm $2$. 
Clearly, such a signature is not in $\mathscr{A}$. 
However, since $f\in \int_\mathcal{B}\mathscr{A}$, we have
$\partial^+_{ij}f$, $\partial^-_{ij}f\in \mathscr{A}$.
Contradiction.

By considering signatures $\partial^{\widehat+}_{ij}f$ and  $\partial^{\widehat-}_{ij}f$, similarly we can show that $\mathscr{S}(f_{ij}^{01})=\mathscr{S}(f_{ij}^{10})$ or $\mathscr{S}(f_{ij}^{01})\cap \mathscr{S}(f_{ij}^{10})=\emptyset$.
\end{proof}

The next lemma is a major step.
\begin{lemma}\label{lem-affine-support}
Suppose that $\mathcal{F}$ is non-$\mathcal{B}$ hard.
Let $f\in \mathcal{F}$ be an irreducible $2n$-ary $(2n\geqslant 8)$ signature with parity. 
Then,
\begin{itemize}
    \item  $\Holantb(\mathcal{F})$ is \#P-hard, or
    \item there is  a signature $g\notin \mathscr{A}$ of arity $2k<2n$  that is realizable from $f$ and $\mathcal{B}$, or
    \item $f$ has affine support.
\end{itemize}
\end{lemma}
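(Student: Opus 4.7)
My plan is to run the same family of reductions that led up to Lemma~\ref{lem-affine-norm} and then pivot from ``all nonzero entries have norm $1$'' to ``the support is affine,'' using distance-2 closure, the {\sc 2nd-Orth} partition of $\mathscr{S}(f)$, the affine structure of the four blocks of $\widetilde f$, and ultimately the combinatorial obstruction of Lemma~\ref{lem-poly-partition}. By Lemma~\ref{second-ortho} I may assume $f$ satisfies {\sc 2nd-Orth} (else $\Holantb(\mathcal F)$ is \#P-hard). If some $\partial^b_{ij}f\notin\mathscr A$ I am done, so I assume $f\in\int_{\mathcal B}\mathscr A$. Applying Lemma~\ref{lem-affine-norm} either finishes the proof or gives that, after normalization, $f(\alpha)=\pm 1$ for every $\alpha\in S:=\mathscr S(f)$. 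Using signatures of $\mathcal B$ (which are available in $\Holantb$) I then flip and negate variables so that $\vec 0^{2n}\in S$ and $f(\vec 0^{2n})=1$.

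\textbf{Distance-2 closure.} Repeating the distance-2 square analysis of Lemma~\ref{lem-affine-norm} but now with all nonzero entries of $f$ equal to $\pm 1$, every distance-2 square of $f$ is of Type \Rmnum{1}; inspection of the canonical Type \Rmnum{1} forms shows that the ``three nonzero corners'' configuration is absent. Thus, whenever three corners of a distance-2 square belong to $S$, so does the fourth. This already implies: for any $\alpha,\beta,\gamma\in S$ with ${\rm wt}(\alpha\oplus\beta)={\rm wt}(\alpha\oplus\gamma)=2$ and ${\rm wt}(\beta\oplus\gamma)=4$, the vector $\alpha\oplus\beta\oplus\gamma$ lies in $S$. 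Combined with Lemma~\ref{lem-disjoint-support} (supports of $f^{00}_{ij}$ and $f^{11}_{ij}$ are equal or disjoint, same for $f^{01}_{ij}$ and $f^{10}_{ij}$) and with the equal-size and inner-product conditions of {\sc 2nd-Orth}, this gives a tight description of how $S$ is cut by each coordinate pair: the four sets $\mathscr S(f^{ab}_{ij})$ all have the same size and their sign patterns are synchronised.

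\textbf{Upgrading to linear closure.} To conclude that $S$ is a linear subspace I would iterate the distance-2 closure along chains of weight-$2$ moves. Concretely, if $\vec 0,\beta,\gamma\in S$ with ${\rm wt}(\beta)={\rm wt}(\gamma)=2$ and disjoint 1-supports, then $\beta\oplus\gamma\in S$; if a third disjoint weight-$2$ element $\tau\in S$ is available, the distance-2 square on $\{\beta,\ \beta\oplus\gamma,\ \beta\oplus\tau\}$ forces $\beta\oplus\gamma\oplus\tau\in S$; iterating gives closure under all parity-even XOR combinations that can be threaded through a ``weight-$2$ chain.'' The support-dichotomy from {\sc 2nd-Orth} allows me to rewrite general pairs $\alpha,\beta\in S$ along such chains, reducing $\alpha\oplus\beta$ to a telescoping sum of weight-$2$ increments, each of which is forced into $S$ by the distance-2 closure.

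\textbf{Main obstacle.} The hard part is the step in which the chain cannot be completed: in that configuration I expect to extract, by choosing a suitable pair $\{i,j\}$ and looking at $\widetilde f^{ab}_{ij}$ via the $H_4$-gadget of Lemma~\ref{lem-tilde-1}, a sign polynomial $F$ on $2n-2$ variables (via $f=\chi_S(-1)^F$ after the reductions) that is a complete quadratic polynomial on some $m$-variable restriction. The constraint that each of $\widetilde f^{00}_{ij},\widetilde f^{01}_{ij},\widetilde f^{10}_{ij},\widetilde f^{11}_{ij}$ lies in $\mathscr A$ forces $F$ to split as $F=L_1L_2+L_3L_4$ with $\deg L_k\le 1$, i.e.\ a twice-linear 2-partition of $F$. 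Since $2n\ge 8$ gives $m\ge 6$, Lemma~\ref{lem-poly-partition} (via Lemma~\ref{lem-tripartite-partition}) rules this out, delivering the contradiction that either $F$ has degree $\le 1$ (so $S$ is affine) or a non-affine signature of arity $<2n$ has been realized along the way, or $\Holantb(\mathcal F)$ is \#P-hard. This polynomial-partition obstruction, and correctly arranging the reductions so that the restricted polynomial is \emph{complete quadratic} (so that Lemma~\ref{lem-poly-partition} applies), is the step I expect to be the main technical hurdle.
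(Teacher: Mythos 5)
Your proposal correctly isolates the right toolbox (Type-\Rmnum{1} distance-2 squares, the $H_4$-gadget, Lemma~\ref{lem-disjoint-support}, and Lemma~\ref{lem-poly-partition}), but there are several genuine gaps that make the argument incomplete as stated.

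First, your ``distance-2 closure'' only handles quadruples $\{\alpha,\beta,\gamma,\delta\}$ that actually form a distance-2 square, i.e.\ with one pair at Hamming distance $4$. The troublesome case is the \emph{tetrahedral} configuration where $\delta=\alpha\oplus\beta\oplus\gamma$ and all six pairwise distances equal $2$; this is not a distance-2 square, so the Type-\Rmnum{1}/\Rmnum{2}/\Rmnum{3} classification gives no closure there. Your ``chain of weight-$2$ moves'' argument cannot be completed precisely because the intermediate points of the chain need not lie in $S$ when the relevant quadrilateral is tetrahedral. The paper handles this by picking a non-affine quadrilateral of \emph{minimum score} $T(\alpha,\beta,\gamma,\delta)$ and showing that any score strictly above $(2,2,2)$ lets you push the bad quadrilateral down to $\partial_{12}f$, which is a non-affine signature of arity $2n-2$; the residual $(2,2,2)$ case is then the whole remaining difficulty.

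Second, in that residual case you need a quantitative fact that your sketch does not establish: $|\mathscr{S}(f)|=2^{2n-2}$. This is derived from the $(2,2,2)$ configuration via $n_{00}=2$, $n_{01}=1$, the disjointness $\mathscr{S}(f_{12}^{01})\cap\mathscr{S}(f_{12}^{10})=\emptyset$ (Lemma~\ref{lem-disjoint-support}), and then Lemma~\ref{lem-tilde-norm-2} to force $\mathscr{S}(\widetilde f_{12}^{01})=\mathscr{O}_{2n-2}$. Without this counting, you cannot show that $S_+$ and $S_-$ are affine subspaces of the correct codimension, which is exactly what makes the restriction $F^{00}_{12}$ split as $L_1^+L_2^+ + L_1^-L_2^-$.

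Third, the polynomial you want to constrain is the \emph{indicator} polynomial of the support $S$ (the $F\in\mathbb{Z}_2[x_1,\ldots,x_{2n-1}]$ with $S=\{x\in\mathscr{E}_{2n}\mid F=1\}$), not the sign polynomial in $f=\chi_S(-1)^G$. The sign polynomial is the subject of the \emph{next} lemma (Lemma~\ref{lem-in-affine}). Conflating them here is a category error: bounding the degree of the sign polynomial does not tell you that $S$ is affine.

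Finally, your claim that ``$2n\geqslant 8$ gives $m\geqslant 6$'' so Lemma~\ref{lem-poly-partition} rules out the twice-linear $2$-partition is wrong. The restricted polynomial $F^{00}_{12}$ lives on $2n-3$ variables, and $2n-3=5$ precisely when $2n=8$, which is \emph{in scope} for this lemma. Lemma~\ref{lem-poly-partition} does \emph{not} rule out a twice-linear $2$-partition on $5$ variables; it classifies it. The boundary case $2n=8$ therefore requires a separate argument: the paper uses the explicit form of the unique partition to show that $\partial^+_{12}f$ is a $6$-ary signature with support of size $2^3$, which cannot lie in $\mathcal{B}^{\otimes 3}\cup\mathcal{F}_6\cup\mathcal{F}_6^H$, and hence triggers \#P-hardness via Corollary~\ref{lem-non-f6}. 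Without this step the proof does not close.
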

\begin{proof}
Again, we may assume that $f$ satisfies {\sc 2nd-Orth} and $f\in \int_{\mathcal{B}}\mathscr{A}.$
Also, by Lemma \ref{lem-affine-norm}, we may assume that  $f(\alpha)=\pm 1$ for all $\alpha \in \mathscr{S}(f)$ after normalization.


For any four distinct binary strings $\alpha, \beta, \gamma, \delta \in \mathbb{Z}_2^{2n}$ with $\alpha\oplus\beta\oplus\gamma=\delta$, we define a \emph{score} $T(\alpha, \beta, \gamma, \delta)=(t_1, t_2, t_3)$ which are the values of ${\rm wt}(\alpha\oplus\beta)={\rm wt}(\gamma\oplus\delta),{\rm wt}(\alpha\oplus\gamma)={\rm wt}(\beta\oplus\delta)$ and ${\rm wt}(\alpha\oplus\delta)={\rm wt}(\beta\oplus\gamma)$ ordered  from the smallest to the largest.
We order the scores lexicographically, i.e.,
we say $T=(t_1, t_2, t_3) < T'=(t'_1, t'_2, t'_3)$ if $t_1<t'_1$, or $t_2<t'_2$ when $t_1=t'_1$, or $t_3<t'_3$ when $t_1=t'_1$ and $t_2=t'_2$.
Note that since $\alpha, \beta, \gamma, \delta$ are distinct, the smallest value of the  score $T$ is $(2, 2, 2)$.
We say that $(\alpha, \beta, \gamma, \delta)$ where $\alpha\oplus\beta\oplus\gamma=\delta$ forms a 
\emph{non-affine quadrilateral}
of $f$ if exactly three of them are in $\mathscr{S}(f)$ and the fourth is not.


For a contradiction, suppose that $\mathscr{S}(f)$ is not affine.
Then, $f$ has at least a non-affine quadrilateral.
Among all non-affine quadrilaterals of $f$, 
we pick the one $(\alpha, \beta, \gamma, \delta)$ with the minimum score $T_{\min}=T(\alpha, \beta, \gamma, \delta)=(t_1, t_2, t_3).$
Without loss of generality, we may assume that
among $\alpha, \beta, \gamma$ and $\delta$,
$\delta$ is the one that is not in $\mathscr{S}(f)$.

We first consider the case that $(2, 2, 2)<T_{\min}$.
We  prove that we can realize a non-affine signature from $f$ by merging. 
Depending on the values of $T_{\min}$, there are three cases.



\begin{itemize}
    \item 

$t_1\geqslant 4$.
 Without loss of generality, we may assume that $t_1={\rm wt}(\alpha\oplus\beta)$.
 (Note that even though we have named $\delta$ as the one not belonging to  $\mathscr{S}(f)$, since
 $\alpha\oplus\beta\oplus\gamma\oplus \delta =0$, we can  name them so that  $t_1={\rm wt}(\alpha\oplus\beta)$.)
 Then, there are at least four bits on which $\alpha$ and $\beta$ differ. 
Among these four bits, there are at least two bits on which $\gamma$ is identical to $\alpha$ or $\beta$.
Without loss of generality, we assume that these are the first two bits 
and $\gamma_1\gamma_2=\alpha_1\alpha_2$.
We have $\beta_1\beta_2=\overline{\alpha_1}\overline{\alpha_2}$, and
as  $\delta = \alpha\oplus\beta\oplus\gamma$, we have
$\delta_1\delta_2=\overline{\alpha_1}\overline{\alpha_2}$.
 Also by flipping variables, we may assume that $\alpha=\vec{0}^{2n}=00\vec{0}^{2n-2}$.
 Then, $\beta=11\beta^\ast$, $\gamma=00\gamma^\ast$ and $\delta=11\delta^\ast$ where $\beta^\ast, \gamma^\ast, \delta^\ast \in \mathbb{Z}_2^{2n-2}$ and $\delta^\ast=\beta^\ast\oplus \gamma^\ast$.
 We consider the following eight inputs of $f$.
 $$\begin{matrix}
 \alpha= 00\alpha^\ast & \alpha'=11\alpha^\ast & \beta'=00\beta^\ast & \beta=11\beta^\ast\\
 \gamma=00\gamma^\ast & \gamma'=11\gamma^\ast & \delta'=00\delta^\ast & \delta=11\delta^\ast
 \end{matrix}$$
 
 Note that $\gamma'=\alpha\oplus\alpha'\oplus\gamma$, and ${\rm wt}(\alpha\oplus\alpha')=2<t_1$. 
 Then, $$T(\alpha, \alpha', \gamma, \gamma')<T(\alpha, \beta, \gamma, \delta).$$
 By our assumption that  $T(\alpha, \beta, \gamma, \delta)$ is the minimum score among all non-affine quadrilaterals of $f$,  
  $(\alpha, \alpha', \gamma, \gamma')$ is not a non-affine quadrilateral of $f$. 
 Since $\alpha, \gamma\in \mathscr{S}(f)$, $\alpha'$ and $\gamma'$ are either both in $\mathscr{S}(f)$ or both not in $\mathscr{S}(f)$. 
 Also, note that $\gamma'=\alpha'\oplus\beta\oplus\delta$, and  ${\rm wt}(\alpha'\oplus\beta)={\rm wt}(\alpha\oplus\beta)-2=t_1-2<t_1$.
 Then, $$T(\alpha', \beta, \gamma', \delta)<T(\alpha, \beta, \gamma, \delta).$$
 Again since $T(\alpha, \beta, \gamma, \delta)$ is the minimum  score among all non-affine quadrilaterals of $f$,
 $(\alpha', \beta, \gamma', \delta)$ is not a non-affine quadrilateral. 
 Since $\beta \in \mathscr{S}(f)$ and $\delta\notin \mathscr{S}(f)$, $\alpha'$ and $\gamma'$ are not both in $\mathscr{S}(f)$. 
 Thus, $\alpha', \gamma' \notin \mathscr{S}(f)$.
 Similarly, 
 $(\beta', \beta, \delta', \delta)$ and $(\alpha, \beta', \gamma, \delta')$ 
are not non-affine quadrilaterals of $f$,
since their scores are  less than $T(\alpha, \beta, \gamma, \delta)$.
 Since  $\beta \in \mathscr{S}(f)$ and $\delta\notin \mathscr{S}(f)$,
 we cannot have both  $\beta', \delta' \in \mathscr{S}(f)$
 from considering $(\beta', \beta, \delta', \delta)$, and then
 from $(\alpha, \beta', \gamma, \delta')$, we cannot have exactly one of   $\beta', \delta'$ is in $\mathscr{S}(f)$.
 Thus, both $\beta', \delta' \notin \mathscr{S}(f)$.
 In other words, we have $f(\alpha')=f(\beta')=f(\gamma')=f(\delta')=0$.
 
 Consider the signature $\partial_{12}f$. 
 Then, $f(\alpha)+f({\alpha'})$, $f(\beta)+f({\beta'})$, $f(\gamma)+f({\gamma'})$ and $f(\delta)+f({\delta'})$ 
 are entries of $\partial_{12}f$ on inputs $\alpha^\ast$, $\beta^\ast$, $\gamma^\ast$ and $\delta^\ast$ respectively.
 Since $f(\alpha)+f({\alpha'})=f({\alpha})\neq 0$, $f(\beta)+f({\beta'})=f({\beta})\neq 0$ and $f(\gamma)+f({\gamma'})=f({\gamma})\neq 0$, $\alpha^\ast, \beta^\ast, \gamma^\ast\in \mathscr{S}(\partial_{12}f).$
 Meanwhile we have $f(\delta)+f(\delta')=0+0=0$, thus $\delta^\ast \notin \mathscr{S}(\partial_{12}f)$.
 Thus, $(\alpha^\ast, \beta^\ast, \gamma^\ast, \delta^\ast)$ is a non-affine quadrilateral of $\partial_{12}f$.
 Then, $\partial_{12}f$ is a non-affine signature of arity $2n-2$. Contradiction.

\item $t_1=2$ and $t_2 \geqslant 4$.
 Without loss of generality, we assume that ${\rm wt}(\alpha\oplus\gamma)=2$ and ${\rm wt}(\alpha\oplus\beta)=t_2\geqslant 4$.
 (Again, using $\alpha\oplus\beta\oplus\gamma\oplus \delta =0$,  a moment reflection shows that this is indeed without loss of generality, even though we have named $\delta \not \in  \mathscr{S}(f)$.)
 Again by flipping variables, we may assume that $\alpha=\vec{0}^{2n}.$
 Then, ${\rm wt}(\gamma)=2$ and ${\rm wt}(\beta)\geqslant 4$.
 Take four bits where $\beta_i =1$, for at most two of these we can have $\gamma_i =1$,
 thus 
 there exist two other bits of these four bits (we may assume that they are the first two bits) 
 such that $\gamma_1\gamma_2=00$  and $\beta_1\beta_2=11$. 
 Then,  $\alpha=00\alpha^\ast$, $\beta=11\beta^\ast$, $\gamma=00\gamma^\ast$,  and   $\delta=11\delta^\ast$ by  $\delta = \alpha\oplus\beta\oplus\gamma$, where $\beta^\ast, \gamma^\ast, \delta^\ast \in \mathbb{Z}_2^{2n-2}$, ${\rm wt}(\beta^\ast)\geqslant 2$, ${\rm wt}(\gamma^\ast)=2$ and $\delta^\ast=\beta^\ast\oplus\gamma^\ast.$
 Still, we consider the following eight inputs of $f$.
  $$\begin{matrix}
 \alpha= 00\alpha^\ast & \alpha'=11\alpha^\ast & \beta'=00\beta^\ast & \beta=11\beta^\ast\\
 \gamma=00\gamma^\ast & \gamma'=11\gamma^\ast & \delta'=00\delta^\ast & \delta=11\delta^\ast
 \end{matrix}$$
 Note that ${\rm wt}(\alpha\oplus\gamma)=2$ and ${\rm wt}(\alpha\oplus\alpha')=2<t_2$.
 Then,  $$T(\alpha, \alpha', \gamma, \gamma')<T(\alpha, \beta, \gamma, \delta).$$
 Then similarly since $T(\alpha, \beta, \gamma, \delta)$ is the minimum, $(\alpha, \alpha', \gamma, \gamma')$ is not a non-affine quadrilateral.
 Since $\alpha, \gamma \in \mathscr{S}(f)$, $\alpha'$ and $\gamma'$ are either both in $\mathscr{S}(f)$ or both not in it. 
 Also, note that ${\rm wt}(\alpha'\oplus\gamma')=2$  and  ${\rm wt}(\alpha'\oplus\beta)={\rm wt}(\alpha\oplus\beta)-2=t_2-2<t_2$.
 Then,
 $$T(\alpha', \beta, \gamma', \delta)<T(\alpha, \beta, \gamma, \delta).$$
 Thus, $(\alpha', \beta, \gamma', \delta)$ is not a non-affine quadrilateral. 
Since $\beta \in \mathscr{S}(f)$ and $\delta\notin \mathscr{S}(f)$, $\alpha'$ and $\gamma'$ are not both in $\mathscr{S}(f)$. 
 Thus, $\alpha', \gamma' \notin \mathscr{S}(f)$.
 Similarly, by considering $(\beta', \beta, \delta', \delta)$ and $(\alpha, \beta', \gamma, \delta')$, 
 we know that they are not non-affine quadrilaterals. 
 Thus, $\beta', \delta' \notin \mathscr{S}(f)$.
 In other words, we have $f(\alpha')=f(\beta')=f(\gamma')=f(\delta')=0$.
 Still consider the signature $\partial_{12}f$. We have $\partial_{12}f \notin \mathscr{A}$. 
 Contradiction.
 \item $t_1=2$, $t_2=2$ and $t_3=4$. 
 In this case,
 by the definition of distance-2 squares (equation (\ref{eqn:distance-2 square})),
 $\left[\begin{smallmatrix}
 f(\alpha) &f(\beta)\\
  f(\gamma) & f(\delta)\\
 \end{smallmatrix} \right]$ forms a distance-2 square.
 Clearly, it is not of type \Rmnum{1}, \Rmnum{2} or \Rmnum{3} since exactly one entry of this square is zero.
 As proved in Lemma~\ref{lem-affine-norm}, since $f$ has a distance-2 square that is not type \Rmnum{1}, \Rmnum{2} or \Rmnum{3}, then we can realize a non-affine signature by merging. 
 Contradiction.
 \end{itemize}
 
 Now, we consider the case that $T_{\min} = (2, 2, 2)$.

Then, we show that $|\mathscr{S}(f)|=2^{2n-2}$.
We consider the non-affine quadrilateral $(\alpha, \beta, \gamma, \delta)$ with score $T=(2, 2, 2)$.
By renaming and flipping variables, without loss of generality, we may assume that 
$$\begin{bmatrix}
\alpha & \beta\\
\gamma & \delta\\
\end{bmatrix}=
\begin{bmatrix}
000\vec{0}^{2n-3} & 011\vec{0}^{2n-3}\\
110\vec{0}^{2n-3} & 101\vec{0}^{2n-3}\\
\end{bmatrix},$$
and $\delta$ is the only one among four that is not in $\mathscr{S}(f)$.
By normalization, we may assume that $f(\alpha)=1$.
If $f(\gamma)=-1$, then we negate the variable $x_1$ of $f$.
This keeps $f_1^0$ unchanged but changes $f_1^1$ to $-f_1^1$, so this does not change the value of $f(\alpha)$, but changes the value of $f(\gamma)$ to $1$.
Thus, without loss of generality, we may assume that $f(\alpha)=f(\gamma)=1$.
Clearly, $f$ has even parity.
Consider the signature $\widetilde f$
by the $H_4$ gadget applied on variables $x_1$ and $x_2$ of $f$.
We have $\widetilde f_{12}^{00}(\vec{0}^{2n-2})=f(\alpha)+f(\gamma)=2$ and $\widetilde f_{12}^{01}(1\vec{0}^{2n-3})=f(\beta)+f(\delta)=f(\beta)$ since $f(\delta)=0$.
Remember that since $f\in \int_\mathcal{B}\mathscr{A}$, by Lemma~\ref{lem-tilde-1}, for all $(a, b)\in \mathbb{Z}^2_2$, $\widetilde f_{12}^{ab}\in \mathscr{A}$ and we use $n_{ab}$ to denote the norm of its nonzero entries. 
Thus, $n_{00}=2$ and $n_{01}=1$.
Also, we have $f(\beta) \not =0$ which is the same as  $1\vec{0}^{2n-3}\in \mathscr{S}(f_{12}^{01})$, and
$f(\delta) =0$ which is the same as $1\vec{0}^{2n-3}\notin \mathscr{S}(f_{12}^{10})$.
By Lemma~\ref{lem-disjoint-support},
$\mathscr{S}(f_{12}^{01})\cap\mathscr{S}(f_{12}^{10})=\emptyset$.
Remember that $\widetilde f_{12}^{01}=f_{12}^{01}+f_{12}^{10}$ and $\widetilde f_{12}^{10}=f_{12}^{01}-f_{12}^{10}$.
Then,
$$\mathscr{S}(\widetilde f_{12}^{01})=\mathscr{S}(f_{12}^{01})\cup\mathscr{S}(f_{12}^{10})=\mathscr{S}(\widetilde f_{12}^{10}).$$

Consider signatures $\partial^b_{ij}f$ for all $\{i, j\}$ disjoint with $\{1, 2\}$ and every $b\in \mathcal{B}$.
By Lemma~\ref{lem-4.5} and its remark, we may assume that either $M(\mathfrak{m}_{12}(\partial^b_{ij}f))=\lambda^b_{ij}I_4$ for some real $\lambda^b_{ij}\neq 0$, or there exists a nonzero binary signature $g^b_{ij}\in \mathcal{O}$ such that $g^b_{ij}(x_1, x_2)\mid \partial^b_{ij}f$. Otherwise, we get  \#P-hardness.

Consider the case that  $g^b_{ij}(x_1, x_2)\mid \partial^b_{ij}f$.
If $\partial^b_{ij}f\equiv0$, then we can let $g^b_{ij}\in \mathcal{B}$ since a zero signature can be divided by any nonzero binary signature.
If $\partial^b_{ij}f \not\equiv 0$, we can realize $g^b_{ij}$ by factorization. If $g^b_{ij}\notin \mathcal{B}^{\otimes 1}$, then we get  \#P-hardness since $\mathcal{F}$ is non-$\mathcal{B}$ hard.
Thus, we may assume that $g^b_{ij}\in \mathcal{B}$ after normalization. 
Therefore, for all $\{i, j\}$ disjoint with $\{1, 2\}$ and every $b\in \mathcal{B}$, we may assume that either $M(\mathfrak{m}_{12}(\partial^b_{ij}f))=\lambda^b_{ij}I_4$ for some real $\lambda^b_{ij}\neq 0$, or there exists a nonzero binary signature $g^b_{ij}\in \mathcal{B}$ such that $g^b_{ij}(x_1, x_2)\mid \partial^b_{ij}f$.
Then, by Lemma~\ref{lem-tilde-norm-2},  $\mathscr{S}(\widetilde f^{01}_{12})=\mathscr{O}_{2n-2}$.
Thus, $|\mathscr{S}(\widetilde f^{01}_{12})|=2^{2n-3}$.

Now consider again  the signature $f$.
Since $f$ satisfies {\sc 2nd-Orth}, and all its nonzero entries have norm $1$, for any $(a, b)\in \mathbb{Z}_2^2$,
$|{\bf f}^{ab}_{12}|^2=|\mathscr{S}(f^{ab}_{12})|$.
Then, $$|\mathscr{S}(f^{00}_{12})|=|\mathscr{S}({f^{01}_{12}})|=|\mathscr{S}(f^{10}_{12})|=|\mathscr{S}(f^{11}_{12})|.$$
Remember that  $\mathscr{S}(f^{01}_{12})\cap\mathscr{S}(f^{10}_{12})=\emptyset$, and $\mathscr{S}(\widetilde f^{01}_{12})=\mathscr{S}(f^{01}_{12})\cup\mathscr{S}(f^{10}_{12})$.
Then, $\mathscr{S}(f^{00}_{12})$ and $\mathscr{S}({f^{01}_{12}})$ form an equal size partition of $\mathscr{S}(\widetilde f^{01}_{12})$.
Thus, $|\mathscr{S}(f^{01}_{12})|=|\mathscr{S}(f^{10}_{12})|=\frac{1}{2}|\mathscr{S}(\widetilde f^{01}_{12})|=2^{2n-4}$.
Also, $|\mathscr{S}(f^{00}_{12})|=|\mathscr{S}(f^{11}_{12})|=2^{2n-4}.$
Therefore, 
$$|\mathscr{S}(f)|=|\mathscr{S}(f^{00}_{12})|+|\mathscr{S}({f^{01}_{12}})|+|\mathscr{S}(f^{10}_{12})|+|\mathscr{S}(f^{11}_{12})|
=4\cdot 2^{2n-4}=2^{2n-2}.$$
Since all nonzero entries of $f$ have norm $1$, $|{\bf f}|^2=|\mathscr{S}(f)|=2^{2n-2}.$
Also, since $f$ satisfies {\sc 2nd-Orth}, for all $\{i, j\}\in [2n]$ and all $(a, b)\in \mathbb{Z}_2^2$, $|{\bf f}_{ij}^{ab}|=\frac{1}{4}|{\bf f}|^2=2^{2n-4}$.

We denote $\mathscr{S}(f)$ by $S$.
Since $f$ has even parity, 
for every $(x_1, \ldots, x_{2n})\in S$, $x_1+\cdots+x_{2n}=0$, i.e.,  $S\subseteq \mathscr{E}_{2n}$.
Let $F(x_1, \ldots, x_{2n-1})\in \mathbb{Z}_2[x_1, \ldots, x_{2n-1}]$ be the 
multilinear polynomial such that 
$$F(x_1, \ldots, x_{2n-1})=\left\{
\begin{aligned}
1,&~ & (x_1, \ldots, x_{2n-1}, x_{2n})\in S \\
0,&~ & (x_1, \ldots, x_{2n-1}, x_{2n})\notin S 
\end{aligned} \text{ ~~where~~ } x_{2n}=\sum^{2n-1}_{i=1}x_i.
\right.$$
Then, $S=\{(x_1, \ldots, x_{2n})\in \mathscr{E}_{2n}\mid  
F(x_1, \ldots, x_{2n-1})=1\}.$

Now, we show that for all $\{i, j\}\subseteq[2n-1]$,
$F^{00}_{ij}+F^{11}_{ij} \equiv 0$ or $1$, and also  $F^{01}_{ij}+F^{10}_{ij} \equiv 0$ or $1$.
For simplicity of notations, we prove this for $\{i, j\}=\{1, 2\}$. The proof  for arbitrary $\{i, j\}$ is the same by replacing  $\{1, 2\}$ by $\{i, j\}$.
Consider $$S_0=\mathscr{S}(f_{12}^{00})=\{(x_3, \ldots, x_{2n})\in \mathscr{E}_{2n-2}\mid  F^{00}_{12}(x_3, \ldots,x_{2n-1})=1\},$$
and $$S_1=\mathscr{S}(f_{12}^{11})=\{(x_3, \ldots, x_{2n})\in \mathscr{E}_{2n-2}\mid  F^{11}_{12}(x_3, \ldots,x_{2n-1})=1\}.$$
Then, $$S_0\cap S_1=\{(x_3, \ldots, x_{2n})\in \mathscr{E}_{2n-2}\mid  F^{00}_{12}\cdot F^{11}_{12}=1\},$$
and $$S_0\cup S_1=\{(x_3, \ldots, x_{2n})\in \mathscr{E}_{2n-2}\mid  F^{00}_{12}+F^{11}_{12}+ F^{00}_{12}\cdot F^{11}_{12}=1\}.$$
By Lemma~\ref{lem-disjoint-support}, $S_0=S_1$ or $S_0\cap
S_1=0$.
\begin{itemize}
    \item 
If $S_0=S_1$, then for  every  $(x_3, \ldots, x_{2n-1}) \in \mathbb{Z}^{2n-3}_{2}$ which decides every $(x_3, \ldots, x_{2n}) \in \mathscr{E}_{2n-2}$ by $x_{2n}=x_3+\cdots+x_{2n-1}$,
$$F_{12}^{00}(x_3, \ldots, x_{2n-1})= F_{12}^{11}(x_3, \ldots, x_{2n-1}).$$
Thus, $F_{12}^{00}+F_{12}^{11}\equiv 0$.
\item 
If $S_0\cap S_1=\emptyset$, then
since $|S_0|=|S_1|=2^{2n-4}$ (which is still true when replacing $\{1, 2\}$ by an arbitrary $\{i, j\}$), $|S_0\cup S_1|=|S_0|+|S_1|=2^{2n-3}$.
Since $S_0\cup S_1\subseteq \mathscr{E}_{2n-2}$ and $|\mathscr{E}_{2n-2}|=2^{2n-3}$, 
$S_0\cup S_1=\mathscr{E}_{2n-2}$.
Thus, for every $(x_3, \ldots, x_{2n-1})\in \mathbb{Z}_{2}^{2n-3}$ which decides every $(x_3, \ldots, x_{2n}) \in \mathscr{E}_{2n-2}$ by $x_{2n}=x_3+\cdots+x_{2n-1}$,
$$ F^{00}_{12}(x_3, \ldots, x_{2n-1})\cdot F^{11}_{12}(x_3, \ldots, x_{2n-1})=0,$$ and $$F^{00}_{12}(x_3, \ldots, x_{2n-1})+ F^{11}_{12}(x_3, \ldots, x_{2n-1})+F^{00}_{12}\cdot F^{11}_{12}(x_3, \ldots, x_{2n-1})=1.$$
Thus, $F^{00}_{12}+F^{11}_{12}\equiv 1$.
\end{itemize}
Similarly, we can show that $F^{01}_{12}+F^{10}_{12} \equiv  0$ or $1$. 
Therefore, for all $\{i, j\}\subseteq[2n-1]$,
$F^{00}_{ij}+F^{11}_{ij} \equiv 0$ or $1$ and $F^{01}_{ij}+F^{10}_{ij} \equiv 0$ or $1$.
By Lemma~\ref{lem-mutilinear-poly}, $d(F)\leqslant 2$. 

If $d(F)\leqslant 1$, then clearly, $S=\{(x_1, \ldots, x_{2n})\in \mathscr{E}_{2n}\mid  
F(x_1, \ldots, x_{2n-1})=1\}$  is an affine linear space.
Thus, $f$ has affine support.
Otherwise, $d(F)= 2$.
By Lemma~\ref{lem-mutilinear-poly}, $F$ is a complete quadratic polynomial.
Consider signatures $f_{12}^{00}$ and $f_{12}^{11}$.
Remember that $f(000\vec{0}^{2n-3})=f(110\vec{0}^{2n-3})=1$.
Thus, $\vec{0}^{2n-2}\in S_0\cap S_1\neq \emptyset.$
Then, $S_0=S_1$.
Let $$S_{+}=\{\alpha\in S_0 \mid f_{12}^{00}(\alpha)=f_{12}^{11}(\alpha)\} \text{~~and~~} S_{-}=\{\alpha\in S_0 \mid f_{12}^{00}(\alpha)=-f_{12}^{11}(\alpha)\}.$$
Then, as $f$ takes $\pm 1$ values
on its support, $S_{+}=\mathscr{S}(\partial^+_{12}f)$ and $S_{-}=\mathscr{S}(\partial^-_{12}f).$
Since $\partial^+_{12}f, \partial^-_{12}f\in \mathscr{A}$, $S_{+}$ and $S_-$ are affine linear subspaces of $\mathscr{E}_{2n-2}$.
Also, by {\sc 2nd-Orth}, $\langle{\bf f}_{12}^{00}, {\bf f}_{12}^{11}\rangle=|S_+|-|S_-|=0$.
Thus, $|S_+|=|S_-|=\frac{1}{2}|S_0|=2^{2n-5}$.
Since $|\mathscr{E}_{2n-2}|=2^{2n-3}$, $S_+$ is a an affine linear subspaces of $\mathscr{E}_{2n-2}$ decided by two affine linear constraints $L^+_1=1$ and $L^+_2=1$.
(Here both $L^+_1$ and $L^+_2$ are \emph{affine}
linear forms which may 
have nonzero constant terms, but we write the constraints as $L^+_1=1$ and $L^+_2=1$.)
In other words, $$S_+=\{(x_3, \ldots, x_{2n})\in \mathscr{E}_{2n-2}\mid L^+_1=1 \text{~and~} L^+_2=1\}=\{(x_3, \ldots, x_{2n})\in \mathscr{E}_{2n-2}\mid L^+_1\cdot L^+_2=1\}.$$
Since for every $(x_3, \ldots, x_{2n})\in \mathscr{E}_{2n-2}$, $x_3+\cdots+x_{2n}=0$, we may substitute the appearance of $x_{2n}$ in $L^+_1$ and $L^+_2$ by $x_3+\cdot+x_{2n-1}$.
Thus, we may assume that $L^+_1, L^+_2\in \mathbb{Z}_2[x_3, \ldots, x_{2n-1}]$, and $d(L^+_1)=d(L^+_2)=1$.
Similarly, there exist $L^-_1, L^-_2\in \mathbb{Z}_2[x_3, \ldots, x_{2n-1}]$ with $d(L^-_1)=d(L^-_2)=1$ such that 
$$S_-=\{(x_3, \ldots, x_{2n})\in \mathscr{E}_{2n-2}\mid L^-_1=1 \text{~and~} L^-_2=1\}=\{(x_3, \ldots, x_{2n})\in \mathscr{E}_{2n-2}\mid L^-_1\cdot L^-_2=1\}.$$
Clearly, $S_+\cap S_-=\emptyset$. Then
$$S_+\cup S_-=\{(x_3, \ldots, x_{2n})\in \mathscr{E}_{2n-2}\mid L^+_1\cdot L^+_2 + L^-_1\cdot L^-_2=1\}.$$
Remember that 
$$S_{0}=S_+\cup S_-=\{(x_3, \ldots, x_{2n})\in \mathscr{E}_{2n-2}\mid F^{00}_{12}=1\}.$$
Thus, $L^+_1\cdot L^+_2 + L^-_1\cdot L^-_2=F^{00}_{12}$.
Since for all $1\leqslant i<j \leqslant2n-1$, the quadratic term $x_ix_j$ appears in $F$,  for all $3\leqslant i<j \leqslant2n-1$, the quadratic term $x_ix_j$ appears in $F^{00}_{12}.$
Thus, $F^{00}_{12}\in \mathbb{Z}_2[x_3, \ldots, x_{2n-1}]$ is a complete quadratic polynomial over $2n-3$ variables and it has a twice-linear 2-partition.
Since $2n\geqslant 8$, $2n-3\geqslant 5$.
By Lemma~\ref{lem-poly-partition}, we have $2n-3= 5$, 
and after renaming variables, $$F=(x_3+x_4+a)(x_4+x_5+b)+(x_3+x_4+x_5+x_6+c)(x_3+x_4+x_5+x_7+d)$$ where $a, b, c, d\in \mathbb{Z}_2$.
Without loss of generality, we may assume that $L_1^+\cdot L_2^+=(x_3+x_4+a)(x_4+x_5+b).$
Then, $$S_{+}=\mathscr{S}(\partial^+_{12}f)=\{(x_3, \ldots, x_{8})\in \mathscr{E}_{2n-2}\mid x_3= {x_4}+a
\text{ and }  x_4= {x_5}+b\},$$
for some $a, b\in \mathbb{Z}_2$.

Clearly $\partial^+_{12}f$ is a 6-ary signature and $|\mathscr{S}(\partial^+_{12}f)|=2^{5-2}=2^3$.
We show that $\partial^+_{12}f\notin \mathcal{B}^{\otimes 3}\cup\mathcal{F}_6\cup\mathcal{F}^H_6$.
Then, by Corollary~\ref{lem-non-f6}, we get  \#P-hardness.
Since the support of a signature in $\mathcal{F}_6\cup\mathcal{F}^H_6$ is either $\mathscr{E}_6$ or $\mathscr{O}_6$ whose sizes are both $2^5$.
Thus, $\partial^+_{12}f\notin \mathcal{F}_6\cup\mathcal{F}^H_6$.
For any 6-ary signature $g$ in $\mathcal{B}^{\otimes 3}$,
its 6 variables can be divided into three independent pairs such that on the support $\mathscr{S}(g)$, 
the values of variables inside each pair do not rely on the values of variables of other pairs.
Thus, if we pick any three variables in $\mathscr{S}(g)$, the degree of freedom of them is at least $2$; more precisely, there are at least 4 assignments on these three variables which
can be extended to an input in $\mathscr{S}(g)$.
However, in $\mathscr{S}(\partial^+_{12}f)$, the degree of freedom of variables $x_3, x_4, x_5$ is only $1$, namely
there are only two assignments on $x_3, x_4, x_5$ that can be extended to an input in $\mathscr{S}(\partial^+_{12}f)$.
Thus, $\partial^+_{12}f\notin \mathcal{B}^{\otimes 3}$.
This completes the proof of Lemma~\ref{lem-affine-support}.
\end{proof}

\subsection{Affine signature condition}
 Finally, by further assuming that $f$ has affine support, we consider whether $f$ itself is an affine signature. 
We prove that this is true only for signature of arity $2n\geqslant 10$.
For signature $f$ of arity $2n=8$, we show that either $f\in \mathscr{A}$ or the following signature is realizable. 
$$h_8=\chi_T \cdot (-1)^{x_1x_2x_3+x_1x_2x_5+x_1x_3x_5+x_2x_3x_5}, \text{ where } T = \mathscr{S}(h_8)=\mathscr{S}(f_8).$$
Note that in the support $\mathscr{S}(f_8)$
(see its definition (\ref{equ-T-polynomial}) for this
\emph{Queen of the Night} $f_8$), 
by taking  $x_1, x_2, x_3, x_5$ as free variables,  the remaining 4 variables are mod 2 sums of ${4 \choose 3}$ subsets of $\{x_1, x_2, x_3, x_5\}$.
Clearly, $h_8$ is not affine, but it has affine support and all its nonzero entries have the same norm.
One can check that $h_8$ satisfies {\sc 2nd-Orth} and $h_8\in \int_{\mathcal{B}}\mathscr{A}.$
But fortunately, we show that by merging $h_8$, we can realize a 6-ary signature that is not in 
$\mathcal{B}^{\otimes}\cup\mathcal{F}_6\cup\mathcal{F}_6^H$.
By Corollary \ref{lem-non-f6}, we are done.

After we give one more result about multilinear boolean polynomials, we  make our final step towards Theorem~\ref{thm-holantb}. 

\begin{lemma}\label{lem-cubic-poly}
Let $F(x_1, \ldots, x_n)\in \mathbb{Z}_2[x_1, \ldots, x_n]$ be a complete cubic polynomial, 
 $L(x_2,\ldots, x_n)\in \mathbb{Z}_2[x_2, \ldots, x_n]$ and $d(L)\leqslant 1$.
If we substitute  $x_1$ by 
$x_{n+1}+L(x_2,\ldots, x_n)$ 
in $F$ to get $F'$, and suppose 
$F'(x_2, \ldots, x_{n+1})=F(x_{n+1}+L, x_2,\ldots, x_n)\in \mathbb{Z}_2[x_2, \ldots, x_{n+1}]$ 
 is also a complete cubic polynomial, then 
\begin{itemize}
    \item If $n\geqslant 5$, then
    $L$ must be a constant $\epsilon = 0 ~\mbox{or}~ 1$.
    \item If $n=4$, then
    $L$ must be either $ \epsilon$, or
    of the form $x_i+x_j+ \epsilon$, for some $\epsilon = 0 ~\mbox{or}~ 1$, for some $\{i, j\}\in \{2, 3, 4\}.$
\end{itemize}
\end{lemma}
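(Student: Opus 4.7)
The plan is to track the cubic part of $F'$ carefully under the substitution and then reduce the problem to a combinatorial condition on which variables occur in $L$. Write $L = \epsilon + \sum_{i \in S} x_i$ with $\epsilon \in \mathbb{Z}_2$ and $S \subseteq \{2,\ldots,n\}$; the goal is to determine the possible $S$. Since $F$ is multilinear, split it as $F = x_1 A(x_2,\ldots,x_n) + B(x_2,\ldots,x_n)$ with $d(A)\leqslant 2$ and $d(B) \leqslant 3$. Because $F$ is a complete cubic polynomial, its cubic part is exactly $\sum_{1 \leqslant i<j<k\leqslant n} x_ix_jx_k$, so the degree-$2$ part of $A$ must be $e_2 := \sum_{2\leqslant j<k\leqslant n} x_jx_k$ and the degree-$3$ part of $B$ must be $e_3 := \sum_{2\leqslant i<j<k\leqslant n} x_ix_jx_k$.

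Next, substitute $x_1 = y + L$ where $y = x_{n+1}$, obtaining $F' = yA + LA + B$, and reduce modulo $\{x_i^2 - x_i\}$. The key observation is that $A$ has degree at most $2$ and $L$ has degree at most $1$ in $x_2,\ldots,x_n$, so $LA$ has total degree at most $3$, and the cubic terms of $F'$ arise only from $y \cdot (\text{quadratic part of }A) = y e_2$ and from $L\cdot(\text{quadratic part of }A) = L e_2$ together with $e_3$. The term $y e_2$ already produces every required cubic monomial $yx_jx_k$ with $\{j,k\}\subseteq\{2,\ldots,n\}$, and with coefficient $1$, so no constraint arises from cubics containing $y$. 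For cubics not involving $y$, I will compute the coefficient of $x_a x_b x_c$ (with $\{a,b,c\}\subseteq\{2,\ldots,n\}$) in $Le_2 \bmod \{x_i^2-x_i\}$: when multiplying $x_i \cdot x_jx_k$ for $i \in S$, a new cubic $x_ax_bx_c$ is produced precisely when $i\in\{a,b,c\}$ and $\{j,k\}$ is the complementary pair, giving total coefficient $|S\cap\{a,b,c\}| \bmod 2$. Adding the contribution $1$ from $e_3$, the coefficient of $x_ax_bx_c$ in $F'$ is $1 + |S\cap\{a,b,c\}| \bmod 2$.

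The hypothesis that $F'$ is complete cubic in $x_2,\ldots,x_{n+1}$ therefore forces
\begin{equation*}
|S \cap \{a,b,c\}| \equiv 0 \pmod{2} \quad \text{for every } \{a,b,c\}\subseteq\{2,\ldots,n\}.
\end{equation*}
This is the main combinatorial condition, and solving it is the final and principal step. If $n\geqslant 5$, then for any two indices $c, d \in \{2,\ldots,n\}$ one can choose $a,b \in \{2,\ldots,n\}\setminus\{c,d\}$ with $a\ne b$, and comparing the parity conditions for $\{a,b,c\}$ and $\{a,b,d\}$ yields $\chi_S(c)=\chi_S(d)$. Hence $\chi_S$ is constant on $\{2,\ldots,n\}$, and applying the condition to a single triple forces the constant to be $0$, giving $S=\emptyset$, so $L = \epsilon$. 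If $n=4$, the only triple in $\{2,3,4\}$ is $\{2,3,4\}$ itself, and the condition $|S\cap\{2,3,4\}|\equiv 0\pmod 2$ permits exactly $|S|\in\{0,2\}$, which corresponds respectively to $L=\epsilon$ and $L = x_i+x_j+\epsilon$ for some $\{i,j\}\subseteq\{2,3,4\}$.

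The main obstacle I anticipate is the careful bookkeeping in step two: verifying that degree-$2$ contributions from $A$ and $B$, as well as the linear part of $A$ multiplied by $L$, genuinely produce no cubic terms after reduction modulo $\{x_i^2-x_i\}$, so that the cubic part of $F'$ is exactly $ye_2 + Le_2 + e_3$ as claimed. Once that reduction is laid out cleanly, the combinatorial characterization of $S$ is a short parity argument, and the case split $n\geqslant 5$ versus $n=4$ follows immediately.
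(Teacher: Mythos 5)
Your proof is correct and follows essentially the same approach as the paper: both decompose $F$ to isolate the cubic coefficient $e_2$ of $x_1$ and the residual cubic part $e_3$, substitute to derive that $L\cdot e_2$ must contribute no cubic terms mod $2$, and then solve the resulting combinatorial constraint on which variables appear in $L$. Your final step — phrasing the constraint uniformly as $|S\cap\{a,b,c\}|\equiv 0 \pmod 2$ for every triple, then deducing $\chi_S$ is constant for $n\geqslant 5$ — is a slightly cleaner way to finish than the paper's argument that fixes $x_2\in L$ and examines specific triples, but the underlying mechanism is identical.
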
{}
\begin{proof}
Since $F(x_1, \ldots, x_n)$ is a complete cubic polynomial, we can write it  as 
$$F(x_1, \ldots, x_n)=x_1\cdot \sum_{2\leqslant i<j \leqslant n}x_ix_j+\sum_{2\leqslant i<j<k \leqslant n}x_ix_jx_k+G(x_1, \ldots, x_n)$$
where $d(G)\leqslant 2$.
Then, 
\begin{equation*}
    \begin{aligned}
F'(x_2, \ldots, x_n, x_{n+1})=(x_{n+1}+L)\cdot \sum_{2\leqslant i<j \leqslant n}x_ix_j+\sum_{2\leqslant i<j<k \leqslant n}x_ix_jx_k+G(x_{n+1}+L, \ldots, x_n).
    \end{aligned}
\end{equation*}
Let $G'(x_2, \ldots, x_n, x_{n+1})=G(x_{n+1}+L, \ldots, x_n).$ Since $d(L)\leqslant 1$ and $d(G)\leqslant 2$, $d(G')\leqslant 2$.
Then, there is no cubic term in $G'(x_2, \ldots, x_n, x_{n+1})$.
Since $F'(x_2, \ldots, x_n, x_{n+1})$ is a complete cubic polynomial over variables $(x_2, \ldots, x_n, x_{n+1})$ and $x_{n+1}\cdot \sum_{2\leqslant i<j \leqslant n}x_ix_j+\sum_{2\leqslant i<j<k \leqslant n}x_ix_jx_k$ already gives every cubic term over $(x_2, \ldots, x_n, x_{n+1})$ exactly once, there is no cubic term in $L\cdot \sum_{2\leqslant i<j \leqslant n}x_ix_j$ (after taking module $2$). 
If $L\equiv 0$ or $1$, then we are done.
Otherwise, there is a variable that appears in $L$.
Without loss of generality, we may assume that $x_2\in L$ (i.e., $x_2$ appears in $L$). 
%

Let $Q(x_3, \ldots, x_{n})=\sum_{3\leqslant i<j \leqslant n}x_ix_j\in \mathbb{Z}_2[x_3, \ldots, x_{n}].$
Since $n\geqslant 4$, we have $Q\not\equiv 0$.
For every $x_ix_j\in Q$, since $x_2\in L$, the cubic term $x_2x_ix_j$ will appear in $L\cdot \sum_{2\leqslant i<j \leqslant n}x_ix_j$. 
To cancel it, exactly one between $x_i \cdot x_2x_j$ and $x_j \cdot x_2x_i$ must also appear in $L\cdot \sum_{2\leqslant i<j \leqslant n}x_ix_j$.
Thus, exactly one between $x_i$ and $x_j$  appears in $L$.

If $n\geqslant 5$, then $x_3x_4, x_4x_5, x_3x_5\in Q$.
Thus, exactly one between $x_3$ and $x_4$ is in $L$,
exactly one between $x_4$ and $x_5$ is in $L$, and 
exactly one between $x_3$ and $x_5$ is in $L$.
Clearly, this is a contradiction.

If $n=4$, then $Q=x_3x_4$. Either $x_3$ or $x_4$ appears in $L$. Thus, $L$ is a sum of two variables among $\{x_2, x_3, x_4\}$ plus a constant $0$ or $1$.
\end{proof}

\begin{lemma}\label{lem-in-affine}
Let $\mathcal{F}$ be non-$\mathcal{B}$ hard.
Let $f\in \mathcal{F}$ be an irreducible $2n$-ary $(2n\geqslant 8)$ signature with parity. 
Then,
\begin{itemize}
    \item  $\Holantb(\mathcal{F})$ is \#P-hard, or
    \item there is  a signature $g\notin \mathscr{A}$ of arity $2k<2n$  that is realizable from $f$ and $\mathcal{B}$, or
    \item $f\in \mathscr{A}$.
\end{itemize}
\end{lemma}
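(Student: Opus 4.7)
The plan is first to apply Lemmas~\ref{lem-affine-norm} and \ref{lem-affine-support} so that we may assume $f$ satisfies {\sc 2nd-Orth}, lies in $\int_{\mathcal{B}}\mathscr{A}$, has $f(\alpha)=\pm 1$ for every $\alpha\in\mathscr{S}(f)$, and $\mathscr{S}(f)$ is an affine subspace of $\mathbb{Z}_2^{2n}$ of some dimension $m$. Pick free variables $y_1,\ldots,y_m$ on $\mathscr{S}(f)$ so that every original variable $x_k$ restricted to $\mathscr{S}(f)$ equals an affine linear combination of $y_1,\ldots,y_m$. There is then a unique multilinear representative $F(y_1,\ldots,y_m)\in\mathbb{Z}_2[y_1,\ldots,y_m]$ with $f=\chi_{\mathscr{S}(f)}\cdot(-1)^{F}$, and $f\in\mathscr{A}$ iff $d(F)\leqslant 2$. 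The task is therefore to show $d(F)\leqslant 2$, or else realize a non-affine signature of smaller even arity.

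Next I translate the assumption $\partial^b_{ij}f\in\mathscr{A}$ for every pair $\{i,j\}$ and every $b\in\mathcal{B}$ into constraints on $F$. Each such merging pins $(x_i,x_j)$ to a compatible pair of values, and yields (up to a scalar $\pm 2$) an $\mathscr{A}$-signature on the restricted affine subspace. Since every $\mathscr{A}$-signature has affine support, pulling back the pinning through the affine support relations forces, in the free-variable presentation, the degree bounds $d(F_{ij}^{00}+F_{ij}^{11})\leqslant 1$ and $d(F_{ij}^{01}+F_{ij}^{10})\leqslant 1$ for all $\{i,j\}$. The second part of Lemma~\ref{lem-mutilinear-poly} then yields $d(F)\leqslant 3$, and the extremal case forces $F$ to be a complete cubic polynomial in $y_1,\ldots,y_m$. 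If $d(F)\leqslant 2$, then $f\in\mathscr{A}$ and we are done.

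Now suppose $d(F)=3$, so $F$ is complete cubic in $m$ free variables. The arity constraint emerges by relating the original variables to the free ones: each non-free $x_k$ has the form $y_\ell+L_k$ with $d(L_k)\leqslant 1$ on $\mathscr{S}(f)$, and a change of the free set (permitted because it corresponds to merging $f$ with $=_2$ against another free variable, which preserves $\int_{\mathcal{B}}\mathscr{A}$) must yield another complete cubic polynomial; this invokes Lemma~\ref{lem-cubic-poly}. For $m\geqslant 5$ every such $L_k$ is forced to be a constant, so after variable negation each non-free $x_k$ coincides with some $y_\ell$. Combined with the parity of $f$ and the {\sc 2nd-Orth} norm count $|{\bf f}_{ij}^{ab}|^2=\lambda$, this bounds $2n$ from above and gives $2n\leqslant 8$. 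Hence for $2n\geqslant 10$ the case $d(F)=3$ cannot occur, and $f\in\mathscr{A}$.

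For the remaining case $2n=8$ we must have $m=4$ and, after absorbing a quadratic perturbation of $F$ by negating variables and a linear perturbation by flipping variables, $F=\sum_{1\leqslant i<j<k\leqslant 4} y_iy_jy_k$. The dimension, parity, and distance conditions on the support then force, via an argument paralleling Lemma~\ref{lem-indenpent-set}, the identification $\mathscr{S}(f)=\mathscr{S}(f_8)$ up to an automorphism of $G_8$ (the rigidity of Lemma~\ref{authomorphism} supplies the normalization), so after renaming and flipping variables $f$ coincides, up to $\pm 1$ normalization, with $h_8$. Finally, by merging a carefully chosen pair of variables of $h_8$ using $=_2^+$ or $=_2^-$, I realize a $6$-ary signature $g$ whose support dimension and local structure place it outside $\mathcal{B}^{\otimes}\cup\mathcal{F}_6\cup\mathcal{F}_6^H$: the support-degrees-of-freedom test used at the end of Lemma~\ref{lem-affine-support} rules out $\mathcal{B}^{\otimes 3}$, while $|\mathscr{S}(g)|$ cannot equal $|\mathscr{E}_6|=|\mathscr{O}_6|$ so $g\notin\mathcal{F}_6\cup\mathcal{F}_6^H$. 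Corollary~\ref{lem-non-f6} then yields $\#$P-hardness, giving the first alternative. The principal obstacle is the $2n=8$ case, where identifying $f$ as equivalent to $h_8$ and extracting a $6$-ary merge escaping $\mathcal{B}^{\otimes}\cup\mathcal{F}_6\cup\mathcal{F}_6^H$ requires direct casework in the spirit of the $g_8,g'_8$ analysis of Lemma~\ref{lem-affine-norm}.
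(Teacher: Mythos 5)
Your proposal tracks the paper's proof almost exactly through the degree bound: reduce to the case where $f$ satisfies {\sc 2nd-Orth}, $f\in\int_{\mathcal{B}}\mathscr{A}$, all nonzero entries are $\pm 1$, and $\mathscr{S}(f)$ is affine, then pick free variables, get the multilinear representative $F$, derive $d(F_{ij}^{00}+F_{ij}^{11})\leqslant 1$ and $d(F_{ij}^{01}+F_{ij}^{10})\leqslant 1$, apply Lemma~\ref{lem-mutilinear-poly} to get $d(F)\leqslant 3$ with the complete-cubic extremal case, and use the free-variable-substitution argument via Lemma~\ref{lem-cubic-poly} together with $|I(x_k)|\geqslant 2$ to force $m\leqslant 4$ and then $2n\leqslant 8$. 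So far this is the paper's route (though you should also dispose of $m=3$, which the paper shows forces $2n\leqslant 7$, a contradiction).

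Where you genuinely diverge is the residual case $2n=8$, and here your argument has a real gap. You claim that by negating and flipping variables one can normalize $F$ to exactly $\sum_{1\leqslant i<j<k\leqslant 4}y_iy_jy_k$, hence $f=h_8$ (up to $\pm 1$). But the quadratic perturbations that flipping a free variable $y_i$ contributes are the complete quadratics in the remaining three free variables, and these four contributions satisfy a linear relation (their mod-2 sum is zero), so they span only a $3$-dimensional subspace of the $6$-dimensional space of quadratics over four variables. Negating contributes only linear perturbations. Consequently most quadratic parts of $F$ \emph{cannot} be absorbed, and the identification $f=h_8$ does not follow; you would need to show separately that {\sc 2nd-Orth} and $f\in\int_{\mathcal{B}}\mathscr{A}$ restrict the quadratic part to lie in the realizable coset, which you do not do. (You also have negating and flipping swapped in their effects, but that is a minor slip.) Moreover, even granting $f=h_8$, you only assert the existence of a ``carefully chosen pair'' $\{i,j\}$ whose merge escapes $\mathcal{B}^{\otimes}\cup\mathcal{F}_6\cup\mathcal{F}^H_6$; this should be exhibited or proved, because a priori the merge could land in $\mathcal{B}^{\otimes 3}$. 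The paper avoids both issues entirely: it never identifies $f$ with $h_8$, but instead splits on whether some $\mathscr{S}(f_{ij}^{00})=\mathscr{S}(f_{ij}^{11})$ (then $\partial^{\pm}_{ij}f$ has support of size $2$, automatically outside $\mathcal{B}^{\otimes}\cup\mathcal{F}_6\cup\mathcal{F}_6^H$), or all pairs have disjoint split supports (then $|\mathscr{S}(\partial^+_{ij}f)|=8$ rules out $\mathcal{F}_6\cup\mathcal{F}_6^H$, and if it is in $\mathcal{B}^{\otimes 3}$ for all $\{i,j\}$ one gets $F_{ij}^{00}+F_{ij}^{11}\equiv 0$ or $1$ and $F_{ij}^{01}+F_{ij}^{10}\equiv 0$ or $1$ everywhere, forcing $d(F)\leqslant 2$ by Lemma~\ref{lem-mutilinear-poly} --- a contradiction with $d(F)=3$). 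This support-based dichotomy delivers the conclusion directly without the normalization step your argument relies on.
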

\begin{proof}

Again, we may assume that $f$ satisfies {\sc 2nd-Orth} and $f\in \int_{\mathcal{B}}\mathscr{A}$.
Also by Lemmas \ref{lem-affine-norm} and \ref{lem-affine-support}, we may assume that  $f(\alpha)=\pm 1$ for all $\alpha \in \mathscr{S}(f)$ and $\mathscr{S}(f)$ is an affine linear space. 
Let $\{x_1, \ldots, x_m\}$ be a set of free variables of $\mathscr{S}(f)$.
Then, on the support $\mathscr{S}(f)$,
every variable $x_i$ $(1\leqslant i \leqslant 2n)$ is expressible as
a unique affine linear combination over $\mathbb{Z}_2$ of these free variables, i.e., 
$x_i=L_i(x_1, \ldots, x_m)=\lambda^0_i+\lambda^1_i x_1+\ldots +\lambda^m_i x_m$, where $\lambda^0_i, \ldots, \lambda^{m}_i \in \mathbb{Z}_2$.
Clearly, for $1\leqslant i\leqslant m$, $L(x_i)=x_i$.
Then, 
\begin{equation*}
\begin{aligned}
\mathscr{S}(f)&=\{(x_1, \ldots, x_{2n})\in \mathbb{Z}_2^{2n}\mid x_1=L_1, \ldots, x_{2n}
=L_{2n}\}\\
&=\{(x_1, \ldots, x_{2n})\in \mathbb{Z}_2^{2n}\mid x_{m+1}=L_{m+1}, \ldots, x_{2n}=L_{2n}\}.
\end{aligned}  
\end{equation*}

Also, let $I(x_i)=\{1\leqslant k\leqslant m\mid \lambda^k_i=1\}.$
Clearly, for $1\leqslant i\leqslant m$, $I(x_i)=\{i\}$.
For $m+1\leqslant i\leqslant 2n$, we show that $|I_{x_i}|\geqslant 2$.
For a contradiction, suppose that there exists $m+1\leqslant i\leqslant 2n$ such that $|I_{x_i}|=0$ or $1$.
If $|I_{x_i}|=0$, then $x_i$ takes a constant value in $\mathscr{S}$.
Then, among $f_i^0$ and $f_i^1$, one is a zero signature. 
Thus, $f$ is reducible. Contradiction.
If $|I_{x_i}|=1$, then $x_i=x_k$ or $x_k+1$ for some free variable $x_k$.
Then, among $f_{ik}^{00}$, $f_{ik}^{01}$, $f_{ik}^{10}$ and $f_{ik}^{11}$, two are zero signatures.
Thus, $f$ does not satisfy {\sc 2nd-Orth}. Contradiction.

Since $f(\alpha)=\pm 1$ for all $\alpha \in \mathscr{S}(f)$ and each $\alpha \in \mathscr{S}(f)$ can be uniquely decided by its value on the first $m$ free variables, there exists a unique multilinear boolean polynomial $F(x_1, \ldots, x_m)\in \mathbb{Z}_2[x_1, \ldots, x_m]$ such that $$f(x_1, \ldots, x_m, \ldots, x_{2n})=\chi_{S}(-1)^{F(x_1, \ldots, x_m)}$$ where $S=\mathscr{S}(f)$. 
If $d(F)\leqslant 2$, then clearly $f\in \mathscr{A}$. We are done.
Thus, we may assume that $d(F)>2$ and hence $m>2$.
Remember that $F_{ij}^{ab}$ denotes the polynomial obtained by setting variables $(x_i, x_j)$ of $F$ to $(a, b)\in \mathbb{Z}_2^2$.
Then, $f_{ij}^{ab}=(-1)^{F^{ab}_{ij}}$ on  $\mathscr{S}(f)$.
We will show that for all $i, j\in [m]$, $d(F^{00}_{ij}+F^{11}_{ij})\leqslant 1$ and $d(F^{01}_{ij}+F^{10}_{ij})\leqslant 1$.
For brevity of notation, 
we prove this for $\{i, j\}=\{1, 2\}.$
The proof for arbitrary $\{i, j\}$ is the same by replacing $\{1, 2\}$ with $\{i, j\}$.
We first show that $d(F^{00}_{ij}+F^{11}_{ij})\leqslant 1$.
We use $S_0$ to denote $\mathscr{S}(f_{ij}^{00})$ and $S_1$ to denote $\mathscr{S}(f_{ij}^{11}).$
By Lemma~\ref{lem-disjoint-support}, there are two cases, $S_0=S_1$ or $S_0\cap S_1=\emptyset$.


\begin{itemize}
    \item
Suppose that $S_0=S_1$. For convenience, we use $L^0_{i}$ to denote $(L_{i})_{12}^{00}$ and $L^1_{i}$ to denote $(L_{i})_{12}^{11}$.
Then,
\begin{equation*}
\begin{aligned}
S_0=& \{(x_3, \ldots, x_{2n})\in \mathbb{Z}_2^{2n-2}\mid x_{m+1}=L^{0}_{m+1}, \ldots, x_{2n}=L^0_{2n}\}\\
S_1=& \{(x_3, \ldots, x_{2n})\in \mathbb{Z}_2^{2n-2}\mid x_{m+1}=L^{1}_{m+1}, \ldots, x_{2n}=L^1_{2n}\}.
    \end{aligned}
\end{equation*}
So $L^{0}_{i} \equiv L^{1}_{i}$ for all $i\geqslant m+1$.
Thus, either $\{1, 2\}\subseteq I(x_i)$ or $\{1, 2\}\cap I(x_i)=\emptyset$ for $i\geqslant m+1$.

Let ${S}_{+}=\{\alpha\in S_0\mid f^{00}_{ij}(\alpha)=f^{11}_{ij}(\alpha)\}$ and  ${S}_{-}=\{\alpha\in S_0\mid f^{00}_{ij}(\alpha)=-f^{11}_{ij}(\alpha)\}.$ 
Then, $\langle {\bf f}^{00}_{ij},{\bf f}^{11}_{ij}\rangle=1\cdot |{S}_{+}|-1 \cdot |{S}_{-}|=0$.
Since $S_0=S_+\cup S_-$, $|{S}_{+}|=|{S}_{-}|=\frac{1}{2}|S_0|$.
Note that $\mathscr{S}(\partial_{12}f)={S}_{+}$ and $\mathscr{S}(\partial^{-}_{12}f)={S}_{-}$.
By our assumption that $f\in \int_{\mathcal{B}}\mathscr{A}$, $\partial_{12}f, \partial^-_{12}f \in \mathscr{A}$.
Thus,
both ${S}_{+}$ and ${S}_{-}$ are affine linear subspaces of $S_0=S_1$.
Since $|{S}_{+}|=|{S}_{-}|=|S_0|/2$, there exists an (affine) linear polynomial $L(x_3, \ldots, x_{2n})$ such that 
$${S}_{+}=\{(x_3, \ldots, x_{2n})\in {S_0} \mid L(x_3, \ldots, x_{2n})=0\},$$
and $${S}_{-}=\{(x_3, \ldots, x_{2n})\in {S_0}  \mid L(x_3, \ldots, x_{2n})=1\}.$$
For $(x_3, \ldots, x_{2n})\in {S_0}$,  and $i\geqslant m+1$,
we can substitute the variable $x_i$ that appears in $L(x_3, \ldots, x_{2n})$ with $L_i^0 \equiv L_i^1$.
Then, we get an (affine) linear polynomial $L'(x_3, \ldots, x_m)\in \mathbb{Z}_2[x_1, \ldots, x_m]$ such that $L'(x_3, \ldots, x_m)=L(x_3, \ldots, x_m, x_{m+1}, \ldots, x_{2n})$ for $(x_3, \ldots, x_{2n})\in {S_0}$.
Thus, $${S}_{+}=\{(x_3, \ldots, x_{2n})\in {S_0} \mid L'(x_3, \ldots, x_{m})=0\},$$
and $${S}_{-}=\{(x_3, \ldots, x_{2n})\in {S_0}  \mid L'(x_3, \ldots, x_{m})=1\}.$$
Note that as $|{S}_{+}|=|{S}_{-}| >0$, the affine linear polynomial
$L'$ is non-constant, i.e., $d(L') =1$.
Then, for every $(x_3, \ldots, x_m)\in \mathbb{Z}^{m-2}_2$,
$$(-1)^{F_{12}^{00}(x_3, \ldots, x_m)}=(-1)^{F_{12}^{11}(x_3, \ldots, x_m)}\text{ if }  L'(x_3, \ldots, x_m)=0$$ and $$(-1)^{F_{12}^{00}(x_3, \ldots, x_m)}=-(-1)^{F_{12}^{11}(x_3, \ldots, x_m)} \text{ if } L'(x_3, \ldots, x_{m})=1.$$
Thus, $$(-1)^{F_{12}^{00}(x_3, \ldots, x_m)+F_{12}^{11}(x_3, \ldots, x_m)}=(-1)^{L'(x_3, \ldots, x_m)}.$$
Therefore, $F_{12}^{00}(x_3, \ldots, x_m)+F_{12}^{11}(x_3, \ldots, x_m)\equiv L'(x_3, \ldots, x_m).$
Then, $d(F_{12}^{00}+F_{12}^{11})=1$.

\item Suppose that ${S}_0\cap{S}_1=\emptyset$.
Then, there exists a variable $x_i$ where $i\geqslant m+1$ such that between $\{1, 2\}$, exactly one index is in $I(x_i)$.
Without loss of generality, we may assume that $i=m+1$, $1\in I(x_{m+1})$ and $2\notin I(x_{m+1})$.
Then, $x_{m+1}=x_1+K(x_3, \ldots, x_m)$ where $K\in \mathbb{Z}_2[x_3, \ldots, x_m]$ is an
(affine) linear polynomial.
Consider $S_0$. 
$$S_0= \{(x_3, \ldots, x_{2n})\in \mathbb{Z}_2^{2n-2}\mid x_1=x_2=0, x_{m+1}=x_1+K, x_{m+2}=L_{m+2} \ldots, x_{2n}=L_{2n}\}.$$
Since $x_1=x_2$ on $S_0$, for every $i\geqslant m+2$, 
if $x_1$ or $x_2$ appear in  $L_{i}$, we substitute each one of them with $x_{m+1}+K$. 
We get a linear polynomial $K_{i}\in \mathbb{Z}_2[x_3, \ldots, x_m, x_{m+1}]$.
Then, for every $(x_3, \ldots, x_{2n})\in {S}_0$, $L_{i}=K_{i}$.
Thus, $$S_0= \{(x_3, \ldots, x_{2n})\in \mathbb{Z}_2^{2n-2}\mid x_{m+1}+K=0, x_{m+2}=K_{m+2} \ldots, x_{2n}=K_{2n}\}.$$
Similarly, we have 
$$S_1= \{(x_3, \ldots, x_{2n})\in \mathbb{Z}_2^{2n-2}\mid x_{m+1}+K=1, x_{m+2}=K_{m+2} \ldots, x_{2n}=K_{2n}\}.$$
Let ${S}_\cup={S}_0\cup{S}_1$.
Then,
$$S_\cup= \{(x_3, \ldots, x_{2n})\in \mathbb{Z}_2^{2n-2}\mid x_{m+2}=K_{m+2} \ldots, x_{2n}=K_{2n}\}.$$
Thus, we can pick $x_3, \ldots, x_{m}, x_{m+1}$ as a set of free variables of $S_\cup$.

Consider $g=\partial_{12}f$. Clearly, $\mathscr{S}(g)
={S}_{\cup}$ since $S_0\cap S_1=\emptyset$.
Then, there exists a unique multilinear boolean polynomial $G(x_3, \ldots, x_{m+1})\in \mathbb{Z}_2[x_3, \ldots, x_{m+1}]$ such that $$g(x_3, \ldots, x_{2n})=\chi_{S_{\cup}}\cdot (-1)^{G(x_3, \ldots, x_{m+1})}.$$
For every $(x_3, \ldots, x_{2n})\in S_0$ that is uniquely decided by $(0, 0, x_3, \ldots, x_{m})\in \{(0, 0)\}\times\mathbb{Z}_2^{m-2}$, $x_{m+1}=K(x_3, \ldots, x_{m})$ and
$f_{12}^{00}(x_3, \ldots, x_{2n})=g(x_3, \ldots, x_{2n})$.
Thus, for every $(x_3, \ldots, x_{m})\in \mathbb{Z}_2^{m-2}$,
$$(-1)^{F^{00}_{12}(x_3, \ldots, x_{m})}=(-1)^{G(x_3, \ldots, x_{m}, K)}.$$
Also, for every $(x_3, \ldots, x_{2n})\in S_1$ that is uniquely decided by $(1, 1, x_3, \ldots, x_{m})\in \{(1, 1)\}\times\mathbb{Z}_2^{m-2}$,
 $x_{m+1}=K(x_3, \ldots, x_{m})+1$, and
$f_{12}^{11}( x_3, \ldots, x_{2n})=g(x_3, \ldots, x_{2n})$.
Thus, for every $(x_3, \ldots, x_{m})\in \mathbb{Z}_2^{m-2}$,
$$(-1)^{F^{11}_{12}(x_3, \ldots, x_{m})}=(-1)^{G(x_3, \ldots, x_{m}, K+1)}.$$
Thus, $F^{00}_{12}(x_3, \ldots, x_{m}) \equiv G(x_3, \ldots, x_{m}, K)$ and $F^{11}_{12}(x_3, \ldots, x_{m})\equiv G(x_3, \ldots, x_{m}, K+1)$. 

Since $f\in \int_{\mathcal{B}}\mathscr{A}$, $g=\partial_{12}f\in \mathscr{A}$.
Thus, $$g'(x_3, \ldots, x_m, x_{m+1})=(-1)^{G(x_3, \ldots, x_m, x_{m+1})}$$ is also in $\in \mathscr{A}$.
Let $y=x_{m+1}+K(x_3, \ldots, x_m)\in \mathbb{Z}[x_3, \ldots, x_{m+1}]$ be an affine linear combination of variables $x_3, \ldots, x_{m+1}$.
Since $g\in \mathscr{A}$, 
by Lemma~\ref{lem-congruity}, 
$$d[G(x_3, \ldots, x_m, K)+G(x_3, \ldots, x_m, K+1)]\leqslant 1.$$
Thus, $d(F^{00}_{12}+F^{11}_{12})\leqslant 1$.
Also if $d(G)=1$, then by Lemma~\ref{lem-congruity}
\begin{equation}\label{equ-01}
    d(F^{00}_{12}+F^{11}_{12})=0, \text{ i.e., } F^{00}_{12}+F^{11}_{12}\equiv 0 ~\mbox{or}~ 1.
\end{equation}
\end{itemize}

Similarly, we can show that $d(F^{01}_{12}+F^{10}_{12})\leqslant 1$.
Thus,  for all $i, j\in [m]$, $d(F^{00}_{ij}+F^{11}_{ij})\leqslant 1$ and $d(F^{01}_{ij}+F^{10}_{ij})\leqslant 1$.
By Lemma~\ref{lem-mutilinear-poly}, $d(F)\leqslant 3$.

If $d(F)\leqslant 2$, then clearly $f\in\mathscr{A}$. We are done.
Otherwise, $d(F)= 3$ and by  Lemma~\ref{lem-mutilinear-poly}, $F$ is a complete cubic multilinear polynomial over $m$ variables.
If we pick another set $X$ of $m$ free variables and substitute variables of $F$ by variables in $X$, 
then we will get 
a cubic multilinear polynomial $F'$ over variables in $X$.
 Same as the analysis of $F$,
$F'$ is also a complete cubic  polynomial. 
In particular, consider the variable $x_{m+1}$.
Recall that $|I(x_{m+1})|\geqslant 2$.
Without loss of generality, we assume that $1\in I(x_{m+1})$.
Then, $x_{m+1}=x_1+L(x_2, \ldots, x_{m})$ where $L(x_2, \ldots, x_{m})$ is an affine linear combination of variables $x_2, \ldots, x_{m}$.
We substitute $x_1$ in $F$ by $x_{m+1}+L$, and we get a complete cubic multilinear polynomial $F'(x_2, \ldots, x_{m+1})\in\mathbb{Z}_2[x_2, \ldots, x_{m+1}].$
By Lemma~\ref{lem-cubic-poly}, if $m\geqslant 5$, then $x_{m+1}=x_1$ or $x_{m+1}=\overline{x_1}$. Thus, $I(x_{m+1})=\{1\}$. This contradicts with $|I(x_{m=1})|\geqslant 2$.
Thus, $m\leqslant 4$.

If $m=4$,
then by Lemma~\ref{lem-cubic-poly}, $x_{5}={x_1}+ \epsilon$,
or $x_{5}={x_1+x_i+x_j}+\epsilon$, where $\epsilon = 0 ~\mbox{or}~1$,  for some $2\leqslant i < j\leqslant 4$.
Since $|I(x_5)|\geqslant 2$, the case that  $x_{5}={x_1}+\epsilon$ is impossible.
Similarly, for $i\geqslant m+2$, the variable $x_i$  is a sum of three variables in $\{x_1, x_2, x_3, x_4\}$ plus a constant $0$ or $1$. 
If there exist $x_i$ and $x_j$ for $5\leqslant i < j \leqslant 2n$ such that $I(x_i)=I(x_j)$. 
Then, $x_i=x_j$ or $\overline{x_j}$.
Thus, among $f^{00}_{ij}$, $f^{01}_{ij}$, $f^{10}_{ij}$ and $f^{11}_{ij}$, two are zero signatures. 
Thus, $f$ does not satisfy {\sc 2nd-Orth}. Contradiction.
Thus, $I(x_i)\neq I(x_j)$ for any $5\leqslant i < j \leqslant 2n$.
There are only ${4\choose 3}=4$ ways to pick three variables from $\{x_1, x_2, x_3, x_4\}$.
Thus, $2n\leqslant 4+4=8$.
By the hypothesis $2n \geqslant 8$ of the lemma, we have $2n=8$.
Clearly, $|\mathscr{S}(f)|=2^4=16$.
Due to {\sc 2nd-Orth}, for all $\{i, j\}\in [8]$, 
$|\mathscr{S}(f_{ij}^{00})|=|\mathscr{S}(f_{ij}^{01})|=|\mathscr{S}(f_{ij}^{10})|=|\mathscr{S}(f_{ij}^{11})|=4$.
\begin{itemize}
    \item 
If there exists $\{i, j\}$ such that $\mathscr{S}(f_{ij}^{00})=\mathscr{S}(f_{ij}^{11})$, 
then for any point $\alpha$ in $\mathscr{S}(f_{ij}^{00})=\mathscr{S}(f_{ij}^{11})$, regardless whether $f_{ij}^{00}(\alpha)=f_{ij}^{11}(\alpha)$ or $f_{ij}^{00}(\alpha)=-f_{ij}^{11}(\alpha)$,
either  
$\alpha\in \mathscr{S}(\partial^+_{ij}f)$ or  $\alpha\in \mathscr{S}(\partial^-_{ij}f)$. Thus,
$$\mathscr{S}(\partial^+_{ij}f)\cup \mathscr{S}(\partial^-_{ij}f)=\mathscr{S}(f_{ij}^{00})=\mathscr{S}(f_{ij}^{11}).$$
Also, by {\sc 2nd-Orth}, $$\langle{\bf f}_{ij}^{00},{\bf f}_{ij}^{11}\rangle=|\mathscr{S}(\partial^-_{ij}f)|-|\mathscr{S}(\partial^+_{ij}f)|=0.$$
Thus, $|\mathscr{S}(\partial^+_{ij}f)|=|\mathscr{S}(\partial^-_{ij}f)|=2$.
Note that every 6-ary signature in $\mathcal{B}^{\otimes}$ has support of size $8$, and every signature in $\mathcal{F}_6$ and  $\mathcal{F}^H_6$ has support of size $32$.
Thus, $\partial^+_{ij}f\notin \mathcal{B}\cup\mathcal{F}_6\cup\mathcal{F}^H_6$.
Then, by Corollary~\ref{lem-non-f6}, we get  \#P-hardness.
Similarly, if there exists $\{i, j\}$ such that $\mathscr{S}(f_{ij}^{01})=\mathscr{S}(f_{ij}^{10})$, then we have 
 $|\mathscr{S}(\partial^{\widehat+}_{ij}f)|=|\mathscr{S}(\partial^{\widehat-}_{ij}f)|=2$.
 Thus,  $\partial^{\widehat +}_{ij}f\notin \mathcal{B}^{\otimes}\cup\mathcal{F}_6\cup\mathcal{F}^H_6$.
 Again,  we get  \#P-hardness.
\item Otherwise, for all $\{i, j\}\in [8]$, $\mathscr{S}(f_{ij}^{00})\cap\mathscr{S}(f_{ij}^{11})=\emptyset$ and 
$\mathscr{S}(f_{ij}^{01})\cap\mathscr{S}(f_{ij}^{10})=\emptyset$.
Then, $\mathscr{S}(\partial^{+}_{ij}f)=\mathscr{S}(f_{ij}^{00})\cup\mathscr{S}(f_{ij}^{11})$.
Thus, $|\mathscr{S}(\partial^{+}_{ij}f)|=8$.
Clearly, $\partial^{+}_{ij}f\notin \mathcal{F}_6\cup\mathcal{F}^H_6$.
If $\partial^{+}_{ij}f\notin \mathcal{B}^{\otimes 3}$, then we get  \#P-hardness.
For a contradiction, suppose that $\partial^{+}_{ij}f\in \mathcal{B}^{\otimes 3}$.
Then, $$\partial^{+}_{ij}f=\chi_{\mathscr{S}(\partial^{+}_{ij}f)}(-1)^{G^+_{ij}} \text{ ~where~ } d(G^+_{ij})=1.$$
As we proved above in equation (\ref{equ-01}), $F^{00}_{ij}+F^{11}_{ij}\equiv 0$ or $1.$
Similarly, suppose $\partial^{\widehat +}_{ij}f\in \mathcal{B}^{\otimes 3}$, and 
we can show that $F^{01}_{ij}+F^{10}_{ij}\equiv 0$ or $1.$
Thus, for all $\{i, j\}\subseteq[8]$, $F^{00}_{ij}+F^{11}_{ij}\equiv 0$ or $1$ and $F^{01}_{ij}+F^{10}_{ij}\equiv 0$ or $1.$
Then, by Lemma~\ref{lem-mutilinear-poly}, $d(F)\leqslant 2$. Contradiction. 
\end{itemize}

Suppose that $m=3$. 
Remember that for $4\leqslant i \leqslant 2n$, $|I(x_i)|\geqslant 2$. Thus, $x_i$ is a sum of at least two variables in $\{x_1, x_2, x_3\}$ plus a constant $0$ or $1$.
Again, if there exist $x_i$ and $x_j$ for $4\leqslant i<j\leqslant 2n$ such that $I(x_i)=I(x_j)$, then among $f^{00}_{ij}$, $f^{01}_{ij}$, $f^{10}_{ij}$ and $f^{11}_{ij}$, two are zero signatures. 
Contradiction.
Thus, $I(x_i)\neq I(x_j)$ for any $4\leqslant i<j\leqslant 2n$.
There are ${3\choose 2}+{3\choose 3}=4$ different ways to pick at least 
two variables from $\{x_1, x_2, x_3\}$. Thus, $2n\leqslant 3+4=7$. Contradiction.
\end{proof}

\begin{theorem}\label{thm-holantb}
Suppose that $\mathcal{F}$ is non-$\mathcal{B}$ hard.
Then, $\Holantb( \mathcal{F})$ is \#P-hard.
\end{theorem}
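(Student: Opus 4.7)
The plan is to prove the theorem by induction on the arity $2n$ of a signature $f\in\mathcal{F}$ with $f\notin\mathscr{A}$. Such an $f$ exists because, by our standing assumption, $\mathcal{F}$ does not satisfy condition~(\ref{main-thr}), so in particular $\mathcal{F}\not\subseteq\mathscr{A}$. Recall also that we are restricting to even arities throughout this section, since any nonzero signature of odd arity gives \#P-hardness by Theorem~\ref{odd-dic}. The base cases $2n\leqslant 6$ are handled directly: $\mathcal{B}^{\otimes}\cup\mathcal{F}_6\cup\mathcal{F}_6^H\subseteq\mathscr{A}$, so $f\notin\mathscr{A}$ implies $f\notin\mathcal{B}^{\otimes}\cup\mathcal{F}_6\cup\mathcal{F}_6^H$, and Corollary~\ref{lem-non-f6} (together with its Remark) gives the required \#P-hardness.

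For the inductive step, fix $2n\geqslant 8$ and take $f\in\mathcal{F}$ with $f\notin\mathscr{A}$ of arity $2n$; I want to exhibit either \#P-hardness, or a signature $g\notin\mathscr{A}$ of strictly smaller even arity that is realizable from $f$ and $\mathcal{B}$, so that the induction hypothesis applies to $\Holantb(g,\mathcal{F})\leqslant_T\Holantb(\mathcal{F})$. If $f$ has no parity, then Lemma~\ref{lem-parity} finishes the case. Otherwise $f$ has parity. If $f$ is reducible, write $f=g\otimes h$ with $g,h$ of even arity (nonzero factors of odd arity would immediately give \#P-hardness through Theorem~\ref{odd-dic}). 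Since $g,h\in\mathscr{A}$ would force $f\in\mathscr{A}$, at least one factor is not affine and, being of strictly smaller even arity, can serve as the smaller witness by Lemma~\ref{lin-wang}.

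The hardest case, which is the core of the argument, is when $f$ is irreducible with parity. First I would invoke Lemma~\ref{second-ortho} to assume {\sc 2nd-Orth} for $f$ (otherwise \#P-hardness is immediate). Next I would inspect the merged signatures $\partial^b_{ij}f$ for every pair $\{i,j\}$ and every $b\in\mathcal{B}$: if even one lies outside $\mathscr{A}$, it is a realizable witness of smaller arity and the induction hypothesis applies; otherwise $f\in\int_{\mathcal{B}}\mathscr{A}$. With $f$ now satisfying all three properties (parity, {\sc 2nd-Orth}, and $f\in\int_{\mathcal{B}}\mathscr{A}$), I apply in sequence Lemma~\ref{lem-affine-norm} (reducing to the case $f(\alpha)=\pm 1$ on $\mathscr{S}(f)$ after normalization), Lemma~\ref{lem-affine-support} (reducing further to the case that $\mathscr{S}(f)$ is affine), and finally Lemma~\ref{lem-in-affine} (concluding $f\in\mathscr{A}$). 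Each of these three lemmas outputs either \#P-hardness, a smaller non-affine realizable signature (to which induction applies), or the desired structural refinement. The final verdict $f\in\mathscr{A}$ contradicts the choice of $f$, so one of the hardness or induction branches must have fired.

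The main conceptual obstacle of the proof has already been absorbed into Lemmas~\ref{lem-affine-norm}, \ref{lem-affine-support}, and \ref{lem-in-affine}, where the extraordinary signatures $g_8$, $g'_8$, $h_8$ (built from the \emph{Queen of the Night} $f_8$) and the tripartite-partition combinatorics of $K_n$ are handled. So the work remaining for Theorem~\ref{thm-holantb} is purely organizational: assembling the case analysis into a clean induction and checking that every branch terminates either in \#P-hardness or in a strictly smaller non-affine realizable signature within $\Holantb(\mathcal{F})$. The only subtle point to watch is that in the reducible case one must verify, using the evenness hypothesis and Lemma~\ref{lin-wang}, that the non-affine factor is itself realizable in $\Holantb(\mathcal{F})$ so that the induction hypothesis applies to it.
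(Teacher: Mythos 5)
Your proof is correct and follows essentially the same inductive structure as the paper's: induction on the arity $2n$ of a non-affine $f\in\mathcal{F}$, base cases $2n\leqslant 6$ via Corollary~\ref{lem-non-f6}, the reducible and no-parity cases dispatched the same way, and the irreducible-with-parity case reduced to the structural lemmas. The only cosmetic difference is that you explicitly unfold the chain {\sc 2nd-Orth} $\to$ $f\in\int_{\mathcal{B}}\mathscr{A}$ $\to$ Lemma~\ref{lem-affine-norm} $\to$ Lemma~\ref{lem-affine-support} $\to$ Lemma~\ref{lem-in-affine}, whereas the paper simply invokes Lemma~\ref{lem-in-affine}, which already absorbs those steps internally.
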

\begin{proof}
Since $\mathcal{F}$ does not satisfy condition (\ref{main-thr}), $\mathcal{F}$ contains a signature $f\notin \mathscr{A}$. Suppose that $f$ has arity $2n$. We prove this theorem by induction on $2n$.

If $2n=2, 4$ or $6$, then by Corollary~\ref{lem-non-f6} and its remark, $\Holantb(\mathcal{F})$ is \#P-hard.

Inductively assume for some $2k\geqslant 6$,  $\Holantb(\mathcal{F})$ is \#P-hard when $2n\leqslant 2k$. We consider the case that $2n=2k+2\geqslant 8$.
First, suppose that $f$ is reducible.
If it is a tensor product of two signatures of odd arity,
then we can realize a signature of odd arity by factorization.
We get  \#P-hardness by Theorem~\ref{odd-dic}.
Otherwise, it is a tensor product of two signatures of even arity that are not both in $\mathscr{A}$ since $f\notin \mathscr{A}$.
Then, we can realize a non-affine signature of arity $2m\leqslant 2k$ by factorization.
By our induction hypothesis, we get  \#P-hardness.
Thus, we may assume that $f$ is irreducible.
If $f$ has no parity, then we get  \#P-hardness by Lemma~\ref{lem-parity}. Thus, we may also assume that $f$ has parity. Then by Lemma~\ref{lem-in-affine}, $\Holantb( \mathcal{F})$ is \#P-hard, or we can realize a non-affine signature of arity $2m\leqslant 2k$. By our induction hypothesis, we get  \#P-hardness.
\end{proof}

Since ${\mathcal{B}}$ is realizable from $f_6$ and $\{f_6\}\cup \mathcal{F}$ is non-$\mathcal{B}$ hard for any real-valued $\mathcal{F}$ that does not satisfy condition (\ref{main-thr}), we have the following result.
\begin{lemma}\label{lem-f6-b-hardness}
$\Holantb({f_6}, {\mathcal F})$ is \#P-hard.
\end{lemma}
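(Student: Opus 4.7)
The plan is to deduce this lemma as an immediate corollary of Theorem~\ref{thm-holantb} and Lemma~\ref{lem-2-notb}, by applying Theorem~\ref{thm-holantb} to the augmented set $\mathcal{F}' = \{f_6\} \cup \mathcal{F}$.

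To invoke Theorem~\ref{thm-holantb} on $\mathcal{F}'$, I need to verify three properties. First, $\mathcal{F}'$ consists of real-valued signatures, which is clear since $f_6$ has $\pm 1$ entries and $\mathcal{F}$ is real-valued by the standing assumption. Second, $\mathcal{F}'$ does not satisfy condition~(\ref{main-thr}); this is automatic since $\mathcal{F}$ itself fails~(\ref{main-thr}) and condition~(\ref{main-thr}) is a containment condition preserved under subsets (enlarging the set cannot make it satisfy a tractability containment that the subset already violates). Third, and the only nontrivial point, $\mathcal{F}'$ must be non-$\mathcal{B}$ hard.

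The non-$\mathcal{B}$ hardness of $\mathcal{F}'$ is precisely what Lemma~\ref{lem-2-notb} asserts (stated on the $Z^{-1}$-transformed side): for any nonzero binary $b \notin \mathcal{B}^{\otimes}$, the problem $\Holant(b, f_6, \mathcal{F})$ is \#P-hard. Unwinding the definition, this says $\Holant(b, \mathcal{F}')$ is \#P-hard for every nonzero $b \notin \mathcal{B}^{\otimes}$, which is exactly the definition of $\mathcal{F}'$ being non-$\mathcal{B}$ hard (Definition~\ref{def:non-B-hard}).

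With all three hypotheses verified, Theorem~\ref{thm-holantb} applied to $\mathcal{F}'$ gives that $\Holantb(\mathcal{F}') = \Holant(\mathcal{B} \cup \mathcal{F}') = \Holant(\mathcal{B}, f_6, \mathcal{F}) = \Holantb(f_6, \mathcal{F})$ is \#P-hard, concluding the proof. There is no main obstacle in this derivation itself; all the real work was done in establishing Lemma~\ref{lem-2-notb} (which uses merging $f_6$ with an off-$\mathcal{B}$ binary signature to produce a non-$\widehat{\mathcal{O}}^{\otimes}$ signature of arity $4$) and Theorem~\ref{thm-holantb} (the long induction of Section~\ref{sec-holantb} through parity, norm, support, and affinity conditions). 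The present lemma simply packages these together to obtain the \#P-hardness statement in the form that will be used in the arity-6 induction step.
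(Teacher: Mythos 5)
Your proof is correct and is precisely the argument the paper intends (the paper compresses it into a single prefatory sentence before the lemma). You have also correctly separated what the paper states somewhat loosely: the realizability of $\mathcal{B}$ from $f_6$ (Lemma~\ref{lem-b-realize-f6}), which the paper mentions in the same breath, is \emph{not} needed to conclude that $\Holantb(f_6, \mathcal{F})$ is \#P-hard — it only becomes relevant in the subsequent Lemma~\ref{lem-6-ary}, where one reduces $\Holantb(f_6, Q\mathcal{F})$ to $\Holant(f_6, Q\mathcal{F})$.
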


Combining Theorem~\ref{lem-6-ary-f6} and Lemma~\ref{lem-f6-b-hardness}, we have the following result.
This concludes Sections \ref{sec-f6} and \ref{sec-holantb}, and we are done with the arity $6$ case.

\begin{lemma}\label{lem-6-ary}
If  $\widehat{\mathcal{F}}$ contains a signature $\widehat{f}$ of arity $6$ and
$\widehat{f}\notin \widehat{\mathcal{O}}{^\otimes}$, then
 $\holant{\neq_2}{\widehat{\mathcal{F}}}$ is \#P-hard.
 \end{lemma}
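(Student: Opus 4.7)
The plan is to combine Theorem~\ref{lem-6-ary-f6} with Lemma~\ref{lem-f6-b-hardness} via a holographic transformation. By Theorem~\ref{lem-6-ary-f6}, applied to the given $6$-ary $\widehat{f} \in \widehat{\mathcal{F}}$ with $\widehat{f} \not\in \widehat{\mathcal{O}}^{\otimes}$, either $\holant{\neq_2}{\widehat{\mathcal{F}}}$ is already \#P-hard (in which case we are done), or there exists some $\widehat{Q} \in \widehat{{\bf O}_2}$ such that
\[
\holant{\neq_2}{\widehat{f_6}, \widehat{Q}\widehat{\mathcal{F}}} \leqslant_T \holant{\neq_2}{\widehat{\mathcal{F}}}.
\]
So it suffices to show that the left-hand side is \#P-hard in the second case.

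To do this, I would pass back through the $Z$-transformation. Write $Q = Z\widehat{Q}Z^{-1}$, which lies in ${\bf O}_2$, so that $\widehat{Q}\widehat{\mathcal{F}} = \widehat{Q\mathcal{F}}$ with $Q\mathcal{F}$ real-valued of even arity. Then
\[
\holant{\neq_2}{\widehat{f_6}, \widehat{Q}\widehat{\mathcal{F}}} \equiv_T \Holant(f_6, Q\mathcal{F}).
\]
Since $\mathcal{F}$ does not satisfy condition (\ref{main-thr}), Lemma~\ref{lem-hard-sign} guarantees that $Q\mathcal{F}$ does not satisfy condition (\ref{main-thr}) either, so Lemma~\ref{lem-f6-b-hardness} applies to the set $Q\mathcal{F}$, giving that $\Holantb(f_6, Q\mathcal{F})$ is \#P-hard.

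Finally, I need to bridge $\Holantb(f_6, Q\mathcal{F})$ and $\Holant(f_6, Q\mathcal{F})$. This is exactly where Lemma~\ref{lem-b-realize-f6} enters: the entire Bell set $\mathcal{B}$ is realizable from $f_6$ alone by factorization of $\partial_{12}f_6$, so
\[
\Holantb(f_6, Q\mathcal{F}) = \Holant(\mathcal{B}, f_6, Q\mathcal{F}) \leqslant_T \Holant(f_6, Q\mathcal{F}).
\]
Chaining all the reductions together, \#P-hardness of $\Holantb(f_6, Q\mathcal{F})$ yields \#P-hardness of $\Holant(f_6, Q\mathcal{F})$, hence of $\holant{\neq_2}{\widehat{f_6}, \widehat{Q}\widehat{\mathcal{F}}}$, and therefore of $\holant{\neq_2}{\widehat{\mathcal{F}}}$. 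No step here is an obstacle; all the heavy lifting has been done in Theorem~\ref{lem-6-ary-f6} (the structural discovery of $\widehat{f_6}$) and in Theorem~\ref{thm-holantb}/Lemma~\ref{lem-f6-b-hardness} (the \#P-hardness of $\Holantb$ on non-$\mathcal{B}$-hard sets). The only care required is to keep track of the $Z$-transformation and to check that condition (\ref{main-thr}) is preserved under holographic transformation by elements of ${\bf O}_2$, both of which are immediate from the preliminaries.
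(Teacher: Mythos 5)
Your proposal is correct and follows exactly the paper's approach: the paper states Lemma~\ref{lem-6-ary} as an immediate combination of Theorem~\ref{lem-6-ary-f6} and Lemma~\ref{lem-f6-b-hardness}, and your chain of reductions (passing through the $Z$-transformation, invoking Lemma~\ref{lem-hard-sign} to preserve the non-tractability condition under $Q \in {\bf O}_2$, and using Lemma~\ref{lem-b-realize-f6} to bridge $\Holantb$ and $\Holant$) is precisely what the paper leaves implicit when it writes ``Combining Theorem~\ref{lem-6-ary-f6} and Lemma~\ref{lem-f6-b-hardness}, we have the following result.''
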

 
\section{Final Obstacle: an 8-ary Signature with Strong Bell Property}\label{sec-f8}
We have seen some extraordinary properties of the signature $f_8$.
Now, we formally analyze it. 
Remember that $f_8=\chi_T$ where
\begin{equation}\label{egn:T-defined-again}
 \begin{aligned}
  T=&\mathscr{S}(f_8)=\{(x_1, x_2, \ldots, x_8)\in \mathbb{Z}_2^{8} \mid~
x_1+x_2+x_3+x_4= 0, ~ x_5+x_6+x_7+x_8= 0,\\
  &\hspace{34ex}x_1+x_2+x_5+x_6= 0, ~ x_1+x_3+x_5+x_7= 0\}.\\
=&\{00000000, 00001111, 00110011, 00111100, 01010101, 01011010, 01100110, 01101001,\\
& ~10010110,  10011001, 10100101, 10101010, 11000011, 11001100, 11110000, 11111111\}.
\end{aligned}
\end{equation}
One can see that $\mathscr{S}(f_8)$ has the following structure: the sums of the first four variables, and the last four variables are both even; the assignment of the first four variables are either identical to, or complement of the assignment of the last four variables.  
    Another interesting description of   $\mathscr{S}(f_8)$ is as follows:
    One can take  4  variables, called them $y_1, y_2, y_3, y_4$.
    Then  on the support the remaining 4 variables 
    are mod 2 sums of ${4 \choose 3}$ subsets of $\{y_1, y_2, y_3, y_4\}$,
    and $y_1, y_2, y_3, y_4$ are free variables.
    (However, the 4 variables $(y_1, y_2, y_3, y_4)$ cannot be taken
    as $(x_1, x_2, x_3, x_4)$ in the above description (\ref{egn:T-defined-again}).
    But one \emph{can} take  $(y_1, y_2, y_3, y_4) = (x_1, x_2, x_3, x_5)$.
    More specifically, one can take any 3 variables $x_i, x_j, x_k$
    from $\{x_1, \ldots, x_8\}$
    first as free variables, which excludes one unique other $x_{\ell}$ from the remainder
    set $X' = \{x_1, \ldots, x_8\} \setminus \{x_i, x_j, x_k\}$, and \emph{then}
    one can take any one variable $x_{r} \in X'$ as the 4th free variable. \emph{Then}
    the remaining 4  variables    
    are the mod 2 sums of ${4 \choose 3}$ subsets of the 4 free variables $\{x_i, x_j, x_k, x_r\}$, and in particular $x_\ell = x_i + x_j + x_k$, on $\mathscr{S}(f_8)$.)
    We give the following Figure~\ref{fig-arity-8} to visualize the signature matrix $M_{1234}(f_8)$.
    A block with orange color denotes an entry $+1$. Other blank blocks are zeros.
    
          \begin{figure}[!h]
    \centering
    \includegraphics[height=2.4in]{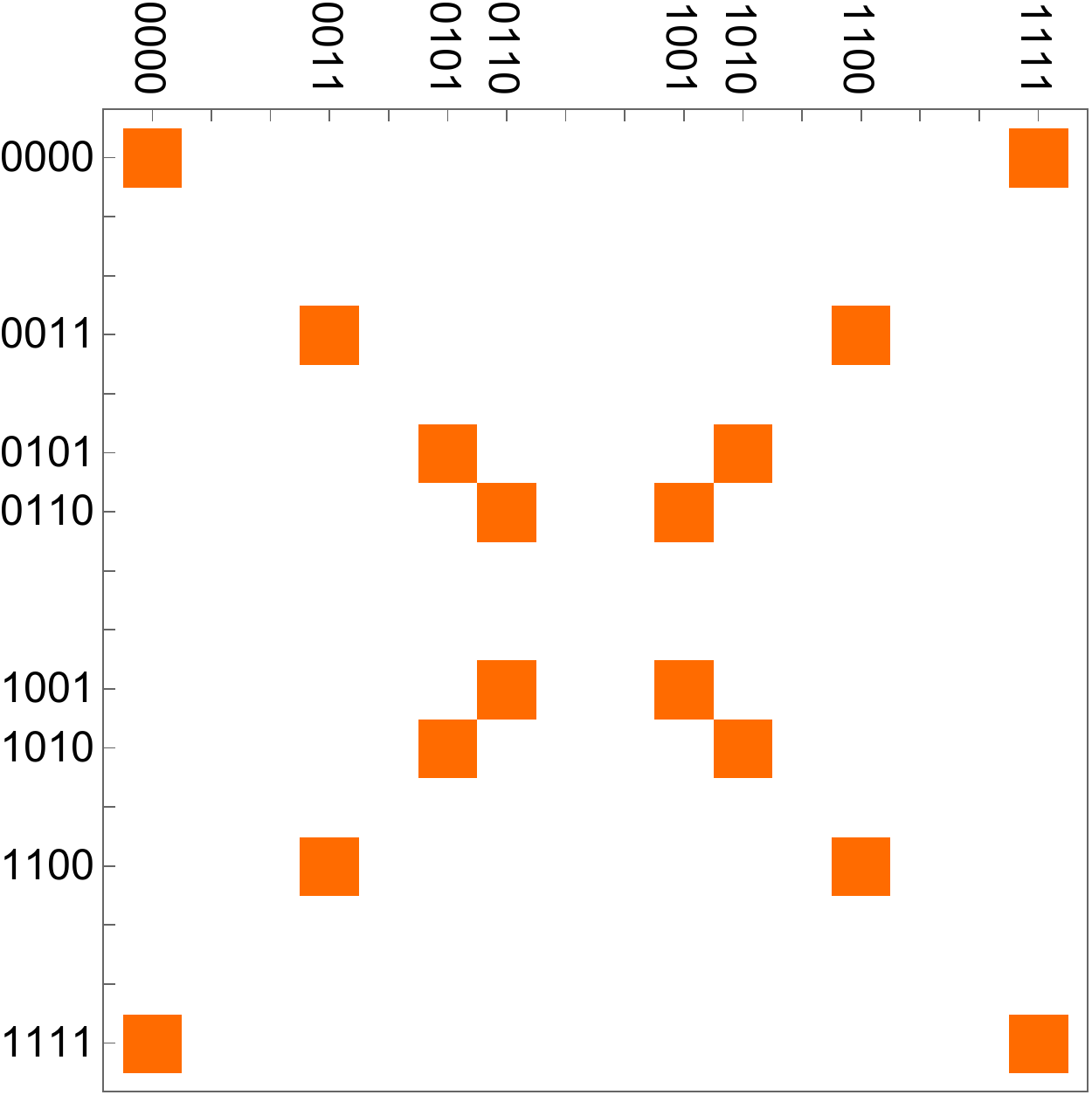}
    \caption{A visualization of $f_8$, which happens to be the same as $\widehat{f_8}=Z^{-1}f_8$}
    \label{fig-arity-8}
\end{figure}
    
    One can check that $f_8$ 
    satisfies both {\sc 2nd-Orth}  and $f_8\in \int \mathcal{O}^{\otimes}.$
 Also, $f_8$ is unchanged under the holographic transformation by $Z^{-1}$, i.e., $\widehat{f_8}=Z^{-1}f_8=f_8$.
\subsection{The discovery of $\widehat{f_8}$}
 In this subsection, we show how this extraordinary signature $\widehat{{f_8}}$ was discovered. 
 We use the notation $\widehat{{f_8}}$ since we  consider the problem $\holant{\neq_2}{\widehat{\mathcal{F}}}$ for complex-valued $\widehat{\mathcal{F}}$ satisfying {\sc ars}.
 We prove that if $\widehat{\mathcal{F}}$ contains an 8-ary signature $\widehat{f}$ where $\widehat{f}\notin\widehat{\mathcal{O}}{^\otimes}$, then $\holant{\neq_2}{\widehat{\mathcal{F}}}$ is \#P-hard or $\widehat{f_8}$ is realizable from $\widehat{f}$ (Theorem~\ref{thm-f8}).

Remember that ${\mathcal{D}}=\{\neq_2\}.$ 
Then ${\mathcal{D}}^{\otimes}=\{\lambda \cdot (\neq_2)^{\otimes n}\mid \lambda\in \mathbb{R}\backslash\{0\}, n\geqslant 1\}$ is the set of tensor products of binary disequalities $\neq_2$ up to a nonzero real scalar.
If for all pairs of indices $\{i, j\}$, $\widehat{\partial}_{ij}\widehat{f}\in {\mathcal{D}}^{\otimes}$, then we say $\widehat{f}\in \widehat{\int}{\mathcal{D}}^{\otimes}$.
Clearly, if $\widehat{f}\in {\mathcal{D}}^{\otimes}$ and $\widehat{f}$ has arity greater than $2$, then $\widehat{f}\in \widehat{\int}{\mathcal{D}}^{\otimes}.$
We first show the following result for signatures of arity at least $8$.

\begin{lemma}\label{lem-even>8}
Let $\widehat f \notin \widehat{\mathcal{O}}^{\otimes}$ be a signature of arity $2n\geqslant 8$ in $\widehat{\mathcal{F}}$.
Then, 
\begin{itemize}
    \item $\holant{\neq_2}{\widehat{\mathcal{F}}}$ is \#P-hard, or
    \item there is a signature $\widehat g\notin \widehat{\mathcal{O}}^{\otimes}$ of arity $2k\leqslant 2n-2$ that is realizable from $\widehat f$, or 
    \item there is an irreducible signature $\widehat{f^\ast}\in \widehat{\int}{\mathcal{D}}^{\otimes}$ of arity $2n$ that is realizable from $\widehat f$. 
\end{itemize}
\end{lemma}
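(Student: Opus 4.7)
The plan is to adapt the strategy of Lemma~\ref{lem-arity6-1} to the higher-arity regime $2n \geqslant 8$, this time demanding that all pairwise merges of the realized signature lie in $\mathcal{D}^{\otimes}$, not merely in $\widehat{\mathcal{O}}^{\otimes}$. First I would perform the standard reductions. If $\widehat{f}$ is reducible, its prime factorization either exposes an odd-arity factor (yielding \#P-hardness by Theorem~\ref{odd-dic}) or produces an even-arity factor not in $\widehat{\mathcal{O}}^{\otimes}$ (realizable by Lemma~\ref{lin-wang}, giving the second bullet). So I may assume $\widehat{f}$ is irreducible. By Lemma~\ref{second-ortho}, either $\holant{\neq_2}{\widehat{\mathcal{F}}}$ is \#P-hard or $\widehat{f}$ satisfies {\sc 2nd-Orth}. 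If there is some $\{i,j\}$ with $\widehat{\partial}_{ij}\widehat{f} \notin \widehat{\mathcal{O}}^{\otimes}$, the second bullet holds with $\widehat{g} = \widehat{\partial}_{ij}\widehat{f}$, so assume $\widehat{f} \in \widehat{\int}\widehat{\mathcal{O}}^{\otimes}$. Irreducibility together with Lemma~\ref{lem-zero_2} prevents all pairwise merges from vanishing simultaneously, so after renaming variables I may select $\{1,2\}$ with $\widehat{\partial}_{12}\widehat{f} \not\equiv 0$.

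I would then extract the factorization $\widehat{\partial}_{12}\widehat{f} = \widehat{b}_1(x_3,x_4) \otimes \cdots \otimes \widehat{b}_{n-1}(x_{2n-1},x_{2n})$ with each $\widehat{b}_k \in \widehat{\mathcal{O}}$ nonzero and realizable by factorization (Lemma~\ref{lin-wang}). I construct $\widehat{f'}$ from $\widehat{f}$ by extending, for each $k = 1, \ldots, n-1$, the variable $x_{2k+1}$ of $\widehat{f}$ by a fresh copy of $\widehat{b}_k$ using $\neq_2$. Because $\widehat{\partial}_{12}$ commutes with extensions on variables disjoint from $\{1,2\}$, the merge $\widehat{\partial}_{12}\widehat{f'}$ is obtained by mating each $\widehat{b}_k$ against a copy of itself through $\neq_2$; a direct computation (as in Lemma~\ref{lem-arity6-1}) shows this yields $\lambda_k \cdot \neq_2$ with $\lambda_k = |a_k|^2 > 0$, so after renormalization $\widehat{\partial}_{12}\widehat{f'} \in \mathcal{D}^{\otimes(n-1)}$. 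Extensions preserve irreducibility, and I reimpose {\sc 2nd-Orth} and $\widehat{f'} \in \widehat{\int}\widehat{\mathcal{O}}^{\otimes}$ (else one of the first two bullets applies).

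The main obstacle is to propagate the $\mathcal{D}^{\otimes}$ structure from the single pair $\{1,2\}$ to every other pair, yielding the desired $\widehat{f^\ast} \in \widehat{\int}\mathcal{D}^{\otimes}$. For $\{i,j\}$ disjoint from $\{1,2\}$, commutativity of merges gives $\widehat{\partial}_{12}\widehat{\partial}_{ij}\widehat{f'} \in \mathcal{D}^{\otimes}$; writing $\widehat{\partial}_{ij}\widehat{f'} = \widehat{c}_1(x_1,x_2) \otimes \widehat{c}_2 \otimes \cdots$ in $\widehat{\mathcal{O}}^{\otimes}$ and re-merging at $\{1,2\}$ collapses $\widehat{c}_1$ to a scalar while leaving the remaining factors intact, which forces each $\widehat{c}_k$ for $k \geqslant 2$ to be a real scalar multiple of $\neq_2$ and forces $\widehat{c}_1 = (0, a, \overline{a}, 0)$ with $|a| = 1$ and $\mathrm{Re}(a) \neq 0$. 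The residual difficulty is that $a$ need not be real, so $\widehat{c}_1$ need not itself lie in $\mathcal{D}^{\otimes 1}$. I plan to overcome this by combining Lemma~\ref{lem-binary-sim} with the constraints imposed by {\sc 2nd-Orth} on $\widehat{f'}$ at the pair $\{i, j\}$ to pin down the argument of $a$, and if necessary performing one further round of extensions on $x_1$ and $x_2$, tracked through Lemma~\ref{lem-extending}, to align $\widehat{c}_1$ with $\neq_2$. Because extensions at $x_1, x_2$ commute with merges at pairs disjoint from $\{1,2\}$, the $\mathcal{D}^{\otimes}$ structure already achieved elsewhere can be maintained through careful bookkeeping, and overlapping pairs with $|\{i,j\} \cap \{1,2\}| = 1$ are handled symmetrically by re-selecting the base pair after the first extension. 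I expect the central technical difficulty to be precisely this global, consistent alignment of the residual off-diagonal parameters across all pairs; in contrast to the arity-$6$ case where parity propagation alone sufficed, here the full $\mathcal{D}^{\otimes}$ structure must be enforced uniformly, which requires a single coherent choice of extensions rather than a naive iteration.
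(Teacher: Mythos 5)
Your overall blueprint follows the paper's first moves — reduce to irreducible $\widehat{f}$ satisfying {\sc 2nd-Orth} with $\widehat{f}\in\widehat{\int}\widehat{\mathcal{O}}^{\otimes}$, realize the extension $\widehat{f'}$ so that $\widehat{\partial}_{12}\widehat{f'}\in\mathcal{D}^{\otimes}$ — but the propagation step has a structural gap. When you write $\widehat{\partial}_{ij}\widehat{f'}=\widehat{c}_1(x_1,x_2)\otimes\widehat{c}_2\otimes\cdots$ for $\{i,j\}$ disjoint from $\{1,2\}$, you are assuming that $x_1$ and $x_2$ occupy a single binary factor in the UPF. This is not automatic. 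It can happen that $x_1$ pairs with some $x_u$ and $x_2$ with some $x_v$ for $u,v\notin\{1,2,i,j\}$, i.e.\ $\widehat{\partial}_{ij}\widehat{f'}=\widehat{b'_1}(x_1,x_u)\otimes\widehat{b'_2}(x_2,x_v)\otimes\widehat{h}_{ij}$. The paper splits into exactly these two cases, and the second one drives most of the work: one must extend both $x_1$ and $x_2$ by $\widehat{b'_1}$ and $\widehat{b'_2}$ respectively, then run a careful chase through the UPFs of $\widehat{\partial}_{st}\widehat{f^\ast}$ and $\widehat{\partial}_{uv}\widehat{f^\ast}$ (with $\{s,t\}$ disjoint from $\{1,2,i,j\}$ and with $\{u,v\}$), using Lemma~\ref{lem-binary-sim}, the unique prime factorization, and arity $2n\geqslant8$ at several places to have a free index available. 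Your proposal does not engage with this case at all.

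Even within the case you do consider, two further points need to be proved, not assumed. First, the factors $\widehat{c}_1=\widehat{b_{ij}}(x_1,x_2)$ that you extract from $\widehat{\partial}_{ij}\widehat{f'}$ must agree (up to associate) across all admissible $\{i,j\}$, otherwise a single round of extensions on $x_1,x_2$ cannot simultaneously normalize them all. The paper proves this consistency by comparing $\widehat{\partial}_{(k\ell)(ij)}\widehat{f'}=\widehat{\partial}_{(ij)(k\ell)}\widehat{f'}$ via UPF, again using arity $\geqslant 8$ to find disjoint index pairs when needed. Second, and most importantly, the conclusion $\widehat{f^\ast}\in\widehat{\int}\mathcal{D}^{\otimes}$ requires $\widehat{\partial}_{1k}\widehat{f^\ast}$ and $\widehat{\partial}_{2k}\widehat{f^\ast}$ to lie in $\mathcal{D}^{\otimes}$ as well, which does not follow by ``re-selecting the base pair'': the extensions you installed are tied to the role of $\{1,2\}$, and a different base pair would call for different extensions that need not coexist. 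The paper handles this by an explicit Claim (valid only for arity $\geqslant 8$, again because one needs a third disjoint pair $\{s,t\}$ to test factorizations of $\widehat{\partial}_{1k}\widehat{h}$). You correctly flag that a ``single coherent choice of extensions'' is the crux, but the proposal does not actually deliver it; the key missing ingredients are the Case~2 analysis and the Claim propagating $\mathcal{D}^{\otimes}$ to pairs touching $\{1,2\}$.
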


\begin{proof}
Since $\widehat{f}\notin \widehat{\mathcal{O}}^{\otimes}$, $\widehat f\not\equiv 0$.
Again, we may assume that $\widehat{f}$ is irreducible.
Otherwise, by factorization, we can realize a nonzero signature of odd arity and we get \#P-hardness by Theorem~\ref{odd-dic}, or we can realize a signature of lower even arity that is not in $\widehat{\mathcal{O}}^{\otimes}$ and we are done.
 Under the assumption that $\widehat{f}$ is irreducible, we may further assume that $\widehat{f}$ satisfies {\sc 2nd-Orth} by Lemma~\ref{second-ortho}.
Consider signatures $\widehat{\partial}_{ij}\widehat{f}$ for all pairs of indices $\{i, j\}$. 
If there exists a pair $\{i, j\}$ such that $\widehat{\partial}_{ij}\widehat{f}\notin \widehat{\mathcal{O}}^{\otimes}$, then let $\widehat{g}=\widehat{\partial}_{ij}\widehat{f}$, and we are done.
Thus, we may also assume that $\widehat{f}\in \widehat{\int}\widehat{\mathcal{O}}^{\otimes}$.  

If for all pairs of indices $\{i, j\}$, we have $\widehat{\partial}_{ij}\widehat{f}\equiv 0$. 
Then, by Lemma \ref{lem-zero_2}, ${\widehat f}(\alpha)=0$ for all $\alpha$ with ${\rm wt}(\alpha)\neq 0$ or $2n$. 
Since $f\not\equiv 0$ and by {\sc ars}, $|{\widehat f}(\vec{0}^{2n})| =|{\widehat f}(\vec{1}^{2n})| \not =0$.
Clearly, such a signature  does not satisfy {\sc 2nd-Orth}. 
Contradiction.
Thus,
%
without loss of generality, we assume that $\widehat{\partial}_{12}\widehat{f}\not\equiv 0$.
Since $\widehat{\partial}_{12}\widehat{f} \in \widehat{\mathcal{O}}^{\otimes}$, without loss of generality, we may assume that in the UPF of $\widehat{\partial}_{12}\widehat{f}$, variables $x_3$ and $x_4$ appear in one  binary signature $b_1(x_3, x_4)$, $x_5$ and $x_6$ appear in one binary signature $b_2(x_5, x_6)$ and so on. 
Thus, we have 
\begin{equation*}
\widehat{\partial}_{12}\widehat{f}=  \widehat{b_1}(x_3, x_4)\otimes \widehat{b_2}(x_5, x_6)\otimes\widehat{b_3}(x_7, x_8)\otimes \ldots \otimes \widehat{b_{n-1}}(x_{2n-1}, x_{2n}).
\end{equation*}

By Lemma \ref{lin-wang},  all these binary signatures $\widehat{b_1}$, $\widehat{b_2}$, \ldots, $\widehat{b_{n-1}}$ are realizable from $f$ by factorization. 
Note that for nonzero binary signatures $\widehat{b_{i}}(x_{2i+1}, x_{2i+2})$ $(1\leqslant i\leqslant n-1)$, if we connect the variable $x_{2i+1}$ of two copies of $\widehat{b_{i}}(x_{2i+1}, x_{2i+2})$ using $\neq_2$ (mating two binary signatures), then we get $\neq_2$ up to a scalar.
We consider the following gadget construction on $\widehat{f}$.
Recall that in the setting of $\holant{\neq}{\widehat{\mathcal{F}}}$, variables are connected using $\neq_2$.
For $1\leqslant i\leqslant n-1$, by a slight abuse of names of variables, we connect the variable $x_{2i+1}$ of $\widehat{f}$ 
with the variable $x_{2i+1}$ of $\widehat{b_{i}}(x_{2i+1}, x_{2i+2})$ using $\neq_2$. 
We get a signature $\widehat{f'}$ of arity $2n$. 
(Note that, as a complexity reduction using
 factorization (Lemma~\ref{lin-wang}), we can only apply it a constant
 number of times. 
However, the arity $2n$ of $\widehat{f}$
is considered a constant, as $\widehat{f} \in \widehat{\mathcal{F}}$,  which is independent of
the input size of a signature grid to the
problem  $\holant{\neq_2}{\widehat{\mathcal{F}}}$.)
We denote this gadget construction by $ G_1$ and we write $\widehat{f'}$ as $G_1 \circ \widehat f$.
$G_1$ is constructed by extending variables of $\widehat{f}$ using binary signatures realized from $\widehat{\partial}_{12}\widehat{f}$. 
It does not change the irreducibility of $\widehat{f}$.  
Thus, $\widehat{f'}$ is irreducible since $\widehat{f}$ is irreducible.
Similarly, we may assume that $\widehat{f'}\in \widehat{\int}\widehat{\mathcal{O}}^{\otimes}$. Otherwise, we are done.

Consider the signature $\widehat{\partial}_{12}\widehat{f'}$. 
Since the above gadget construction $G_1$  does not touch variables $x_1$ and $x_2$ of $f$, $G_1$
commutes with the merging gadget $\widehat{\partial}_{12}$. 
(Succinctly, the commutativity can be expressed as  $\widehat{\partial}_{12}\widehat{f'}=\widehat{\partial}_{12}(G_1\circ \widehat{f})=G_1\circ \widehat{\partial}_{12}\widehat{f}.)$
Thus, $\widehat{\partial}_{12}\widehat{f'}$ can be realized by performing the  gadget construction $G_1$ on $\widehat{\partial}_{12}\widehat{f}$, which connects each binary signature $\widehat{b_{i}}(x_{2i+1}, x_{2i+2})$ in the UPF of $\widehat{\partial}_{12}\widehat{f}$ with another copy of $\widehat{b_{i}}(x_{2i+1}, x_{2i+2})$  (in the mating fashion). 
Thus, each binary signature $\widehat{b_{i}}$ in $\widehat{\partial}_{12}\widehat{f}$ is changed to $\neq_2$ up to a nonzero scalar after this gadget construction $G_1$. 
After normalization and renaming variables, we have 
\begin{equation}\label{eqn:lm5-partial12-f}
\widehat{\partial}_{12}\widehat{f'}= (\neq_2)(x_3, x_4)\otimes (\neq_2)(x_5, x_6)\otimes(\neq_2)(x_7, x_8)\otimes \ldots \otimes (\neq_2)(x_{2n-1}, x_{2n}).
\end{equation}
Thus, $\widehat{\partial}_{12}\widehat{f'}\in {\mathcal{D}}^{\otimes}$.
Moreover, for all pairs of indices $\{i, j\}$ disjoint with $\{1, 2\}$,
we have 
\begin{equation}\label{equ-ij-12-in-D}
    \widehat{\partial}_{(ij)(12)}\widehat{f'}\in {\mathcal{D}}^{\otimes}, \text{ ~and hence~ } \widehat{\partial}_{(ij)(12)}\widehat{f'}\not\equiv 0.
    \end{equation}
A fortiori, for all pairs of indices $\{i, j\}$ disjoint with $\{1, 2\}$, $\widehat{\partial}_{ij}\widehat{f'}\not\equiv 0$.

Now, we show that we can realize an irreducible signature $\widehat{f^\ast}$ of arity $2n$ from $\widehat{f'}$ such that $\widehat{f^\ast}\in \widehat{\int}{\mathcal{D}}^{\otimes}$.
We first prove the following claim.
  \begin{quote}
     {\bf Claim.} \emph{Let $\widehat{h}\in \widehat{\int}{\widehat{\mathcal{O}}}^{\otimes}$ be a signature of arity $2n\geqslant 8$. If 
     $\widehat\partial_{ij}\widehat{h}\in \mathcal{D}^{\otimes}$ for all $\{i, j\}$ disjoint with $\{1, 2\}$, then $\widehat{h}\in \widehat{\int}{{\mathcal{D}}}^{\otimes}.$}
  \end{quote}
    
 Clearly, we only need to show that $\widehat{\partial}_{1k}\widehat{h}\in {\mathcal{D}}^{\otimes}$ for all $2 \leqslant k \leqslant 2n$. Then, by symmetry we also have $\widehat{\partial}_{2k}\widehat{h}\in {\mathcal{D}}^{\otimes}$ for $k=1$ and all $3 \leqslant k \leqslant 2n$. This will prove $\widehat{h}\in \widehat{\int}{\mathcal{D}}^{\otimes}.$
    Consider $\widehat{\partial}_{1k}\widehat{h}$ for an arbitrary $2\leqslant k \leqslant 2n$.
   Since for all $\{i, j\}$ disjoint with $\{1, 2\}$, we have
    $\widehat{\partial}_{ij}\widehat{h}\in {\mathcal{D}}^{\otimes}$, a fortiori
    for all  $\{i, j\}$ disjoint with $\{1, 2\} \cup \{k\}$,
    \begin{equation}\label{eqn:partial-1kj-in-sec8}
    \widehat{\partial}_{(1k)(ij)}\widehat{h}\in {\mathcal{D}}^{\otimes}.
   \end{equation}
    Since $\widehat{h}$ has arity $2n\geqslant 8$, we can pick a pair of indices $\{i, j\}$ disjoint with $\{1, 2\} \cup \{k\}$.
    Since $\widehat{\partial}_{(1k)(ij)}\widehat{h}\in {\mathcal{D}}^{\otimes}$, which is nonzero,  a  fortiori  we have
    $\widehat{\partial}_{1k}\widehat{h} \not \equiv 0$.
    So we may consider  the UPF of $\widehat{\partial}_{1k}\widehat{h}$,
    which is known to be in $\widehat{\mathcal{O}}^{\otimes}$.
    For a contradiction,
    suppose that  there is a binary signature  $\widehat{b_1}$ (as a factor of $\widehat{\partial}_{1k}\widehat{h}$) such that $\widehat{b_1}$ is not an associate of $\neq_2$. 
    Among the two variables in the scope of $\widehat{b_1}$,  at least one is not $x_2$. 
    We pick such a variable $x_s$ where $x_s \neq x_2$. 
    Then, we consider another binary signature $\widehat{b_2}$ in the UPF of $\widehat{\partial}_{1k}\widehat{h}$. 
    \begin{itemize}
        \item 
    If $\widehat{b_2} =\lambda \cdot \neq_2$, for some  nonzero scalar
    $\lambda$, then we pick a variable $x_t$ in the scope of $\widehat{b_2}$ that is not $x_2$. 
    Consider $\widehat{\partial}_{(st)(1k)}\widehat{h}$. 
    When merging variables $x_s$ and $x_t$ of $\widehat{\partial}_{1k}\widehat{h}$, 
    we connect the variable $x_s$ of  $\widehat{b_1}$ with the variable $x_t$ of $\lambda  \cdot \neq_2$, and the resulting binary signature is just $\lambda \cdot\widehat{b_1}$, which is not an associate of $\neq_2$. 
    Thus, we have $\widehat{\partial}_{(st)(1k)}\widehat{h}\notin {\mathcal{D}}^{\otimes}$.
    \item Otherwise,  $\widehat{b_2}$ is not an associate of $\neq_2$. Since $\widehat{\partial}_{1k}\widehat{h}$ has arity $2n-2\geqslant 6$, we can find another binary signature $\widehat{b_3}$ in the UPF of $\widehat{\partial}_{1k}\widehat{h}$. 
    We pick a variable $x_t$ in the scope of $\widehat{b_3}$ that is not $x_2$. 
    Consider $\widehat{\partial}_{(st)(1k)}\widehat{h}$. 
    Now, when merging variables $x_s$ and $x_t$ of $\widehat{\partial}_{1k}\widehat{h}$, 
    the binary signature $\widehat{b_2}$ is untouched. 
    Thus, we have $\widehat{b_2}\mid \widehat{\partial}_{(st)(1k)}\widehat{h}$, which implies that
     $\widehat{\partial}_{(st)(1k)}\widehat{h}\notin  {\mathcal{D}}^{\otimes}$.
   
   \end{itemize}
   
   Note that in both cases, $\{s,t\} \cap (\{1,2\} \cup \{k\}) = \emptyset$. Therefore
   the two cases above both contradict
   (\ref{eqn:partial-1kj-in-sec8}) by picking $\{i, j\}=\{s, t\}$.
      Thus, $\widehat{\partial}_{1k}\widehat{h}\in {\mathcal{D}}^{\otimes}$ for all $2 \leqslant k \leqslant 2n$. Then similarly, we can show that $\widehat{\partial}_{2k}\widehat{h}\in {\mathcal{D}}^{\otimes}$ for all $3 \leqslant k \leqslant 2n$. 
      This finishes the proof of our Claim.
      
\vspace{.1in}

Remember that $\widehat{\partial}_{ij}\widehat{f'}\not\equiv 0$ for all $\{i, j\}$ disjoint with $\{1, 2\}$.
We consider the UPF of $\widehat{\partial}_{ij}\widehat{f'}$. Since $\widehat{f'}\in \widehat{\int}\widehat{\mathcal{O}}^{\otimes}$, there are two cases depending on whether variables $x_1$ and $x_2$ appear in one  binary signature or two distinct binary signatures.

\vspace{1ex}
\noindent{\bf Case 1.}
    For every $\{i, j\}$ disjoint with $\{1, 2\}$, in the UPF of $\widehat{\partial}_{ij}\widehat{f'}$, $x_1$ and $x_2$ appear in one nonzero binary signature
    $\widehat{b_{ij}}(x_1, x_2) \in \widehat{\mathcal{O}}$.  
    In other words, for every $\{i, j\}$ disjoint with $\{1, 2\}$, $$\widehat{\partial}_{ij}\widehat{f'}=\widehat{b_{ij}}(x_1, x_2)\otimes \widehat{g_{ij}}, ~~\text{ for some } \widehat{g_{ij}}\not\equiv 0.$$
    (These factors $\widehat{b_{ij}}$ and $ \widehat{g_{ij}}$ are nonzero since $\widehat{\partial}_{ij}\widehat{f'}\not\equiv 0$.)
  Then, $\widehat{g_{ij}} \sim \widehat{\partial}_{(12)(ij)}\widehat{f'}$, and by
    (\ref{equ-ij-12-in-D}), we have $\widehat{g_{ij}}\in{\mathcal{D}}^{\otimes}$. 
    Also for $\{k, \ell\}$ disjoint with both $\{i, j\}$
    and $\{1, 2\}$, $\widehat{\partial}_{(k\ell)(ij)}\widehat{f'}
    \not \equiv 0$ since
    $\widehat{\partial}_{(12)(k \ell)(ij)} \widehat{f'} = \widehat{\partial}_{(ij)(k \ell)(12)} \widehat{f'} 
    \not \equiv 0$.
    
    We first show that for any two pairs $\{i, j\}\neq \{k, \ell\}$ that are both disjoint with $\{1, 2\}$,
    $\widehat{b_{ij}}(x_1, x_2)\sim \widehat{b_{k\ell}}(x_1, x_2)$. 
    If $\{i, j\}$ is disjoint with $\{k, \ell\}$, then  $\widehat{b_{ij}}(x_1, x_2)\mid \widehat{\partial}_{(k\ell)(ij)}\widehat{f'}$ and $\widehat{b_{k\ell}}(x_1, x_2)\mid \widehat{\partial}_{(ij)(k\ell)}\widehat{f'}$. 
    Since $\widehat{\partial}_{(k\ell)(ij)}\widehat{f'}=\widehat{\partial}_{(ij)(k\ell)}\widehat{f'}\not\equiv 0$, by Lemma \ref{unique}, 
    we have $\widehat{b_{ij}}(x_1, x_2)\sim \widehat{b_{k\ell}}(x_1, x_2)$. 
    Otherwise, $\{i, j\}$ and $\{k, \ell\}$ are not disjoint. Since $\widehat{f'}$ has arity $\geqslant 8$, we can find another pair of indices $\{s, t\}$ such that it is disjoint with $\{1, 2\}\cup \{i, j\} \cup \{k, \ell\}$. 
    Then, by the above argument, we have $\widehat{b_{ij}}(x_1, x_2)\sim \widehat{b_{st}}(x_1, x_2),$ and  $\widehat{b_{st}}(x_1, x_2)\sim \widehat{b_{k\ell}}(x_1, x_2).$
    Thus, $\widehat{b_{ij}}(x_1, x_2)\sim \widehat{b_{k\ell}}(x_1, x_2).$
    We can use a binary signature $\widehat{b}(x_1, x_2)$  to denote these binary signature $\widehat{b_{ij}}(x_1, x_2)$ for all $\{i, j\}$ disjoint with $\{1, 2\}$. 
    Then, $\widehat{b}(x_1, x_2) \mid \widehat{\partial}_{ij}\widehat{f'}$ for all $\{i, j\}$ disjoint with $\{1, 2\}$. 
   Also,  $\widehat{b}(x_1, x_2)$ is realizable  from $\widehat f'$ by merging and factorization.
    
    Then, we consider the following gadget construction $G_2$ on $\widehat{f'}$. 
    By a slight abuse of variable names, 
    we connect the variable $x_1$ of $\widehat{f'}$ with the variable $x_1$ of $\widehat b(x_1, x_2)$ and we get a signature $\widehat{f^\ast}$. 
    Clearly, $G_2$ is constructed by extending variables of $\widehat{f'}$. It does not change the irreducibility of $\widehat{f'}$.
    Thus, $\widehat{f^\ast}$ is irreducible.
    Again, we may assume that $\widehat{f^\ast}\in \widehat{\int}\widehat{\mathcal{O}}^{\otimes}$.
    Consider $\widehat{\partial}_{ij}\widehat{f^\ast}$ for all $\{i, j\}$ disjoint with $\{1, 2\}$. 
    Since the above gadget construction $G_2$ only touches the variable $x_1$ of $f'$, it commutes with the merging operation $\widehat{\partial}_{ij}$.
    Thus, $\widehat{\partial}_{ij}\widehat{f^\ast}$ can be realized by performing the gadget construction $G_2$ on $\widehat{\partial}_{ij}\widehat{f'}$, i.e.,
    connecting the binary signature $\widehat b(x_1, x_2)$ in the UPF of $\widehat{\partial}_{ij}\widehat{f'}$ with itself (in the mating fashion), which changes $\widehat b(x_1, x_2)$ to $\neq_2$ up to some nonzero scalar $\lambda_{ij}$.
    Thus,  for all $\{i, j\}$ disjoint with $\{1, 2\}$, after renaming variables, we have
    $$\widehat{\partial}_{ij}\widehat{f^\ast}=\lambda_{ij}\cdot(\neq_2)(x_1, x_2) \otimes \widehat{g_{ij}}\in {\mathcal{D}}^{\otimes}.$$
        Thus, $\widehat{\partial}_{ij}\widehat{f^\ast}\in {\mathcal{D}}^{\otimes}$ for all $\{i, j\}$ disjoint with $\{1, 2\}$. By our {Claim}, $\widehat{f^\ast}\in \widehat{\int}{\mathcal{D}}^{\otimes}$.
        We are done with {Case 1.}

    
    \vspace{1ex}
    \noindent{\bf Case 2.}  There is a pair of indices $\{i, j\}$ disjoint with $\{1, 2\}$ such that $x_1$ and $x_2$ appear in two distinct nonzero binary signatures $\widehat{b_1'}(x_1, x_u)$ and $\widehat{b_2'}(x_2, x_v)$ in the UPF of $\widehat{\partial}_{ij}\widehat{f'}$. 
 In other words, there exits $\{i, j\}$ such that  \begin{equation}
    \widehat{\partial}_{ij}\widehat{f'}=\widehat{b'_1}(x_1, x_u)\otimes \widehat{b'_2}(x_2, x_v) \otimes \widehat{h_{ij}}, \text{ for some } \widehat{h_{ij}}\not\equiv 0.
    \end{equation}
    Since $\widehat{h_{ij}} \mid \widehat{\partial}_{(12)(ij)}\widehat{f'}$ and $\widehat{\partial}_{(12)(ij)}\widehat{f'}\in 
    {\mathcal{D}}^{\otimes}$, we have $\widehat{h_{ij}}\in {\mathcal{D}}^{\otimes}$.
    Also, after merging variables $x_1$ and $x_2$ (using $\neq_2$) in $\widehat{\partial}_{ij}\widehat{f'}$, variables $x_u$ and $x_v$ form a binary disequality up to a nonzero scalar (this binary signature on $x_u$ and $x_v$ must be a binary disequality
    because we already know $\widehat{\partial}_{(12)(ij)}\widehat{f'}\in 
    {\mathcal{D}}^{\otimes}$). 
    In other words, by connecting the variable $x_1$ of $\widehat{b'_1}(x_1, x_u)$ and the variable $x_2$ of $\widehat{b'_2}(x_2, x_v)$ (using $\neq_2$), we get $\lambda \cdot \neq_{2}(x_u, x_v)$ for some $\lambda\neq 0$.
    By Lemma \ref{lem-binary-sim}, we have
    $\widehat{b'_1} \sim \widehat{b'_2}$.
    Also, connecting the variable $x_u$ of $\widehat{b'_1}$  and the variable  $x_v$ of $\widehat{b'_2}$ (using $\neq_2$) will give the binary signature $\lambda \cdot \neq_{2}(x_1, x_2)$ as well.
    
    We consider the following gadget construction $G_3$ on $\widehat{f'}$.
    By a slight abuse of variable names, 
    we connect variables $x_1$ and $x_2$ of $\widehat{f'}$ with the variable $x_1$ of $\widehat{b'_1}$ and $x_2$ of $\widehat{b'_2}$ using $\neq_2$ respectively. 
    We get  a signature $\widehat{f^{\ast}}$.
    Again, $\widehat{f^{\ast}}$ is irreducible since the gadget construction $G_3$ does not change the irreducibility of $\widehat{f'}.$
    Also, we may assume that $\widehat{f^{\ast}}\in\widehat{\int}\widehat{\mathcal{O}}^{\otimes}$.
    Otherwise, we are done.
    Consider $\widehat{\partial}_{ij}\widehat{f^{\ast}}$.
    Similarly, by the commutitivity of the  gadget construction $G_3$ and the merging gadget $\widehat{\partial}_{ij}$, 
    $\widehat{\partial}_{ij}\widehat{f^{\ast}}$ can be realized by connecting variables $x_1$ and $x_2$ of $\widehat{\partial}_{ij}\widehat{f'}$ with the variable $x_1$ of $\widehat{b'_1}$ and the variable $x_2$ of $\widehat{b'_2}$ respectively.
    After renaming variables, we have
    \begin{equation}\label{form_ij}
        \widehat{\partial}_{ij}\widehat{f^{\ast}}=\lambda_{ij} \cdot (\neq_2)(x_1, x_u)\otimes (\neq_2) \left(x_2, x_v\right) \otimes \widehat h_{ij}\in \mathcal{D}^{\otimes}.
    \end{equation}

We now show that  $\widehat{\partial}_{12}\widehat{f^{\ast}}\in {\mathcal{D}}^{\otimes}$. Note that it is realized in the following way; we first connect variables $x_1$ and $x_2$ of $\widehat{f'}$ with the variable $x_1$ of $\widehat{b'_1}(x_1, x_u)$ and the variable $x_2$ of $\widehat{b'_2}(x_2, x_v)$ respectively (using $\neq_2$) to get $\widehat{f^{\ast}}$, and then after renaming variables $x_u$ and $x_v$  to $x_1$ and $x_2$ respectively, we merge them using $\neq_2$ (see Figure~\ref{fig:f^ast-f'}(a)). 
By 
associativity of gadget constructions, we can change the order; 
we first connect the variable $x_u$ of $\widehat{b'_1}(x_1, x_u)$ with the variable $x_v$ of $\widehat{b'_2}(x_2, x_v)$ (using $\neq_2$), and then we use the resulting binary signature to connect variables $x_1$ and $x_2$ of $\widehat{f'}$ (edges are connected using $\neq_2$). 
Note that connecting $x_u$ of $\widehat{b'_1}(x_1, x_u)$ with $x_v$ of $\widehat{b'_2}(x_2, x_v)$ gives $\lambda\cdot \neq_2$ up to a nonzero scalar $\lambda$, and $\lambda\cdot \neq_2$ is unchanged by extending both of its two variables with $\neq_2$ (see Figure~\ref{fig:f^ast-f'}(b)).
Thus, $\widehat{\partial}_{12}\widehat{f^{\ast}}$ is actually realized by merging $x_1$ and $x_2$ of $\widehat{f'}$ (using $\neq_2$) up to a nonzero scalar. 
Thus, we have $\widehat{\partial}_{12}\widehat{f^{\ast}}\sim \widehat{\partial}_{12}\widehat{f'},$ and hence  $\widehat{\partial}_{12}\widehat{f^{\ast}}\in {\mathcal{D}}^{\otimes}$,
by the form (\ref{eqn:lm5-partial12-f}) of $\widehat{\partial}_{12}\widehat{f'}$.

\begin{figure}[!h]
    \centering
    \includegraphics[height=4.8cm]{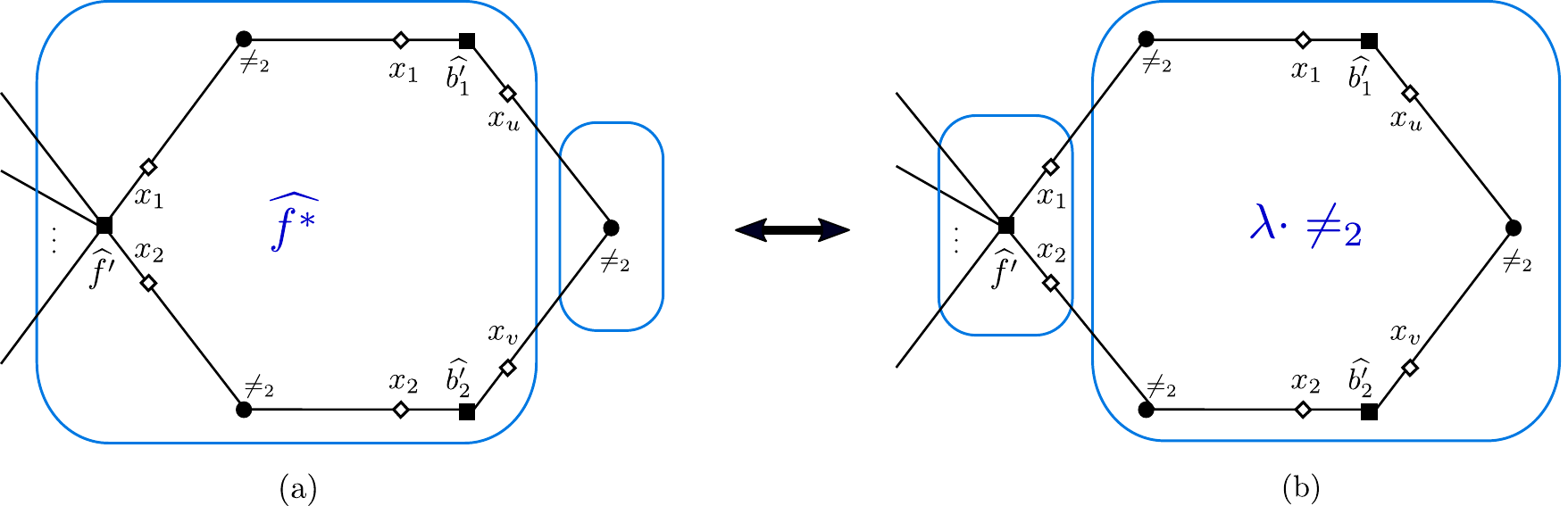}
    \caption{Gadget constructions of $\widehat{\partial}_{12}{\widehat{f^\ast}}$  and $\widehat{\partial}_{12}{\widehat{f'}}$}
    \label{fig:f^ast-f'}
\end{figure}

Then, we show that  $\widehat{\partial}_{st}\widehat{f^{\ast}}\in {\mathcal{D}}^{\otimes}$ for all pairs of indices $\{s, t\}$ disjoint with $\{1, 2, i, j\}$ and $\{s, t\} \neq \{u, v\}$ where $u$ and $v$ are named in (\ref{form_ij}). 
Clearly, $\widehat{\partial}_{st}\widehat{f^{\ast}}\not\equiv 0$ since $\widehat{\partial}_{(st)(12)}\widehat{f^{\ast}}\in{\mathcal{D}}^{\otimes}$.
 We first show that in the UPF of $\widehat{\partial}_{st}\widehat{f^{\ast}}$, $x_1$ and $x_2$ appear in two distinct nonzero binary signatures. 
 Otherwise, for a contradiction, suppose that there is a nonzero binary signature $\widehat{b^\ast}(x_1, x_2)$ such that $\widehat{b^\ast}(x_1, x_2)\mid \widehat{\partial}_{st}\widehat{f^{\ast}}$. 
 Then, $\widehat{b^\ast}(x_1, x_2)\mid\widehat{\partial}_{(ij)(st)}\widehat{f^{\ast}}=\widehat{\partial}_{(st)(ij)}\widehat{f^{\ast}}\not\equiv 0$. By the form (\ref{form_ij}) of $\widehat{\partial}_{ij}\widehat{f^{\ast}}$,  
 the only way that $x_1$ and $x_2$ can form a nonzero binary signature in $\widehat{\partial}_{(st)(ij)}\widehat{f^{\ast}}$ is that the merging gadget is actually merging $x_u$ and $x_v$. Thus, $\{s, t\}=\{u, v\}$. Contradiction. 
Therefore, for some $i'$ and $j'$, we have 
\begin{equation}\label{form_st}
\widehat{\partial}_{st}\widehat{f^{\ast}}=\widehat{b^\ast_{st1}}(x_1, x_{i'})\otimes \widehat{b^\ast_{st2}}(x_2, x_{j'})\otimes\widehat{h_{st}},
\end{equation}
for some $\widehat{b^\ast_{st1}}(x_1, x_{i'}),  \widehat{b^\ast_{st2}}(x_2, x_{j'}), \widehat{h_{st}} \not \equiv 0$ since  $\widehat{\partial}_{st}\widehat{f^{\ast}}\not\equiv 0$.
  Since $\widehat{h_{st}} \mid \widehat{\partial}_{(12)(st)}\widehat{f^\ast}$ and $\widehat{\partial}_{(12)(st)}\widehat{f^\ast}\in 
    {\mathcal{D}}^{\otimes}$, 
we have $\widehat{h_{st}} \in {\mathcal{D}}^{\otimes}$.
Also, by Lemma \ref{lem-binary-sim},
$\widehat{b^\ast_{st1}}  \sim \widehat{b^\ast_{st2}}$.
For a contradiction,
suppose that $\widehat{\partial}_{st}\widehat{f^{\ast}}\notin {\mathcal{D}}^{\otimes}$, then  $\widehat{b^\ast_{st1}}(x_1, x_{i'}) 
\not\sim (\neq_2)$,
and $\widehat{b^\ast_{st2}}(x_2, x_{j'})  
\not\sim (\neq_2)$.
Consider the signature $\widehat{\partial}_{(st)(ij)}\widehat{f^{\ast}}$. 
Since $\{s, t\}\neq \{u, v\}$,  by the form (\ref{form_ij}) of $\widehat{\partial}_{ij}\widehat{f^{\ast}}$, 
$x_1$ and $x_2$ appear in two binary signatures in the UPF of $\widehat{\partial}_{(st)(ij)}\widehat{f^{\ast}}.$
Remember that $\widehat{\partial}_{(st)(ij)}\widehat{f^{\ast}}=\widehat{\partial}_{(ij)(st)}\widehat{f^{\ast}}$.
By the form (\ref{form_st}) of $\widehat{\partial}_{st}\widehat{f^{\ast}}$, 
if
$\{i', j'\}= \{i, j\}$, 
then, after merging $x_i$ and $x_j$ of $\widehat{\partial}_{st}\widehat{f^{\ast}}$, $x_1$ and $x_2$ will form a new binary  signature in $\widehat{\partial}_{(ij)(st)}\widehat{f^{\ast}}$. Contradiction.
Thus, $\{i', j'\}\neq \{i, j\}$.
Then, when merging $x_i$ and $x_j$ of $\widehat{\partial}_{st}\widehat{f^{\ast}}$,
among $\widehat{b^\ast_{st1}}(x_1, x_{i'})$ and $\widehat{b^\ast_{st2}}(x_2, x_{j'})$, at least one binary signature is untouched. 
Thus, $\widehat{\partial}_{(ij)(st)}\widehat{f^{\ast}}$ has a factor that is not an associate of $\neq_2$. 
A contradiction with  $\widehat{\partial}_{(ij)(st)}\widehat{f^{\ast}}\in {\mathcal{D}}^{\otimes}$,
which is a consequence of (\ref{form_ij}).
Thus, $\widehat{\partial}_{st}\widehat{f^{\ast}}\in {\mathcal{D}}^{\otimes}$.

Then, we show that $\widehat{\partial}_{uv}\widehat{f^{\ast}} \in {\mathcal{D}}^{\otimes}$. 
Recall the form (\ref{form_ij}) of $\widehat{\partial}_{ij}\widehat{f^{\ast}}$. Clearly, $\{u, v\}$ is disjoint with $\{1, 2, i, j\}$.
Also, $\widehat{\partial}_{uv}\widehat{f^{\ast}}\not\equiv 0$ since $\widehat{\partial}_{(ij)(uv)}\widehat{f^{\ast}} \in {\mathcal{D}}^{\otimes}.$
Consider the UPF of $\widehat{\partial}_{uv}\widehat{f^{\ast}}$.
\begin{itemize}
    \item 
If $x_1$ and $x_2$ appear in one nonzero binary signature $\widehat{b^\ast_{uv}}(x_1, x_2)$, then $$\widehat{\partial}_{uv}\widehat{f^{\ast}}=\widehat{b^\ast_{uv}}(x_1, x_2)\otimes \widehat{g_{uv}} ~~~~\text{ for some } \widehat{g_{uv}} \not\equiv 0.$$
Then, we have
 $\widehat{g_{uv}} \sim
 \widehat{\partial}_{(12)(uv)}\widehat{f^{\ast}} \in 
{\mathcal{D}}^{\otimes}$ since $\widehat{\partial}_{12}\widehat{f^{\ast}}\in {\mathcal{D}}^{\otimes}$.
Also, since $\widehat{b^\ast_{uv}}(x_1, x_2)\mid \widehat{\partial}_{(ij)(uv)}\widehat{f^{\ast}}\in {\mathcal{D}}^{\otimes}$, we have $\widehat{b^\ast_{uv}}(x_1, x_2)\in {\mathcal{D}}^{\otimes}$. Hence, $\widehat{\partial}_{uv}\widehat{f^{\ast}}\in {\mathcal{D}}^{\otimes}$.
\item If $x_1$ and $x_2$ appear in two distinct nonzero binary signatures $\widehat{b^\ast_{uv1}}(x_1, x_{i'})$ and $\widehat{b^\ast_{uv2}}(x_2, x_{j'})$, then
 $$\widehat{\partial}_{uv}\widehat{f^{\ast}}=\widehat{b^\ast_{uv1}}(x_1, x_{i'})\otimes \widehat{b^\ast_{uv2}}(x_2, x_{j'})\otimes\widehat{h_{uv}} ~~~~\text{ for some } \widehat{h_{uv}}\not\equiv 0.$$
 
Then, we have $ \widehat{h_{uv}} \in {\mathcal{D}}^{\otimes}$ since $\widehat{\partial}_{(12)(uv)}\widehat{f^{\ast}} \in 
{\mathcal{D}}^{\otimes}$. 
By the form (\ref{form_ij}) of $\widehat{\partial}_{ij}\widehat{f^{\ast}}$, 
after merging variables $x_u$ and $x_v$ of  $\widehat{\partial}_{ij}\widehat{f^{\ast}}$,
variables $x_1$ and $x_2$ form a binary $\neq_2$ in $\widehat{\partial}_{(uv)(ij)}\widehat{f^{\ast}}=\widehat{\partial}_{(ij)(uv)}\widehat{f^{\ast}}$. 
On the other hand,
by the form of $\widehat{\partial}_{uv}\widehat{f^{\ast}}$, the only way that $x_1$ and $x_2$  form a binary after merging two variables in $\widehat{\partial}_{uv}\widehat{f^{\ast}}$ is to merge  $x_{i'}$ and $x_{j'}$. 
Thus, we have $\{i', j'\}=\{i, j\}$. Since $\widehat{f^{\ast}}$ has arity $2n \geqslant 8$, we can find another pair of indices $\{s, t\}$ disjoint with $\{1, 2, i, j, u, v\}$. 
When merging variables $x_s$ and $x_t$ in $\widehat{\partial}_{uv}\widehat{f^{\ast}}$, 
binary signatures $\widehat{b^\ast_{uv1}}(x_1, x_{i'})$ and $\widehat{b^\ast_{uv2}}(x_2, x_{j'})$ are untouched. 
Thus, we have $\widehat{b^\ast_{uv1}}(x_1, x_{i'})\otimes \widehat{b^\ast_{uv2}}(x_2, x_{j'})\mid \widehat{\partial}_{(st)(uv)}\widehat{f^{\ast}}$.
As showed above, we have $\widehat{\partial}_{st}\widehat{f^{\ast}}\in {\mathcal{D}}^{\otimes}$ and then  $\widehat{\partial}_{(st)(uv)}\widehat{f^{\ast}}\in {\mathcal{D}}^{\otimes}$.
Thus, $\widehat{b^\ast_{uv1}}(x_1, x_{i'})\otimes \widehat{b^\ast_{uv2}}(x_2, x_{j'})\in {\mathcal{D}}^{\otimes}$ and then $\widehat{\partial}_{uv}\widehat{f^{\ast}}\in {\mathcal{D}}^{\otimes}$.
\end{itemize}

So far, we have shown that $\widehat{\partial}_{12}\widehat{f^{\ast}}\in {\mathcal{D}}^{\otimes}$, $\widehat{\partial}_{ij}\widehat{f^{\ast}}\in {\mathcal{D}}^{\otimes}$ and $\widehat{\partial}_{st}\widehat{f^{\ast}}\in {\mathcal{D}}^{\otimes}$ for all $\{s, t\}$ disjoint with $\{1, 2, i, j\}$.
If we can further show that $\widehat{\partial}_{ik}\widehat{f^{\ast}}\in {\mathcal{D}}^{\otimes}$ for all $k\neq 1, 2, i, j$, and then symmetrically $\widehat{\partial}_{jk}\widehat{f^{\ast}}\in {\mathcal{D}}^{\otimes}$ for all $k\neq 1, 2, i, j$, then 
$\widehat{\partial}_{st}\widehat{f^{\ast}}\in {\mathcal{D}}^{\otimes}$ for all $\{s, t\}$  disjoint with $\{1, 2\}$.
Thus, by our {Claim},  $\widehat{f^\ast}\in \widehat{\int}\mathcal{D}^{\otimes}$. This will finish the proof of Case 2.


Now we prove $\widehat{\partial}_{ik}\widehat{f^{\ast}}\in {\mathcal{D}}^{\otimes}$ for all $k\neq 1, 2, i, j$. 
Since $\widehat{\partial}_{(ik)(12)}\widehat{f^{\ast}}\in \mathcal{D}^{\otimes}$,
we have  $\widehat{\partial}_{ik}\widehat{f^{\ast}}\not\equiv 0$. 
So we can consider the UPF
of $\widehat{\partial}_{ik}\widehat{f^{\ast}}$. 
\begin{itemize}
    \item 
If $x_1$ and $x_2$ appear in one nonzero binary signature, then  $$\widehat{\partial}_{ik}\widehat{f^{\ast}}=\widehat{b^\ast_{ik}}(x_1, x_2)\otimes \widehat{g_{ik}} ~~~~\text{ for some } \widehat{g_{ik}}\in \mathcal{D}^{\otimes}.$$ 
Here, $\widehat{g_{ik}}\in \mathcal{D}^{\otimes}$
since $\widehat{\partial}_{(ik)(12)}\widehat{f^{\ast}}
\in \mathcal{D}^{\otimes}$.
Since $\widehat{f^\ast}$ has arity $2n \geqslant 8$,
we can pick a pair of indices $\{s, t\}$ disjoint with $\{1, 2, i, j, k\}$, and merge variables $x_s$ and $x_t$ of $\widehat{\partial}_{ik}\widehat{f^{\ast}}$.
Then, $\widehat{b^\ast_{ik}}(x_1, x_2)\mid \widehat{\partial}_{(st)(ik)}\widehat{f^{\ast}}.$
Since $\widehat{\partial}_{st}\widehat{f^{\ast}}\in \mathcal{D}^{\otimes}$,
$\widehat{\partial}_{(st)(ik)}\widehat{f^{\ast}}=\widehat{\partial}_{(ik)(st)}\widehat{f^{\ast}}\in {\mathcal{D}}^{\otimes}$. Thus, $\widehat{b^\ast_{ik}}(x_1, x_2) \in {\mathcal{D}}^{\otimes}$ and then $\widehat{\partial}_{ik}\widehat{f^{\ast}}\in {\mathcal{D}}^{\otimes}$.
\item    If $x_1$ and $x_2$ appear in two nonzero distinct binary signatures,
then $$\widehat{\partial}_{ik}\widehat{f^{\ast}}=\widehat{b^\ast_{ik1}}(x_1, x_p)\otimes \widehat{b^\ast_{ik2}}(x_2, x_q)\otimes\widehat{h_{ik}} ~~~~\text{ for some } \widehat{h_{ik}} \in \mathcal{D}^{\otimes}.$$
Again, here $\widehat{h_{ik}} \in \mathcal{D}^{\otimes}$
since $\widehat{\partial}_{(ik)(12)}\widehat{f^{\ast}}
\in \mathcal{D}^{\otimes}$.
By connecting variables $x_1$ and $x_2$ of $\widehat{\partial}_{ik}\widehat{f^{\ast}}$, $x_p$ and $x_q$ will form a binary disequality up to a nonzero scalar
(this binary signature is disequality
because we know that
 $\widehat{\partial}_{(ik)(12)}\widehat{f^{\ast}}\in \mathcal{D}^{\otimes}$).
By Lemma \ref{lem-binary-sim}, 
as the type of binary signatures,
$\widehat{{b^\ast_{ik1}}} \sim \widehat{b^\ast_{ik2}}.$
Between $x_p$ and $x_q$, at least one of them is not $x_j$; suppose that it is $x_p$. 
We pick a variable $x_r$ in the scope of $\widehat{h_{ik}}$ that is also not $x_j$ (such a  variable $x_r$  exists as $2n \geqslant 8$). Then, by merging $x_p$ and $x_r$ of $\widehat{\partial}_{ik}\widehat{f^{\ast}}$, 
the binary signature $\widehat{b^\ast_{ik2}}(x_2, x_q)$ is untouched. 
Since $\{p, r\}$ is disjoint with $\{1, 2, i, j\}$, we have $\widehat{b^\ast_{ik2}}(x_2, x_q)\mid \widehat{\partial}_{(ik)(pr)}\widehat{f^{\ast}}\in {\mathcal{D}}^{\otimes}$. Thus, we have $\widehat{b^\ast_{ik2}}(x_2, x_q)
\in {\mathcal{D}}^{\otimes}$ and so does
$\widehat{{b^\ast_{ik1}}}(x_1, x_p)$, since we have shown that they are  associates
as the type of binary signatures. 
Thus, $\widehat{\partial}_{ik}\widehat{f^{\ast}}\in {\mathcal{D}}^{\otimes}$.
    \end{itemize}

As remarked earlier,  by symmetry, we also have $\widehat{\partial}_{jk}\widehat{f^{\ast}}\in {\mathcal{D}}^{\otimes}$ for all $k\neq 1, 2, i, j$. Thus, we are done with Case 2.

Thus, an irreducible signature $\widehat{f^\ast}\in \widehat\int\mathcal{D}^{\otimes}$ of arity $2n$ is realized from $\widehat f$.
\end{proof}
\begin{remark}
Since $\widehat{f^\ast}$  is realized from $\widehat f$ by gadget construction,  $\widehat{f^\ast}$ satisfies {\sc ars}
as $\widehat{f}$ does.\end{remark}
     
     
   
   We first give a condition (Lemma~\ref{lem-twice-divide-8}) in which we can quite straightforwardly get the \#P-hardness of $\holant{\neq}{\widehat{f}, \widehat{\mathcal F}}$ by {\sc 2nd-Orth} given $\widehat{f}\in \widehat{\int}\mathcal{D}^{\otimes}$ is  an  irreducible $8$-ary signature.
   
   \begin{lemma}\label{lem-eo-not-2ndorth}
   Let 
   $\widehat{f}=a(1, 0)^{\otimes 2n}
   +\bar{a}(0, 1)^{\otimes 2n}
   + (\neq_2)(x_i, x_j) \otimes \widehat{g_{\rm h}}$ 
   be an irreducible $2n$-ary signature, where $2n\geqslant 4$ and 
$\widehat{g_{\rm h}}$ is a nonzero {\rm EO} signature (i.e., with half-weighted support) of arity $2n-2$. Then, $\widehat{f}$ does not satisfy {\sc 2nd-Orth}.
   \end{lemma}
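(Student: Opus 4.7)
The plan is to derive a contradiction with 2nd-Orth directly, by inspecting the two ``off-diagonal'' slices $\widehat{\bf f}^{01}_{ij}$ and $\widehat{\bf f}^{10}_{ij}$ indexed by the very pair $\{i,j\}$ appearing in the given decomposition of $\widehat{f}$. I will show that, because the factor $(\neq_2)(x_i,x_j)$ is completely symmetric under the swap $(0,1)\leftrightarrow(1,0)$, these two row vectors in fact coincide; their complex inner product is then $|\widehat{\bf g_{\rm h}}|^2>0$, which directly violates the orthogonality clause of Definition~\ref{def:second-order-othor}.

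To carry this out, first I would evaluate $\widehat{f}(x_i,x_j,y)$ summand by summand at $(x_i,x_j)=(0,1)$, where $y$ denotes the values on the remaining $2n-2$ variables. The rank-one term $a(1,0)^{\otimes 2n}$ vanishes (it requires $x_j=0$), the term $\bar a(0,1)^{\otimes 2n}$ vanishes (it requires $x_i=1$), and $(\neq_2)(x_i,x_j)\otimes\widehat{g_{\rm h}}$ evaluates to $1\cdot\widehat{g_{\rm h}}(y)$. Hence $\widehat{\bf f}^{01}_{ij}=\widehat{\bf g_{\rm h}}$ as row vectors. An identical computation at $(x_i,x_j)=(1,0)$ yields $\widehat{\bf f}^{10}_{ij}=\widehat{\bf g_{\rm h}}$ as well, since $(\neq_2)$ gives $1$ on both non-equal assignments.

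Then I would compute
\[
\langle\widehat{\bf f}^{01}_{ij},\widehat{\bf f}^{10}_{ij}\rangle=\langle\widehat{\bf g_{\rm h}},\widehat{\bf g_{\rm h}}\rangle=|\widehat{\bf g_{\rm h}}|^2,
\]
which is strictly positive since $\widehat{g_{\rm h}}$ is assumed nonzero. But 2nd-Orth demands $\langle\widehat{\bf f}^{ab}_{ij},\widehat{\bf f}^{cd}_{ij}\rangle=0$ whenever $(a,b)\neq(c,d)$; specializing to $(a,b)=(0,1)$ and $(c,d)=(1,0)$ we reach a contradiction, completing the proof.

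There is no real obstacle here: the argument is a direct two-line computation that exploits the intrinsic symmetry of $\neq_2$ between the assignments $(0,1)$ and $(1,0)$. Notably, neither ARS, the half-weight support of $\widehat{g_{\rm h}}$, nor the irreducibility of $\widehat{f}$ is actually used for this core step --- they appear as natural side hypotheses from the application context in which this lemma will be invoked. The assumption $2n\geq 4$ enters only to ensure that 2nd-Orth is defined (it is stated only for arity at least $4$) and that $\widehat{g_{\rm h}}$ has positive arity so that $|\widehat{\bf g_{\rm h}}|^2>0$ makes sense.
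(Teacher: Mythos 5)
Your proof is correct, and it is genuinely different from (and cleaner than) the one in the paper. The paper attacks the \emph{norm} clause of {\sc 2nd-Orth} across two different index pairs: after reducing to $\{i,j\}=\{1,2\}$ it observes that the only support point of $\widehat{f}$ with $(x_1,x_2)=(0,0)$ is $\vec{0}^{2n}$ (since $(\neq_2)(0,0)=0$), giving $|\widehat{\bf f}^{00}_{12}|^2=|\widehat f(\vec 0^{2n})|^2$; then, using that $\widehat{g_{\rm h}}$ is a nonzero EO signature, it finds a half-weight support element $\gamma$ with a $0$ bit, renames it into position $3$, and shows $|\widehat{\bf f}^{00}_{13}|^2 \geqslant |\widehat f(\vec 0^{2n})|^2+|\widehat f(010\gamma')|^2 > |\widehat{\bf f}^{00}_{12}|^2$, contradicting the uniformity of the constant $\lambda$ in Definition~\ref{def:second-order-othor}. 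You instead attack the \emph{orthogonality} clause within the single pair $\{i,j\}$: the symmetry of $\neq_2$ under swapping $(0,1)\leftrightarrow(1,0)$ forces $\widehat{\bf f}^{01}_{ij}=\widehat{\bf f}^{10}_{ij}=\widehat{\bf g_{\rm h}}$, whence $\langle\widehat{\bf f}^{01}_{ij},\widehat{\bf f}^{10}_{ij}\rangle=|\widehat{\bf g_{\rm h}}|^2>0$. Your route is shorter, avoids the variable renaming, and (as you correctly note) uses neither the EO property of $\widehat{g_{\rm h}}$, nor {\sc ars}, nor irreducibility --- only that $\widehat{g_{\rm h}}\not\equiv 0$. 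Both arguments are valid; yours isolates the one place where the $\neq_2$-symmetry does all the work.
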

   \begin{proof}
   By renaming variables, without loss of generality, we may assume that $\{i, j\}=\{1, 2\}$. 
   
   For any input $00\beta\neq \vec{0}^{2n}$ of $\widehat{f}$, 
we have $\widehat{f}(00\beta)=
(\neq_2)(0, 0)\cdot \widehat{g_{\rm h}}(\beta)=0$.
Thus, $$|\widehat{f}_{12}^{00}|^2= \sum_{\beta \in \mathbb{Z}_2^{2n-2}}|\widehat{f}(00\beta)|^2=|\widehat{f}(\vec{0}^{2n})|^2.$$
On the other hand, since both $(\neq_2)(x_1, x_2)$ and $\widehat{g_{\rm h}}$ are nonzero EO signatures,
$(\neq_2)(x_1, x_2)\otimes \widehat{g_{\rm h}}$ is a nonzero EO signature.
Then, we can pick an input $01\gamma\in \mathbb{Z}_2^{2n}$ with  ${\rm wt}(01\gamma)=n$ 
such that $\widehat{f}(01\gamma)=(\neq_2)(0,1)\cdot \widehat{g_{\rm h}}(\gamma)\neq 0.$
Since $\gamma\in \mathbb{Z}_2^{2n-2}$, and ${\rm wt}(\gamma)=n-1\geqslant 1$, 
there exists 
a bit $\gamma_i$ in $\gamma$ 
such that  $\gamma_i =0$. 
Without loss of generality, we may assume that $01\gamma=010\gamma'$. 
Then, $$|\widehat{f}_{13}^{00}|^2\geqslant |\widehat{f}(\vec{0}^{2n})|^2+ |\widehat{f}(010\gamma')|^2> |\widehat{f}(\vec{0}^{2n})|^2=|\widehat{f}_{12}^{00}|^2.$$
Note that the constant
$\lambda$ for the norm squares must be the same  for all index pairs $\{i,j\} \subseteq [2n]$ in order to satisfy {\sc 2nd-Orth}
in Definition~\ref{def:second-order-othor}.
Thus, $\widehat{f}$ does not satisfy {\sc 2nd-Orth}.
   \end{proof}
   
    \begin{lemma}\label{lem-twice-divide-8}
    Let $\widehat{f}\in \widehat{\int} \mathcal{D}^{\otimes}$ be an irreducible 8-ary signature in $\widehat{\mathcal{F}}$.
    If there exists a binary disequality $(\neq_2)(x_i, x_j)$ and two pairs of indices $\{u, v\}$ and $\{s, t\}$ where $\{u, v\}\cap\{s, t\}\neq \emptyset$ such that $(\neq_2)(x_i, x_j)\mid \widehat\partial_{uv}\widehat f$ and $(\neq_2)(x_i, x_j)\mid \widehat\partial_{st}\widehat f$,
    then $\holant{\neq_2}{\widehat{\mathcal{F}}}$ is \#P-hard.
    \end{lemma}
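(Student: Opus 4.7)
The plan is to peel $\widehat{f}$ into a ``boundary'' part on weights $0$ and $8$ plus an EO part on weight $4$, apply Lemma~\ref{lem-eo} to the EO part, rewrite $\widehat{f}$ in the form of Lemma~\ref{lem-eo-not-2ndorth}, and then invoke Lemma~\ref{second-ortho}.

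First I would pin down the support of $\widehat{f}$. Since $\widehat{f}\in\widehat{\int}\mathcal{D}^{\otimes}$, every $\widehat{\partial}_{ij}\widehat{f}$ is a tensor product of $\neq_{2}$'s on the remaining $6$ variables, and in particular is supported only on strings of weight $3$. Suppose for contradiction $\widehat{f}(\alpha)\neq 0$ for some $\alpha$ with ${\rm wt}(\alpha)=k\in\{1,2,3\}$; Lemma~\ref{lem-zero_2} then produces a pair $\{i,j\}$ and some $\beta$ with ${\rm wt}(\beta)=k-1\in\{0,1,2\}$ such that $\widehat{\partial}_{ij}\widehat{f}(\beta)\neq 0$, contradicting the above. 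The cases $k\in\{5,6,7\}$ are ruled out identically (or via {\sc ars}). Hence $\mathscr{S}(\widehat{f})\subseteq\{\vec{0}^{8}\}\cup\{\alpha:{\rm wt}(\alpha)=4\}\cup\{\vec{1}^{8}\}$. Letting $a=\widehat{f}(\vec{0}^{8})$, {\sc ars} gives $\widehat{f}(\vec{1}^{8})=\bar a$, so we may write
\[
\widehat{f}\;=\;a\,(1,0)^{\otimes 8}\;+\;\bar a\,(0,1)^{\otimes 8}\;+\;\widehat{g_{\mathrm h}},
\]
where $\widehat{g_{\mathrm h}}$ is the (possibly zero) restriction of $\widehat{f}$ to weight-$4$ entries and is therefore an $8$-ary {\rm EO} signature satisfying {\sc ars}.

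Next I would argue that $\widehat{g_{\mathrm h}}$ inherits all the merging hypotheses. The two boundary summands vanish in every $\widehat{\partial}_{ij}$ (since merging with $\neq_{2}$ only picks up the $(0,1)$- and $(1,0)$-slots of variables $x_i,x_j$, while $(1,0)^{\otimes 8}$ and $(0,1)^{\otimes 8}$ are zero there), so $\widehat{\partial}_{ij}\widehat{f}=\widehat{\partial}_{ij}\widehat{g_{\mathrm h}}$ for every pair. Thus $\widehat{g_{\mathrm h}}\in\widehat{\int}\mathcal{D}^{\otimes}$; in particular $\widehat{g_{\mathrm h}}$ is not identically zero (its mergings are not), and the hypothesis of the present lemma transfers verbatim: the same binary $(\neq_{2})(x_i,x_j)$ divides both $\widehat{\partial}_{uv}\widehat{g_{\mathrm h}}$ and $\widehat{\partial}_{st}\widehat{g_{\mathrm h}}$ with $\{u,v\}\cap\{s,t\}\neq\emptyset$. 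This puts $\widehat{g_{\mathrm h}}$ exactly in the first bullet of Lemma~\ref{lem-eo} (the $2n=8$ case), which yields $\widehat{g_{\mathrm h}}\in\mathcal{D}^{\otimes}$ and $(\neq_{2})(x_i,x_j)\mid\widehat{g_{\mathrm h}}$. Hence $\widehat{g_{\mathrm h}}=(\neq_{2})(x_i,x_j)\otimes\widehat{h}$ for a nonzero $6$-ary EO signature $\widehat{h}\in\mathcal{D}^{\otimes}$.

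Substituting back gives
\[
\widehat{f}\;=\;a\,(1,0)^{\otimes 8}\;+\;\bar a\,(0,1)^{\otimes 8}\;+\;(\neq_{2})(x_i,x_j)\otimes\widehat{h},
\]
which is precisely the shape assumed in Lemma~\ref{lem-eo-not-2ndorth}. That lemma then tells us $\widehat{f}$ fails {\sc 2nd-Orth}. Since $\widehat{f}\in\widehat{\mathcal{F}}$ is an irreducible $8$-ary signature, Lemma~\ref{second-ortho} immediately concludes that $\holant{\neq_{2}}{\widehat{\mathcal{F}}}$ is \#P-hard.

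The only step with any real content is the first: proving the support restriction to weights $\{0,4,8\}$ and then transferring the two divisibility conditions from $\widehat{f}$ to $\widehat{g_{\mathrm h}}$ so that Lemma~\ref{lem-eo} becomes applicable. Everything after that is a one-line citation chain (Lemma~\ref{lem-eo}$\to$Lemma~\ref{lem-eo-not-2ndorth}$\to$Lemma~\ref{second-ortho}). I do not anticipate a genuine obstacle, but care is needed to verify that the boundary part truly disappears under every merging, which is what makes the reduction to the clean EO setting work.
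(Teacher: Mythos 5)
Your proof is correct and follows essentially the same route as the paper: the support restriction to weights $\{0,4,8\}$ via Lemma~\ref{lem-zero_2}, the decomposition into two boundary tensor powers plus an EO part that inherits all merging conditions, the application of Lemma~\ref{lem-eo} (the $2n=8$ bullet) to extract a $(\neq_2)(x_i,x_j)$ factor, and the final chain through Lemma~\ref{lem-eo-not-2ndorth} and Lemma~\ref{second-ortho}. The only difference is notational (you call the EO part $\widehat{g_{\mathrm h}}$ and its quotient $\widehat{h}$, where the paper uses $\widehat{f_{\rm h}}$ and $\widehat{g_{\rm h}}$).
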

\begin{proof}
For all pairs of indices $\{i, j\}$, since $\widehat{\partial}_{ij}\widehat{f} \in 
{\mathcal{D}}^{\otimes}$,  $\mathscr{S}(\widehat{\partial}_{ij}\widehat{f})$ is on half-weight. 
By Lemma \ref{lem-zero_2}, we have $\widehat{f}(\alpha) =0$ for all ${\rm wt}(\alpha)\neq 0, 4, 8$.
Suppose that $\widehat{f}(\vec{0}^{8})=a$ and by {\sc ars} $\widehat{f}(\vec{1}^{8})=\bar a$. 
We can write $\widehat{f}$ in the following form
$$\widehat{f}=a(1, 0)^{\otimes 8}+\bar{a}(0, 1)^{\otimes 8}+\widehat{f_{\rm h}},$$
where $\widehat{f_{\rm h}}$  is an EO signature of arity $8$.

Clearly, $\widehat{\partial}_{ij}\widehat{f}=\widehat{\partial}_{ij}\widehat{f_{\rm h}}$ for all $\{i,j\}.$
Then, $\widehat{f_{\rm h}}\in \widehat{\int} \mathcal{D}^{\otimes}$ since  $\widehat{f}\in \widehat{\int} \mathcal{D}^{\otimes}.$
In addition, since there exists a binary disequality $(\neq_2)(x_i, x_j)$ and two pairs of indices $\{u, v\}$ and $\{s, t\}$ where $\{u, v\}\cap\{s, t\}\neq \emptyset$ such that $(\neq_2)(x_i, x_j)\mid \widehat\partial_{uv}\widehat{f_{\rm h}},  \widehat\partial_{st}\widehat{f_{\rm h}}$,
by Lemma \ref{lem-eo}, 
$\widehat{f_{\rm h}}\in \mathcal{D}^{\otimes}$ and $(\neq_2)(x_i, x_j)\mid \widehat{f_{\rm h}}$.
Thus, 
$$\widehat{f}=a(1, 0)^{\otimes 8}+\bar{a}(0, 1)^{\otimes 8}+(\neq_{2})(x_i, x_j)\otimes \widehat{g_{\rm h}},$$
where $\widehat{g_{\rm h}}\in \mathcal{D}^{\otimes}$ is a nonzero EO signature or arity $6$ since $\widehat{f_{\rm h}}\in \mathcal{D}^{\otimes}$.
By Lemma~\ref{lem-eo-not-2ndorth},
$\widehat{f}$ does not satisfy {\sc 2nd-Orth}.
  Thus, $\holant{\neq_2}{\widehat{\mathcal{F}}}$ is \#P-hard by Lemma \ref{second-ortho}. 
\end{proof}

 For signatures in $\mathcal{D}^{\otimes}$, we give the following property. 
   Now we adopt the following notation for brevity. We use $(i, j)$ to denote the binary disequality $(\neq_2)(x_i, x_j)$ on variables $x_i$ and $x_j$.
    \begin{lemma}\label{lem-divide-twice}
    Let $\widehat{f}\in \mathcal{D}^{\otimes}$ be a signature of arity at least $6$.
    If there exist $\{u, v\}\neq \{s, t\}$ such that $(i, j)\mid \widehat \partial_{uv}\widehat{f}$ and $(i, j)\mid \widehat \partial_{st}\widehat{f}$, then 
    $(i, j)\mid \widehat{f}$.
    \end{lemma}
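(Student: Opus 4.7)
The plan is to exploit the fact that every $\widehat{f}\in \mathcal{D}^{\otimes}$ is determined (up to a nonzero real scalar) by a partition $\mathcal{P}$ of its $2n$ indices into unordered pairs, so that $\widehat{f}=\lambda\bigotimes_{\{a,b\}\in\mathcal{P}}(\neq_2)(x_a,x_b)$. Since $\neq_2$ is irreducible, the unique prime factorization of $\widehat{f}$ (Lemma~\ref{unique}) shows that $\mathcal{P}$ is canonically determined by $\widehat{f}$, and moreover ``$(i,j)\mid \widehat{f}$'' is equivalent to ``$\{i,j\}\in \mathcal{P}$''. The whole lemma thereby reduces to a combinatorial statement about how $\mathcal{P}$ transforms under the merging gadget $\widehat{\partial}_{uv}$.

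First I would compute $\widehat{\partial}_{uv}\widehat{f}$ directly from the tensor-product form, splitting into two cases. If $\{u,v\}\in\mathcal{P}$, then the single factor $\neq_2(x_u,x_v)$ is merged with $\neq_2$, contributing the scalar $2$, and $\widehat{\partial}_{uv}\widehat{f}$ is a nonzero scalar multiple of the tensor product over $\mathcal{P}\setminus\{\{u,v\}\}$. If $\{u,v\}\notin\mathcal{P}$, let $u'$ denote the partner of $u$ in $\mathcal{P}$ and $v'$ the partner of $v$; a one-line calculation gives
\[
\sum_{x_u,x_v}\!\neq_2(x_u,x_v)\cdot\neq_2(x_u,x_{u'})\cdot\neq_2(x_v,x_{v'})\ =\ \neq_2(x_{u'},x_{v'}),
\]
so the new partition is $(\mathcal{P}\setminus\{\{u,u'\},\{v,v'\}\})\cup\{\{u',v'\}\}$. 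In either case the resulting pairing is still a partition into pairs, and so $\widehat{\partial}_{uv}\widehat{f}\in \mathcal{D}^{\otimes}$.

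With this dichotomy in hand, the hypothesis $(i,j)\mid\widehat{\partial}_{uv}\widehat{f}$ means that $\{i,j\}$ appears in one of the two partitions described above. So either $\{i,j\}\in\mathcal{P}$ (which is exactly the conclusion I want), or $\{u,v\}\notin\mathcal{P}$ and $\{i,j\}=\{u',v'\}$; in the latter case $\{u,v\}$ is forced to equal $\{p_i,p_j\}$, where $p_k$ denotes the unique partner of $k$ in $\mathcal{P}$. The same dichotomy applies verbatim to $\{s,t\}$. Suppose for contradiction that $\{i,j\}\notin\mathcal{P}$. Then both $\{u,v\}$ and $\{s,t\}$ are forced to equal $\{p_i,p_j\}$, contradicting $\{u,v\}\neq\{s,t\}$. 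Hence $\{i,j\}\in\mathcal{P}$, i.e., $(i,j)\mid\widehat{f}$.

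I do not expect a real obstacle here; the entire argument is combinatorial book-keeping once the action of $\widehat{\partial}_{uv}$ on the pairing $\mathcal{P}$ is understood, and that action is captured by the one-line $\neq_2$ calculation above. The arity hypothesis $2n\geqslant 6$ is not used in the core argument itself; it merely ensures that $\mathcal{P}$ has at least three pairs, which is what makes the statement meaningful in the intended downstream application.
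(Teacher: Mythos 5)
Your proof is correct and takes essentially the same route as the paper. The paper likewise argues by contrapositive: if $(i,j)\nmid\widehat{f}$, then in the unique prime factorization $x_i$ and $x_j$ lie in distinct factors $(i,\ell)\otimes(j,k)$, and the only merge that can fuse $x_i$ with $x_j$ into a single $\neq_2$ is $\widehat{\partial}_{\ell k}$; your partition language and the explicit case analysis ($\{u,v\}\in\mathcal{P}$ versus $\{u,v\}\notin\mathcal{P}$) simply spell out the ``only way'' claim that the paper states in one sentence, and you reach the same contradiction $\{u,v\}=\{p_i,p_j\}=\{s,t\}$.
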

    \begin{proof}
    For a contradiction, suppose that $(i, j)\nmid \widehat{f}$. Thus $x_i$ and $x_j$ appear in two separate disequalities in the UPF of 
    $\widehat f$.
    Since $\widehat f\in \mathcal{D}^{\otimes}$, there exists $\{\ell, k\}$ such that $(i, \ell)\otimes(j, k)\mid \widehat f$.
    By merging two variables of $\widehat f$, 
    the only way to make $x_i$ and $x_j$ to form a binary disequality is by merging $x_\ell$ and $x_k$.
    By the  hypothesis of the lemma, $\{\ell, k\}=\{u, v\}=\{s, t\}$.
    Contradiction.
    \end{proof}

      \begin{theorem}\label{thm-f8}
Let $\widehat{f}\notin \widehat{\mathcal{O}}{^\otimes}$ be a signature of arity $8$ in $\widehat{\mathcal{F}}$.
 Then
\begin{itemize}
   \item  $\holant{\neq_2}{\widehat{\mathcal{F}}}$ is \#P-hard, or
  \item there exists some $\widehat{Q}\in \widehat{{\bf O}_2}$ such that $\holant{\neq_2}{\widehat{f}_8, \widehat Q\widehat{\mathcal{F}}}\leqslant_T\holant{\neq_2}{\widehat{\mathcal{F}}}.$
\end{itemize}
      \end{theorem}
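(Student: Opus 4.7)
The plan is to mirror the strategy used for $\widehat{f_6}$ in Theorem~\ref{lem-6-ary-f6}, but taking advantage of the refined intermediate form guaranteed by Lemma~\ref{lem-even>8}. The first move is to invoke Lemma~\ref{lem-even>8} on $\widehat{f}$: this yields either \#P-hardness directly, or a signature of arity $2k\leqslant 6$ not in $\widehat{\mathcal{O}}^\otimes$ (at which point Lemmas~\ref{lem-2-ary}, \ref{lem-4-ary} and \ref{lem-6-ary} close the case), or an irreducible signature $\widehat{f^\ast}$ of arity $8$ realizable from $\widehat{f}$ that lies in $\widehat{\int}\mathcal{D}^\otimes$. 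I proceed under the third branch and rename $\widehat{f^\ast}$ back to $\widehat{f}$ for convenience.

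Next I would extract the shape of $\widehat{f}$. Because every $\widehat\partial_{ij}\widehat{f}\in\mathcal{D}^\otimes$ is supported only on weight-$3$ strings, Lemma~\ref{lem-zero_2} forces $\widehat{f}(\alpha)=0$ whenever ${\rm wt}(\alpha)\notin\{0,4,8\}$. Setting $a=\widehat{f}(\vec{0}^8)$ (so $\widehat{f}(\vec{1}^8)=\bar a$ by {\sc ars}), I decompose
\[
\widehat{f}\;=\;a(1,0)^{\otimes 8}+\bar{a}(0,1)^{\otimes 8}+\widehat{f_{\rm h}},
\]
where $\widehat{f_{\rm h}}$ is an EO signature satisfying {\sc ars} and satisfying $\widehat\partial_{ij}\widehat{f_{\rm h}}=\widehat\partial_{ij}\widehat{f}\in\mathcal{D}^\otimes$ for every $\{i,j\}$. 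Because $\widehat{f}$ is irreducible and must satisfy {\sc 2nd-Orth} (else Lemma~\ref{second-ortho} yields hardness), Lemma~\ref{lem-eo-not-2ndorth} rules out $\widehat{f_{\rm h}}$ being divisible by a single disequality times an EO factor. I would then invoke Lemma~\ref{lem-twice-divide-8}: if two overlapping merges $\widehat\partial_{uv}\widehat{f}$ and $\widehat\partial_{st}\widehat{f}$ share a common binary-disequality factor we are already done, so we may suppose otherwise; combined with Lemma~\ref{lem-divide-twice}, this drastically restricts how the six $\neq_2$ factors of each 6-ary merge can be routed among the remaining variables.

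The central step will be to pin down $\widehat{f_{\rm h}}$ from these local constraints. Using that every $\widehat\partial_{ij}\widehat{f_{\rm h}}\in\mathcal{D}^\otimes$, together with the shared-factor avoidance above, I would track the possible $\mathcal{D}^\otimes$ UPFs of all $\binom{8}{2}=28$ merges and do the arity-$8$ analog of the combinatorial classification carried out in Lemmas~\ref{lem-arity6-3}--\ref{lem-arity6-4} for $\widehat{f_6}$. This should force $\mathscr S(\widehat{f_{\rm h}})$ to be, after permutation of the variables, precisely the $14$ weight-$4$ points of the affine subspace $T=\mathscr S(f_8)$ from (\ref{equ-T-polynomial}); moreover {\sc 2nd-Orth} forces the nonzero entries on $T$ to have equal norm and further determines their phases up to a global factor $\rho$ with $|\rho|=1$. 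Adding back the weight-$0$ and weight-$8$ masses, {\sc 2nd-Orth} on pairs $\{i,j\}$ relating $\widehat f_{ij}^{00}$ to the weight-$4$ part fixes $|a|$ to equal the common norm on $T$, and an appropriate choice of diagonal $\widehat Q=\operatorname{diag}(\bar\rho,\rho)\in\widehat{{\bf O}_2}$ (of the same shape as in Lemma~\ref{lem-arity6-3}) then normalizes $\widehat Q\widehat{f}$ to $\widehat{f_8}$, giving the desired reduction $\holant{\neq_2}{\widehat{f_8},\widehat Q\widehat{\mathcal{F}}}\leqslant_T\holant{\neq_2}{\widehat{\mathcal{F}}}$.

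The hardest part will be the combinatorial classification of $\mathscr S(\widehat{f_{\rm h}})$. Unlike the arity-$6$ case where a $4\times 4$ representative matrix and direct equation-chasing via the $H_4$-gadget sufficed, here the EO part lives on the much larger Johnson-type configuration of weight-$4$ strings in $\mathbb Z_2^8$, and the merge-in-$\mathcal{D}^\otimes$ condition together with Lemmas~\ref{lem-twice-divide-8} and \ref{lem-divide-twice} must be wielded repeatedly to kill every alternative support pattern. In effect one must show that the Queen's support has no option but to coincide with $\mathscr S(f_8)$, which, as the uniqueness portion of Lemma~\ref{lem-indenpent-set} for $G_8$ already foreshadows, is the uniquely rigid configuration consistent with all the imposed constraints.
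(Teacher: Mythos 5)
You have correctly identified the opening and the endgame: reduce via Lemma~\ref{lem-even>8} to an irreducible $8$-ary $\widehat{f^\ast}\in\widehat{\int}\mathcal{D}^\otimes$, decompose it as $a(1,0)^{\otimes 8}+\bar a(0,1)^{\otimes 8}+\widehat{f_{\rm h}}$ with $\widehat{f_{\rm h}}$ an EO signature, use Lemma~\ref{lem-twice-divide-8} to rule out shared disequality factors between overlapping merges, and finish with a diagonal $\widehat Q\in\widehat{{\bf O}_2}$ normalization. The right auxiliary results are all on the table. But the load-bearing middle section is not supplied: you describe, but do not execute, the step that pins down $\mathscr{S}(\widehat{f_{\rm h}})$ to be (a renaming of) $\mathscr{S}(f_8)$ and the entries to be constant on weight-4 inputs. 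This is not a small gap that can be filled by routine case-chasing of ``all $\binom{8}{2}=28$ UPFs''; it is the entire technical content of the theorem.

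For comparison, the paper's proof of this step is a genuinely different kind of argument than your imagined ``arity-8 analog of Lemmas~\ref{lem-arity6-3}--\ref{lem-arity6-4}.'' It first nails down a concrete list of divisibility facts (after renaming variables): $(\ell,k)\mid\widehat\partial_{ij}\widehat{f}$ for all partitions $\{i,j\}\sqcup\{\ell,k\}=\{1,2,3,4\}$, together with $(5,6)\mid\widehat\partial_{12}\widehat{f}$, $(5,7)\mid\widehat\partial_{13}\widehat{f}$, $(6,7)\mid\widehat\partial_{23}\widehat{f}$, and one of $(1,2)$ or $(1,3)$ dividing $\widehat\partial_{56}\widehat{f}$. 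These are extracted by a case analysis in which Lemma~\ref{lem-divide-twice} and Lemma~\ref{lem-twice-divide-8} are wielded precisely to exclude competing factorizations. Those divisibility relations then force $\widehat{f}_{1234}^{\alpha}\equiv 0$ for ${\rm wt}(\alpha)\in\{1,3\}$, and {\sc 2nd-Orth} (via equation~(\ref{e5})) gives the norm identities $|\widehat{f}(\alpha\beta)|=|\widehat{f}(\alpha\overline\beta)|=|\widehat{f}(\overline\alpha\beta)|=|\widehat{f}(\overline\alpha\overline\beta)|$. From there the proof reduces to a $3\times 3$ matrix of representative entries, whose off-diagonal norms $x,y,z$ are shown equal by row/column sum identities and then equal to zero by examining real/imaginary parts against the requirement that merged signatures lie in $\mathcal{D}^\otimes$ (hence real). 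Only then does the support collapse to $\mathscr{S}(f_8)$. Your appeal to the uniqueness of $I_8$ in Lemma~\ref{lem-indenpent-set} is a nice intuition, but it is not what drives the argument here (that lemma is used in the separate $\Holantb$ analysis in Lemma~\ref{lem-affine-norm}, not in Theorem~\ref{thm-f8}). Without supplying the concrete divisibility casework and the ensuing orthogonality computation, the proposal as written does not constitute a proof.

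One smaller point: you state that Lemma~\ref{lem-eo-not-2ndorth} ``rules out $\widehat{f_{\rm h}}$ being divisible by a single disequality times an EO factor.'' While that conclusion is valid, the paper does not derive or use $\widehat{f_{\rm h}}\notin\mathcal{D}^\otimes$ as a standalone consequence; the work is instead organized around merges $\widehat\partial_{ij}\widehat{f}$ and Lemma~\ref{lem-twice-divide-8}. Framing the obstruction at the level of $\widehat{f_{\rm h}}$ itself, rather than at the level of the merges, is a plausible alternative entry point, but you would then still need a route from ``$\widehat{f_{\rm h}}$ has at least two inequivalent $\mathcal{D}^\otimes$ partners'' to the rigid support $\mathscr{S}(f_8)$, and that route is exactly what is missing.
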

\begin{proof}
By Lemma~\ref{lem-even>8}, we may assume that an irreducible signature $\widehat{f^\ast}$ of arity $8$ where $\widehat{f^\ast}\in \widehat{\int}\mathcal{D}^{\otimes}$ is realizable from $\widehat{f}$, and $\widehat{f^\ast}$ also satisfies {\sc ars}. 
Otherwise,  $\holant{\neq_2}{\widehat{\mathcal{F}}}$ is \#P-hard or we can realize a signature $\widehat{g}\notin \widehat{\mathcal{O}}^{\otimes}$ of arity $2, 4$ or $6$.
Then, by Lemmas \ref{lem-2-ary}, \ref{lem-4-ary} and \ref{lem-6-ary}, we get  \#P-hardness.
We will show that $\widehat{f_8}$ is realizable from $\widehat{f^\ast}$, or otherwise we get  \#P-hardness. For brevity of notation, we rename $\widehat{f^\ast}$ by $\widehat{f}$.
We first show that after renaming variables by applying a suitable permutation to $\{1,2, \ldots, 8\}$, 
for all $\{i, j\}\subseteq\{1, 2, 3, 4\}$, $(\ell, k)\mid \widehat{\partial}_{ij}\widehat{f}$ where $\{\ell, k\}=\{1, 2, 3, 4\}\backslash \{i, j\}$. 
   Furthermore, we show that
   either $ \holant{\neq_2}{\widehat{\mathcal{F}}}$ is \#P-hard, or 
   \begin{equation}\label{eqn:divisible56-57-67}
    (5, 6)\mid \widehat \partial_{12}\widehat f, ~~ (5, 7)\mid \widehat \partial_{13} \widehat f, ~~ (6, 7)\mid \widehat \partial_{23} \widehat f, \text { and }~~ (1, 2)\mid \widehat \partial_{56}\widehat{f} \text { ~or~ } (1, 3)\mid \widehat \partial_{56}\widehat{f}.
    \end{equation}

Consider $\widehat\partial_{12}\widehat f$.
Since $\widehat{ f} \in \widehat{\int}\mathcal{D}^{\otimes}$, 
$\widehat\partial_{12}\widehat f\in \mathcal{D}^{\otimes}$.
By renaming variables, without loss of generality, we may assume that 
\begin{equation}\label{f12}
\widehat\partial_{12}\widehat f=\lambda_{12}\cdot(3, 4)\otimes(5, 6)\otimes(7, 8),
\end{equation}
for some $\lambda_{12} \in \mathbb{R} \setminus \{0\}$.
Then, consider $\widehat\partial_{34}\widehat f$.
$\widehat\partial_{56}\widehat f$, and $\widehat\partial_{78}\widehat f$.
There are two cases. 
\begin{itemize}
    \item Case 1.
      $(1, 2)\mid \widehat\partial_{34}\widehat f, \widehat\partial_{56}\widehat f$ and $\widehat\partial_{78}\widehat f.$ 
     Then we can write $\widehat\partial_{56}\widehat{f}=(1, 2)\otimes \widehat h$ for some $\widehat h\in \mathcal{D}^{\otimes}$.
Clearly, $\widehat h\sim \widehat\partial_{(12)(56)}\widehat{f}$.
By the form (\ref{f12}) and commutativity, $\widehat\partial_{(12)(56)}\widehat{f}\sim (3,4)\otimes(7, 8).$
Thus,  $\widehat h\sim (3,4)\otimes(7, 8)$.
Then, for  some $\lambda_{56} \in \mathbb{R} \setminus \{0\}$,
\begin{equation}\label{f561}
\widehat\partial_{56}\widehat f=\lambda_{56}\cdot(1, 2)\otimes(3, 4)\otimes(7, 8).
\end{equation}
Similarly, we have 
\begin{equation*}\label{f781}
\widehat\partial_{78}\widehat f=\lambda_{78}\cdot(1, 2)\otimes(3, 4)\otimes(5, 6),
\end{equation*} 
and  
\begin{equation*}\label{f341}
\widehat\partial_{34}\widehat f=\lambda_{34}\cdot(1, 2)\otimes(5, 6)\otimes(7, 8),
\end{equation*}
for  some $\lambda_{78}, \lambda_{34} \in \mathbb{R} \setminus \{0\}$.

Let $\widehat{g}=(1, 2)\otimes(3, 4)$.
Let $\{i, j\}\subseteq\{1, 2, 3, 4\}$ and $\{\ell, k\}=\{1, 2, 3, 4\}\backslash\{i, j\}$.
If we merge variables $x_i$ and $x_j$ of $\widehat{g}$, i.e., if we form
$\widehat{\partial}_{ij} \widehat{g}$, then clearly variables $x_\ell$ and $x_k$ will form a disequality.
Thus, for all $\{i, j\}\subseteq\{1, 2, 3, 4\}$, $(\ell, k)\mid \widehat{\partial}_{ij}\widehat{g}$.
Then, $(\ell, k)\mid \widehat{\partial}_{ij}\widehat{g}\otimes (7, 8)\sim \widehat\partial_{(ij)(56)}\widehat f$ by (\ref{f561}),  and 
similarly $(\ell, k)\mid \widehat{\partial}_{ij}\widehat{g}\otimes (5, 6)\sim \widehat\partial_{(ij)(78)}\widehat f.$
By Lemma~\ref{lem-divide-twice},   $(\ell, k)\mid \widehat\partial_{ij}\widehat f$.



\item 
Case 2. Among $\widehat\partial_{34}\widehat f$, 
$\widehat\partial_{56}\widehat f$, and $\widehat\partial_{78}\widehat f$, there is at least one signature that is not divisible by $(1, 2)$.
Without loss of generality, suppose that $(1, 2)\nmid \widehat\partial_{56}\widehat f$.
Since $\widehat\partial_{56}\widehat f\in \mathcal{D}^{\otimes}$, there exists $\{u, v\}$
disjoint from $\{1, 2, 5, 6\}$ such that $(1, u)\otimes(2, v)\mid \widehat\partial_{56}\widehat f$.
Then, by merging variables $x_1$ and $x_2$ of $\widehat\partial_{56}\widehat f$,
we have $(u,v) \mid \widehat\partial_{(12)(56)}\widehat f$;
comparing it to  $\widehat\partial_{(56)(12)}\widehat f$ using the form of 
(\ref{f12}) and by unique factorization we get
$\{u, v\}=\{3,4\}$ or $\{7,8\}$.
Without loss of generality (i.e., this is still within the freedom of our naming variables subject to the choices made so far), we may assume that $\{u, v\}=\{3, 4\}$ and furthermore, $u=3$ and $v=4$.
Then,  for  some $\lambda'_{56} \in \mathbb{R} \setminus \{0\}$,
\begin{equation}\label{f562}
\widehat\partial_{56}\widehat f=\lambda'_{56}\cdot(1, 3)\otimes(2, 4)\otimes(7, 8).
\end{equation}
Then, consider $\widehat\partial_{78}\widehat f$. 
We show that $(5, 6)\mid \widehat\partial_{78}\widehat f$.
Otherwise, there exists $\{s, t\}$
disjoint from $\{5, 6, 7, 8\}$ such that $(5, s)\otimes(6, t)\mid \widehat\partial_{78}\widehat f$.
By merging two variables of $\widehat \partial_{78} \widehat f$, 
    the only way to make $x_5$ and $x_6$ form a binary disequality is to merge $x_s$ and $x_t$.
By the form (\ref{f12}), $(5, 6)\mid \widehat\partial_{(12)(78)}\widehat f$.
    Thus, $\{s, t\}=\{1, 2\}$.
    From
    $(5, s)\otimes(6, t)\mid \widehat\partial_{78}\widehat f$, and $\{s, t\}=\{1, 2\}$ we know that $x_1$ and $x_2$ will form a binary disequality in 
    $\widehat \partial_{(56)(78)} \widehat f$.
    Thus, $(1, 2)\mid \widehat \partial_{(56)(78)} \widehat f$. However, by (\ref{f562})
    $\widehat \partial_{(56)(78)} \widehat f \sim (1, 3)\otimes (2, 4)$.
    This is a contradiction to UPF.
    Thus, $\widehat \partial_{78} \widehat f=(5, 6)\otimes \widehat{g'}$ and $\widehat{g'}\sim \widehat \partial_{(56)(78)} \widehat f\sim (1, 3)\otimes (2, 4).$ 
    Then,  for  some $\lambda'_{78} \in \mathbb{R} \setminus \{0\}$,
    \begin{equation}\label{78-2}
\widehat\partial_{78}\widehat f=\lambda'_{78}\cdot(1, 3)\otimes(2, 4)\otimes(5, 6).
\end{equation}
Let $\{i, j\}\subseteq\{1, 2, 3, 4\}$ and $\{\ell, k\}=\{1, 2, 3, 4\}\backslash\{i, j\}$.
If we merge variables $x_i$ and $x_j$ of $\widehat{g'}$,
which is an associate of $(1, 3)\otimes (2, 4)$, then clearly variables $x_\ell$ and $x_k$ will form a disequality.
Thus, for all $\{i, j\}\subseteq\{1, 2, 3, 4\}$, $(\ell, k)\sim \widehat{\partial}_{ij}\widehat{g'}$.
Then, $(\ell, k)\mid \widehat{\partial}_{ij}\widehat{g'}\otimes (7, 8)\sim \widehat\partial_{(ij)(56)}\widehat f$ (by (\ref{f562})) and $(\ell, k)\mid \widehat{\partial}_{ij}\widehat{g'}\otimes (5, 6)\sim \widehat\partial_{(ij)(78)}\widehat f$  (by (\ref{78-2})).
By Lemma~\ref{lem-divide-twice},   $(\ell, k)\mid \widehat\partial_{ij}\widehat f$.
\end{itemize} 

Thus, in both cases, we have  $(\ell, k)\mid \widehat\partial_{ij}\widehat f$ where $\{i, j\}\sqcup\{\ell, k\}=\{1, 2, 3, 4\}$ is an arbitrary disjoint union of two pairs.
Now, we show that in both cases, (with possibly switching the names $x_7$ and $x_8$, which we are 
still free to do), we can have
\begin{equation}\label{eqn:3-divisibility}
(5, 6)\mid \widehat \partial_{12}\widehat f, ~~(5, 7)\mid\partial_{13} \widehat f,  
~~(6, 7)\mid \widehat \partial_{23} \widehat f.
\end{equation}

Clearly, by the form (\ref{f12}), we have $(5, 6)\mid \widehat \partial_{12}\widehat f$.
Consider $\partial_{13}\widehat f$.
We already know that $(2, 4)\mid \partial_{13}\widehat f$ (in both cases).
If $(5, 6)\mid \widehat \partial_{13}\widehat f$, then since $(5, 6)\mid \widehat \partial_{12}\widehat f$ and $\{1, 2\}\cap\{1, 3\} \ne \emptyset$, by Lemma~\ref{lem-twice-divide-8}, $\holant{\neq_2}{\widehat{\mathcal{F}}}$ is \#P-hard.
Thus,  $(5, 7)\mid \widehat \partial_{13}\widehat f$ or  $(5, 8)\mid \widehat \partial_{13}\widehat f$.
By renaming variables $x_7$ and $x_8$, we may assume that in both cases

\begin{equation}\label{new-13} \widehat \partial_{13}\widehat f=(2, 4)\otimes (5, 7) \otimes (6, 8).
\end{equation}
This renaming will not change  any of the above forms of $\widehat{\partial_{ij}}\widehat{f}$.
Consider $\widehat \partial_{23}\widehat f$.
We already have $(1, 4)\mid\widehat \partial_{23}\widehat f$.
We know $\widehat \partial_{23}\widehat f \in \mathcal{D}^{\otimes}$, and so in its UPF,
$(6, r) \mid \widehat \partial_{23}\widehat f$,
for some $r \in [8] \setminus \{1,2,3,4,6\}$.
If $(5, 6)\mid\widehat \partial_{23}\widehat f$, then since $(5, 6)\mid\widehat \partial_{12}\widehat f$ and $\{1, 2\}\cap\{2, 3\}\neq\emptyset$, by Lemma~\ref{lem-twice-divide-8}, we get \#P-hardness.
If $(6, 8)\mid\widehat \partial_{2 3}\widehat f$, then since $(6, 8)\mid\widehat \partial_{13}\widehat f$ by (\ref{new-13}) and $\{1, 3\}\cap\{2, 3\}\neq\emptyset$, again by Lemma~\ref{lem-twice-divide-8}, we get \#P-hardness.
Thus, we may assume that 
$r = 7$ and $(6, 7)\mid\widehat \partial_{2 3}\widehat f$.
Therefore,  we have established
(\ref{eqn:3-divisibility}) in  both cases.
Furthermore, in  Case 1, we have $(1, 2)\mid \widehat \partial_{56}\widehat{f}$ by form (\ref{f561}), and in Case 2,  we have $(1, 3)\mid \widehat \partial_{56}\widehat{f}$ by form (\ref{f562}).

Now, we show that for any $\alpha\in \mathbb{Z}_2^4$ with ${\rm wt}(\alpha)= 1$, $\widehat f_{1234}^{\alpha}\equiv 0$.
Since $(3,4)\mid \widehat{\partial}_{12}\widehat{f}$, $(\widehat{\partial}_{12}\widehat{f})_{34}^{00}\equiv 0$.
Since $\{1, 2\}$ is disjoint with $\{3, 4\}$, 
\begin{equation}\label{12zero}
    (\widehat{\partial}_{12}\widehat{f})_{34}^{00}=\widehat{\partial}_{12}(\widehat{f}_{34}^{00})=\widehat{f}_{1234}^{0100}+\widehat{f}_{1234}^{1000}\equiv 0.
\end{equation}
Since  $(1,4)\mid \widehat{\partial}_{23}\widehat{f}$, 
\begin{equation}\label{23zero}
   (\widehat{\partial}_{23}\widehat{f})_{14}^{00}=\widehat{\partial}_{23}(\widehat{f}_{14}^{00})=\widehat{f}_{1234}^{0010}+\widehat{f}_{1234}^{0100}\equiv 0. 
\end{equation}
Since  $(1,3)\mid \widehat{\partial}_{24}\widehat{f}$, 
\begin{equation}\label{13zero}
    (\widehat{\partial}_{13}\widehat{f})_{24}^{00}=\widehat{\partial}_{13}(\widehat{f}_{24}^{00})=\widehat{f}_{1234}^{0010}+\widehat{f}_{1234}^{1000}\equiv 0.
\end{equation}
Comparing (\ref{12zero}), (\ref{23zero}) and (\ref{13zero}), we have 
$$\widehat{f}_{1234}^{1000}=\widehat{f}_{1234}^{0100}=\widehat{f}_{1234}^{0010}\equiv 0.$$
Since  $(2,3)\mid \widehat{\partial}_{14}\widehat{f}$, 
\begin{equation*}
   (\widehat{\partial}_{14}\widehat{f})_{23}^{00}=\widehat{\partial}_{14}(\widehat{f}_{23}^{00})=\widehat{f}_{1234}^{0001}+\widehat{f}_{1234}^{1000}\equiv 0. 
   \end{equation*}
   Plug in $\widehat{f}_{1234}^{1000}\equiv 0$,  we have $\widehat{f}_{1234}^{0001}\equiv0$.
   Thus for any  $\alpha\in \mathbb{Z}_2^4$ with ${\rm wt}(\alpha)=1$, we have
   $\widehat f_{1234}^{\alpha} \equiv 0$.
   
   Also, for  $\alpha\in \mathbb{Z}_2^4$ with ${\rm wt}(\alpha)=3$ and any $\beta \in \mathbb{Z}_2^4$, 
   by {\sc ars} we have,  $$\widehat f_{1234}^{\alpha}(\beta)=\overline{\widehat f_{1234}^{\overline{\alpha}}(\overline{\beta})} = 0$$ since ${\rm wt}(\overline{\alpha})=1.$
   Thus, for any  $\alpha\in \mathbb{Z}_2^4$ with ${\rm wt}(\alpha)=3$, we also have
   $\widehat f_{1234}^{\alpha} \equiv 0$.
   
   Let $\alpha\in \mathbb{Z}_2^4$ be an assignment of the first four variables of $f$, and $\beta\in \mathbb{Z}_2^4$ be an assignment of the last four variables of $f$.
Thus, for any $\alpha, \beta \in \mathbb{Z}_2^4$,
$\widehat f(\alpha\beta)=0$ if ${\rm wt}(\alpha)= 1$ or $3$.
 Also, 
 since $\widehat{ f} \in \widehat{\int}\mathcal{D}^{\otimes}$, 
 by Lemma~\ref{lem-zero_2}, $\widehat f(\alpha\beta)=0$ if ${\rm wt}(\alpha)+{\rm wt}(\beta)\neq 0, 4$ and $8$.
 Then, we show that for any $\alpha\beta\in \mathscr{S}(\widehat f)$, 
 $$|\widehat f(\alpha\beta)|=|\widehat f(\overline \alpha\beta)|=|\widehat f(\alpha\overline\beta)|=|\widehat f(\overline\alpha\overline\beta)|.$$
 By {\sc ars}, $|\widehat f(\alpha\beta)|=|\widehat f(\overline\alpha\overline\beta)|$ and $|\widehat f(\overline \alpha\beta)|=|\widehat f(\alpha\overline\beta)|$.
 So, we only need to show that 
 \begin{equation}\label{eqn:alpha-beta-cross}
 |\widehat f(\alpha\beta)|=|\widehat f(\alpha\overline\beta)|.
 \end{equation}

 Pick an arbitrary $\{i, j\}\subseteq\{1, 2, 3, 4\}$ and an arbitrary $\{u, v\}\subseteq\{5, 6, 7, 8\}$.
 Let $\{\ell, k\}=\{1, 2, 3, 4\}\backslash \{i, j\}$ and $\{s, t\}=\{5, 6, 7, 8\}\backslash\{u, v\}$.
 Since $\widehat f$ satisfies {\sc 2nd-Orth}, by equation~(\ref{e5}), we have 
 $|\widehat{\bf f}_{ijuv}^{0000}|^2=|\widehat{\bf f}_{ijuv}^{0011}|^2.$
 Since $\widehat f(\alpha\beta)=0$ if  ${\rm wt}(\alpha)= 1$ or $3$, or ${\rm wt}(\alpha)+{\rm wt}(\beta)\neq 0, 4$ and $8$,
we get the equation, 
 \begin{equation}\label{equ-8ary-1}
|\widehat f_{ij\ell kuvst}^{00000000}|^2+|\widehat f_{ij\ell kuvst}^{00110011}|^2=|\widehat f_{ij\ell kuvst}^{00111100}|^2+|\widehat f_{ij\ell kuvst}^{00001111}|^2.
 \end{equation}
 Note that for $|\widehat{\bf f}_{ijuv}^{0000}|^2$, since
 we set $x_ix_j=00$, the only possible nonzero terms
 are for $x_\ell x_k =00$ or $11$; furthermore,  as we also
 set $x_ux_v=00$, then $x_sx_t = 00$
 if $x_\ell x_k=00$, and $x_sx_t = 11$
 if $x_\ell x_k=11$. The situation is
 similar for $|\widehat{\bf f}_{ijuv}^{0011}|^2$.

 Also, by considering $|\widehat{\bf f}_{ijst}^{0000}|^2=|\widehat{\bf f}_{ijst}^{0011}|^2,$ we have 
  \begin{equation}\label{equ-8ary-2}
|\widehat f_{ij\ell kuvst}^{00000000}|^2+|\widehat f_{ij\ell kuvst}^{00111100}|^2=|\widehat f_{ij\ell kuvst}^{00110011}|^2+|\widehat f_{ij\ell kuvst}^{00001111}|^2.
 \end{equation}
 Comparing equations (\ref{equ-8ary-1}) and (\ref{equ-8ary-2}), 
 we have 
 $$|\widehat f_{ij\ell kuvst}^{00000000}|^2=|\widehat f_{ij\ell kuvst}^{00001111}|^2, \text{ ~ and ~ } |\widehat f_{ij\ell kuvst}^{00110011}|^2=|\widehat f_{ij\ell kuvst}^{00111100}|^2.$$
 Also, by {\sc ars}, $$|\widehat f_{ij\ell kuvst}^{11111111}|^2=|\widehat f_{ij\ell kuvst}^{11110000}|^2.$$
 As $(i, j, k, \ell)$ is an arbitrary permutation of $(1, 2, 3, 4)$ and $(u, v, s, t)$ is an arbitrary permutation of $(5, 6, 7, 8)$, and
 $\widehat f(\alpha\beta)$ vanishes if ${\rm wt}(\alpha)+{\rm wt}(\beta)\neq 0, 4$ and $8$, the above have
 established (\ref{eqn:alpha-beta-cross})  for any $\alpha, \beta \in \mathbb{Z}_2^4$. 
 Hence, for all 
 $\alpha, \beta \in \mathbb{Z}_2^4$,
  $$|\widehat f(\alpha\beta)|=|\widehat f(\overline \alpha\beta)|=|\widehat f(\alpha\overline\beta)|=|\widehat f(\overline\alpha\overline\beta)|.$$

Note that  $\widehat{f}$ has at most $4+{4\choose 2}\cdot{4\choose 2}=40$ many possibly non-zero entries. 
  In terms of norms, these $40$ entries can be represented by  $\widehat f^{\vec{0}^8}$ and the following 9 entries in Table~\ref{tab:9-entry}.
  In other words, for every $\alpha\beta\in\mathbb{Z}_2^8$ where ${\rm wt}(\alpha)\equiv{\rm wt}(\beta)\equiv 0 \pmod{2}$ and ${\rm wt}(\alpha)+{\rm wt}(\beta)\equiv 0 \pmod{4}$, 
  exactly one entry among $\widehat f(\alpha\beta)$, $\widehat f(\overline\alpha\beta)$, $\widehat f(\alpha\overline\beta)$ and $\widehat f(\overline\alpha\overline\beta)$ appears in Table~\ref{tab:9-entry}.
 We also view these 9 entries in Table~\ref{tab:9-entry} as a 3-by-3 matrix denoted by $M=(m_{ij})_{i, j=1}^3$. 
 
 \begin{table}[!h]
\renewcommand{\arraystretch}{2}
    \centering
    \begin{tabular}{|l|c|c|c|c|}
    \hline
    \diagbox[width=1.8in, height=7ex]{$x_1x_2x_3x_4$}{$x_5x_6x_7x_8$} &  $\alpha_1=0110$ (Col 1) &    $\alpha_2=1010$ (Col 2)&  $\alpha_3=1100$ (Col 3)\\
    \hline
  $\alpha_1=0110$ (Row 1)& $m_{11}=\widehat{f}^{01100110}$      & $m_{12}=\widehat{f}^{01101010}$ & $m_{13}=\widehat{f}^{01101100}$  \\
            \hline
     $\alpha_2=1010$ (Row 2)& $m_{21}=\widehat{f}^{10100110}$      & $m_{22}=\widehat{f}^{10101010}$ & $m_{23}=\widehat{f}^{10101100}$  \\
     \hline
   $\alpha_3=1100$ (Row 3)& $m_{31}=\widehat{f}^{11000110}$      & $m_{32}=\widehat{f}^{11001010}$ & $m_{33}=\widehat{f}^{11001100}$  \\
   \hline
    \end{tabular}
    \caption{Representative entries of $\widehat{f}$ in terms of norms}
    \label{tab:9-entry}
\end{table}
 
 Let $\widehat f^{\vec{0}^8}=a$.
 First we show that 
 \begin{equation}\label{equ-row}
     |m_{i,1}|^2+|m_{i,2}|^2+|m_{i,3}|^2=|a|^2, \text{ ~ for } i=1, 2, 3.
 \end{equation}
 and 
 \begin{equation}\label{equ-column}
     |m_{1,j}|^2+|m_{2,j}|^2+|m_{3,j}|^2=|a|^2, \text{ ~ for } j=1, 2, 3.
 \end{equation}
  Let $(i, j, k)$ be an arbitrary permutation of $(1, 2, 3)$.
  Again, by equation (\ref{e5}), 
  $|{\bf \widehat{f}}^{0110}_{ijk8}|^2=|{\bf \widehat{f}}^{0000}_{ijk8}|^2$. Then, we have 
  $$|\widehat{f}^{01100110}_{ijk45678}|^2+|\widehat{f}^{01101010}_{ijk45678}|^2+|\widehat{f}^{01101100}_{ijk45678}|^2=|\widehat{f}^{00000000}_{ijk45678}|^2=|a|^2.$$
  By taking $(i, j, k)=(1, 2, 3), (2, 1, 3)$ and $(3, 1, 2)$, 
  we get equations (\ref{equ-row}) for $i=1, 2, 3$ respectively. 
  Similarly, by considering  $|{\bf \widehat{f}}^{0110}_{4ijk}|^2=|{\bf \widehat{f}}^{0000}_{4ijk}|^2$ where $(i, j, k)$ is an arbitrary permutation of $(5, 6, 7)$, we get equations (\ref{equ-column}). 
  
  Also, since $(5, 6)\mid \widehat{\partial}_{12}\widehat{f}$, we have $\widehat{\partial}_{12}\widehat{f}(x_3, \ldots, x_8)=0$ if $x_5=x_6$.
  Notice that $$m_{13}+m_{23}=\widehat{f}^{01101100}+\widehat{f}^{10101100}$$ is an entry of $\widehat{\partial}_{12}\widehat{f}$ on the input $101100$. Thus, $m_{13}+m_{23}=0$.
  Also, since $(5, 7)\mid \widehat{\partial}_{13}\widehat{f}$, we have
  $$m_{12}+m_{32}=0.$$ Since $(6, 7)\mid \widehat{\partial}_{23}\widehat{f}$,  we have $$m_{21}+m_{31}=0.$$
  Let $x=|m_{13}|=|m_{23}|$, $y=|m_{12}|=|m_{32}|$, and $z=|m_{21}|=|m_{31}|.$
  Plug $x$, $y$, $z$ into equations (\ref{equ-row}) and (\ref{equ-column}). 
  We have 
  \begin{equation*}
      \begin{aligned}
       &|m_{11}|^2+y^2+x^2=&|m_{11}|^2+z^2+z^2\\
       =&z^2+|m_{22}|^2+x^2=&y^2+|m_{22}|^2+y^2\\
        =&z^2+y^2+|m_{33}|^2=&x^2+x^2+|m_{33}|^2.\\
      \end{aligned}
  \end{equation*}
  Thus, $x=y=z$ and $|m_{11}|=|m_{22}|=|m_{33}|$.
  Consider $$m_{11}+m_{21}=\widehat{f}^{01100110}+\widehat{f}^{10100110} ~~~\text{ and }~~~ m_{12}+m_{22}=\widehat{f}^{01101010}+\widehat{f}^{10101010}.$$
  They are entries of $\widehat \partial_{12}\widehat{f}$ on inputs $100110$ and $101010$.
  By form (\ref{f12}) of $\widehat \partial_{12}\widehat{f}$, 
  we have $$m_{11}+m_{21}=m_{12}+m_{22}\in \mathbb{R}\backslash\{0\}.$$
  Remember that we also have  $(1, 2)\mid \widehat \partial_{56}\widehat{f} \text { or } (1, 3)\mid \widehat \partial_{56}\widehat{f}.$
  
  We first consider the case that $(1, 3)\mid \widehat \partial_{56}\widehat{f}.$
  Then  $$m_{21}+m_{22}=\widehat{f}^{10100110}+\widehat{f}^{10101010}=0.$$
  Thus, 
  $$m_{11}+m_{21}=m_{12}-m_{21}\in \mathbb{R}\backslash\{0\}.$$
  Since $|m_{12}|=|m_{21}|$, $|m_{22}|=|m_{11}|$ and $m_{21}+m_{22}=0,$ 
  $$|m_{12}|=|m_{21}|=|m_{22}|=|m_{11}|.$$ 
  Thus, $m_{11}=\overline{m_{21}}$ and $m_{12}=-\overline{m_{21}}.$
   Let $\mathfrak{Re}(x)$ the real part of a number $x$.
   Then, $$\mathfrak{Re}(m_{11})+\mathfrak{Re}(m_{21})=2\mathfrak{Re}(m_{21})=\mathfrak{Re}(m_{12})-\mathfrak{Re}(m_{21})=-2\mathfrak{Re}(m_{21}).$$
   Thus, $\mathfrak{Re}(m_{21})=0$.
   Then, $\mathfrak{Re}(m_{11})=\mathfrak{Re}(m_{21})=0$.
   Thus, $m_{11}+m_{21}\notin \mathbb{R}\backslash\{0\}$ since $\mathfrak{Re}(m_{11} + m_{21})=0.$
   Contradiction.
   
   

  Now, we consider the case that $(1, 2)\mid \widehat \partial_{56}\widehat{f}$. Then 
  $$m_{31}+m_{32}=\widehat{f}^{11000110}+\widehat{f}^{11001010}=0.$$
  Since $m_{12}+m_{32}=0$ and $m_{21}+m_{31}=0$, we have $m_{12}=-m_{21}.$
 Thus,  we have $$m_{11}+m_{21}=m_{12}+m_{22}=m_{22}-m_{21}\in \mathbb{R}\backslash\{0\}.$$
 Taking the imaginary part,
 $\mathfrak{Im}(m_{11}) + \mathfrak{Im}(m_{21}) =
 \mathfrak{Im}(m_{22}) - \mathfrak{Im}(m_{21})  =0$.
 Adding the two, we get $\mathfrak{Im}(m_{11}) +  \mathfrak{Im}(m_{22}) =0$,
and  thus, $m_{11}+m_{22}\in \mathbb{R}$.
  Since $|m_{11}|=|m_{22}|$, $m_{11}=\overline{{m_{22}}}.$
  Then,
     $\mathfrak{Re}(m_{11})=\mathfrak{Re}(m_{22}).$
   Also, since $m_{11}+m_{21}=m_{22}-m_{21}\in \mathbb{R}\backslash\{0\}$, $$\mathfrak{Re}(m_{11})+\mathfrak{Re}(m_{21})=\mathfrak{Re}(m_{22})-\mathfrak{Re}(m_{21})=\mathfrak{Re}(m_{11})-\mathfrak{Re}(m_{21}) \ne 0.$$
   Thus, $\mathfrak{Re}(m_{21})=0$,
   and $\mathfrak{Re}(m_{11})\not =0$.
   Suppose that $m_{21}=d\ii$ for some $d\in \mathbb{R}$.
   Then there exists $c\in \mathbb{R}\backslash\{0\}$ such that $m_{11}=c-d\ii$ and then $m_{22}=c+d \ii.$
   Remember that $m_{21}+m_{31}=0$. Thus, $m_{31}=-d\ii$.
   Consider $$m_{11}+m_{31}=\widehat{f}^{01100110}+\widehat{f}^{11000110}=c-2d\ii.$$
   It is an entry of the signature $\widehat{\partial}_{13}\widehat{f}$.
   Since $\widehat{\partial}_{13}\widehat{f}\in \mathcal{D}^{\otimes}$,
   $c-2d\ii\in \mathbb{R}$.
   Thus, $d=0$.
   Then, $m_{21}=0$ and $m_{11}\in \mathbb{R}$.
   Thus, $$x=|m_{13}|=|m_{23}|=y=|m_{12}|=|m_{32}|=z=|m_{21}|=|m_{31}|=0,$$
   and $$|m_{11}|=|m_{22}|=|m_{33}|=|a|=|\widehat{f}(\vec{0})|.$$
   Since $\widehat{f}\not\equiv 0$, $a\neq 0$.
   Thus, $$\mathscr{S}(\widehat f)=\{\delta\delta, \delta\overline\delta, \overline{\delta}\delta, \overline\delta\overline\delta\in \mathbb{Z}_2^8\mid \delta=0000, \alpha_1, \alpha_2, \alpha_3\},$$
   where $\alpha_1, \alpha_2, \alpha_3$ are named in Table~\ref{tab:9-entry}.
   It is easy to see that $\mathscr{S}(\widehat f)=\mathscr{S}(\widehat f_8)$.
   Since $m_{11}\in \mathbb{R}$, and $|m_{11}| =|a| \not =0$, we can normalize it to $1$.
  Since,  $\widehat{\partial}_{12}{\widehat f}\in \mathcal{D}^{\otimes}$,
  we have $$1=\widehat f(\alpha_1\alpha_1)+\widehat f(\alpha_2\alpha_1)=\widehat f(\alpha_1\alpha_2)+\widehat f(\alpha_2\alpha_2)=\widehat f(\alpha_1\overline{\alpha_1})+\widehat f(\alpha_2\overline\alpha_1)=\widehat f(\alpha_1\overline{\alpha_2})+\widehat f(\alpha_2\overline\alpha_2).$$
  Since, $\widehat f(\alpha_2\alpha_1)=\widehat f(\alpha_1\alpha_2)=\widehat f(\alpha_2\overline\alpha_1)=\widehat f(\alpha_1\overline{\alpha_2})=0$,
  $$\widehat f(\alpha_1\alpha_1)=\widehat f(\alpha_2\alpha_2)=\widehat f(\alpha_1\overline\alpha_1)=\widehat f(\alpha_2\overline{\alpha_2})=1.$$
  Similarly, since $\widehat{\partial}_{13}\widehat{ f}\in \mathcal{D}^{\otimes}$, 
  $$\widehat f(\alpha_1\alpha_1)=\widehat f(\alpha_3\alpha_3)=\widehat f(\alpha_1\overline\alpha_1)=\widehat f(\alpha_3\overline{\alpha_3})=1.$$
  By {\sc ars}, we have $$1=\overline{\widehat f(\alpha_1\alpha_1)}=\widehat f(\overline{\alpha_1}\overline{\alpha_1})=\widehat f(\overline{\alpha_1}{\alpha_1})=\widehat f(\overline{\alpha_2}\overline{\alpha_2})=\widehat f(\overline{\alpha_2}{\alpha_2})=\widehat f(\overline{\alpha_3}\overline{\alpha_3})=\widehat f(\overline{\alpha_3}{\alpha_3}).$$
  Also, since $\widehat{\partial}_{15}\widehat{ f}\in \mathcal{D}^{\otimes}$, 
  $$1=\widehat f(\alpha_1\overline{\alpha_1})=\widehat f^{01101001}+\widehat f^{11100001}=\widehat f^{00001111}+\widehat f^{10000111}=\widehat f^{00001111}.$$
  Then, by {\sc ars}, $\widehat f^{11110000}=\overline{\widehat f^{00001111}}=1.$
  Thus, $\widehat{f}(\gamma)=1$ for any $\gamma\in \mathscr{S}(\widehat f)$ with ${\rm wt}(\gamma)=4$.
  Remember that $\widehat{f}(\vec{0}^{8})=a$ where $|a|=1$. Then, $\widehat{f}(\vec{1}^{8})=\overline{a}$ by {\sc ars}.
  Suppose that $a=e^{\ii \theta}$.
  Let $\widehat{Q}=\left[\begin{smallmatrix}
  \rho & 0\\
  0 & \overline{\rho}\\ 
  \end{smallmatrix}\right]\in \widehat{\mathbf{O}_2}$ where $\rho=e^{-\ii \theta/8}.$
  Consider the holographic transformation by $\widehat{Q}$.
  $\widehat{Q}$ does not change the entries of $\widehat{f}$ on half-weighed inputs, but change the values of $\widehat{f}(\vec{0}^{8})$ and $\widehat{f}(\vec{1}^{8})$ to 1.
  Thus, $\widehat{Q}\widehat{f}=\widehat{f_8}$.
  Then, $\holant{\neq_2}{\widehat{f_8}, \widehat{Q}\widehat{\mathcal{F}}}\leqslant_T\holant{\neq_2}{\widehat{\mathcal{F}}}.$
\end{proof}
     Now, we want to show that $\holant{\neq_2}{\widehat{f}_8, \widehat Q\widehat{\mathcal{F}}}$ is \#P-hard for all $\widehat{Q}\in \widehat{{\bf O}_2}$ and all $\widehat{\mathcal{F}}$ where $\mathcal{F}=Z\widehat{\mathcal{F}}$ is a real-valued signature set that does not satisfy condition (\ref{main-thr}). 
If so, then we are done.
Recall that for all $\widehat{Q}\in \widehat{{\bf O}_2}$, $\widehat{Q}\widehat{\mathcal{F}}=\widehat{Q\mathcal{F}}$ for some $Q\in {\bf O}_2$.
Moreover, for all $Q\in {\bf O}_2$, and all real-valued $\mathcal{F}$ that 
does not satisfy condition (\ref{main-thr}), 
$Q\mathcal{F}$ is also a real-valued signature set 
that does not satisfy condition (\ref{main-thr}).
Thus, it suffices for us to show that $\holant{\neq_2}{\widehat{f}_8, \widehat{\mathcal{F}}}$ is \#P-hard for all real-valued $\mathcal{F}$ that 
does not satisfy condition (\ref{main-thr}).

 
 The following Lemma shows that $\widehat{f_8}$ gives non-$\widehat{\mathcal{B}}$ hardness (Definition~\ref{def:non-B-hard}).
 
 \begin{lemma}\label{lem-2-notb-8ary}
$\holant{\neq_2}{\widehat{f_8}, \widehat{\mathcal{F}}}$ is \#P-hard if $\widehat{\mathcal{F}}$ contains a nonzero binary signature $\widehat{b}\notin \widehat{\mathcal{B}}^{\otimes}$.
Equivalently, $\Holant({f_8}, {\mathcal{F}})$ is \#P-hard if ${\mathcal{F}}$ contains a nonzero binary signature ${b}\notin {\mathcal{B}}^{\otimes}$.
\end{lemma}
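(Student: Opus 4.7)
The plan is to mimic the strategy of Lemma~\ref{lem-2-notb}, but merging two variables of $\widehat{f_8}$ (instead of one variable of $\widehat{f_6}$) with $\widehat{b}$ via $\neq_2$ to realize a 6-ary signature, and then invoke Lemma~\ref{lem-6-ary} rather than Lemma~\ref{lem-4-ary}. First, if $\widehat{b}\notin \widehat{\mathcal{O}}^{\otimes}$ the conclusion is immediate from Lemma~\ref{lem-2-ary}. So assume $\widehat{b}\in \widehat{\mathcal{O}}^{\otimes}\setminus \widehat{\mathcal{B}}^{\otimes}$; after normalization $\widehat{b}=(a,0,0,\bar a)$ or $\widehat{b}=(0,a,\bar a,0)$ with $|a|=1$ and $a\notin\{\pm 1,\pm \ii\}$.

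Next, I form a 6-ary signature $g$ on $(x_3,\ldots,x_8)$ by connecting the two variables of $\widehat{b}$ to $x_1$ and $x_2$ of $\widehat{f_8}$ using $\neq_2$. A direct calculation gives
\[
g(x_3,\ldots,x_8)=a\,\widehat{f_8}(c,d,x_3,\ldots,x_8)+\bar a\,\widehat{f_8}(\bar c,\bar d,x_3,\ldots,x_8),
\]
where $(c,d)=(1,1)$ in the first case and $(c,d)=(1,0)$ in the second. Reading off the support $\mathscr{S}(f_8)$ in (\ref{egn:T-defined-again}), one finds $|\mathscr{S}(g)|=8$, and moreover: in Case~1, $\mathscr{S}(g)=\{(s_1s_1s_2s_2s_3s_3):(s_1,s_2,s_3)\in\mathbb{Z}_2^3\}$ (so $(x_3,x_4),(x_5,x_6),(x_7,x_8)$ are forced to be equal pairs), while in Case~2, $\mathscr{S}(g)=\{(x_3,\ldots,x_8):x_3\ne x_4,\,x_5\ne x_6,\,x_7\ne x_8\}$. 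In both cases $g$ takes values $a$ on half of its support and $\bar a$ on the other half, with an explicit parity pattern inherited from $\mathscr{S}(f_8)$.

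The main step is to show $g\notin\widehat{\mathcal{O}}^{\otimes}$. The support analysis forces the only admissible tensor-product partition to be $\{(x_3,x_4),(x_5,x_6),(x_7,x_8)\}$, with each factor $\widehat{b_i}\in\widehat{\mathcal{O}}$ of the same type as $\widehat{b}$. Writing $\widehat{b_i}=(p_i,0,0,\bar{p_i})$ (Case 1) or $(0,p_i,\bar{p_i},0)$ (Case 2) with $|p_i|=1$, the eight equations for the entries of $g$ reduce, after taking ratios, to $\bar{p_i}/p_i=a/\bar a$ for $i=1,2,3$, together with one further consistency equation of the form $p_1p_2p_3\cdot(a/\bar a)^2=\bar a$ (obtained from the weight-2 pattern among $\{s_1,s_2,s_3\}$). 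Combined with $p_1p_2p_3=\bar a$, this yields $(a/\bar a)^2=1$, hence $a^4=1$, contradicting $a\notin\{\pm 1,\pm \ii\}$.

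Consequently $g$ is a 6-ary signature realizable from $\widehat{f_8}$ and $\widehat{b}$ with $g\notin\widehat{\mathcal{O}}^{\otimes}$, so by Lemma~\ref{lem-6-ary}, $\holant{\neq_2}{\widehat{f_8},\widehat{\mathcal{F}}}$ is \#P-hard. The main obstacle is the last step: ruling out every tensor-product factorization of $g$ in $\widehat{\mathcal{O}}^{\otimes}$. This is where the special structure of $\mathscr{S}(f_8)$—in particular that merging two variables yields a support rigid enough to force a unique pairing, yet rich enough that the phase constraints collapse to $a^4=1$—does the real work.
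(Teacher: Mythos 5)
Your proposal is correct and takes essentially the same route as the paper: assuming $\widehat{b}\in\widehat{\mathcal{O}}^{\otimes}\setminus\widehat{\mathcal{B}}^{\otimes}$, you merge a pair of variables of $\widehat{f_8}$ with $\widehat{b}$ via $\neq_2$ to obtain a 6-ary signature $g$, show $g\notin\widehat{\mathcal{O}}^{\otimes}$ via the phase constraint $a^4=1$, and invoke Lemma~\ref{lem-6-ary}. The only departure is the choice of variable pair---you merge $(x_1,x_2)$ while the paper merges $(x_1,x_5)$---but the symmetry of $\mathscr{S}(f_8)$ makes both choices work, and the algebraic argument is identical in substance.
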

\begin{proof}
We prove this lemma in the setting of $\holant{\neq_2}{\widehat{f_8}, \widehat{\mathcal{F}}}$.
If $\widehat{b}\notin \widehat{\mathcal{O}}^{\otimes}$, then by Lemma~\ref{lem-2-ary}, we get \#P-hardness.
Thus, we may assume that $\widehat{b}\in \widehat{\mathcal{O}}^{\otimes}.$
Then, $\widehat{b}$ has parity. 
We first consider the case that $\widehat{b}$ has even parity, i.e., $\widehat{b}=(a, 0, 0, \bar{a})$.
Since $\widehat{b}\not\equiv 0$, $a\neq 0$. We can normalize $a$ to $e^{\ii\theta}$ where $0 \leqslant \theta<\pi$.
Then $\bar a=e^{-\ii \theta}$.
Since $\widehat{b}\notin \widehat{\mathcal{B}}$, $a\neq \pm 1$ and $a\neq \pm \ii$. Thus,  $\theta \neq 0$ and $\theta \neq \frac{\pi}{2}$. 

We connect variables $x_1$ and $x_5$ of $\widehat{f_8}$ with the two variables of $\widehat{b}$ (using $\neq_2$), and we get a $6$-ary signature denoted by $\widehat{g}$. We rename variables $x_2, x_3, x_4$ of $\widehat{g}$ to $x_1, x_2, x_3$ and variables $x_6, x_7, x_8$ to $x_4, x_5, x_6$.
Then, $\widehat{g}$ has the following signature matrix 
$$M_{123, 456}(\widehat{g})=\left[
\begin{matrix}
e^{-\ii \theta} & 0 & 0 &0 &0 &0 &0 &0\\
 0 & e^{\ii \theta}& 0 &0 &0 &0 &0 &0\\
 0&  0 & e^{\ii \theta}&0 &0 &0 &0 &0\\
 0& 0&  0 & e^{-\ii \theta}&0 &0 &0 &0 \\
 0& 0& 0&  0 & e^{\ii \theta}&0 &0 &0  \\
 0&  0& 0& 0&  0 & e^{-\ii \theta}&0 &0   \\
 0&  0&  0& 0& 0&  0 & e^{-\ii \theta}&0    \\
 0&  0&  0& 0& 0&  0 & 0 & e^{\ii \theta}   \\
\end{matrix}\right].$$

Now, we show that $\widehat{g}\notin \widehat{\mathcal{O}}^{\otimes}.$
For a contradiction, suppose that $\widehat{g}\in \widehat{\mathcal{O}}^{\otimes}.$
Notice that  $\mathscr{S}(\widehat{g})=\{(x_1, \ldots, x_6)\in \mathbb{Z}^6_2\mid x_1=x_4$, $x_2=x_5$ and $x_3=x_6\}$. 
Then, we can write $\widehat{g}$ as $$\widehat{g}=\widehat{b_1}(x_1, x_4)\otimes\widehat{b_2}(x_2, x_5)\otimes\widehat{b_3}(x_3, x_6),$$
where $\widehat{b_1}=(e^{\ii \theta_1}, 0, 0, e^{-\ii \theta_1})$, $\widehat{b_2}=(e^{\ii \theta_2}, 0, 0, e^{-\ii \theta_2})$ and $\widehat{b_3}=(e^{\ii \theta_3}, 0, 0, e^{-\ii \theta_3}).$
Then notice that $$\widehat{g}^{000000}=e^{-\ii\theta}=\widehat{b_1}(0, 0)\cdot\widehat{b_2}(0, 0)\cdot\widehat{b_3}(0, 0)=e^{\ii(\theta_1+\theta_2+\theta_3)},$$ and 
 $$\widehat{g}^{011011}=e^{-\ii\theta}=\widehat{b_1}(0, 0)\cdot\widehat{b_2}(1, 1)\cdot\widehat{b_3}(1, 1)=e^{\ii(\theta_1-\theta_2-\theta_3)}.$$
By multiplying the above two equations, we have 
 $$e^{-\ii2\theta}=e^{\ii(\theta_1+\theta_2+\theta_3)}\cdot e^{\ii(\theta_1-\theta_2-\theta_3)}=e^{\ii2 \theta_1}.$$
 Also, notice that
 $$\widehat{g}^{001001}=e^{\ii\theta}=\widehat{b_1}(0, 0)\cdot\widehat{b_2}(0, 0)\cdot\widehat{b_3}(1, 1)=e^{\ii(\theta_1+\theta_2-\theta_3)},$$ and 
 $$\widehat{g}^{010010}=e^{\ii\theta}=\widehat{b_1}(0, 0)\cdot\widehat{b_2}(1, 1)\cdot\widehat{b_3}(0, 0)=e^{\ii(\theta_1-\theta_2+\theta_3)}.$$
 By multiplying them, we have 
 $$e^{\ii2\theta}=e^{\ii(\theta_1+\theta_2-\theta_3)}\cdot e^{\ii(\theta_1-\theta_2+\theta_3)}=e^{\ii2 \theta_1}.$$
 Thus, $e^{\ii 2\theta}=e^{-\ii 2\theta}.$
 Then, $e^{\ii 4\theta}=1.$
 Since, $\theta\in[0, \pi)$,  $\theta=0$ or $\frac{\pi}{2}$. Contradiction.
 Thus, $\widehat{g}\notin\widehat{\mathcal{O}}^\otimes$.
 By Lemma~\ref{lem-6-ary}, we get \#P-hardness.
 
Now, suppose that $\widehat{b}$ has odd parity, i.e., $\widehat{b}(y_1, y_2)=(0, e^{\ii\theta}, e^{-\ii \theta}, 0)$ where $\theta\in [0, \pi)$ after normalization.
We still consider the $6$-ary signature $\widehat{g'}$ that is realized by connecting variables $x_1$ and $x_5$ of $\widehat{f_8}$ with the two variables $y_1$ and $y_2$ of $\widehat{b}$ (using $\neq_2$).
Then, after renaming variables, $\widehat{g'}$ has the following signature matrix
$$M_{123, 456}(\widehat{g'})=\left[
\begin{matrix}
 0 & 0 &0 &0 &0 &0 &0  &e^{-\ii \theta} \\
 0 &  0 &0 &0 &0 &0  & e^{\ii \theta}& 0\\
 0&  0 & 0 &0 &0 &e^{\ii \theta} &0 &0\\
 0& 0&  0 &0 & e^{-\ii \theta} &0 &0 &0 \\
 0& 0& 0 & e^{\ii \theta} &  0 &0 &0 &0  \\
 0&  0 &e^{-\ii \theta}& 0& 0&  0 & 0 &0   \\
 0 & e^{-\ii \theta}&  0&  0& 0& 0&  0 &0    \\
  e^{\ii \theta} & 0&  0&  0& 0& 0&  0 & 0    \\
\end{matrix}\right].$$
 Similarly, we can show that $\widehat{g'}\notin \widehat{\mathcal{O}}^{\otimes}$.
 Thus, by Lemma~\ref{lem-6-ary}, we get \#P-hardness.
\end{proof}

  We go back to real-valued Holant problems under the $Z$-transformation. 
 Consider the problem $\Holant(f_8, \mathcal{F})$.
 Remember that $f_8=\widehat{f_8}$.
We observe that, by Lemma~\ref{lem-2-notb-8ary}
the set $\{f_8\} \cup \mathcal{F}$ is non-$\mathcal{B}$ hard,
according to Definition~\ref{def:non-B-hard}.
Then if we apply Theorem~\ref{thm-holantb} to the
set $\{f_8\} \cup \mathcal{F}$ we see that
$\Holantb(f_8, \mathcal{F})$ is \#P-hard.
Now \emph{if 
 we were able to show} that $\mathcal{B}$ is realizable from $f_8$ then we would be done, since by Theorem~\ref{thm-f8}, we either already have the \#P-hardness for $\Holant(\mathcal{F})$,
 or we can realize $f_8$ from $\mathcal{F}$, and thus the following 
 reduction chain holds
 \[\Holant^b(f_8, \mathcal{F}) \leqslant_T 
 \Holant(f_8, \mathcal{F}) \leqslant_T 
 \Holant(\mathcal{F}).\]
 Thus we get the \#P-hardness of $\Holant(\mathcal{F})$
 in either way.

However, since $f_8$ has even parity and all its entries are non-negative, all gadgets realizable from $f_8$ have even parity and have non-negative entries. 
 Thus, $=_2^-$, $\neq_2$ and $\neq_2^-$ \emph{cannot} be realized from $f_8$ by gadget construction. 
 In fact,  it  is observed in \cite{realodd} that $f_8$ satisfies the following strong Bell property.
 \begin{definition}\label{def-strong-bell}
A signature $f$ satisfies the strong Bell property if for all pairs of indices $\{i, j\}$, and every $b\in \mathcal{B}$, the signature 
$\partial_{ij}^b f$  realized by merging $x_i$ and $x_j$ of $f$ using $b$ is in $\{b\}^{\otimes}.$
\end{definition}

\subsection{Holant problems with limited appearance and a novel reduction}
In this subsection, \emph{not using gadget construction} but critically based on the strong Bell property of $f_8$,
we prove that
$\Holantb(f_8, \mathcal{F})\leqslant_T \Holant(f_8, \mathcal{F})$ in a novel way. 
 We define the following Holant problems with limited appearance. 
 
 \begin{definition}
Let $\mathcal{F}$ be a signature set containing a signature $f$. The problem $\Holant(f^{\leqslant k}, \mathcal{F})$ contains all instances of $\Holant(\mathcal{F})$ where the signature $f$ appears at most $k$  times. 
 \end{definition}
 
 \begin{lemma}\label{lem-2toinfty}
 For any $b\in \mathcal{B}$, $\Holant(b, f_8,  \mathcal{F})\leqslant_T\Holant(b^{\leqslant 2}, f_8, \mathcal{F}).$
 \end{lemma}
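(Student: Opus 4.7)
The plan is to iteratively reduce the number of copies of $b$ in any instance by $2$ per step, using a gadget built from one $f_8$ and one copy of $b$ via the strong Bell property (Definition~\ref{def-strong-bell}). For any fixed pair $\{i,j\}\subseteq[8]$, the strong Bell property guarantees that $\partial^{b}_{ij} f_8 \in \{b\}^{\otimes}$, and since the arity drops from $8$ to $6$ upon merging, we have $\partial^{b}_{ij} f_8 = \lambda_{ij} \cdot b^{\otimes 3}$ for some explicitly computable nonzero real constant $\lambda_{ij}$. The underlying $\mathcal{F}$-gate realizing $\partial^{b}_{ij} f_8$ consists of one vertex labeled $f_8$ and one vertex labeled $b$, with the two variables of $b$ identified (via $=_2$) with the variables $x_i$ and $x_j$ of $f_8$; the remaining $6$ variables of $f_8$ are dangling, and their joint signature equals $\lambda_{ij}\cdot b^{\otimes 3}$ under some specific pairing of the dangling variables (determined by $f_8$).

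Given an instance $\Omega$ of $\Holant(b, f_8, \mathcal{F})$ containing $k$ vertices labeled by $b$, if $k\leqslant 2$ we query the oracle directly on $\Omega$. Otherwise, we select any three $b$-vertices $b^{(1)},b^{(2)},b^{(3)}$ of $\Omega$, whose incident external edges are $\{e_1,e_2\}$ (from $b^{(1)}$), $\{e_3,e_4\}$ (from $b^{(2)}$), and $\{e_5,e_6\}$ (from $b^{(3)}$). We delete these three vertices, leaving $e_1,\ldots,e_6$ as dangling edges, and insert the $\partial^{b}_{ij} f_8$ gadget, identifying its $6$ dangling variables bijectively with $e_1,\ldots,e_6$ in such a way that the pairing of $b^{\otimes 3}$ inside the gadget matches the external pairing $\{e_1,e_2\},\{e_3,e_4\},\{e_5,e_6\}$. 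Such a bijection always exists because the $6$ dangling variables of the gadget are unlabeled from the perspective of the external instance and can be attached to $e_1,\ldots,e_6$ arbitrarily. The resulting instance $\Omega'$ has $k-2$ copies of $b$, one additional copy of $f_8$, and satisfies $\Holant(\Omega')=\lambda_{ij}\cdot\Holant(\Omega)$.

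Repeating this replacement at most $\lceil(k-2)/2\rceil$ times yields an instance $\Omega^{\ast}$ with at most $2$ copies of $b$; querying the oracle on $\Omega^{\ast}$ and dividing by the product of the accumulated nonzero scalars recovers $\Holant(\Omega)$. Each replacement is done in polynomial time, and the number of replacements is at most $k$, so the reduction is polynomial-time overall. The only real subtlety is the edge-matching step, but this is straightforward: since the gadget's signature is a tensor product of three copies of $b$ under some pairing of its dangling variables, any bijection respecting this pairing structure produces exactly $b(e_1,e_2)\,b(e_3,e_4)\,b(e_5,e_6)$ on the external edges, matching the contribution of the three deleted $b$-vertices up to the known scalar $\lambda_{ij}$.
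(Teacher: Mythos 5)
Your proof is correct and follows essentially the same route as the paper's own: use the strong Bell property to replace three copies of $b$ by one copy of $b$ together with an extra $f_8$ via the gadget $\partial^b_{ij}f_8 \in \{b\}^{\otimes}$, and iterate a linear number of times. You are a bit more explicit than the paper about the scalar $\lambda_{ij}$ and the bijection between the gadget's dangling edges and the three external pairs, but this is just the careful spelling out of the same argument.
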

 \begin{proof}
 Consider an instance $\Omega$ of $\Holant(b, f_8,  \mathcal{F})$. Suppose that $b$ appears $n$  times in $\Omega$.
 If $n\leqslant 2$, then $\Omega$ is already an instance of  $\Holant(b^{\leqslant 2}, f_8, \mathcal{F})$.
 Otherwise, $n\geqslant 3$. 
  Consider the gadget $\partial^b_{ij}f_8$  realized by connecting two variables $x_i$ and $x_j$ of $f_8$ using $b$.
  (This gadget uses $b$ only once.)
 Since $f_8$ satisfies the strong Bell property,
 $\partial^b_{ij}f_8=b^{\otimes 3}$.
 Thus, by replacing three occurrences of $b$ in $\Omega$ by $\partial^b_{ij}f_8$, we can reduce the number of occurrences of $b$ by 2. 
 We carry out this replacement a linear number of times to obtain
  an equivalent instance of $\Holant(b^{\leqslant 2}, f_8, \mathcal{F})$, of size linear in $\Omega$.
 \end{proof}
 
 Now, we are ready to prove the reduction $\Holantb(f_8, \mathcal{F})\leqslant_T \Holant(f_8, \mathcal{F})$. 
 Note that if  $\Holant(f_8, \mathcal{F})$ is \#P-hard, 
then the reduction  holds trivially.
For any $b\in \mathcal{B}$, if we connect a variable of $b$ with a variable of  another copy of $b$ using $=_2$, we get $\pm (=_2)$.
Also, for any $b_1, b_2\in \mathcal{B}$ where $b_1\neq b_2$ 
if we connect the two variables of $b_1$ with the two variables of $b_2$, we get a value 0.

\begin{lemma}\label{lem-0to1}
$\Holantb(f_8, \mathcal{F})\leqslant_T\Holant(f_8, \mathcal{F}).$
\end{lemma}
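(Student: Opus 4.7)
The plan is to combine Lemma~\ref{lem-2toinfty} with a linear-algebraic extraction argument. First, I would apply Lemma~\ref{lem-2toinfty} once for each of the four binary signatures in $\mathcal{B}$, reducing an arbitrary instance of $\Holantb(f_8,\mathcal{F})$ in polynomial time to an instance $\Omega$ in which every $b\in\mathcal{B}$ appears at most twice, so at most $8$ of the vertices are labelled by signatures in $\mathcal{B}$. The strong Bell identity $\partial^b_{ij}f_8=b^{\otimes 3}$ is precisely what allows each group of three $b$'s to be absorbed into one $b$ together with one $f_8$.

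Next, I would split $\Omega$ into its $\mathcal{B}$-part $H$ (touching at most $16$ edges) and its complementary $(\{f_8\}\cup\mathcal{F})$-part $\Gamma$ with boundary of size $d\leqslant 16$. Then the Holant value factors as
\[
\Holant(\Omega)=\sum_{\alpha\in\{0,1\}^d}C_\alpha\,g_\Gamma(\alpha),
\]
where the coefficients $C_\alpha$ depend only on $H$ and are explicitly computable, and $g_\Gamma$ is the signature of $\Gamma$ viewed as an $(\{f_8\}\cup\mathcal{F})$-gate on its boundary. Since $d$ is bounded by an absolute constant, recovering all $2^d\leqslant 2^{16}$ values of $g_\Gamma$ suffices to compute $\Holant(\Omega)$.

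To recover the values $g_\Gamma(\alpha)$, I would construct a collection of ``probe gadgets'' $P_1,\dots,P_N$ on $d$ variables built from $\{f_8\}\cup\mathcal{F}$. Attaching $P_i$ to the boundary of $\Gamma$ produces a bona fide instance of $\Holant(f_8,\mathcal{F})$, and one oracle call returns the linear functional $\sum_\alpha P_i(\alpha)g_\Gamma(\alpha)$. If the signatures $\{P_i\}_{i=1}^{2^d}$ span $\mathbb{C}^{2^d}$, then inverting the resulting constant-size linear system recovers $g_\Gamma$ entirely, and the target value $\Holant(\Omega)$ follows immediately. The main obstacle will be exhibiting such a spanning family without access to pinning signatures; this is where the hypothesis $\mathcal{F}\not\subseteq\mathscr{A}$ (guaranteed by failure of condition~(\ref{main-thr})) must be exploited, using a non-affine signature in $\mathcal{F}$ together with tensor products and contractions involving $f_8$ to generate enough linearly independent boundary signatures on $d$ variables. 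Because the boundary has constant size, the whole procedure runs in polynomial time, yielding the desired non-constructive reduction.
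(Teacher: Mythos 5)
Your proposal shares the overall shape of the paper's proof (bound the number of Bell occurrences via Lemma~\ref{lem-2toinfty}, treat the residual $(\{f_8\}\cup\mathcal{F})$-part as a bounded-arity gate, and determine it by oracle queries), but the crucial step is left as a wish, and in fact that wish is provably unfulfillable in the very case that matters.

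The gap is in the spanning-family step. You want probes $P_1,\dots,P_{2^d}$ realizable from $\{f_8\}\cup\mathcal{F}$ whose signatures span $\mathbb{C}^{2^d}$, so that $g_\Gamma$ can be recovered by a linear system. But if $\Holant(f_8,\mathcal{F})$ is \emph{not} already \#P-hard (the only case that needs any work), then by Lemma~\ref{lem-2-notb-8ary} together with Lemma~\ref{lem-4-notb}, every binary gate realizable from $\{f_8\}\cup\mathcal{F}$ is a real multiple of $=_2$ and every 4-ary gate is a real multiple of $(=_2)^{\otimes 2}$ under one of the three pairings. So the span of arity-4 gates has dimension at most $3$, not $16$, and similarly the span of arity-$d$ gates is tiny for every bounded $d$. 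No spanning family exists; the ``$\mathcal{F}\not\subseteq\mathscr{A}$ plus tensor/contraction'' hand-wave cannot fix this, because the non-hardness hypothesis exactly forbids realizing anything outside the $=_2$-world at low arity. The hypothesis $\mathcal{F}\not\subseteq\mathscr{A}$ is used by the paper quite differently: to show that if a ``new'' binary or 4-ary gate ever shows up, you are already \#P-hard and the reduction is trivial.

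The paper's proof turns this obstruction into the engine of the argument. Since (in the non-hard case) the unknown gate $h$ that replaces your $g_\Gamma$ is forced to be $\lambda\cdot(=_2)^{\otimes 2}$ (and in Step~2, $\lambda\cdot h_1\otimes h_2$ with $h_1,h_2\in\{=_2,b_1\}$), there are only a constant number of discrete structural possibilities plus one unknown scalar $\lambda$. Three (respectively six) explicit probe instances $\Omega_{1s}$, built by closing up $h$ with $=_2$ (or with the already-obtained $b_1$), determine everything. There is no inversion of a $2^d\times 2^d$ matrix and no need to span anything. Moreover, the paper first non-constructively obtains a single $b_1\in\mathcal{B}\setminus\{=_2\}$ (Step~1) before bootstrapping to all of $\mathcal{B}$ (Step~2); your plan of handling all four Bell signatures at once via repeated Lemma~\ref{lem-2toinfty} is fine as a preprocessing step, but it does not make the identification of $g_\Gamma$ any easier. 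To repair your argument you would essentially need to import the paper's structural dichotomy: either a non-$\mathcal{B}$ low-arity gate is realizable (so you are \#P-hard and done), or $g_\Gamma$ has one of finitely many $\mathcal{B}^{\otimes}$ shapes and a constant number of probes pins it down.
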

\begin{proof}
We prove this reduction in two steps.

\vspace{1ex}
\noindent{\bf Step 1.}
There exists a signature $b_1 \in \mathcal{B}\backslash\{=_2\}$ such that $\Holant(b_1, f_8, \mathcal{F})\leqslant_T \Holant(f_8, \mathcal{F}).$
\vspace{1ex}

We consider \emph{all} binary 
and 4-ary signatures realizable by gadget constructions from $\{f_8\}\cup\mathcal{F}$.
If a binary signature $g\notin\mathcal{B}$ is realizable from  $\{f_8\}\cup\mathcal{F}$, 
then by Lemma \ref{lem-2-notb-8ary},  
 $\Holant(f_8, \mathcal{F})$ is \#P-hard, and we are done.
If a binary signature $g\in \mathcal{B}\backslash\{=_2\}$ is realizable from  $\{f_8\}\cup\mathcal{F}$, then we are done by choosing $b_1=g$.
So we may assume that all binary signatures $g$ realizable from $\{f_8\}\cup\mathcal{F}$ are $=_2$ (up to a scalar) or the zero binary signature,
i.e., 
$g=\mu \cdot(=_2)$ for some $\mu \in \mathbb{R}$.
Similarly, if a nonzero 4-ary signature $h\notin \mathcal{B}^{\otimes2}$ is realizable, then we have $\Holant(f_8, \mathcal{F})$ is \#P-hard, by Lemma~\ref{lem-4-notb},
as  Lemma~\ref{lem-2-notb-8ary} says the  set $\{f_8\} \cup \mathcal{F}$ is non-$\mathcal{B}$ hard.
If a nonzero 4-ary signature $h\in \mathcal{B}^{\otimes 2}\backslash\{=_2\}^{\otimes 2}$ is realizable, then we can realize a binary signature $b_1\in \mathcal{B}\backslash\{=_2\}$ by factorization, and we are done.
Thus, we may assume that all 4-ary signatures $h$ realizable from $\{f_8\}\cup\mathcal{F}$ are $(=_2)^{\otimes 2}$ or the 4-ary zero signature, i.e., $h=\lambda \cdot (=_2)^{\otimes 2}$ for some $\lambda \in \mathbb{R}$.

Now, let $b_1$ be a signature in $\mathcal{B}\backslash\{=_2\}$. 
We show that $\Holant(b_1^{\leqslant 2}, f_8, \mathcal{F})\leqslant_T \Holant(f_8, \mathcal{F}).$
Consider an instance $\Omega$ of $\Holant(b_1^{\leqslant 2}, f_8, \mathcal{F})$.
\begin{itemize}
    \item If $b_1$ does not appear in $\Omega$, 
then $\Omega$ is already an instance of $\Holant(f_8, \mathcal{F})$.
\item If $b_1$ appears exactly once in  $\Omega$ (we may assume it does connect to itself), then we may consider
\emph{the rest of  $\Omega$ that connects to $b_1$} as a gadget 
realized from $\{f_8\}\cup\mathcal{F}$, which must have signature
$\lambda \cdot(=_2)$, for some $\lambda \in \mathbb{R}$. Connecting the two variables of  $b_1$ by $(=_2)$ for every $b_1\in \mathcal{B}\backslash\{=_2\}$ will always gives  $0$.
Thus, $\Holant(\Omega)= 0$.
\item Suppose $b_1$ appears exactly twice in  $\Omega$.
It is easy to handle when the two copies of $b_1$ form a gadget of arity 0 or 2
to the rest of $\Omega$.
We may assume they are connected to the rest of $\Omega$ in such a way that the rest of $\Omega$ 
forms a 4-ary gadget $h$ realized from $\{f_8\}\cup\mathcal{F}$.
We can name the  four dangling edges of $h$ in any specific ordering as  $(x_1, x_2, x_3, x_4)$. 
Then $$h(x_1, x_2, x_3, x_4)=\lambda\cdot (=_2)(x_1, x_j)\otimes (=_2)(x_k, x_\ell)$$ for some
partition $\{1, 2, 3, 4\} = \{1, j\}\sqcup\{k, \ell\}$,
and   some $\lambda \in \mathbb{R}$.
(Note that while we have named four specific dangling edges as $(x_1, x_2, x_3, x_4)$,
the specific partition  $\{1, 2, 3, 4\} = \{1, j\}\sqcup\{k, \ell\}$ and the value $\lambda$ are unknown  at this point.)
We consider the following three instances 
$\Omega_{12}$, $\Omega_{13}$, and $\Omega_{14}$, where $\Omega_{1s}$ $(s\in \{2, 3, 4\})$ is the instance formed by merging variables $x_1$ and $x_s$ of $h$ using $=_2$, and merging the other two variables of $h$ using $=_2$
(see Figure~\ref{fig:three-instances} where $h_1=h_2=(=_2)$ and $h=\lambda \cdot h_1\otimes h_2$).
Since $h$ is a gadget realized from $\{f_8\}\cup\mathcal{F}$, 
$\Omega_{12}$, $\Omega_{13}$, and $\Omega_{14}$  are instances of 
$\Holant(f_8, \mathcal{F})$.
Note that $\Holant(\Omega_{1s})=4\lambda$ when $s=j$ and $\Holant(\Omega_{1s})=2\lambda$ otherwise. 
Thus, by computing $\Holant(\Omega_{1s})$ for $s\in \{2, 3, 4\}$, we can get $\lambda$,
and if $\lambda\neq 0$ the partition $\{1, j\}\sqcup\{k, \ell\}$ of the four variables.
Thus we can get the exact structure of the 4-ary gadget $h$.
In either case
(whether $\lambda =0$ or not), we can compute the value of $\Holant(\Omega)$.
\end{itemize}
Thus,  $\Holant(b_1^{\leqslant 2}, f_8, \mathcal{F})\leqslant_T \Holant(f_8, \mathcal{F}).$
By Lemma \ref{lem-2toinfty},  $\Holant(b_1, f_8, \mathcal{F})\leqslant_T \Holant(f_8, \mathcal{F}).$

\vspace{2ex}
\begin{figure}[!h]
    \centering
    \includegraphics[height=4cm]{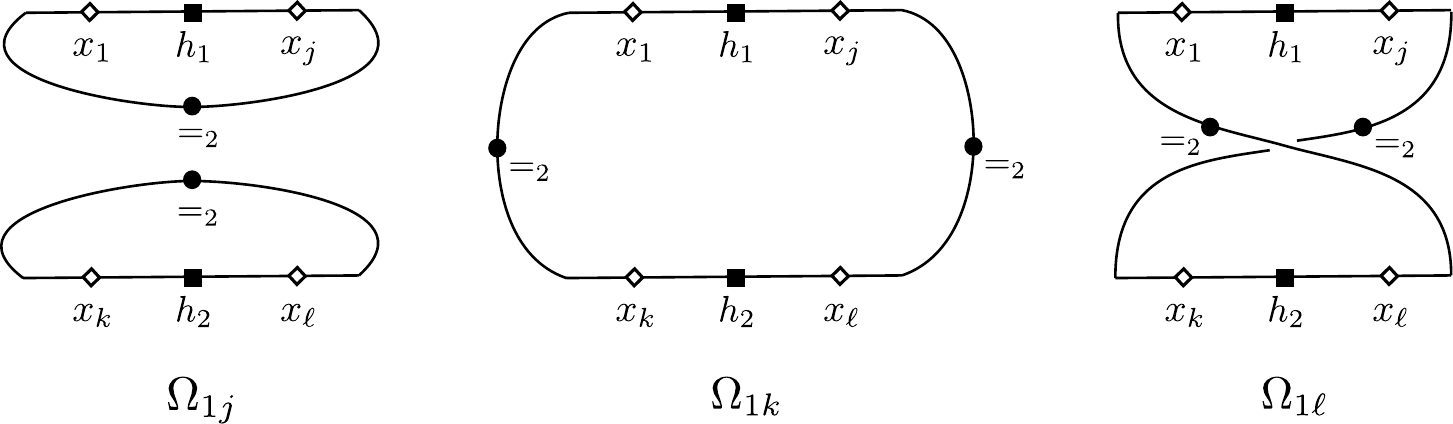}
    \caption{Instances $\Omega_{1j}$, $\Omega_{1k}$ and $\Omega_{1\ell}$}
    \label{fig:three-instances}
\end{figure}

\vspace{1ex}
\noindent{\bf Step 2.}
For any  $b_1\in \mathcal{B}\backslash\{=_2\}$, we have $\Holantb(f_8,\mathcal{F})\leqslant_T\Holant(b_1, f_8, \mathcal{F}).$

We show that we can get another  $b_2\in \mathcal{B}\backslash\{=_2, b_1\}$, i.e.,
for some binary signature $b_2\in \mathcal{B}\backslash\{=_2, b_1\}$ we have the reduction
$\Holant(b_2, b_1, f_8, \mathcal{F})\leqslant_T\Holant(b_1, f_8, \mathcal{F}).$
Then, by connecting one variable of $b_1$ and one variable of $b_2$ using $=_2$, we get the third signature in $\mathcal{B}\backslash\{b_1, b_2\}.$
Then, the lemma is proved.
The proof is similar to the proof in {Step 1}.
 We consider \emph{all} binary  and 4-ary gadgets realizable from $\{b_1, f_8\} \cup \mathcal{F}$. 
Still, we may assume that all realizable binary signatures are of the form $\mu \cdot (=_2)$ or $\mu \cdot b_1$ for some $\mu \in \mathbb{R}$, and all realizable 4-ary signatures are of form $\lambda \cdot (=_2)^{\otimes 2}$, $\lambda \cdot b_1^{\otimes 2}$ or $\lambda \cdot (=_2){\otimes b_1}$ for some $\lambda\in \mathbb{R}$.
Otherwise, we can show that $\Holant(b_1, f_8, \mathcal{F})$ is \#P-hard or we realize a signature $b_2\in \mathcal{B}\backslash\{=_2, b_1\}$ directly by gadget construction. 

Then, let $b_2$ be an arbitrary signature in $\mathcal{B}\backslash\{=_2, b_1\}$. 
We show that $$\Holant(b_2^{\leqslant 2}, b_1, f_8, \mathcal{F})\leqslant_T\Holant(b_1, f_8, \mathcal{F}).$$
Consider an instance $\Omega$ of $\Holant(b_2^{\leqslant 2}, b_1, f_8, \mathcal{F})$.
If $b_2$ does not appear in $\Omega$, then $\Omega$ is already an instance of $\Holant(b_1, f_8, \mathcal{F})$.
If $b_2$ appears exactly once in $\Omega$, then it is connected with a binary gadget $g$ where $g=\mu \cdot (=_2)$ or $g=\mu \cdot b_1$.
In both cases,  the evaluation is $0$. 
Thus, $\Holant(\Omega)=0$.
Suppose  $b_2$ appears exactly twice in $\Omega$.
Again it is easy to handle the case if
 the rest of 
$\Omega$ forms
a gadget of arity 0 or 2 to
 the two occurrences of $b_2$.
So we may assume the two occurrences of $b_2$ are connected to a 4-ary gadget $h= \lambda \cdot (=_2)^{\otimes 2}$, $\lambda \cdot b_1^{\otimes 2}$ or $\lambda \cdot (=_2){\otimes b_1}.$
We denote the four variables of $h$ by $(x_1, x_2, x_3, x_4)$,
by an arbitrary ordering of the four dangling edges. 
Then $h(x_1, x_2, x_3, x_4)=\lambda \cdot h_1(x_1, x_j)\otimes h_2(x_k, x_\ell)$ where $h_1, h_2\in \{=_2, b_1\}$,
for some $\lambda$ and $\{j, k, \ell \} = \{2,3,4\}$.
(Note that at the moment the values $\lambda$ and $j, k, \ell $ are unknown.)
We  consider the following three instances $\Omega_{12}$, $\Omega_{13}$ and $\Omega_{14}$, where $\Omega_{1s}$ $(s\in \{2, 3, 4\})$ is the instance  formed by connecting variables $x_1$ and $x_s$ of $h$ using $=_2$, and connecting the other two variables of $h$ using $=_2$ (again see Figure~\ref{fig:three-instances}).
Clearly, $\Omega_{12}$, $\Omega_{13}$ and $\Omega_{14}$ are instances of 
$\Holant(b_1, f_8, \mathcal{F})$.
Consider the evaluations of these instances.
We have three cases.
\begin{itemize}
    \item If $h_1=h_2=(=_2)$, then  $\Holant(\Omega_{1s})=4\lambda$ when $s=j$ and $\Holant(\Omega_{1s})=2\lambda$ when $s\neq j$.
    \item If $h_1=h_2=b_1$, then 
    $\Holant(\Omega_{1s})=0$ when $s=j$. If $M(b_1)$ is the 2 by 2 matrix 
    form for the binary signature $b_1$ where we list its first variable
    as row index and second variable as column index, then
     we have $\Holant(\Omega_{1k})= \lambda \cdot 
    {\rm tr} (M(b_1) M(b_1)^{\tt T})$,
    and  $\Holant(\Omega_{1 \ell})= \lambda \cdot {\rm tr} (M(b_1)^2)$,
    where ${\rm tr}$ denotes trace. For 
    $b_1 = (=_2^{-})$ or $(\ne_2^+)$,
    the matrix $M(b_1)$ is symmetric, and the value
    $\Holant(\Omega_{1s}) = 2 \lambda$ in both cases $s =k$ or $s = \ell$.
    For $b_1 = (\ne_2^-)$, $M(b_1)^{\tt T} = - M(b_1)$, and we have
    $\Holant(\Omega_{1k}) = 2 \lambda$,
    and $\Holant(\Omega_{1 \ell}) = - 2 \lambda$.
    \item If one of $h_1$ and $h_2$ is $=_2$ and the other is $b_1$, then  $\Holant(\Omega_{1s})=0$ for all $s\in \{j, k, \ell\}.$
\end{itemize}
Thus, if the values of $\Holant(\Omega_{1s})$ for $s\in \{2, 3, 4\}$ are not all zero, 
then $\lambda \ne 0$ and the third case is impossible, and
we can tell whether $h$ is in the form  $\lambda \cdot (=_2)^{\otimes 2} $ or $\lambda \cdot (b_1)^{\otimes 2}$.  Moreover we can get the exact structure of $h$, i.e., the value $\lambda$ and the decomposition
form of $h_1$ and $h_2$. 
Otherwise, the values of $\Holant(\Omega_{1s})$ for $s\in \{2, 3, 4\}$ are all zero. 
Then we can write $h=\lambda\cdot (=_2)(x_1, x_j)\otimes b_1(x_k, x_\ell)$ or $h=\lambda \cdot b_1(x_1, x_j) \otimes (=_2)(x_k, x_\ell)$, including possibly $\lambda = 0$, which means $h \equiv 0$. 
(Note that if 
$\lambda \not = 0$, this uniquely identifies that we are in the third case; if $\lambda = 0$ then this form is still formally valid, even though we cannot  say  this uniquely identifies the third case. But 
when $\lambda = 0$ all three cases are the same, i.e., $h \equiv 0$.)  At this point we still do not know the exact value of 
$\lambda$ and the 
decomposition
form of $h$.

We further consider the following three instances $\Omega'_{12}$, $\Omega'_{13}$ and $\Omega'_{14}$, where $\Omega'_{1s}$ $(s\in \{2, 3, 4\})$ is the instance  formed by connecting variables $x_1$ and $x_s$ of $h$ using $b_1$, and connecting the other two variables of $h$ using $=_2$.
(In other words, we replace the labeling $=_2$ of the edge that is connected to the variable $x_1$ in each instance illustrated in Figure~\ref{fig:three-instances} by $b_1$.)
It is easy to see that $\Omega'_{12}$, $\Omega'_{13}$ and $\Omega'_{14}$ are instances of 
$\Holant(b_1, f_8, \mathcal{F})$.
Consider the evaluations of these instances.
\begin{itemize}
    \item If $h_1=(=_2)(x_1, x_j)$, then $\Holant(\Omega'_{1s})=0$ when $s=j$.
    Also we have $\Holant(\Omega_{1k})= \lambda \cdot 
    {\rm tr} (M(b_1)^2)$, and $\Holant(\Omega_{1 \ell})= 
    \lambda \cdot 
    {\rm tr} (M(b_1) M(b_1)^{\tt T})$.
     For 
    $b_1 = (=_2^{-})$ or $\ne_2^+$,
    the matrix $M(b_1)$ is symmetric, and the value
    $\Holant(\Omega_{1s}) = 2 \lambda$ in both cases $s =k$ or $s = \ell$.
     For $b_1 = (\ne_2^-)$, $M(b_1)^{\tt T} = - M(b_1)$, and we have
    $\Holant(\Omega_{1k}) = -2 \lambda$,
    and $\Holant(\Omega_{1 \ell}) =  2 \lambda$.
    \item If $h_1=b_1(x_1, x_j)$, then $\Holant(\Omega'_{1s})=4\lambda$ when $s=j$ and $\Holant(\Omega'_{1s})=2\lambda$ when $s\neq j$.
\end{itemize}
Thus, by further computing $\Holant(\Omega'_{1s})$ for $s\in \{2, 3, 4\}$, we can get the exact structure of $h$. 

Therefore, by querying $\Holant(b_1, f_8, \mathcal{F})$ at most 6 times, 
we can compute $h$ exactly.
Then, we can compute $\Holant(\Omega)$ easily.
Thus,  $\Holant(b_2^{\leqslant 2}, b_1, f_8, \mathcal{F})\leqslant_T \Holant(b_1, f_8, \mathcal{F}).$
By Lemma~\ref{lem-2toinfty},  $\Holant(b_2, b_1, f_8, \mathcal{F})\leqslant_T \Holant(b_1, f_8, \mathcal{F}).$
The other signature in $\mathcal{B}\backslash\{=_2, b_1, b_2\}$ can be realized by connecting $b_1$ and $b_2$.
Thus,  $\Holantb(f_8, \mathcal{F})\leqslant_T \Holant(b_1, f_8, \mathcal{F}).$

Therefore, $\Holantb(f_8, \mathcal{F})\leqslant_T\Holant(f_8, \mathcal{F}).$
\end{proof}

Since $\Holantb(f_8, \mathcal{F})\leqslant_T\Holant(f_8, \mathcal{F})$ and $\{f_8\}\cup \mathcal{F}$ is non-$\mathcal{B}$ hard for any real-valued $\mathcal{F}$ that does not satisfy condition (\ref{main-thr}),
by Theorem~\ref{thm-holantb},
we have the following result.
\begin{lemma}\label{lem-holantb-8ary}
$\Holant(f_8, \mathcal{F})$ is \#P-hard.
\end{lemma}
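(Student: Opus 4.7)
The plan is to combine three results already established in the section. The setup is that $\mathcal{F}$ is a real-valued signature set that does not satisfy condition~(\ref{main-thr}), and we want to conclude \#P-hardness of $\Holant(f_8, \mathcal{F})$. The strategy is a short reduction chain: move from $\Holant(f_8, \mathcal{F})$ up to $\Holantb(f_8, \mathcal{F})$ via Lemma~\ref{lem-0to1}, and then prove the latter is \#P-hard by invoking the general hardness result of Theorem~\ref{thm-holantb} for non-$\mathcal{B}$ hard sets.

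First I would verify that the enlarged set $\{f_8\} \cup \mathcal{F}$ is itself non-$\mathcal{B}$ hard in the sense of Definition~\ref{def:non-B-hard}. This is exactly the content of Lemma~\ref{lem-2-notb-8ary}: for any nonzero binary signature $b \notin \mathcal{B}^{\otimes}$, the problem $\Holant(b, f_8, \mathcal{F})$ is \#P-hard. Note that $\{f_8\} \cup \mathcal{F}$ also fails condition~(\ref{main-thr}), since $\mathcal{F}$ already does and the criterion is closed under adding signatures. Thus we may apply Theorem~\ref{thm-holantb} to the set $\{f_8\} \cup \mathcal{F}$, which immediately yields that $\Holantb(f_8, \mathcal{F}) = \Holant(\mathcal{B} \cup \{f_8\} \cup \mathcal{F})$ is \#P-hard.

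Finally I would chain with Lemma~\ref{lem-0to1}, which gives the nontrivial reduction $\Holantb(f_8, \mathcal{F}) \leqslant_T \Holant(f_8, \mathcal{F})$ using the strong Bell property of $f_8$ together with the limited-appearance framework. Combining these, $\Holant(f_8, \mathcal{F})$ is \#P-hard, as required.

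I do not expect any real obstacle here: the entire preceding development of Sections~\ref{sec-holantb} and~\ref{sec-f8} was designed precisely so that this final statement follows as a one-line corollary. The hard work — showing that the four binary Bell signatures can be ``summoned from nothing'' via the limited-appearance trick despite being unrealizable from $f_8$ by gadgets, and the induction proving \#P-hardness of $\Holantb$ for non-$\mathcal{B}$ hard sets — has already been completed. The only thing to be careful about is checking the hypotheses of Theorem~\ref{thm-holantb} and Lemma~\ref{lem-0to1} are both met by the same set $\mathcal{F}$, which they are by our standing assumption that $\mathcal{F}$ is real-valued and does not satisfy~(\ref{main-thr}).
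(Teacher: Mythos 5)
Your proposal is correct and is essentially identical to the paper's own derivation: the paper likewise observes that $\{f_8\}\cup\mathcal{F}$ is non-$\mathcal{B}$ hard (by Lemma~\ref{lem-2-notb-8ary}), applies Theorem~\ref{thm-holantb} to conclude $\Holantb(f_8,\mathcal{F})$ is \#P-hard, and then transfers hardness via Lemma~\ref{lem-0to1}. Your attention to checking that $\{f_8\}\cup\mathcal{F}$ still fails condition~(\ref{main-thr}) and to the standing assumptions on $\mathcal{F}$ is the right amount of care; nothing is missing.
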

Combining Theorem~\ref{thm-f8} and Lemma \ref{lem-holantb-8ary}, we have the following result.
\begin{lemma}\label{lem-8-ary}
If $\widehat{\mathcal{F}}$ contains a signature $\widehat{f}$ of arity $8$ and $\widehat{f}\notin\widehat{\mathcal{O}}{^\otimes}$,
 then $\holant{\neq_2}{\widehat{\mathcal{F}}}$ is \#P-hard.
\end{lemma}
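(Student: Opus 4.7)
The plan is to derive Lemma~\ref{lem-8-ary} by directly chaining the two principal results of this section: the structural result Theorem~\ref{thm-f8}, which extracts the canonical signature $\widehat{f_8}$ from any arity-8 signature outside $\widehat{\mathcal{O}}^{\otimes}$, and the \#P-hardness result Lemma~\ref{lem-holantb-8ary}, which shows that $\Holant(f_8,\mathcal{F})$ is hard for every real-valued $\mathcal{F}$ not satisfying condition (\ref{main-thr}). No new gadget construction or combinatorial lemma is needed at this level; the argument is a composition of reductions.

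First, I would apply Theorem~\ref{thm-f8} to the given $\widehat{f}\in\widehat{\mathcal{F}}$. If $\holant{\neq_2}{\widehat{\mathcal{F}}}$ is already \#P-hard, we are done. Otherwise, the theorem provides some $\widehat{Q}\in\widehat{{\bf O}_2}$ with
\[
\holant{\neq_2}{\widehat{f_8},\widehat{Q}\widehat{\mathcal{F}}}\ \leqslant_T\ \holant{\neq_2}{\widehat{\mathcal{F}}}.
\]
So it suffices to prove that the left-hand problem is \#P-hard. I would now transfer back to the standard $\Holant$ setting via the $Z$-transformation: since $\widehat{Q}=Z^{-1}QZ$ for some $Q\in{\bf O}_2$, we have $\widehat{Q}\widehat{\mathcal{F}}=\widehat{Q\mathcal{F}}$, and using $\widehat{f_8}=Z^{-1}f_8=f_8$ (the self-duality of $f_8$ under $Z^{-1}$ noted in Section~\ref{sec-f8}), the transformation $Z$ yields
\[
\holant{\neq_2}{\widehat{f_8},\widehat{Q}\widehat{\mathcal{F}}}\ \equiv_T\ \Holant(f_8,Q\mathcal{F}).
\]

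Next, I would invoke Lemma~\ref{lem-hard-sign} to observe that $Q\mathcal{F}$ is real-valued and, since $\mathcal{F}$ does not satisfy condition~(\ref{main-thr}), neither does $Q\mathcal{F}$. Hence Lemma~\ref{lem-holantb-8ary} applies to $Q\mathcal{F}$ in place of $\mathcal{F}$, giving that $\Holant(f_8,Q\mathcal{F})$ is \#P-hard. Combining the reductions,
\[
\Holant(f_8,Q\mathcal{F})\ \equiv_T\ \holant{\neq_2}{\widehat{f_8},\widehat{Q}\widehat{\mathcal{F}}}\ \leqslant_T\ \holant{\neq_2}{\widehat{\mathcal{F}}},
\]
so $\holant{\neq_2}{\widehat{\mathcal{F}}}$ is \#P-hard.

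There is no real obstacle at this stage; all the difficulty has been pushed into the earlier results. The conceptual bottleneck was Lemma~\ref{lem-holantb-8ary}, whose proof required both the non-constructive reduction $\Holantb(f_8,\mathcal{F})\leqslant_T\Holant(f_8,\mathcal{F})$ of Lemma~\ref{lem-0to1} (exploiting the strong Bell property of $f_8$ to bypass the fact that $\mathcal{B}$ is not gadget-realizable from $f_8$) and the main $\Holantb$ hardness Theorem~\ref{thm-holantb}. Once those are in hand, the proof of Lemma~\ref{lem-8-ary} is a one-paragraph assembly of the reduction chain above.
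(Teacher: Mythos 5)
Your proof is correct and takes essentially the same approach as the paper, which simply states that Lemma~\ref{lem-8-ary} follows by combining Theorem~\ref{thm-f8} and Lemma~\ref{lem-holantb-8ary}. The details you spell out — the $Z$-transformation back to $\Holant(f_8,Q\mathcal{F})$ using $\widehat{f_8}=f_8$, and the observation via Lemma~\ref{lem-hard-sign} that $Q\mathcal{F}$ still fails condition~(\ref{main-thr}) — are precisely the ones the paper gives in the discussion immediately preceding Lemma~\ref{lem-holantb-8ary}.
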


\section{The Induction Proof: $2n\geqslant 10$}\label{sec-10-ary}
Now, we show that our induction framework works for signatures of arity $2n \geqslant 10$. 
\begin{lemma}\label{lem-10-ary}
If $\widehat{\mathcal{F}}$ contains a signature $\widehat{f}$ of arity $2n\geqslant 10$ and $\widehat{f}\notin\widehat{\mathcal{O}}{^\otimes}$,
  then, 
  \begin{itemize}
      \item $\Holant(\neq_2 \mid \widehat{\mathcal F})$ is {\rm \#}P-hard, or
      \item a signature $\widehat g\notin \widehat{\mathcal{O}}^{\otimes}$ of arity $2k\leqslant 2n-2$ is realizable from $\widehat f$.
  \end{itemize}
\end{lemma}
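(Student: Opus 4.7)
\medskip
\noindent\textbf{Proof proposal.}
The plan is to follow exactly the induction template laid out at the end of Section \ref{sec-induction}, but now leverage the fact that for arity $2n \geqslant 10$ the second bullet of Lemma~\ref{lem-eo} forces any problematic intermediate signature to be a tensor product of $\neq_2$'s, which together with Lemma~\ref{lem-eo-not-2ndorth} rules out the Bell-type obstruction that made arities $6$ and $8$ so delicate.

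First I would apply Lemma~\ref{lem-even>8}. This gives three alternatives: (a) $\holant{\neq_2}{\widehat{\mathcal{F}}}$ is \#P-hard, in which case we are done; (b) a signature $\widehat g \notin \widehat{\mathcal{O}}^{\otimes}$ of arity $\leqslant 2n-2$ is realizable from $\widehat f$, in which case we are also done; or (c) an irreducible signature $\widehat{f^{\ast}} \in \widehat{\int}\,\mathcal{D}^{\otimes}$ of arity $2n$ is realizable from $\widehat f$ and satisfies \textsc{ars}. The remaining task is to show that case (c) can in fact be reduced to case (a). By Lemma~\ref{second-ortho} I may assume $\widehat{f^{\ast}}$ satisfies \textsc{2nd-Orth}, else we already have \#P-hardness.

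Now I would exploit that $\widehat{f^{\ast}} \in \widehat{\int}\,\mathcal{D}^{\otimes}$ implies $\mathscr{S}(\widehat{\partial}_{ij}\widehat{f^{\ast}}) \subseteq \mathscr{H}_{2n-2}$ for every pair $\{i,j\}$. By Lemma~\ref{lem-zero_2} this forces $\widehat{f^{\ast}}(\alpha)=0$ on every input of weight $\neq 0, 2n$, so we may decompose
\[
\widehat{f^{\ast}} \;=\; a\,(1,0)^{\otimes 2n} + \bar a\,(0,1)^{\otimes 2n} + \widehat{f_{\rm h}},
\]
where $\widehat{f_{\rm h}}$ is an EO signature satisfying \textsc{ars} (the \textsc{ars} property is inherited). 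Since merging via $\neq_2$ annihilates the rank-one antipodal part, $\widehat{\partial}_{ij}\widehat{f^{\ast}} = \widehat{\partial}_{ij}\widehat{f_{\rm h}} \in \mathcal{D}^{\otimes}$ for all $\{i,j\}$. This is exactly the hypothesis of the second bullet of Lemma~\ref{lem-eo}, which needs $2n \geqslant 10$; I would apply it to conclude $\widehat{f_{\rm h}} \in \mathcal{D}^{\otimes}$. In particular $\widehat{f_{\rm h}} = \lambda\cdot (\neq_2)^{\otimes n}$ (up to a permutation of variables), so I may pick any one factor $(\neq_2)(x_i,x_j)$ and write $\widehat{f_{\rm h}} = (\neq_2)(x_i,x_j) \otimes \widehat{g_{\rm h}}$ where $\widehat{g_{\rm h}}$ is a nonzero EO signature of arity $2n-2 \geqslant 8$.

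To close the argument, note that irreducibility of $\widehat{f^{\ast}}$ forces both $a \neq 0$ (otherwise $\widehat{f^{\ast}} = \widehat{f_{\rm h}} \in \mathcal{D}^{\otimes}$ which is reducible at arity $\geqslant 4$) and $\widehat{f_{\rm h}} \not\equiv 0$ (otherwise $\widehat{f^{\ast}}$ is the tensor sum of two rank-one terms, which factors). Thus $\widehat{f^{\ast}}$ has precisely the form required by the hypothesis of Lemma~\ref{lem-eo-not-2ndorth}, which then yields that $\widehat{f^{\ast}}$ does \emph{not} satisfy \textsc{2nd-Orth}, contradicting what we assumed. Hence case (c) cannot occur without having already produced \#P-hardness, finishing the proof. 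The main conceptual step, the only one that genuinely uses $2n \geqslant 10$, is the invocation of Lemma~\ref{lem-eo}(2); everything else is bookkeeping of the decomposition and application of the already-established Lemmas~\ref{second-ortho}, \ref{lem-zero_2}, and \ref{lem-eo-not-2ndorth}.
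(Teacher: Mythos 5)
Your proposal follows essentially the same route as the paper's proof: reduce via Lemma~\ref{lem-even>8} to an irreducible $\widehat{f^\ast}\in\widehat{\int}\mathcal{D}^\otimes$, decompose $\widehat{f^\ast}$ into antipodal plus EO parts, invoke Lemma~\ref{lem-eo}(2) (this is where $2n\geqslant 10$ enters) to force $\widehat{f_{\rm h}}\in\mathcal{D}^\otimes$, and then apply Lemma~\ref{lem-eo-not-2ndorth} to contradict {\sc 2nd-Orth}. Two small blemishes, neither of which breaks the argument: first, you state that Lemma~\ref{lem-zero_2} forces $\widehat{f^\ast}(\alpha)=0$ for weight $\neq 0, 2n$; it should be weight $\neq 0, n, 2n$ --- your own decomposition in the next sentence, which retains a nonzero EO part supported on weight $n$, presupposes the corrected version. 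Second, your justification for $\widehat{f_{\rm h}}\not\equiv 0$ ``(otherwise $\widehat{f^\ast}$ is the tensor sum of two rank-one terms, which factors)'' is false: for $a\neq 0$, the signature $a(1,0)^{\otimes 2n}+\bar a(0,1)^{\otimes 2n}$ is a generalized equality, which is irreducible, not a tensor product. Fortunately that claim is not needed, since you already obtained $\widehat{f_{\rm h}}\in\mathcal{D}^\otimes$ via Lemma~\ref{lem-eo}, and $\mathcal{D}^\otimes$ contains no zero signatures by definition; likewise, the check $a\neq 0$ is subsumed by the irreducibility hypothesis of Lemma~\ref{lem-eo-not-2ndorth} and need not be verified separately.
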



\begin{proof}
By Lemma~\ref{lem-even>8}, we may assume that an irreducible signature $\widehat{f^\ast}$ of arity $2n\geqslant 10$ where
 $\widehat{f^\ast}\in \widehat{\int}\mathcal{D}^{\otimes}$ is realizable, and $\widehat{f^\ast}$ satisfies {\sc ars}.
We show that $\widehat{f^\ast}$ does not satisfy {\sc 2nd-Orth}, and hence we get  \#P-hardness. 

For all pairs of indices $\{i, j\}$, since $\widehat{\partial}_{ij}\widehat{f^\ast} \in 
{\mathcal{D}}^{\otimes}$,  $\mathscr{S}(\widehat{\partial}_{ij}\widehat{f^\ast})$ is on half-weight.
By Lemma \ref{lem-zero_2}, we have $\widehat{f^\ast}(\alpha) =0$ for all ${\rm wt}(\alpha)\neq 0, n, 2n$.
Suppose that $\widehat{f^\ast}(\vec{0}^{2n})=a$ and $\widehat{f^\ast}(\vec{1}^{2n})=\bar a$ by {\sc ars}. 
We can write $\widehat{f^\ast}$ in the following form
$$\widehat{f^\ast}=a(1, 0)^{\otimes 2n}+\bar{a}(0, 1)^{\otimes 2n}+\widehat{f_{\rm h}^\ast}.$$
where $\widehat{f_{\rm h}^\ast}$ is an EO signature of arity $2n\geqslant 10$.
 
%
%

Clearly, $\partial_{ij}\widehat{f^\ast} = \partial_{ij}\widehat{f_{\rm h}^\ast}$ for all $\{i,j\}$.
Then, $\widehat{f_{\rm h}^\ast}
\in \widehat{\int}\mathcal{D}^{\otimes}$
since $\widehat{f^{\ast}}\in \widehat{\int}\mathcal{D}^{\otimes}$.
Since $\widehat{f_{\rm h}^\ast}$  is an EO signature of arity at least $10$ and  $\widehat{f_{\rm h}^\ast}
\in \widehat{\int}\mathcal{D}^{\otimes}$,
by Lemma~\ref{lem-eo}, we have
$\widehat{f_{\rm h}^\ast}
\in \mathcal{D}^{\otimes}$. Recall that all signatures in $\mathcal{D}^{\otimes}$ are nonzero by definition.
Pick some $\{i, j\}$ such that $(\neq_2)(x_i, x_j)\mid \widehat{f_{\rm h}^\ast}.$
Then, $$\widehat{f^\ast}=a(1, 0)^{\otimes 2n}+\bar{a}(0, 1)^{\otimes 2n}+\widehat{b^\ast}(x_i, x_j)\otimes \widehat{g_{\rm h}^\ast},$$ 
where $\widehat{g^\ast_{\rm h}}\in \mathcal{D}^{\otimes}$ is a nonzero EO signature since $\widehat{f_{\rm h}^\ast}
\in \mathcal{D}^{\otimes}$. 
By Lemma~\ref{lem-eo-not-2ndorth},
$\widehat{f^\ast}$ does not satisfy {\sc 2nd-Orth}.
Thus,  $\holant{\neq_2}{\widehat{\mathcal{F}}}$ is \#P-hard by Lemma \ref{second-ortho}.
\end{proof}

\begin{remark}
 Indeed, following from our proof, we can also show that there is no irreducible signature $\widehat{f}$ of arity $2n \geqslant 10$ that satisfies both  {\sc 2nd-Orth} and $\widehat f\in \widehat{\int}\widehat{\mathcal{O}}^{\otimes}$.
\end{remark}

Finally, 
we give the proof of Theorem \ref{main-theorem}.
We restate it here.

\begin{theorem}
Let $\mathcal{F}$ be a set of real-valued signatures.
If $\mathcal{F}$ satisfies the tractability condition {\rm(\ref{main-thr})} in Theorem \ref{thm-main-thr},
then $\Holant(\mathcal{F})$ is polynomial-time computable;
otherwise, $\Holant(\mathcal{F})$ is \#P-hard.
\end{theorem}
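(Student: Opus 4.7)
The plan is to assemble the dichotomy from the ingredients already prepared. The tractability direction is immediate: if $\mathcal{F}$ satisfies condition (\ref{main-thr}), then by Theorem \ref{thm-main-thr} the problem $\Holant(\mathcal{F})$ is polynomial-time computable, so nothing further is required. The substance of the theorem is therefore the hardness direction, namely that whenever $\mathcal{F}$ does \emph{not} satisfy (\ref{main-thr}), the problem $\Holant(\mathcal{F})$ is \#P-hard. I will argue this by a structural case analysis and induction on the arity of a ``witness'' signature, exactly as foreshadowed in Section \ref{sec:proof-outline}.

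First I would reduce to the setting of even-arity signatures. If $\mathcal{F}$ contains some nonzero signature of odd arity, then Theorem \ref{odd-dic} already gives \#P-hardness of $\Holant(\mathcal{F})$, so we are done. Otherwise every signature in $\mathcal{F}$ has even arity. Since $\mathcal{F}$ violates (\ref{main-thr}), in particular $\mathcal{F} \not\subseteq \mathscr{T}$, and because $\widehat{\mathcal{O}}^{\otimes} \subseteq \mathscr{T}$ (tensor products of unary and binary signatures) we can pick a signature $\widehat{f} \in \widehat{\mathcal{F}} = Z^{-1}\mathcal{F}$ with $\widehat{f} \notin \widehat{\mathcal{O}}^{\otimes}$. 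Recall that $\holant{\neq_2}{\widehat{\mathcal{F}}} \equiv_T \Holant(\mathcal{F})$ under the $Z$-transformation, so it suffices to prove \#P-hardness of $\holant{\neq_2}{\widehat{\mathcal{F}}}$ given such a witness $\widehat{f}$.

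Next I would run induction on the arity $2n$ of the witness. The base cases are:
\begin{itemize}
\item $2n=2$: handled by Lemma \ref{lem-2-ary}.
\item $2n=4$: handled by Lemma \ref{lem-4-ary}.
\item $2n=6$: handled by Lemma \ref{lem-6-ary} (which itself was the fusion of Theorem \ref{lem-6-ary-f6}, exposing $\widehat{f_6}$, with Lemma \ref{lem-f6-b-hardness} giving the \#P-hardness of $\Holantb(f_6,\mathcal{F})$).
\item $2n=8$: handled by Lemma \ref{lem-8-ary} (which fused Theorem \ref{thm-f8} producing $\widehat{f_8}$, the non-constructive reduction $\Holantb(f_8,\mathcal{F}) \leqslant_T \Holant(f_8,\mathcal{F})$ of Lemma \ref{lem-0to1}, and the \#P-hardness from Lemma \ref{lem-holantb-8ary}).
\end{itemize}
For the inductive step $2n \geqslant 10$, Lemma \ref{lem-10-ary} shows that either $\holant{\neq_2}{\widehat{\mathcal{F}}}$ is already \#P-hard, or we can realize a nonzero signature $\widehat{g}$ of strictly smaller even arity $2k \leqslant 2n-2$ with $\widehat{g} \notin \widehat{\mathcal{O}}^{\otimes}$. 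Since a gadget reduction preserves {\sc ars} and membership in $\widehat{\mathcal{F}}\cup\text{gadget-closure}$, the realized signature $\widehat{g}$ is again a valid witness, so we may apply the induction hypothesis. After a sequence of reductions whose length is bounded in terms of the arities of signatures in $\mathcal{F}$ (and hence constant, independent of the instance size), we arrive at a witness of arity at most $8$ and invoke the appropriate base-case lemma.

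The only thing I would flag is a bookkeeping subtlety rather than a mathematical obstacle: the inductive step of Lemma \ref{lem-10-ary} produces $\widehat{g}$ via gadget constructions from $\widehat{f}$, not necessarily from some fresh element of $\widehat{\mathcal{F}}$, so one must verify at each stage that $\widehat{g}$ lies in the set of signatures realizable by $\widehat{\mathcal{F}}$ and that $\widehat{g} \not\in \widehat{\mathcal{O}}^{\otimes}$. This is guaranteed by the conclusions of Lemma \ref{lem-10-ary} together with the fact that gadgets compose, so inductively we retain the reduction $\holant{\neq_2}{\widehat{g},\widehat{\mathcal{F}}} \leqslant_T \holant{\neq_2}{\widehat{\mathcal{F}}}$. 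The genuinely hard part of the proof has already been absorbed into the lemmas that identify and neutralize the two extraordinary obstructions $\widehat{f_6}$ and $\widehat{f_8}$ (the \emph{Lord of Intransigence} and the \emph{Queen of the Night}), so the final assembly is merely the concatenation of Lemmas \ref{lem-2-ary}, \ref{lem-4-ary}, \ref{lem-6-ary}, \ref{lem-8-ary} and \ref{lem-10-ary} with Theorems \ref{thm-main-thr} and \ref{odd-dic}, which completes the proof of Theorem \ref{main-theorem}.
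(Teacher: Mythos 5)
Your proposal is correct and follows essentially the same route as the paper's own proof: tractability via Theorem \ref{thm-main-thr}, reduction to even arity via Theorem \ref{odd-dic}, selection of a witness $\widehat{f} \notin \widehat{\mathcal{O}}^{\otimes}$, the base cases via Lemmas \ref{lem-2-ary}, \ref{lem-4-ary}, \ref{lem-6-ary}, \ref{lem-8-ary}, and the inductive step via Lemma \ref{lem-10-ary}. The bookkeeping subtlety you flag is indeed the same one the paper handles by phrasing the conclusion of Lemma \ref{lem-10-ary} as a reduction $\holant{\neq_2}{\widehat{g},\widehat{\mathcal{F}}}\leqslant_T\holant{\neq_2}{\widehat{\mathcal{F}}}$ and appealing to the induction hypothesis for the larger set.
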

\begin{proof}
By Theorem \ref{thm-main-thr}, if  $\mathcal F$ satisfies  condition \rm{(\ref{main-thr})}, then $\Holant(\mathcal F)$ is P-time
computable. 
Suppose that $\mathcal F$ does not satisfy  condition \rm{(\ref{main-thr})}. 
If $\mathcal{F}$ contains a nonzero signature of odd arity,
then by Theorem~\ref{odd-dic}, $\Holant(\mathcal{F})$ is \#P-hard.
We show $\holant{\neq_2}{\widehat{\mathcal{F}}}\equiv_T\Holant(\mathcal{F})$ is \#P-hard when $\mathcal{F}$ is a set of signatures of even arity.
Since ${\mathcal{F}}$ does not satisfy  condition \rm{(\ref{main-thr})}, $\widehat{\mathcal{F}}\not\subseteq \mathscr{T}$.
Since  $\widehat{\mathcal{O}}^{\otimes}\subseteq \mathscr{T}$,
   there is a signature $\widehat{f}\in \widehat{\mathcal{F}}$ of arity $2n$ such that $\widehat{f}\notin \widehat{\mathcal{O}}^{\otimes}$.
   We prove this theorem by induction on $2n$.
   
 When $2n\leqslant 8$, by  Lemmas \ref{lem-2-ary}, \ref{lem-4-ary}, \ref{lem-6-ary}, \ref{lem-8-ary},  $\holant{\neq_2}{\widehat{\mathcal{F}}}$ is \#P-hard. 
 
Inductively, suppose for some $2k\geqslant 8$,
if $2n\leqslant 2k$, then $\holant{\neq_2}{\widehat{\mathcal{F}}}$ is \#P-hard.
 We consider $2n=2k+2\geqslant 10$.
 By Lemma~\ref{lem-10-ary}, $\holant{\neq_2}{\widehat{\mathcal{F}}}$ is \#P-hard, or
 $\holant{\neq}{\widehat g, \widehat{\mathcal{F}}}\leqslant_T \holant{\neq}{ \widehat{\mathcal{F}}}$ for some $\widehat g\notin \widehat{\mathcal{O}}^{\otimes}$ of arity $\leqslant 2k$.
 By the induction hypothesis, $\holant{\neq}{\widehat g, \widehat{\mathcal{F}}}$ is \#P-hard.
 Thus, $\holant{\neq_2}{\widehat{\mathcal{F}}}$ is \#P-hard.
\end{proof}

\section*{Acknowledgement}
\addcontentsline{toc}{section}{\protect\numberline{}{Acknowledgement}}
We would like to thank Professor Mingji Xia for pointing out that 
the strong Bell property can be used to prove Lemma~\ref{lem-2toinfty},
which led to the inspiration to prove Lemma~\ref{lem-0to1}.
Without this inspiration this paper may languish for much more time.
We would also like to thank Professor Zhiguo Fu for many
valuable discussions and verification of parts of the proof.
The proof of the present paper is built on his substantial previous work.
Despite their invaluable 
contributions, they generously declined our invitation for co-authorship. 

\addcontentsline{toc}{section}{\protect\numberline{}{References}}
\bibliography{real-holant}{}
\newpage


\end{document}